\theoremstyle{definition}
\newtheorem{thm}{Theorem}
\newtheorem{cor}[thm]{Corollary}
\newtheorem{lem}[thm]{Lemma}
\newtheorem{prop}[thm]{Proposition}
\theoremstyle{definition}
\newtheorem{defn}{Definition}
\theoremstyle{definition}
\newtheorem{rem}{Remark}
\theoremstyle{definition}
\newtheorem{problem}{Problem}
\theoremstyle{definition}
\newtheorem{conj}{Conjecture}
\theoremstyle{definition}
\newtheorem{example}{Example}
\theoremstyle{definition}
\newcommand{\qqed}{\hfill $\square$}
\newcommand{\paralled}{\hfill $\parallel$}
\begin{document}

\begin{center}
{\Large \textbf{Strings And Colorings Of Topological Coding Towards\\[8pt]
Asymmetric Topology Cryptography}}\\[14pt]
{\large \textbf{Bing YAO, Chao YANG, Xia LIU, Fei MA,\\[4pt]
Jing SU, Hui SUN, Xiaohui ZHANG and Yarong MU}\\[8pt]
}
(\today)
\end{center}

\vskip 1cm

\pagenumbering{roman}
\tableofcontents

\newpage

\setcounter{page}{1}
\pagenumbering{arabic}

\thispagestyle{empty}

\begin{center}
{\Large \textbf{Strings And Colorings Of Topological Coding Towards\\[8pt] Asymmetric Topology Cryptography}}\\[12pt]
{\large \textbf{Bing Yao $^1$, Chao Yang $^2$, Xia Liu $^3$, Fei Ma $^4$\\
Jing Su $^5$, Hui Sun $^6$, Xiaohui Zhang $^7$ and Yarong Mu $^8$}}\\[12pt]
\end{center}
{\small 1. College of Mathematics and Statistics, Northwest Normal University, Lanzhou, 730070, China, yybb918@163.com\\
2. School of Mathematics, Physics and Statistics, Shanghai University of Engineering Science,
Shanghai, 201620, China, jiayouyc1988@163.com\\
3. School of Mathematics and Statistics, Northwestern Polytechnical University, Xi'an, 710072, China, liuxia\_90@163.com\\
4. School of Electronics Engineering and Computer Science, Peking University, Beijing, 100871, China, mafei123987@163.com\\
5. College of Computing Science and Technology, Xi'an University of Science and Technology, Xi'an 710054, China,
1099270659@qq.com\\
6. School of Electronics Engineering and Computer Science, Peking University, Beijing, 100871, China,
18919104606@163.com\\
7. School of Computer, Qinghai Normal University, Xining, 810001, China, 2547851790@qq.com\\
8. Lanzhou College of Information Science and Technology, Lanzhou, 730314, China, 2275005907@qq.com}
\begin{center}
\today
\end{center}

\vskip 0.6cm

\begin{quote}
\textbf{Abstract:} We, for anti-quantum computing, will discuss various number-based strings, such as number-based super-strings, parameterized strings, set-based strings, graph-based strings, integer-partitioned and integer-decomposed strings, Hanzi-based strings, as well as algebraic operations based on number-based strings. Moreover, we introduce number-based string-colorings, magic-constraint colorings, and vector-colorings and set-colorings related with strings. For the technique of encrypting the entire network at once, we propose graphic lattices related with number-based strings, Hanzi-graphic lattices, string groups, all-tree-graphic lattices. We study some topics of asymmetric topology cryptography, such as topological signatures, Key-pair graphs, Key-pair strings, one-encryption one-time and self-certification algorithms. Part of topological techniques and algorithms introduced here are closely related with NP-complete problems or NP-hard problems.\\[6pt]
\textbf{Mathematics Subject classification}: 05C60, 68M25, 06B30, 22A26, 81Q35\\[6pt]
\textbf{Keywords:} Public-key cryptography; asymmetric topology cryptography; string-coloring; graphic group; graphic lattice; topological coding.
\end{quote}

\vskip 0.6cm

\section{Introduction}

Using cryptography is a kind of transformation technology to protect information, also is the most original ability of cryptography. However, with the development of information and information technology, modern cryptography is not only used to solve the confidentiality of information, but also used to solve the integrity, availability and controllability of information.

\subsection{Research background}

Asymmetric encryption and certificateless public key encryption are two important branches of cryptography. They are not only hot topics in cryptography research, but also widely used practical encryption techniques.

\textbf{Asymmetric encryption} was introduced by Diffie and Hellman in 1976 \cite{Diffie-and-Hellman-1976}, where each person gets a pair of keys, called the public-key and the private key. Each person's public-key is published while the private key is kept secret. Messages are encrypted using the intended recipient's public-key and can only be decrypted using his private key. This is often used in conjunction with a digital signature. Public-key encryption can be used for authentication, confidentiality, integrity and non-repudiation. RSA encryption is an example of the public-key cryptosystem. Based on the asymmetric encryption algorithm, Rivest, Shamir and Adleman proposed the \emph{public-key cryptography} (PKE) in 1978 \cite{Rivest-Shamir-Adleman-1978}.

Main algorithms for realizing Asymmetric Encryption Algorithm are RSA, Elgamal, Knapsack algorithm, Rabin, HD, ECC (elliptic curve encryption algorithm), DSA (digital signature).

Certificate Authority (CA or ``Trusted Third Party'') is an entity (typically a company) that issues digital certificates to other entities (organizations or individuals) to allow them to prove their identity to others. A Certificate Authority might be an external company such as VeriSign that offers digital certificate services or they might be an internal organizations such as a corporate MIS department. The Certificate Authority's chief function is to verify the identity of entities and issue digital certificates attesting to that identity. The process uses public-key cryptography to create a ``network of trust''. For example, if I want to prove my identity to you, I ask a CA (who you trust to have verified my identity) to encrypt a hash of my signed key with their private key. Then you can use the CA's public-key to decrypt the hash and compare it with a hash you calculate yourself. Hashes are used to decrease the amount of data that needs to be transmitted. The hash function must be cryptographically strong, e.g. MD5.

The main \textbf{advantages}of asymmetric encryption are: (i) Higher security, the public-key is public, the secret key is saved by the user, there is no need to give the private key to others. (ii) The key distribution is simple, and no secret channel or complex protocol is required to transmit the key. Public keys can be distributed to other users based on a public channel, such as a key distribution center, while private keys are kept by users themselves. (iii) Digital signature can be realized.

Some \textbf{disadvantages} of asymmetric encryption are: Compared with the symmetric cryptosystem, the public-key cryptosystem (PKE) with the same security strength requires more key bits, takes longer time to encrypt and decrypt, and is only suitable for encrypting a small amount of data.

\textbf{Certificateless Public Key Cryptography} was proposed by Al-riyami and Paterson in 2003 \cite{Al-Riyami-Paterson-1978}. The certificateless public-key system is no longer an identity based public-key system, because in the certificateless public-key system, the user's public-key is no longer the only identifiable identity of the user, but requires additional public-keys. However, the difference between the Certificateless Public Key System (CL-PKS) and the Certificate-based Public Key System (Cb-PKS) is that in CL-PKS, certificates are no longer required to bind the user's public-key and the user's identity, thus overcoming the certificate management problems in Cb-PKS. Therefore, CL-PKS is regarded as the intermediate product of Cb-PKS and the identity-based public-key system. Some disadvantages of CL-PKC are:

(i) In the certificateless public-key system, since the certification authority is no longer required to generate a certificate for the user to bind the user's public-key and its identity, that is, the user's public-key may be replaced, so the replacement public-key attack must be considered.

(ii) Most efficient certificateless public-key encryption schemes require bilinear pairing on elliptic curves. Although the implementation technology of bilinear pairing has been greatly improved recently, the calculation cost of bilinear pairing operation is still higher than that of ``standard'' modular exponential operation.

Signcryption combines public key encryption with digital signature, which can encrypt and sign messages at the same time in a logical step. The certificate multiple acceptance signcryption scheme was first studied in \cite{Selvi-Vivek-Shukla-Chandrasekaran-2008}. The authors \cite{Sayid-Isaac-Sayid-Jayaprakash-2016} survey and analysis some of the well-known certificateless schemes, and present the generic model of Certificateless Public Key Encryption scheme proposed by comparisons of the certificateless schemes performance and security.

The authors \cite{Zhang-Sun-Zhang-Geng-Li-2011} revisits, analyzes, compares, and briefly reviews some of the main results. Furthermore, they study discusses some existing problems in this research field that deserve further investigation. In particular, in view of the threat of various key leakage attacks, including side-channel attacks, to the security of cryptosystems, and the challenge of modern cryptography by new computing technologies such as quantum computing, we believe that, The research on certificate-free cipher system that is able to resist key leakage attack and secure in quantum computing will be a very meaningful and challenging new topic \cite{Zhang-Sun-Zhang-Geng-Li-2011}.

\vskip 0.4cm

In the coming \textbf{Quantum Computer Era}, we will be facing with huge information security challenges: The Shor algorithm can completely destroy the encryption mechanism based on RSA and elliptic curve cryptography as long as the quantum computer has enough logical qubits to perform operations. As known, Shor algorithm can effectively attack RSA, EIGamal, ECC public-key cryptography and DH key agreement protocols which are widely used at present. This indicates that RSA, EIGamal, ECC public-key cryptography and DH key agreement protocols will no longer be secure in the quantum computing environment. There is also an algorithm called Grover, which can completely weaken AES encryption from 128 bits to 64 bits, and then it can be cracked by ordinary computer algorithms.

In 2016 the National Institute of Standards and Technology has initiated a standardization process for post-quantum cryptosystems. Such cryptosystems are usually based on NP-complete problems for two reasons: NP-complete problems are at least as hard as the hardest problems in NP, but solutions of such problems can be verified efficiently. The main candidates for post-quantum cryptography are:
\begin{asparaenum}[\textrm{\textbf{Reotost}}-1. ]
\item \textbf{Code-based cryptography} is based on the NP-complete problem of decoding a random linear code.
\item \textbf{Lattice-based cryptography} is based on Conjectured security against quantum attacks; Algorithmic simplicity, efficiency, and parallelism; Strong security guarantees from worst-case hardness; NP-complete problems of finding the shortest vector.
\item \textbf{Multivariate cryptography} is based on the NP-complete problem of solving multivariate quadratic equations defined over some finite field.
\item \textbf{Isogeny-based cryptography} is based on finding the isogeny map between two super-singular elliptic curves.
\end{asparaenum}

\vskip 0.4cm

In this article, part of anti-quantum computing foundations of our techniques mentioned here are based on the following works: the problems of integer partition and integer factorization; total graph-colorings; structures of various number-based strings including various number-based strings, integer partitioned and integer decomposed strings, number-based strings generated from indexed-colorings, graphic strings, vector-colorings, set-colorings, graph-based colorings; additive string group of prime order, tree-base graphic lattices; Hanzi-based strings, Hanzi-graphic lattices, topological signatures, TKPDRA-center for networks and self-certification topology algorithms. We will discussion \emph{asymmetric topology encryption} towards the certificateless public key cryptography in the last section.

\subsection{Examples from topological coding}

\subsubsection{Top-EN-DECRYPTION algorithm-I}

\begin{example}\label{exa:top-en-decryption-algorithm-I}
\textbf{The encryption and decryption from the technology of topological coding.} We use the colored graphs $G$, $T$, $J$ and $K_6$ shown in Fig.\ref{fig:introduction-example-11} to explain the application of topological coding, which is a branch of mathematics. ``Topological encryption and decryption algorithm'' is abbreviated by ``Top-EN-DECRYPTION algorithm''.

\vskip 0.4cm

\textbf{Top-EN-DECRYPTION algorithm-I.}

\textbf{Initialization. }A file $d_{oc}$ is encrypted by a number-based string $s_{pub}=135244214255666$ with 15 bytes obtained from the Topcode-matrix $T_{code}(G)$ of the private-key $G$; refer to Fig.\ref{fig:introduction-example-11} (a) and Eq.(\ref{eqa:example-topcode-matrix11}). The encrypted file is denoted as $d^*_{oc}$. \textbf{Do}:

\textbf{Step 1. }Find two \emph{private-key graphs} $T$ and $J$ for a given \emph{public-key graph} $G$.

\textbf{Step 2. }Find a number-based string $s_{pri}=421111235254463$ from the Topcode-matrix $T_{code}(T)$ of the private-key graph $T$; and find another number-based string $s\,'_{pri}=122331311325346$ from the Topcode-matrix $T_{code}(J)$ of the private-key graph $J$; refer to Fig.\ref{fig:introduction-example-11} (b) and (c), and Eq.(\ref{eqa:example-topcode-matrix11}).

\textbf{Step 3. }Complete the \emph{topological authentication} $K_6=G[\odot ]T[\odot ]J$ shown in Fig.\ref{fig:introduction-example-11} (d) by vertex-coinciding the vertices colored the same color in three colored graphs $G$, $T$ and $J$.

\textbf{Step 4. }Decrypt the encrypted file $d^*_{oc}$ by two number-based strings $s_{pri}$ and $s\,'_{pri}$, and get the original file $d_{oc}$.

\vskip 0.4cm

Analysis of the above Top-EN-DECRYPTION algorithm-I.

(i) Finding two \emph{private-key graphs} $T$ and $J$ from $6^{6-2}=1,296$ different colored spanning trees of $K_6$ is not easy, since we will meet the Subgraph Isomorphic Problem, which is a NP-complete problem; refer to the Cayley's formula $\tau(K_n)=n^{n-2}$ in graph theory \cite{Bondy-2008}.

(ii) The topological authentication $K_6=G[\odot ]T[\odot ]J$ holds four edge sets $E(K_6)=E(G)\cup E(T)\cup E(J)$ with $E(G)\cap E(T)=\emptyset $, $E(G)\cap E(J)=\emptyset $ and $E(T)\cap E(J)=\emptyset $ true, since each complete graph $K_{2m}$ with $m\geq 2$ can be vertex-split into $m$ mutually edge-disjoint spanning trees of $2m-1$ edges.

(iii) Each of three Topcode-matrices $T_{code}(G)$, $T_{code}(T)$ and $T_{code}(J)$ can induces $(15)!$ different number-based strings like $s_{pub}$, $s_{pri}$ and $s\,'_{pri}$. So, if the colored graphs $G$, $T$ and $J$ have thousands of edges, it is a terrible job for computer to chose two number-based strings $s_{pri}$ and $s\,'_{pri}$ for decrypting the file encrypted by the number-based string $s_{pub}$.

Summarizing the above analysis, in general, we will \textbf{do}:

(1) \textbf{Find} $m$ mutually edge-disjoint spanning trees $T_1,T_2,\dots ,T_m$ of $2m-1$ edges from the $(2m)^{2m-2}$ colored spanning trees of a complete graph $K_{2m}$ to hold a topological authentication $E(K_{2m})=\bigcup ^m_{i=1}E(T_i)$;

(2) \textbf{find} $m$ number-based strings $s_{k}$ generated from the Topcode-matrices $T_{code}(T_k)$ with $k\in [1,m]$ to complete the encryption and decryption to the digital files, however, each number-based strings $s_{k}$ has to be found from $(6m-3)!$ different number-based strings induced by each Topcode-matrix $T_{code}(T_k)$.\qqed
\end{example}

The following three Topcode-matrices $T_{code}(G)$, $T_{code}(T)$ and $T_{code}(J)$ obtained from three colored graphs $G,T,J$ shown in Fig.\ref{fig:introduction-example-11}.

{\small
\begin{equation}\label{eqa:example-topcode-matrix11}
\centering
{
\begin{split}
T_{code}(G)= \left(
\begin{array}{cccccc}
1 & 3 & 5 & 2 & 4\\
4 & 2 & 1 & 4 & 2\\
5 & 5 & 6 & 6 & 6
\end{array}
\right)
\end{split}}, {
\begin{split}
T_{code}(T)= \left(
\begin{array}{cccccc}
4 & 2 & 1 & 1 & 1\\
1 & 2 & 3 & 5 & 2\\
5 & 4 & 4 & 6 & 3
\end{array}
\right)
\end{split}},
{
\begin{split}
T_{code}(J)= \left(
\begin{array}{cccccc}
1 & 2 & 2 & 3 & 3\\
1 & 3 & 1 & 1 & 3\\
2 & 5 & 3 & 4 & 6
\end{array}
\right)
\end{split}}
\end{equation}
}

\begin{figure}[h]
\centering
\includegraphics[width=16.4cm]{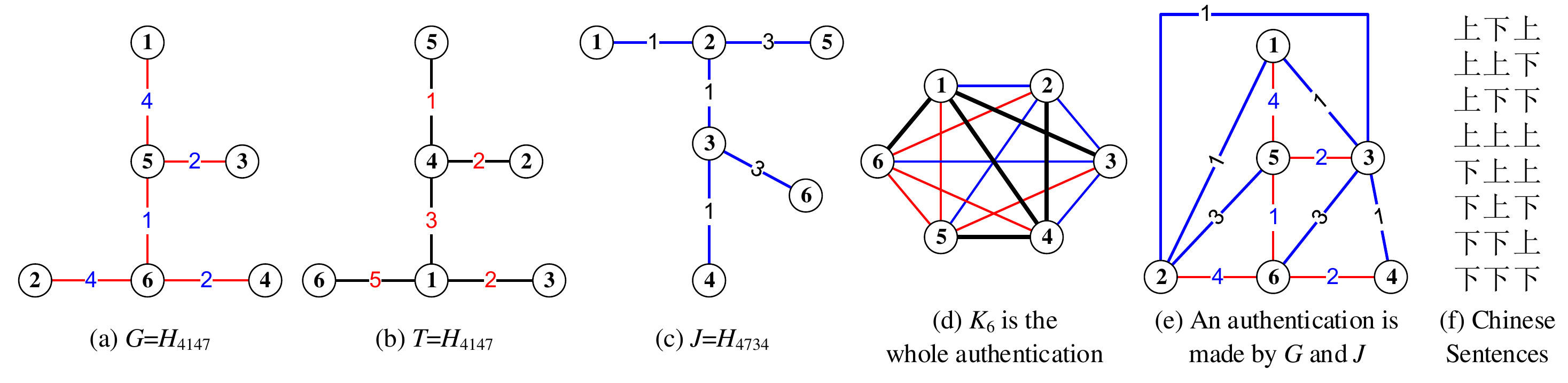}\\
\caption{\label{fig:introduction-example-11}{\small A diagram for Example \ref{exa:top-en-decryption-algorithm-I} and Example \ref{exa:top-en-decryption-algorithm-II}.}}
\end{figure}

\subsubsection{Top-EN-DECRYPTION algorithm-II}

\begin{example}\label{exa:top-en-decryption-algorithm-II}
In Example \ref{exa:top-en-decryption-algorithm-I}, we do not add more constraints to the topological encryption and decryption. Some constraints for the \textbf{Top-EN-DECRYPTION algorithm-II} are as follows:
\begin{asparaenum}[\textbf{\textrm{Cons}}-1.]
\item Decrypt the encrypted file by a number-based string generated from the Topcode-matrix $T_{code}(G[\odot ]T[\odot ]J)$ shown in Eq.(\ref{eqa:example-topcode-matrix33}), it is noticeable, the Topcode-matrix $T_{code}(G[\odot ]T[\odot ]J)$ can produces $(45)!$ number-based strings, which include all permutations of the number-based strings $s_{pub}$, $s_{pri}$ and $s\,'_{pri}$, in total.
\item Each number of $\{1,2,3,4,5,6\}$ in the Topcode-matrices $T_{code}(G)$, $T_{code}(T)$ and $T_{code}(J)$ can be evaluated by other numbers, for example, we use the \emph{assignment symbol} ``$:=$'' to $1:=142857$, $2:=6174$, $3:=0618$, $4:=31415926$, $5:=8128$, $6:=196$, then we get the \emph{assignment string} $s_{pub}:=s^*_{pub}$ as follows
\begin{equation}\label{eqa:evaluated-replace11}
{
\begin{split}
135244214255666=s_{pub}:=&s^*_{pub}\\
=&142857~0618~8128~6174~31415926~31415926~6174~\\
&142857~31415926~6174~8128~8128~196~196~196
\end{split}}
\end{equation} Or the numbers of $\{1,2,3,4,5,6\}$ can be replaced by something else for increasing the cost of code-breakers.
\item The topological authentication is one of eight Chinese Sentences shown in Fig.\ref{fig:introduction-example-11} (f), such that Chinese speakers and writers can use Chinese characters instead of number-based strings with long codes in real application.
\item There are many choices for making \emph{public-key graphs} and \emph{private-key graphs}, such as one-vs-one (see Fig.\ref{fig:introduction-example-11} (e) and Eq.(\ref{eqa:example-topcode-matrix22})), one-vs-more (see Top-EN-DECRYPTION algorithm-I), and more-vs-more.
\item Replace by a connected graph $G$ the complete graph $K_{2m}$, and decompose $G$ into mutually edge-disjoint trees $T_1,T_2,\dots ,T_m$, such that $E(G)=\bigcup^m_{k=1}E(T_k)$, since trees admit many interesting labelings and colorings for making more complex number-based strings.\qqed
\end{asparaenum}
\end{example}

\begin{equation}\label{eqa:example-topcode-matrix22}
\centering
{
\begin{split}
T_{code}(G[\odot ]J)= \left(
\begin{array}{cccccccccccc}
1 & 3 & 5 & 2 & 4 & 1 & 2 & 2 & 3 & 3\\
4 & 2 & 1 & 4 & 2 & 1 & 3 & 1 & 1 & 3\\
5 & 5 & 6 & 6 & 6 & 2 & 5 & 3 & 4 & 6
\end{array}
\right)
\end{split}}
\end{equation}

\begin{equation}\label{eqa:example-topcode-matrix33}
\centering
{
\begin{split}
T_{code}(G[\odot ]T[\odot ]J)= \left(
\begin{array}{cccccccccccccccccc}
1 & 3 & 5 & 2 & 4 & 4 & 2 & 1 & 1 & 1 & 1 & 2 & 2 & 3 & 3\\
4 & 2 & 1 & 4 & 2 & 1 & 2 & 3 & 5 & 2 & 1 & 3 & 1 & 1 & 3\\
5 & 5 & 6 & 6 & 6 & 5 & 4 & 4 & 6 & 3 & 2 & 5 & 3 & 4 & 6
\end{array}
\right)
\end{split}}
\end{equation}

\subsubsection{The PRONBS-problem}

\begin{example}\label{exa:8888888888}
If the number-based string $s^*_{pub}$ shown in Eq.(\ref{eqa:evaluated-replace11}) is as a \emph{public-key string}, then the breakers have to rewrite $s^*_{pub}$ as $s^*_{pub}=c_{1}c_{1}\cdots c_{15}$, where $c_{1}=$142857, $c_{2}=$0618, $c_{3}=$8128, $c_{4}=$6174, $c_{5}=$31415926, $c_{6}=$31415926, $c_{7}=$6174, $c_{8}=$142857, $c_{9}=$31415926, $c_{10}=$6174, $c_{11}=$8128, $c_{12}=$8128, $c_{13}=$196, $c_{14}=$196 and $c_{15}=$196, and they use these 15 sub-number-based strings to form correctly a Topcode-matrix $T_{code}(s^*)$ as follows
\begin{equation}\label{eqa:example-topcode-matrix55}
\centering
{
\begin{split}
T_{code}(s^*)= \left(
\begin{array}{cccccc}
142857 & 0618 & 8128 & 6174 & 31415926\\
31415926 & 6174 & 142857 & 31415926 & 6174\\
8128 & 8128 & 196 & 196 & 196
\end{array}
\right)
\end{split}}
\end{equation} so we have the assignment relationship $T_{code}(G):=T_{code}(s^*)$ for expressing the relationship between two Topcode-matrices $T_{code}(G)$ and $T_{code}(s^*)$.

Next, the breakers use this Topcode-matrix $T_{code}(s^*)$ to find the colored graph $G$ shown in Fig.\ref{fig:introduction-example-11} (a) and its Topcode-matrix $T_{code}(G)$ shown in Eq.(\ref{eqa:example-topcode-matrix11}). Finally, the breakers use the colored graph $G$ and the Topcode-matrix $T_{code}(G)$ to find other two colored graphs (as \emph{private-key graphs}) $T$ and $J$ shown in Fig.\ref{fig:introduction-example-11} (b) and (c), as well as two Topcode-matrices $T_{code}(T)$ and $T_{code}(J)$ to export two number-based strings $s_{pri}$ and $s\,'_{pri}$ to decrypt the encrypted file.

This problem is just the so-called PRONBS-problem.\qqed
\end{example}

\begin{problem}\label{question:PRONBS-problems00}
\cite{Bing-Yao-arXiv:2207-03381} We present a problem, called PRONBS-problem, for the parametric reconstitution of number-based strings as follows:
\begin{quote}
\textbf{PRONBS-problem.} For a given number-based string $s=c_1c_2\cdots c_n$ with $c_i\in [0,9]=\{0,1,\dots ,9\}$, rewrite this number-based string $s$ into $3q$ segments $a_1,a_2,\dots ,a_{3q}$ with $a_j=b_{j,1}b_{j,2}\cdots b_{j,m_j}$ for $j\in [1,3q]=\{1,2,\dots ,3q\}$, such that each number $c_i$ of the number-based string $s$ is in a segment $a_j$ but $c_i$ is not in other segment $a_s$ if $s\neq j$. Find two integers $k_0,d_0\geq 0$, such that each segment $a_j$ with $j\in [1,3q]$ can be expressed as $a_j=\beta_jk_0+\gamma_jd_0$ for integers $\beta_j,\gamma_j\geq 0$ and $j\in [1,3q]$; and moreover \textbf{find} a bipartite $(p,q)$-graph $H$ admitting a \emph{parameterized coloring} $F$ holding its own parameterized Topcode-matrix
\begin{equation}\label{eqa:PRONBS-problem-parameterized-matrix}
P_{(k,d)}(H,F)_{3\times q}=k\cdot I\,^0_{3\times q}+d\cdot T_{code}(H,f)_{3\times q}
\end{equation} where $f$ is a set-ordered $W$-constraint coloring of the bipartite $(p,q)$-graph $H$, as well as the parameterized number-based string $a_1a_2\dots a_{3q}=c\,'_1c\,'_2\cdots c\,'_n$ is just generated from the parameterized Topcode-matrix $P_{(k,d)}(H,F)_{3\times q}$ when $(k,d)=(k_0,d_0)$, where $c\,'_1,c\,'_2,\dots ,c\,'_n$ is a permutation of $c_1,c_2,\dots ,c_n$; refer to Definition \ref{defn:bipartite-parameterized-topcode-matrix}.
\end{quote}
\end{problem}

\begin{example}\label{exa:8888888888}
A bipartite $(5,4)$-graph $H$ colored with a set-ordered coloring $f$ admits a \emph{parameterized coloring} $F$ defined by the following parameterized Topcode-matrix
\begin{equation}\label{eqa:parameterized-topcode-matrix11}
\centering
{
\begin{split}
P_{(k,d)}(H,F)_{3\times 5}=&k\cdot I\,^0_{3\times 5}+d\cdot T_{code}(H,f)_{3\times 5}\\
=&k\cdot \left(
\begin{array}{cccccc}
0 & 0 & 0 & 0 & 0\\
1 & 1 & 1 & 1 & 1\\
1 & 1 & 1 & 1 & 1
\end{array}
\right)+d\cdot \left(
\begin{array}{cccccc}
1 & 2 & 3 & 3 & 3\\
5 & 4 & 3 & 2 & 1\\
6 & 6 & 6 & 5 & 4
\end{array}
\right)\\
=&\left(
\begin{array}{rrrrrr}
d & 2d & 3d & 3d & 3d\\
k+5d & k+4d & k+3d & k+2d & k+d\\
k+6d & k+6d & k+6d & k+5d & k+4d
\end{array}
\right)
\end{split}}
\end{equation} where $I\,^0_{3\times 5}$ is the \emph{unit Topcode-matrix} of order $3\times 5$; refer to Definition \ref{defn:bipartite-parameterized-topcode-matrix}.

Clearly, the above parameterized Topcode-matrix $P_{(k,d)}(H,F)_{3\times 5}$ can distributes us infinite different number-based strings.\qqed
\end{example}

Part of complexities of the PRONBS-problem are as follows:
\begin{asparaenum}[\textrm{\textbf{Pronbs}}-1. ]
\item \textbf{Rewrite} the string $s$ into $3q$ segments $a_1,a_2,\dots ,a_{3q}$ with $a_j=b_{j,1}b_{j,2}\cdots b_{j,m_j}$ for $j\in [1,3q]$, such that each number $c_i$ of the number-based string $s$ is in a segment $a_j$ but $c_i$ is not in other segment $a_s$ if $s\neq j$. In this work, we will meet the \textbf{String Partition Problem}, notice that there are different colored graphs, which induce the same number-based strings.

\item \textbf{Find} two integers $k_0,d_0\geq 0$, such that each segment $a_j$ with $j\in [1,3q]$ can be expressed as $a_j=\beta_jk_0+\gamma_jd_0$ for some integers $\beta_j,\gamma_j\geq 0$ and $j\in [1,3q]$. In this work, we will meet the \textbf{Indefinite Equation Problem}.

\item \textbf{Find} a colored $(p,q)$-graph $H$ admitting a $W$-constraint coloring $f$, such that $H$ admits a parameterized coloring $F$ defined by Eq.(\ref{eqa:PRONBS-problem-parameterized-matrix}). The work of \textbf{finding} the bipartite $(p,q)$-graph $H$ from a huge amount of graphs will meet the \textbf{Subgraph Isomorphic Problem}, which is a NP-complete problem; refer to two numbers $G_{23}$ and $G_{24}$ shown in Eq.(\ref{eqa:number-graphs-23-24-vertices}).
\item \textbf{Finding} the desired $W$-constraint coloring $f$ is related with the \textbf{Indefinite Equation Problem}.
\item Let $G_p$ be the number of graphs of $p$ vertices, Harary and Palmer \cite{Harary-Palmer-1973} computed two graph numbers
\begin{equation}\label{eqa:number-graphs-23-24-vertices}
{
\begin{split}
G_{23}&=559946939699792080597976380819462179812276348458981632\approx 2^{179}\\
G_{24}&=195704906302078447922174862416726256004122075267063365754368\approx 2^{197}
\end{split}}
\end{equation}
\quad So, \textbf{finding} a particular graph from \textbf{hundreds of graphs} with \textbf{hundreds of vertices and edges} is a terrible computational job for supercomputers, or quantum computers.
\item Another difficult problem is to find the particular $W$-constraint coloring $f$ admitted by the graph $H$ from \textbf{thousands of colorings and labelings}, and other difficult problem is that there are many ways for coloring the graph $H$ by the particular $W$-constraint coloring, however, the $W$-constraint coloring $f$ is just one of them.
\item \textbf{Use} this parameterized Topcode-matrix $P_{(k,d)}(H,F)$ to produce just the desired number-based string $a_1a_2\dots a_{3q}=c\,'_1c\,'_2\cdots c\,'_n$ when $(k,d)=(k_0,d_0)$ determined by the solution $k_0,d_0$ of the above indefinite equations $a_j=\beta_jk_0+\gamma_jd_0$ for $j\in [1,3q]$.
\end{asparaenum}

\subsection{Preliminary}

\subsubsection{Terminology and notation}

The articles \cite{Bondy-2008}, \cite{Gallian2021} and \cite{Yao-Wang-2106-15254v1} provide standard terminology, notation, labelings and colorings used here. For the sake of statement, we relist the following terminology and notation:

\begin{asparaenum}[$\bullet$ ]
\item $Z$ is the set of integers, $Z^0$ is the set of non-negative integers
\item An integer set $\{a, a+1, \dots, b\}$ with integers $a, b$ holding $0\leq a<b$ is rewritten as a short notation $[a,b]$, and $[m, n]^o$ denotes an odd-set $\{m, m+2, \dots, n\}$ with odd integers $m, n$ respect to $1\leq m< n$.
\item The symbol $[c,d]^r$ is a set of real numbers for two real numbers $c,d$.
\item $|X|$ is the number of elements of a set $X$.
\item The set of all non-empty subsets of an integer set $[a,b]$ is denoted as $[a,b]^2$, called \emph{power set}. An integer set $[1,4]=\{1,2,3,4\}$ has its own power set $[1,4]^2=\big \{\{1\},\{2\},\{3\}$, $\{4\}$, $\{1,2\}$, $\{1,3\},\{1,4\}$, $\{2,3\}$, $\{2,4\}$, $\{3,4\}$, $\{1,2,3\}$, $\{1,2,4\}$, $\{1,3,4\}$, $\{2,3,4\}$, $\{1,2,3,4\}\big \}$, so the number of subsets is $2^4-1=15$ in total.
\item Let integers $r,k\geq 0$ and $s,d\geq 1$, we have two parameterized sets
\begin{equation}\label{eqa:two-parameterized-sets11}
S_{s,k,r,d}=\{k+rd,k+(r+1)d,\dots ,k+(r+s)d\}
\end{equation} and
\begin{equation}\label{eqa:two-parameterized-sets22}
O_{s,k,r,d}=\{k+[2(r+1)-1]d,k+[2(r+2)-1]d,\dots ,k+[2(r+s)-1]d\}
\end{equation} so $|S_{s,k,r,d}|=s+1$ and $|O_{s,k,r,d}|=s$.

\item The notation $N_{ei}(u)$ is the \emph{neighbor set} of vertices adjacent with a vertex $u$, the number $\deg_G(u)=|N_{ei}(u)|$ is called the \emph{degree} of the vertex $u$. The maximum degree $\Delta(G)$ and the minimum degree $\delta(G)$ of a graph $G$ are defined as follows:
$$\Delta(G)=\max \{\deg_G(u):u\in V(G)\},~\delta(G)=\min \{\deg_G(u):u\in V(G)\}
$$
\item A \emph{leaf} is a vertex $x$ having its degree $\textrm{deg}(x)=1=|N_{ei}(x)|$.
\item A $(p,q)$-graph $G$ having $p$ vertices and $q$ edges, such that its own vertex set $V(G)$ and edge set $E(G)$ satisfy $|V(G)|=p$ and $|E(G)|=q$, respectively.
\item A \emph{tree} has no cycle and any pair of two vertices of the tree is joined by a unique path. A \emph{caterpillar} $T$ is a tree, such that the \emph{leaf-removed graph} $T-L(T)$ is just a path $P=x_1x_2\cdots x_n$, where $L(T)$ is the set of all leaves of $T$. A \emph{lobster} $H$ is a tree, such that leaf-removed graph $H-L(H)$ is just a \emph{caterpillar}.
\item A \emph{bipartite} $(p,q)$-graph $G$ has its own vertex $V(G)=X\cup Y$ with $X\cap Y=\emptyset$, and $x\in X$ and $y\in Y$ for each edge $xy\in E(G)$, also, $X$ and $Y$ are \emph{independent sets} of the graph $G$.
\item Each complete graph contained in a graph $G$ is called \emph{clique}, the symbol $K(G)$ is the maximum vertex number of cliques of $G$.
\end{asparaenum}

\subsubsection{Pan-colorings and pan-strings}

Let $M$ be a set of integers, and let a graph $G$ admit a mapping $f:S\subset V(G)\cup E(G)\rightarrow M$, write the set of colors assigned to the elements of $S$ by $f(S)=\{f(x):x\in S\}$. There are the following basic colorings/labelings in graph theory:
\begin{asparaenum}[\textrm{A}-1.]
\item As $S=V(G)$, $f(S)=f(V(G))$, we call $f$ \emph{vertex coloring}.
\item As $S=E(G)$, $f(S)=f(E(G))$, we call $f$ \emph{edge coloring}.
\item As $S=V(G)\cup E(G)$, $f(S)=f(V(G)\cup E(G))=f(V(G))\cup f(E(G))$, we call $f$ \emph{total coloring}.
\end{asparaenum}
\begin{asparaenum}[\textrm{B}-1. ]
\item As $f$ is a \emph{vertex coloring} and $f(u)\neq f(v)$ for each edge $uv\in E(G)$, we call $f$ \emph{proper vertex coloring}. The number
$$\chi(G)=\min_f\max\{f(w):w\in V(G)\}$$ over all proper vertex colorings of $G$ is called \emph{chromatic number}.
\item As $f$ is an \emph{edge coloring} and $f(uv)\neq f(uw)$ for any two adjacent edges $uv$ and $uw$ of $G$, we call $f$ \emph{proper edge coloring}. The number
$$\chi\,'(G)=\min_f\max\{f(w):w\in E(G)\}$$ over all proper edge colorings of $G$ is called \emph{chromatic index}.
\item As $f$ is a \emph{total coloring} holding $f(u)\neq f(v)$ for each edge $uv$ and $f(uv)\neq f(uw)$ for any two adjacent edges $uv,uw$, we call $f$ \emph{proper total coloring}. The number
$$\chi\,''(G)=\min_f\max\{f(w):w\in V(G)\cup E(G)\}$$ over all proper total colorings of $G$ is called \emph{total chromatic number}.
\item As $f$ is a vertex coloring (resp. an edge coloring, or a total coloring) holding $f(u)\neq f(x)$ for any two vertices $u$ and $x$ (resp. for any two edges, or for any two elements $u,x\in V(G)\cup E(G)$), that is, $|f(V(G))|=|V(G)|$ (resp. $|f(E(G))|=|E(G)|$, or $|f(V(G)\cup E(G))|=|V(G)|+|E(G)|$), we call $f$ \emph{labeling}.
\item If a coloring/labeling $f$ satisfies a mathematical constraint or a group of mathematical constraints, we say $f$ to be a \emph{$W$-constraint coloring/labeling}.
\end{asparaenum}

\vskip 0.4cm

For the sake of definiteness we define:
\begin{asparaenum}[$\ast$ ]
\item A \emph{$n$-rank number-based string} $s=a_1a_2\cdots a_n$ with each number $a_i\in Z^0$, however, we often omit ``$n$-rank'' in discussion.
\item A \emph{$n$-rank $[0,9]$-string} $s=b_1b_2\cdots b_n$ with each number $b_i\in [0,9]$, however, we often omit the word ``$n$-rank'' in discussion.
\item A \emph{Tm-$[0,9]$-string} $s=b_1b_2\cdots b_n$ is made by a Topcode-matrix and holds each number $b_i\in [0,9]$ true; refer to Definition \ref{defn:topcode-matrix-definition}, Definition \ref{defn:colored-topcode-matrix}, and Definition \ref{defn:generalization-colored-topcode-matrix}.
\item A \emph{text-based string} $s_T=a_1a_2\cdots a_m$ has at least one $a_j$ being a letter but number.
\item An \emph{English-text-based string} $s_T=e_1e_2\cdots e_m$ has each $e_j$ to be an \emph{English letter}.
\item A \emph{Hanzi-based string} $s_T=H_{a_1b_1c_1d_1}H_{a_2b_2c_2d_2}\cdots H_{a_mb_mc_md_m}$ has each $H_{a_jb_jc_jd_j}$ to be a \emph{Chinese letter} appeared in \cite{GB2312-80}; see Fig.\ref{fig:Chinese-text-based-string}.
\item A \emph{graph-based string} $s_G=G_1G_2\cdots G_m$ has each $G_j$ to be a \emph{graph}.
\item A \emph{set-based string} $s_S=X_1X_2\cdots X_m$ has each $X_j$ to be a \emph{set}.
\item A \emph{$W$-matrix-based string} $s_M=M_1M_2\cdots M_m$ has each $M_j$ to be a \emph{$W$-type matrix}.
\item A \emph{thing-based string} $s_t=T_1T_2\cdots T_m$ has each $T_j$ to be a \emph{thing} with a particular property or a group of particular properties.\qqed
\end{asparaenum}

\begin{figure}[h]
\centering
\includegraphics[width=13.4cm]{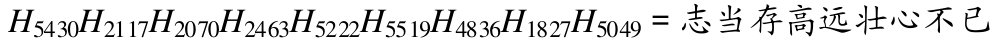}\\
\caption{\label{fig:Chinese-text-based-string}{\small A Hanzi-based string $H_{5430}H_{2117}H_{2070}H_{2463}H_{5222}H_{5519}H_{4836}H_{1827}H_{5049}$.}}
\end{figure}

\begin{rem}\label{rem:333333}
A \emph{number-based string} $s=a_1a_2\cdots a_n$ can correspond to an \emph{integer vector} $(a_1,a_2,\dots $, $a_n)$ in the $n$-dimension space, or an \emph{integer set} $\{a_1,a_2,\dots ,a_n\}$ (no order), or a \emph{vertex-degree} $\textrm{deg}=(a_1,a_2,\dots ,a_n)$ (no order) in graph theory.\paralled
\end{rem}

\begin{defn} \label{defn:all-thing-coloringss}
$^*$ A graph $G$ admits a coloring $F:S\rightarrow U$, where $S\subseteq V(G)\cup E(G)$, such that $F$ holds a group of $W_1$-constraint, $W_2$-constraint, $\dots$, $W_n$-constraint, denoted as $\{W_i\}^n_{i=1}$-constraint with $n\geq 1$. Then
\begin{asparaenum}[\textbf{\textrm{Col}}-1.]
\item $F$ is called \emph{$\{W_i\}^n_{i=1}$-constraint string-coloring} if $U$ is a set of strings.
\item $F$ is called \emph{$\{W_i\}^n_{i=1}$-constraint coloring-coloring} if $U$ is a set of colorings.
\item $F$ is called \emph{$\{W_i\}^n_{i=1}$-constraint set-coloring} if $U$ is a set of sets.
\item $F$ is called \emph{$\{W_i\}^n_{i=1}$-constraint vector-coloring} if $U$ is a set of vectors.
\item $F$ is called \emph{$\{W_i\}^n_{i=1}$-constraint matrix-coloring} if $U$ is a set of matrices.
\item $F$ is called \emph{$\{W_i\}^n_{i=1}$-constraint graph-coloring} if $U$ is a set of graphs.
\item $F$ is called \emph{$\{W_i\}^n_{i=1}$-constraint group-coloring} if $U$ is an every-zero graphic group.
\item $F$ is called \emph{$\{W_i\}^n_{i=1}$-constraint string-group coloring} if $U$ is an every-zero string group.
\item $F$ is called \emph{$\{W_i\}^n_{i=1}$-constraint graphic-group coloring} if $U$ is an every-zero graphic group.
\item $F$ is called \emph{$\{W_i\}^n_{i=1}$-constraint string-lattice coloring} if $U$ is a string lattice.
\item $F$ is called \emph{$\{W_i\}^n_{i=1}$-constraint graphic-lattice coloring} if $U$ is a graphic lattice.
\item $F$ is called \emph{$\{W_i\}^n_{i=1}$-constraint thing-coloring} if $U$ is a set of things having a particular property or a group of particular properties.\qqed
\end{asparaenum}
\end{defn}

\subsubsection{Graph operations}

The sentence ``\emph{adding a leaf $w$ to a graph $G$}'' is an graph operation defined by adding a new vertex $w$ to a graph $G$, and join $w$ with a vertex $x$ of the graph $G$ by an edge $xw$, the resultant graph is denoted as $G+xw$, such that $w$ is a vertex of degree one of the \emph{leaf-added graph} $G+xw$.

Let $G$ and $H$ be two graphs. If $H=G+xy-uv$ for edge $uv\in E(G)$ and edge $xy\not\in E(G)$, then we call $H$ \emph{$\pm e$-dual} of $G$, also, an \emph{adding-edge-subtracting graph homomorphism} $H\rightarrow _{\pm e}G$.

Let $G_i$ and $G_j$ be two disjoint connected graphs in the following argument, we present the following graph operations:

(1) \textbf{Edge-joining operation.} We use a new edge to join a vertex $u$ of $G_i$ with one vertex $x$ of $G_j$ together, so we get a connected graph $G_i[\ominus ]G_j$, called an \emph{edge-joined graph} based on the \emph{graph edge-joining operation} ``$[\ominus]$'', see an example shown in Fig.\ref{fig:22-vertex-join-split-coincide} from (a) to (c).

\begin{figure}[h]
\centering
\includegraphics[width=16cm]{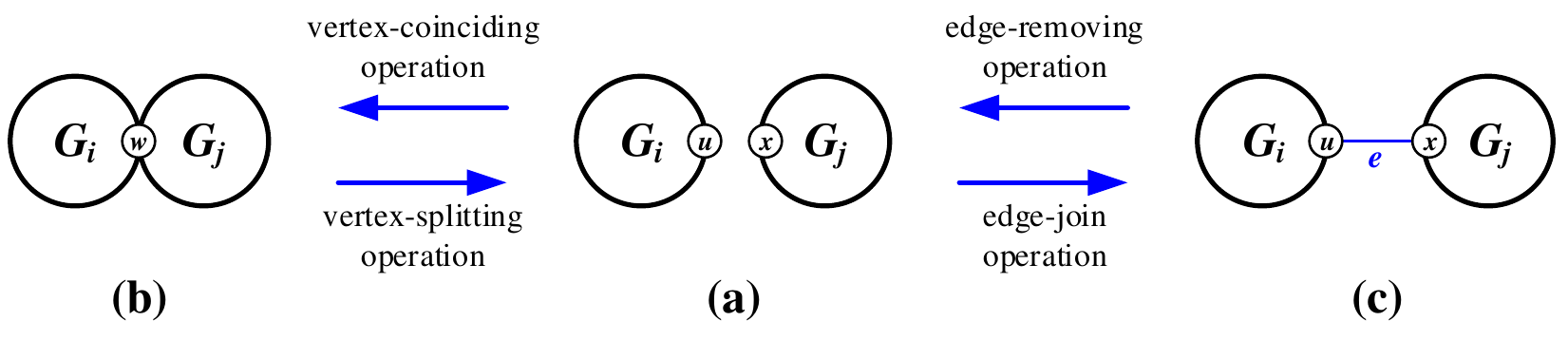}
\caption{\label{fig:22-vertex-join-split-coincide}{\small A diagram for illustrating the vertex-coinciding operation and the edge-joining operation, cited from \cite{Bing-Yao-arXiv:2207-03381}.}}
\end{figure}

(2) \textbf{Vertex-coinciding operation.} We vertex-coincide a vertex $u$ of $G_i$ with one vertex $x$ of $G_j$ into one vertex $w=u\odot x$, the resulting graph is denoted as $G_i[\odot] G_j$, called a \emph{vertex-coincided graph} based on the \emph{graph vertex-coinciding operation} ``$[\odot]$'', such that the vertex $w\in V(G_i[\odot]G_j)$ has the degree $\textrm{deg}_{G_i}(u)+\textrm{deg}_{G_j}(x)$, see an illustration in Fig.\ref{fig:22-vertex-join-split-coincide} from (a) to (b).

(3) \textbf{Vertex-splitting tree-operation. }A \emph{vertex-splitting operation} on a connected graph is defined as: Let $x$ be a vertex of a cycle $C$ of a connected graph $G$, and the neighbor set $N_{ei}(x)=\{x_1,x_2,\dots, x_d\}$ with $d=\textrm{deg}_G(x)\geq 2$ be the neighbor set of the vertex $x$. We vertex-split $x$ into two vertices $x\,'$ and $x\,''$, and let $x\,'$ to join with each of vertices $x_{i_1},x_{i_2},\dots, x_{i_k}$ by edges, and let $x\,''$ to join with each of vertices $x_{i_{k+1}},x_{i_{k+2}},\dots, x_{i_d}$ by edges, the resultant graph is a connected graph, denoted as $G\wedge x$ and called \emph{vertex-split graph}. See an example shown in Fig.\ref{fig:22-vertex-join-split-coincide} from (b) to (a). Clearly, the number of cycles of the vertex-split graph $G\wedge x$ is less than that of the original connected graph $G$.

Fig.\ref{fig:33-from-graph-to-tree} shows us a connected $(5,8)$-graph $G$ that can be vertex-split into two trees $G_4$ and $T_4$ of $8$ edges, we get graph homomorphisms $G_i\rightarrow G$ and $T_i\rightarrow G$ for $i\in [1,4]$.

If $G_1=G\wedge x$ has cycles, we do the vertex-splitting operation to $G_1$. Go on in this way, we get a graph $G_m=G_{m-1}\wedge y$ to be a tree $T$ for some integer $m\geq 1$. The process of vertex-splitting $G$ into a tree $T$ is called the \emph{vertex-splitting tree-operation}, and write the tree $T=\wedge_m(G)$ and the graph $G=\wedge^{-1}_m(T)$, or $G=[\odot_m]T$.

\begin{thm}\label{thm:connected-graph-vs-tree-set}
\cite{Bing-Yao-arXiv:2207-03381} Each non-tree connected $(p,q)$-graph $G$ corresponds to a set $V_{\textrm{split}}(G)$ of trees of $q+1$ vertices, such that $G=\wedge^{-1}_m(T)$, also, $G=[\odot_m]T$ for each tree $T\in V_{\textrm{split}}(G)$.
\end{thm}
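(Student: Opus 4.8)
The plan is to track the cycle rank (first Betti number) $\beta(G) = |E(G)| - |V(G)| + 1 = q - p + 1$ of the connected graph throughout the vertex-splitting tree-operation, and to show that a single vertex-split at a cycle vertex decreases $\beta$ by exactly one while leaving the edge count fixed. Since $G$ is a non-tree connected $(p,q)$-graph we have $\beta(G) = q - p + 1 \geq 1$, so $G$ contains at least one cycle and the operation can be initiated. Connectivity plus $\beta = 0$ will characterize a tree, which is the target of the iteration.

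First I would establish the local effect of one vertex-splitting operation $G \mapsto G\wedge x$ at a vertex $x$ lying on a cycle $C$. By the definition of the operation, every edge incident to $x$ is reassigned to exactly one of the two new vertices $x'$ and $x''$, and no edge is created or deleted; hence $|E(G\wedge x)| = |E(G)| = q$ and $|V(G\wedge x)| = |V(G)| + 1 = p+1$. The key point is that the split can be performed so that $G\wedge x$ stays connected: because $x$ lies on $C$, its two cycle-neighbors $x_a,x_b$ are joined by the path $C - x$ inside $G - x$, so assigning $x_a$ to $x'$ and $x_b$ to $x''$ lets the component of $G-x$ carrying that path bridge $x'$ and $x''$; every remaining component of $G-x$ contains at least one neighbor of $x$ (else $G$ were disconnected) and is therefore attached to $x'$ or to $x''$. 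A connectivity-preserving split thus always exists, and for any such split $\beta(G\wedge x) = q - (p+1) + 1 = \beta(G) - 1$, which is the precise form of the paper's remark that the number of cycles strictly drops.

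Then I would iterate. As long as the current connected graph satisfies $\beta > 0$ it contains a cycle, so a cycle vertex is available and a connectivity-preserving split lowers $\beta$ by exactly one. After exactly $m = \beta(G) = q - p + 1$ steps we reach a connected graph $T = \wedge_m(G)$ with $\beta(T) = 0$, i.e. a tree; its vertex count is $p + m = q + 1$ and its edge count is $q$, consistent with a tree on $q+1$ vertices. Gathering the trees produced over all admissible choices of split vertices and neighbor-distributions yields the nonempty set $V_{\textrm{split}}(G)$, each member of which is a tree of $q+1$ vertices. For the invertibility clause I would use that the vertex-coinciding operation $[\odot]$ is the exact inverse of one vertex-split: coinciding the two images into $w = x'\odot x''$ restores $x$ with all its incidences, since no edges were altered. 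Reversing the recorded sequence of $m$ splits by the corresponding $m$ coinciding operations returns $G$, giving $G = \wedge^{-1}_m(T) = [\odot_m]T$ for every $T \in V_{\textrm{split}}(G)$.

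The main obstacle I expect is the connectivity bookkeeping inside the single-split step: one must guarantee that among the ways of distributing the neighbors of a chosen cycle vertex there is always a distribution producing a connected graph whose cycle rank falls by exactly one, rather than disconnecting the graph or dropping $\beta$ by more than one. The cycle-bridging argument above settles this, and the fact that $q$ stays fixed while $p$ grows by one forces the decrease in $\beta$ to be exactly one; this single invariant simultaneously guarantees termination after precisely $q-p+1$ splits and pins the vertex count of every resulting tree at $q+1$.
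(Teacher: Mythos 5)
Your proof is correct and follows essentially the same route as the paper, which only sketches the argument informally before the theorem statement (the proof itself is cited to an external reference): iterate connectivity-preserving vertex-splits at cycle vertices until a tree on $q+1$ vertices appears, then invert the sequence by vertex-coinciding operations to recover $G=[\odot_m]T$. Your two additions — the cycle-bridging argument showing a connectivity-preserving neighbor distribution always exists, and the Betti-number invariant $\beta(G)=q-p+1$ that makes the paper's claim ``the number of cycles decreases'' precise and pins the number of splits at exactly $m=q-p+1$ — are precisely the points the paper's definition asserts without justification, so they sharpen rather than diverge from the paper's argument.
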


\begin{figure}[h]
\centering
\includegraphics[width=15cm]{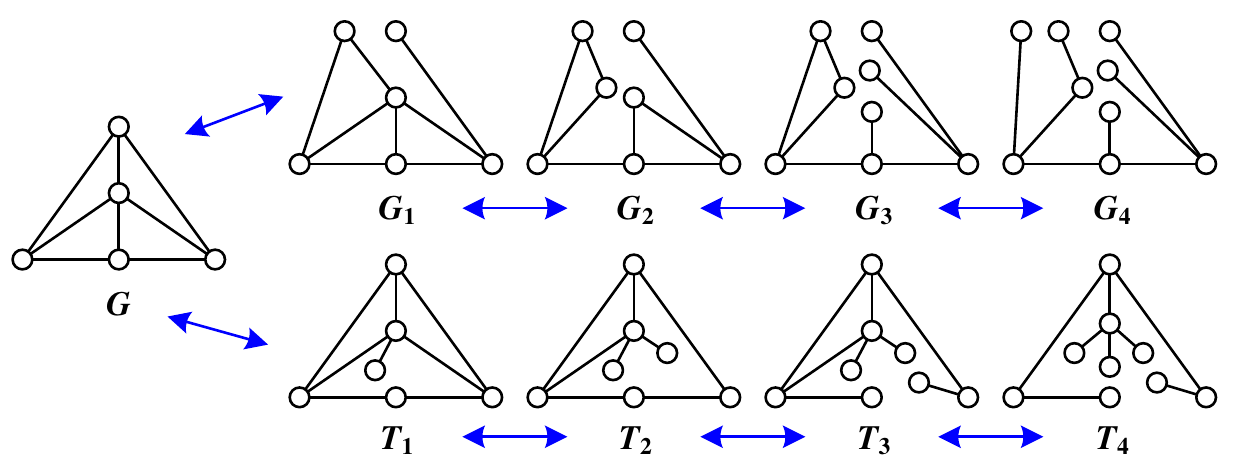}\\
\caption{\label{fig:33-from-graph-to-tree} {\small A connected $(5,8)$-graph $G$ can be vertex-split into trees $G_4$ and $T_4$, cited from \cite{Bing-Yao-arXiv:2207-03381}.}}
\end{figure}

\begin{thm}\label{thm:2-vertex-split-graphs-isomorphic}
\cite{Yao-Su-Sun-Wang-Graph-Operations-2021, Wang-Su-Yao-divided-2020} Suppose that two connected graphs $G$ and $H$ admit a one-one mapping $f:V(G)\rightarrow V(H)$. In general, a vertex-split graph $G\wedge u$ with $\textrm{deg}_G(u)\geq 2$ is not unique, so we have a vertex-split graph set $S_G(u)=\{G\wedge u\}$, similarly, we have another vertex-split graph set $S_H(f(u))=\{H\wedge f(u)\}$. If each vertex-split graph $L\in S_G(u)$ corresponds to another vertex-split graph $T\in S_H(f(u))$ such that $L\cong T$, and vice versa, we write this fact as $G\wedge u\cong H\wedge f(u)$ for each vertex $u\in V(G)$, then we claim that $G$ is \emph{isomorphic} to $H$, that is $G\cong H$.
\end{thm}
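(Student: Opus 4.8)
I will address the substantive implication, namely that the hypothesis $G\wedge u\cong H\wedge f(u)$ for all $u$ forces $G\cong H$; the converse is immediate, since vertex-splitting commutes with any graph isomorphism, so an isomorphism $\theta\colon G\to H$ carries each $L\in S_G(u)$ to an isomorphic member of $S_H(\theta(u))$ and conversely. The plan is to first pin down the numerical invariants the hypothesis already supplies, and then to lift an isomorphism between corresponding split graphs back to one between $G$ and $H$. Because every split satisfies $|V(G\wedge u)|=|V(G)|+1$ and $|E(G\wedge u)|=|E(G)|$, from $G\wedge u\cong H\wedge f(u)$ I get $|V(G)|=|V(H)|$ and $|E(G)|=|E(H)|$, consistent with $f$ being a bijection. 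I would then establish the degree-preservation lemma $\deg_G(u)=\deg_H(f(u))$: splitting $u$ replaces the entry $\deg_G(u)$ in the degree multiset of $G$ by a pair $\{k,\deg_G(u)-k\}$ summing to it, and matching the degree multisets of $G\wedge u$ and $H\wedge f(u)$ across all admissible splits $k$ and all vertices $u$ simultaneously forces $f$ to preserve degrees. This lemma is what later makes the distinguished split vertices recognizable.

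The heart of the argument is that vertex-splitting and vertex-coinciding are mutually inverse: if $L=G\wedge u$ arises from splitting $u$ into $u\,'$ and $u\,''$, then coinciding $u\,'$ and $u\,''$ in $L$ returns $G$, i.e. $G=[\odot_1]L$ by the defining inverse relationship of the two operations (cf. Theorem~\ref{thm:connected-graph-vs-tree-set}). Consequently, if an isomorphism $\phi\colon L\to T$ between corresponding split graphs $L\in S_G(u)$ and $T\in S_H(f(u))$ sends the split-pair $\{u\,',u\,''\}$ of $L$ onto the split-pair $\{v\,',v\,''\}$ of $T$, then $\phi$ descends to a bijection $\bar\phi\colon V(G)\to V(H)$ that identifies the coincided vertices $u\,'\odot u\,''$ and $v\,'\odot v\,''$ and agrees with $\phi$ elsewhere. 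Since the only edges affected by the coinciding are those incident to these vertices, $\bar\phi$ preserves adjacency and is the desired isomorphism $G\cong H$. Thus the whole theorem reduces to producing one split together with one isomorphism of split graphs that respect the distinguished pairs.

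The main obstacle, and where I expect the real work to lie, is exactly this matching of split-pairs: an abstract isomorphism $L\cong T$ need not carry the two split vertices of $L$ onto the two split vertices of $T$. To force a match I would use the leaf-split, in which $u$ is split into a leaf $u\,'$ joined to a single neighbour $x$ and a vertex $u\,''$ of degree $\deg_G(u)-1$. Together with the degree lemma this constrains $u\,''$, hence the pair $\{u\,',u\,''\}$, to few candidates, and running over every neighbour $x$ and over every vertex $u$ lets the constraints from the full set-correspondence $S_G(u)\leftrightarrow S_H(f(u))$ accumulate until the pair is pinned down. The cleanest packaging is likely an induction on $|E(G)|=|E(H)|$ with base case $K_2$: from a leaf-split $L$ delete the leaf $u\,'$ to obtain $G$ with one edge at $u$ removed, delete the matching leaf in the partner $T$ to get the analogous subgraph of $H$, and apply the inductive hypothesis to the smaller pair. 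The delicate point is to choose $x$ so that the edge deletion keeps both graphs connected (i.e. so that $u\,'x$ is not a bridge) and to verify that the split-correspondence hypothesis is inherited by the reduced pair; this connectivity bookkeeping is where I anticipate the bulk of the care being needed.
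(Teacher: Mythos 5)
Your proposal cannot be checked against an in-paper argument, because the paper never proves this theorem: it is quoted from \cite{Yao-Su-Sun-Wang-Graph-Operations-2021, Wang-Su-Yao-divided-2020}, and the remark immediately following it explicitly warns that establishing $G\wedge u\cong H\wedge f(u)\Rightarrow G\cong H$ ``does not look easy'' and places it next to the Kelly--Ulam Reconstruction Conjecture. Judged on its own, your proof has a genuine gap at exactly the step you call the heart of the argument. The hypothesis hands you an isomorphism $\phi: L\rightarrow T$ onto \emph{some} member of $S_H(f(u))$, and your descent argument needs $\phi$ to carry the split pair $\{u\,',u\,''\}$ onto the split pair of $T$. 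Your device for forcing this --- leaf-splits plus the degree lemma --- does not force it: if $H$ (equivalently $G$) already contains vertices of degree $1$, nothing prevents $\phi$ from sending the newly created leaf $u\,'$ to a pre-existing leaf of $H$, in which case $T-\phi(u\,')$ is not an edge-deleted copy of $H$ at $f(u)$ and the descent collapses. The claim that running over all neighbours $x$ and all vertices $u$ makes the ``constraints accumulate until the pair is pinned down'' is a hope, not an argument; no mechanism is exhibited that excludes the bad matchings, and this local-to-global identification is precisely what makes reconstruction-type statements hard.

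The second gap is that the proposed induction is unjustified in a way that is not ``connectivity bookkeeping.'' To invoke the inductive hypothesis on the reduced pair $(G-ux,\,H-f(u)y)$ you need that pair to satisfy the \emph{full} split-correspondence hypothesis at every vertex; but the theorem's hypothesis concerns splits of $G$ and $H$ only, and deleting an edge alters every split set $S_{G-ux}(w)$ in ways that the data about $S_G(w)$ does not control. You flag this inheritance as something to ``verify,'' but no reason is given why it should hold, and it is the entire difficulty rather than a side condition. There are also boundary failures: at a vertex of degree at least $2$ all of whose incident edges are bridges there is no connected leaf-split at all (under the paper's definition a split must stay connected), so the edge your induction needs may not exist; and for trees every split set is empty, so the recursion has nothing to work with. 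What is sound in your write-up is the easy converse, the counting of vertices and edges, and the observation that a pair-respecting isomorphism of split graphs descends (via the vertex-coinciding operation) to an isomorphism $G\cong H$; everything beyond that, i.e. the actual content of the theorem, remains unproved.
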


\begin{rem}\label{rem:333333}
It does not look easy to prove $G\wedge u\cong H\wedge f(u)$ appeared in Theorem \ref{thm:2-vertex-split-graphs-isomorphic}, although each split-graph $G\wedge u$ maintains all edges of $G$ (accordingly, each split-graph $H\wedge f(u)$ maintains all edges of $H$); refer to Kelly-Ulam's Reconstruction Conjecture proposed in 1942. For determining the vertex-split graph set $S_G(u)=\{G\wedge u\}$ (resp. the vertex-split graph set $S_H(f(u))=\{H\wedge f(u)\}$), we will do the following works:

\textbf{Step 1.} A vertex $u$ of the graph $G$ is vertex-split into $a_i$ vertices $x_{i,1},x_{i,2},\dots ,x_{i,a_i}$.

\textbf{Step 2.} The vertex degree $\textrm{deg}_G(u)$ is partitioned into $\textrm{deg}_G(u)=m_{i,1}+m_{i,2}+\cdots +m_{i,a_i}$ with $m_{i,j}\geq 1$ and $j\in [1,a_i]$.

\textbf{Step 3.} The neighbor set $N_{ei}(u)$ of the vertex $u$ is cut into $a_i$ mutually disjoint subsets $N_{i,1},N_{i,2},\dots ,N_{i,a_i}$ holding $N_{ei}(u)=\bigcup^{a_i}_{j=1} N_{i,j}$ and $m_{i,j}=|N_{i,j}|$ with $j\in [1,a_i]$, such that each split-vertex $x_{i,j}$ is joined with each vertex of $N_{i,j}$ with $j\in [1,a_i]$.

However, we will meet the \textbf{Integer Partition Problem} in Step 2, and the \textbf{Topological Structure-split Problem} in Step 3.\paralled
\end{rem}

\begin{problem}\label{question:444444}
\cite{Yao-Xiaomin-Wang-Ma-Wang-IAEAC-2021} If each spanning tree $T_a$ of a connected $(p, q)$-graph $G_a$ corresponds a spanning tree $T_b$ of another connected
graph $G_b$ such that $T_a\cong T_b$, and vice versa, \textbf{can} we claim $G_a \cong G_b$?
\end{problem}

\begin{problem}\label{qeu:444444}
Vertex-splitting (also, decompose) a connected graph into a group of mutually edge-disjoint trees $H_1,H_2,\dots, H_M$, \textbf{determine} the smallest number $M$.
\end{problem}

\begin{problem}\label{qeu:verious-tree-decompositions}
We vertex-splitting (also, decompose) a connected graph $G$ into mutually edge-disjoint trees $T_1,T_2,\dots, T_m$ with $m\geq 2$, and we put them into a set $T_{ree}(G)$, so $E(G)=\bigcup ^n_{i=1}E(T_i)$. The set $N_{isom}(G)$ collects the non-isomorphic trees $T_{i_1},T_{i_2},\dots, T_{i_k}$ in $T_{ree}(G)$, then we have the following cases for \textbf{characterizing} the graph $G$:

(i) If $k=1$, that is $N_{isom}(G)=\{T\}$ such that $T_{i_j}=T$ for $j\in [1,k]$ and $T_i=T$ for $i\in [1,m]$, then we say that $G$ is \emph{$T$-tree decomposable}, and write the graph $G=[\odot] mT$.

(ii) If $k\geq 2$ and $k$ is the smallest number for the set $N_{isom}(G)$, then the mutually edge-disjoint trees in $T_{ree}(G)$ can be classified into sets $S_{j}(T_{i_j})$ for $j\in [1,k]$, such that any pair of trees $H\,',H\,''\in S_{j}(T_{i_j})$ holds $H\,'\cong H\,''$. So, we rewrite the graph $G$ as $G=[\odot]^k_{j=1}a_jT_{i_j}$, where $a_j=|S_{j}(T_{i_j})|$ for $j\in [1,k]$.

(iii) Let $n_d(G)$ be the number of vertices of degree $d$ in a graph $G$.
\begin{equation}\label{eqa:555555}
\sum _{3\leq d\leq \Delta (T_{i_j})}(d-2)n_d(T_{i_j})=\sum _{3\leq d\leq \Delta (T_{i_s})}(d-2)n_d(T_{i_s}),~j,s\in [1,k]
\end{equation}

(iii) If each tree of $N_{isom}(G)\subseteq T_{ree}(G)$ is a spanning tree of the graph $G$, we call $G$ to be \emph{spanning-tree decomposable}.

(iv) If each tree of $N_{isom}(G)\subseteq T_{ree}(G)$ is a spanning caterpillar (resp. lobster) of the graph $G$, we say $G$ to be \emph{spanning-caterpillar decomposable}, or \emph{spanning-caterpillar pure} (resp. \emph{spanning-lobster decomposable}, or \emph{spanning-lobster pure}).
\end{problem}

\begin{problem} \label{thm:newproblems1}
\cite{Yao-Zhang-Yao-Spanning-2007, Yao-Zhang-Jian-fang-Wang-2010} \textbf{ST-balance set}. Let $G$ be a connected graph, and let $L(G)$ be the set of all leaves of $G$. A subset $S$ of $V(G)$ is called a \textbf{spanning tree balance set} of the connected graph $G$ if for any two spanning trees $T$ and $T\,'$ of the connected graph $G$ the following identity
\begin{equation}\label{eqa:chapter2-controlset00}
n_1(T)-|L(T)\cap S|=n_1(T\,')-|L(T\,')\cap S|
\end{equation} holds true, where $n_1(H)$ is the number of leaves of a tree $H$. \textbf{Determine} a smallest ST-balance set $S^*$ of the connected graph $G$, such that $|S^*|\leq |S|$ for any ST-balance set $S$ of the connected graph $G$.
\end{problem}

\begin{thm}\label{thm:K-2m-spanning-trees-2m-1-edges}
$^*$ For complete graphs, we have

\textbf{Result-1.} Each complete graph $K_{2m}$ with $m\geq 2$ can be vertex-split into $m$ mutually edge-disjoint spanning trees of $2m-1$ edges.

\textbf{Result-2.} Each complete graph $K_{2m+1}$ with $m\geq 2$ can be vertex-split into a star $K_{1,m}$ and a group of $m$ mutually edge-disjoint spanning trees of $2m$ edges.
\end{thm}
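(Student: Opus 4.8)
The plan is to reduce both results to the classical Walecki decompositions of complete graphs into Hamiltonian paths and cycles, and then to repackage the resulting pieces as edge-disjoint spanning trees. Throughout I identify the vertex set of $K_{2m}$ (resp.\ $K_{2m+1}$) with $\mathbb{Z}_{2m}$ (resp.\ $\{w\}\cup \mathbb{Z}_{2m}$), and I use the observation that a partition of $E(K_n)$ into edge-disjoint spanning trees $T_1,\dots ,T_m$ is precisely a \emph{vertex-split} of $K_n$ in the sense of Theorem \ref{thm:connected-graph-vs-tree-set}: splitting each vertex $v$ into $m$ copies, one per tree, turns $K_n$ into the disjoint union $T_1\sqcup\cdots\sqcup T_m$, and the inverse vertex-coinciding operation $[\odot]$ reassembles $K_n$. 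Thus it suffices in each case to exhibit the desired edge-decomposition.

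For \textbf{Result-1} I first count edges: $|E(K_{2m})|=\binom{2m}{2}=m(2m-1)$, while a spanning tree of $K_{2m}$ has exactly $2m-1$ edges, so $m$ edge-disjoint spanning trees would use every edge exactly once. I would produce them as $m$ edge-disjoint Hamiltonian paths, each of which is automatically a spanning tree (a path through all $2m$ vertices is connected, acyclic and spanning). Concretely I would take the zigzag starter path $P_0=(0,1,-1,2,-2,\dots ,m-1,-(m-1),m)$ with arithmetic modulo $2m$, whose consecutive differences are $+1,-2,+3,-4,\dots$, and then set $P_i=P_0+i$ for $i\in [0,m-1]$. Verifying that $\{P_0,\dots ,P_{m-1}\}$ partitions $E(K_{2m})$ reduces to checking that the edge-``lengths'' $1,2,\dots ,m$ are each covered the correct number of times (lengths $1,\dots ,m-1$ twice per path, the antipodal length $m$ once per path) and that the $m$ cyclic shifts hit disjoint edge-orbits of each length; this is the heart of the classical Walecki argument.

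For \textbf{Result-2} I again begin with a count: $|E(K_{2m+1})|=\binom{2m+1}{2}=m(2m+1)=m+m\cdot 2m$, which is exactly one star $K_{1,m}$ (with $m$ edges) plus $m$ spanning trees of $2m$ edges. I would invoke the companion Walecki decomposition of $K_{2m+1}$ into $m$ edge-disjoint Hamiltonian cycles $C_1,\dots ,C_m$. Fixing one vertex $w$, note that $w$ has degree $2m$ and each $C_j$ meets $w$ in exactly two edges, so the $m$ cycles partition the $2m$ edges at $w$ into $m$ pairs. Deleting one edge of each pair from its own cycle turns $C_j$ into a Hamiltonian path $T_j$, i.e.\ a spanning tree with $2m$ edges, while the $m$ deleted edges are all incident to $w$ and reach $m$ distinct neighbors, hence form a star $K_{1,m}$ centered at $w$. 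Therefore $E(K_{2m+1})=E(K_{1,m})\cup\bigcup_{j=1}^{m}E(T_j)$ is the required split into a star and $m$ edge-disjoint spanning trees.

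The main obstacle is the verification that the explicit constructions are genuine decompositions, i.e.\ that every edge of $K_{2m}$ (resp.\ $K_{2m+1}$) is covered exactly once by the chosen paths (resp.\ cycles); for the paths this is the length-counting and rotation bookkeeping sketched above, and for the cycles it is the analogous Hamiltonian-cycle construction on $\{w\}\cup\mathbb{Z}_{2m}$. If one prefers not to re-derive Walecki's theorems, both decompositions may simply be cited, after which Result-1 is immediate and Result-2 follows from the one-line edge-deletion argument at $w$. The only remaining care is the hypothesis $m\geq 2$, which guarantees that the trees are nontrivial and, in Result-2, that $K_{1,m}$ is a proper non-spanning subgraph of $K_{2m+1}$.
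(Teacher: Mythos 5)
Your proposal is correct, but it takes a genuinely different route from the paper. You reduce both results to the classical Walecki decompositions: $K_{2m}$ into $m$ edge-disjoint Hamiltonian paths (each automatically a spanning tree with $2m-1$ edges, via the zigzag starter path and its $m$ rotations), and $K_{2m+1}$ into $m$ edge-disjoint Hamiltonian cycles, from which deleting one of the two edges at a fixed vertex $w$ in each cycle yields $m$ Hamiltonian paths plus a star $K_{1,m}$ at $w$; your bookkeeping for both steps (edge counts, length-and-rotation argument, distinctness of the deleted edges at $w$) is sound. The paper instead proves Result-1 by induction on $m$: it extends a spanning-tree decomposition of $K_{2m}$ to one of $K_{2m+2}$ by adjoining the two new vertices, building a star $K_{1,2m+1}$ as one new spanning tree, and repairing each old tree $T_i$ with an add-two-edges/remove-one-edge surgery at $x_{2m+1}$ and $x_{2m+2}$; and it derives Result-2 \emph{from} Result-1 by attaching $x_{2m+1}$ to each tree of $K_{2m}$ with a single pendant edge $x_{2m+1}x_i$, the $m$ unused edges at $x_{2m+1}$ forming the star -- exactly the reverse direction of your deletion argument. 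Your approach buys a cleaner, essentially citation-level proof (once Walecki is invoked) and a stronger conclusion, since all the spanning trees produced are Hamiltonian paths, and your Result-2 is independent of Result-1; the paper's approach buys self-containedness (no appeal to Hamiltonian decomposition theorems) and produces decompositions into trees of more general shapes, which is closer in spirit to the tree-decomposition and $K$-$T$-conjecture themes the paper pursues elsewhere.
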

\begin{proof} \textbf{Result-1.} Assume that a complete graph $K_{2m}$ can be vertex-split into $m$ mutually edge-disjoint spanning trees of $2m-1$ edges.

Consider the complete graph $K_{2m+2}$ with the vertex set $V(K_{2m+2})=\{x_i:~i\in [1,2m+2]\}$. Since $K_{2m}=K_{2m+2}-\{x_{2m+1},x_{2m+2}\}$, so $K_{2m}$ can be vertex-split into $m$ mutually edge-disjoint spanning trees $T_1,T_2,\dots , T_{m}$ of $2m-1$ edges. Without loss of generality, $x_{i}x_{i+1}\not \in E(T_i)$ for $i\in [1,m]$. Notice that the neighbor set $N_{ei}(x_{2m+1})=V(K_{2m})\cup \{x_{2m+2}\}$ and the neighbor set $N_{ei}(x_{2m+2})=V(K_{2m})\cup \{x_{2m+1}\}$. We have a spanning tree of $K_{2m+2}$, which is a star $K_{1,2m+1}$ with $V(K_{1,2m+1})=\{x_{2m+1}, x_{2m+2}\}\cup V(K_{2m})$, and $E(K_{1,2m+1})=\{x_{2m+1}x_{2m+2},x_{2m+2}x_i:~i\in [1,2m]\}$

We add the vertex $x_{2m+1}$ to $T_i$ to join $x_{2m+1}$ with $x_{i}$ and $x_{i+1}$ of $T_i$ by two edges $x_{2m+1}x_{i}$ and $x_{2m+1}x_{i+1}$ since $x_{i}x_{i+1}\not \in E(T_i)$, then the graph $T\,'_i=T_i+\{x_{2m+1}x_{i},x_{2m+1}x_{i+1}\}$ contains a cycle $C_i$ of length at least $4$, such that the cycle $C_i$ contains the edges $x\,'_{i}x_{i}$, $x_{2m+1}x_{i}$, $x_{2m+1}x_{i+1}$ and $x_{i+1}x\,''_{i+1}$.

(i) If the cycle $C_i=x\,'_{i}x_{i}x_{2m+1}x_{i+1}x\,'_{i}$, that is $x\,'_{i}=x\,''_{i+1}$. We remove the edge $x\,'_{i}x_{i}$ from the graph $T\,'_i$, and add the vertex $x_{2m+2}$ to the graph $T\,'_i-x\,'_{i}x_{i}$ by joining $x\,'_{i}$ and $x_{2m+2}$ together with the edge $x\,'_{i}x_{2m+2}$, such that resultant graph $T\,''_i=T\,'_i-x\,'_{i}x_{i}+x\,'_{i}x_{2m+2}$ is just a spanning tree of $K_{2m+2}$, and the removed-edge-added graph $K_{1,2m+1}-x\,'_{i}x_{2m+2}+x\,'_{i}x_{i}$ is a spanning tree of $K_{2m+2}$ too, since the index of each edge $x_ix_j$ of the complete graph $K_{2m+2}$ is unique.

(ii) If $x\,'_{i}\neq x\,''_{i+1}$ in the cycle $C_i$, we do the process of obtaining two spanning trees $T\,''_i=T\,'_i-x\,'_{i}x_{i}+x\,'_{i}x_{2m+2}$ and $K_{1,2m+1}-x\,'_{i}x_{2m+2}+x\,'_{i}x_{i}$ just like the above process in (i).

\textbf{Result-2.} Since the complete graph $K_{2m}=K_{2m+1}-x_{2m+1}$, and the vertex $x_{2m+1}$ has its own degree $2m$ in the complete graph $K_{2m+1}$, we add the vertex $x_{2m+1}$ to each spanning tree $T_i$ of the complete graph $K_{2m}$ for $i\in [1,m]$ by joining the vertex $x_{2m+1}$ with a vertex $x_i$ of the spanning tree $T_i$, so the resultant graph is just a spanning tree $T_i+x_{2m+1}x_i$ of the complete graph $K_{2m+1}$ holding $|E(T_i+x_{2m+1}x_i)|=2m$, and the remainder edges $x_{m+j}x_{2m+1}$ for $j\in [1,m]$ form a star $K_{1,m}$ of $K_{2m+1}$.

By induction, we have completed the proof of the theorem.
\end{proof}

\begin{cor}\label{cor:complete-graph-vertex-split-m-trees}
$^*$ Let $T_{ree}(2m)$ be the set of mutually disjoint trees of $2m-1$ edges. Then there are $m$ mutually disjoint trees $T_1,T_2,\dots , T_{m}$ of $T_{ree}(2m)$ holding $K_{2m}=[\odot]^m_{i=1} T_i$ true.
\end{cor}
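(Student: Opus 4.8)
The plan is to read this corollary as the vertex-coinciding reformulation of Result-1 of Theorem \ref{thm:K-2m-spanning-trees-2m-1-edges}, so that almost all of the work is already done and what remains is a translation between the ``vertex-split'' and ``vertex-coincide'' languages together with an edge count. First I would invoke Result-1 to obtain $m$ mutually edge-disjoint spanning trees $T_1,T_2,\dots,T_m$ of $K_{2m}$, each having exactly $2m-1$ edges. Since each $T_i$ is a spanning tree of $K_{2m}$ it has $2m$ vertices and $2m-1$ edges, so each $T_i$ is (isomorphic to) a member of $T_{ree}(2m)$.

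The second step is the arithmetic that upgrades ``edge-disjoint'' to ``decomposition''. I would note that $K_{2m}$ has $\binom{2m}{2}=m(2m-1)$ edges, while the $m$ trees contribute $m(2m-1)$ edges in total; since they are pairwise edge-disjoint, their edge sets must partition $E(K_{2m})$, that is $E(K_{2m})=\bigcup_{i=1}^{m}E(T_i)$ with $E(T_i)\cap E(T_j)=\emptyset$ for $i\neq j$. This is precisely the identity $E(K_{2m})=\bigcup^m_{i=1}E(T_i)$ already highlighted in Example \ref{exa:top-en-decryption-algorithm-I}.

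Next I would turn the splitting into a coinciding. By the definition of the vertex-splitting tree-operation and its inverse given in the Preliminary, vertex-splitting and vertex-coinciding are mutually inverse operations, so reversing the sequence of splits that produced $T_1,\dots,T_m$ reconstructs $K_{2m}$; in the notation of Problem \ref{qeu:verious-tree-decompositions} this is exactly $K_{2m}=[\odot]^m_{i=1}T_i$. Concretely, I would take pairwise-disjoint copies of $T_1,\dots,T_m$ inside $T_{ree}(2m)$ and then coincide, for each label in $V(K_{2m})$, the $m$ vertices carrying that label into a single vertex; because the trees are edge-disjoint, no two coincided edges duplicate one another, so the resulting graph is simple and its edge set is $\bigcup_{i=1}^m E(T_i)=E(K_{2m})$, returning $K_{2m}$.

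The main obstacle here is not computational but definitional: I must make precise that the ``mutually disjoint'' trees of $T_{ree}(2m)$, which a priori live on disjoint vertex sets, are glued by a single consistent vertex-identification map, namely the one induced by the common vertex labeling of the spanning trees supplied by Result-1, and then verify that this identification is well-defined on all $2m$ vertex classes and introduces no multi-edges. Once this identification map is pinned down, the statement follows immediately, and the hypothesis $m\geq 2$ is inherited directly from Theorem \ref{thm:K-2m-spanning-trees-2m-1-edges}.
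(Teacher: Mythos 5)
Your proposal is correct and takes exactly the route the paper intends: the corollary is stated without proof as an immediate restatement of Result-1 of Theorem \ref{thm:K-2m-spanning-trees-2m-1-edges} in the vertex-coinciding language, and your argument supplies precisely that translation (edge count $\binom{2m}{2}=m(2m-1)$ forcing the edge-disjoint spanning trees to partition $E(K_{2m})$, plus the consistent vertex-identification that makes $[\odot]^m_{i=1}T_i$ well-defined and multi-edge-free). No gap; your explicit handling of the identification map is a useful clarification of what the paper leaves implicit.
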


\begin{problem}\label{qeu:444444}
Let $T_{ree}(2m)$ be the set of mutually disjoint trees of $2m-1$ edges; refer to Corollary \ref{cor:complete-graph-vertex-split-m-trees}.

(i) \textbf{Find} all groups of $m$ mutually disjoint trees $T_1,T_2,\dots , T_{m}$ from the set $T_{ree}(2m)$ for the complete graph $K_{2m}$ holding $K_{2m}=[\odot]^m_{i=1} T_i$ true.

(ii) \textbf{Is} there a group of $m$ mutually disjoint trees $T_1,T_2,\dots , T_{m}$ in the set $T_{ree}(2m)$ holding $K_{2m}=[\odot]^m_{i=1} T_i$, and $T_i\cong T_j$ for $i,j\in [1,m]$?

(iii) \textbf{Are} there $m$ mutually disjoint trees $T_1,T_2,\dots , T_{m}$ in the set $T_{ree}(2m)$ holding $K_{2m}=[\odot]^m_{i=1} T_i$, such that two leaf-sets $|L(T_i)|=|L(T_j)|$ for $i,j\in [1,m]$?

(iv) \textbf{Characterize} a connected regular graph $G$ if this graph $G$ can be vertex-split into edge-disjoint trees $H_1,H_2,\dots , H_{n}$ such that $G=[\odot]^n_{i=1} H_i$.
\end{problem}

According to Theorem \ref{thm:connected-graph-vs-tree-set}, we have

\begin{thm}\label{thm:graph-vertex-split-m-tree-set}
$^*$ Each non-tree connected graph $G$ can be vertex-split into mutually edge-disjoint trees $T_1,T_2,\dots, T_m$ with $m\geq 2$, such that the edge set $E(G)=\bigcup ^m_{i=1}E(T_i)$.
\end{thm}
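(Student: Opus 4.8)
The plan is to reduce the statement to the single-tree version already established in Theorem \ref{thm:connected-graph-vs-tree-set} and then refine that single tree into several edge-disjoint pieces. First I would apply Theorem \ref{thm:connected-graph-vs-tree-set} to the non-tree connected $(p,q)$-graph $G$: it yields a tree $T$ on $q+1$ vertices and $q$ edges, together with the vertex-coinciding reconstruction $G=[\odot_m]T$. The crucial feature I would extract is the associated surjection $\phi:V(T)\to V(G)$ under which the edges of $T$ correspond bijectively to the edges of $G$; this edge-bijection is exactly the property that the vertex-coinciding operation ``$[\odot]$'' preserves every edge, as recorded in the definition of the vertex-splitting tree-operation.

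Second, since $G$ is connected and not a tree we have $q\ge p\ge 3$, so $T$ carries at least two edges and can be partitioned into edge-disjoint subtrees $T_1,\dots,T_m$ with $m\ge 2$ and $E(T)=\bigcup_{i=1}^m E(T_i)$. The safest such partition takes each $T_i$ to be a single edge of $T$, so that $m=q$; a more economical partition chooses a vertex $w$ of $T$ of degree at least $2$ and splits the branches hanging at $w$ into two groups, producing two subtrees $T_1,T_2$ meeting only in $w$. Pushing each $T_i$ forward through $\phi$ gives subgraphs of $G$ whose edge sets, by the edge-bijection, are pairwise disjoint and satisfy $E(G)=\bigcup_{i=1}^m E(\phi(T_i))$; reconstructing $G$ then amounts to vertex-coinciding the $\phi(T_i)$ at their shared images, so this is genuinely a vertex-splitting of $G$ into the $T_i$.

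The step I expect to be the main obstacle is verifying that each pushed-forward piece $\phi(T_i)$ is again a tree rather than merely a connected subgraph. Since $\phi$ may identify several vertices of $T$, a cycle is created inside $\phi(T_i)$ precisely when two vertices lying in the same $\phi$-fibre fall into the same piece $T_i$; equivalently, $\phi$ must restrict to an injection on each $V(T_i)$. I would control this by choosing the partition so that no piece contains two vertices of a common fibre. For the single-edge partition this is automatic: an edge of $T$ whose two endpoints were identified by $\phi$ would become a loop in the simple graph $G$, contradicting the edge-bijection, so every single-edge piece maps to a genuine $K_2$; this already secures the theorem with $m=q$. For the two-piece version the same condition can be met whenever every $\phi$-fibre has at most two vertices, by separating the two vertices of each fibre into the two groups at $w$; larger fibres would merely force correspondingly more pieces, which is harmless since the statement only requires $m\ge 2$.
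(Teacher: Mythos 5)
Your proposal is correct and follows essentially the same route as the paper, which obtains this theorem directly from Theorem \ref{thm:connected-graph-vs-tree-set} by first vertex-splitting $G$ into a single tree $T$ with $E(T)=E(G)$ and then refining $T$ into at least two edge-disjoint subtrees. One remark: your worry about the images $\phi(T_i)$ acquiring cycles is unnecessary, because in the paper's sense of vertex-splitting the trees $T_1,\dots,T_m$ are the split-off pieces themselves (subgraphs of the vertex-split graph, not of $G$), so any partition of $T$ into edge-disjoint subtrees --- in particular your single-edge partition, or a split of $T$ at any vertex of degree at least two --- already yields the theorem without the fibre-injectivity analysis.
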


\begin{rem}\label{rem:333333}
Since each non-tree connected $(p,q)$-graph $G$ corresponds to a tree set $V_{\textrm{split}}(G)$ of trees of $q+1$ vertices, it seems to be not easy to characterize the tree set $V_{\textrm{split}}(G)$, although the part of solutions of Problem \ref{qeu:verious-tree-decompositions} are in the tree set $V_{\textrm{split}}(G)$; refer to Corollary \ref{cor:vertex-split-m-tree-set-result}.\paralled
\end{rem}

\begin{cor}\label{cor:vertex-split-m-tree-set-result}
$^*$ If a non-tree connected $(p,q)$-graph $G$ admits a proper vertex coloring $g:V(G)\rightarrow [1,p]$ with $g(u)\neq g(x)$ for distinct $u,x\in V(G)$, and there is a tree $H\in V_{\textrm{split}}(G)$ that can be vertex-split into mutually edge-disjoint trees $H_1,H_,\dots H_s$ holding $|g(V(H_r))|=|V(H_r)|$ for $r\in [1,s]$, then we claim that the mutually edge-disjoint trees $H_1,H_,\dots H_s$ are a tree-decomposition of the graph $G$.
\end{cor}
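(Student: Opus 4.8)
The plan is to exploit the single structural feature shared by all the operations in play: a vertex-splitting operation refines the vertex set but never destroys, creates, or identifies an edge, so each edge retains a well-defined identity throughout. By Theorem~\ref{thm:connected-graph-vs-tree-set}, the tree $H\in V_{\textrm{split}}(G)$ has exactly $q$ edges and $q+1$ vertices, and the relation $G=[\odot_m]H$ comes equipped with a surjective graph homomorphism $\phi:H\to G$ that is a \emph{bijection on edges} (vertex-coinciding merges vertices but never identifies two distinct edges). First I would use $\phi$ to pull the labeling $g$ back to $H$, setting $\widehat g=g\circ\phi:V(H)\to[1,p]$, so that the hypothesis $|g(V(H_r))|=|V(H_r)|$ can be read as a statement about the colors seen along each piece $H_r$.

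Next, since the tree $H$ is vertex-split into the mutually edge-disjoint trees $H_1,\dots,H_s$, edge-preservation gives $E(H)=\bigcup_{r=1}^{s}E(H_r)$ as a disjoint union, together with homomorphisms $\psi_r:H_r\to H$ that are injective on edges and whose edge-images partition $E(H)$. Composing, $\phi_r=\phi\circ\psi_r:H_r\to G$ is injective on edges as well, and the sets $\phi_r(E(H_r))$ partition $E(G)$: indeed $\bigcup_{r=1}^{s}\phi_r(E(H_r))=\phi(E(H))=E(G)$, and the pieces are pairwise disjoint because the $E(H_r)$ are disjoint and $\phi$ is an edge-bijection. This already establishes $E(G)=\bigcup_{r=1}^{s}E(\phi_r(H_r))$ with pairwise edge-disjoint summands, which is the covering half of a tree-decomposition.

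It then remains to verify that each image $\phi_r(H_r)$ is genuinely a subtree of $G$ isomorphic to $H_r$, rather than a merged structure that could carry a cycle or fewer vertices. Here I would invoke the coloring hypothesis. Because $g$ is a proper labeling of $G$ with all colors distinct, the composite $g\circ\phi_r:V(H_r)\to[1,p]$ identifies two vertices of $H_r$ precisely when $\phi_r$ sends them to the same vertex of $G$; hence $|g(V(H_r))|=|V(H_r)|$ forces $\phi_r$ to be injective on $V(H_r)$. An injective, edge-preserving homomorphism out of the tree $H_r$ is an isomorphism onto its image, so $\phi_r(H_r)$ is a subtree of $G$ with exactly $|V(H_r)|$ vertices and $|E(H_r)|$ edges. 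Combining this with the preceding paragraph yields the decomposition $G=[\odot]^{s}_{r=1}H_r$ in the sense of Problem~\ref{qeu:verious-tree-decompositions}, as claimed.

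The step I expect to be the main obstacle is making the pull-back of $g$ and the resulting injectivity deduction airtight: one must check that the two successive operations—splitting $G$ down to $H$, then splitting $H$ into the forest $H_1,\dots,H_s$—compose into a single coherent vertex-coinciding description of $G$, and that the hypothesis $|g(V(H_r))|=|V(H_r)|$ is exactly the obstruction preventing two split vertices of a single $H_r$ from being re-identified in $G$. Everything else, namely edge-disjointness and coverage, follows formally from edge-preservation of vertex-splitting; it is the faithful embedding of each $H_r$ into $G$ where the distinct-color condition does the essential work.
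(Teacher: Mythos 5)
Your proposal is correct, and it is essentially the argument the paper intends: the paper states this corollary (marked $^*$) without any proof, presenting it as an immediate consequence of Theorem~\ref{thm:connected-graph-vs-tree-set} and the edge-preserving nature of the vertex-splitting operation, which is exactly what you formalize. Your two observations — that the composite coinciding maps $\phi_r:H_r\to G$ are automatically edge-injective with edge-images partitioning $E(G)$, and that the hypothesis $|g(V(H_r))|=|V(H_r)|$ together with the injectivity of $g$ on $V(G)$ forces each $\phi_r$ to be vertex-injective, so that each image $\phi_r(H_r)$ is a genuine subtree of $G$ isomorphic to $H_r$ — supply precisely the reasoning the paper leaves implicit, and the step you flagged as the main obstacle (coherence of the composed splittings and the pull-back of $g$) goes through without difficulty since a composition of vertex-splitting operations is again a vertex-splitting description of $G$.
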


By Theorem \ref{thm:K-2m-spanning-trees-2m-1-edges}, we propose

\begin{conj}\label{conj:complete-graph-spanning-trees-hamilton-cycle}
$^*$ \textbf{KSTH-conjecture}. Each complete graph $K_{2m}$ of $2m$ vertices can be vertex-split into some group $S_{paning}(K_{2m})$ of $m$ mutually edge-disjoint spanning trees of $2m-1$ edges, such that any pair of spanning trees $T_i$ and $T_j$ in $S_{paning}(K_{2m})$ with $i\neq j$ forms a vertex-coincided graph $G=T_i[\odot_{prop}]T_j$ containing a Hamilton cycle, also, two perfect matchings; refer to Fig.\ref{fig:introduction-example-11} and Fig.\ref{fig:my-conjecture}.
\end{conj}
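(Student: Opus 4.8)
The plan is to establish the existence claim by a single explicit construction. First I would dispose of the ``two perfect matchings'' clause: on $2m$ vertices every Hamilton cycle has even length $2m$, and splitting its edges into the odd-indexed and even-indexed ones yields two edge-disjoint perfect matchings of $T_i[\odot_{prop}]T_j$. Thus the whole statement reduces to producing \emph{one} decomposition $S_{paning}(K_{2m})=\{T_1,T_2,\dots ,T_m\}$, of the kind guaranteed by Theorem \ref{thm:K-2m-spanning-trees-2m-1-edges} and Corollary \ref{cor:complete-graph-vertex-split-m-trees}, whose every pairwise union is Hamiltonian. Here one cannot argue for an \emph{arbitrary} such decomposition: the union of two edge-disjoint spanning trees is only forced to be $2$-edge-connected, which is far weaker than Hamiltonicity, so the trees must be chosen with care.

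For the construction I would take the classical rotational (Walecki zig-zag) Hamilton-path decomposition. Label $V(K_{2m})=\mathbb{Z}_{2m}$, set $P_0=0,1,2m-1,2,2m-2,\dots ,m+1,m$, and put $P_r=P_0+r$ (adding $r$ modulo $2m$ to every vertex) for $r\in [0,m-1]$. Reading off the signed differences $+1,-2,+3,-4,\dots$ along $P_0$ shows that the diagonal length $m$ occurs once in $P_0$ while each length $\ell\in [1,m-1]$ occurs exactly twice and at positions offset by $m$; hence the $m$ rotations tile every length-class of $K_{2m}$ exactly once, so $\{P_0,P_1,\dots ,P_{m-1}\}$ is an edge-disjoint decomposition into $m$ spanning trees (in fact Hamilton paths), consistent with the edge count $m(2m-1)$. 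Because each $P_r$ is a rotation of $P_0$ and $P_0+m=P_0$ as an edge set, applying the rotation by $-i$ gives $P_i[\odot_{prop}]P_j\cong P_0[\odot_{prop}]P_s$ with $s=(j-i)\bmod m\in [1,m-1]$; so it suffices to find a Hamilton cycle in the $2m$-vertex, $(4m-2)$-edge graph $P_0\cup P_s$ for every shift $s\in [1,m-1]$.

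The remaining step is the crux. One shift is free: the two endpoints of $P_0$ are $0$ and $m$, the edge $\{0,m\}$ is itself a diagonal and is not the (middle) diagonal of $P_0$, so $\{0,m\}$ belongs to a unique $P_{s_0}$ with $s_0\neq 0$, and then $P_0+\{0,m\}$ is already a Hamilton cycle. For the other values of $s$ the plan is to seed with the Hamilton path $P_0$ from $0$ to $m$ and reroute its endpoints using the edges of $P_s$, either by a uniform ``weaving'' rule given in closed form (with a short case analysis according to the parity of $s$ and the positions of the two diagonals) or by a P\'osa-type rotation--extension argument driven by the $P_s$-edges at the current endpoints until a closing edge appears. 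I would verify the closed-form rule by induction on $m$, taking the $K_4$, $K_6$ and $K_8$ instances (the $K_6$ case being exactly Fig.\ref{fig:introduction-example-11}) as base cases.

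The hard part is making this last construction uniform in both $m$ and $s$. The obstruction to a soft argument is that the two path-endpoints of each $P_r$ have degree $1$, so $P_0\cup P_s$ may contain vertices of degree as low as $2$; consequently Dirac-, Ore- and Chv\'atal--Erd\H{o}s-type sufficient conditions all fail, and only a structural (essentially constructive) proof can succeed. A purely inductive alternative that builds $K_{2m+2}$ from $K_{2m}$ in the edge-swapping style of the proof of Theorem \ref{thm:K-2m-spanning-trees-2m-1-edges} would have to carry the pairwise-Hamiltonicity of \emph{all} pairs as an extra invariant through the leaf-additions and edge-exchanges; preserving that global invariant under local surgery looks to be the more delicate of the two routes, which is why I would pursue the explicit rotational construction first.
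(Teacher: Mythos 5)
First, note that the statement you are attacking is stated in the paper as a \emph{conjecture} (the KSTH-conjecture, Conjecture \ref{conj:complete-graph-spanning-trees-hamilton-cycle}); the paper offers no proof of it at all, only the existence result of Theorem \ref{thm:K-2m-spanning-trees-2m-1-edges} (a decomposition of $K_{2m}$ into $m$ edge-disjoint spanning trees, with no control over pairwise unions) and two illustrating figures. So there is no ``paper proof'' for your argument to be measured against; your text must stand on its own as a resolution of an open problem, and it does not.

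Your reductions are sound as far as they go: the two perfect matchings do come for free from any Hamilton cycle of even length $2m$; the Walecki zig-zag paths $P_r=P_0+r$, $r\in[0,m-1]$, do form an edge-disjoint decomposition of $K_{2m}$ into $m$ Hamilton paths; and rotational symmetry correctly reduces the problem to Hamiltonicity of $P_0\cup P_s$ for each shift $s\in[1,m-1]$. Your ``free shift'' observation is also correct (e.g.\ in $K_6$, the diagonal $\{0,m\}$ lies in $P_1$ and closes $P_0$ into a Hamilton cycle). But the crux --- proving that $P_0\cup P_s$ is Hamiltonian for \emph{every} $m$ and \emph{every} $s\in[1,m-1]$ --- is exactly where your text stops being a proof and becomes a plan: you offer two possible routes (a closed-form ``weaving'' rule, or a P\'osa rotation--extension argument) without carrying out either, and you concede yourself that degree-$2$ vertices at the path endpoints rule out all standard sufficient conditions for Hamiltonicity, so no soft argument closes the gap. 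Checking $K_4$, $K_6$, $K_8$ as ``base cases'' does not help unless you exhibit an induction step that actually preserves pairwise Hamiltonicity, which you also do not do. As written, your proposal replaces one unproven universally-quantified Hamiltonicity statement (the conjecture) by another (Hamiltonicity of all shifted Walecki unions); this is a genuinely useful reformulation and a plausible line of attack, but the conjecture remains open after it.
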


\begin{figure}[h]
\centering
\includegraphics[width=16.4cm]{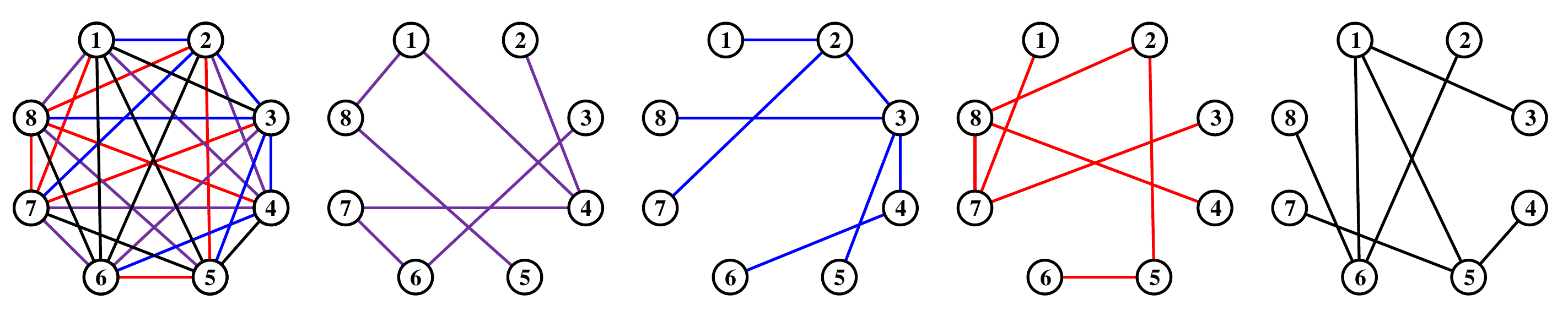}\\
\caption{\label{fig:my-conjecture}{\small A diagram for illustrating Conjecture \ref{conj:complete-graph-spanning-trees-hamilton-cycle}.}}
\end{figure}

\begin{problem}\label{question:444444}
Let $V_{tree}(q)$ be the set of trees of $q$ edges. \textbf{Classify} the set $V_{tree}(q)$ into mutually disjoint subset $V_{split}(G_1),V_{split}(G_2),\dots ,V_{split}(G_m)$ such that $V_{tree}(q)=\bigcup ^m_{i=1}V_{split}(G_i)$, and each subset $V_{split}(G_i)$ is the set of trees obtained by doing the vertex-splitting operation to a certain connected $(p_i,q)$-graph $G_i$ for $i\in [1,m]$. Moreover, \textbf{is} there $G_i\cong G_j$ if $V_{split}(G_i)=V_{split}(G_j)$?
\end{problem}

\begin{problem}\label{question:444444}
Let $V_{graph}(q)$ be the set of connected and non-tree graphs of $q$ edges, and let $V_{graph}(G_i)$ be a subset of $V_{graph}(q)$ such that each connected and non-tree graph of $V_{graph}(G_i)$ is obtained by doing the vertex-splitting operation to a connected $(p_i,q)$-graph $G_i$. So, $V_{graph}(q)=\bigcup ^n_{i=1}V_{graph}(G_i)$, and \textbf{estimate} the number $n$.
\end{problem}

\begin{problem} \label{question:box}
A graph $G$ on $n$ vertices is called \emph{arbitrarily vertex decomposable} if for each sequence $(n_1,n_2,\dots, n_k)$ of positive integers holding $n_1 +n_2 +\cdots+n_k = n$ there exists a partition $(V_1, V_2,\dots, V_k)$ of $V(G)$ such that each subset $V_i$ for $i\in [1,k]$ induces a connected subgraph $G_i$ of the graph $G$ with $n_i=|V(G_i)|$. Obviously, a complete graph $K_n$ is arbitrarily vertex decomposable. The problem of deciding whether a given graph is arbitrarily vertex decomposable is a NP-complete problem (M. Robson, 1998).
\end{problem}

\subsection{Basic colorings and labelings of graphs}

Undefined colorings and labelings mentioned in this article can be found in \cite{Gallian2021, Bondy-2008, Yao-Wang-2106-15254v1}.

\subsubsection{Colorings and labelings with various $W$-constraints}

\begin{defn} \label{defn:basic-W-type-labelings}
\cite{Gallian2021, Yao-Sun-Zhang-Mu-Sun-Wang-Su-Zhang-Yang-Yang-2018arXiv, Bing-Yao-Cheng-Yao-Zhao2009, Zhou-Yao-Chen-Tao2012} Suppose that a connected $(p,q)$-graph $G$ admits a mapping $\theta:V(G)\rightarrow \{0,1,2,\dots \}$. For each edge $xy\in E(G)$, the edge induced-color is defined as $\theta(xy)=|\theta(x)-\theta(y)|$. Write vertex color set by $\theta(V(G))=\{\theta(u):u\in V(G)\}$, and edge color set by
$\theta(E(G))=\{\theta(xy):xy\in E(G)\}$. There are the following constraints:

\textbf{C-1.} $|\theta(V(G))|=p$;

\textbf{C-2.} $\theta(V(G))\subseteq [0,q]$, $\min \theta(V(G))=0$;

\textbf{C-3.} $\theta(V(G))\subset [0,2q-1]$, $\min \theta(V(G))=0$;

\textbf{C-4.} $\theta(E(G))=[1,q]$;

\textbf{C-5.} $\theta(E(G))=[1,2q-1]^o$;

\textbf{C-6.} $G$ is a bipartite graph with the the vertex set $V(G)=X\cup Y$ and $X\cap Y=\emptyset$ such that $\max\{\theta(x):x\in X\}< \min\{\theta(y):y\in Y\}$ ($\theta(X)<\theta(Y)$ for short);

\textbf{C-7.} $G$ is a tree having a perfect matching $M$ such that $\theta(x)+\theta(y)=q$ for each matching edge $xy\in M$;

\textbf{C-8.} $G$ is a tree having a perfect matching $M$ such that $\theta(x)+\theta(y)=2q-1$ for each matching edge $xy\in M$.

\noindent \textbf{Then}:
\begin{asparaenum}[\textbf{\textrm{Lab}}-1.]
\item A \emph{graceful labeling} $\theta$ satisfies the constraints C-1, C-2 and C-4 at the same time.
\item A \emph{set-ordered graceful labeling} $\theta$ holds the constraints C-1, C-2, C-4 and C-6 true.
\item A \emph{strongly graceful labeling} $\theta$ holds the constraints C-1, C-2, C-4 and
C-7 true.
\item A \emph{set-ordered strongly graceful labeling} $\theta$ holds the constraints C-1, C-2, C-4, C-6 and C-7 true.
\item An \emph{odd-graceful labeling} $\theta$ holds the constraints C-1, C-3 and C-5 true.
\item A \emph{set-ordered odd-graceful labeling} $\theta$ abides the constraints C-1, C-3, C-5 and C-6.
\item A \emph{strongly odd-graceful labeling} $\theta$ holds the constraints C-1, C-3, C-5 and C-8 true.
\item A \emph{set-ordered strongly odd-graceful labeling} $\theta$ holds the constraints C-1, C-3, C-5, C-6 and C-8 true.\qqed
\end{asparaenum}
\end{defn}

\begin{figure}[h]
\centering
\includegraphics[width=16.4cm]{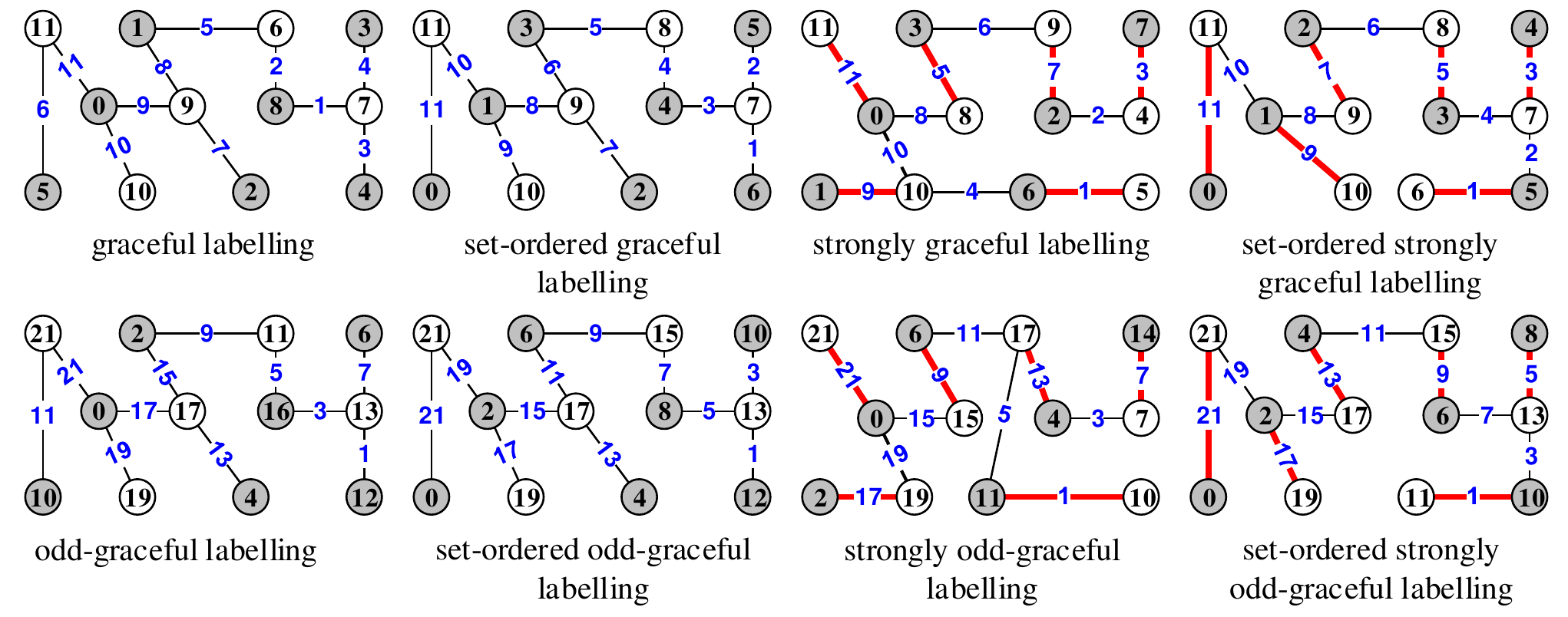}\\
\caption{\label{fig:graceful-odd-graceful}{\small Examples for understanding Definition \ref{defn:basic-W-type-labelings}, the red edges are the matching edges, cited from \cite{Yao-Wang-2106-15254v1}.}}
\end{figure}

\begin{defn} \label{defn:kd-w-type-colorings}
\cite{Yao-Wang-2106-15254v1, Yao-Su-Wang-Hui-Sun-ITAIC2020} Let $G$ be a bipartite and connected $(p,q)$-graph, so its vertex set $V(G)=X\cup Y$ with $X\cap Y=\emptyset$ such that each edge $uv\in E(G)$ holds $u\in X$ and $v\in Y$. If there is a total mapping
\begin{equation}\label{eqa:333333}
f:X\rightarrow S_{m,0,0,d}=\{0,d,\dots ,md\},\quad f:Y\cup E(G)\rightarrow S_{n,k,0,d}=\{k,k+d,\dots ,k+nd\}
\end{equation} here it is allowed $f(x)=f(y)$ for some distinct vertices $x,y\in V(G)$. Let $c$ be a fixed non-negative integer.
\begin{asparaenum}[\textrm{\textbf{Ptol}}-1. ]
\item If $f(uv)=|f(u)-f(v)|$ for each edge $uv\in E(G)$, and the edge color set $f(E(G))=S_{q-1,k,0,d}$, and the vertex color set
$$f(V(G)\cup E(G))\subseteq S_{m,0,0,d}\cup S_{q-1,k,0,d}$$ then $f$ is called a \emph{graceful $(k,d)$-total coloring}; and moreover $f$ is called a \emph{strongly graceful $(k,d)$-total coloring} if $f(x)+f(y)=k+(q-1)d$ for each matching edge $xy$ of a perfect matching $M$ of $G$.
\item If $f(uv)=|f(u)-f(v)|$ for each edge $uv\in E(G)$, and the edge color set $f(E(G))=O_{2q-1,k,d}$, and the vertex color set
$$f(V(G)\cup E(G))\subseteq S_{m,0,0,d}\cup S_{2q-1,k,0,d}$$ then $f$ is called an \emph{odd-graceful $(k,d)$-total coloring}; and moreover $f$ is called a \emph{strongly odd-graceful $(k,d)$-total coloring} if $f(x)+f(y)=k+(2q-1)d$ for each matching edge $xy$ of a perfect matching $M$ of $G$.
\item If the mixed color set $$\{f(u)+f(uv)+f(v):uv\in E(G)\}=\{2k+2ad,2k+2(a+1)d,\dots ,2k+2(a+q-1)d\}$$ with $a\geq 0$ and the total color set
$$f(V(G)\cup E(G))\subseteq S_{m,0,0,d}\cup S_{2(a+q-1),k,a,d}
$$ then $f$ is called an \emph{edge-antimagic $(k,d)$-total coloring}.
\item If $f(uv)=f(u)+f(v)~(\bmod^*qd)$ defined by $f(uv)-k=[f(u)+f(v)-k](\bmod ~qd)$ for each edge $uv\in E(G)$, and the edge color set $f(E(G))=S_{q-1,k,0,d}$, then $f$ is called a \emph{harmonious $(k,d)$-total coloring}.
\item If $f(uv)=f(u)+f(v)~(\bmod^*qd)$ defined by $f(uv)-k=[f(u)+f(v)-k](\bmod ~2qd)$ for each edge $uv\in E(G)$, and the edge color set $f(E(G))=O_{2q-1,k,d}$, then $f$ is called an \emph{odd-elegant $(k,d)$-total coloring}.

--------- \emph{$4$-magic}

\item If there are the edge-magic constraint
\begin{equation}\label{eqa:definition-edge-magic-constraint}
f(u)+f(uv)+f(v)=c,~uv\in E(G)
\end{equation} the edge color set $f(E(G))=S_{q-1,k,0,d}$ and the total color set
$$f((V(G)\cup E(G))\subseteq S_{m,0,0,d}\cup S_{q-1,k,0,d}
$$ then $f$ is called an \emph{edge-magic $(k,d)$-total coloring}. And moreover, $f$ is called a \emph{pseudo-edge-magic $(k,d)$-total coloring} if both $|f(E(G))|< q$ and Eq.(\ref{eqa:definition-edge-magic-constraint}) hold true.
\item If there are the edge-difference constraint
\begin{equation}\label{eqa:definition-edge-difference-constraint}
f(uv)+|f(u)-f(v)|=c,~uv\in E(G)
\end{equation} and the edge color set $f(E(G))=S_{q-1,k,0,d}$, then $f$ is called an \emph{edge-difference $(k,d)$-total coloring}. And moreover $f$ is called a \emph{pseudo-edge-difference $(k,d)$-total coloring} if both $|f(E(G))|< q$ and Eq.(\ref{eqa:definition-edge-difference-constraint}) hold true.
\item If there are the graceful-difference constraint
\begin{equation}\label{eqa:definition-graceful-difference-constraint}
\big ||f(u)-f(v)|-f(uv)\big |=c,~uv\in E(G)
\end{equation} and the edge color set $f(E(G))=S_{q-1,k,0,d}$, then we call $f$ \emph{graceful-difference $(k,d)$-total coloring}. And moreover $f$ is called a \emph{pseudo-graceful-difference $(k,d)$-total coloring} if both $|f(E(G))|< q$ and Eq.(\ref{eqa:definition-graceful-difference-constraint}) hold true.
\item If there are the felicitous-difference constraint
\begin{equation}\label{eqa:definition-felicitous-difference-constraint}
|f(u)+f(v)-f(uv)|=c,~uv\in E(G)
\end{equation} and the edge color set $f(E(G))=S_{q-1,k,0,d}$, then we call $f$ \emph{felicitous-difference $(k,d)$-total coloring}. And moreover $f$ is called a \emph{pseudo-felicitous-difference $(k,d)$-total coloring} if both $|f(E(G))|< q$ and Eq.(\ref{eqa:definition-felicitous-difference-constraint}) hold true.\qqed
\end{asparaenum}
\end{defn}

\begin{figure}[h]
\centering
\includegraphics[width=16.4cm]{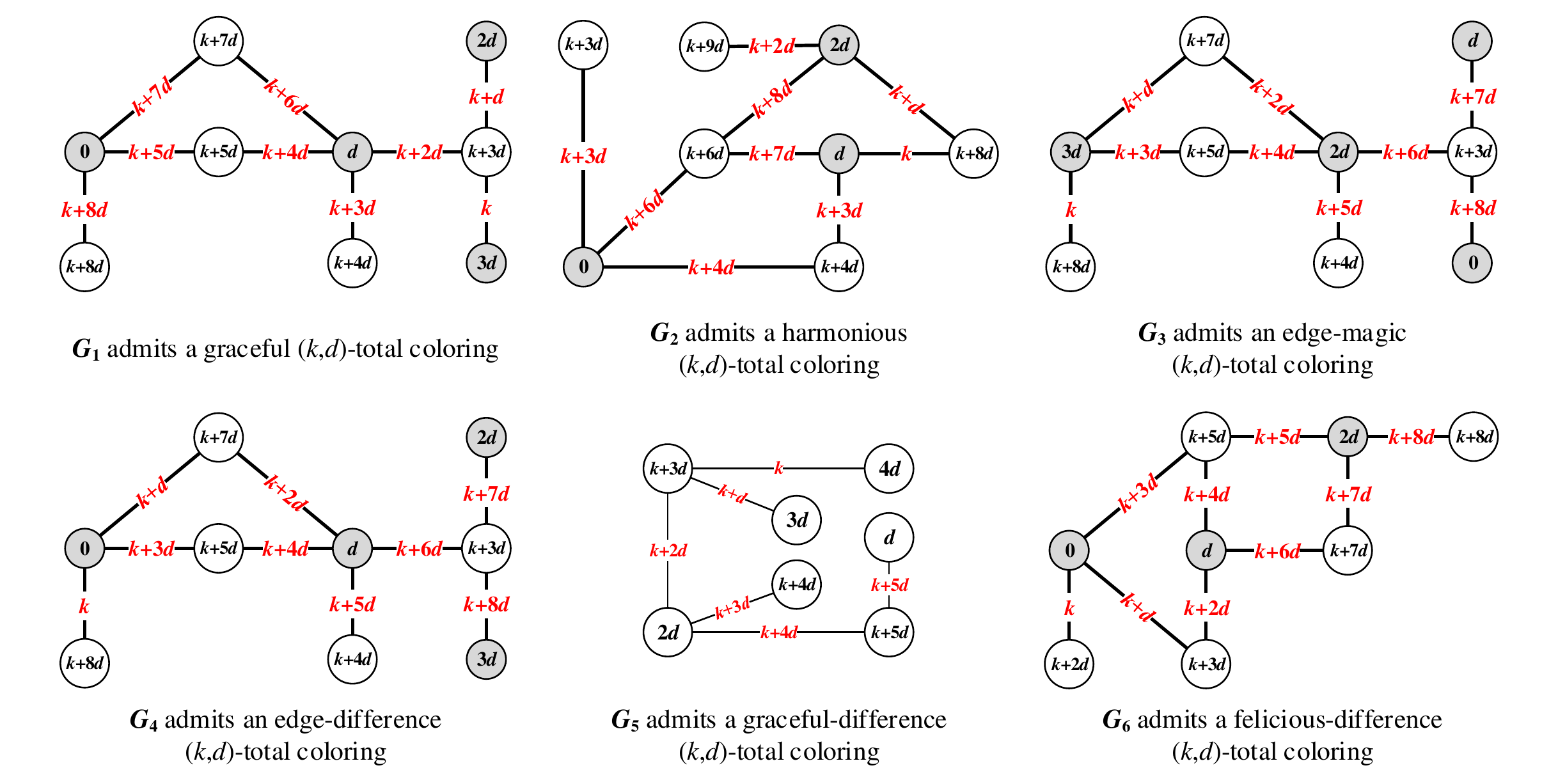}\\
\caption{\label{fig:3-odd-4-magic}{\small A diagram for understanding Definition \ref{defn:kd-w-type-colorings}.}}
\end{figure}

\begin{lem}\label{lem:set-ordered-w-cond-set-co}
\cite{Bing-Yao-arXiv:2207-03381} If a tree admits a \emph{set-ordered $W$-constraint labeling} (refer to Definition \ref{defn:all-thing-coloringss} and Definition \ref{defn:basic-W-type-labelings}), then it admits:

(i) a $W$-constraint proper total set-coloring;

(ii) a $W$-constraint proper $(k,d)$-total coloring; and

(ii) a $W$-constraint proper $(k,d)$-total set-coloring;
\end{lem}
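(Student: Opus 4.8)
The plan is to read all three conclusions as successive upgrades of the given labeling, each produced by one affine reparameterization of the color values, and to use the set-ordered hypothesis as the device that makes every $W$-defining identity survive that reparameterization. First I would fix the bipartition $V(T)=X\cup Y$ of the tree $T$ (a tree is bipartite) and write $\theta$ for the given set-ordered $W$-constraint labeling. The decisive consequence of the set-ordered constraint C-6 of Definition \ref{defn:basic-W-type-labelings} is that $\theta(X)<\theta(Y)$, so for every edge $xy\in E(T)$ with $x\in X$ and $y\in Y$ the induced edge color is the honest difference $\theta(xy)=\theta(y)-\theta(x)$ with a determined sign. This one observation removes all the absolute values occurring in Definition \ref{defn:kd-w-type-colorings}, which is exactly what turns the verifications below into linear identities.

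For part (ii) I would build the $(k,d)$-total coloring $f$ from $\theta$ by the substitution $f(x)=d\,\theta(x)$ for $x\in X$ and $f(y)=k+d\,[\theta(y)-1]$ for $y\in Y$, extending to edges by the rule that $W$ prescribes (for the graceful case $f(xy)=|f(x)-f(y)|$). Since the signs are fixed, $f(xy)=f(y)-f(x)=k+d\,[\theta(xy)-1]$, so $f$ sends $X$ into $S_{m,0,0,d}=\{0,d,\dots,md\}$ and sends $Y\cup E(T)$ into $S_{n,k,0,d}=\{k,k+d,\dots,k+nd\}$, which is precisely the total mapping required by Definition \ref{defn:kd-w-type-colorings}; in particular the graceful edge-set condition $\theta(E(T))=[1,q]$ becomes $f(E(T))=S_{q-1,k,0,d}$. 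I would then confirm that the chosen $W$ carries over: each constraint in Definition \ref{defn:kd-w-type-colorings} is affine in the labels, a matching condition C-7 or C-8 becomes $f(x)+f(y)=k+(q-1)d$, the magic-type identities become a constant plus a multiple of $d$, and all of them are preserved by $t\mapsto d\,t$ on $X$ and $t\mapsto k+d(t-1)$ on $Y\cup E(T)$ once the differences have been sign-resolved. The odd-graceful case is identical with the odd-set family $O_{s,k,r,d}$ of Eq.(\ref{eqa:two-parameterized-sets22}) in place of $S_{s,k,r,d}$.

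For parts (i) and (iii) I would widen each single color into a set drawn from the parameterized families $S_{s,k,r,d}$ and $O_{s,k,r,d}$ of Eq.(\ref{eqa:two-parameterized-sets11}) and Eq.(\ref{eqa:two-parameterized-sets22}): to each vertex and each edge I assign a short arithmetic-progression window of common difference $d$, chosen so that the defining $W$-equation holds element-for-element and so that the union of the edge windows equals the prescribed target set. For the plain total set-coloring (i) I would take $d=1$, and for the $(k,d)$-total set-coloring (iii) I would compose the windows with the affine map of part (ii); properness, meaning distinct and non-conflicting color sets on adjacent elements and across the two classes, again follows from the strict ordering $\theta(X)<\theta(Y)$, which keeps the windows over $X$ disjoint from those over $Y$. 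The main obstacle I expect is exactly this set-level alignment: one must pick the indices $s,r$ of each window so that the cardinality identities $|S_{s,k,r,d}|=s+1$ and $|O_{s,k,r,d}|=s$ force the set analogue of the edge-color equality (the replacement of $\theta(E(T))=[1,q]$) to hold on the nose while simultaneously keeping the coloring proper; by contrast the vertex reparameterization of part (ii) is comparatively routine. A secondary point to monitor is uniformity over the various $W$, since the graceful-type constraints use the difference rule while the magic-type constraints use their own edge rules, but in each case the sign resolution supplied by C-6 collapses the constraint to a linear identity that is stable under the reparameterization, so a single verification template covers them all.
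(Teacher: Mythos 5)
A preliminary remark: this lemma is imported by citation from \cite{Bing-Yao-arXiv:2207-03381} and the present paper contains no proof of it, so your proposal can only be measured against the machinery the paper uses for the same constructions elsewhere, namely the parameterized Topcode-matrix $P_{(k,d)}(G,F)=k\cdot I\,^0+d\cdot T_{code}(G,f)$ of Definition \ref{defn:bipartite-parameterized-topcode-matrix}, the proof of Theorem \ref{thm:abc-linear-magic-string-colorings}, and the set-coloring device of Definition \ref{defn:more-k-d-string-total-coloring}. Judged that way, your part (ii) is correct and is essentially the paper's own argument: the set-ordered condition C-6 fixes the sign of every difference, and the affine substitution $f(x)=d\,\theta(x)$ on $X$, $f(y)=k+d\,[\theta(y)-1]$ on $Y\cup E(T)$ turns each $W$-constraint (graceful, odd-graceful, edge-magic, edge-difference, felicitous-difference, graceful-difference) into the same identity with a new constant of the form $dc$, $2k+dc$, etc.; your extra shift by $-d$ even repairs the off-by-one between the paper's two conventions for the edge color set $S_{q-1,k,0,d}$.

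The genuine gap is in parts (i) and (iii). Your arithmetic-progression ``window'' construction is never actually carried out, and as stated it fails for exactly the difference-type constraints you need it for: if $x$ receives the window $\{a,a+d,\dots,a+sd\}$ and $y$ receives $\{b,b+d,\dots,b+sd\}$, then the index-aligned (``element-for-element'') differences all equal $b-a$, so the edge window collapses to a singleton rather than the prescribed set, while any other pairing produces a set of the wrong cardinality; your own closing paragraph concedes this alignment problem without resolving it, which means the set-coloring halves of the lemma remain unproved. The repair is not to widen one coloring into intervals but to bundle several colorings into sets, which is the route this paper itself takes in Definition \ref{defn:more-k-d-string-total-coloring} and in Theorem \ref{thm:graceful-v-set-e-proper-co}: use your part (ii) to produce a family $f_1,\dots,f_B$ of $W$-constraint proper $(k_s,d_s)$-total colorings of $T$ (with $(k_s,d_s)=(s,1)$ for conclusion (i), and with general parameters for conclusion (iii)), and define $F(w)=\{f_1(w),\dots,f_B(w)\}$ for every $w\in V(T)\cup E(T)$. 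Then the $W$-constraint holds coordinate-wise by construction, and properness of $F$ is inherited from any single constituent coloring because the strict ordering $\theta(X)<\theta(Y)$ keeps the vertex sets over $X$, the vertex sets over $Y$, and the edge sets pairwise distinct on adjacent and incident elements. With that substitution your sign-resolution template does cover all three conclusions; without it, only conclusion (ii) is established.
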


\begin{example}\label{exa:8888888888}
In Fig.\ref{fig:3-odd-4-magic}, we can see:

(i) the graph $G_1$ admits a graceful $(k,d)$-total coloring $f_1$ holding the graceful constraint $f_1(uv)=|f_1(u)-f_1(v)|$ for each edge $uv\in E(G_1)$;

(ii) the graph $G_2$ admits a harmonious $(k,d)$-total coloring $f_2$ holding the harmonious constraint $f_3(uv)=f_3(u)+f_3(v)~(\bmod ~9d)$ for each edge $uv\in E(G_2)$;

(iii) the graph $G_3$ admits an edge-magic $(k,d)$-total coloring $f_3$ holding the edge-magic constraint $$f_3(u)+f_3(uv)+f_3(v)=2k+11d
$$ for each edge $uv\in E(G_3)$;

(iv) the graph $G_4$ admits an edge-difference $(k,d)$-total coloring $f_4$ holding the edge-difference constraint
$$f_4(uv)+|f_4(u)-f_4(v)|=2k+8d
$$ for each edge $uv\in E(G_4)$;

(v) the graph $G_5$ admits a graceful-difference $(k,d)$-total coloring $f_5$ holding the edge-difference constraint
$$\big ||f_5(u)-f_5(v)|-f_5(uv)\big |=d
$$ for each edge $uv\in E(G_5)$;

(vi) the graph $G_6$ admits a graceful-difference $(k,d)$-total coloring $f_6$ holding the edge-difference constraint
$$|f_6(u)+f_6(v)-f_6(uv)|=2d
$$ for each edge $uv\in E(G_6)$.\qqed
\end{example}

\begin{rem}\label{rem:333333}
For the bipartite and connected $(p,q)$-graph $G$ in Definition \ref{defn:kd-w-type-colorings}, we have:
\begin{asparaenum}[\textbf{\textrm{Tradition}}-1.]
\item As $(k,d)=(1,1)$, we get a \emph{graceful total coloring} $f$ wit the edge color set $f(E(G))=[1,q]$; and $f$ is a \emph{strongly graceful total coloring} if $f(x)+f(y)=q$ for each matching edge $xy$ of a perfect matching $M$ of $G$.
\item As $(k,d)=(1,2)$, we get a \emph{odd-graceful total coloring} $f$ wit the edge color set $f(E(G))=[1,2q-1]^o$; and $f$ is a \emph{strongly odd-graceful total coloring} if $f(x)+f(y)=2q-1$ for each matching edge $xy$ of a perfect matching $M$ of $G$.
\item As $(k,d)=(1,1)$, $f(uv)=f(u)+f(v)~(\bmod^*q)$ for each edge $uv\in E(G)$, and the edge color set $f(E(G))=[0.q-1]$, then $f$ is called a \emph{harmonious total coloring}.

\qquad If the vertex color set $|f(V(G))|=|V(G)|=p$, the above $W$-constraint total colorings are called \emph{$W$-constraint total labelings}. Moreover, the graceful total coloring/labeling, the odd-graceful total coloring/labeling and the harmonious total coloring/labeling can be admitted by many non-bipartite graphs. \paralled
\end{asparaenum}
\end{rem}

\begin{conj}\label{conj:strongly-graceful-matching-path-tree}
\cite{Sun-Wang-Yao-2020} Any tree $T$ with a perfect matching can be transformed into a certain path $P$ with a perfect matching through adding-edge-subtracting dual graphs $H_{k}=H_{k-1}+x_ky_k-u_{k-1}v_{k-1}$ for edge $u_{k-1}v_{k-1}\in E(H_{k-1})$ and edge $x_ky_k\not\in E(H_{k-1})$ for $k\in [1,m]$, where $H_{0}=T$ and $H_{m}=P$, such that both $T$ and $P$ admit strongly graceful labelings.
\end{conj}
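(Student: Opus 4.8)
The plan is to first translate the definition of a strongly graceful labeling into a rigid combinatorial picture and then exploit that rigidity both to build the path endpoint and to organize the sequence of adding-edge-subtracting ($\pm e$-dual) moves. Since $T$ has a perfect matching, $n=|V(T)|$ is even and $q=n-1$ is odd. By constraints C-1 and C-2 of Definition \ref{defn:basic-W-type-labelings}, any strongly graceful labeling $\theta$ of a tree is a bijection $\theta:V(T)\to[0,q]=\{0,1,\dots,q\}$, so I may identify each vertex with its label and regard $T$ as a spanning subgraph of the complete graph on $[0,q]$ whose $q$ edges realize every difference in $[1,q]$ exactly once (constraint C-4). The matching constraint C-7 forces each matching edge $xy$ to satisfy $\theta(x)+\theta(y)=q$, i.e. to be a diametric pair $\{i,q-i\}$; because $q$ is odd these pairs have no fixed point and the collection $\{\{i,q-i\}:0\le i\le (q-1)/2\}$ already partitions $[0,q]$. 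Hence the perfect matching $M$ is forced to be exactly this set of diametric pairs, it realizes precisely the odd differences $\{1,3,\dots,q\}$ (the edge $\{i,q-i\}$ has difference $q-2i$), and the remaining $(q-1)/2$ non-matching edges of $T$ must realize the even differences $\{2,4,\dots,q-1\}$, each exactly once. The upshot is that every strongly graceful tree shares the same matching $M$, and such trees differ only in their even-difference edges.

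With this structure in hand I would first pin down the path endpoint $P$. The standard zig-zag labeling $0,q,1,q-1,2,q-2,\dots$ of the even path $P_n$ is graceful and assigns the consecutive pair $\{v_{2j-1},v_{2j}\}$ the labels $\{j-1,\,q-(j-1)\}$, whose sum is $q$; thus every even path is strongly graceful, giving a concrete $P=H_m$. Next I would realize the transformation as difference-preserving swaps that keep every intermediate $H_k$ a graceful labeled tree. Because $M$ is common to $T$ and $P$, I freeze all matching edges and move only non-matching edges: to remove $u_{k-1}v_{k-1}$ of even difference $2r$ and add $x_ky_k$, I insist $|\,\theta(x_k)-\theta(y_k)\,|=2r$ as well, which keeps the edge-difference set equal to $[1,q]$ and hence preserves gracefulness, since the vertex labels never change. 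Contracting each diametric pair $\{i,q-i\}$ to a super-vertex reduces the even-difference edges of any strongly graceful tree to a spanning tree on the $(q+1)/2$ super-vertices, so a single $\pm e$-dual move is exactly an edge-exchange between two spanning trees of this contracted graph, and reaching $P$ from $T$ becomes a basis-exchange-type problem.

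A clean way to obtain simultaneously that $T$ is strongly graceful and that it connects to $P$ is to run the construction backwards from the anchor. Starting from $P$ with its zig-zag labeling, I would apply reverse $\pm e$-dual moves, carrying the labeling along; as each reverse move is again a difference-preserving swap of even-difference edges, every graph produced stays a strongly graceful labeled tree. If I can show that the reverse moves reach a labeled tree isomorphic to the prescribed shape $T$, then pulling the zig-zag labeling back along the move sequence endows $T$ itself with a strongly graceful labeling and furnishes the required chain $H_0=T,\dots,H_m=P$ at once. The combinatorial heart is therefore a connectivity statement: the space of strongly graceful labeled trees on $[0,q]$, with the matching $M$ frozen, is connected under difference-preserving even-edge swaps, and the orbit of $P$ meets every isomorphism type of tree with a perfect matching. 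I would attack the within-shape relabeling half by induction on the size of the symmetric difference with the path, using the exchange property of spanning trees of the contracted super-vertex graph to justify each individual swap.

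The hard part will be the last clause of that connectivity statement — that the difference-preserving swaps are rich enough to produce every isomorphism type of tree with a perfect matching, and in particular that each target tree admits any strongly graceful labeling at all. This is precisely where the conjecture brushes against the Graceful Tree Conjecture: guaranteeing the existence of a graceful, here strongly graceful, labeling for an arbitrary tree is not known in general, and the rigidity I exploited only channels, but does not remove, that difficulty. The exchange argument should handle the relabeling and the reachability of $P$ from a \emph{given} strongly graceful labeled tree; the open gap is establishing that the orbit of $P$ actually covers all tree shapes, i.e. that difference-preserving even-edge swaps can change the isomorphism type freely while keeping the difference multiset fixed. I therefore expect the proof to split into a tractable ``relabeling/connectivity'' half and a genuinely hard ``existence/coverage'' half, with the latter being the true obstacle.
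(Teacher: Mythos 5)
First, be aware of what you were asked to match: the statement is labeled a \emph{conjecture} in the paper (cited from \cite{Sun-Wang-Yao-2020}), and the paper supplies no proof of it at all, so there is no ``paper's proof'' to compare against. Your structural analysis of strongly graceful labelings is nevertheless correct and worth recording: for a tree with a perfect matching, $q=n-1$ is odd, constraints C-1, C-2, C-4 of Definition \ref{defn:basic-W-type-labelings} force $\theta$ to be a bijection onto $[0,q]$, constraint C-7 forces the matching to be exactly the diametric pairs $\{i,q-i\}$, these pairs carry precisely the odd differences, the non-matching edges carry the even differences each once, and the zig-zag labeling $0,q,1,q-1,\dots$ shows every even path is strongly graceful. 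This is the right rigid picture (it is essentially the Broersma--Hoede analysis \cite{Broersma-Hoede-strongly-graceful-1999}), and you are also right, and commendably honest, about where the difficulty lies.

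But that difficulty is fatal to the proposal as a proof, and it is worth naming exactly: the ``existence/coverage'' half you defer is not merely hard, it \emph{is} Conjecture \ref{conj:Broersma-Hoede-strongly-graceful} (every tree with a perfect matching is strongly graceful), which the paper records, citing \cite{Wang-Yao-equivalent-2020}, as equivalent to the Graceful Tree Conjecture (Conjecture \ref{conj:Alexander-Rosa-graceful}). So completing your argument would resolve GTC; no connectivity or basis-exchange machinery can route around this, since the claim that the orbit of $P$ under difference-preserving swaps meets the isomorphism type of an arbitrary $T$ already contains the existence statement. Two further points deserve flagging. (i) Even your ``tractable'' half has an unproven step: after deleting the even-difference edge $u_{k-1}v_{k-1}$, you need a \emph{same-difference} edge $x_ky_k$ crossing the resulting cut, and matroid exchange in the contracted super-vertex graph only guarantees \emph{some} reconnecting edge, not one of the prescribed difference $2r$; whether a difference-$2r$ pair always straddles the cut (the cut does respect the diametric pairs, which rules out the parity obstruction, but nothing more) requires an argument you have not given. (ii) Read literally, the conjecture's $\pm e$-dual chain does not require the intermediate graphs $H_k$ to be trees at all, so once strongly graceful labelings of $T$ and $P$ exist, a connecting chain of single-edge swaps is essentially free; the entire weight of the statement sits on the labeling existence, which is precisely the part your proposal leaves open. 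In short: the structure lemmas are sound, the path anchor is sound, but the proposal proves nothing beyond what was already known, because its missing half is the open conjecture itself.
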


\begin{defn}\label{defn:odd-edge-W-type-total-labelings-definition}
\cite{Yao-Zhang-Yang-Wang-Odd-Edge-arXiv-02477} Let $G$ be a bipartite and connected $(p,q)$-graph with vertex set $V(G)=X\cup Y$, and $X\cap Y=\emptyset$. There is a total coloring
\begin{equation}\label{eqa:555555}
h:X\rightarrow S_{m,0,0,d},\quad h:Y\cup E(G)\rightarrow S_{n,k,0,d}\cup O_{q,k,0,d}
\end{equation} with integers $k\geq 0$ and $d\geq 1$. Let $c$ be an non-negative integer.

(i) If there are the \emph{edge-magic constraint} $h(u)+h(uv)+h(v)=c$ for each edge $uv\in E(G)$ and the edge color set $h(E(G))=O_{q,k,0,d}$, then $h$ is called \emph{odd-edge edge-magic $(k,d)$-total \textbf{labeling}} when $|h(V(G))|=p$; and moreover $h$ is called \emph{odd-edge edge-magic $(k,d)$-total \textbf{coloring}} when the vertex color set cardinality $|h(V(G))|<p$.

(ii) If there are the \emph{edge-difference constraint} $h(uv)+|h(u)-h(v)|=c$ for each edge $uv\in E(G)$ and the edge color set $h(E(G))=O_{q,k,0,d}$, then $h$ is called \emph{odd-edge edge-difference $(k,d)$-total \textbf{labeling}} when $|h(V(G))|=p$; and furthermore $h$ is called \emph{odd-edge edge-difference $(k,d)$-total \textbf{coloring}} when the vertex color set cardinality $|h(V(G))|<p$.

(iii) If there are the \emph{felicitous-difference constraint} $|h(u)+h(v)-h(uv)|=c$ for each edge $uv\in E(G)$ and the edge color set $h(E(G))=O_{q,k,0,d}$, then $h$ is called \emph{odd-edge felicitous-difference $(k,d)$-total \textbf{labeling}} when $|h(V(G))|=p$; and moreover $h$ is called \emph{odd-edge felicitous-difference $(k,d)$-total \textbf{coloring}} when the vertex color set cardinality $|h(V(G))|<p$.

(iv) If there are the \emph{graceful-difference constraint} $\big ||h(u)-h(v)|-h(uv)\big |=c$ for each edge $uv\in E(G)$ and the edge color set $h(E(G))=O_{q,k,0,d}$, then $h$ is called \emph{odd-edge graceful-difference $(k,d)$-total \textbf{labeling}} when $|h(V(G))|=p$; and furthermore $h$ is called \emph{odd-edge graceful-difference $(k,d)$-total \textbf{coloring}} when the vertex color set cardinality $|h(V(G))|<p$.\qqed
\end{defn}

\begin{figure}[h]
\centering
\includegraphics[width=16.4cm]{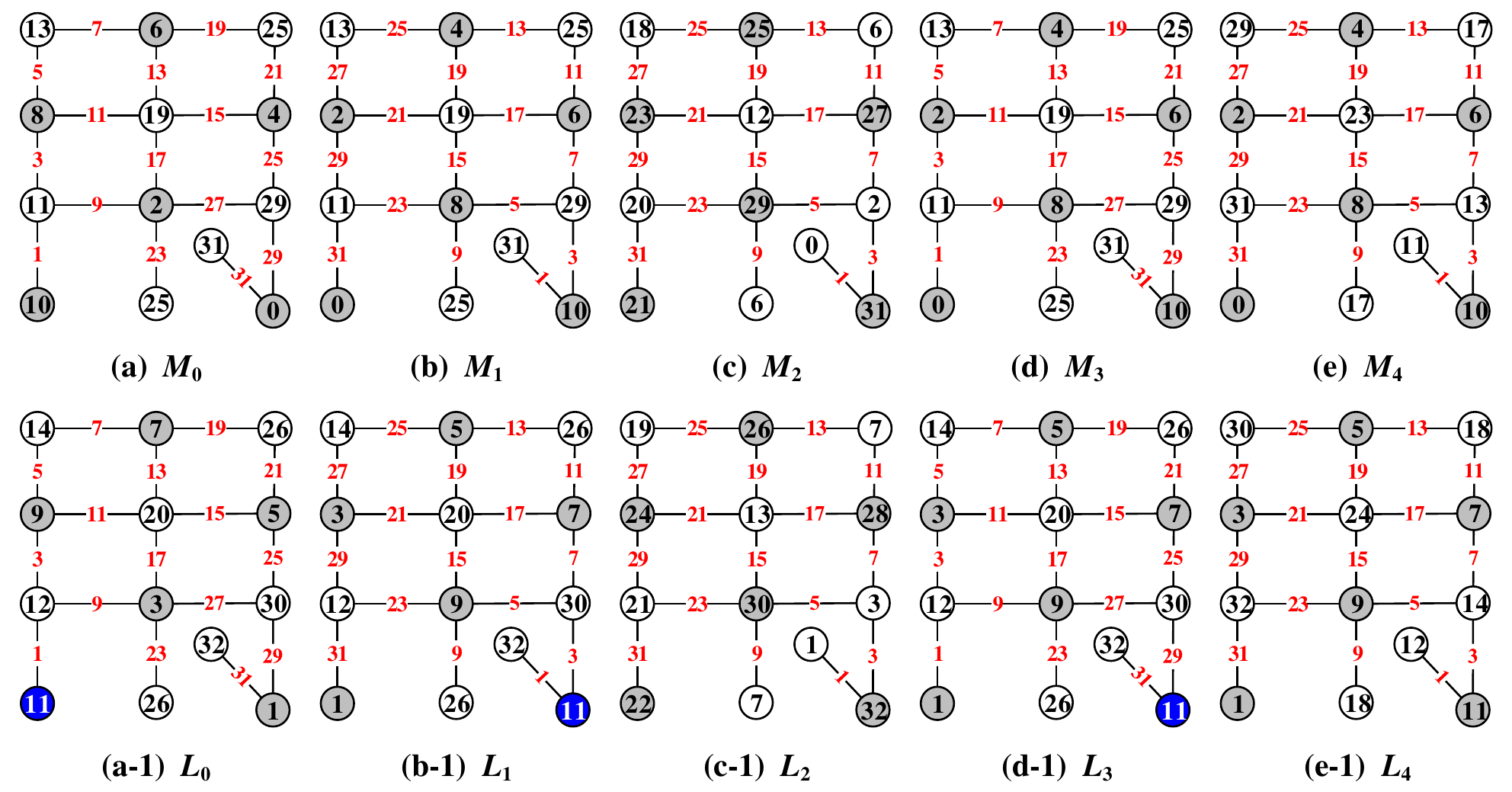}\\
\caption{\label{fig:4magice-total-colorings}{\small A diagram for illustrating Definition \ref{defn:odd-edge-W-type-total-labelings-definition}, cited from \cite{Yao-Zhang-Yang-Wang-Odd-Edge-arXiv-02477}.}}
\end{figure}

\begin{rem} \label{rem:kd-w-tupe-colorings-definition}
\cite{Bing-Yao-arXiv:2207-03381} We call each coloring/labeling $f$ defined in Definition \ref{defn:kd-w-type-colorings} and Definition \ref{defn:odd-edge-W-type-total-labelings-definition} \emph{$W$-constraint $(k,d)$-total coloring} and write the $W$-constraint by $W\langle f(u), f(uv),f(v)\rangle=0$, or $f(uv)=W\langle f(u), f(v)\rangle=0$ in this article.

There are new parameters of graphs based on Definition \ref{defn:kd-w-type-colorings} and Definition \ref{defn:odd-edge-W-type-total-labelings-definition}. We have two $W$-constraint $(k,d)$-total colorings $g_{\min}$ and $g_{\max}$ of a graph $G$ such that
\begin{equation}\label{eqa:w-constraint-coloring-extrem-bounds}
|\emph{g}_{\min}(V(G))|\leq |g(V(G))|\leq |g_{\max}(V(G))|
\end{equation} for each $W$-constraint $(k,d)$-total coloring $g$ of $G$.

As this $W$-constraint $(k,d)$-total coloring is a graceful $(1,1)$-total coloring $g_{\max}$ of a tree $T$ means that the maximal number $|g_{\max}(V(T))|$ is close to a graceful labeling of $T$. \textbf{Graceful Tree Conjecture} \cite{A-Rosa-graceful-1967, Gallian2021} says: \emph{Every tree admits a graceful labeling}, also, $|g_{\max}(V(T))|=|V(T)|$.

As this $W$-constraint $(k,d)$-total coloring is the graceful $(k,d)$-total coloring, then a graceful $(1,2)$-total coloring $g_{\max}$ of a tree $T$ means that the maximal number $|g_{\max}(V(T))|$ is close to an odd-graceful labeling of $T$. In \cite{R-B-Gnanajothi-1991}, Gnanajothi proposed \textbf{Odd-graceful Tree Conjecture}: \emph{Every tree admits an odd-graceful labeling}, also, $|g_{\max}(V(T))|=|V(T)|$.

As this $W$-constraint $(k,d)$-total coloring is a strongly graceful coloring, Wang and Yao, in \cite{Wang-Yao-equivalent-2020}, have proven that two conjectures \textbf{Graceful Tree Conjecture} (see Conjecture \ref{conj:Alexander-Rosa-graceful}) and \textbf{Strongly Graceful Tree Conjecture} (see Conjecture \ref{conj:Broersma-Hoede-strongly-graceful}) are equivalent from each other, here, Broersma and Hoede in 1999 conjectured: \emph{Every tree containing a perfect matching is strongly graceful}.

As this $W$-constraint $(k,d)$-total coloring is an edge-magic $(k,d)$-total labeling: In 1970, Anton Kotzig and Alex Rosa defined an \emph{edge-magic total labeling} of a $(p,q)$-graph $G$ as a bijective mapping $f$ from $V(G)\cup E(G)$ to $[1, p+q]$ such that for any edge $uv$, $f(u)+f(v)+f(uv)=c$, where $c$ is a fixed constant. Moreover, they \textbf{conjectured}: \emph{Every tree admits an edge-magic total labeling}.

A $W$-constraint $(k,d)$-total coloring $f$ defined in Definition \ref{defn:kd-w-type-colorings} is \emph{proper} if $f(u)\neq f(v)$ for each edge $uv\in E(G)$, and $f(uv)\neq f(uw)$ for any two adjacent edges $uv,uw\in E(G)$. So we call $f$ a \emph{$W$-constraint proper $(k,d)$-total coloring} of $G$. Investigating various $W$-constraint proper $(k,d)$-total colorings of graphs is interesting and challenging.

However, determining the upper and lower bounds of Eq.(\ref{eqa:w-constraint-coloring-extrem-bounds}) is difficult, since no polynomial algorithm for determining some $W$-constraint colorings of graphs was reported in our memory.\paralled
\end{rem}

\begin{conj}\label{conj:Alexander-Rosa-graceful}
(Alexander Rosa, 1966) \cite{Alexander-Rosa-graceful-1967} Each tree is graceful.
\end{conj}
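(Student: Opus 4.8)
The statement is the \emph{Graceful Tree Conjecture}, which recasts as follows: a tree $T$ with $q$ edges (hence $q+1$ vertices) is graceful precisely when there is a bijection $\theta:V(T)\to[0,q]$ whose induced edge-differences $\theta(uv)=|\theta(u)-\theta(v)|$ realize every value of $[1,q]$ exactly once (constraints C-1, C-2, C-4 of Definition \ref{defn:basic-W-type-labelings}). My first instinct is induction on $q$ by leaf-deletion: delete a leaf $v$ with neighbor $u$ to obtain a tree $T'$ on $q$ vertices, graceful by hypothesis with labels in $[0,q-1]$ and edge-differences $[1,q-1]$. To extend $\theta'$ to $T$ one must assign the new edge $uv$ the single missing difference $q$, which forces $\{\theta(u),\theta(v)\}=\{0,q\}$; since $q\notin\theta'(V(T'))$ this demands $\theta(v)=q$ together with $\theta'(u)=0$. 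The extension therefore succeeds only when the predecessor labeling happens to place $0$ at the attachment vertex $u$, and nothing in the hypothesis controls this placement. This is exactly the well-known rigidity of graceful labelings, and it shows a naive induction cannot work: any successful reduction must propagate a \emph{much stronger} inductive invariant.

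My main plan would route through the structural tools already assembled in this paper. By the Wang--Yao equivalence recorded in the remarks \cite{Wang-Yao-equivalent-2020}, the Graceful Tree Conjecture is equivalent to the Strongly Graceful Tree Conjecture (every tree with a perfect matching is strongly graceful, Conjecture \ref{conj:Broersma-Hoede-strongly-graceful}). I would therefore first reduce to the subclass of trees admitting a perfect matching and aim for the stronger, more symmetric conclusion of constraint C-7, where matched pairs satisfy $\theta(x)+\theta(y)=q$. For such a tree $T$, Conjecture \ref{conj:strongly-graceful-matching-path-tree} supplies a chain of $\pm e$-dual graphs $H_0=T,H_1,\dots,H_m=P$ terminating at a path $P$ with a perfect matching, each step replacing one edge $u_{k-1}v_{k-1}$ by a non-edge $x_ky_k$. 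Since paths are known to be strongly graceful, the strategy is to transport a strongly graceful labeling of $P$ \emph{backwards} along this chain: given such a labeling of $H_k$, produce one of $H_{k-1}$ by a local relabeling that repairs only the single difference lost when $x_ky_k$ is exchanged for $u_{k-1}v_{k-1}$. A set-ordered ($\alpha$-type, constraint C-6) strengthening would make the matched-pair bookkeeping cleaner and is the natural invariant to carry along the chain.

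The hard part will be exactly this single-step transfer across a $\pm e$-dual operation. Swapping one edge for another generically destroys the distinct-difference condition: the new non-edge $x_ky_k$ must acquire the precise difference vacated by $u_{k-1}v_{k-1}$ while every other edge-difference is left intact, and there is no reason a global bijection can be adjusted so locally. In effect one needs a \emph{monotone} graph operation under which strong gracefulness is provably preserved, and no such operation is known --- this is the point at which the conjecture has resisted proof since $1966$. I would expect that closing the gap requires a genuinely new ingredient: either an algebraic certificate (a polynomial-method/Combinatorial-Nullstellensatz argument for solvability of the distinct-difference system over a suitable finite field), or a new parameter preserved under the vertex-splitting correspondence of Theorem \ref{thm:connected-graph-vs-tree-set} that forces gracefulness of each split part. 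Absent such an ingredient, the plan delivers the statement only \emph{conditionally}, upon a labeling-transfer refinement of Conjecture \ref{conj:strongly-graceful-matching-path-tree}, which I would isolate and state as the real technical theorem to be attacked.
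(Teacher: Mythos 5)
There is no proof in the paper for you to be compared against: the statement is Rosa's \emph{Graceful Tree Conjecture}, open since 1966, and the paper presents it purely as Conjecture~\ref{conj:Alexander-Rosa-graceful}, offering only the surrounding context --- computational verification for all trees on at most $29$ vertices \cite{Deo-Nikoloski-Suraweera-2002, Michael-Horton-Bcomp-2003} and the Wang--Yao equivalence with the Strongly Graceful Tree Conjecture (Conjecture~\ref{conj:Broersma-Hoede-strongly-graceful}) \cite{Wang-Yao-equivalent-2020}. Your proposal correctly identifies the statement, correctly diagnoses why naive leaf-deletion induction collapses (the missing difference $q$ forces the label pair $\{0,q\}$ on the new edge, and the inductive hypothesis gives no control over where $0$ sits), and, importantly, does not claim to have produced a proof. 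That self-assessment is accurate and is the only defensible conclusion.

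The one point worth flagging is that your main plan is \emph{doubly} conditional, and you should be explicit that neither conditioning can be discharged from material in this paper. The reduction to trees with perfect matchings rests on Conjecture~\ref{conj:Broersma-Hoede-strongly-graceful}, which is itself open (the Wang--Yao result only shows the two conjectures stand or fall together); and the backward transport along the $\pm e$-dual chain rests on Conjecture~\ref{conj:strongly-graceful-matching-path-tree}, which is also an unproven conjecture, and which moreover asserts only the \emph{existence} of the chain with both endpoints strongly graceful --- it does not provide the step-by-step labeling-transfer mechanism your argument needs, which is strictly stronger. So the ``real technical theorem to be attacked'' that you isolate at the end is not a refinement of a known result but a new open problem at least as hard as the one you started with. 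As a research plan this is reasonable; as a proof it establishes nothing unconditionally, which is consistent with the conjecture's status and with how the paper treats it.
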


Conjecture \ref{conj:Alexander-Rosa-graceful} was computationally verified for all trees of order $\leq 29$ by Deo, Nikoloski and Suraweera in 2002 \cite{Deo-Nikoloski-Suraweera-2002}. After 58 days' computer time, the trueness of Conjecture \ref{conj:Alexander-Rosa-graceful} was verified for all 5,469,566,585 trees on 29 vertices, reported by Horton and Bcomp in their bachelor paper in 2003 \cite{Michael-Horton-Bcomp-2003}.

\begin{conj}\label{conj:Broersma-Hoede-strongly-graceful}
(H. J. Broersma and C. Hoede, 1999) \cite{Broersma-Hoede-strongly-graceful-1999} Every tree containing a perfect matching is strongly graceful.
\end{conj}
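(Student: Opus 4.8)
The plan is to prove the conjecture by reducing an arbitrary tree with a perfect matching to a path carrying a perfect matching, for which a strongly graceful labeling can be written down explicitly, and then transporting that labeling back along the reduction. Recall from Definition \ref{defn:basic-W-type-labelings} (Lab-3) that a strongly graceful labeling is a graceful labeling (constraints C-1, C-2, C-4) subject to the extra matching constraint C-7: the host tree carries a perfect matching $M$ with $\theta(x)+\theta(y)=q$ for every $xy\in M$. When the tree has $2n$ vertices we have $q=2n-1$, the labels lie in $[0,q]=[0,2n-1]$, and the only pairs summing to $2n-1$ are $\{0,2n-1\},\{1,2n-2\},\dots,\{n-1,n\}$, which already partition $[0,2n-1]$. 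Hence C-7 forces the matched pairs to receive exactly these complementary blocks, and it is this rigidity I would exploit.

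First I would settle the base case. For the path $P_{2n}=v_1v_2\cdots v_{2n}$ with the canonical perfect matching $M=\{v_{2i-1}v_{2i}:i\in[1,n]\}$, set $\theta(v_{2i-1})=i-1$ and $\theta(v_{2i})=2n-i$, i.e. the zig-zag labeling $0,2n-1,1,2n-2,2,\dots$ alternating between the low block $[0,n-1]$ and the high block $[n,2n-1]$. A direct check shows the consecutive edge differences run through $2n-1,2n-2,\dots,1$, so the edge set receives exactly $[1,2n-1]$ (constraint C-4), while each matched pair satisfies $\theta(v_{2i-1})+\theta(v_{2i})=(i-1)+(2n-i)=2n-1$ (constraint C-7). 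Thus $P_{2n}$ is strongly graceful; moreover the odd-indexed vertices carry the low block and the even-indexed vertices the high block, so the labeling is set-ordered, which is the form best suited to the transfer step.

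Next I would follow the route indicated by Conjecture \ref{conj:strongly-graceful-matching-path-tree}, which contemplates connecting a tree $T$ with a perfect matching to a path $P$ with a perfect matching through a chain of $\pm e$-dual graphs $H_k=H_{k-1}+x_ky_k-u_{k-1}v_{k-1}$ with $H_0=T$ and $H_m=P$. That such a chain exists at the purely structural level --- each move deletes one edge and inserts another while keeping the graph a tree and preserving a perfect matching --- is the tractable part. What I would actually have to prove is the labeled statement: each move can be chosen so that a strongly graceful labeling of $H_{k-1}$ is carried to one of $H_k$, by relocating a single pendant configuration so that both the global difference multiset $[1,q]$ and the complementary matched sums are restored after the swap. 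Running the chain backwards from the explicitly labeled path $P=H_m$ would then deposit a strongly graceful labeling on $T=H_0$.

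The hard part will be precisely this labeled invariance, and I do not expect it to be routine. Gracefulness is a global constraint on the entire difference multiset $[1,q]$, so a local edge move generically destroys it, and the matching constraint C-7 must be preserved at the same time. The remark following Definition \ref{defn:odd-edge-W-type-total-labelings-definition} records the Wang--Yao equivalence \cite{Wang-Yao-equivalent-2020} of the Strongly Graceful Tree Conjecture with the Graceful Tree Conjecture (Conjecture \ref{conj:Alexander-Rosa-graceful}), so the obstacle cannot be dissolved by a soft argument: any complete proof of the invariance step would also settle Rosa's long-standing conjecture. A naive induction --- deleting a matched pendant pair $\{\ell,u\}$ and labeling the remainder --- fails on two counts, since removing $u$ can disconnect $T$ into a forest rather than a smaller tree, and even otherwise the difference set $[1,q]$ does not restrict cleanly to $[1,q-2]$. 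I would therefore establish the invariance first on structured subfamilies --- caterpillars and lobsters, whose leaf-removed graphs are a path and a caterpillar respectively, so the $\pm e$-dual moves can be tracked layer by layer --- and treat the general tree as the genuine bottleneck rather than a calculation to be pushed through.
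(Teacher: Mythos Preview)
The statement you are attempting to prove is listed in the paper as a \emph{conjecture}, not a theorem; the paper provides no proof, and none is known. The paper itself records (in Remark \ref{rem:kd-w-tupe-colorings-definition}) the Wang--Yao result \cite{Wang-Yao-equivalent-2020} that Conjecture \ref{conj:Broersma-Hoede-strongly-graceful} is equivalent to the Graceful Tree Conjecture (Conjecture \ref{conj:Alexander-Rosa-graceful}), which has been open since 1966. So there is no ``paper's own proof'' to compare against.

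You clearly recognize this: your proposal explicitly identifies the labeled $\pm e$-invariance step as the bottleneck and notes that completing it would settle Rosa's conjecture. That diagnosis is correct, and it means your plan is not a proof but an outline of where the difficulty lives. The base case for paths is fine, and the structural chain of $\pm e$-moves preserving a perfect matching is indeed the easy part; but the step ``each move can be chosen so that a strongly graceful labeling of $H_{k-1}$ is carried to one of $H_k$'' is exactly the content of the conjecture and cannot be established by the methods you sketch. Restricting to caterpillars or lobsters yields known partial results but does not bridge the gap to general trees. In short, your proposal is an honest reconnaissance of an open problem rather than a proof, and should be presented as such.
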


In \cite{Bing-Yao-arXiv:2207-03381}, the authors have shown the following results by the polynomial algorithms:

\begin{thm}\label{thm:10-k-d-total-coloringss}
\cite{Bing-Yao-arXiv:2207-03381} Each tree $T$ with diameter $D(T)\geq 3$ admits at least $2^m$ different $W$-constraint $(k,d)$-total colorings for $m+1=\left \lceil \frac{D(T)}{2}\right \rceil $, where $W$-constraint $\in \{$graceful, harmonious, edge-difference, graceful-difference, felicitous-difference, edge-magic, odd-edge edge-difference, odd-edge graceful-difference, odd-edge felicitous-difference, odd-edge edge-magic$\}$; refer to Definition \ref{defn:kd-w-type-colorings} and Definition \ref{defn:odd-edge-W-type-total-labelings-definition}.
\end{thm}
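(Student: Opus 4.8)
The plan is to prove a single statement covering all ten listed constraints at once by working with the unified formulation $W\langle f(u),f(uv),f(v)\rangle=0$ of Remark~\ref{rem:kd-w-tupe-colorings-definition}, and to obtain the $(k,d)$-total colorings from \emph{set-ordered} $W$-constraint labelings through Lemma~\ref{lem:set-ordered-w-cond-set-co}. First I would reduce the theorem to a labeling statement: \emph{every tree $T$ with $D(T)\geq 3$ admits at least $2^{m}$ pairwise distinct set-ordered $W$-constraint labelings} in the sense of Definition~\ref{defn:basic-W-type-labelings} and Definition~\ref{defn:odd-edge-W-type-total-labelings-definition}. Each such labeling produces a $W$-constraint proper $(k,d)$-total coloring by Lemma~\ref{lem:set-ordered-w-cond-set-co}, and because the parameterization of Definition~\ref{defn:kd-w-type-colorings} sends the ordered color sequence on $X\cup Y$ injectively into the resulting $(k,d)$-palette, distinct set-ordered labelings yield distinct $(k,d)$-total colorings; this is what transfers the count from labelings to colorings.

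The count I would obtain by induction on the diameter, peeling the outer leaf layer. Fixing a diametral path $u_0u_1\cdots u_{D}$ and recalling that the two ends of a longest path are leaves, the leaf-removed tree $T'=T-L(T)$ satisfies $D(T')=D(T)-2$, so the radius $\lceil D(T)/2\rceil$ drops by exactly one at each step. The two base cases are $T=K_2$ (with $D=1$) and the star $T=K_{1,n}$ (with $D=2$); each admits a set-ordered $W$-constraint labeling, and since $m=0$ in both cases the bound $2^{0}=1$ is immediate. For the inductive step I would show that every set-ordered $W$-constraint labeling $\theta'$ of $T'$ extends to $T$ by assigning the fresh leaves and their edges the extreme band of the color/arithmetic-progression set, and that this extension admits an independent binary choice: the set-ordered bipartition $V(T)=X\cup Y$ with $\theta(X)<\theta(Y)$ carries the canonical complementary (dual) involution, and the presence of the newly attached outer leaves guarantees that this involution alters the labeling, producing a second, distinct set-ordered labeling of $T$. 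Since distinct $\theta'$ give distinct extensions (they are recognizable from the inner band), the number of labelings is multiplied by at least $2$ per step; after the $m=\lceil D(T)/2\rceil-1$ steps needed to reach a base case this gives at least $2^{m}$.

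Throughout, the fact that Definition~\ref{defn:kd-w-type-colorings} permits $f(x)=f(y)$ for distinct vertices is essential: it lets me color the interior branch vertices and leaves freely enough to keep the edge-color set equal to the prescribed consecutive progression $S_{q-1,k,0,d}$ (or the odd version $O_{q,k,0,d}$ of Definition~\ref{defn:odd-edge-W-type-total-labelings-definition}) even though $q$ grows at every step, so I never need the full strength of the Graceful or Odd-graceful Tree Conjectures. For each named constraint I would then verify $W\langle f(u),f(uv),f(v)\rangle=0$ on the new edges directly from the band placement.

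The hard part will be the simultaneous bookkeeping at the reattachment step. Re-indexing the edge colors so that they form exactly the consecutive set $S_{q-1,k,0,d}$ after $q$ increases, while at the same time forcing the rigid magic and difference constants (the single value $c$ in the edge-magic, edge-difference, graceful-difference and felicitous-difference constraints of Definition~\ref{defn:kd-w-type-colorings} and Definition~\ref{defn:odd-edge-W-type-total-labelings-definition}) to be respected by the freshly colored leaf edges, is the delicate core; the graceful and harmonious cases are comparatively easy, whereas the magic-type constraints leave little freedom and must be checked separately. A secondary obstacle is confirming that the dual involution really is non-trivial at every step and that the two extensions of a given $\theta'$, together with extensions coming from different $\theta'$, are genuinely pairwise distinct, so that no collisions erode the factor $2$. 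Once these verifications go through uniformly, assembling them along the induction and invoking Lemma~\ref{lem:set-ordered-w-cond-set-co} completes the argument.
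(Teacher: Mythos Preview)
The paper does not supply its own proof of this theorem: it is imported from \cite{Bing-Yao-arXiv:2207-03381} with only the remark that the results there were obtained ``by the polynomial algorithms'', so there is no in-paper argument to compare your proposal against. Your overall frame---peeling the leaf layer so that $\lceil D(T)/2\rceil$ drops by one, running an induction, and passing to $(k,d)$-total colorings through Lemma~\ref{lem:set-ordered-w-cond-set-co} and the parameterized Topcode-matrix of Definition~\ref{defn:bipartite-parameterized-topcode-matrix}---is the natural shape for such a recursive algorithm, and invoking Theorem~\ref{thm:equivalent-k-d-total-colorings} to collapse the ten constraints to one case is the right economy.

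The genuine gap is in your doubling mechanism. The ``canonical complementary (dual) involution'' you invoke is a \emph{global} symmetry of the whole set of set-ordered labelings of $T$: applied to an extension $\theta$ of $\theta'$, it returns an extension of the dual $\overline{\theta'}$, not a second extension of $\theta'$ itself. If your inductively built family of $2^{m-1}$ labelings of $T'$ is already closed under this involution---and it will be, since the same dual is available at every earlier stage of the induction---then extending and then dualizing does not produce $2\cdot 2^{m-1}$ labelings of $T$; it merely permutes the $2^{m-1}$ extensions you already constructed. Equivalently, the sequence of ``dualize / don't dualize'' decisions along the $m$ steps collapses: only the parity of the number of dualizations matters, yielding at most a single factor of $2$ overall rather than $2^{m}$. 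To make the induction go through you need a binary choice that is \emph{local to the freshly attached leaf layer} and demonstrably independent of the choices made at earlier layers---for example, deciding whether the new leaves occupy the top block or the bottom block of the enlarged label interval with the interior relabeled accordingly, together with a check that the two outputs restrict to the same $\theta'$ on $T'$ (or, failing that, that across all $\theta'$ the $2\cdot 2^{m-1}$ outputs are pairwise distinct). That is precisely the ``hard part'' you flag, but as written the proposal names the wrong device for it.
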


Balakrishnan and Sampathkumar have shown:
\begin{thm}\label{thm:666666}
\cite{Balakrishnan-Sampathkumar-1996} Every graph is a subgraph of a certain graceful graph.
\end{thm}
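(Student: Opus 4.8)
The plan is to embed $G$ into a graph $H$ for which a \emph{graceful labeling} in the sense of Lab-1 of Definition~\ref{defn:basic-W-type-labelings} can be written down explicitly; recall that such a labeling of a $(p,q)$-graph is an injective $\theta\colon V\to[0,q]$ with $\min\theta(V)=0$ whose induced edge labels $|\theta(x)-\theta(y)|$ form exactly the interval $[1,q]$ (constraints C-1, C-2, C-4). Since $G$ is a subgraph of $K_{|V(G)|}$ one could try to reduce to complete graphs, but the direct construction below needs no such reduction. The first step is to give $V(G)$, with $|V(G)|=n$, an injective labeling whose induced edge labels are pairwise distinct; this is automatic if the vertex labels form a \emph{Sidon set}, i.e.\ a set of integers with all pairwise differences distinct. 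Concretely, I would assign vertex $v_i$ the label $2^{i-1}$ for $i\in[1,n]$: because $2^{j}-2^{i}=2^{i}(2^{j-i}-1)$ factors uniquely, all $\binom{n}{2}$ pairwise differences are distinct, so in particular the $q=|E(G)|$ edge labels are distinct. Write $M=2^{n-1}$ for the largest vertex label, so every edge label lies in $[1,M]$; let $A\subseteq[1,M]$ be the set of vertex labels used and $S\subseteq[1,M]$, $|S|=q$, the set of edge labels actually realized.

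The second step is to fill the gaps $[1,M]\setminus S$ so that the edge-label set becomes the full interval $[1,M]$. I would introduce one new \emph{hub} vertex $w$ with label $0$ (the value $0$ is unused by $G$, since all its labels are at least $1$), and process each $\ell\in[1,M]\setminus S$ exactly once: if some vertex of $G$ already carries the label $\ell$ (that is, $\ell\in A$), join $w$ to that vertex; otherwise add a new pendant vertex $z_\ell$ with label $\ell$ and join it to $w$. In either case the new edge incident to $w$ receives the label $|\ell-0|=\ell$. No other edges are added, and each new pendant $z_\ell$ has degree one, so it contributes only its single edge. The resulting graph $H$ plainly contains $G$ as a subgraph.

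The final step is to verify that this labeling makes $H$ graceful. The vertex labels are $0$, the distinct values in $A$, and the distinct values $\ell\in[1,M]\setminus A$ carried by the $z_\ell$; these are pairwise distinct and all lie in $[0,M]$. The edge labels are the original distinct set $S$ together with one copy of each $\ell\in[1,M]\setminus S$, which is precisely $[1,M]$ with no repetitions; hence $H$ has exactly $M$ edges and its maximal vertex label equals $M=|E(H)|$, so $\theta$ meets C-1, C-2 and C-4 and is a graceful labeling. The main point requiring care — and essentially the only place the argument could fail — is the gap-filling bookkeeping: one must realize each missing difference exactly once while keeping all vertex labels distinct and inside $[0,q']$ and creating no duplicate edge labels. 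Starting from a Sidon labeling (so the initial edge labels are already distinct) together with a single hub at $0$ (so that every target label $\ell$ is produced by the one controllable edge from $w$) is exactly what makes this succeed; replacing the Sidon set by an arbitrary injective labeling would reintroduce collisions among edge labels and destroy the final count $|E(H)|=M$.
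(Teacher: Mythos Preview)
Your argument is correct. The paper does not prove this theorem itself (it is cited from Balakrishnan and Sampathkumar), but the Remark immediately following the statement sketches essentially the same idea: begin with an injective vertex labeling whose induced edge labels are already distinct, then repeatedly attach leaves to realize the missing edge labels until the edge-label set becomes a full interval $[1,q']$. Your construction sharpens that sketch in two places. First, you make the initial labeling explicit by using the Sidon set $\{2^{i-1}\}_{i=1}^{n}$, so distinctness of edge labels is immediate from $2$-adic valuations; the Remark simply asserts that such a labeling ``is not hard to make''. Second, you route every gap-filling edge through a single hub labeled $0$ and reuse an existing vertex whenever the needed label $\ell$ is already present, which makes the final edge count $|E(H)|=M$, the injectivity of vertex labels in $[0,M]$, and the exhaustion of $[1,M]$ all transparent in one pass; the Remark instead adds a fresh leaf at every step and leaves its termination claim (``Clearly, there is an integer $B>0$ \ldots'') unverified. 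The two approaches are the same in spirit, with yours being the cleaner execution.
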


Acharya, in his paper ``Construction of certain infinite families of graceful graphs from a given graceful graph'' \cite{B-D-Acharya-1982}, investigated the problem of embedding a graph into some graceful graphs.

\begin{rem}\label{rem:333333}
Suppose that $G$ is a connected $(p,q)$-graph. It is not hard to make a labeling $f:V(G)\rightarrow [0,M]$ holding $|f(V(G))|=p$ and $|f(E(G))|=q$. If $f(E(G))=[1,q]$, we are done, otherwise, we make a connected $(p+1,q+1)$-graph $G_1=G+x_1y_1$ for $x_1\not \in V(G)$ and $y_1\in V(G)$, and defined a labeling $f_1$ by coloring the edge $x_1y_1$ with $f_1(x_1y_1)=|f_1(y_1)-f_1(x_1)|\not \in f(E(G))$, and $f_1(w)=f(w)$ for $w\in V(G)$ holding $f_1(uv)=f(uv)$ for $uv\in E(G)$.

Go on in this way, we produce connected $(p+m,q+m)$-graphs $G_m=G_{m-1}+x_my_m$ for $x_m\not \in V(G_{m-1})$ and $y_m\in V(G_{m-1})$, and define a labeling $f_m$ by coloring the edge $x_my_m$ with $f_m(x_my_m)=|f_m(y_m)-f_m(x_m)|\not \in f_{m-1}(E(G_{m-1}))$, and $f_m(w)=f_{m-1}(w)$ for $w\in V(G_{m-1})$ holding $f_m(uv)=f_{m-1}(uv)$ for $uv\in E(G_{m-1})$.

Clearly, there is an integer $B>0$, such that the connected $(p+B,q+B)$-graph $G_B=G_{B-1}+x_By_B$ admits a labeling $f_B$ holding the vertex color set $f_B(V(G_B))\subseteq [0,p+B]$ and the edge color set $f_B(E(G_B))=[1,p+B]$, which means $G_B$ is a graceful graph.

It is meaningful to minimize $|E(G_B)\setminus E(G)|$. \paralled
\end{rem}

\begin{defn}\label{defn:group-definition-twin-total-labelingss}
\cite{Yao-Zhang-Yang-Wang-Odd-Edge-arXiv-02477} Let $G$ be a bipartite $(p,q)$-graph having its own vertex set $V(G)=X_G\cup Y_G$ with $X_G\cap Y_G=\emptyset$, and let $T$ be another bipartite $(p\,',q)$-graph having its own vertex set $V(T)=X_T\cup Y_T$ with $X_T\cap Y_T=\emptyset$. The bipartite $(p,q)$-graph $G$ admits a total labeling $F:V(G)\cup E(G)\rightarrow [0,2q-1]$, and the bipartite $(p\,',q)$-graph $T$ admits a total labeling $F^*:V(T)\cup E(T)\rightarrow [0,2q]$.

(i) \textbf{If}

(i-1) $F$ is a set-ordered odd-edge edge-magic total labeling of $G$;

(i-2) the set-ordered restriction $F^*_{\max}(X_T)<F^*_{\min}(Y_T)$ holds true;

(i-3) the edge color set $F^*(E(T))=[1,2q-1]^o$;

(i-4) there is a positive integer $c_1$, so that each edge $xy\in E(T)$ holds a magic-type restriction $F^*(x)+F^*(xy)+F^*(y)=c_1$; and

(i-5) $F(V(G))\cup F^*(V(T))\subseteq [0,2q]$,\\
then we call $\langle F,F^*\rangle $ a \emph{twin set-ordered odd-edge edge-magic total labeling} of two graphs $G$ and $T$. Especially, we call $\langle F,F^*\rangle $ a \emph{perfect twin set-ordered odd-edge edge-magic total labeling} of $G$ and $T$ if $F(V(G))\cup F^*(V(T))=[0,2q]$.

(ii) \textbf{If}

(ii-1) $F$ is a set-ordered odd-edge edge-difference total labeling of $G$;

(ii-2) the set-ordered restriction $F^*_{\max}(X_T)<F^*_{\min}(Y_T)$ holds true;

(ii-3) the edge color set $F^*(E(T))=[1,2q-1]^o$;

(ii-4) there is a positive integer $c_2$, so that each edge $xy\in E(T)$ holds a magic-type restriction $F^*(xy)+|F^*(y)-F^*(x)|=c_2$; and

(ii-5) $F(V(G))\cup F^*(V(T))\subseteq [0,2q]$,\\
then, $\langle F,F^*\rangle $ is called a \emph{twin set-ordered odd-edge edge-difference total labeling} of two graphs $G$ and $T$. Moreover, we call $\langle F,F^*\rangle $ a \emph{perfect twin set-ordered odd-edge edge-difference total labeling} of $G$ and $T$ if $F(V(G))\cup F^*(V(T))=[0,2q]$.

(iii) \textbf{If}

(iii-1) $F$ is a set-ordered odd-edge felicitous-difference total labeling of $G$;

(iii-2) the set-ordered restriction $F^*_{\max}(X_T)<F^*_{\min}(Y_T)$ holds true;

(iii-3) the edge color set $F^*(E(T))=[1,2q-1]^o$;

(iii-4) there is a non-negative integer $c_3$, so that each edge $xy\in E(T)$ holds a magic-type restriction $|F^*(y)+F^*(x)-F^*(xy)|=c_3$; and

(iii-5) $F(V(G))\cup F^*(V(T))\subseteq [0,2q]$,\\
we call $\langle F,F^*\rangle $ a \emph{twin set-ordered odd-edge felicitous-difference total labeling} of two graphs $G$ and $T$. And we call $\langle F,F^*\rangle $ a \emph{perfect twin set-ordered odd-edge felicitous-difference total labeling} of $G$ and $T$ if $F(V(G))\cup F^*(V(T))=[0,2q]$.

(vi) \textbf{If}

(vi-1) $F$ is a set-ordered odd-edge graceful-difference total labeling;

(vi-2) the set-ordered restriction $F^*_{\max}(X_T)<F^*_{\min}(Y_T)$ holds true;

(vi-3) the edge color set $F^*(E(T))=[1,2q-1]^o$;

(vi-4) there is a non-negative integer $c_4$, so that each edge $xy\in E(T)$ holds a magic-type restriction $\big ||F^*(y)-F^*(x)|-F^*(xy)\big |=c_4$; and

(vi-5) $F(V(G))\cup F^*(V(T))\subseteq [0,2q]$,\\
then, $\langle F,F^*\rangle $ is called a \emph{twin set-ordered odd-edge graceful-difference total labeling} of two graphs $G$ and $T$. Furthermore, we call $\langle F,F^*\rangle $ a \emph{perfect twin set-ordered odd-edge graceful-difference total labeling} of $G$ and $T$ if $F(V(G))\cup F^*(V(T))=[0,2q]$.\qqed
\end{defn}

\begin{defn}\label{defn:twin-odd-edge-w-labelings-coloringsn}
$^*$ Suppose that a bipartite and connected $(p,q)$-graph $G$ admits an odd-edge $W$-constraint $(k,d)$-total labeling (resp. coloring) $h$, and another bipartite and connected $(p\,',q)$-graph $H$ admits an odd-edge $W$-constraint $(k,d)$-total labeling (resp. coloring) $f$, and these two labelings (resp. colorings) $h$ and $f$ are defined in Definition \ref{defn:odd-edge-W-type-total-labelings-definition}. If
\begin{equation}\label{eqa:555555}
h(E(G))=f(E(H)),~h(V(G))\cup f(V(H))\subseteq [0,k+2qd]
\end{equation} we call $\langle h,f\rangle$ a \emph{twin odd-edge $W$-constraint $(k,d)$-total labelings} (resp. \emph{twin odd-edge $W$-constraint $(k,d)$-total colorings}) for each $W$-constraint $\in\{$edge-magic, edge-difference, felicitous-difference, graceful-difference$\}$.\qqed
\end{defn}

\begin{example}\label{exa:8888888888}
\cite{Yao-Zhang-Yang-Wang-Odd-Edge-arXiv-02477} Fig.\ref{fig:4magice-total-colorings} is for illustrating Definition \ref{defn:odd-edge-W-type-total-labelings-definition}, there are the following facts:

(a) The graph $M_0$ admits a set-ordered odd-graceful coloring $f_0$, since there are two vertices colored with 25. And $f_0(E(M_0))=\{f_0(xy)=f_0(y)-f_0(x):xy\in E(M_0)\}=[1,31]^o$.

(b) The graph $M_1$ admits a set-ordered odd-edge edge-magic total coloring $f_1$, since there are two vertices colored with 25. The edge-magic constraint
$$f_1(x)+f_1(xy)+f_1(y)=42
$$ for each edge $xy\in E(M_1)$ holds true.

(c) The graph $M_2$ admits a set-ordered odd-edge edge-difference total coloring $f_2$, since there are two vertices colored with 6. The edge-difference constraint $$f_2(xy)+|f_2(x)-f_2(y)|=f_2(xy)+f_2(y)-f_2(x)=32
$$ for each edge $xy\in E(M_2)$ holds true.

(d) The graph $M_3$ admits a set-ordered odd-edge felicitous-difference total coloring $f_3$, since there are two vertices colored with 25. The felicitous-difference constraint
$$|f_3(x)+f_3(y)-f_3(xy)|=10
$$ for each edge $xy\in E(M_3)$ holds true.

(e) The graph $M_4$ admits a set-ordered odd-edge graceful-difference total coloring $f_4$, since there are two vertices colored with 17. The graceful-difference constraint
$$\big ||f_4(x)-f_4(y)|-f_4(xy)\big |=[f_4(y)-f_4(x)]-f_4(xy)=0
$$ for each edge $xy\in E(M_4)$ holds true.

Each pair of graphs $M_k$ and $L_k$ for $k\in [0,4]$ forms a \emph{twin odd-edge graphs}, they admit a \emph{twin odd-edge $W$-constraint $(k,d)$-total labelings} (resp. \emph{twin odd-edge $W$-constraint $(k,d)$-total colorings}) for each $W$-constraint $\in\{$edge-magic, edge-difference, felicitous-difference, graceful-difference$\}$ defined in Definition \ref{defn:twin-odd-edge-w-labelings-coloringsn}.\qqed
\end{example}

\begin{rem}\label{rem:333333}
In Definition \ref{defn:twin-odd-edge-w-labelings-coloringsn}, the bipartite and connected graph $G$ is as a \emph{public-key graph}, and the bipartite and connected graph $H$ is also a \emph{private-key graph} in real application. However, the public-key graph $G$ may correspond two or more private-key graphs $H_1,H_2,\dots, H_s$ with $s\geq 2$; refer to Example \ref{exa:4magice-total-colorings}.\paralled
\end{rem}

\begin{example}\label{exa:4magice-total-colorings}
Notices the graph $M_0$ shown in Fig.\ref{fig:4magice-total-colorings} admits a set-ordered odd-graceful coloring $f_0$, and each graph $J_i$ shown in Fig.\ref{fig:4magic-twin-colorings} is a \emph{twin odd-graceful graph} of $M_0$, and each odd-graceful coloring $g_i$ admitted by the graph $J_i$ for $i\in [1,6]$ is with the odd-graceful coloring $f_0$ to form a \emph{twin odd-graceful total colorings}. We have

(i) $g_i(E(J_i))=f_0(E(M_0))$ for $i\in [1,6]$;

(ii) $|f_0(V(M_0))\cap g_i(V(J_i))|=1$ for $i\in [1,3]$, and $f_0(V(M_0))\cap g_i(V(J_i))=\emptyset$ for $i\in [4,6]$;

(iii) $f_0(V(M_0))\cup g_6(V(J_6))=[0,32]$, and $f_0(V(M_0))\cup g_i(V(J_i))\subseteq [0,32]$ for $i\in [1,5]$.\qqed
\end{example}

\begin{figure}[h]
\centering
\includegraphics[width=16.4cm]{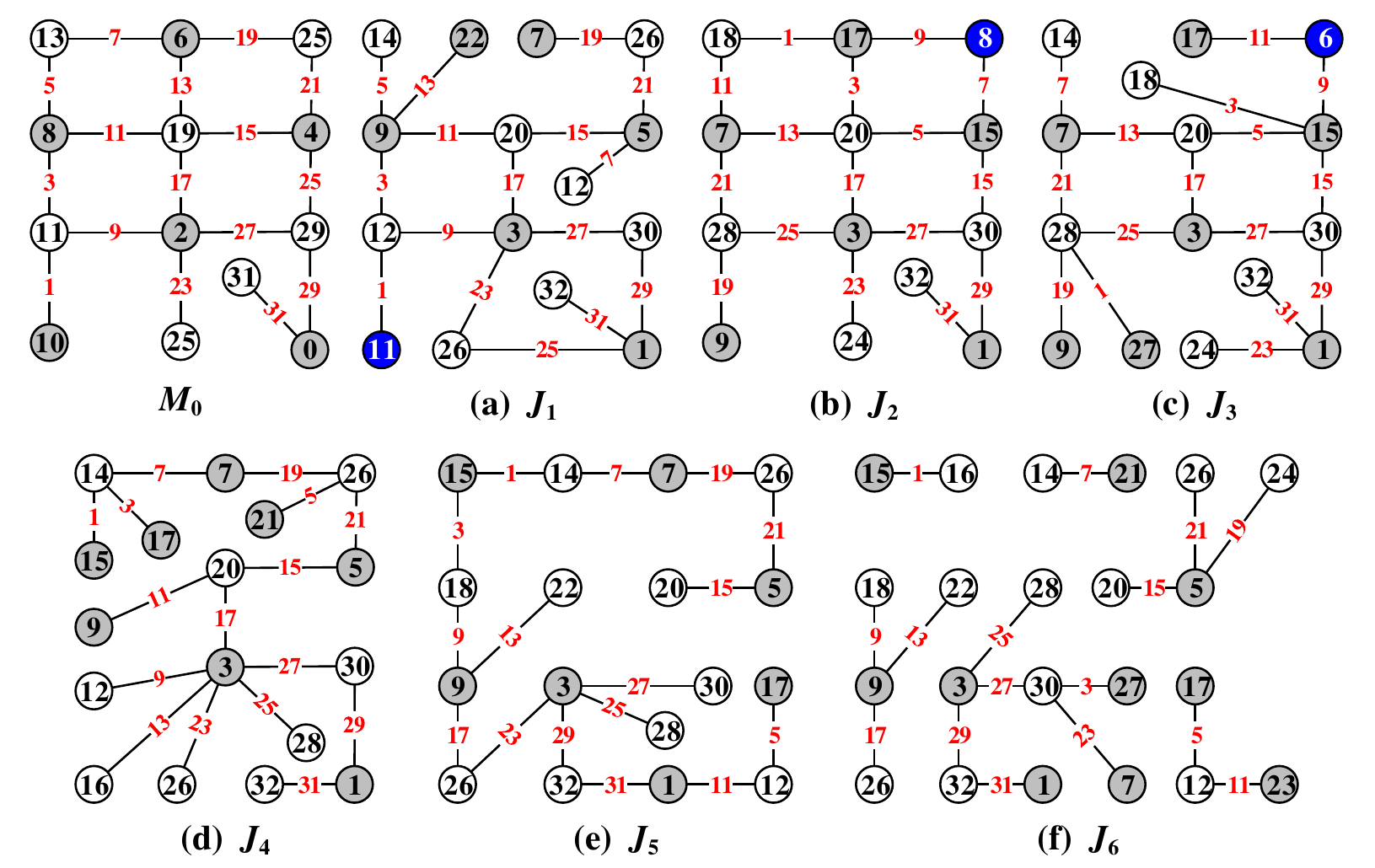}\\
\caption{\label{fig:4magic-twin-colorings}{\small For illustrating Definition \ref{defn:twin-odd-edge-w-labelings-coloringsn}, where each graph $J_i$ admits an odd-graceful coloring $g_i$ for $i\in [1,6]$.}}
\end{figure}

\begin{thm}\label{thm:odd-edge-W-type-kd-total-labelings}
$^*$ If a bipartite and connected $(p,q)$-graph $G$ admits an odd-edge $W$-constraint $(k,d)$-total labelings (colorings) defined in Definition \ref{defn:odd-edge-W-type-total-labelings-definition}, then there is at least a bipartite and connected $(p\,',q)$-graph $H$, such that two graphs $G$ and $H$ admit a \emph{twin odd-edge $W$-constraint $(k,d)$-total labelings} (resp. \emph{twin odd-edge $W$-constraint $(k,d)$-total colorings}) for each $W$-constraint $\in\{$edge-magic, edge-difference, felicitous-difference, graceful-difference$\}$ defined in Definition \ref{defn:twin-odd-edge-w-labelings-coloringsn}.
\end{thm}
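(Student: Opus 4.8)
The plan is to exhibit the twin explicitly by taking $H=G$, so that the edge set, the edge number $q$ and the bipartition $V(H)=X\cup Y$ are all inherited, and by defining $f$ as a \emph{complementary coloring} of $h$. First I would record the set-ordered feature forced by the target sets in Definition~\ref{defn:odd-edge-W-type-total-labelings-definition}: since $h(X)\subseteq S_{m,0,0,d}\subseteq[0,md]$ while $h(Y)\subseteq S_{n,k,0,d}\subseteq[k,k+nd]$ with $md<k$, every edge $uv\in E(G)$ joins its lower-coloured $X$-endpoint to its higher-coloured $Y$-endpoint, so $|h(u)-h(v)|=h(v)-h(u)$ on each edge. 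This removes the \emph{inner} absolute values from the four $W$-constraints and lets me treat edge-difference and graceful-difference as signed linear identities.

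Next I would define the complement by reflecting each colour class inside its own parameterized set:
\[
f(x)=md-h(x)\ (x\in X),\quad f(y)=2k+2qd-h(y)\ (y\in Y),\quad f(uv)=2k+2qd-h(uv)\ (uv\in E(G)).
\]
Because $O_{q,k,0,d}=\{k+d,k+3d,\dots,k+(2q-1)d\}$ is symmetric about its midpoint $k+qd$, the involution $t\mapsto 2k+2qd-t$ permutes it, giving $f(E(H))=O_{q,k,0,d}=h(E(G))$, which is the first clause of Definition~\ref{defn:twin-odd-edge-w-labelings-coloringsn}. Writing $h(y)=k+jd$ yields $f(y)=k+(2q-j)d$, so $f$ still maps $Y$ into a set of type $S_{\cdot,k,0,d}$ and $X$ into a set of type $S_{\cdot,0,0,d}$; as $f$ is injective on $X$ and on $Y$ with the disjoint images $f(X)\subseteq[0,md]$ and $f(Y)\subseteq[k,k+2qd]$, we get $|f(V(H))|=|h(V(G))|$, so $f$ inherits the labeling-versus-coloring status of $h$ and the ``resp.'' clause is automatic. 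The range requirement is then immediate: $f(X)\subseteq[0,md]$, $\max f(Y)=2k+2qd-\min h(Y)=k+2qd$ and $\min f(Y)=2k+2qd-\max h(Y)\ge k\ge 0$, while a short calculation from the prescribed edge-colour set shows $h(V(G))\subseteq[0,k+2qd]$; hence $h(V(G))\cup f(V(H))\subseteq[0,k+2qd]$.

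Then I would check that $f$ meets the same $W$-constraint, each time with a new constant. The two constraints without an outer absolute value come out directly: for edge-magic, $f(u)+f(uv)+f(v)=md+4k+4qd-c$, and for edge-difference, $f(uv)+(f(v)-f(u))=4k+4qd-md-c$, both constant since $h(u)+h(uv)+h(v)$ and $h(uv)+(h(v)-h(u))$ both equal $c$ (here $f(v)>f(u)$ because $f(v)\ge k>md\ge f(u)$, so $f$ is itself set-ordered). For the felicitous-difference and graceful-difference constraints the same substitution gives $f(u)+f(v)-f(uv)=md-\delta_{uv}$ and $|f(v)-f(u)|-f(uv)=-md-\eta_{uv}$, where $\delta_{uv}=h(u)+h(v)-h(uv)$ and $\eta_{uv}=(h(v)-h(u))-h(uv)$ are the signed defects of $h$.

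The hard part will be precisely these last two cases. Since Definition~\ref{defn:odd-edge-W-type-total-labelings-definition} controls only the \emph{magnitudes} $|\delta_{uv}|=c$ and $|\eta_{uv}|=c$, the complemented quantities $|md-\delta_{uv}|$ and $|md+\eta_{uv}|$ range over the two possible values $|md-c|$ and $|md+c|$ as the sign of the defect varies, and these coincide only when $md=0$ or $c=0$. The genuine content of the proof is therefore to show that, for a \emph{set-ordered} odd-edge coloring whose edge-colour set is exactly $O_{q,k,0,d}$, the signed defect $\delta_{uv}$ (resp.\ $\eta_{uv}$) keeps one fixed sign across all edges. I would establish this by combining the block-ordering $h(X)<h(Y)$ with the fact that $h$ restricted to $E(G)$ is a bijection onto $O_{q,k,0,d}$, which pins down which of $+c,-c$ can occur, exactly as in the non-parameterized twin construction underlying Definition~\ref{defn:group-definition-twin-total-labelingss}. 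Once the sign is fixed, $|md-\delta_{uv}|$ and $|md+\eta_{uv}|$ become true constants, and $f$ is the required twin coloring on $H=G$, completing all four cases.
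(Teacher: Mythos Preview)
The paper states this theorem without proof, so there is no paper argument to compare against; I evaluate your construction on its own merits.

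Your reflection $f(x)=md-h(x)$, $f(y)=2k+2qd-h(y)$, $f(uv)=2k+2qd-h(uv)$ genuinely handles the edge-magic and edge-difference cases, and your diagnosis of the remaining two cases is correct: what you need is that the \emph{signed} defects $\delta_{uv}=h(u)+h(v)-h(uv)$ and $\eta_{uv}=(h(v)-h(u))-h(uv)$ keep a constant sign across $E(G)$. But the hand-wave ``block-ordering $h(X)<h(Y)$ plus the bijection onto $O_{q,k,0,d}$ pins down the sign'' is not a proof, and in fact it is false. On the path $x_1\,y_1\,x_2\,y_2$ with $q=3$, take $h(x_1)=0$, $h(x_2)=2d$, $h(y_1)=k+d$, $h(y_2)=k+5d$, and edge colours $h(x_1y_1)=k+3d$, $h(y_1x_2)=k+d$, $h(x_2y_2)=k+5d$. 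This is a valid odd-edge felicitous-difference $(k,d)$-total labeling with $c=2d$, yet $\delta_{x_1y_1}=-2d$ while $\delta_{y_1x_2}=\delta_{x_2y_2}=+2d$. Feeding this into your $f$ (with $m=2$) gives $|f(u)+f(v)-f(uv)|$ equal to $4d$ on the first edge and $0$ on the others, so $f$ is \emph{not} a felicitous-difference coloring and the twin fails.

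There is a clean repair that avoids the sign problem altogether: choose the reflection centres so that the relevant linear form is \emph{negated} rather than shifted. Concretely, if you set $f(u)=A-h(u)$, $f(v)=B-h(v)$, $f(uv)=C-h(uv)$ with $A+B=C$, then $f(u)+f(v)-f(uv)=-(h(u)+h(v)-h(uv))$, so $|f(u)+f(v)-f(uv)|=c$ automatically, regardless of the sign of $\delta_{uv}$; the analogous choice $B-A=C$ handles graceful-difference. Your original choice has $A=md$, $B=C=2k+2qd$, hence $A+B\neq C$, which is exactly why it breaks. The price of the corrected centres is that the image $f(Y)$ may drift outside $S_{\,\cdot,k,0,d}$ or below $0$, so the range clause $h(V(G))\cup f(V(H))\subseteq[0,k+2qd]$ and the target-set requirement of Definition~\ref{defn:odd-edge-W-type-total-labelings-definition} need a separate argument (possibly after swapping the roles of $X$ and $Y$, or after allowing $H\ne G$ as the theorem permits). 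A secondary issue: you invoke $md<k$ to strip the inner absolute value, but Definition~\ref{defn:odd-edge-W-type-total-labelings-definition} does not assert this, so the set-orderedness you rely on is an extra hypothesis that should be stated or derived.
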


\begin{problem}\label{qeu:444444}
If a bipartite and connected $(p,q)$-graph $G$ admits an odd-edge $W$-constraint $(k,d)$-total labelings (colorings) defined in Definition \ref{defn:odd-edge-W-type-total-labelings-definition}, so there is a set $T_{win}(G)$ of graphs such that each graph $H\in T_{win}(G)$ and $G$ together form a twin odd-edge graphs admitting a twin odd-graceful total colorings. \textbf{Characterize} the graph set $T_{win}(G)$ due to the needs of asymmetric topology cryptography.
\end{problem}

\begin{thm} \label{thm:connections-several-labelings}
\cite{Yao-Liu-Yao-2017} Let $T$ be a tree on $p$ vertices, and let $(X,Y)$ be its
bipartition of vertex set $V(T)$. For all values of integers $k\geq 1$ and $d\geq 1$, the following assertions are mutually equivalent:

$(1)$ $T$ admits a set-ordered graceful labeling $f$ with $\max f(X)<\min f(Y)$.

$(2)$ $T$ admits a super felicitous labeling $\alpha$ with $\max \alpha(X)<\min \alpha(Y)$.

$(3)$ $T$ admits a $(k,d)$-graceful labeling $\beta$ with
$\beta(x)<\beta(y)-k+d$ for all $x\in X$ and $y\in Y$.

$(4)$ $T$ admits a super edge-magic total labeling $\gamma$ with $\max \gamma(X)<\min \gamma(Y)$ and a magic constant $|X|+2p+1$.

$(5)$ $T$ admits a super $(|X|+p+3,2)$-edge antimagic total labeling $\theta$ with $\max \theta(X)<\min \theta(Y)$.

$(6)$ $T$ admits an odd-elegant labeling $\eta$ with $\eta(x)+\eta(y)\leq 2p-3$ for every edge $xy\in E(T)$.

$(7)$ $T$ admits a $(k,d)$-arithmetic labeling $\psi$ with $\max \psi(x)<\min \psi(y)-k+d\cdot |X|$ for all $x\in X$ and $y\in Y$.

$(8)$ $T$ admits a harmonious labeling $\varphi$ with $\max \varphi(X)<\min \varphi(Y\setminus \{y_0\})$ and $\varphi(y_0)=0$.
\end{thm}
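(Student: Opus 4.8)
The plan is to make assertion $(1)$ the hub and prove $(1)\Leftrightarrow (j)$ for each $j\in\{2,\dots,8\}$ by exhibiting an explicit, invertible transformation of colorings; since logical equivalence is transitive, this yields the mutual equivalence of all eight statements at once. The starting point is to pin down the rigid structure forced by a set-ordered graceful labeling. Writing $X=\{x_1,\dots,x_s\}$ and $Y=\{y_1,\dots,y_t\}$ with $s+t=p$ and $q=p-1$, and ordering so that $f(x_1)<\cdots<f(x_s)$ and $f(y_1)<\cdots<f(y_t)$, the constraints C-1, C-2, C-4 together with the set-ordered restriction $\max f(X)<\min f(Y)$ force $f(X)=[0,s-1]$ and $f(Y)=[s,q]$, with the induced edge colors $f(x_iy_j)=f(y_j)-f(x_i)$ sweeping out exactly $[1,q]$. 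This decomposition of the color set into a \emph{low} interval carried by $X$ and a \emph{high} interval carried by $Y$ is what makes every other labeling reachable by an affine move on each block.

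First I would treat the \emph{dilation} targets $(3)$ and $(7)$, which need only a uniform scaling. For $(3)$ I set $\beta(x)=d\,f(x)$ and $\beta(y)=k+d\,[f(y)-1]$, so that each edge receives $\beta(x_iy_j)=k+d\,[f(x_iy_j)-1]$, giving the progression $\{k,k+d,\dots,k+(q-1)d\}$; the set-ordered inequality $f(x)<f(y)$ then yields $\beta(x)<\beta(y)-k+d$ directly. The $(k,d)$-arithmetic labeling of $(7)$ comes from the same template with the two blocks scaled and shifted to meet $\max\psi(x)<\min\psi(y)-k+d\cdot|X|$. Each such map is clearly invertible (divide by $d$ and undo the shift), so the converse implications are free once one checks that an arbitrary labeling of the stated form already carries the block structure above.

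Next I would handle the \emph{sum/magic} targets $(2)$, $(4)$ and $(5)$ by reflecting the $Y$-block within $[s,q]$. Replacing $f(y_j)$ by its mirror image turns the difference $f(y)-f(x)$ into a constant-sum relation, which is exactly what converts a graceful labeling into a super edge-magic total labeling $(4)$ with the prescribed magic constant $|X|+2p+1$, into the super $(|X|+p+3,2)$-edge-antimagic total labeling $(5)$, and into the super felicitous labeling $(2)$; in each case the reflection preserves the set-ordered restriction. The work here is bookkeeping: one lays the edge colors on top of the reflected vertex colors and verifies that the magic constant, or the antimagic progression, emerges as claimed.

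The hard part will be the two \emph{modular} targets, the odd-elegant labeling $(6)$ and the harmonious labeling $(8)$, where the induced edge color is reduced modulo $2q$ and modulo $q$ respectively and where the tree's $p=q+1$ vertices must be squeezed onto fewer residues. For $(8)$ I would keep $f$ on $X$ and apply a shift-and-fold to the $Y$-block that sends the top color $q$ to the residue $0$, so that this vertex becomes the distinguished $y_0$ with $\varphi(y_0)=0$; the main task is to prove that the induced sums realize every residue in $[0,q-1]$ exactly once and that $\max\varphi(X)<\min\varphi(Y\setminus\{y_0\})$ survives the reduction. The analogous wrap-around argument for the odd-elegant labeling $(6)$, under the bound $\eta(x)+\eta(y)\le 2p-3$, is where I expect the genuine difficulty to lie, since one must rule out collisions created by the modular fold rather than merely transport an interval. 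Once these two modular equivalences are secured, transitivity through the hub $(1)$ closes the chain.
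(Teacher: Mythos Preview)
The paper does not include a proof of this theorem; it is quoted as a known result from \cite{Yao-Liu-Yao-2017} and no argument is given in the present text. Your hub-and-spoke strategy through assertion $(1)$, together with the key structural observation that a set-ordered graceful labeling on a tree forces $f(X)=[0,s-1]$ and $f(Y)=[s,q]$, is the natural approach and is almost certainly what the cited source does as well (its very title, ``Connections between labelings of trees'', advertises exactly this kind of transformation catalogue). Your grouping of the remaining seven targets into affine/dilation types, reflection/magic types, and modular types is sensible, and your identification of $(6)$ and $(8)$ as the cases requiring genuine care about collisions under the modular fold is accurate.
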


\begin{thm}\label{thm:graph-admits-6-set-colorings}
\cite{Bing-Yao-arXiv:2207-03381} Each connected graph $G$ admits each one of the following $W$-constraint $(k,d)$-total set-colorings for $W$-constraint $\in \{$graceful, harmonious, edge-difference, edge-magic, felicitous-difference, graceful-difference$\}$.
\end{thm}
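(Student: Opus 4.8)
The plan is to exploit the defining feature of a set-coloring: since every vertex and every edge receives a \emph{set} of colors rather than a single color, the $W$-constraint imposed on the different edges incident to a common vertex can be decoupled by letting that vertex carry a distinct representative color for each incident edge. This is precisely the mechanism that allows \emph{all} connected graphs (not merely the graceful ones) to satisfy these constraints, and it is why the set-valued version sidesteps the open labeling conjectures that would otherwise block such a universal statement. I would prove the six cases by a single back-solving template, presenting the graceful $(k,d)$-total set-coloring in full and indicating that the remaining five follow by the same scheme with the appropriate per-edge equation substituted.

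Concretely, first I would fix an ordering $e_1,e_2,\dots,e_q$ of the edges of the connected $(p,q)$-graph $G$. For the graceful constraint I would assign to $e_j$ the target value $k+(j-1)d$, so that the edge colors exhaust $S_{q-1,k,0,d}$ as required; then, reading the edges in order, for each $e_j=uv$ I would choose a representative pair $(a_{uv},b_{uv})$ with $|a_{uv}-b_{uv}|=k+(j-1)d$, place $a_{uv}$ into the color set of $u$ and $b_{uv}$ into the color set of $v$, and let the final color set of a vertex be the union over its incident edges. Because each vertex is permitted a whole set of values, these choices never collide, the edge color set is exactly $S_{q-1,k,0,d}$, and the vertex color sets lie inside the prescribed parameterized sets $S_{m,0,0,d}\cup S_{q-1,k,0,d}$. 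This direct construction already settles every tree and, in the cases covered by it, recovers the set-coloring guaranteed by Lemma~\ref{lem:set-ordered-w-cond-set-co}. For a non-tree $G$ I would organize the same procedure along the edge-disjoint tree decomposition $E(G)=\bigcup_{i=1}^m E(T_i)$ supplied by Theorem~\ref{thm:graph-vertex-split-m-tree-set}, and then reassemble the pieces by the vertex-coinciding operation $[\odot]$, under which a shared vertex simply inherits the union of the color sets of the trees meeting at it.

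The main obstacle is the four magic-type constraints (edge-magic, edge-difference, felicitous-difference, graceful-difference), where a single global constant $c$ must hold simultaneously on every edge; unlike the graceful and harmonious cases, the per-edge choices are no longer independent, since they are all linked through $c$. Here I would first fix $c$ as the value forced by $k,d,q$ and the target edge color set $S_{q-1,k,0,d}$, then back-solve each edge equation (for instance $f(u)+f(uv)+f(v)=c$ in the edge-magic case) for the missing vertex representative once the edge color is prescribed; the set-valued vertices again absorb the different representatives demanded by different incident edges without conflict. The one delicate point, which I expect to be the hard part, is verifying that the outcome is a genuine \emph{proper} $(k,d)$-total set-coloring -- that adjacent elements receive distinct (or suitably disjoint) color sets and that, after the $[\odot]$-merging, the global edge color set is still exactly the target parameterized set. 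I anticipate this reduces, through the bipartite set-ordered structure of the tree pieces, to the arithmetic already verified for trees, so that Theorem~\ref{thm:graph-vertex-split-m-tree-set} together with Lemma~\ref{lem:set-ordered-w-cond-set-co} closes the argument uniformly across all six constraints.
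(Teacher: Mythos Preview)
Your plan is sound in spirit and would ultimately work, but it is more elaborate than the route the paper's machinery suggests (the theorem itself is only cited here, but the proof of the closely related Theorem~\ref{thm:connected-graph-vs-trees-colorings} shows the intended mechanism). Instead of decomposing $G$ into \emph{several} edge-disjoint trees via Theorem~\ref{thm:graph-vertex-split-m-tree-set} and then worrying about coordinating the parameters across pieces, one vertex-splits $G$ into a \emph{single} tree $T$ on $q+1$ vertices using Theorem~\ref{thm:connected-graph-vs-tree-set}. Since $T$ is a tree (hence bipartite), Theorem~\ref{thm:10-k-d-total-coloringss} gives $T$ an honest single-valued $W$-constraint $(k,d)$-total coloring $f$ for each of the six constraints. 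One then pushes $f$ forward along the graph homomorphism $T\rightarrow G$: each edge of $G$ keeps its (singleton) color from $T$, and each vertex $x$ of $G$ receives the set $\{f(x_1),\dots,f(x_{a_x})\}$ of colors of its preimages in $T$.

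This buys you two simplifications over your proposal. First, the four magic-type constraints are no harder than the graceful and harmonious ones: the global constant $c$ already holds on the single tree $T$, so there is nothing to re-verify after merging---your anticipated ``hard part'' disappears. Second, you avoid the tiling argument needed when working with several trees (choosing different $(k_{i_j},d_{i_j})$ so that the edge color sets partition $S_{q-1,k,0,d}$), which the paper does carry out elsewhere (Theorems~\ref{thm:graceful-v-set-e-proper-co} and~\ref{thm:k-d-graceful-v-set-e-proper-co}) but which is unnecessary here. Your direct per-edge back-solving is also valid in the sense of Definition~\ref{defn:55-v-set-e-proper-more-labelings}, but you would still have to argue that the representatives can be chosen inside the prescribed parameterized sets; routing through the tree coloring gives this for free.
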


\begin{thm}\label{thm:equivalent-k-d-total-colorings}
\cite{Bing-Yao-arXiv:2207-03381} A bipartite and connected $(p,q)$-graph $G$ admits a graceful $(k,d)$-total coloring if and only if $G$ admits each one of edge-magic $(k,d)$-total coloring, graceful-difference $(k,d)$-total coloring, edge-difference $(k,d)$-total coloring, felicitous-difference $(k,d)$-total coloring, harmonious $(k,d)$-total coloring and edge-antimagic $(k,d)$-total coloring.
\end{thm}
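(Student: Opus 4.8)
The plan is to treat the graceful $(k,d)$-total coloring as the hub and to pass from it to each of the other five constraints, and back, by elementary affine reflections of the three color blocks: the colors on $X$, the colors on $Y$, and the colors on $E(G)$. First I would reduce to the set-ordered situation supplied by the block structure $f(X)\subseteq S_{m,0,0,d}$, $f(Y)\subseteq S_{n,k,0,d}$ of Definition \ref{defn:kd-w-type-colorings}, so that for every edge $uv$ with $u\in X$, $v\in Y$ one has $f(uv)=f(v)-f(u)$ and the edge color set is exactly $f(E(G))=S_{q-1,k,0,d}$. This single identity is the engine that turns each reflection into a constant-valued constraint.

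\textbf{The reflection toolkit.} Write $R_X=\max f(X)+\min f(X)$, $R_Y=\max f(Y)+\min f(Y)$ and $R_E=2k+(q-1)d$. Since $f(X)$, $f(Y)$ and $f(E(G))$ lie in arithmetic blocks of common difference $d$, each of the reflections $x\mapsto R_X-x$, $y\mapsto R_Y-y$ and $e\mapsto R_E-e$ is an involution mapping its block bijectively onto itself, preserving membership in the prescribed sets and the inequality $\max f(X)<\min f(Y)$. I would then transform $f$ as follows. For edge-difference: keep the vertices and reflect the edges, $g(uv)=R_E-f(uv)$, so that $g(uv)+|g(u)-g(v)|=g(uv)+f(uv)=R_E$. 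For edge-magic: keep $X$ and the edges and reflect $Y$, $g(v)=R_Y-f(v)$, so that $g(u)+g(uv)+g(v)=f(u)+(f(v)-f(u))+(R_Y-f(v))=R_Y$. For felicitous-difference: keep $Y$ and the edges and reflect $X$, $g(u)=R_X-f(u)$, so that $g(u)+g(v)-g(uv)=(R_X-f(u))+f(v)-(f(v)-f(u))=R_X$. For graceful-difference: take $g=f$, giving $\big||g(u)-g(v)|-g(uv)\big|=0$. For harmonious: reflect $X$ and recolor edges by $g(uv)=k+\big[(g(u)+g(v)-k)\bmod qd\big]$; since $g(u)+g(v)=R_X+f(uv)$ with $d\mid R_X$, the $q$ values $g(u)+g(v)-k$ are a cyclic shift of $\{0,d,\dots,(q-1)d\}$ modulo $qd$, whence again $g(E(G))=S_{q-1,k,0,d}$.

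\textbf{Edge-antimagic.} This is the one case where the graceful coloring itself does not suffice, because its mixed sums $f(u)+f(uv)+f(v)=2f(v)$ take only $|f(Y)|$ values. Instead I would start from the edge-magic coloring just built: its defining feature is that the vertex sums $\{g(u)+g(v):uv\in E(G)\}$ equal $R_Y-g(uv)$ and hence form an arithmetic block of length $q$ and common difference $d$. Ordering the edges by this vertex sum, I would assign new edge colors forming an arithmetic progression of common difference $d$ so that the mixed sums $g(u)+g(uv)+g(v)$ become the target progression $\{2k+2(a+j)d:0\le j\le q-1\}$ of common difference $2d$; choosing the offset $a$ large enough keeps every edge color inside $S_{2(a+q-1),k,a,d}$.

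\textbf{Reverse directions and the main obstacle.} Each reflection above is an involution, so applying the same reflection to an arbitrary coloring of a target type returns one meeting the graceful constraint and edge set; this yields the converse implications and hence the full biconditional. I expect the genuine work to sit in two places. The routine but fiddly part is checking, uniformly across the six constructions, that the reflected colors stay inside the blocks $S_{m,0,0,d}$ and $S_{n,k,0,d}$, that set-orderedness and, if demanded, properness survive, and that the edge color set lands exactly on $S_{q-1,k,0,d}$. The real obstacle is the edge-antimagic case: one must show the consecutive-vertex-sum property is forced by the edge-magic constraint and then argue that the bijective reassignment both realizes the common-difference-$2d$ progression and remains admissible. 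This is precisely the $(k,d)$-analogue of the classical equivalence between super edge-magic and super $(a,2)$-edge-antimagic total labelings recorded for trees in Theorem \ref{thm:connections-several-labelings}.
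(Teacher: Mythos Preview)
The paper does not contain a proof of this theorem: it is stated twice, both times with the citation \cite{Bing-Yao-arXiv:2207-03381}, and is imported as a known result rather than proved here. So there is no in-paper argument to compare your proposal against directly.

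That said, your reflection-based scheme is exactly the kind of argument the surrounding material in this paper supports. The subsection ``Connections between magic-constraint colorings'' carries out, case by case, precisely the algebraic identities you are using (e.g.\ how an edge-magic constraint transforms into an edge-difference, felicitous-difference, or graceful-difference constraint under $f\mapsto c-2f(\cdot)$ on one block), and Theorem~\ref{thm:connections-several-labelings} records the classical set-ordered equivalences for trees that your final paragraph invokes. Your forward constructions for edge-difference, edge-magic, felicitous-difference, graceful-difference and harmonious are correct under the set-ordered hypothesis $f(v)-f(u)=f(uv)$, and your observation that the block reflections $x\mapsto R_X-x$, $y\mapsto R_Y-y$, $e\mapsto R_E-e$ are involutions preserving the target sets $S_{m,0,0,d}$, $S_{n,k,0,d}$, $S_{q-1,k,0,d}$ is the right bookkeeping.

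The one place where your involution argument does not automatically give the converse is edge-antimagic (and, to a lesser extent, harmonious): your forward construction there is not a single block reflection but a reassignment of edge colors keyed to the ordering of vertex sums, so inverting it requires an extra argument that an arbitrary edge-antimagic $(k,d)$-total coloring has its vertex sums $\{g(u)+g(v)\}$ forming a length-$q$ arithmetic progression of common difference $d$. You flag this as the ``real obstacle,'' which is accurate; once that structural lemma is in hand the reverse reassignment goes through just as in the super edge-magic $\leftrightarrow$ super $(a,2)$-edge-antimagic equivalence you cite.
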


\subsubsection{Connections between magic-constraint colorings}

By Definition \ref{defn:kd-w-type-colorings}, we propose several connections between four magic-constraint colorings as follows:

\textbf{Case A.} Suppose that a $(p,q)$-graph $G$ admits an edge-magic total coloring $f$, such that the edge-magic constraint
\begin{equation}\label{eqa:edge-magic-sequence}
f(xy)+f(x)+f(y)=c,~xy\in E(G)
\end{equation} holds true.

\textbf{A-1.} Edge-difference constraint.

A-1.1. If $f(x)<f(y)$ for some two vertices $x$ and $y$ of $V(G)$, we have the edge-difference constraint
$$f(xy)+|f(x)-f(y)|=f(xy)+f(y)-f(x)=f(xy)+f(y)+f(x)-2f(x)=c-2f(x)$$
by Eq.(\ref{eqa:edge-magic-sequence}).

Moreover, if there is the set-ordered constraint $f(X)<f(Y)$ as $G$ is a bipartite graph with the vertex set $V(G)=X\cup Y$ and $X\cap Y=\emptyset$, then the edge-difference constraint color set is
\begin{equation}\label{eqa:555555}
\big \{f(xy)+|f(x)-f(y)|:xy\in E(G)\big \}=\big \{c-2f(x):x\in X\subset V(G)\big \}
\end{equation}

A-1.2. If $f(y)<f(x)$ for some two vertices $x$ and $y$ of $V(G)$, we get the edge-difference constraint
$$f(xy)+|f(x)-f(y)|=f(xy)+f(x)-f(y)=f(xy)+f(y)+f(x)-2f(y)=c-2f(y)$$
according to Eq.(\ref{eqa:edge-magic-sequence}).

\textbf{A-2.} The felicitous-difference constraint
$$|f(y)+f(x)-f(xy)|=|f(y)+f(x)+f(xy)-2f(xy)|=|c-2f(xy)|,~xy\in E(G)
$$ holds true by Eq.(\ref{eqa:edge-magic-sequence}).

\textbf{A-3.} Graceful-difference constraint.

A-3.1. If $f(x)<f(y)$ for some two vertices $x$ and $y$ of $V(G)$, we have the graceful-difference constraint
$$\big ||f(y)-f(x)|-f(xy)\big |=\big |f(y)-f(x)-f(xy)\big |=\big |[f(xy)+f(y)+f(x)]-2f(y)\big |=|c-2f(y)|$$
by Eq.(\ref{eqa:edge-magic-sequence}).

Moreover, if there is the set-ordered constraint $f(X)<f(Y)$ as $G$ is a bipartite graph with the vertex set $V(G)=X\cup Y$ and $X\cap Y=\emptyset$, then the graceful-difference constraint color set is
\begin{equation}\label{eqa:555555}
\left \{\big ||f(y)-f(x)|-f(xy)\big |:xy\in E(G)\right \}=\big \{|c-2f(y)|:y\in Y\subset V(G)\big \}
\end{equation}

A-3.2. If $f(y)<f(x)$ for some two vertices $x$ and $y$ of $V(G)$, then we have the graceful-difference constraint
$$\big ||f(y)-f(x)|-f(xy)\big |=\big |f(x)-f(y)-f(xy)\big |=\big |[f(xy)+f(y)+f(x)]-2f(x)\big |=|c-2f(x)|$$
according to Eq.(\ref{eqa:edge-magic-sequence}).

\vskip 0.2cm

\textbf{Case B.} Suppose that a $(p,q)$-graph $G$ admits an edge-difference total coloring $g$, such that the edge-difference constraint
\begin{equation}\label{eqa:edge-difference-sequence}
g(xy)+|g(x)-g(y)|=c,~xy\in E(G)
\end{equation} holds true.

\textbf{B-1.} The edge-magic constraint.

B-1.1. If $g(x)<g(y)$ for some two vertices $x$ and $y$ of $V(G)$, we have the edge-magic constraint
$$g(xy)+g(x)+g(y)=g(xy)+g(y)-g(x)+2g(x)=c+2g(x)$$
according to Eq.(\ref{eqa:edge-difference-sequence}).

Moreover, if there is the set-ordered constraint $g(X)<g(Y)$ as $G$ is a bipartite graph with the vertex set $V(G)=X\cup Y$ and $X\cap Y=\emptyset$, then the edge-magic constraint color set is
\begin{equation}\label{eqa:555555}
\left \{g(xy)+g(x)+g(y):xy\in E(G)\right \}=\big \{c+2g(x):x\in X\subset V(G)\big \}
\end{equation}

B-1.2. If $g(y)<g(x)$ for some two vertices $x$ and $y$ of $V(G)$ and $xy\in E(G)$, by Eq.(\ref{eqa:edge-difference-sequence}), we have the edge-magic constraint
$$g(xy)+g(x)+g(y)=g(xy)+g(x)-g(y)+2g(y)=c+2g(y)$$

\textbf{B-2.} The felicitous-difference constraint.

B-2.1. If $g(x)<g(y)$ for some two vertices $x$ and $y$ of $V(G)$ and $xy\in E(G)$, we have the felicitous-difference constraint
$${
\begin{split}
|g(y)+g(x)-g(xy)|&=\big |[g(y)-g(x)+g(xy)]+2g(x)-2g(xy)\big |\\
&=|c+2g(x)-2g(xy)-2g(y)+2g(y)|\\
&=|c-2c+2g(y)|\\
&=|c-2g(y)|
\end{split}}
$$ according to Eq.(\ref{eqa:edge-difference-sequence}).

Moreover, if there is the set-ordered constraint $g(X)<g(Y)$ as $G$ is a bipartite graph with the vertex set $V(G)=X\cup Y$ and $X\cap Y=\emptyset$, then the felicitous-difference constraint color set is
\begin{equation}\label{eqa:555555}
\left \{|g(y)+g(x)-g(xy)|:xy\in E(G)\right \}=\big \{|c-2g(y)|:y\in Y\subset V(G)\big \}
\end{equation}

B-2.2. If $g(y)<g(x)$ for some two vertices $x$ and $y$ of $V(G)$ and $xy\in E(G)$, thus, we have the felicitous-difference constraint
$${
\begin{split}
|g(y)+g(x)-g(xy)|&=\big |[g(x)-g(y)+g(xy)]+2g(y)-2g(xy)\big |\\
&=|c+2g(y)-2g(xy)-2g(x)+2g(x)|\\
&=|c-2c+2g(x)|\\
&=|c-2g(x)|
\end{split}}
$$ holding true.

\textbf{B-3.} By Eq.(\ref{eqa:edge-difference-sequence}), the graceful-difference constraint
$$\big ||g(y)-g(x)|-g(xy)\big |=\big ||g(y)-g(x)|+g(xy)-2g(xy)\big |=|c-2g(xy)|,~xy\in E(G)
$$ holds true.

\vskip 0.2cm

\textbf{Case C.} Suppose that a $(p,q)$-graph $G$ admits a felicitous-difference total coloring $h$, such that the felicitous-difference constraint
\begin{equation}\label{eqa:edge-felicitous-sequence}
|h(y)+h(x)-h(xy)|=c,~xy\in E(G)
\end{equation}

\textbf{C-1.} The edge-magic constraint
$$h(xy)+h(x)+h(y)=h(x)+h(y)-h(xy)+2h(xy)=c+2h(xy),\textrm{ or }-c+2h(xy)$$
for each edge $xy\in E(G)$ holds true according to Eq.(\ref{eqa:edge-felicitous-sequence}).

\textbf{C-2.} Edge-difference constraint.

C-2.1. If $h(x)<h(y)$ for some two vertices $x$ and $y$ of $V(G)$ and $xy\in E(G)$, by Eq.(\ref{eqa:edge-felicitous-sequence}), we have the edge-difference constraint
$$h(xy)+|h(x)-h(y)|=h(xy)+h(y)-h(x)=h(xy)-h(x)-h(y)+2h(y)=c+2h(y),\textrm{ or }-c+2h(y)$$

C-2.2. If $h(y)<h(x)$ for some two vertices $x$ and $y$ of $V(G)$, we, by Eq.(\ref{eqa:edge-felicitous-sequence}), have the edge-difference constraint
$$h(xy)+|h(x)-h(y)|=h(xy)+h(x)-h(y)=h(xy)-h(x)-h(y)+2h(x)=c+2h(x),\textrm{ or }-c+2h(x)
$$ holding true.

\textbf{C-3.} Graceful-difference constraint.

C-3.1. If $h(x)<h(y)$ for some two vertices $x$ and $y$ of $V(G)$ and $xy\in E(G)$, and $h(xy)<h(y)+h(x)$, by Eq.(\ref{eqa:edge-felicitous-sequence}), we have the edge-difference constraint
$$\big ||h(y)-h(x)|-h(xy)\big |= |h(y)-h(x)-h(xy)|=|h(y)+h(x)-h(xy)-2h(x)|=|c-2h(x)|$$

Moreover, if there is the set-ordered constraint $h(X)<h(Y)$ as $G$ is a bipartite graph with the vertex set $V(G)=X\cup Y$ and $X\cap Y=\emptyset$, then the graceful-difference constraint color set is
\begin{equation}\label{eqa:555555}
\left \{\big ||h(y)-h(x)|-h(xy)\big |:xy\in E(G)\right \}=\big \{|c-2h(x)|:x\in X\subset V(G)\big \}
\end{equation}

C-3.2. By Eq.(\ref{eqa:edge-felicitous-sequence}), we have the graceful-difference constraint
$$\big ||h(y)-h(x)|-h(xy)\big |= |h(y)-h(x)-h(xy)|=|h(y)+h(x)-h(xy)-2h(x)|=c+2h(x)
$$ if $h(x)<h(y)$ for some two vertices $x$ and $y$ of $V(G)$ and $xy\in E(G)$, and $h(xy)>h(y)+h(x)$.

C-3.3. If $h(x)<h(y)$ for some two vertices $x$ and $y$ of $V(G)$ and $xy\in E(G)$, and $h(xy)<h(y)+h(x)$, we have the graceful-difference constraint
$$\big ||h(y)-h(x)|-h(xy)\big |= |h(x)-h(y)-h(xy)|=|h(y)+h(x)-h(xy)-2h(y)|=|c-2h(y)|
$$ holding true.

C-3.4. By Eq.(\ref{eqa:edge-felicitous-sequence}), we have the graceful-difference constraint
$$\big ||h(y)-h(x)|-h(xy)\big |= |h(x)-h(y)-h(xy)|=|h(y)+h(x)-h(xy)-2h(y)|=c+2h(y)
$$ if $h(x)<h(y)$ for some two vertices $x$ and $y$ of $V(G)$, and $h(xy)>h(y)+h(x)$.

\vskip 0.2cm

\textbf{Case D.} Suppose that a $(p,q)$-graph $G$ admits a graceful-difference total coloring $r$, such that the graceful-difference constraint
\begin{equation}\label{eqa:edge-graceful-sequence}
\big ||r(y)-r(x)|-r(xy)\big |=c,~xy\in E(G)
\end{equation} holds true.

\textbf{D-1.} Edge-magic constraint.

D-1.1. If $r(x)<r(y)$ for some two vertices $x$ and $y$ of $V(G)$, and $r(xy)<r(y)-r(x)$, Eq.(\ref{eqa:edge-graceful-sequence}) enables us to obtain the edge-magic constraint
$$r(xy)+r(x)+r(y)= |r(y)-r(x)-r(xy)-2r(y)|=\big |[r(y)-r(x)-r(xy)]-2r(y)\big |=|c-2r(y)|
$$

D-1.2. According to Eq.(\ref{eqa:edge-graceful-sequence}), we get the edge-magic constraint
$$r(xy)+r(x)+r(y)= |r(y)-r(x)-r(xy)-2r(y)|=\big |[r(y)-r(x)-r(xy)]-2r(y)\big |=c+2r(y)
$$ if $r(x)<r(y)$ for some two vertices $x$ and $y$ of $V(G)$, and $r(xy)>r(y)-r(x)$.

D-1.3. If $r(y)<r(x)$ for some two vertices $x$ and $y$ of $V(G)$, and $r(xy)<r(x)-r(y)$, there is the edge-magic constraint by Eq.(\ref{eqa:edge-graceful-sequence})
$$r(xy)+r(x)+r(y)= |r(x)-r(y)-r(xy)-2r(x)|=\big |[r(x)-r(y)-r(xy)]-2r(x)\big |=|c-2r(x)|
$$

D-1.4. When $r(y)<r(x)$ for some two vertices $x$ and $y$ of $V(G)$, and $r(xy)>r(x)-r(y)$, we have the edge-magic constraint
$$r(xy)+r(x)+r(y)= |r(x)-r(y)-r(xy)-2r(x)|=\big |[r(x)-r(y)-r(xy)]-2r(x)\big |=c+2r(x)$$
according to Eq.(\ref{eqa:edge-graceful-sequence})

\textbf{D-2.} Edge-difference constraint.

D-2.1. When $r(y)<r(x)$ for some two vertices $x$ and $y$ of $V(G)$ and $r(xy)<r(x)-r(y)$, we, by Eq.(\ref{eqa:edge-graceful-sequence}), get the edge-difference constraint
$$r(xy)+|r(x)-r(y)|=r(xy)+r(x)-r(y)=[r(x)-r(y)-r(xy)]+2r(xy)=c+2r(xy)
$$ holding true.

Moreover, if there is the set-ordered constraint $r(X)<r(Y)$ as $G$ is a bipartite graph with the vertex set $V(G)=X\cup Y$ and $X\cap Y=\emptyset$, then the edge-difference constraint color set is
\begin{equation}\label{eqa:555555}
\left \{r(xy)+|r(x)-r(y)|:xy\in E(G)\right \}=\big \{c+2r(xy):xy\in E(G)\big \}
\end{equation}

D-2.2. According to Eq.(\ref{eqa:edge-graceful-sequence}) we get the edge-difference constraint
$$r(xy)+|r(x)-r(y)|=r(xy)+r(x)-r(y)=|[r(x)-r(y)-r(xy)]+2r(xy)|=|c-2r(xy)|$$ when as $r(y)<r(x)$ for some two vertices $x$ and $y$ of $V(G)$, and $r(xy)>r(x)-r(y)$.

D-2.3. If $r(x)<r(y)$ for some two vertices $x$ and $y$ of $V(G)$, and $r(xy)<r(x)-r(y)$, by Eq.(\ref{eqa:edge-graceful-sequence}), there is the edge-difference constraint
$$r(xy)+|r(x)-r(y)|=r(xy)+r(y)-r(x)=[r(y)-r(x)-r(xy)]+2r(xy)=c+2r(xy)
$$ holding true.

D-2.4. When $r(x)<r(y)$ for some two vertices $x$ and $y$ of $V(G)$, and $r(xy)>r(x)-r(y)$, we get the edge-difference constraint
$$r(xy)+|r(x)-r(y)|=r(xy)+r(y)-r(x)=|[r(y)-r(x)-r(xy)]+2r(xy)|=|c-2r(xy)|$$
according to Eq.(\ref{eqa:edge-graceful-sequence}).

\textbf{D-3.} Felicitous-difference constraint.

D-3.1. If $r(x)<r(y)$ for some two vertices $x$ and $y$ of $V(G)$ and $xy\in E(G)$, by Eq.(\ref{eqa:edge-graceful-sequence}), there is the felicitous-difference constraint
$$|r(y)+r(x)-r(xy)|=\big |[r(y)-r(x)-r(xy)]+2r(x)\big |=c+2r(x)
$$ holding true.

Moreover, if there is the set-ordered constraint $r(X)<r(Y)$ as $G$ is a bipartite graph with the vertex set $V(G)=X\cup Y$ and $X\cap Y=\emptyset$, then the felicitous-difference constraint color set is
\begin{equation}\label{eqa:555555}
\left \{|r(y)+r(x)-r(xy)|:xy\in E(G)\right \}=\big \{c+2r(x):x\in X\subset E(G)\big \}
\end{equation}

D-3.2. By means of Eq.(\ref{eqa:edge-graceful-sequence}), there is the felicitous-difference constraint
$$|r(y)+r(x)-r(xy)|=\big |[r(x)-r(y)-r(xy)]+2r(y)\big |=c+2r(y)
$$ if $r(y)<r(x)$ for some two vertices $x$ and $y$ of $V(G)$ and $xy\in E(G)$.

\begin{rem}\label{rem:333333}
The magic-constraint total colorings are useful in randomly growing graph sequences for making more complex number-based strings. See a randomly growing graph sequence shown in Fig.\ref{fig:spider-edge-magic-22}.\paralled
\end{rem}

\begin{figure}[h]
\centering
\includegraphics[width=16.4cm]{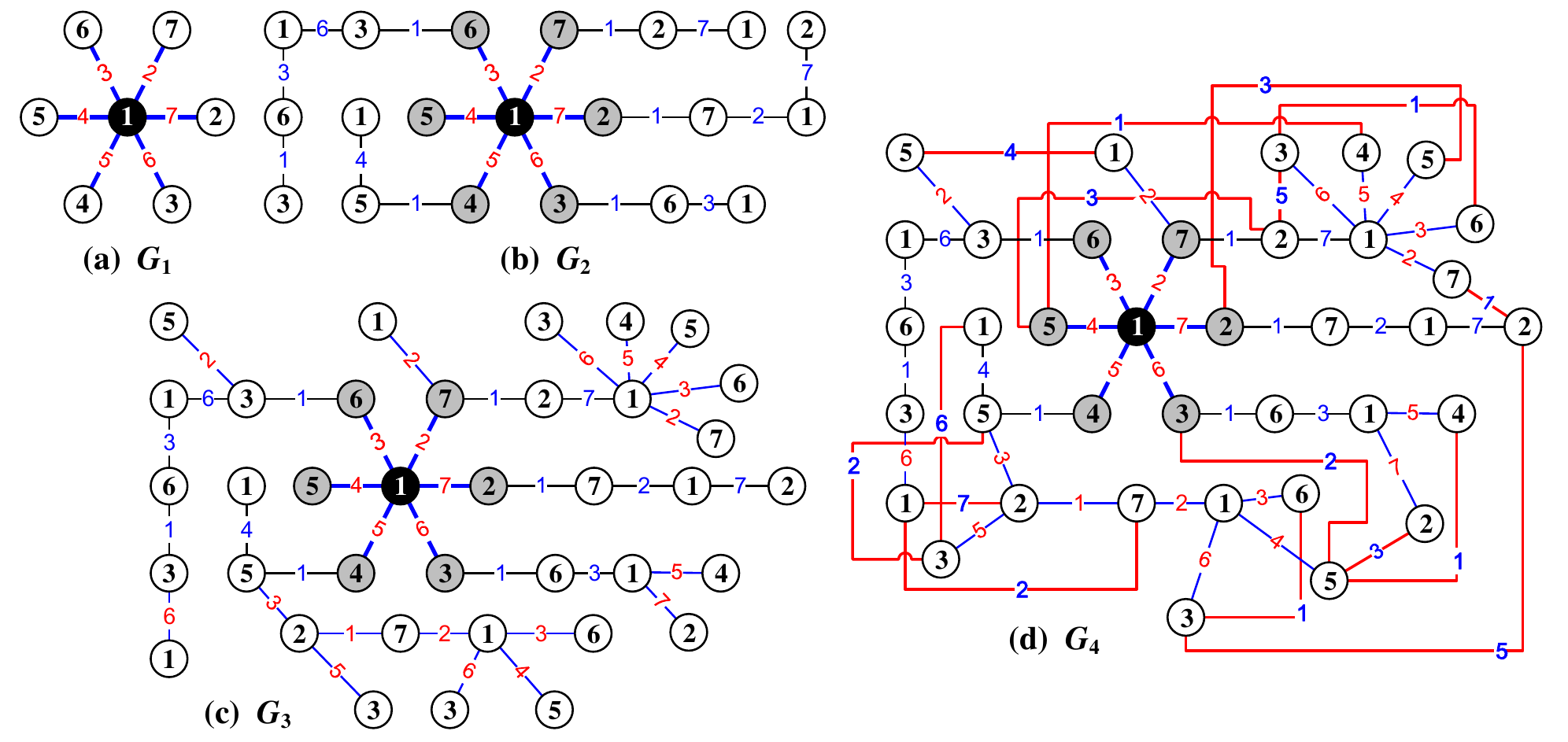}\\
\caption{\label{fig:spider-edge-magic-22}{\small A randomly growing graph sequence: The first four graphs admit the edge-magic proper total colorings holding the edge-magic constraint $f(u)+f(uv)+f(v)=10$, where (a) is a star $K_{1,6}$, called a \emph{root}; (b) is a \emph{rooted spider} $S_{1,3,3,4,3,5}$; (c) is a rooted tree; (d) is a non-planar rooted graph, cited from \cite{Bing-Yao-2020arXiv}.}}
\end{figure}

\subsection{Topcode-matrices}

A graph of graph theory is saved in computer by its \emph{adjacent matrix}, or its \emph{incident matrix}, in general. A colored graph $G$ can be saved in computer by one of its Topcode-matrix $T_{code}(G)$ defined in Definition \ref{defn:topcode-matrix-definition}, \emph{adjacent matrix} $A(G)$ defined in \cite{Bondy-2008}, other two \emph{adjacent e-value matrix} and \emph{adjacent ve-value matrix} are defined in \cite{Yao-Su-Ma-Wang-Yang-arXiv-2202-03993v1}. A colored graph drawn on planar paper can be scanned into computer by modern image recognition technology, and then switch them into various matrices for computation \cite{Yao-Su-Ma-Wang-Yang-arXiv-2202-03993v1}.

\subsubsection{Topcode-matrices}

\begin{defn}\label{defn:topcode-matrix-definition}
\cite{Yao-Sun-Zhao-Li-Yan-ITNEC-2017, Yao-Zhao-Zhang-Mu-Sun-Zhang-Yang-Ma-Su-Wang-Wang-Sun-arXiv2019} A \emph{Topcode-matrix} (or \emph{topological coding matrix}) is defined as
\begin{equation}\label{eqa:Topcode-matrix}
\centering
{
\begin{split}
T_{code}= \left(
\begin{array}{ccccc}
x_{1} & x_{2} & \cdots & x_{q}\\
e_{1} & e_{2} & \cdots & e_{q}\\
y_{1} & y_{2} & \cdots & y_{q}
\end{array}
\right)_{3\times q}=
\left(\begin{array}{c}
X\\
E\\
Y
\end{array} \right)=(X,~E,~Y)^{T}
\end{split}}
\end{equation} where the \emph{v-vector} $X=(x_1, x_2, \dots, x_q)$, the \emph{e-vector} $E=(e_1$, $e_2 $, $ \dots $, $e_q)$, and the \emph{v-vector} $Y=(y_1, y_2, \dots, y_q)$ consist of non-negative integers $e_i$, $x_i$ and $y_i$ for $i\in [1,q]$. We say $T_{code}$ to be \emph{evaluated} if there exists a function $\theta$ such that $e_i=\theta(x_i,y_i)$ for $i\in [1,q]$, and call $x_i$ and $y_i$ to be the \emph{ends} of $e_i$, denoted as $e_i=x_iy_i$, and $q$ the \emph{size} of $T_{code}$, such that $V(G)=X\cup Y$ and $E(G)=E$.\qqed
\end{defn}

\begin{example}\label{exa:8888888888}
In Fig.\ref{fig:introduction-example-11}, three colored graphs $G,T,J$ correspond to three different Topcode-matrices $T_{code}(G)$, $T_{code}(T)$ and $T_{code}(J)$ shown in Eq.(\ref{eqa:example-topcode-matrix11}), however, three colored graphs $G,T,J$ are isomorphic to each other. Five colored graphs $J_1,J_2,J_3,J_4,J_5$ shown in Fig.\ref{fig:topcode-matrices-examples} are not isomorphic to each other, but they correspond to a common Topcode-matrix $T_{code}(G[\odot ]J)$ shown in Eq.(\ref{eqa:example-topcode-matrix22}).\qqed
\end{example}

\begin{figure}[h]
\centering
\includegraphics[width=16.4cm]{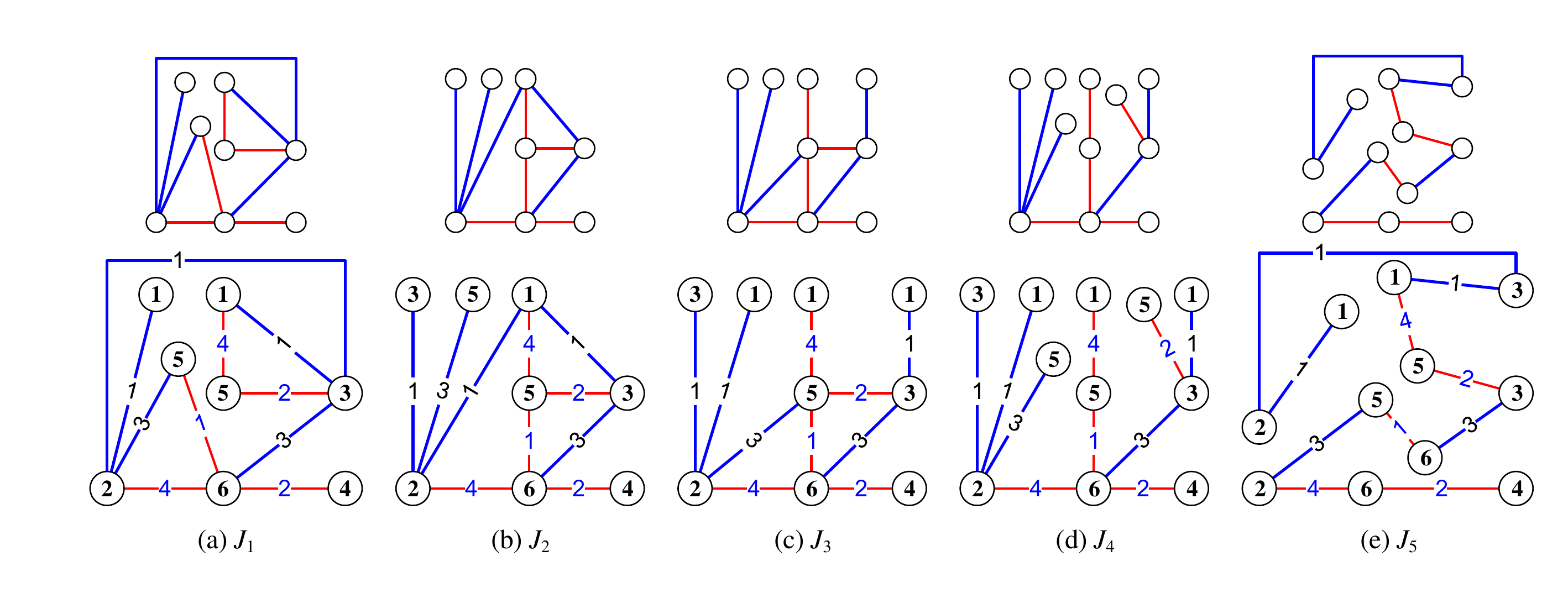}\\
\caption{\label{fig:topcode-matrices-examples}{\small Five colored graphs $J_1,J_2,J_3,J_4$ and $J_5$ are not isomorphic to each other.}}
\end{figure}

\begin{defn}\label{defn:colored-topcode-matrix}
\cite{Bing-Yao-arXiv:2207-03381} Suppose that a $(p,q)$-graph $G$ admits a $W$-constraint total coloring $f:V(G)\cup E(G)\rightarrow [a,b]$. A \emph{colored Topcode-matrix} $T_{code}(G,f)$ of the graph $G$ is defined as
\begin{equation}\label{eqa:basic-colored-Topcode-matrix}
\centering
{
\begin{split}
T_{code}(G,f)= \left(
\begin{array}{ccccc}
f(x_{1}) & f(x_{2}) & \cdots & f(x_{q})\\
f(x_{1}y_{1}) & f(x_{2}y_{2}) & \cdots & f(x_{q}y_{q})\\
f(y_{1}) & f(y_{2}) & \cdots & f(y_{q})
\end{array}
\right)_{3\times q}=\left(
\begin{array}{cccccccccccccc}
X_f\\
E_f\\
Y_f
\end{array}
\right)=(X_f,E_f,Y_f)^{T}
\end{split}}
\end{equation}\\
holding the $W$-constraint $W\langle f(x_{i}),f(x_{i}y_{i}),f(y_{i})\rangle=0$ for $i\in [1,q]$. Moreover, if $G$ is a bipartite graph with the vertex set $V(G)=X^v\cup Y^v$ and $X^v\cap Y^v=\emptyset $, we stipulate $x_{i}\in X^v$ and $y_{i}\in Y^v$ such that $X_f\cap Y_f=\emptyset $ in Eq.(\ref{eqa:basic-colored-Topcode-matrix}), where ``$W$-constraint'' is a mathematical constraint, or a group of mathematical constraints.\qqed
\end{defn}

\begin{defn} \label{defn:generalization-colored-topcode-matrix}
\cite{Bing-Yao-arXiv:2207-03381} If a $(p,q)$-graph $G$ admits a $W$-constraint vertex coloring $g:V(G)\rightarrow [\alpha , \beta]$, we have a colored Topcode-matrix $T_{code}(G,g)$ of $G$ defined as
\begin{equation}\label{eqa:vertex-coloring-Topcode-matrix}
\centering
{
\begin{split}
T^{vert}_{code}(G,g)= \left(
\begin{array}{ccccc}
g(x_{1}) & g(x_{2}) & \cdots & g(x_{q})\\
x_{1}y_{1} & x_{2}y_{2} & \cdots & x_{q}y_{q}\\
g(y_{1}) & g(y_{2}) & \cdots & g(y_{q})
\end{array}
\right)_{3\times q}=\left(
\begin{array}{cccccccccccccc}
X_g\\
E\\
Y_g
\end{array}
\right)=(X_g,E,Y_g)^{T}
\end{split}}
\end{equation} If $G$ is a bipartite graph with the vertex set $V(G)=X^v\cup Y^v$ and $X^v\cap Y^v=\emptyset $, we have $X_g\cap Y_g=\emptyset $, also, $X_g=X^v$ and $Y_g=Y^v$. And, if the $(p,q)$-graph $G$ admits a $W$-constraint edge coloring $h:E(G)\rightarrow [\lambda, \gamma]$, we have a colored Topcode-matrix $T_{code}(G,h)$ of $G$ defined as
\begin{equation}\label{eqa:edge-colored-Topcode-matrix}
\centering
{
\begin{split}
T^{edge}_{code}(G,h)= \left(
\begin{array}{ccccc}
x_{1} & x_{2} & \cdots & x_{q}\\
h(x_{1}y_{1}) & h(x_{2}y_{2}) & \cdots & h(x_{q}y_{q})\\
y_{1} & f(y_{2} & \cdots & y_{q}
\end{array}
\right)_{3\times q}=\left(
\begin{array}{cccccccccccccc}
X\\
E_h\\
Y
\end{array}
\right)=(X,E_h,Y)^{T}
\end{split}}
\end{equation} If $G$ is a bipartite graph with the vertex set $V(G)=X^v\cup Y^v$ and $X^v\cap Y^v=\emptyset $, refer to that in Definition \ref{defn:colored-topcode-matrix}.\qqed
\end{defn}

\begin{rem}\label{rem:generalization-topcode-matrix}
In Definition \ref{defn:topcode-matrix-definition}, if each of elements $x_i,e_j,y_k$ in the v-vectors and the e-vector is a \emph{set} (resp. graph, matrix, string), we call $T_{code}(G)$ \emph{set-type Topcode-matrix} (resp. \emph{graph-type Topcode-matrix}, \emph{matrix-type Topcode-matrix}, \emph{string-type Topcode-matrix}) in this article. In other word, the generalization of a Topcode-matrix is that each of elements in the Topcode-matrix is a \emph{thing} in the world, such that the Topcode-matrix brings these $3q$ things together topologically by a mathematical constraint, or a group of mathematical constraints.

Notice that the Topcode-matrix $T_{code}(G,f)$ can induce other two Topcode-matrices $T^{vert}_{code}(G,g)$ and $T^{edge}_{code}(G,h)$ defined in Definition \ref{defn:generalization-colored-topcode-matrix}, but not necessarily vice versa. Thereby, we can use this property to make asymmetric topology ciphers for real application, for example, the Topcode-matrix $T^{vert}_{code}(G,g)$ is as a \emph{public-key matrix}, the Topcode-matrix $T^{vert}_{code}(G,h)$ is as a \emph{private-key matrix}, and the Topcode-matrix $T_{code}(G,f)$ is an \emph{Topcode-matrix authentication} of the public-key matrix and the private-key matrix.\paralled
\end{rem}

\subsubsection{Pan-Topcode-matrices}

\begin{defn}\label{defn:pan-Topcode-matrix}
\cite{Yao-Wang-Ma-Su-Wang-Sun-2020ITNEC} A \emph{pan-Topcode-matrix} is defined as $P_{code}=(X_{pan}, E_{pan}, Y_{pan}~)^{T}$ with three vectors
$$X_{pan}=(\alpha_1, \alpha_2, \dots , \alpha_q),~E_{pan}=(\gamma_1, \gamma_2, \dots , \gamma_q),~Y_{pan}=(\beta_1, \beta_2, \dots , \beta_q)$$ and $\alpha_j,\beta_j$ are the ends of $\gamma_j$. If there exits a constraint $W$ such that $\gamma_j=W\langle \alpha_j,\beta_j\rangle$ for each $j\in [1,q]$, then the pan-Topcode-matrix $P_{code}$ is \emph{$W$-constraint valued}.\qqed
\end{defn}

\begin{defn} \label{defn:evaluated-topcode-matrix}
$^*$ If $x_i:=\alpha_i$, $e_i:=\gamma_i$ and $y_i:=\beta_i$ in a Topcode-matrix $T_{code}$ defined in Definition \ref{defn:topcode-matrix-definition}, we get another Topcode-matrix $T^{evalu}_{code}=\big (X_{(:)},~E_{(:)},~Y_{(:)}\big )^{T}$ withe three vectors
$$X_{(:)}=(\alpha_1, \alpha_2, \dots, \alpha_q),~E_{(:)}=(\gamma_1, \gamma_2, \dots, \gamma_q)\textrm{ and }Y_{(:)}=(\beta_1, \beta_2, \dots, \beta_q)
$$ We call $T^{evalu}_{code}$ \emph{assignment Topcode-matrix} of $T_{code}$, and denote this face as $T_{code}:=T^{evalu}_{code}$. Moreover, there are Topcode-matrices $T^{evalu}_{code}(1)$, $T^{evalu}_{code}(2)$, $\dots$, $T^{evalu}_{code}(m)$ holding
\begin{equation}\label{eqa:555555}
T^{evalu}_{code}(k):=T^{evalu}_{code}(k+1)
\end{equation} for $k\in [1,m-1]$.\qqed
\end{defn}

\begin{rem}\label{rem:333333}
In Definition \ref{defn:pan-Topcode-matrix}, the elements $\alpha_i,\gamma_i,\beta_i$ of the pan-Topcode-matrix $P_{code}$ are graphs, matrices, vectors, strings, formulae, articles, any things if there are connections be tween them, then the pan-Topcode-matrices show these related things in topological structures.

The generalization $T^{gener}_{code}$ of a Topcode-matrix $T_{code}$ is that each of elements in the Topcode-matrix is a \emph{thing} in the world, such that the Topcode-matrix $T^{gener}_{code}$ brings these $3q$ things together topologically by a mathematical constraint, or a group of mathematical constraints for getting a complete ``mathematical story''.\paralled
\end{rem}

\subsubsection{Connections between matrices}

About Topcode-matrix algebra, it can refer to \cite{Bing-Yao-2020arXiv} and \cite{Yao-Zhao-Zhang-Mu-Sun-Zhang-Yang-Ma-Su-Wang-Wang-Sun-arXiv2019}.

For a graph $G$ of $n$ vertices and a total coloring $f:V(G)\cup E(G)\rightarrow [a,b]$, the graph $G$ corresponds to its own \emph{adjacent matrix} $A(G)=(a_{i,j})_{n\times n}$ defined as
\begin{equation}\label{eqa:555555}
\centering
{
\begin{split}
A(G)= \left(
\begin{array}{cccccc}
 a_{1,1} & a_{1,2} & \cdots & a_{1,n}\\
 a_{2,1} & a_{2,2} & \cdots & a_{2,n}\\
 \cdots & \cdots & \cdots & \cdots\\
 a_{n,1} & a_{n,2} & \cdots & a_{n,n}
\end{array}
\right)=(A_1,A_2,\dots ,A_n)^T_{n\times n}
\end{split}}
\end{equation} where $A_i=(a_{i,1},a_{i,2},\dots ,a_{i,n})$ with $i\in [1,n]$, $a_{i,j}=1$ if $x_{i}x_{j}\in E(G)$, otherwise $a_{i,j}=0$.

We color the elements of $A(G)$ by a total coloring $f$ of the graph $G$, the resultant matrix
\begin{equation}\label{eqa:555555}
\centering
{
\begin{split}
f(A(G))= \left(
\begin{array}{cccccc}
 f(a_{1,1}) & f(a_{1,2}) & \cdots & f(a_{1,n})\\
 f(a_{2,1}) & f(a_{2,2}) & \cdots & f(a_{2,n})\\
 \cdots & \cdots & \cdots & \cdots\\
 f(a_{n,1}) & f(a_{n,2}) & \cdots & f(a_{n,n})
\end{array}
\right)
\end{split}}
\end{equation} is called \emph{colored adjacent matrix} of the adjacent matrix $A(G)$, such that $f(a_{i,j})=f(x_{i}x_{j})$ if $x_{i}x_{j}\in E(G)$ and $a_{i,j}=1$ in $A(G)$, otherwise $f(a_{i,j})=0$. About the colored adjacent matrix $f(A(G))$, we have

(i) The colored adjacent matrix $f(A(G))$ corresponds to the unique graph $G$, however, it has no information of vertex colors of the graph $G$.

(ii) The \emph{characteristic equation} $F(\lambda)=0$ is obtained from $F(\lambda)=|\lambda I-f(A(G))|$, where $I$ is the \emph{identity matrix} of order $n$. Here, a matrix of order $n$ whose elements on the main diagonal are all 1's and all other elements are 0's is called an \emph{identity matrix} of order $n$ (also, \emph{unit matrix}), denoted as $I$. We do not know the theory and application of the colored adjacent matrix $f(A(G))$.

(iii) The colored adjacent matrix $f(A(G))$ is a connection between the adjacent matrix $A(G)$ (corresponds to the unique graph $G$) and the Topcode-matrix $T_{code}(G,f)_{3\times n}$ (corresponds to tow or more different colored graphs, in general).

(iv) Moreover, we have the \emph{colored ve-matrix} $A_{code}(G)=(V_f,X_1,X_2,\dots ,X_n)^T_{(n+1)\times (n+1)}$, where
$$V_f=(0,f(x_1),f(x_2)\dots ,f(x_n)),~x_i\in V(G);~X_i=(f(x_i),f(a_{i,1}),f(a_{i,2}),\dots ,f(a_{i,n})),~i\in [1,n]
$$ and $f(a_{i,j})=f(x_{i}x_{j})$ if $x_{i}x_{j}\in E(G)$ and $a_{i,j}=1$ in $A(G)$, otherwise $f(a_{i,j})=0$.

\begin{equation}\label{eqa:555555}
\centering
{
\begin{split}
A_{code}(G,f)= \left(
\begin{array}{cccccc}
0 & f(x_1) & f(x_2) & \cdots & f(x_n)\\
f(x_1) & f(a_{1,1}) & f(a_{1,2}) & \cdots & f(a_{1,n})\\
f(x_2) & f(a_{2,1}) & f(a_{2,2}) & \cdots & f(a_{2,n})\\
\cdots & \cdots & \cdots & \cdots & \cdots\\
f(x_n) & f(a_{n,1}) & f(a_{n,2}) & \cdots & f(a_{n,n})
\end{array}
\right)_{(n+1)\times (n+1)}
\end{split}}
\end{equation} which is a connection between the adjacent matrix $A(G)$ and the Topcode-matrix $T_{code}(G,f)_{3\times n}$, and contains all colors of vertices and edges of the graph $G$ under the total coloring $f$.

\begin{example}\label{exa:8888888888}
For the colored graph $G=H_{4147}$ shown in Fig.\ref{fig:introduction-example-11} (a), we have the following four matrices of the graph $G=H_{4147}$:

\begin{equation}\label{eqa:example-four-matrices11}
\centering
{
\begin{split}
A(G)= \left(
\begin{array}{ccccccc}
0 & 0 & 0 & 0 & 1 & 0\\
0 & 0 & 0 & 0 & 0 & 1\\
0 & 0 & 0 & 0 & 1 & 0\\
0 & 0 & 0 & 0 & 0 & 1\\
1 & 0 & 1 & 0 & 0 & 1\\
0 & 1 & 0 & 1 & 1 & 0
\end{array}
\right),~f(A(G))= \left(
\begin{array}{ccccccc}
0 & 0 & 0 & 0 & 4 & 0\\
0 & 0 & 0 & 0 & 0 & 4\\
0 & 0 & 0 & 0 & 2 & 0\\
0 & 0 & 0 & 0 & 0 & 2\\
4 & 0 & 2 & 0 & 0 & 1\\
0 & 4 & 0 & 2 & 1 & 0
\end{array}
\right)
\end{split}}
\end{equation}

\begin{equation}\label{eqa:example-four-matrices22}
\centering
{
\begin{split}
A_{code}(G,f)= \left(
\begin{array}{ccccccc}
0 & 1 & 2 & 3 & 4 & 5 & 6\\
1 & 0 & 0 & 0 & 0 & 1 & 0\\
2 & 0 & 0 & 0 & 0 & 0 & 1\\
3 & 0 & 0 & 0 & 0 & 1 & 0\\
4 & 0 & 0 & 0 & 0 & 0 & 1\\
5 & 1 & 0 & 1 & 0 & 0 & 1\\
6 & 0 & 1 & 0 & 1 & 1 & 0
\end{array}
\right),~T_{code}(G,f)= \left(
\begin{array}{cccccc}
1 & 3 & 6 & 6 & 6\\
4 & 2 & 1 & 4 & 2\\
5 & 5 & 5 & 2 & 4
\end{array}
\right)
\end{split}}
\end{equation}Obviously, the Topcode-matrix $T_{code}(G,f)$ in four matrices of the graph $G=H_{4147}$ takes up the least amount of computer storage space, but $T_{code}(G,f)$ corresponds to two or more graphs. However, the other three matrices can induce number-based strings with longer bytes.\qqed
\end{example}

\subsection{The advantages of topological encryption}

Graphs of graph theory are ubiquitous in the real world, represent objects and their relationships such as social networks, e-commerce networks, biology networks and traffic networks and many areas of science such as Deep Learning, Graph Neural Network, Graph Networks (Ref. \cite{Battaglia-27-authors-arXiv1806-01261v2}).

The authors in \cite{Yao-Wang-2106-15254v1} concluded some advantages of Topsnut-gpws of topological coding, where Topsnut-gpws is the abbreviation of the sentence ``Graphical passwords consisted of topological structure and mathematical constraints'', as follows:
\begin{asparaenum}[\textbf{Advan}-1.]
\item \textbf{Related with two or more different mathematical areas}. Topsnut-gpws are made of topological structures and mathematical constrains. Here, ``mathematical constrains'' are related with Number theory, Set theory, Probability theory, Calculus, Algebra, Abstract algebra, Linear algebra, Combinatorics, Discrete mathematics, Graph theory, Cryptography, Computer science and Information theory \emph{etc}. However, ``topological structures'' differ from the above ``hard restrictions'' and are ``mathematical expressions'' belonging to Graph Theory, a mathematical branch. These advantages will produce Asymmetric Topology Cryptosystem of topological coding for resisting the quantum computation and intelligent attacks equipped with quantum computer in future ear of quantum computer.
\item \textbf{Not pictures}. Topsnut-gpws run fast in communication networks because they are saved in computer by popular matrices rather than pictures and photos. With the present image recognition technology, it is easy to scan Topsnut-gpws drawn on the plane into computer, and obtain the adjacent matrices and other matrices of the Topsnut-gpws.
\item \textbf{Lots of graphs, colorings and labelings.} There are enormous numbers of graphs, graph colorings and labelings in graph theory. And new graph colorings (resp. labelings) come into being everyday. The number of one $W$-constrain different labelings (resp. colorings) of a graph maybe large, and no method is reported to find out all of such labelings (resp. colorings) for a graph. Refer to two graph numbers of graphs with 23 vertices and 24 vertices $G_{23}\approx 2^{179}$ and $G_{24}\approx 2^{197}$, and two comprehensive articles \cite{Gallian2021} and \cite{Yao-Wang-2106-15254v1}.
\item \textbf{Diversity of \emph{asymmetric topology cryptography}.} Topological authentications realize the coexistence of two or more labelings and colorings on a graph, which leads to the problem of multiple labeling (resp. multiple coloring) decomposition of graphs, and brings new research objects and new problems to mathematics. One graph can produce more public-key graphs since there are hundreds of colorings and labelings, so there are more private-key graphs. There are: one public-key corresponds two or more private-key graphs, and more public-key graphs correspond one or more private-key graphs for asymmetric encryption algorithm.
\item \textbf{Composability}. There are connections between topological authentications and other type of passwords. For example, small circles in Topsnut-gpws can be equipped with fingerprints and other biological information requirements, and the users' pictures can be embedded in small circles, greatly reflects personalization. Many labelings of trees are convertible from each other \cite{Yao-Liu-Yao-2017}. Tree-type structures can adapt to a large number of labelings and colorings, in addition to graceful labeling, no other labelings reported that were established in those tree-type structures having smaller vertex numbers by computer, almost no computer proof. Because construction methods are complex, this means that using computers to break down Topsnut-gpws will be difficult greatly.
\item \textbf{Simplicity and convenience.} Topsnut-gpws are suitable for people who need not learn new rules and are allowed to use their private knowledge in making Topsnut-gpws for the sake of remembering easily. For example, Chinese characters (Hanzis) are naturally topological structures to produce Hanzi-graphs for topological coding. Chinese people can generate Hanzi-graphs simply by speaking and writing Chinese directly.
\item \textbf{Irreversibility.} Topsnut-gpws can generate quickly number-based (resp. text-based) strings with bytes as long as desired, but these strings can not reconstruct the original Topsnut-gpws. The confusion of number-based (resp. text-based) strings and Topsnut-gpws can't be erased, as our knowledge.
\item \textbf{Computational security.} There are many non-polynomial algorithms in making Topsnut-gpws. For example, drawing non-isomorphic graphs is very difficult and non-polynomial. For a given graph, finding out all possible colorings (resp. labelings) are impossible, since these colorings (resp. labelings) are massive data, and many graph problems have been proven to be NP-complete, or NP-hard.
\item \textbf{Provable security.} There are many longstanding mathematical conjectures (also, \emph{open problems}) in graph labelings and colorings, such as the famous graceful tree conjecture, odd-graceful tree conjecture, total coloring conjecture, and many open problems of topological structures are: Hamilton graph determination, $KT$-conjecture, Kelly-Ulam's Reconstruction Conjecture \emph{etc.}
\end{asparaenum}

\section{$[0,9]$-strings and $[0,9]$-string groups}

A $[0,9]$-\emph{string} is a \emph{number-based string} $s=c_{1}c_{2}\cdots c_{m}$ with $c_{j}\in [0,9]$ for $j\in [1,m]$. We will discuss some algebraic operations on $[0,9]$-strings and $[0,9]$-string groups. Moreover, we will introduce super-strings in this subsection.

\subsection{Algebraic operations based on $[0,9]$-strings}

\begin{defn} \label{defn:0-9-string-4-operation}
$^*$ For two $[0,9]$-strings $s_i=c_{i,1}c_{i,2}\cdots c_{i,m}$ with $c_{i,j}\in [0,9]$ for $j\in [1,m]$ and $i=1,2$, we define

1. The \emph{addition operation} ``$[+]$'' is defined by
$$
s_1[+]s_2=(c_{1,1}+c_{2,1})(c_{1,2}+c_{2,2})\cdots (c_{1,m}+c_{2,m})
$$ and define $s_1[+]s_2~(\bmod~9)=b_1b_2\cdots b_m$ with $b_j=c_{1,j}+c_{2,j}$ if $c_{1,j}+c_{2,j}\leq 9$, otherwise $b_j=c_{1,j}+c_{2,j}~(\bmod~9)$ for $j\in [1,m]$. Obviously,
$s_1[+]s_2\neq s_1[+]s_2~(\bmod~9)$ in general.

2. The \emph{subtraction operation} ``$[-]$'' is defined by
$$
s_1[-]s_2=(c_{1,1}-c_{2,1})(c_{1,2}-c_{2,2})\cdots (c_{1,m}-c_{2,m})
$$ and define $s_1[-]s_2~(\bmod~9)=d_1d_2\cdots d_m$ with $d_j=c_{1,j}-c_{2,j}$ if $0\leq c_{1,j}-c_{2,j}\leq 9$, otherwise $d_j=c_{1,j}-c_{2,j}~(\bmod~9)$ for $j\in [1,m]$. Clearly,
\begin{equation}\label{eqa:555555}
s_1[-]s_2\neq s_2[-]s_1,\quad s_1[-]s_2~(\bmod~9)\neq s_2[-]s_1~(\bmod~9)
\end{equation} in general.

3. The \emph{complementary} of a $[0,9]$-string $s=c_1c_2\cdots c_m$ with $c_i\in [0,9]$ is defined by
$$\overline{s}=(9-c_1)(9-c_2)\cdots (9-c_m)=a_1a_2\cdots a_m$$

4. The \emph{inverse} of a $[0,9]$-string $s=c_1c_2\cdots c_m$ with $c_i\in [0,9]$ is defined by $s^{-1}=c_mc_{m-1}\cdots c_2c_1$.

5. The \emph{number multiplication} is defined by $$k[\bullet ]s~(\bmod~9)=(k\cdot c_1)(k\cdot c_2)\cdots (k\cdot c_m)~(\bmod~9)
$$ with $c_i\in [0,9]$ and integer $k\geq 1$.\qqed
\end{defn}

\begin{prop}\label{thm:666666}
$^*$ For a $[0,9]$-string $s=c_1c_2\cdots c_m$ with $c_i\in [0,9]$, by Definition \ref{defn:0-9-string-4-operation}, we have
\begin{equation}\label{eqa:555555}
\overline{s}^{-1}=(9-c_m)(9-c_{m-1})\cdots (9-c_2)(9-c_1)=a_ma_{m-1}\cdots a_2a_1
\end{equation} and
\begin{equation}\label{eqa:555555}
{
\begin{split}
s[+]s^{-1}&=(c_1+c_m)(c_2+c_{m-1})\cdots (c_{m-1}+c_2)(c_m+c_1),\\
\overline{s}[+]\overline{s}^{-1}&=(a_1+a_m)(a_2+a_{m-1})\cdots (a_{m-1}+a_2)(a_m+a_1),\\
\overline{s}[+]s^{-1}&=(a_1+c_m)(a_2+c_{m-1})\cdots (a_{m-1}+c_2)(a_m+c_1),\\
s[+]\overline{s}^{-1}&=(c_1+a_m)(c_2+a_{m-1})\cdots (c_{m-1}+a_2)(c_m+a_1)
\end{split}}
\end{equation}
\end{prop}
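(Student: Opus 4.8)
The plan is to prove all the claimed identities by direct componentwise unwinding of the three operations \emph{complementary}, \emph{inverse}, and \emph{addition} given in Definition \ref{defn:0-9-string-4-operation}. The single bookkeeping fact that drives everything is that, for any $[0,9]$-string $t=t_1t_2\cdots t_m$, the inverse $t^{-1}=t_mt_{m-1}\cdots t_1$ has its $j$-th digit equal to $t_{m+1-j}$ for $j\in[1,m]$. Writing $a_i=9-c_i$ throughout, so that $\overline{s}=a_1a_2\cdots a_m$, every identity then reduces to reading off the $j$-th digit of the relevant digitwise sum.

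First I would establish the expression for $\overline{s}^{-1}$. By definition $\overline{s}=a_1a_2\cdots a_m$, and applying the inverse operation reverses the digit order, so $\overline{s}^{-1}=a_ma_{m-1}\cdots a_1=(9-c_m)(9-c_{m-1})\cdots(9-c_1)$, which is the first claimed equality. The same computation incidentally shows that the complementary and inverse operations commute, i.e.\ $\overline{s^{-1}}=(\overline{s})^{-1}$, so that the notation $\overline{s}^{-1}$ is unambiguous.

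Next I would handle the four addition identities uniformly. Since $s[+]t$ has $j$-th digit $c_j+(\text{$j$-th digit of }t)$, substituting $t=s^{-1}$ (whose $j$-th digit is $c_{m+1-j}$) gives $s[+]s^{-1}=(c_1+c_m)(c_2+c_{m-1})\cdots(c_m+c_1)$; substituting $t=\overline{s}^{-1}$ (whose $j$-th digit is $a_{m+1-j}$) gives $s[+]\overline{s}^{-1}=(c_1+a_m)(c_2+a_{m-1})\cdots(c_m+a_1)$; and replacing the left argument $s$ by $\overline{s}$ (so its $j$-th digit becomes $a_j$) yields the remaining two lines $\overline{s}[+]\overline{s}^{-1}=(a_1+a_m)(a_2+a_{m-1})\cdots(a_m+a_1)$ and $\overline{s}[+]s^{-1}=(a_1+c_m)(a_2+c_{m-1})\cdots(a_m+c_1)$ by exactly the same reading-off of digits.

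There is no genuine obstacle here; the proposition is a bookkeeping exercise resting entirely on the definitions. The only points requiring care are keeping the two index reversals straight when the inverse is composed with the complementary before the addition is applied, and observing that the digit sums $c_j+c_{m+1-j}$ (and the analogous mixed sums) may exceed $9$, so the outputs are number-based strings over a larger alphabet rather than $[0,9]$-strings. This is consistent with the statement, since it uses the plain operation $[+]$ and not $[+]\ (\bmod\ 9)$, and so no digitwise reduction is triggered.
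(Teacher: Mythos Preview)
Your proposal is correct. The paper states this proposition without proof, as it follows immediately from Definition~\ref{defn:0-9-string-4-operation} by the direct componentwise unwinding you describe; your argument supplies exactly the routine verification the paper omits.
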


\begin{center}
Table-1. $[0,9]$-strings.\\[6pt]
\begin{tabular}{r|c|c|c|c||c|c|c}
$i$ & $s_i$ & $s^{-1}_i$ & $\overline{s}_i$ & $\overline{s}^{-1}_i$ & $s_i[+]s^{-1}_i$ & $s_i[-]s^{-1}_i$ & $\overline{s}_i[+]\overline{s}^{-1}_i$\\
\hline
1 & 1013412 & 2143101 & 8986587 & 7856898 & 3156513 & 9970311 & 5732375\\
2 & 2124523 & 3254212 & 7875476 & 6745787 & 5378735 & 9970311 & 3510153\\
3 & 3235634 & 4365323 & 6764365 & 5634676 & 7590957 & 9970311 & 1398931\\
4 & 4346745 & 5476434 & 5653254 & 4523565 & 9712179 & 9970311 & 9176719\\
5 & 5457856 & 6587545 & 4542143 & 3412454 & 1934391 & 9970311 & 7954597\\
6 & 6568967 & 7698656 & 3431032 & 2301343 & 3156513 & 9970311 & 5732375\\
7 & 7679078 & 8709767 & 2320921 & 1290232 & 5378735 & 9970311 & 3510153\\
8 & 8780189 & 9810878 & 1219810 & 0189121 & 7590957 & 9970311 & 1398931\\
9 & 9891290 & 0921989 & 0108709 & 9078010 & 9712179 & 9970311 & 9176719\\
10 & 0902301 & 1032090 & 9097698 & 8967909 & 1934391 & 9970311 & 7954597
\end{tabular}
\end{center}

\begin{prop}\label{thm:666666}
$^*$ According to the basic operations on $[0,9]$-strings defined in Definition \ref{defn:0-9-string-4-operation}, there are the following formulae:
\begin{asparaenum}[\textbf{\textrm{Form}}-1. ]
\item $s[+]\overline{s}=(c_1+a_1)(c_2+a_2)\cdots (c_m+a_m)=99\cdots 9$.
\item $s^{-1}[+]\overline{s^{-1}}=99\cdots 9$.
\item $(\overline{s}[-]s)[+](s[-]\overline{s})=99\cdots 9$.
\item $(s^{-1}[-]\overline{s}^{-1})[+](\overline{s}^{-1}[-]s^{-1})=99\cdots 9$.
\item $(s[+]\overline{s})[-]\overline{s}=s$, $(s[+]\overline{s})[-]s=\overline{s}$.
\item $\overline{(\overline{s})}=s$, $\overline{s}^{-1}=\overline{(s^{-1})}$, $\overline{(\overline{s}^{-1})}=s^{-1}$.
\item $(s[-]s^{-1})[+](s^{-1}[-]s)~(\bmod~9)=99\cdots 9$, since

\qquad $s[-]s^{-1}=(c_1-c_m)(c_2-c_{m-1})\cdots (c_{m-1}-c_2)(c_m-c_1)$ and

\qquad $s^{-1}[-]s=[10-(c_1-c_m)][10-(c_2-c_{m-1})]\cdots [10-(c_{m-1}-c_2)][10-(c_m-c_1)]$.
\end{asparaenum}
\end{prop}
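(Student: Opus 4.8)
The plan is to reduce every identity to a statement about the individual positions of the strings, since the operations $[+]$ and $[-]$ act coordinatewise, complementation sends the $j$-th entry $c_j$ to $a_j=9-c_j$, and the inverse merely reverses the index order $j\mapsto m+1-j$. First I would record these three elementary facts as the working principle: a $[0,9]$-string identity holds as soon as it holds in each coordinate, and complementation and inversion commute because one acts on values while the other acts on positions. Everything in the Proposition is then a case analysis position by position, using only the single arithmetic observation that a digit and its nines' complement add to $9$.

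With this principle, the identities that never leave the range $[0,9]$ fall out immediately. For Form-1 the $j$-th coordinate of $s[+]\overline{s}$ is $c_j+(9-c_j)=9$, giving $99\cdots 9$; Form-2 is just Form-1 applied to the string $s^{-1}$, which is again a $[0,9]$-string. For Form-6 I would compute $\overline{(\overline{s})}$ coordinatewise as $9-(9-c_j)=c_j$, and verify $\overline{s}^{-1}=\overline{(s^{-1})}$ by checking that both equal the string whose $j$-th entry is $a_{m+1-j}=9-c_{m+1-j}$, whence $\overline{(\overline{s}^{-1})}=s^{-1}$ follows by one further complementation. For Form-5 I would invoke Form-1 to replace $s[+]\overline{s}$ by $99\cdots 9$ and then subtract coordinatewise: $9-a_j=c_j$ recovers $s$, while $9-c_j=a_j$ recovers $\overline{s}$.

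The genuinely delicate identities are Form-3, Form-4 and Form-7, because here the raw coordinate differences appearing in $\overline{s}[-]s$ and $s[-]\overline{s}$ (respectively $s[-]s^{-1}$ and $s^{-1}[-]s$) can be negative and must be interpreted through the $(\bmod\ 9)$ convention of Definition \ref{defn:0-9-string-4-operation} together with the positional shift made explicit in the statement of Form-7. My plan is to fix this normalization once and for all---a negative coordinate difference is brought back into range by the stated shift---and then to verify, splitting into the cases governed by the sign of $9-2c_j$ (equivalently $c_j\le 4$ versus $c_j\ge 5$), that the two contributions in each position are nines' complements of one another and therefore total $9$. For Form-7 I would simply reproduce the displayed coordinatewise computation: the $j$-th entries of $s[-]s^{-1}$ and $s^{-1}[-]s$ are $c_j-c_{m+1-j}$ and its normalized negative, so after the $(\bmod\ 9)$ reduction every position equals $9$.

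I expect the main obstacle to be precisely the bookkeeping of this shift-and-reduce convention: the positive-valued identities are one-line coordinate checks, but for the subtraction identities one must pin down exactly how an out-of-range or negative coordinate is normalized, and then confirm that under that normalization the two summands really are complementary, so that each position yields $9$ rather than $0$ or $1$. In particular the $(\bmod\ 9)$ rule and the positional shift quoted in Form-7 must be reconciled so that complementary differences recombine to $9$; once this is done consistently, every remaining case collapses to the elementary fact that a value and its nines' complement sum to $9$.
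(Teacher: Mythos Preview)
Your coordinatewise reduction is exactly the right approach and is essentially what the paper relies on, though the paper does not write out a proof at all: the proposition is stated and the only argument offered is the two displayed lines inside Form-7 itself, which amount to the same positionwise bookkeeping you describe. Your plan covers everything the paper leaves implicit, and your identification of the normalization convention for negative coordinate differences as the one genuine subtlety is accurate---indeed the paper's own conventions (the $(\bmod\ 9)$ rule in the definition versus the shift by $10$ displayed in Form-7 and visible in Table-1) are not fully consistent, so pinning down and fixing one reading, as you propose, is the only honest way to complete the argument.
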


\begin{example}\label{exa:8888888888}
A $[0,9]$-string $s=$1013412 has its own inverse $s^{-1}=$2143101, and its own complementary $\overline{s}=(9-1)(9-0)(9-1)(9-3)(9-4)(9-1)(9-2)=$8986587. Moreover, we have the complementary $\overline{s}^{-1}=$7856898. By the additive operation, we get

$s[+]\overline{s}=$9999999, $s[+]s^{-1}=(1+2)(0+1)(1+4)(3+3)(4+1)(1+0)(2+1)=$3156513,

$\overline{s}[+]\overline{s}^{-1}=(8+7)(9+8)(8+5)(6+6)(5+8)(8+9)(7+8)=$15171312131715,\\
however, $\overline{s}[+]\overline{s}^{-1}~(\bmod~9)=6843486$; refer to Table-1.\qqed
\end{example}

\subsection{$[0,9]$-string groups}

\begin{defn} \label{defn:111111}
$^*$ By the algebraic operations on number-based strings, we define a \emph{number-based string sequence}
$$
S_{eq}=\{s_i\}^{9}_{i=1}=\{s_i=c_{i,1}c_{i,2}\cdots c_{i,m}:c_{i,j}\in [0,9]\}^{9}_{i=1}
$$ holds $c_{i,j}=1+c_{i-1,j}~(\bmod~9)$ for $i\in [2,9]$ and $j\in [1,m]$.

Similarly, other \emph{number-based string sequences} are $\overline{S}_{eq}=\{\overline{s}_i\}^{9}_{i=1}$, $S^{-1}_{eq}=\{s^{-1}_i\}^{9}_{i=1}$, $\overline{S}^{-1}_{eq}=\{\overline{s}^{-1}_i\}^{9}_{i=1}$, $S^{[+]}_{eq}=\{s_i[+]s^{-1}_i\}^{9}_{i=1}$ and $\overline{S}^{[+]}_{eq}=\{\overline{s}_i[+]\overline{s}^{-1}_i\}^{9}_{i=1}$ according to the algebraic operations on number-based strings.\qqed
\end{defn}

We have an every-zero $[0,9]$-string group $\{F^+_{10}(X);\oplus\ominus(\bmod~9)\}$ for each $X\in \{S_{eq},\overline{S}_{eq},S^{-1}_{eq},\overline{S}^{-1}_{eq}$, $S^{[+]}_{eq}$, $\overline{S}^{[+]}_{eq}\}$ defined in Definition \ref{defn:0-9-string-groups111}.

\begin{defn} \label{defn:0-9-string-groups111}
$^*$ If a set
$$S_{tring}(m,n)=\{s_i\}^{n}_{i=1}=\{s_{i}=c_{i,1}c_{i,2}\cdots c_{i,m}:c_{i,j}\in [0,9]\}^{n}_{i=1},~n\geq 10
$$ holds
\begin{equation}\label{eqa:555555}
(c_{i,r}+ c_{j,r})- c_{k,r}~(\bmod~9)=c_{\mu}\in [0,9],~r\in [1,m]
\end{equation} true, then we get an operation ``$\oplus \ominus$'' on the set $S_{tring}(m,n)$ as follows
\begin{equation}\label{eqa:555555}
s_{i}[\oplus \ominus_k]s_{j}:=(s_{i}[+] s_{j})[-] s_{k}=s_{\lambda}\in S_{tring}(m,n)
\end{equation} with the index $\lambda=i+j-k~(\bmod~n)$ for any preappointed \emph{zero} $s_{k}\in S_{tring}(m,n)$, we have defined an \emph{every-zero $[0,9]$-string group} $\{F^+_n(S_{tring}(m,n));\oplus\ominus(\bmod~n)\}$ of order $n$.

Another \emph{every-zero $[0,9]$-string group} $\{F^-_n(S_{tring}(m,n));\ominus \oplus(\bmod~n)\}$ of order $n$ is defined as follows: If the set $S_{tring}(m,n)$ holds
\begin{equation}\label{eqa:555555}
c_{k,r}-(c_{i,r}+c_{j,r})~(\bmod~9)=c_{\eta}\in [0,9],~r\in [1,m]
\end{equation} true, then there is an operation ``$\ominus \oplus $'' defined on the set $S_{tring}(m,n)$ as
\begin{equation}\label{eqa:555555}
s_{i}[\ominus \oplus_k]s_{j}:= s_{k}[-](s_{i}[+] s_{j})=s_{\tau}\in S_{tring}(m,n)
\end{equation} with the index $\tau=i+j-k~(\bmod~n)$ for any preappointed \emph{zero} $s_{k}\in S_{tring}(m,n)$.\qqed
\end{defn}

\begin{example}\label{exa:8888888888}
By Table-1, we have

\textbf{1.1.} $(s_{5}[+] s_{6})[-] s_{2}~(\bmod~9)=s_{9}$, since

$(c_{5,r}+ c_{6,r})- c_{2,r}~(\bmod~9)=989(11)(12)9(10)~(\bmod~9)=9891290=s_{9}$.

\textbf{1.2.} $(s_{1}[+] s_{2})[-] s_{8}~(\bmod~9)=s_{5}$, since

$(c_{1,r}+ c_{2,r})- c_{8,r}~(\bmod~9)=(-5)(-6)(-5)78(-5)(-4)~(\bmod~9)=5457856=s_{5}$.

\textbf{1.3.} $(s_{1}[+] s_{9})[-] s_{7}~(\bmod~9)=s_{3}$, since

$(c_{1,r}+ c_{9,r})- c_{7,r}~(\bmod~9)=323(-5)63(-6)~(\bmod~9)=3235634=s_{3}$.

\textbf{2.1.} $(s^{-1}_{5}[+] s^{-1}_{6})[-] s^{-1}_{2}~(\bmod~9)=s^{-1}_{9}$, since

$(c_{5,r}+ c_{6,r})- c_{2,r}~(\bmod~9)=(10)9(12)(11)989~(\bmod~9)=0921989=s^{-1}_{9}$.

\textbf{2.2.} $(s^{-1}_{1}[+] s^{-1}_{9})[-] s^{-1}_{7}~(\bmod~9)=s^{-1}_{3}$, since

$(c_{1,r}+ c_{9,r})- c_{7,r}~(\bmod~9)=(-6)36(-5)323~(\bmod~9)=4365323=s^{-1}_{3}$.

\textbf{2.3.} $(s^{-1}_{1}[+] s^{-1}_{2})[-] s^{-1}_{8}~(\bmod~9)=s^{-1}_{5}$, since

$(c_{1,r}+ c_{2,r})- c_{8,r}~(\bmod~9)=(-4)(-5)87(-5)(-6)(-5)~(\bmod~9)=6587545=s^{-1}_{5}$.

\textbf{3.1.} $(\overline{s}_{5}[+] \overline{s}_{6})[-] \overline{s}_{2}~(\bmod~9)=\overline{s}_{9}$, since

$(c_{5,r}+ c_{6,r})- c_{2,r}~(\bmod~9)=010(-2)(-3)0(-1)~(\bmod~9)=0108709=\overline{s}_{9}$.

\textbf{3.2.} $(\overline{s}_{1}[+] \overline{s}_{2})[-] \overline{s}_{8}~(\bmod~9)=\overline{s}_{5}$, since

$(c_{1,r}+ c_{2,r})- c_{8,r}~(\bmod~9)=(14)(15)(14)21(14)(13)~(\bmod~9)=4542143=\overline{s}_{5}$.

\textbf{3.3.} $(\overline{s}_{1}[+] \overline{s}_{9})[-] \overline{s}_{7}~(\bmod~9)=\overline{s}_{3}$, since

$(c_{1,r}+ c_{9,r})- c_{7,r}~(\bmod~9)=676(14)36(15)~(\bmod~9)=6764365=\overline{s}_{3}$.\qqed
\end{example}

\begin{thm}\label{thm:666666}
$^*$ There are two every-zero $[0,9]$-string groups
$$
\{F^+_n(S_{tring}(m,n));\oplus\ominus(\bmod~9)\}\textrm{ and }\{F^-_n(S_{tring}(m,n));\ominus \oplus(\bmod~9)\}
$$ if the set $S_{tring}(m,n)$ has $10^m$ elements.
\end{thm}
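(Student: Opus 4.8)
The plan is to identify the underlying algebraic object before touching any axioms. A length-$m$ $[0,9]$-string $s_i=c_{i,1}c_{i,2}\cdots c_{i,m}$ is just the coordinate vector $(c_{i,1},\dots,c_{i,m})$ in $(\mathbb{Z}_{10})^{m}$, where the digitwise reduction written ``$(\bmod~9)$'' in Definition \ref{defn:0-9-string-groups111} means reducing each entry to its representative in $[0,9]$, that is, performing the arithmetic modulo $10$ in each coordinate (this is precisely what the computations after Table-1 do, e.g. $11\mapsto 1$ and $10\mapsto 0$). Because there are exactly $10^{m}$ strings of length $m$ over the alphabet $[0,9]$, the hypothesis $|S_{tring}(m,n)|=10^{m}$ forces $n=10^{m}$ and $S_{tring}(m,n)=(\mathbb{Z}_{10})^{m}$, the entire space. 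The decisive consequence is closure: since $S_{tring}(m,n)$ contains every string, both $s_i[+]s_j[-]s_k$ and $s_k[-](s_i[+]s_j)$, reduced coordinatewise modulo $10$, automatically return to $S_{tring}(m,n)$, so $\oplus\ominus_k$ and $\ominus\oplus_k$ are well defined for every choice of zero $s_k$. With closure secured, what remains for each operation is to verify the group axioms.

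For $\{F^+_n(S_{tring}(m,n));\oplus\ominus(\bmod~9)\}$ I would fix an arbitrary zero $s_k$ and exhibit the translation $\phi\colon s\mapsto s[-]s_k$, acting coordinatewise modulo $10$. It is a bijection of $(\mathbb{Z}_{10})^{m}$ onto itself, and in the $r$-th coordinate the element $\phi(s_i[\oplus\ominus_k]s_j)$ reads $c_{i,r}+c_{j,r}-2c_{k,r}~(\bmod~10)$, which is exactly the $r$-th coordinate of $\phi(s_i)[+]\phi(s_j)$; hence $\phi$ is an isomorphism from $(S_{tring}(m,n),\oplus\ominus_k)$ onto the abelian group $((\mathbb{Z}_{10})^{m},+)$. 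Transporting the structure back, $s_k$ is the identity (indeed $s_i[\oplus\ominus_k]s_k=s_i$), the inverse of $s_i$ is the string with $r$-th digit $2c_{k,r}-c_{i,r}~(\bmod~10)$, and associativity and commutativity are inherited. Since $s_k$ was arbitrary, every element serves as the zero of such a group, which is exactly the \emph{every-zero} feature, and the first assertion follows.

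The analogous claim for $\{F^-_n(S_{tring}(m,n));\ominus\oplus(\bmod~9)\}$ is where I expect the real difficulty. Here the operation is $s_i[\ominus\oplus_k]s_j=s_k[-](s_i[+]s_j)$, whose $r$-th coordinate is $c_{k,r}-c_{i,r}-c_{j,r}~(\bmod~10)$ — a reflection-type rule $s_i,s_j\mapsto s_k-s_i-s_j$ rather than a translation of the form $a+b-z$. Consequently the clean translation isomorphism of the previous paragraph is unavailable and one cannot simply transport $+$; the crux is to establish the group axioms for this rule directly, especially to produce a two-sided identity and to verify associativity. The route I would take is to lean on the index description $s_i[\ominus\oplus_k]s_j=s_\tau$ with $\tau=i+j-k~(\bmod~n)$ recorded in Definition \ref{defn:0-9-string-groups111}: under it the labelling $\rho\colon s_i\mapsto i-k~(\bmod~n)$ is a bijection onto $\mathbb{Z}_n$ satisfying $\rho(s_i[\ominus\oplus_k]s_j)=\rho(s_i)+\rho(s_j)$, which delivers the group law abstractly. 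Reconciling this index-level group structure with the coordinatewise recipe $c_{k,r}-c_{i,r}-c_{j,r}$ on the full space $(\mathbb{Z}_{10})^{m}$ — that is, checking that a single indexing realizes both consistently — is the delicate point, and is the step I expect to consume the bulk of the proof.
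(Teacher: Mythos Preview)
The paper states this theorem without proof, so there is nothing to compare against; I can only assess your plan on its own terms.

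Your treatment of $\oplus\ominus_k$ is correct: the translation $\phi(s)=s[-]s_k$ carries the operation to coordinatewise addition on $(\mathbb{Z}_{10})^m$, and the axioms follow immediately.

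The gap is in the $\ominus\oplus_k$ half. You rightly sense trouble, but the reconciliation you propose is not merely delicate --- it is impossible. The coordinatewise rule $a\star b:=c_{k}-a-b$ fails associativity outright: $(a\star b)\star c=c_k-(c_k-a-b)-c=a+b-c$, while $a\star(b\star c)=c_k-a-(c_k-b-c)=b+c-a$, and these disagree whenever $a\not\equiv c\pmod5$ in $\mathbb{Z}_{10}$. Hence no indexing of the full space can make the coordinate recipe $s_k-s_i-s_j$ coincide with the index recipe $s_{i+j-k}$: substituting $i=j=k$ into the purported identity $c_{i+j-k,r}=c_{k,r}-c_{i,r}-c_{j,r}$ already forces $2c_{k,r}\equiv0$ for every $k$, which permits at most $2^m$ strings, not $10^m$.

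The only internally consistent reading is the purely index-based one the paper adopts elsewhere (compare Definition~\ref{defn:every-zero-abstract-group}): the digitwise condition in Definition~\ref{defn:0-9-string-groups111} is merely a closure requirement, trivially satisfied on the full space $(\mathbb{Z}_{10})^m$, after which the group law is \emph{declared} to be $s_\tau$ with $\tau=i+j-k\pmod n$. Under that convention both $F^+_n$ and $F^-_n$ are copies of $\mathbb{Z}_n$ carried by index arithmetic, and the proof is one line. Your instinct to fall back on the index map $\rho$ is therefore the right move; what you must drop is the attempt to reconcile it with the coordinatewise formula for $\ominus\oplus$, because that formula does not define a group on the full space.
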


\begin{example}\label{exa:8888888888}
About the $[0,9]$-string $d_1=142857$, we have Table-2 as follows
\begin{center}
Table-2. $[0,9]$-string $d_1=142857$ $(\bmod~9)$.\\[6pt]
\begin{tabular}{r|c|c|c|c||c|c|c|c|c}
$i$ & $d_i$ & $d^{-1}_i$ & $\overline{d}_i$ & $\overline{d}^{-1}_i$ & $d_i[+]d^{-1}_i$ & $\overline{d}_i[+]\overline{d}^{-1}_i$ & $d_i[-]\overline{d}_i$ & $d_i[-]\overline{d}^{-1}_i$ & $d^{-1}_i[-]\overline{d}^{-1}_i$\\
\hline
1 & \underline{142857} & \underline{758241} & \underline{857142} & \underline{241758} & 891198 & 198891 & \underline{284715} & 891198 & \underline{517482}\\
2 & 253968 & 869352 & 746031 & 130647 & 123321 & 876678 & 416937 & 123321 & 739614\\
3 & 364179 & 971463 & 635820 & 028536 & 345543 & 654456 & 638259 & 345543 & 952836\\
4 & \underline{475281} & \underline{182574} & \underline{524718} & \underline{817425} & 567765 & 432234 & \underline{851472} & 567765 & \underline{274158}\\
5 & 586392 & 293685 & 413607 & 706314 & 789987 & 219912 & 173694 & 789987 & 496371\\
6 & 697413 & 314796 & 302586 & 685203 & 912219 & 987789 & 395826 & 921129 & 628593\\
7 & \underline{718524} & \underline{425817} & \underline{281475} & \underline{574182} & 234432 & 765567 & \underline{527148} & 234432 & \underline{841725}\\
8 & 829635 & 536928 & 170364 & 463071 & 456654 & 543345 & 749361 & 456654 & 163947\\
9 & 931746 & 647139 & 068253 & 352860 & 678876 & 321123 & 962853 & 678876 & 385269
\end{tabular}
\end{center}
where there are the operation strings $d_i[+]\overline{d}_i=999999$, $d^{-1}_i[+]\overline{d}^{-1}_i=999999$, $d_i[+]\overline{d}^{-1}_i=383616$, $d_i[-]d^{-1}_i=383616$ and $d^{-1}_i[+]\overline{d}_i=616383$ for $i\in [1,9]$.

\vskip 0.4cm

By Table-2, we can observe
$$d^{-1}_i[-]d_i=\overline{d}_i[+]\overline{d}^{-1}_i,~(d_i[-]d^{-1}_i)^{-1}=d^{-1}_i[+]\overline{d}_i\textrm{ and }(d_i[-]d^{-1}_i)[+](d^{-1}_i[+]\overline{d}_i)=99\cdots 9
$$ Four sets $$\textbf{\textrm{D}}=\big \{d_i:i\in [1,9]\big \},~\textbf{\textrm{D}}^{-1}=\big \{d^{-1}_i:i\in [1,9]\big \},~\overline{\textbf{\textrm{D}}}=\big \{\overline{d}_i:i\in [1,9]\big \} \textrm{ and } \overline{\textbf{\textrm{D}}}^{-1}=\big \{\overline{d}^{-1}_i:i\in [1,9]\big \}
$$ enables us to get other five sets below

$\textbf{\textrm{D}}[+]\textbf{\textrm{D}}^{-1}=\big \{d_i[+]d^{-1}_i:i\in [1,9]\big \}$, $\overline{\textbf{\textrm{D}}}[+]\overline{\textbf{\textrm{D}}}^{-1}=\big \{\overline{d}_i[+]\overline{d}^{-1}_i:i\in [1,9]\big \}$,

$\textbf{\textrm{D}}[-]\overline{\textbf{\textrm{D}}}=\big \{d_i[-]\overline{d}_i:i\in [1,9]\big \}$, $\textbf{\textrm{D}}^{-1}[-]\overline{\textbf{\textrm{D}}}^{-1}=\big \{d^{-1}_i[-]\overline{d}^{-1}_i:i\in [1,9]\big \}$ and

$\textbf{\textrm{D}}[-]\overline{\textbf{\textrm{D}}}^{-1}=\big \{d_i[-]\overline{d}^{-1}_i:i\in [1,9]\big \}$

It is not hard to observe the following facts:

\begin{asparaenum}[(i) ]
\item $d_i[+]d^{-1}_i=d_i[-]\overline{d}^{-1}_i$;
\item The strings $d_i[+]\overline{d}_i$, $d^{-1}_i[+]\overline{d}^{-1}_i$, $d_i[+]d^{-1}_i$, $\overline{d}_i[+]\overline{d}^{-1}_i$, $d_i[+]\overline{d}^{-1}_i$, $d_i[-]d^{-1}_i$ and $d_i[-]\overline{d}^{-1}_i$ are \emph{symmetric strings}, where $d_i[+]\overline{d}_i$ and $d^{-1}_i[+]\overline{d}^{-1}_i$ are \emph{trivial symmetric strings}.

\item Two sets $\textbf{\textrm{D}}[+]\textbf{\textrm{D}}^{-1}$ and $\overline{\textbf{\textrm{D}}}[+]\overline{\textbf{\textrm{D}}}^{-1}$ are \emph{inverse} from each other, that is
$$\big (\textbf{\textrm{D}}[+]\textbf{\textrm{D}}^{-1}\big )[+]\big (\overline{\textbf{\textrm{D}}}[+]\overline{\textbf{\textrm{D}}}^{-1}\big )=\big \{999999:i\in [1,9]\big \}$$

\item Since $\overline{d_i[-]\overline{d}_i}=d^{-1}_i[-]\overline{d}^{-1}_i$, so two sets $\textbf{\textrm{D}}[-]\overline{\textbf{\textrm{D}}}$ and $\textbf{\textrm{D}}^{-1}[-]\overline{\textbf{\textrm{D}}}^{-1}$ are \emph{complementary} from each other, that is $\overline{\textbf{\textrm{D}}[-]\overline{\textbf{\textrm{D}}}}=\textbf{\textrm{D}}^{-1}[-]\overline{\textbf{\textrm{D}}}^{-1}$.

\item Two sets $\textbf{\textrm{D}}$ and $\textbf{\textrm{D}}^{-1}$ form every-zero string groups by adding one to each number of the strings under $(\bmod~9)$;

\item Two sets $\overline{\textbf{\textrm{D}}}$ and $\overline{\textbf{\textrm{D}}}^{-1}$ form every-zero string groups by substituting one to each number of the strings under $(\bmod~9)$;

\item The set $\overline{\textbf{\textrm{D}}}[+]\overline{\textbf{\textrm{D}}}^{-1}$ forms an every-zero string group by substituting two to each number of the strings under $(\bmod~9)$;

\item The sets $\textbf{\textrm{D}}[+]\textbf{\textrm{D}}^{-1}$, $\textbf{\textrm{D}}[-]\overline{\textbf{\textrm{D}}}$, $\textbf{\textrm{D}}[-]\overline{\textbf{\textrm{D}}}$ form $\textbf{\textrm{D}}^{-1}[-]\overline{\textbf{\textrm{D}}}^{-1}$ every-zero string groups by adding two to each number of the strings under $(\bmod~9)$.\qqed
\end{asparaenum}
\end{example}

\begin{problem}\label{qeu:444444}
An \emph{$[a,b]$-string} $s=c_1c_2\cdots c_n$ holds $c_j\in [a,b]$ with two integers $a,b$ subject to $0<a<b$. It is may be interesting to generalize the properties of $[0,9]$-strings to $[a,b]$-strings, and find some connections between $[a_k,b_k]$-strings with integers $0<a_k<b_k$.
\end{problem}

\begin{thm}\label{thm:first-mixed-string-operation}
$^*$ \textbf{The first mixed string operation ``$\oplus \ominus$'' for the every-zero string group.} Considering a string set $X=\{s_i=c_{i,1}c_{i,2}\cdots c_{i,n}:i\in [1,m]\}$, where each string $s_{i}=s_{1}[+](i-1)k$ is defined by $c_{i,j}=c_{1,j}+(i-1)k~(\bmod~A)$ for $i\in [1,m]$ and $j\in [1,n]$ and integer $k\geq 1$. Suppose that two strings $s_{i}\in X$ and $s_{r}\in X$ hold
\begin{equation}\label{eqa:new-k-string-groups00}
{
\begin{split}
&\quad ~ c_{i,j}+c_{r,j}-c_{p,j}~(\bmod~A)\\
&=(c_{1,j}+(i-1)k)+(c_{1,j}+(r-1)k)-(c_{1,j}+(p-1)k)~(\bmod~A)\\
&=c_{1,j}+(i+r-p-1)k~(\bmod~A)\\
&=c_{\lambda,j}
\end{split}}
\end{equation}
with the index $\lambda=i+r-p~(\bmod~A)$ and $j\in [1,n]$, such that $s_{\lambda}\in X$ for any preappointed \emph{zero} $s_{p}\in X$.

The operation ``$s_{i}[\oplus \ominus_p]s_{r}$'' on the string set $X$ is defined by
\begin{equation}\label{eqa:new-k-string-groups11}
s_{i}[\oplus \ominus_p]s_{r}:=s_{i}[+]s_{r}[-]s_{p}~(\bmod~A)=s_{\lambda}\in X
\end{equation} with the index $\lambda=i+r-p~(\bmod~A)$ based on Eq.(\ref{eqa:new-k-string-groups00}) for any preappointed \emph{zero} $s_{p}\in X$, so we get an \emph{every-zero string group} $\{F_{[+]k}(X);\oplus \ominus(\bmod~A)\}$.
\end{thm}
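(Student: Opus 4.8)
The plan is to verify, for each fixed choice of the preappointed zero $s_p\in X$, the four abelian-group axioms for $\{F_{[+]k}(X);\oplus\ominus(\bmod~A)\}$, and then to observe that the verification is entirely uniform in $p$, which is precisely what the ``every-zero'' qualifier asserts. First I would establish \textbf{closure}, which is essentially the content of Eq.(\ref{eqa:new-k-string-groups00}): working coordinatewise one has $c_{i,j}+c_{r,j}-c_{p,j}\equiv c_{1,j}+(i+r-p-1)k~(\bmod~A)$ for every $j\in[1,n]$, so the string $s_i[+]s_r[-]s_p~(\bmod~A)$ agrees in all $n$ coordinates with $s_\lambda$ for the index $\lambda\equiv i+r-p$, and hence lies in $X$. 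The only delicate point is well-definedness: one must check that the index $\lambda$ reduces to a genuine index whose associated string actually belongs to $X$, i.e. that the set of shift-values $\{(i-1)k~(\bmod~A):i\in[1,m]\}$ is closed under the combination $a+b-c~(\bmod~A)$. This holds because that set is (a coset of) the cyclic subgroup $\langle k\rangle\le\mathbb{Z}_A$, and any subgroup is closed under $a+b-c$.

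Next, the \textbf{identity} is the zero itself: since $s_i[\oplus\ominus_p]s_p=s_i[+]s_p[-]s_p=s_i$ coordinatewise for every $i$, and $[+]$ is commutative, $s_p$ is a two-sided identity. This is exactly the ``every-zero'' feature, as nothing in the construction distinguishes $s_p$ from any other element of $X$, so any string may be nominated as the zero. For \textbf{inverses}, given $s_i$ I would exhibit the element with index $i'\equiv 2p-i$: the index arithmetic $\lambda=i+r-p$ gives $s_i[\oplus\ominus_p]s_{i'}=s_{i+i'-p}=s_p$, and $s_{i'}\in X$ by the same subgroup-closure remark used for closure. Finally, \textbf{associativity} and \textbf{commutativity} both reduce to linear bookkeeping in the indices: associating the two ways, $(s_i[\oplus\ominus_p]s_r)[\oplus\ominus_p]s_t$ and $s_i[\oplus\ominus_p](s_r[\oplus\ominus_p]s_t)$ each carry index $i+r+t-2p~(\bmod~A)$ and agree coordinatewise modulo $A$, while commutativity is immediate from the symmetry $i+r=r+i$ in $\lambda$ together with the commutativity of $[+]$.

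The main obstacle is the closure/well-definedness step rather than any of the axioms themselves: I must guarantee that every index produced by $i+r-p$ (and by $2p-i$ for the inverse) returns a string genuinely present in $X$, which is the requirement that the shift-values form a coset of a subgroup of $\mathbb{Z}_A$. Once this modular accounting is pinned down---in particular once it is clear how $m$ relates to the order $A/\gcd(A,k)$ of $k$ in $\mathbb{Z}_A$ so that the index reduction stays consistent---the remaining axioms are routine computations inherited directly from Eq.(\ref{eqa:new-k-string-groups00}), and the uniformity of these computations in the choice of $p$ delivers the every-zero structure at once.
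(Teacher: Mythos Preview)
Your approach is essentially the same as the paper's: both proofs proceed by verifying the abelian-group axioms for the operation $[\oplus\ominus_p]$ uniformly in the choice of $s_p$. The paper's proof simply lists Zero, Closure, Associativity, and Commutativity without elaboration, whereas you supply the explicit index arithmetic and additionally check inverses (which the paper omits here, though it treats them in its general discussion of every-zero groups elsewhere).

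One point worth flagging: the well-definedness concern you raise---that the index $\lambda\equiv i+r-p$ must land back in the index set so that $s_\lambda\in X$, and that this requires the shift-values $\{(i-1)k\bmod A\}$ to form a coset of $\langle k\rangle\le\mathbb{Z}_A$---is a genuine gap in the paper's statement that the paper itself does not address. The paper writes $\lambda=i+r-p\ (\bmod\ A)$ while indexing strings by $i\in[1,m]$, and never reconciles $m$ with $A/\gcd(A,k)$; it simply asserts closure as ``First-2'' without the subgroup argument you provide. So your proof is not merely the same route with more words: it actually patches a technical hole that the paper leaves open.
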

\begin{proof} By Eq.(\ref{eqa:new-k-string-groups00}) and Eq.(\ref{eqa:new-k-string-groups11}), we have

\textbf{First-1.} \textbf{Zero.} Each string $s_{p}\in \{F_{[+]k}(X);\oplus \ominus(\bmod~A)\}$ is the \emph{zero}.

\textbf{First-2.} \textbf{Closure.} Any pair of strings $s_{i},s_{r}\in \{F_{[+]k}(X);\oplus \ominus(\bmod~A)\}$ holds
$$
s_{i}[\oplus \ominus_p]s_{r}:=s_{i}[+]s_{r}[-]s_{p}\in \{F_{[+]k}(X);\oplus \ominus(\bmod~A)\}
$$

\textbf{First-3.} \textbf{Associative law.} Three strings $s_{i},s_{r}, s_{t}\in \{F_{[+]k}(X);\oplus \ominus(\bmod~A)\}$ hold
$$
\big (s_{i}[\oplus \ominus_p]s_{r}\big )[\oplus \ominus_p]s_{t}=s_{i}[\oplus \ominus_p]\big ( s_{r}[\oplus \ominus_p]s_{t}\big )
$$

\textbf{First-4.} \textbf{Commutative law.} Since
$$
s_{i}[\oplus \ominus_p]s_{r}=s_{r}[\oplus \ominus_p]s_{i}
$$ for any pair of strings $s_{i},s_{r}\in \{F_{[+]k}(X);\oplus \ominus(\bmod~A)\}$ holds true.
\end{proof}

\begin{example}\label{exa:8888888888}
Verify an every-zero string group $\{F_{[+]2}(\textbf{\textrm{D}}[+]\textbf{\textrm{D}}^{-1});\oplus \ominus (\bmod~9)\}$ based on the string set
$${
\begin{split}
\textbf{\textrm{D}}[+]\textbf{\textrm{D}}^{-1}=&\{a_1=891198=801108,a_2=123321,a_3=345543,a_4=567765,a_5=789987\\
&=780087,a_6=912219=012210,a_7=234432,a_8=456654,a_9=678876\}
\end{split}}
$$ by adding $k=2$ under $\bmod~9$ in Table-2.

We select some \emph{zeros} for computing $(a_{i}[+]a_{r}[-]a_{p}~(\bmod~9)\big )~[+]2~(\bmod~9)$. Taking a preappointed \emph{zero} $a_9\in \textbf{\textrm{D}}[+]\textbf{\textrm{D}}^{-1}$, we have

(1-1) $a_1[+]a_4[-]a_9=a_5$ with $[1+4-9~(\bmod~9)]=5$, then

$891198[+]567765[-]678876 ~(\bmod~9)=780087=a_5$;

(1-2) $a_3[+]a_2[-]a_9=a_5$ with $[3+2-9~(\bmod~9)]=5$, then

$345543[+]123321[-]678876 ~(\bmod~9)=780081=a_5$;

(1-3) $a_8[+]a_5[-]a_9=a_4$ with $[8+5-9~(\bmod~9)]=4$, then

$456654[+]789987[-]678876 ~(\bmod~9)=567765=a_4$.\\
For another preappointed \emph{zero} $s_2\in \textbf{\textrm{D}}[+]\textbf{\textrm{D}}^{-1}$, we have

(2-1) $a_1[+]a_4[-]a_2=a_3$ with $[1+4-2~(\bmod~9)]=3$, then

$891198[+]567765[-]123321~ (\bmod~9)=567765=a_3$;

(2-2) $a_3[+]a_2[-]a_2=a_3$ with $[3+2-2~(\bmod~9)]=3$, then

$345543[+]123321[-]123321 ~(\bmod~9)=567765=a_3$;

(2-3) $a_8[+]a_5[-]a_2=a_2$ with $[8+5-2~(\bmod~9)]=2$, then

$456654[+]789987[-]123321~ (\bmod~9)=123321=a_2$.\\
However, we verify $(a_{i}[-]a_{r}[+]a_{p}~(\bmod~9)\big )~[+]2=a_{\lambda}\in \textbf{\textrm{D}}[+]\textbf{\textrm{D}}^{-1}$ in the following tests:

(3-1) $a_1[-]a_4[+]a_2=a_8$ with $[1-4+2~(\bmod~9)]=8$, then

$891198[-]567765[+]123321 ~(\bmod~9)=456654=a_8$;

(3-2) $a_3[-]a_2[+]a_2=a_3$ with $[3-2+2~(\bmod~9)]=3$, then

$345543[-]123321[+]123321 ~(\bmod~9)=345543=a_3$;

(3-3) $a_8[-]a_5[+]a_2=a_5$ with $[8-5+2~(\bmod~9)]=5$, then

$456654[-]789987[+]123321~ (\bmod~9)=789987=a_5$.\\
We get another every-zero string group $\{F_{[-]2}(\textbf{\textrm{D}}[+]\textbf{\textrm{D}}^{-1});\ominus \oplus (\bmod~9)\}$.\qqed
\end{example}

\begin{thm}\label{thm:second-mixed-string-operation}
$^*$ \textbf{The second mixed string operation ``$\ominus\oplus $'' for the every-zero string group.} Considering a string set $Y=\{r_i=t_{i,1}t_{i,2}\cdots t_{i,n}:i\in [1,m]\}$, where each string $r_{i,j}=r_{1,j}[-](i-1)k~(\bmod~B)$ is determined by $c_{i,j}=t_{1,j}-(i-1)k~(\bmod~B)$ for $i\in [1,m]$ and integer $k\geq 1$. Suppose that two strings $r_{i}\in Y$ and $r_{r}\in Y$ hold
\begin{equation}\label{eqa:new-k-string-groups}
t_{i,j}-t_{r,j}+t_{p,j}~(\bmod~B)=t_{\lambda,j}
\end{equation}
with the index $\lambda=i-r+p~(\bmod~B)$ and $j\in [1,n]$, such that $r_{\lambda}\in Y$ for any preappointed \emph{zero} $r_{p}\in Y$.

The operation ``$r_{i}[\ominus \oplus_p]r_{r}$'' on the string set $Y$ is defined by
\begin{equation}\label{eqa:555555}
r_{i}[\ominus \oplus _p]r_{r}:=r_{i}[-]r_{r}[+]r_{p}~(\bmod~B)=r_{\lambda}
\end{equation} with the index $\lambda=i-r+p~(\bmod~B)$ based on Eq.(\ref{eqa:new-k-string-groups}) for any preappointed \emph{zero} $r_{p}\in Y$, so we get an \emph{every-zero string group} $\{F_{[-]k}(Y);\ominus \oplus (\bmod~B)\}$.
\end{thm}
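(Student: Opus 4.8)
The plan is to verify the group axioms for $\{F_{[-]k}(Y);\ominus\oplus(\bmod~B)\}$ along the same lines as the proof of Theorem~\ref{thm:first-mixed-string-operation}, after collapsing the string-level identity into one congruence on the subscripts. First I would substitute the defining relation $t_{i,j}=t_{1,j}-(i-1)k~(\bmod~B)$ into the operation $r_i[\ominus\oplus_p]r_r:=r_i[-]r_r[+]r_p$ and compute componentwise with the rules of Definition~\ref{defn:0-9-string-4-operation}, obtaining $t_{i,j}-t_{r,j}+t_{p,j}=t_{1,j}-[(i-1)-(r-1)+(p-1)]k=t_{1,j}-(\lambda-1)k$ with $\lambda=i-r+p~(\bmod~B)$. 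This is exactly Eq.~(\ref{eqa:new-k-string-groups}); it shows $r_i[\ominus\oplus_p]r_r=r_\lambda\in Y$, which gives \textbf{Closure} and reduces every remaining axiom to modular arithmetic on the indices.

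With this reduction in hand I would treat the zero, associativity and commutativity purely on the index level. For the preappointed zero $r_p$ the index computation gives $r_i[\ominus\oplus_p]r_p=r_{i-p+p}=r_i$, so $r_p$ is a right zero; I would then check the left zero and, for each $r_i$, produce the inverse as the string whose subscript is the reflection $2p-i~(\bmod~B)$, so that combining it with $r_i$ returns $r_p$. The \textbf{Associative law} and \textbf{Commutative law} I would attempt to read off by comparing $\big(r_i[\ominus\oplus_p]r_r\big)[\ominus\oplus_p]r_t$ with $r_i[\ominus\oplus_p]\big(r_r[\ominus\oplus_p]r_t\big)$ through their resulting subscripts.

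The hard part will be precisely this associativity (together with the two-sided zero), and it is genuinely more delicate than in Theorem~\ref{thm:first-mixed-string-operation}. There the operation was built from the commutative $[+]$, so the subscript rule $i+r-p$ is symmetric in $i,r$ and manifestly associative, whereas here the non-commutative, non-associative $[-]$ forces the raw subscript map $i\star r=i-r+p$, which is not symmetric and fails the naive associativity test. I therefore expect the real content to lie in fixing the order of the $[-]$ and $[+]$ steps (as stipulated in Definition~\ref{defn:0-9-string-groups111}) so that the induced map on subscripts matches a genuine group law; the cleanest route is to exhibit the index-preserving bijection $r_i\mapsto$ its offset $-(i-1)k$ carrying $\{F_{[-]k}(Y);\ominus\oplus\}$ onto the additive every-zero group of Theorem~\ref{thm:first-mixed-string-operation}, after which the Associative and Commutative laws transfer immediately and the verification is complete.
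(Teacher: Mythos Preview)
The paper supplies no proof for this theorem; it states the result and passes directly to an example. Your closure computation is correct and reproduces Eq.~(\ref{eqa:new-k-string-groups}) exactly. Your worry about associativity and the two-sided zero is also well founded, and it is not merely ``more delicate'' than in Theorem~\ref{thm:first-mixed-string-operation}: the index law $\lambda=i-r+p$ is genuinely non-symmetric (swap $i,r$ and the result changes) and non-associative, since $(i-r+p)-t+p=i-r-t+2p$ whereas $i-(r-t+p)+p=i-r+t$. So the operation $[\ominus\oplus_p]$ as literally written does not make $Y$ into a group in the usual sense; the paper's brief ``First-1 through First-4'' template from the previous proof does not transfer.

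Your proposed rescue via a bijection to $\{F_{[+]k}(X);\oplus\ominus\}$ cannot succeed. Any index map $\phi$ intertwining the two laws would have to satisfy $\phi(i-r+p)=\phi(i)+\phi(r)-\phi(p)$; the left-hand side is antisymmetric under $i\leftrightarrow r$ while the right-hand side is symmetric, so no such $\phi$ exists. The offset assignment $r_i\mapsto -(i-1)k$ identifies the underlying sets but not the operations. What the paper seems to intend is the formal dual of Theorem~\ref{thm:first-mixed-string-operation} obtained by negating every offset---compare Definition~\ref{defn:0-9-string-groups111}, where the $\ominus\oplus$ operation is written $s_k[-](s_i[+]s_j)$ with the \emph{symmetric} index $i+j-k$, and the example immediately after this theorem, which computes with both $i-r+p$ and $i+r-p$ and declares both to yield every-zero groups. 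Under that symmetric reading the four laws carry over from the ``First'' proof verbatim and there is nothing more to prove; the asymmetric index rule in the theorem statement appears to be an inconsistency you have correctly flagged rather than a feature you should try to repair.
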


\begin{example}\label{exa:8888888888}
Verify an every-zero string group $\{F_{[-]2}(\overline{\textbf{\textrm{D}}}[+]\overline{\textbf{\textrm{D}}}^{-1});\ominus \oplus (\bmod~9)\}$ based on the string set
$${
\begin{split}
\overline{\textbf{\textrm{D}}}[+]\overline{\textbf{\textrm{D}}}^{-1}=&\{b_1=198891=108801,b_2=876678,b_3=654456,b_4=432234,b_5=219912\\
&=210012,b_6=987789=087780,b_7=765567,b_8=543345,b_9=321123\}
\end{split}}
$$ by substituting $k=2$ under $(\bmod~9)$ in Table-2.

For a preappointed \emph{zero} $b_2\in \textbf{\textrm{D}}[+]\textbf{\textrm{D}}^{-1}$, we have

(i) $b_1[-]b_4[+]b_2=b_8$ with $[1-4+2~(\bmod~9)]=8$, then

$198891[-]432234[+]876678~(\bmod~9)=543345=b_8$;

(ii) $b_3[-]b_2[+]b_2=b3$ with $[3-2+2~(\bmod~9)]=3$, then

$654456[-]876678[+]876678~(\bmod~9)=654456=b_3$;

(iii) $b_8[-]b_5[+]b_2=b_5$ with $[8-5+2~(\bmod~9)]=5$, then

$543345[-]219912[+]876678~(\bmod~9)=219912=b_5$.\\
However, we have

(iv) $b_1[+]b_4[-]b_2=b_3$ with $[1+4-2~(\bmod~9)]=3$, then

$198891[+]432234[-]876678~(\bmod~9)=654456=b_3$;

(v) $b_3[+]b_2[-]b_2=b_3$ with $[3+2-2~(\bmod~9)]=3$, then

$654456[+]876678[-]876678~(\bmod~9)=654456=b_3$;

(vi) $b_8[+]b_5[-]b_2=b_2$ with $[8+5-2~(\bmod~9)]=2$, then

$543345[+]219912[-]876678~(\bmod~9)=876678=b_2$.\\
Thereby, we get another every-zero string group $\{F_{[+]2}(\overline{\textbf{\textrm{D}}}[+]\overline{\textbf{\textrm{D}}}^{-1});\oplus \ominus (\bmod~9)\}$.\qqed
\end{example}

\begin{example}\label{exa:two-operations-two-sets}
The following string set
\begin{equation}\label{eqa:there-is-a-string-set}
{
\begin{split}
X_{[+][-]}=&\{c_1=702207=792297,c_2=246642,c_3=681186,c_4=135531,c_5=570075\\
& ~ =579975,c_6=024420=924429,c_7=468864,c_8=813318,c_9=357753\}
\end{split}}
\end{equation} is obtained from the string operation $(d_i[+]d^{-1}_i)[-](\overline{d}_i[+]\overline{d}^{-1}_i)~(\bmod~9)$ in Table-2.

Notice that $c_{i}=c_1-(i-1)5~(\bmod~9)$ in $X_{[+][-]}$, we have

(1) $c_1[-]c_3[+]c_9=c_7$ with $[1-3+9~(\bmod~9)]=7$, then

$702207[-]681186[+]357753~(\bmod~9)=468864=c_7$.

(2) $c_1[+]c_3[-]c_9=c_4$ with $[1+3-9~(\bmod~9)]=4$, then

$702207[+]681186[-]357753~(\bmod~9)=135531=c_4$.

(3) $c_4[-]c_7[+]c_9=c_6$ with $[4-7+9~(\bmod~9)]=6$, then

$135531[-]468864[+]357753~(\bmod~9)=024420=c_6$.

(4) $c_4[+]c_7[-]c_9=c_2$ with $[4+7-9~(\bmod~9)]=2$, then

$135531[+]468864[-]357753~(\bmod~9)=246642=c_2$.

Thereby, we get two every-zero string groups
$$\{F_{[+]5}(X_{[+][-]});\oplus \ominus (\bmod~9)\}\textrm{ and }\{F_{[-]5}(X_{[+][-]});\ominus \oplus (\bmod~9)\}
$$ based on the string set $X_{[+][-]}$ defined in Eq.(\ref{eqa:there-is-a-string-set}).\qqed
\end{example}

\begin{thm}\label{thm:666666}
There are infinite $[0,9]$-string sets $X$ and integers $k\in [0,8]$, such that each $[0,9]$-string set $X$ induces two every-zero string groups
$$\{F_{[+]k}(X);\oplus \ominus(\bmod~9)\}\textrm{ and }\{F_{[-]k}(X);\ominus \oplus (\bmod~9)\}$$
\end{thm}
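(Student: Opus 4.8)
The plan is to exhibit an explicit infinite family of $[0,9]$-string sets, each member of which satisfies the hypotheses of both Theorem~\ref{thm:first-mixed-string-operation} and Theorem~\ref{thm:second-mixed-string-operation} with modulus $A=B=9$, so that the conclusion follows by simply invoking those two results on every member. First I would fix an integer $k\in[0,8]$ and, for each length $n\geq 1$, choose a seed string $s_1=c_{1,1}c_{1,2}\cdots c_{1,n}$ with $c_{1,j}\in[0,9]$. Setting $c_{i,j}=c_{1,j}+(i-1)k~(\bmod~9)$ for $i\in[1,m]$, where $m=9/\gcd(9,k)$ is the period of the progression, produces a string set $X=\{s_1,s_2,\dots,s_m\}$ of exactly the arithmetic-progression form required by Theorem~\ref{thm:first-mixed-string-operation}. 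That theorem then immediately delivers the every-zero string group $\{F_{[+]k}(X);\oplus\ominus(\bmod~9)\}$.

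Next I would observe that the \emph{same} set $X$ also meets the hypothesis of Theorem~\ref{thm:second-mixed-string-operation}. Listing its members in reverse order and writing $r_i=s_{m+1-i}$, one checks that $r_i$ has digits $c_{1,j}+(m-1)k-(i-1)k~(\bmod~9)$, so with the seed $t_{1,j}=c_{1,j}+(m-1)k~(\bmod~9)$ it takes precisely the form $r_{i,j}=t_{1,j}-(i-1)k~(\bmod~9)$ demanded by Eq.~(\ref{eqa:new-k-string-groups}); intuitively, stepping forward by $+k$ and stepping backward by $-k$ trace out one and the same cyclic progression modulo $9$. Hence Theorem~\ref{thm:second-mixed-string-operation} applies verbatim and yields the companion every-zero string group $\{F_{[-]k}(X);\ominus\oplus(\bmod~9)\}$. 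Thus each constructed $X$ simultaneously induces both promised groups.

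Finally, to secure infinitude I would let the length $n$ range over all positive integers while keeping $k$ fixed (for definiteness one may take $k=1$, so that $m=9$): the resulting sets $X^{(1)},X^{(2)},\dots$ consist of strings of pairwise distinct ranks $n$, hence are pairwise distinct $[0,9]$-string sets, giving infinitely many. Combining this with the two-operation observation of the previous paragraph completes the argument.

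The main obstacle I anticipate is not the infinitude, which is immediate once the length is allowed to grow, but confirming that the componentwise reduction modulo $9$ interacts correctly with the index bookkeeping $\lambda=i+r-p~(\bmod~m)$, so that the arithmetic-progression structure of $X$ is genuinely preserved under both $\oplus\ominus$ and $\ominus\oplus$ even at the boundary where a digit crosses between $0$ and $9$. This compatibility, however, is exactly the content already verified inside Theorem~\ref{thm:first-mixed-string-operation} and Theorem~\ref{thm:second-mixed-string-operation} via Eq.~(\ref{eqa:new-k-string-groups00}) and Eq.~(\ref{eqa:new-k-string-groups}); consequently the present theorem reduces to packaging those results into an infinite family and noting the forward/backward symmetry of a single progression.
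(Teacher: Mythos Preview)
Your proposal is correct and follows essentially the same construction as the paper: build, from an arbitrary seed $[0,9]$-string, the arithmetic-progression set $X=\{s(t)\}$ with step $k$ modulo $9$, and verify that both mixed operations $\oplus\ominus$ and $\ominus\oplus$ close on it. The paper carries out the closure verification by direct digitwise computation (your Eq.~(\ref{eqa:0-9-string-group11}) and Eq.~(\ref{eqa:0-9-string-group22}) analogues) rather than by citing Theorems~\ref{thm:first-mixed-string-operation} and~\ref{thm:second-mixed-string-operation}, and obtains infinitude from the arbitrary choice of seed rather than from varying the length $n$; your packaging via the two prior theorems and the forward/backward-listing symmetry is a bit cleaner but not substantively different.
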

\begin{proof} Take a $[0,9]$-string $s=c_{1}c_{2}\cdots c_{n}$ with $c_{j}\in [0,9]$, we make $[0,9]$-strings $s(t)=c_{t,1}c_{t,2}\cdots c_{t,n}$ with $c_{t,j}\in [0,9]$, where each $c_{t,j}=c_{j}+t~(\bmod~9)$ for $j\in [1,n]$ and $t\in [0,8]$. Since
\begin{equation}\label{eqa:0-9-string-group11}
{
\begin{split}
[c_{r,j}+c_{l,j}-c_{p,j}~(\bmod~9)]&=(c_{j}+r)+(c_{j}+l)-(c_{j}+p)~(\bmod~9)\\
&=c_{j}+(r+l-p)~(\bmod~9)\\
&=c_{\lambda,j}\in [0,9]
\end{split}}
\end{equation} then the string set $X_1=\{s(t)=c_{t,1}c_{t,2}\cdots c_{t,n}:t\in [0,8]\}$ forms an \emph{every-zero string group} $\{F_{[+]}(X_1);\oplus \ominus (\bmod~9)\}$ by the operation ``$\oplus \ominus$'' defined by $$s(r)[\oplus \ominus_p]s(l):=s(r)[+]s(l)[-]s(p)=s(\lambda)
$$ with the index $\lambda=r+l-p~(\bmod~9)\in [0,8]$ based on Eq.(\ref{eqa:0-9-string-group11}).

Moreover, because of
\begin{equation}\label{eqa:0-9-string-group22}
{
\begin{split}
[c_{r,j}-c_{l,j}+c_{p,j}~(\bmod~9)]&=(c_{j}+r)-(c_{j}+l)+(c_{j}+p)~(\bmod~9)\\
&=c_{j}+(r-l+p)~(\bmod~9)\\
&=c_{\mu,j}\in [0,9]
\end{split}}
\end{equation} so we get another \emph{every-zero string group} $\{F_{[-]}(X_1);\ominus \oplus (\bmod~9)\}$ based on the string set $X_1$ by the operation ``$\ominus \oplus$'' defined by
$$s(r)[\ominus\oplus_p]s(l):=s(p)[-]s(r)[+]s(l)=s(\mu)\in X_1
$$ with the index $\mu=r-l+p~(\bmod~9)\in [0,8]$ based on Eq.(\ref{eqa:0-9-string-group22}).

Since $[0,9]$-strings $s(mk)=c_{mk,1}c_{mk,2}\cdots c_{mk,n}$ with $c_{mk,j}\in [0,9]$ for $j\in [1,n]$, where each $c_{mk,j}=c_{j}+mk~(\bmod~9)$ and $k\in [0,8]$ and $m\in [1,8]$, however, $c_{mk,j}\in [0,9]$, that is $c_{mk,j}\in X_1$. Thereby,
$$X_{mk}=\{s(mk)=c_{mk,1}c_{mk,2}\cdots c_{mk,n}:~k\in [0,8], m\in [1,8]\}=X_1
$$ which implies this theorem holding true.
\end{proof}

\begin{thm}\label{thm:666666}
For any $[0,9]$-string $s=c_{1}c_{2}\cdots c_{n}$ with $c_{j}\in [0,9]$, there is a $[0,9]$-string set $X_k=\{s(mk)=c_{mk,1}c_{mk,2}\cdots c_{mk,n}:c_{mk,j}\in [0,9]\}$ with $c_{mk,j}=c_{j}+mk~(\bmod~9)$ for $m\in [1,A_k]$ and $k\in [0,8]$, then each of two every-zero string groups
$$\{F_{[+]k}(X);\oplus \ominus(\bmod~9)\}\textrm{ and }\{F_{[-]k}(X_k);\ominus \oplus (\bmod~9)\}$$ contains three elements only as $k=3,6$.
\end{thm}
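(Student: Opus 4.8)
The plan is to recognize the set $X_k$ as the orbit of the single string $s$ under the cyclic group of uniform digit-shifts generated by $k$ inside the additive group $\mathbb{Z}_9$, and then to reduce the entire statement to computing the order of $k$ in $\mathbb{Z}_9$. Writing $s(t)$ for the string obtained from $s$ by adding the constant $t$ to every digit $(\bmod~9)$, the defining relation $c_{mk,j}=c_j+mk~(\bmod~9)$ shows that $s(mk)$ depends on $m$ only through the residue $mk~(\bmod~9)$. Hence the distinct members of $X_k$ are in bijection with the cyclic subgroup $\langle k\rangle=\{\,mk~(\bmod~9):m\in\mathbb{Z}\,\}\subseteq\mathbb{Z}_9$, and I may take $A_k=|\langle k\rangle|$ so that $m\in[1,A_k]$ enumerates $X_k$ exactly once. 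By the construction in Theorem \ref{thm:first-mixed-string-operation} and Theorem \ref{thm:second-mixed-string-operation} (with the role of the increment $(i-1)k$ played here by $mk$), this set $X_k$ already carries the two promised every-zero string groups $\{F_{[+]k}(X_k);\oplus\ominus(\bmod~9)\}$ and $\{F_{[-]k}(X_k);\ominus\oplus(\bmod~9)\}$, and the order of each group equals the cardinality $|X_k|$.

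First I would establish the key bijection $|X_k|=|\langle k\rangle|=\mathrm{ord}_{\mathbb{Z}_9}(k)$. The direction that distinct strings force distinct residues is immediate; the reverse injectivity — distinct residues $t\neq t'$ in $\{0,1,\dots,8\}$ give distinct strings $s(t)\neq s(t')$ — holds because the shift acts identically on every coordinate, so it suffices that on any one digit $c_{j_0}$ the values $c_{j_0}+t$ and $c_{j_0}+t'$ differ modulo $9$, which they do whenever $t\not\equiv t'$. I would then record the standard fact $\mathrm{ord}_{\mathbb{Z}_9}(k)=9/\gcd(k,9)$, so that $|X_k|=9/\gcd(k,9)$.

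The remaining step is a finite elementary computation. The group order $|X_k|=9/\gcd(k,9)$ equals $3$ precisely when $\gcd(k,9)=3$, and among the residues $k\in[0,8]$ this holds exactly for $k=3$ and $k=6$; indeed the full tally gives order $1$ for $k=0$, order $9$ for $k\in\{1,2,4,5,7,8\}$, and order $3$ only for $k\in\{3,6\}$. This yields the claim that each of the two every-zero string groups contains exactly three elements if and only if $k=3$ or $k=6$.

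The hard part will not be any single deep argument but rather handling the nonstandard arithmetic cleanly: the alphabet $[0,9]$ has ten symbols while the modulus is $9$, so $0$ and $9$ represent the same residue, and the operations in Definition \ref{defn:0-9-string-4-operation} keep a sum equal to $9$ unreduced. I would neutralize this by fixing once and for all the reduced residue system $\{0,1,\dots,8\}$ for the shift amounts $mk~(\bmod~9)$, which guarantees that the $A_k$ shifts are genuinely distinct and that no two orbit representatives collapse through the $0$-versus-$9$ identification; any occurrence of the symbol $9$ inside a string is then treated only as a stored digit, not as a distinct shift value. Once this bookkeeping is pinned down, the orbit-counting and the $\gcd$ computation finish the proof.
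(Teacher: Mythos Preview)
Your argument is correct and is exactly the natural orbit/cyclic-subgroup count one expects here. The paper itself states this theorem without proof; it only proves the preceding theorem, whose last paragraph observes that every shifted string $s(mk)$ lies in the full nine-element set $X_1$, and then records the present statement as a consequence. Your computation $|X_k|=|\langle k\rangle_{\mathbb{Z}_9}|=9/\gcd(k,9)$ makes that consequence explicit and pins down precisely why $k=3,6$ are the unique values giving a three-element group, so your write-up actually supplies the argument the paper leaves to the reader. Your handling of the $[0,9]$ versus $\bmod~9$ bookkeeping is also appropriate given the paper's conventions in Definition~\ref{defn:0-9-string-4-operation}.
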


\subsection{Number-based super-strings}

We will do some generalizations of $[0,9]$-strings for obtaining super-strings and some algebraic operation on them in this subsection.

\begin{defn} \label{defn:number-based-super-string-def}
$^*$ A \emph{$(9)$s-number-based super-string} $s_{uper}=C_{1}C_{2}\cdots C_{n}$ is defined by $C_{j}\in [0,(9)_j]$ for $j\in [1,n]$, where each positive integer $(9)_j$ is made up of $a_j$ nines, that is $(9)_j=10^{a_j}-1$, also, $(9)_j+1~(\bmod~(9)_j)=1$. We make new $(9)$s-number-based super-strings $C_{j,m_j}=C_{j}+m_j~(\bmod~(9)_j)$ with $m_j\in [1,(9)_j]$ and $j\in [1,n]$, and get $(9)$s-number-based super-strings $$s_{uper}(\{m_j\}^n_{j=1})=C_{1,m_1}C_{2,m_2}\cdots C_{n,m_n},~m_j\in [1,(9)_j],~j\in [1,n]
$$ and put them into a \emph{$(9)$s-number-based super-string set}
\begin{equation}\label{eqa:555555}
S^n_{uper}(\{(9)_j\}^n_{j=1})=\big \{s_{uper}(\{m_j\}^n_{j=1})=C_{1,m_1}C_{2,m_2}\cdots C_{n,m_n}:m_j\in [1,(9)_j],j\in [1,n]\big \}
\end{equation} with the cardinality $\big |S^n_{uper}(\{(9)_j\}^n_{j=1})\big |=\prod ^n_{j=1}(9)_j$.\qqed
\end{defn}

\begin{defn} \label{defn:uniformly-arithmetic-addition-subtraction}
$^*$ In Definition \ref{defn:number-based-super-string-def}, the $(9)$s-number-based super-string $s_{uper}(\{m_j\}^n_{j=1})$ has its own complementary as
$$
\overline{s}_{uper}(\{m_j\}^n_{j=1})=[(9)_1-C_{1,m_1}][(9)_2-C_{2,m_2}]\cdots [(9)_n-C_{n,m_n}]
$$ and its own inverse
$$
s^{-1}_{uper}(\{m_j\}^n_{j=1})=C_{n,m_n}C_{n-1,m_{n-1}}\cdots C_{2,m_2}C_{1,m_1}
$$ Moreover, we define the \emph{uniformly arithmetic addition} by
\begin{equation}\label{eqa:arithmetic-addition-operation11}
{
\begin{split}
& s_{uper}(\{m_j\}^n_{j=1})[+](ik)~(\bmod~\{m_j\}^n_{j=1})\\
=&[C_{1,m_1}+(ik)~(\bmod~(9)_1)][C_{2,m_2}+(ik)~(\bmod~(9)_2)]\cdots [C_{n,m_n}+(ik)~(\bmod~(9)_n)]
\end{split}}
\end{equation} write $s_{uper}([+](ik);\{m_j\}^n_{j=1})=s_{uper}(\{m_j\}^n_{j=1})[+](ik)~(\bmod~\{m_j\}^n_{j=1})$. Similarly, we define the \emph{uniformly arithmetic subtraction} by
\begin{equation}\label{eqa:arithmetic-subtraction-operation22}
{
\begin{split}
& s_{uper}([-](ik);\{m_j\}^n_{j=1})=s_{uper}(\{m_j\}^n_{j=1})[-](ik)~(\bmod~\{m_j\}^n_{j=1})\\
=&[C_{1,m_1}-(ik)~(\bmod~(9)_1)][C_{2,m_2}-(ik)~(\bmod~(9)_2)]\cdots [C_{n,m_n}-(ik)~(\bmod~(9)_n)]
\end{split}}
\end{equation} We call $s_{uper}([+](ik);\{m_j\}^n_{j=1})$ defined in Eq.(\ref{eqa:arithmetic-addition-operation11}) and $s_{uper}([-](ik);\{m_j\}^n_{j=1})$ defined in Eq.(\ref{eqa:arithmetic-subtraction-operation22}) \emph{uniformly $k$-arithmetic number-based strings}.\qqed
\end{defn}

\begin{rem}\label{rem:333333}
By Definition \ref{defn:uniformly-arithmetic-addition-subtraction}, we have the following operations
$${
\begin{split}
S_{[+]}=&\big (s_{uper}(\{m_j\}^n_{j=1})[+](ik)~(\bmod~\{m_j\}^n_{j=1})\big )[-]\big (s_{uper}(\{m_j\}^n_{j=1})[-](ik)~(\bmod~\{m_j\}^n_{j=1})\big )\\
=&[2ik~(\bmod~(9)_1)][2ik~(\bmod~(9)_2)]\cdots [2ik~(\bmod~(9)_n)]
\end{split}}$$ and
$${
\begin{split}
S_{[-]}=&\big (s_{uper}(\{m_j\}^n_{j=1})[-](ik)~(\bmod~\{m_j\}^n_{j=1})\big )[-]\big (s_{uper}(\{m_j\}^n_{j=1})[+](ik)~(\bmod~\{m_j\}^n_{j=1})\big )\\
=&[-2ik~(\bmod~(9)_1)][-2ik~(\bmod~(9)_2)]\cdots [-2ik~(\bmod~(9)_n)]
\end{split}}$$ Immediately,
\begin{equation}\label{eqa:555555}
S_{[+]}[+]S_{[-]}=(9)_1~(9)_2~\cdots ~(9)_n
\end{equation} so we have the complementary $\overline{S}_{[+]}=S_{[-]}$, or the complementary $S_{[+]}=\overline{S}_{[-]}$.

In application, we have the \emph{non-uniformly arithmetic addition/subtraction} as
\begin{equation}\label{eqa:non-uniformly-arithmetic-addition-operation}
{
\begin{split}
S^*_{[\pm]}=& s_{uper}(\{m_j\}^n_{j=1})[\pm]\{k_j\}^n_{j=1}~(\bmod~\{m_j\}^n_{j=1})\\
=&[C_{1,m_1}\pm k_1~(\bmod~(9)_1)][C_{2,m_2}\pm k_2~(\bmod~(9)_2)]\cdots [C_{n,m_n}\pm k_n~(\bmod~(9)_n)]
\end{split}}
\end{equation} with some $k_r\neq k_s$ for $r\neq s$.\paralled
\end{rem}

\begin{example}\label{exa:8888888888}
Given a number-based string $s_{uper}(\{(9)_j\}^6_{j=1})=6174~123~0~618~3~141$, where $(9)_1=9999$, $(9)_2=999$,$(9)_3=9$, $(9)_4=999$, $(9)_5=9$ and $(9)_6=999$. We have
$${
\begin{split}
s([+]152)=&s_{uper}(\{(9)_j\}^6_{j=1})[+]152~(\bmod~\{(9)_j\}^6_{j=1})\\
=&[6174+152~(\bmod~(9)_1)][123+152~(\bmod~(9)_2)][0+152~(\bmod~(9)_3)]\\
&[618+152~(\bmod~(9)_4)][3+152~(\bmod~(9)_5)][141+152~(\bmod~(9)_6)]\\
=&6326~275~8~770~2~293
\end{split}}
$$
$${
\begin{split}
s([-]152)=&s_{uper}(\{(9)_j\}^6_{j=1})[-]152~(\bmod~\{(9)_j\}^6_{j=1})\\
=&[6174-152~(\bmod~(9)_1)][123-152~(\bmod~(9)_2)][0-152~(\bmod~(9)_3)]\\
&[618-152~(\bmod~(9)_4)][3-152~(\bmod~(9)_5)][141-152~(\bmod~(9)_6)]\\
=&6022~970~1~466~4~988
\end{split}}
$$ We get

$s([+]152)[-]s([-]152)=304~304~7~304~7~304$, $s([-]152)[-]s([+]152)=9695~695~2~695~2~695$,\\
and $\overline{s([+]152)[-]s([-]152)}=s([-]152)[-]s([+]152)$.\qqed
\end{example}

\begin{problem}\label{qeu:444444}
\textbf{The $(9)$s-number-based Super-string Problem.} We rewrite a number-based string $s=c_{1}c_{2}\cdots c_{n}$ as $(9)$s-number-based super-strings $s_i=a_{i,1}a_{i,2}\cdots a_{i,m_i}$, where
$$
a_{i,1}=c_{1}c_{2}\cdots c_{n_1},~a_{i,j}=c_{n_{j-1}+1}c_{n_{j-1}+2}\cdots c_{n_j}\textrm{ with }j\in [1,i_s],~n_{0}=0,~n_{n_{i_s}}=n
$$ and $n_j-(n_{j-1})+1=(9)_{i,j}$, then rewrite $s_i$ as $s_i=s_{uper}(\{(9)_{i,j}\}^{m_i}_{j=1})$, and we put these $(9)$s-number-based super-strings into a string set
$$C_{ut}(s)=\{s_i=s_{uper}(\{(9)_{i,j}\}^{m_i}_{j=1}):i\in [1,M(s)]\}
$$ \textbf{Compute} the exact value of $M(s)$.

Moreover, we make other $(9)$s-number-based super-strings $s^*_k=b_{k,1}b_{k,2}\cdots b_{k,d_k}$ based on the given string $s=c_{1}c_{2}\cdots c_{n}$, such that each $c_j$ is in one $b_{k,r}$, but not in other $b_{k,s}$ if $s\neq r$, where each $b_{k,i}$ is made up of $(9)_{k,i}$ numbers of $[0,9]$, we rewrite $s^*_k$ as $s^*_k=s_{uper}(\{(9)_{k,i}\}^{d_k}_{i=1})$, and get a $(9)$s-number-based super-string set
$$R_{ewrite}(s)=\{s^*_k=s_{uper}(\{(9)_{k,i}\}^{d_k}_{i=1}):k\in [1,M^*(s)]\}
$$ \textbf{Determine} the value of $M^*(s)$.
\end{problem}

\begin{thm}\label{thm:uniformly-super-string-groups}
$^*$ By Definition \ref{defn:number-based-super-string-def}, we do the uniformly arithmetic addition defined in Eq.(\ref{eqa:arithmetic-addition-operation11}) to a \emph{uniformly $(9)$s-number-based super-string}
$$s_{uper}(A)=C_{1,A_1}C_{2,A_2}\cdots C_{n,A_n}
$$ with $A_j=10^A-1$ and $j\in [1,n]$, and get uniformly $k$-arithmetic number-based strings as
\begin{equation}\label{eqa:uniformly-arithmetic-addition-operation}
{
\begin{split}
s_{uper}([+](ik))=&s_{uper}(A)[+](ik)~(\bmod~A)\\
=&[C_{1,A_1}+(ik)~(\bmod~A)][C_{2,A_2}+(ik)~(\bmod~A)]\cdots [C_{n,A_n}+(ik)~(\bmod~A)]
\end{split}}
\end{equation} There is a \emph{uniformly $(9)$s-number-based super-string set}
$$U_{nif}(A)=\big \{s_{uper}([+](ik)):~i\in [1,m],~mk\geq A\big \}
$$ such that each of the following two \emph{every-zero $(9)$s-uniformly super-string groups}
$$\{F_{[+]k}(U_{nif}(A));\oplus \ominus(\bmod~10^A-1)\}\textrm{ and } \{F_{[-]k}(U_{nif}(A));\ominus \oplus (\bmod~10^A-1)\}
$$ contains three elements only when $k=\frac{1}{3}(10^A-1), \frac{2}{3}(10^A-1)$.
\end{thm}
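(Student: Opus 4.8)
The plan is to reduce the whole statement to one elementary fact: the cardinality of the every-zero group built from $U_{nif}(A)$ equals the order of the cyclic subgroup generated by $k$ in the additive group $\mathbb{Z}_{M}$ of residues modulo $M:=10^A-1$, namely $M/\gcd(k,M)$, and that this order is $3$ for exactly the two claimed values of $k$. This is precisely the modulus-$M$ generalisation of the special case $k=3,6$ recorded (with modulus $9$) in the preceding theorem.

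First I would observe that, since the uniformly arithmetic addition in Eq.(\ref{eqa:uniformly-arithmetic-addition-operation}) adds the \emph{same} increment $ik\ (\bmod\ M)$ to \emph{every} block $C_{r,A_r}$ of the uniformly $(9)$s-number-based super-string $s_{uper}(A)$, two members $s_{uper}([+](ik))$ and $s_{uper}([+](jk))$ of $U_{nif}(A)$ coincide if and only if $C_{r,A_r}+ik\equiv C_{r,A_r}+jk\ (\bmod\ M)$ for each $r\in[1,n]$, that is, if and only if $ik\equiv jk\ (\bmod\ M)$. Hence the assignment $ik\ (\bmod\ M)\mapsto s_{uper}([+](ik))$ is a bijection between the set of distinct super-strings in $U_{nif}(A)$ and the set $\langle k\rangle=\{ik\ (\bmod\ M):i\in\mathbb{Z}\}$, which is exactly the cyclic subgroup of $\mathbb{Z}_{M}$ generated by $k$, consisting of all multiples of $\gcd(k,M)$. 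The hypothesis $mk\ge A$ (more precisely $mk\ge M$) is what guarantees that letting $i$ run over $[1,m]$ already sweeps out a full period, so that $U_{nif}(A)$ realizes the entire subgroup rather than a proper part of it.

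Next I would transport the every-zero structure across this bijection. By Theorem \ref{thm:first-mixed-string-operation} the operation $s_i[\oplus\ominus_p]s_j=s_i[+]s_j[-]s_p$ corresponds, on residues, to $ik+jk-pk=(i+j-p)k\ (\bmod\ M)$, and by Theorem \ref{thm:second-mixed-string-operation} the operation $[\ominus\oplus]$ corresponds to $ik-jk+pk=(i-j+p)k\ (\bmod\ M)$; in both cases the outcome is again a multiple of $k$, so $\langle k\rangle$ is closed and the bijection respects the operations. Consequently both $\{F_{[+]k}(U_{nif}(A));\oplus\ominus(\bmod\ M)\}$ and $\{F_{[-]k}(U_{nif}(A));\ominus\oplus(\bmod\ M)\}$ have exactly $|\langle k\rangle|=M/\gcd(k,M)$ elements. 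It then remains to do the arithmetic: since $10\equiv1\ (\bmod\ 3)$ we have $M=10^A-1\equiv0\ (\bmod\ 3)$, so $M/3$ is an integer and $k=\tfrac13M,\ \tfrac23M$ make sense. Writing $k=(M/3)t$ gives $\gcd(k,M)=(M/3)\gcd(t,3)$, so for $t=1,2$ we get $\gcd(k,M)=M/3$ and hence $M/\gcd(k,M)=3$; conversely, order $3$ forces $\gcd(k,M)=M/3$, i.e. $(M/3)\mid k$ with $\gcd(k/(M/3),3)=1$, and in the range $0<k<M$ this leaves precisely $k=M/3$ and $k=2M/3$.

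The main obstacle is not conceptual but lodged in the first step: one must argue carefully that uniform modular addition collapses string-equality to the single congruence $ik\equiv jk\ (\bmod\ M)$, so that the block decomposition plays no role, and one must pin down the range of $i$ through a correct quantitative reading of the condition $mk\ge A$ so that $U_{nif}(A)$ genuinely exhausts the cyclic subgroup $\langle k\rangle$. Once these two points are settled, the order count $M/\gcd(k,M)$ and the gcd computation isolating $k\in\{M/3,2M/3\}$ are routine.
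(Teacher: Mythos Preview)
The paper states this theorem without proof, so there is no argument of the authors' to compare against. Your reduction is the right one and is essentially forced by the construction: because the uniform addition shifts every block by the same residue modulo $M:=10^A-1$, two strings $s_{uper}([+](ik))$ and $s_{uper}([+](jk))$ agree iff $ik\equiv jk\ (\bmod\ M)$, so the set of distinct strings is in bijection with the cyclic subgroup $\langle k\rangle\le\mathbb{Z}_M$, of order $M/\gcd(k,M)$; the every-zero operations $[\oplus\ominus]$ and $[\ominus\oplus]$ transport to the corresponding affine operations on $\langle k\rangle$, and the order-$3$ condition pins down $k\in\{M/3,2M/3\}$ since $3\mid M$. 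That is a complete and correct proof.

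Two small points worth tightening in the write-up. First, the theorem's displayed formula writes ``$(\bmod~A)$'' while the group is taken ``$(\bmod~10^A-1)$''; you already read this correctly as $M=10^A-1$ throughout, and it is worth saying so explicitly once rather than leaving it implicit. Second, your parenthetical ``more precisely $mk\ge M$'' is exactly the condition needed for $i\in[1,m]$ to sweep out all of $\langle k\rangle$; the stated bound $mk\ge A$ is too weak for that purpose, so flag it as a typo in the hypothesis rather than a point to be argued around.
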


\begin{thm}\label{thm:super-string-groups00}
$^*$ Let $B=\max\{(9)_j:j\in [1,n]\}$ in the $(9)$s-number-based super-string set $S^n_{uper}(\{(9)_j\}^n_{j=1})$ defined in Definition \ref{defn:number-based-super-string-def}. Then there are two \emph{every-zero $(9)$s-uniformly super-string groups} of order $B$ below
$$
\{F_{[+]k}(S^n_{uper}(\{m_j\}^n_{j=1}));\oplus \ominus(\bmod~B)\}\textrm{ and }\{F_{[-]k}(S^n_{uper}(\{m_j\}^n_{j=1}));\ominus \oplus (\bmod~B)\},k\in [1,B]
$$
\end{thm}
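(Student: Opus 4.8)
The plan is to reuse the machinery of Theorem \ref{thm:first-mixed-string-operation}, Theorem \ref{thm:second-mixed-string-operation} and Theorem \ref{thm:uniformly-super-string-groups} almost verbatim, the only new feature being that the coordinates now carry distinct moduli $(9)_1,(9)_2,\dots,(9)_n$ instead of a single common one. First I would fix a base super-string $s_{uper}(\{m_j\}^n_{j=1})=C_{1,m_1}C_{2,m_2}\cdots C_{n,m_n}$ as in Definition \ref{defn:number-based-super-string-def}, and use the uniformly arithmetic addition of Eq.(\ref{eqa:arithmetic-addition-operation11}) to generate the indexed family $s^{(i)}:=s_{uper}(\{m_j\}^n_{j=1})[+](ik)~(\bmod~\{(9)_j\}^n_{j=1})$, whose $j$-th coordinate is $C_{j,m_j}+ik~(\bmod~(9)_j)$, for $i$ running over a suitable index set. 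The operations to be studied are $s^{(i)}[\oplus\ominus_p]s^{(r)}:=s^{(i)}[+]s^{(r)}[-]s^{(p)}$ and, built from Eq.(\ref{eqa:arithmetic-subtraction-operation22}), its mirror $s^{(i)}[\ominus\oplus_p]s^{(r)}:=s^{(i)}[-]s^{(r)}[+]s^{(p)}$.

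The decisive and purely local step is the coordinate-wise identity
\[
(C_{j,m_j}+ik)+(C_{j,m_j}+rk)-(C_{j,m_j}+pk)\equiv C_{j,m_j}+(i+r-p)k \pmod{(9)_j},
\]
valid simultaneously for every $j\in[1,n]$. It exhibits the product as another uniform shift, indexed by $\lambda=i+r-p$; the analogous identity gives $\mu=i-r+p$ for the $\ominus\oplus$ operation. From this single equation the four every-zero group axioms follow exactly as before: any $s^{(p)}$ serves as a zero, closure holds because the result is again a member $s^{(\lambda)}$ of the family, and associativity and commutativity reduce to those of integer addition in each coordinate. Thus the verification of the group structure itself is routine once the family and its operation are set up.

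The hard part will be justifying the reduction $(\bmod~B)$ of the index and the resulting order $B=\max\{(9)_j:j\in[1,n]\}$, because the distinct coordinate moduli create a genuine obstruction. For the collapsed index $\lambda\equiv i+r-p~(\bmod~B)$ to name the same super-string as the uncollapsed value $i+r-p$, one needs $C_{j,m_j}+(i+r-p)k\equiv C_{j,m_j}+\lambda k~(\bmod~(9)_j)$ for every $j$, i.e. $(9)_j\mid Btk$ for the relevant integer $t$; since $(9)_j=10^{a_j}-1$ and $B=10^{a_{\max}}-1$, the divisibility $(9)_j\mid B$ holds exactly when $a_j\mid a_{\max}$, which is not automatic. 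I would therefore isolate, as a preliminary lemma, the compatibility $(9)_j\mid B$ for all $j$ that makes the operation $(\bmod~B)$ well-defined (it is automatic in the uniform case of Theorem \ref{thm:uniformly-super-string-groups}, where all $(9)_j$ coincide), record the degenerate collapses where distinct indices produce equal super-strings, and then pin the exact order as $B/\gcd(k,B)$, which equals $B$ precisely for $k$ coprime to $B$; this is also what explains the three-element degeneration of Theorem \ref{thm:uniformly-super-string-groups}. Only after the index set $[1,B]$ is shown to parametrize the family faithfully does $i\mapsto s^{(i)}$ descend to an action of the cyclic group of order $B$, so that both $\{F_{[+]k}(S^n_{uper}(\{m_j\}^n_{j=1}));\oplus\ominus(\bmod~B)\}$ and $\{F_{[-]k}(S^n_{uper}(\{m_j\}^n_{j=1}));\ominus\oplus(\bmod~B)\}$ are genuine every-zero super-string groups of order $B$. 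Establishing this compatibility lemma, rather than the axiom-checking, is the real content of the proof.
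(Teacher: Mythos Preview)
The paper states this theorem without proof, so there is nothing to compare your argument against directly. Your reconstruction of the intended mechanism is correct and matches how the paper handles the uniform case in Theorem \ref{thm:uniformly-super-string-groups} and the general string-group template of Theorems \ref{thm:first-mixed-string-operation} and \ref{thm:second-mixed-string-operation}: the coordinate-wise identity you wrote down is exactly the engine, and the four axioms drop out immediately.

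You have also put your finger on a real defect that the paper glosses over. For the index reduction $\lambda\equiv i+r-p~(\bmod~B)$ to be consistent with the coordinate-wise reductions $(\bmod~(9)_j)$, one needs $(9)_j\mid B$ for every $j$, and your observation that $10^{a_j}-1\mid 10^{a_{\max}}-1$ holds iff $a_j\mid a_{\max}$ is correct; the example $(9)_1=99$, $(9)_2=999$ already breaks it. The paper's statement, read literally, does not impose this hypothesis, so the claimed order $B$ is not justified in general. Your plan to isolate the divisibility condition as a lemma and to record the true order as $B/\gcd(k,B)$ is the right repair; it is more careful than what the paper offers.
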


\subsection{Self-breeding number-based strings}

We present an algorithm for producing self-breeding number-based strings in this subsection.

\vskip 0.4cm

\textbf{Input:} A number-based string set $S_1=\{s_1(r_0):r_0\in [1,M_1]\}$, where each $[0,9]$-string $s_1(r_0)=c_{1,r,1}c_{1,r,1}\cdots c_{1,r,a_r}$ with $c_{1,r,j}\in [0,9]$ has $b_{\textrm{yte}}(s_1(r_0))$ bytes, so $S_1$ has $$b_{\textrm{yte}}(S_1)=\sum^{M_1}_{r=1}b_{\textrm{yte}}(s_1(r))
$$ bytes, in total.

\textbf{Output:} $t$-rank self-breeding number-based strings for $t\geq 2$.

\textbf{Step 1.} Let $(M_1)!=M_2$. We make a \emph{self-breeding number-based string set} $S_2=\{s_2(r_1):r_1\in [1,M_2]\}$, in which each \emph{self-breeding number-based string}
$$
s_2(r_1)=s_1(r_1,r_0,1)s_1(r_1,r_0,2)\cdots s_1(r_1,r_0,M_1)
$$ has $b_{\textrm{yte}}(s_2(r_1))$ bytes, where $s_2(r_1)$ is a permutation of elements of $S_1$ for $r_1\in [1,M_2]=[1,(M_1)!]$. Thereby, $S_2$ has $b_{\textrm{yte}}(S_2)=b_{\textrm{yte}}(S_1)\cdot M_2=b_{\textrm{yte}}(S_1)\cdot (M_1)!$ bytes, in total.

\textbf{Step 2.} Let $(M_2)!=M_3$. We make a self-breeding number-based string set $S_3=\{s_3(r_2):r_2\in [1,M_3]\}$, each self-breeding number-based string
$$s_3(r_2)=s_2(r_2,r_1,1)s_2(r_2,r_1,2)\cdots s_2(r_2,r_1,M_2)
$$ where $s_3(r_2)$ is a permutation of elements of $S_2$ for $r_2\in [1,M_3]=[1,(M_2)!]$. So, $S_3$ has
$$b_{\textrm{yte}}(S_3)=b_{\textrm{yte}}(S_1)\cdot M_3=b_{\textrm{yte}}(S_1)\cdot (M_2)!\cdot (M_1)!=b_{\textrm{yte}}(S_1)\cdot \prod ^2_{j=1}(M_j)!
$$ bytes, in total.

\textbf{Step $t$.} Let $(M_t)!=M_{t+1}$. We make a self-breeding number-based string set $S_{t+1}=\{s_{t+1}(r_t):r_t\in [1,M_{t+1}]\}$, in which each \emph{self-breeding number-based string}
$$
s_{t+1}(r_t)=s_{t+1}(r_t,r_{t-1},1)s_{t+1}(r_t,r_{t-1},2)\cdots s_{t+1}(r_t,r_{t-1},M_t)
$$ where $s_{t+1}(r_t)$ is a permutation of elements of $S_t$ for $r_{t+1}\in [1,M_{t+1}]=[1,(M_t)!]$. Thereby, the sum of element's bytes of $S_{t+1}$ is denoted as
\begin{equation}\label{eqa:self-breeding-string-bytes}
{
\begin{split}
b_{\textrm{yte}}(S_{t+1})=b_{\textrm{yte}}(S_1)\cdot M_{t+1}=b_{\textrm{yte}}(S_1)\cdot \prod ^t_{j=1}(M_j)!=b_{\textrm{yte}}(S_1)\cdot (M_t)! (M_{t-1})!\cdots (M_2)!(M_1)!
\end{split}}
\end{equation} with $M_k=(M_{k-1})!$ for $k\geq 2$.

\textbf{Step $t+1$.} Return $t$-rank self-breeding number-based strings.

\vskip 0.4cm

For Eq.(\ref{eqa:self-breeding-string-bytes}), we rewrite
\begin{equation}\label{eqa:recucive-string-bytes}
M_{k+1}=(M_{k})!=[(M_{k-1})!]!=[M_{k-1}](!^{2})=[(M_{k-2})!](!^{2})=\cdots =[M_{1}](!^{k})
\end{equation} and let $[M_{1}](!^{1})=(M_1)!$, then we have
\begin{equation}\label{eqa:self-breeding-string-bytes22}
{
\begin{split}
b_{\textrm{yte}}(S_{t+1})=b_{\textrm{yte}}(S_1)\cdot [M_{1}](!^{t}) [M_{1}](!^{t-1})\cdots [M_{1}](!^{2})(M_1)!=b_{\textrm{yte}}(S_1)\cdot \prod ^t_{j=1} [M_{1}](!^{j})
\end{split}}
\end{equation}

\begin{rem}\label{rem:333333}
Each self-breeding number-based string set $S_{t+1}=\{s_{t+1}(r_t):r_t\in [1,M_{t+1}]\}$ can be used to build up an every-zero self-breeding number-based string group $\{F_{M_{t+1}}(S_{t+1});\oplus\ominus\}$ defined by the operation
\begin{equation}\label{eqa:self-breeding-string-group}
s_{t+1}(i)~[\oplus \ominus_k] ~s_{t+1}(j):=s_{t+1}(i)\oplus s_{t+1}(j)\ominus s_{t+1}(k)=s_{t+1}(\lambda)\in S_{t+1}
\end{equation} with the index $\lambda=i+j-k~(\bmod~M_{t+1})$ for a preappointed \emph{zero} $s_{t+1}(k)\in S_{t+1}$.\paralled
\end{rem}

\begin{example}\label{exa:8888888888}
Let $S_1=\{s_1(1)=$214, $s_1(2)=$1001, $s_1(3)=$68$\}$ be a self-breeding number-based string set, we have $b_{\textrm{yte}}(S_1)=\sum^3_{r=1}b_{\textrm{yte}}(s_1(r))=9$ bytes in total. Let $M_1=3!=6$.

We make a self-breeding number-based string set $S_2=\{s_2(r):r\in [1,6]\}$, in which

$s_2(1)=$214100168, $s_2(2)=$214681001, $s_2(3)=$100168214, $s_2(4)=$100121468, $s_2(5)=$682141001, $s_2(6)=$681001214, so there are $b_{\textrm{yte}}(S_2)=\sum^6_{r=1}b_{\textrm{yte}}(s_2(r))=6\cdot 9=54$ bytes in total.

Let $M_2=(M_1)!=6!=720$. We make a self-breeding number-based string set $S_3=\{s_3(r):r\in [1,720]\}$ containing self-breeding number-based strings like the following two number-based strings

\begin{equation}\label{eqa:555555}
{
\begin{split}
s_3(1)&=214100168214681001100168214100121468682141001681001214\\
s_3(2)&=100121468214100168100168214681001214682141001214681001
\end{split}}
\end{equation} with $54$ bytes for each. So, there are
$$
b_{\textrm{yte}}(S_3)=\sum^{720}_{r=1}b_{\textrm{yte}}(s_3(r))=720\cdot 6\cdot 9=38,880
$$ bytes for all elements of $S_3$, in total.

In general, we make a self-breeding number-based string set $S_{t+1}=\{s_{t+1}(r):r\in [1,M_t]\}$ with $M_{t+1}=(M_{t})!$, and there are
\begin{equation}\label{eqa:555555}
b_{\textrm{yte}}(S_{t+1})=\sum^{M_t}_{r=1}b_{\textrm{yte}}(s_{t+1}(r))=9\cdot \prod ^{t-1}_{j=1}(N_j)!
\end{equation} bytes for all number-based strings of the self-breeding number-based string set $S_{t+1}$.\qqed
\end{example}

\section{Number-based string-colorings}

Strings can be used to design new colorings of graphs. There are homogeneous strings and non-homogeneous strings, in general, we will introduce colorings on these strings in this section.

\subsection{Homogeneous number-based string-colorings}

We present several string-colorings of graphs, more or less, which are like $n$-dimension-type total colorings introduced in \cite{Yao-Wang-2106-15254v1, Yao-Sun-Hongyu-Wang-n-dimension-ICIBA2020}.

\begin{defn} \label{defn:homoge-uniformly-string-total-colorings}
$^*$ \textbf{Homogeneous string-coloring.} Let $\textbf{\textrm{S}}_{tring}(n)$ be the set of \emph{$n$-rank number-based strings} $\alpha_{1}\alpha_{2}\cdots \alpha_{n}$ with each number $\alpha_{j}\ge 0$ for $j\in [1,n]$. A $(p,q)$-graph $G$ admits a total string-coloring $f:V(G)\cup E(G)\rightarrow \textbf{\textrm{S}}_{tring}(n)$, such that
\begin{equation}\label{eqa:homogeneous-string-colorings}
f(u_k)=a_{k,1}a_{k,2}\cdots a_{k,n},~f(v_k)=b_{k,1}b_{k,2}\cdots b_{k,n},~f(u_kv_k)=c_{k,1}c_{k,2}\cdots c_{k,n}
\end{equation} for each edge $u_kv_k\in E(G)=\{u_kv_k:k\in [1,q]\}$, and let $\lambda$ be a non-negative integer. There, for each $k\in [1,q]$, are the following constraints:
\begin{asparaenum}[\textbf{\textrm{Res}}-1. ]
\item \label{stringco:adjacent-v} $f(u_k)\neq f(v_k)$ for each edge $u_kv_k\in E(G)$.
\item \label{stringco:adjacent-e} $f(u_kv_k)\neq f(u_kw_k)$ for each pair of two adjacent edges $u_kv_k,u_kw_k\in E(G)$.
\item \label{stringco:incident-v-e} $f(u_k)\neq f(u_kv_k)$ and $f(v_k)\neq f(u_kv_k)$ for each edge $u_kv_k\in E(G)$.

----- \emph{uniformly}

\item \label{stringco:e-graceful} Each $j\in [1,n]$ holds the graceful constraint $c_{k,j}=|a_{k,j}-b_{k,j}|$ true.
\item \label{stringco:each-odd} Each $j\in [1,n]$ holds the odd-graceful constraint $c_{k,j}=|a_{k,j}-b_{k,j}|$ true, and each $c_{k,j}$ for $j\in [1,n]$ is odd.
\item \label{stringco:harmonious} Each $j\in [1,n]$ holds the harmonious constraint $c_{k,j}=a_{k,j}+b_{k,j}~(\bmod~q)$ true.
\item \label{stringco:uniform-graceful} $f(E(G))=\{f(u_kv_k)=c_{k,1}c_{k,2}\cdots c_{k,n}:k\in[1$, $q]\}$ such that $c_{k,r}=k\in[1,q]$ for each $r\in [1,n]$.
\item \label{stringco:uniform-odd-graceful} $f(E(G))=\{f(u_jv_j)=t_{j,1}t_{j,2}\cdots t_{j,n}:j\in[1$, $2q-1]^o\}$ such that $t_{j,r}=j\in[1,2q-1]^o$ for each $r\in [1,n]$.

----- \emph{common-factor, anti-equitable}

\item \label{stringco:each-common-factor} Each $j\in [1,n]$ holds the common-factor constraint $c_{k,j}=\textrm{gcd}(a_{k,j},b_{k,j})$ true.
\item \label{stringco:pairwise-distinct} The anti-equitable constraint $c_{k,i}\neq c_{k,j}$ holds true for each pair of $c_{k,i}$ and $c_{k,j}$ if $i\neq j$.

----- \emph{integer partition, factorization}

\item \label{stringco:integer-partition} Each $i\in [1,n]$ holds that $a_{k,i}$ and $b_{k,i}$ are prime numbers, and $c_{k,i}=a_{k,i}+b_{k,i}$.
\item \label{stringco:integer-factorization} Each $j\in [1,n]$ holds that $a_{k,j}$ and $b_{k,j}$ are prime numbers, and $c_{k,j}=a_{k,j}\cdot b_{k,j}$.
\end{asparaenum}
\noindent \textbf{We call $f$}:

\noindent ----- \emph{proper}

\begin{asparaenum}[\textbf{\textrm{Strco}}-1. ]
\item a \emph{proper total hostring-coloring} if the constraints Res-\ref{stringco:adjacent-v}, Res-\ref{stringco:adjacent-e} and Res-\ref{stringco:incident-v-e} hold true.
\item a \emph{subtraction proper total hostring-coloring} if the constraints Res-\ref{stringco:adjacent-v}, Res-\ref{stringco:adjacent-e}, Res-\ref{stringco:incident-v-e} and Res-\ref{stringco:e-graceful} hold true.
\item an \emph{odd-subtraction proper total hostring-coloring} if the constraints Res-\ref{stringco:adjacent-v}, Res-\ref{stringco:adjacent-e}, Res-\ref{stringco:incident-v-e} and Res-\ref{stringco:each-odd} hold true.
\item a \emph{factor proper total hostring-coloring} if the constraints Res-\ref{stringco:adjacent-v}, Res-\ref{stringco:adjacent-e}, Res-\ref{stringco:incident-v-e} and Res-\ref{stringco:each-common-factor} hold true.
\item a \emph{harmonious proper total hostring-coloring} if the constraints Res-\ref{stringco:adjacent-v}, Res-\ref{stringco:adjacent-e}, Res-\ref{stringco:incident-v-e} and Res-\ref{stringco:harmonious} hold true.

----- \emph{uniformly}

\item an \emph{uniformly graceful proper total hostring-coloring} if the constraints Res-\ref{stringco:adjacent-v}, Res-\ref{stringco:adjacent-e}, Res-\ref{stringco:e-graceful} and Res-\ref{stringco:uniform-graceful} hold true.
\item an \emph{uniformly odd-graceful proper total hostring-coloring} if the constraints Res-\ref{stringco:adjacent-v}, Res-\ref{stringco:adjacent-e}, Res-\ref{stringco:e-graceful} and Res-\ref{stringco:uniform-odd-graceful} hold true.

----- \emph{anti-equitable}

\item an \emph{anti-equitable proper total hostring-coloring} if the constraints Res-\ref{stringco:adjacent-v}, Res-\ref{stringco:adjacent-e}, Res-\ref{stringco:incident-v-e}, Res-\ref{stringco:e-graceful} and Res-\ref{stringco:pairwise-distinct} hold true.

----- \emph{integer partition, factorization}

\item an \emph{uniformly partition proper total hostring-coloring} if the constraints Res-\ref{stringco:adjacent-v}, Res-\ref{stringco:adjacent-e}, Res-\ref{stringco:incident-v-e} and Res-\ref{stringco:integer-partition} hold true.
\item an \emph{uniformly factorization proper total hostring-coloring} if the constraints Res-\ref{stringco:adjacent-v}, Res-\ref{stringco:adjacent-e}, Res-\ref{stringco:incident-v-e} and Res-\ref{stringco:integer-factorization} hold true.\qqed
\end{asparaenum}
\end{defn}

\begin{figure}[h]
\centering
\includegraphics[width=16cm]{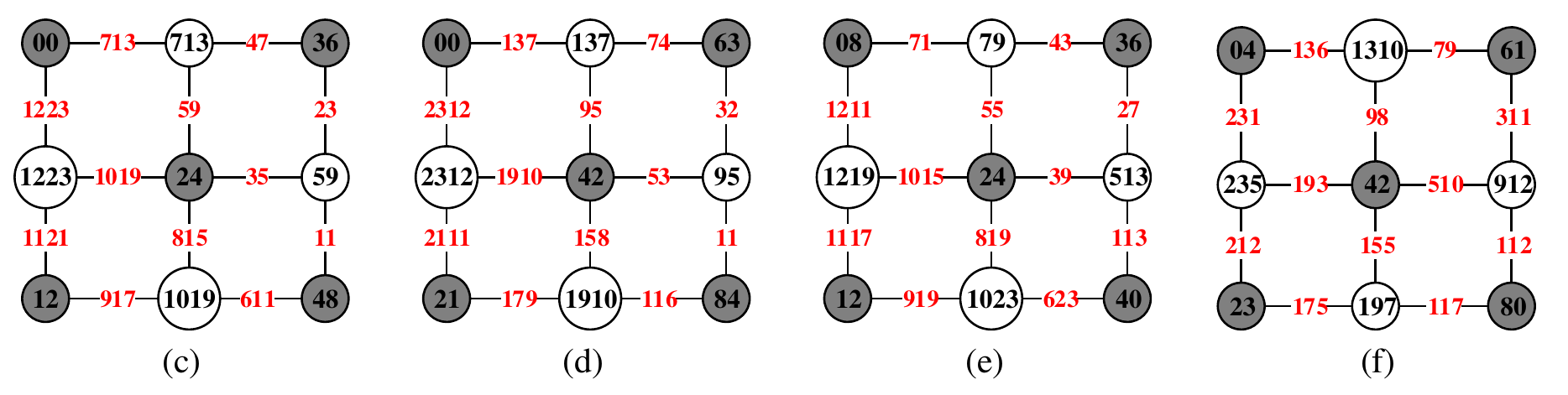}
\caption{\label{fig:n-dimension-11}{\small Four colored graphs for illustrating Definition \ref{defn:homoge-uniformly-string-total-colorings}, cited from \cite{Yao-Wang-2106-15254v1}.}}
\end{figure}

\begin{figure}[h]
\centering
\includegraphics[width=14cm]{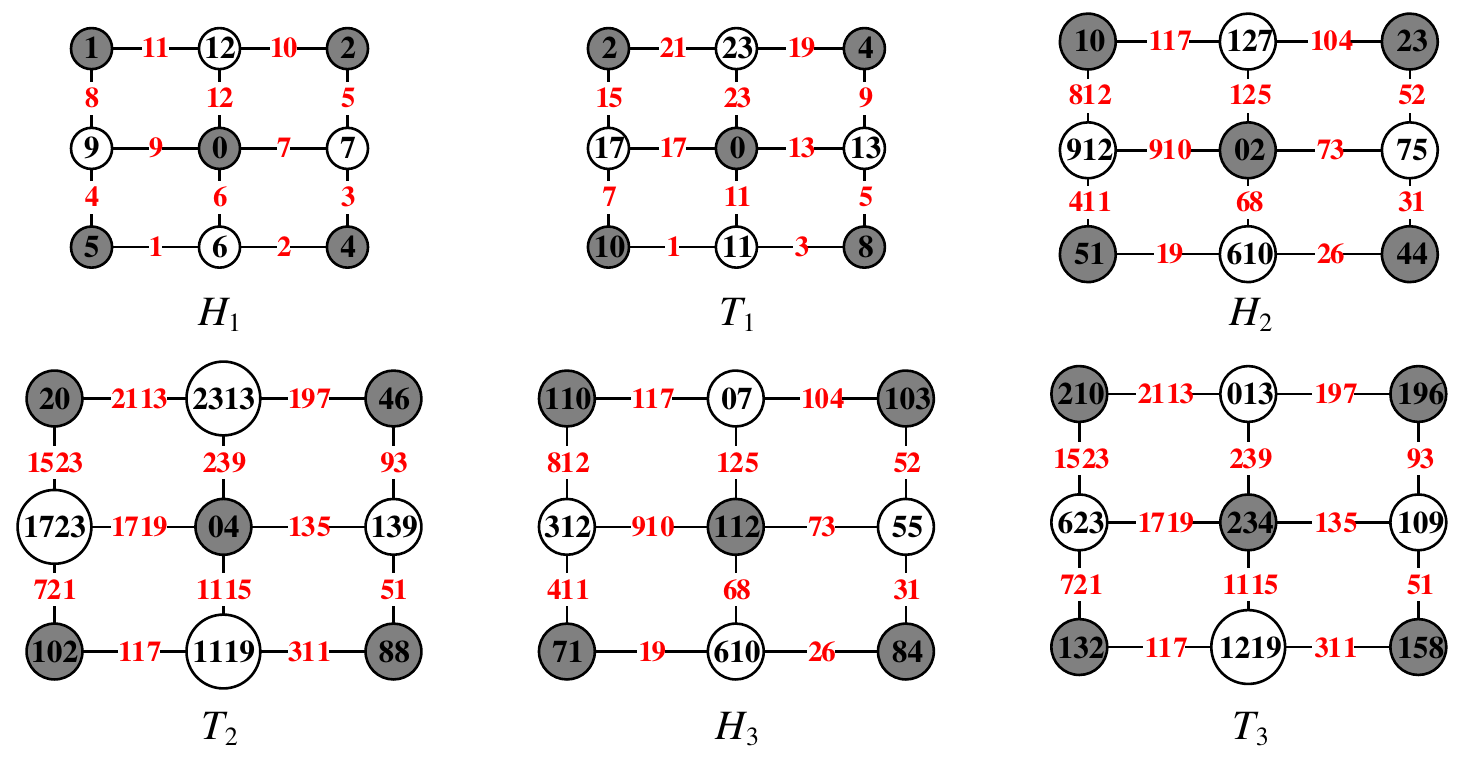}
\caption{\label{fig:n-dimension-22}{\small Six colored graphs: $H_1$ admits a graceful labeling, $T_1$ admits an odd-graceful labeling; both $H_2$ and $H_3$ admit two uniformly graceful proper total string-colorings; both $T_2$ and $T_3$ admit two uniformly odd-graceful proper total string-colorings defined in Definition \ref{defn:homoge-uniformly-string-total-colorings}, cited from \cite{Yao-Wang-2106-15254v1}.}}
\end{figure}

\begin{rem}\label{rem:333333}
By the homogeneous strings
\begin{equation}\label{eqa:66666666}
A_k=a_{k,1}a_{k,2}\cdots a_{k,n},~B_k=b_{k,1}b_{k,2}\cdots b_{k,n},~C_k=c_{k,1}c_{k,2}\cdots c_{k,n}
\end{equation}
and Definition \ref{defn:0-9-string-4-operation}, we have the \emph{string-magic constraints}
\begin{equation}\label{eqa:555555}
A_k[+]B_k[+]C_k=\lambda,~C_k[+]\big |A_k[-]B_k\big |=\mu,~\Big |\big |A_k[-]B_k\big |~[-]C_k\Big |=\gamma,~\big |A_k[+]B_k[-]C_k\big |=\rho
\end{equation}

For constants $a, b , c $, we have the following \emph{$(abc)$-magic string-constraints}
\begin{equation}\label{eqa:555555}
{
\begin{split}
&a\cdot A_k[+]b\cdot B_k[+]c\cdot C_k=\lambda,\quad c\cdot C_k[+]\big |a\cdot A_k[-]b\cdot B_k\big |=\mu,\\
&\Big |\big |a\cdot A_k[-]b\cdot B_k\big |~[-]c\cdot C_k\Big |=\gamma,\quad \big |a\cdot A_k[+]b\cdot B_k[-]c\cdot C_k\big |=\rho
\end{split}}
\end{equation} as \emph{string algebra}.\paralled
\end{rem}

\begin{defn} \label{defn:111111}
$^*$ \textbf{Homogeneous sub-proper total hostring-colorings.} As removing the constraint ``$f(u_k)\neq f(u_kv_k)$ and $f(v_k)\neq f(u_kv_k)$ for each edge $u_kv_k\in E(G)$'' from Definition \ref{defn:homoge-uniformly-string-total-colorings}, we get the following the ``sub-proper'' total hostring-colorings:
\begin{asparaenum}[\textbf{\textrm{Strco}}-1. ]
\item a \emph{sub-proper total hostring-coloring} if the constraints Res-\ref{stringco:adjacent-v} and Res-\ref{stringco:adjacent-e} hold true.
\item a \emph{subtraction sub-proper total hostring-coloring} if the constraints Res-\ref{stringco:adjacent-v}, Res-\ref{stringco:adjacent-e} and Res-\ref{stringco:e-graceful} hold true.
\item an \emph{odd-subtraction sub-proper total hostring-coloring} if the constraints Res-\ref{stringco:adjacent-v}, Res-\ref{stringco:adjacent-e} and Res-\ref{stringco:each-odd} hold true.
\item a \emph{factor sub-proper total hostring-coloring} if the constraints Res-\ref{stringco:adjacent-v}, Res-\ref{stringco:adjacent-e} and Res-\ref{stringco:each-common-factor} hold true.
\item a \emph{harmonious sub-proper total hostring-coloring} if the constraints Res-\ref{stringco:adjacent-v}, Res-\ref{stringco:adjacent-e} and Res-\ref{stringco:harmonious} hold true.
\item an \emph{anti-equitable sub-proper total hostring-coloring} if the constraints Res-\ref{stringco:adjacent-v}, Res-\ref{stringco:adjacent-e}, Res-\ref{stringco:e-graceful} and Res-\ref{stringco:pairwise-distinct} hold true.

\item an \emph{uniformly partition sub-proper total hostring-coloring} if the constraints Res-\ref{stringco:adjacent-v}, Res-\ref{stringco:adjacent-e} and Res-\ref{stringco:integer-partition} hold true.
\item an \emph{uniformly factorization sub-proper total hostring-coloring} if the constraints Res-\ref{stringco:adjacent-v}, Res-\ref{stringco:adjacent-e} and Res-\ref{stringco:integer-factorization} hold true.\qqed
\end{asparaenum}
\end{defn}

\begin{defn} \label{defn:homoge-various-string-total-colorings}
$^*$ \textbf{Weak homogeneous total string-coloring.} Let $\textbf{\textrm{S}}_{tring}(n)$ be the set of \emph{$n$-rank number-based strings} $\alpha_{1}\alpha_{2}\cdots \alpha_{n}$ with each number $\alpha_{j}\ge 0$ for $j\in [1,n]$. Suppose that a $(p,q)$-graph $H$ admits a total string-coloring $h:V(H)\cup E(H)\rightarrow \textbf{\textrm{S}}_{tring}(n)$, such that
\begin{equation}\label{eqa:555555}
h(x_k)=r_{k,1}r_{k,2}\cdots r_{k,n},~h(y_k)=s_{k,1}s_{k,2}\cdots s_{k,n},~h(x_ky_k)=t_{k,1}t_{k,2}\cdots t_{k,n}
\end{equation} for each edge $x_ky_k\in E(H)=\{x_ky_k:k\in [1,q]\}$. Let $\gamma$ be a non-negative integer. There, for each $k\in [1,q]$, are the following constraints:
\begin{asparaenum}[\textbf{\textrm{Nos}}-1. ]
\item \label{substring:adjacent-v} $h(x_k)\neq h(y_k)$ for each edge $x_ky_k\in E(H)$.
\item \label{substring:adjacent-e} $h(x_ky_k)\neq h(x_kw_k)$ for each pair of two adjacent edges $x_ky_k,x_kw_k\in E(H)$.
\item \label{substring:incident-v-e} $h(x_k)\neq h(x_ky_k)$ and $h(y_k)\neq h(x_ky_k)$ for each edge $x_ky_k\in E(H)$.
\item \label{substring:common-factor} Some $j\in [1,n]$ holds $t_{k,j}=\textrm{gcd}(r_{k,j},s_{k,j})$ true, but not all.
\item \label{substring:e-graceful} Some $j\in [1,n]$ holds the graceful constraint $t_{k,j}=|r_{k,j}-s_{k,j}|$ true, but not all.
\item \label{substring:each-odd} Each $t_{k,j}$ is odd, some $j\in [1,n]$ holds the odd-graceful constraint $t_{k,j}=|r_{k,j}-s_{k,j}|$ true, but not all.
\item \label{substring:harmonious} Some $j\in [1,n]$ holds the harmonious constraint $t_{k,j}=r_{k,j}+s_{k,j}~(\bmod~q)$ true, but not all.

----- \emph{common-factor, anti-equitable}

\item \label{substring:part-graceful} $h(E(H))=\{h(x_ky_k)=c_{k,1}c_{k,2}\cdots c_{k,n}:k\in[1$, $q]\}$ such that some $j\in[1,q]$ holds $c_{k,j}=k$ true, but not all.
\item \label{substring:part-odd-graceful} $h(E(H))=\{h(u_jv_j)=c_{j,1}c_{j,2}\cdots c_{j,n}:j\in[1$, $2q-1]^o\}$ such that some $j\in[1,2q-1]^o$ holds $c_{j,r}=j$ true, but not all.

----- \emph{common-factor, anti-equitable}

\item \label{substring:common-factor} Some $j\in [1,n]$ holds $t_{k,j}=\textrm{gcd}(r_{k,j},s_{k,j})$ true, but not all.
\item \label{substring:pairwise-distinct} Some pair of $c_{k,i}$ and $c_{k,j}$ with $i\neq j$ holds $c_{k,i}\neq c_{k,j}$ true, but not all.

----- \emph{weak-partition, weak-factorization}

\item \label{substring:integer-partition} $r_{k,i}$ and $s_{k,i}$ are prime numbers for some $i\in [1,n]$, and $t_{k,i}=r_{k,i}+s_{k,i}$, but not all.
\item \label{substring:integer-factorization} $r_{k,j}$ and $s_{k,j}$ are prime numbers for some $j\in [1,n]$, and $t_{k,j}=r_{k,j}\cdot s_{k,j}$, but not all.
\end{asparaenum}
\textbf{We call $h$}:
\begin{asparaenum}[\textbf{\textrm{Noc}}-1. ]
\item a \emph{weak-subtraction proper total hostring-coloring} if the constraints Nos-\ref{substring:adjacent-v}, Nos-\ref{substring:adjacent-e}, Nos-\ref{substring:incident-v-e} and Nos-\ref{substring:e-graceful} hold true.
\item a \emph{weak-odd-subtraction proper total hostring-coloring} if the constraints Nos-\ref{substring:adjacent-v}, Nos-\ref{substring:adjacent-e}, Nos-\ref{substring:incident-v-e} and Nos-\ref{substring:each-odd} hold true.
\item a \emph{weak-harmonious proper total hostring-coloring} if the constraints Nos-\ref{substring:adjacent-v}, Nos-\ref{substring:adjacent-e}, Nos-\ref{substring:incident-v-e} and Nos-\ref{substring:harmonious} hold true.
\item a \emph{weak-graceful proper total hostring-coloring} if the constraints Nos-\ref{substring:adjacent-v}, Nos-\ref{substring:adjacent-e}, Nos-\ref{substring:incident-v-e}, Nos-\ref{substring:e-graceful} and Nos-\ref{substring:part-graceful} hold true.
\item a \emph{weak-odd-graceful proper total hostring-coloring} if the constraints Nos-\ref{substring:adjacent-v}, Nos-\ref{substring:adjacent-e}, Nos-\ref{substring:incident-v-e}, Nos-\ref{substring:each-odd} and Nos-\ref{substring:part-odd-graceful} hold true.
\item a \emph{weak-factor proper total hostring-coloring} if the constraints Nos-\ref{substring:adjacent-v}, Nos-\ref{substring:adjacent-e}, Nos-\ref{substring:incident-v-e} and Nos-\ref{substring:common-factor} hold true.

----- \emph{anti-equitable}

\item a \emph{weak-anti-equitable proper total hostring-coloring} if the constraints Nos-\ref{substring:adjacent-v}, Nos-\ref{substring:adjacent-e}, Nos-\ref{substring:incident-v-e} and Nos-\ref{substring:pairwise-distinct} hold true.

----- \emph{integer partition, factorization}

\item a \emph{weak-partition proper total hostring-coloring} if the constraints Nos-\ref{substring:adjacent-v}, Nos-\ref{substring:adjacent-e}, Nos-\ref{substring:incident-v-e} and Nos-\ref{substring:integer-partition} hold true.
\item a \emph{weak-factorization proper total hostring-coloring} if the constraints Nos-\ref{substring:adjacent-v}, Nos-\ref{substring:adjacent-e}, Nos-\ref{substring:incident-v-e} and Nos-\ref{substring:integer-factorization} hold true.\qqed
\end{asparaenum}
\end{defn}

\begin{defn} \label{defn:weak-homo-sub-proper-hostring-co}
$^*$ \textbf{Weak homogeneous sub-proper total hostring-colorings.} As removing the constraint ``$h(x_k)\neq h(x_ky_k)$ and $h(y_k)\neq h(x_ky_k)$ for each edge $x_ky_k\in E(H)$'' from Definition \ref{defn:homoge-various-string-total-colorings}, we get the following the ``sub-proper'' total hostring-colorings:
\begin{asparaenum}[\textbf{\textrm{Noc}}-1. ]
\item a \emph{weak-subtraction sub-proper total hostring-coloring} if the constraints Nos-\ref{substring:adjacent-v}, Nos-\ref{substring:adjacent-e} and Nos-\ref{substring:e-graceful} hold true.
\item a \emph{weak-odd-subtraction sub-proper total hostring-coloring} if the constraints Nos-\ref{substring:adjacent-v}, Nos-\ref{substring:adjacent-e} and Nos-\ref{substring:each-odd} hold true.
\item a \emph{weak-harmonious sub-proper total hostring-coloring} if the constraints Nos-\ref{substring:adjacent-v}, Nos-\ref{substring:adjacent-e} and Nos-\ref{substring:harmonious} hold true.
\item a \emph{weak-graceful sub-proper total hostring-coloring} if the constraints Nos-\ref{substring:adjacent-v}, Nos-\ref{substring:adjacent-e}, Nos-\ref{substring:e-graceful} and Nos-\ref{substring:part-graceful} hold true.
\item a \emph{weak-odd-graceful sub-proper total hostring-coloring} if the constraints Nos-\ref{substring:adjacent-v}, Nos-\ref{substring:adjacent-e}, Nos-\ref{substring:each-odd} and Nos-\ref{substring:part-odd-graceful} hold true.
\item a \emph{weak-factor sub-proper total hostring-coloring} if the constraints Nos-\ref{substring:adjacent-v}, Nos-\ref{substring:adjacent-e} and Nos-\ref{substring:common-factor} hold true.
\item a \emph{weak-anti-equitable sub-proper total hostring-coloring} if the constraints Nos-\ref{substring:adjacent-v}, Nos-\ref{substring:adjacent-e} and Nos-\ref{substring:pairwise-distinct} hold true.
\item a \emph{weak-partition sub-proper total hostring-coloring} if the constraints Nos-\ref{substring:adjacent-v}, Nos-\ref{substring:adjacent-e} and Nos-\ref{substring:integer-partition} hold true.
\item a \emph{weak-factorization sub-proper total hostring-coloring} if the constraints Nos-\ref{substring:adjacent-v}, Nos-\ref{substring:adjacent-e} and Nos-\ref{substring:integer-factorization} hold true.\qqed
\end{asparaenum}
\end{defn}

\begin{thm}\label{thm:666666}
Each tree admits every one of weak-type sub-proper total hostring-colorings defined in Definition \ref{defn:weak-homo-sub-proper-hostring-co}.
\end{thm}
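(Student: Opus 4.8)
The plan is to prove the statement by an explicit, uniform construction, exploiting two features of the weak sub-proper colorings of Definition \ref{defn:weak-homo-sub-proper-hostring-co}: first, the incidence requirement Nos-3 has been dropped, so only the vertex-adjacency condition Nos-1 and the edge-adjacency condition Nos-2 must be enforced as ``properness''; second, every $W$-type requirement (Nos-4 through Nos-13) is \emph{weak}, asking only that the prescribed relation hold in \emph{some} coordinate of the rank-$n$ strings but \emph{not} in all. This slack lets me split the $n$ coordinates into two roles and treat each of the nine coloring types by the same skeleton.

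Fix a tree $T$ with $p$ vertices $v_1,\dots,v_p$ and $q=p-1$ edges $e_1,\dots,e_q$, and assume $n\ge 2$. First I would fix an injective ``position-$1$'' pre-labeling $\alpha\colon V(T)\to Z^0$; to prevent accidental coincidences I take $\alpha(v_i)=(i-1)N$ with $N>q$ (and, where primality is needed, I instead take $\alpha(v_i)$ to be the $i$-th prime). Then I define the vertex strings by $h(v_i)=\alpha(v_i)\,0\,0\cdots 0$ (position $1$ equal to $\alpha(v_i)$, all later positions $0$), and the edge string of $e_k=v_iv_j$ by $h(e_k)=\beta_1(e_k)\,\delta_k\,w\cdots w$, where position $1$ carries the value $\beta_1(e_k)$ that makes the coordinate-$1$ relation \emph{hold}, the $\delta_k$-position carries a distinguishing value that is pairwise distinct over the edges and is chosen to \emph{violate} the coordinate relation there, and the remaining positions carry a constant $w$. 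With this skeleton, Nos-1 holds because $\alpha$ is injective, Nos-2 holds because the $\delta_k$ are pairwise distinct (hence adjacent edges receive different strings), the ``some coordinate'' half of the weak $W$-condition holds at position $1$, and the ``not all'' half holds at the $\delta_k$-position.

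The coordinate-$1$ value $\beta_1(e_k)$ is read off from the relevant relation: $\beta_1=|\alpha(v_i)-\alpha(v_j)|$ for the (odd-)subtraction, graceful and anti-equitable types; $\beta_1=\bigl(\alpha(v_i)+\alpha(v_j)\bigr)\ (\mathrm{mod}\ q)$ for the harmonious type; $\beta_1=\gcd(\alpha(v_i),\alpha(v_j))$ for the factor type; and, taking $\alpha(v_i)$ prime, $\beta_1=\alpha(v_i)+\alpha(v_j)$ or $\beta_1=\alpha(v_i)\cdot\alpha(v_j)$ for the partition and factorization types. The scaling $N>q$ guarantees $\beta_1\neq \delta_k$, so the constraint at position $1$ and the violation at the $\delta_k$-position lie in genuinely distinct coordinates, securing ``not all''; for the types that in addition ask $c_{k,j}=k$ (Nos-8, Nos-9) I place that index in the $\delta_k$-position, which it already occupies (taking $\delta_k=k$ for Nos-8 and $\delta_k=2k-1$ for Nos-9). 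For the odd types (Nos-6, Nos-9) I must also force \emph{every} edge-string entry to be odd: I choose $\alpha$ on the two parts of the bipartition of $T$ to have opposite parities so that each $|\alpha(v_i)-\alpha(v_j)|$ is odd, take $\delta_k=2k-1$, and set $w=1$.

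I expect the only real friction to be bookkeeping in two places, not any conceptual obstacle. The first is the modular (harmonious) type: since the zero entries of the vertex strings make $(0+0)\ (\mathrm{mod}\ q)=0$, the distinguishing values $\delta_k$ must be pairwise distinct \emph{and} each $\not\equiv 0\pmod q$ in order to genuinely violate at their coordinate, which is why I would draw them from $\{1,\dots,q-1,q+1\}$ rather than $\{1,\dots,q\}$. The second is the anti-equitable type (Nos-11), whose ``some pair distinct but not all'' clause is vacuous when $n=2$ (there is a single pair), so that case requires $n\ge 3$, where position $1$ and the $\delta_k$-position give a distinct pair while two of the constant-$w$ positions give an equal pair. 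Once these two adjustments are in place, verifying Nos-1, Nos-2 and the designated weak $W$-condition for each of the nine types is routine; and since the construction uses no more than that $T$ is a simple bipartite graph with an edge indexing, it applies to every tree.
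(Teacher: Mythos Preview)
Your construction is correct and well thought out. The paper states this theorem without proof; from context (the nearby theorems that invoke Theorem~\ref{thm:10-k-d-total-coloringss}) the intended argument is presumably to take several ordinary $W$-constraint total colorings of the tree and stack them coordinate-wise. Your route is more elementary and more direct: you build the hostring-coloring from scratch, placing the $W$-relation in coordinate~$1$, a distinguishing violating value in coordinate~$2$, and filler elsewhere, so that Nos-\ref{substring:adjacent-v} and Nos-\ref{substring:adjacent-e} follow from injectivity of $\alpha$ and of the $\delta_k$, while the ``some but not all'' clause is witnessed at coordinates $1$ and $2$ respectively. This yields a self-contained argument that uses nothing about trees beyond bipartiteness and an edge indexing, whereas the paper's implicit route imports the machinery of $(k,d)$-total colorings.

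Two cosmetic points. First, your sentence ``$N>q$ guarantees $\beta_1\neq\delta_k$, so the constraint at position $1$ and the violation at the $\delta_k$-position lie in genuinely distinct coordinates'' conflates positions with values: coordinates $1$ and $2$ are distinct regardless of whether $\beta_1=\delta_k$, so this remark is unnecessary for the ``not all'' half (though the inequality \emph{is} needed for the anti-equitable pair in Nos-\ref{substring:pairwise-distinct}). Second, for the odd types you should spell out that your bipartite $\alpha$ is also chosen with differences exceeding $2q-1$, so that the ``not all'' half of Nos-\ref{substring:part-odd-graceful} is witnessed at coordinate~$1$ even for the edge $k=1$ (where coordinate~$3$ carries $w=1=2\cdot 1-1$). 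Neither point affects the substance of your argument.
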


\begin{defn} \label{defn:homoge-4-magic-string-colorings}
$^*$ \textbf{Magic-type homogeneous total string-coloring.} Let $\textbf{\textrm{S}}_{tring}(n)$ be the set of \emph{$n$-rank number-based strings} $\alpha_{1}\alpha_{2}\cdots \alpha_{n}$ with each number $\alpha_{j}\ge 0$ for $j\in [1,n]$. A $(p,q)$-graph $G$ admits a total string-coloring $g:V(G)\cup E(G)\rightarrow \textbf{\textrm{S}}_{tring}(n)$, such that
\begin{equation}\label{eqa:555555}
g(w_k)=a_{k,1}a_{k,2}\cdots a_{k,n},~g(z_k)=b_{k,1}b_{k,2}\cdots b_{k,n},~g(w_kz_k)=c_{k,1}c_{k,2}\cdots c_{k,n}
\end{equation} for each edge $w_kz_k\in E(G)=\{w_kz_k:k\in [1,q]\}$, and let $\lambda$ be a non-negative integer. For each $k\in [1,q]$, there are the following constraints:
\begin{asparaenum}[\textbf{\textrm{Fma}}-1. ]
\item \label{4-magic:adjacent-v} $g(w_k)\neq g(z_k)$ for each edge $w_kz_k\in E(G)$.
\item \label{4-magic:adjacent-e} $g(w_kz_k)\neq g(w_ku_k)$ for each pair of two adjacent edges $w_kz_k,w_ku_k\in E(G)$.
\item \label{4-magic:incident-v-e} $g(w_k)\neq g(w_kz_k)$ and $g(z_k)\neq g(w_kz_k)$ for each edge $w_kz_k\in E(G)$.

----- \emph{strong 4-magic}

\item \label{4-magic:uniform-edge-magic} Each $j\in [1,n]$ holds the edge-magic constraint $a_{k,j}+b_{k,j}+c_{k,j}=\lambda$ true, denoted as
\begin{equation}\label{eqa:555555}
g(w_k)[+]g(w_kz_k)[+]g(z_k)=\lambda
\end{equation}
\item \label{4-magic:uniform-edge-difference} Each $j\in [1,n]$ holds the edge-difference constraint $c_{k,j}+|a_{k,j}-b_{k,j}|=\lambda$ true, denoted as
\begin{equation}\label{eqa:555555}
g(w_kz_k)[+]|g(w_k)[-]g(z_k)|=\lambda
\end{equation}
\item \label{4-magic:uniform-graceful-difference} Each $j\in [1,n]$ holds the graceful-difference constraint $\big ||a_{k,j}-b_{k,j}|-c_{k,j}\big |=\lambda$ true, denoted as
\begin{equation}\label{eqa:555555}
\big ||g(w_k)[-]g(z_k)|]-]g(w_kz_k)\big |=\lambda
\end{equation}
\item \label{4-magic:uniform-felicitous-difference} Each $j\in [1,n]$ holds the felicitous-difference constraint $|a_{k,j}+b_{k,j}-c_{k,j}|=\lambda$ true, denoted as
\begin{equation}\label{eqa:555555}
|g(w_k)[+]g(z_k)[-]g(w_kz_k)|=\lambda
\end{equation}

----- \emph{weak 4-magic}

\item \label{4-magic:edge-magic} Some $r\in [1,n]$ holds the edge-magic constraint $a_{k,r}+b_{k,r}+c_{k,r}=\gamma$ true, but not all, denoted as $\partial_r \langle g(w_k)[+]g(w_kz_k)[+]g(z_k)=\gamma \rangle$.
\item \label{4-magic:edge-difference} Some $s\in [1,n]$ holds the edge-difference constraint $c_{k,s}+|a_{k,s}-b_{k,s}|=\gamma$ true, but not all, denoted as $\partial_s \langle g(w_kz_k)[+]|g(w_k)[-]g(z_k)|=\gamma \rangle$.
\item \label{4-magic:graceful-difference} Some $t\in [1,n]$ holds the graceful-difference constraint $\big ||a_{k,t}-b_{k,t}|-c_{k,t}\big |=\gamma$ true, but not all, denoted as $\partial_t \langle \big ||g(w_k)[-]g(z_k)|[-]g(w_kz_k)\big |=\gamma \rangle$.
\item \label{4-magic:felicitous-difference} Some $d\in [1,n]$ holds the felicitous-difference constraint $|a_{k,d}+b_{k,d}-c_{k,d}|=\gamma$ true, but not all, denoted as $\partial_d \langle |g(w_k)[+]g(z_k)[-]g(w_kz_k)|=\gamma \rangle$.

----- \emph{different 4-magic}

\item \label{4-magic:different-edge-magic} For each $j\in [1,n]$, there is a positive integer $\mu_j$ such that the edge-magic constraint $a_{k,j}+b_{k,j}+c_{k,j}=\mu_j$ holds true, denoted as \begin{equation}\label{eqa:555555}
 g(w_k)+g(w_kz_k)+g(z_k):=\{\mu_j\}^n_{j=1}
 \end{equation}
\item \label{4-magic:different-edge-difference} For each $j\in [1,n]$, there is a positive integer $\kappa_j$ such that the edge-difference constraint $c_{k,j}+|a_{k,j}-b_{k,j}|=\kappa_j$ true, denoted as \begin{equation}\label{eqa:555555}
 g(w_kz_k)+|g(w_k)-g(z_k)|:=\{\kappa_j\}^n_{j=1}
 \end{equation}
\item \label{4-magic:different-graceful-difference} For each $j\in [1,n]$, there is a non-negative integer $\eta_j$ such that the graceful-difference constraint $\big ||a_{k,j}-b_{k,j}|-c_{k,j}\big |=\eta_j$ true, denoted as \begin{equation}\label{eqa:555555}
 \big ||g(w_k)-g(z_k)|-g(w_kz_k)\big |:=\{\eta_j\}^n_{j=1}
 \end{equation}
\item \label{4-magic:different-felicitous-difference} For each $j\in [1,n]$, there is a non-negative integer $\rho_j$ such that the felicitous-difference constraint $|a_{k,j}+b_{k,j}-c_{k,j}|=\lambda$ true, denoted as \begin{equation}\label{eqa:555555}
 |g(w_k)+g(z_k)-g(w_kz_k)|:=\{\rho_j\}^n_{j=1}
 \end{equation}
\end{asparaenum}
\textbf{Then we call $g$}

\begin{asparaenum}[\textbf{\textrm{Magco}}-1. ]
\item a \emph{$\lambda$-uniformly edge-magic proper total hostring-coloring} if the constraints Fma-\ref{4-magic:adjacent-v}, Fma-\ref{4-magic:adjacent-e}, Fma-\ref{4-magic:incident-v-e} and Fma-\ref{4-magic:uniform-edge-magic} hold true.
\item a \emph{$\lambda$-uniformly edge-difference proper total hostring-coloring} if the constraints Fma-\ref{4-magic:adjacent-v}, Fma-\ref{4-magic:adjacent-e}, Fma-\ref{4-magic:incident-v-e} and Fma-\ref{4-magic:uniform-edge-difference} hold true.
\item a \emph{$\lambda$-uniformly graceful-difference proper total hostring-coloring} if the constraints Fma-\ref{4-magic:adjacent-v}, Fma-\ref{4-magic:adjacent-e}, Fma-\ref{4-magic:incident-v-e} and Fma-\ref{4-magic:uniform-graceful-difference} hold true.
\item a \emph{$\lambda$-uniformly felicitous-difference proper total hostring-coloring} if the constraints Fma-\ref{4-magic:adjacent-v}, Fma-\ref{4-magic:adjacent-e}, Fma-\ref{4-magic:incident-v-e} and Fma-\ref{4-magic:uniform-felicitous-difference} hold true.

----- \emph{weak 4-magic}

\item a \emph{weak-edge-magic proper total hostring-coloring} if the constraints Fma-\ref{4-magic:adjacent-v}, Fma-\ref{4-magic:adjacent-e}, Fma-\ref{4-magic:incident-v-e} and Fma-\ref{4-magic:edge-magic} hold true.
\item a \emph{weak-edge-difference proper total hostring-coloring} if the constraints Fma-\ref{4-magic:adjacent-v}, Fma-\ref{4-magic:adjacent-e}, Fma-\ref{4-magic:incident-v-e} and Fma-\ref{4-magic:edge-difference} hold true.
\item a \emph{weak-graceful-difference proper total hostring-coloring} if the constraints Fma-\ref{4-magic:adjacent-v}, Fma-\ref{4-magic:adjacent-e}, Fma-\ref{4-magic:incident-v-e} and Fma-\ref{4-magic:graceful-difference} hold true.
\item a \emph{weak-felicitous-difference proper total hostring-coloring} if the constraints Fma-\ref{4-magic:adjacent-v}, Fma-\ref{4-magic:adjacent-e}, Fma-\ref{4-magic:incident-v-e} and Fma-\ref{4-magic:felicitous-difference} hold true.

----- \emph{mixed 4-magic}

\item a \emph{mixed-4-magic proper total hostring-coloring} if the constraints Fma-\ref{4-magic:adjacent-v}, Fma-\ref{4-magic:adjacent-e}, Fma-\ref{4-magic:incident-v-e}, Fma-\ref{4-magic:edge-magic}, Fma-\ref{4-magic:edge-difference}, Fma-\ref{4-magic:graceful-difference} and Fma-\ref{4-magic:felicitous-difference} hold true.

----- \emph{different magic constants}

\item a \emph{$\{\mu_j\}^n_{j=1}$-edge-magic proper total hostring-coloring} if the constraints Fma-\ref{4-magic:adjacent-v}, Fma-\ref{4-magic:adjacent-e}, Fma-\ref{4-magic:incident-v-e} and Fma-\ref{4-magic:different-edge-magic} hold true.
\item a \emph{$\{\kappa_j\}^n_{j=1}$-edge-difference proper total hostring-coloring} if the constraints Fma-\ref{4-magic:adjacent-v}, Fma-\ref{4-magic:adjacent-e}, Fma-\ref{4-magic:incident-v-e} and Fma-\ref{4-magic:different-edge-difference} hold true.
\item a \emph{$\{\eta_j\}^n_{j=1}$-graceful-difference proper total hostring-coloring} if the constraints Fma-\ref{4-magic:adjacent-v}, Fma-\ref{4-magic:adjacent-e}, Fma-\ref{4-magic:incident-v-e} and Fma-\ref{4-magic:different-graceful-difference} hold true.
\item a \emph{$\{\rho_j\}^n_{j=1}$-felicitous-difference proper total hostring-coloring} if the constraints Fma-\ref{4-magic:adjacent-v}, Fma-\ref{4-magic:adjacent-e}, Fma-\ref{4-magic:incident-v-e} and Fma-\ref{4-magic:different-felicitous-difference} hold true.\qqed
\end{asparaenum}
\end{defn}

\begin{defn} \label{defn:magic-type-sub-proper-tring-colorings}
$^*$ \textbf{Magic-type sub-proper total hostring-colorings.} As removing the constraint ``$g(w_k)\neq g(w_kz_k)$ and $g(z_k)\neq g(w_kz_k)$ for each edge $w_kz_k\in E(G)$'' from Definition \ref{defn:homoge-4-magic-string-colorings}, we get the following the ``sub-proper'' total hostring-colorings:
\begin{asparaenum}[\textbf{\textrm{Magco}}-1. ]
\item a \emph{$\lambda$-uniformly edge-magic sub-proper total hostring-coloring} if the constraints Fma-\ref{4-magic:adjacent-v}, Fma-\ref{4-magic:adjacent-e} and Fma-\ref{4-magic:uniform-edge-magic} hold true.
\item a \emph{$\lambda$-uniformly edge-difference sub-proper total hostring-coloring} if the constraints Fma-\ref{4-magic:adjacent-v}, Fma-\ref{4-magic:adjacent-e} and Fma-\ref{4-magic:uniform-edge-difference} hold true.
\item a \emph{$\lambda$-uniformly graceful-difference sub-proper total hostring-coloring} if the constraints Fma-\ref{4-magic:adjacent-v}, Fma-\ref{4-magic:adjacent-e} and Fma-\ref{4-magic:uniform-graceful-difference} hold true.
\item a \emph{$\lambda$-uniformly felicitous-difference sub-proper total hostring-coloring} if the constraints Fma-\ref{4-magic:adjacent-v}, Fma-\ref{4-magic:adjacent-e} and Fma-\ref{4-magic:uniform-felicitous-difference} hold true.
\item a \emph{weak-edge-magic sub-proper total hostring-coloring} if the constraints Fma-\ref{4-magic:adjacent-v}, Fma-\ref{4-magic:adjacent-e} and Fma-\ref{4-magic:edge-magic} hold true.
\item a \emph{weak-edge-difference sub-proper total hostring-coloring} if the constraints Fma-\ref{4-magic:adjacent-v}, Fma-\ref{4-magic:adjacent-e} and Fma-\ref{4-magic:edge-difference} hold true.
\item a \emph{weak-graceful-difference sub-proper total hostring-coloring} if the constraints Fma-\ref{4-magic:adjacent-v}, Fma-\ref{4-magic:adjacent-e} and Fma-\ref{4-magic:graceful-difference} hold true.
\item a \emph{weak-felicitous-difference sub-proper total hostring-coloring} if the constraints Fma-\ref{4-magic:adjacent-v}, Fma-\ref{4-magic:adjacent-e} and Fma-\ref{4-magic:felicitous-difference} hold true.
\item a \emph{mixed-4-magic sub-proper total hostring-coloring} if the constraints Fma-\ref{4-magic:adjacent-v}, Fma-\ref{4-magic:adjacent-e}, Fma-\ref{4-magic:edge-magic}, Fma-\ref{4-magic:edge-difference}, Fma-\ref{4-magic:graceful-difference} and Fma-\ref{4-magic:felicitous-difference} hold true.
\item a \emph{$\{\mu_j\}^n_{j=1}$-edge-magic sub-proper total hostring-coloring} if the constraints Fma-\ref{4-magic:adjacent-v}, Fma-\ref{4-magic:adjacent-e} and Fma-\ref{4-magic:different-edge-magic} hold true.
\item a \emph{$\{\kappa_j\}^n_{j=1}$-edge-difference sub-proper total hostring-coloring} if the constraints Fma-\ref{4-magic:adjacent-v}, Fma-\ref{4-magic:adjacent-e} and Fma-\ref{4-magic:different-edge-difference} hold true.
\item a \emph{$\{\eta_j\}^n_{j=1}$-graceful-difference sub-proper total hostring-coloring} if the constraints Fma-\ref{4-magic:adjacent-v}, Fma-\ref{4-magic:adjacent-e} and Fma-\ref{4-magic:different-graceful-difference} hold true.
\item a \emph{$\{\rho_j\}^n_{j=1}$-felicitous-difference sub-proper total hostring-coloring} if the constraints Fma-\ref{4-magic:adjacent-v}, Fma-\ref{4-magic:adjacent-e} and Fma-\ref{4-magic:different-felicitous-difference} hold true.\qqed
\end{asparaenum}
\end{defn}

By Theorem \ref{thm:10-k-d-total-coloringss}, we have
\begin{thm}\label{thm:666666}
$^*$ Each tree admits every one of the $W$-constraint sub-proper total hostring-colorings defined in Definition \ref{defn:magic-type-sub-proper-tring-colorings}.
\end{thm}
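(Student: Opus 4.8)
The plan is to reduce every magic-type sub-proper total hostring-coloring to a suitable stack of ordinary $W$-constraint $(k,d)$-total colorings of the tree, carried out coordinate by coordinate. Recall that a rank-$n$ total string-coloring $g$ assigns to each vertex and edge a string of length $n$, and that all constraints in Definition \ref{defn:magic-type-sub-proper-tring-colorings} are phrased \emph{coordinatewise}: the magic-type identity is imposed on each (or some) index $j\in[1,n]$ of the three strings $g(w_k)=a_{k,1}\cdots a_{k,n}$, $g(z_k)=b_{k,1}\cdots b_{k,n}$ and $g(w_kz_k)=c_{k,1}\cdots c_{k,n}$. The central observation is that, fixing an index $j$ and reading off the $j$-th coordinates of every vertex- and edge-string, one recovers an ordinary total coloring of $T$; moreover the coordinatewise constraint Fma-\ref{4-magic:uniform-edge-magic} is exactly the statement that this $j$-th coordinate coloring is an edge-magic $(k,d)$-total coloring, and likewise Fma-\ref{4-magic:uniform-edge-difference}, Fma-\ref{4-magic:uniform-graceful-difference} and Fma-\ref{4-magic:uniform-felicitous-difference} single out the edge-difference, graceful-difference and felicitous-difference types.

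First I would invoke Theorem \ref{thm:10-k-d-total-coloringss}: every tree $T$ with $D(T)\geq 3$ carries a proper $W$-constraint $(k,d)$-total coloring for each of the edge-magic, edge-difference, graceful-difference and felicitous-difference types, while the diameter-two trees (stars) and the single edge are disposed of by a direct assignment. For the \emph{uniformly} families, take a single proper coloring $f$ of the desired magic type with constant $\lambda$ and set $g(x)=f(x)f(x)\cdots f(x)$ ($n$ copies) for every $x\in V(T)\cup E(T)$; then each coordinate $j$ gives $a_{k,j}+b_{k,j}+c_{k,j}=f(w_k)+f(z_k)+f(w_kz_k)=\lambda$ in the edge-magic case, and the analogous identity in the other three cases, so the uniformly constraint holds with one common constant. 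For the \emph{different-constant} families the same construction works with \emph{independent} choices: pick proper colorings $f_1,\dots,f_n$ of the magic type in question (with possibly distinct parameters and constants) and let the $j$-th coordinate of every string be $f_j$, which realizes the prescribed $\{\mu_j\}^n_{j=1}$, $\{\kappa_j\}^n_{j=1}$, $\{\eta_j\}^n_{j=1}$ or $\{\rho_j\}^n_{j=1}$.

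For the \emph{weak} families the identity is required on some but not all coordinates, so for $n\geq 2$ I would reserve one coordinate for the magic coloring and fill the remaining coordinates with a proper total coloring that violates that identity; for the \emph{mixed-4-magic} family I would appeal to Theorem \ref{thm:equivalent-k-d-total-colorings}, which yields all four magic-type $(k,d)$-total colorings of $T$ simultaneously, then partition $[1,n]$ into four nonempty blocks (for $n\geq 4$) and place a coloring of a distinct magic type in each block. In every case the two string-level adjacency requirements Fma-\ref{4-magic:adjacent-v} and Fma-\ref{4-magic:adjacent-e} come for free: since at least one component coloring is proper, adjacent vertices and adjacent edges already disagree in that coordinate, so their full strings disagree. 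This is precisely why only \emph{sub}-proper hostring-colorings are claimed, the incident condition Fma-\ref{4-magic:incident-v-e} being dropped and hence never verified.

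The hard part is not the magic identities, which become automatic once the coordinate colorings exist, but the bookkeeping of degenerate cases. The weak families genuinely need $n\geq 2$ (with a single coordinate ``some'' collapses into ``all''), the mixed family needs $n\geq 4$, and trees of diameter $<3$ lie outside Theorem \ref{thm:10-k-d-total-coloringss} and must be treated directly. I would dispatch these by a short finite check. The one remaining point demanding a separate argument is that the ``filler'' colorings for the weak families can be chosen proper \emph{and} magic-violating at once; this holds for every tree with at least two edges by perturbing a single edge-color, and is the only step where the construction above does not literally reduce to Theorem \ref{thm:10-k-d-total-coloringss}.
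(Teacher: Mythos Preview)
Your approach is correct and is exactly the mechanism the paper has in mind: the paper's proof of this theorem is literally the single clause ``By Theorem~\ref{thm:10-k-d-total-coloringss}, we have'', i.e.\ build the hostring-coloring coordinate-by-coordinate from $W$-constraint $(k,d)$-total colorings of the tree. Your write-up is in fact more careful than the paper's, since you spell out the coordinatewise reduction, distinguish the uniformly/different-constant/weak/mixed sub-cases, and flag the genuine edge cases (the weak families require $n\ge 2$, the mixed family $n\ge 4$, and trees of diameter $<3$ fall outside Theorem~\ref{thm:10-k-d-total-coloringss}) that the paper's one-line citation leaves implicit.
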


\begin{rem}\label{rem:333333}
Three number-based strings in Eq.(\ref{eqa:homogeneous-string-colorings}) in Definition \ref{defn:homoge-uniformly-string-total-colorings} correspond to the following matrices
\begin{equation}\label{eqa:555555}
X_k=(a_{k,1},a_{k,2},\dots ,a_{k,n}),~E_k=(b_{k,1},b_{k,2},\dots ,b_{k,n}),~Y_k=(c_{k,1},c_{k,2},\dots ,c_{k,n})
\end{equation} for $k\in [1,q]$, which form the Topcode-matrix $T_{code}(H_k,\varphi_k)=(X_k,E_k,Y_k)^T$ of a graph $H$ of $n$ edges under one of the $W$-constraints introduced in Definition \ref{defn:homoge-uniformly-string-total-colorings} and Definition \ref{defn:homoge-4-magic-string-colorings}.

For example, there is a constraint: \emph{Each $j\in [1,n]$ holds the edge-difference constraint $c_{k,j}+|a_{k,j}-b_{k,j}|=\lambda$ true} in Definition \ref{defn:homoge-4-magic-string-colorings}, which meas that the graph $H_k$ admits an edge-difference total coloring $\varphi_k$ induced by the $\lambda$-uniformly edge-difference proper total hostring-coloring $g$ of the $(p,q)$-graph $G$ defined in Definition \ref{defn:homoge-4-magic-string-colorings}. Then the $(p,q)$-graph $G$ admitting a $\lambda$-uniformly edge-difference proper total hostring-coloring $g$ connects $q$ graphs $H_k$ for $k\in [1,q]$ together, through the vertex-coinciding operation, such that two adjacent edges $u_kv_k$ and $u_kw_{k\,'}$ correspond to two graphs $H_k$ and $H_{k\,'}$ holding $H_k\odot H_{k\,'}$ under the vertex-coinciding operation, since $V(H_k)\cap V(H_{k\,'})=\{a_{k,1},a_{k,2},\dots ,a_{k,n}\}$. In other word, the coloring $g$ is a particular \emph{graph-coloring}.

Thereby, replacing the vertices and edges of $G$ by graphs $H_k$ for $k\in [1,q]$ produces the graph, denoted as $G^*=G[\odot ]^q_{k=1}H_k$.

More important, the graph $G$ has its own Topcode-matrix
$$T_{code}(G,g):=(T_{code}(H_1,\varphi_1),T_{code}(H_2,\varphi_2),\dots ,T_{code}(H_q,\varphi_q))
$$ to be a \emph{three-dimensional Topcode-matrix}.\paralled
\end{rem}

\begin{defn} \label{defn:magic-kaleidoscope-t-string-coloring}
$^*$ \textbf{The magic-kaleidoscope total string-colorings.} Suppose that a connected $(p,q)$-graph $G$ admits a total string-coloring $g:V(G)\cup E(G)\rightarrow \textbf{\textrm{S}}_{tring}(n)$ with $g(w_k)\neq g(z_k)$ for each edge $w_kz_k\in E(G)$, and $g(w_kz_k)\neq g(w_ku_k)$ for each pair of two adjacent edges $w_kz_k,w_ku_k\in E(G)$. Let $S_{magic}=\{$Fma-\ref{4-magic:uniform-edge-magic}, Fma-\ref{4-magic:uniform-edge-difference}, Fma-\ref{4-magic:uniform-graceful-difference}, Fma-\ref{4-magic:uniform-felicitous-difference}, Fma-\ref{4-magic:edge-magic}, Fma-\ref{4-magic:edge-difference}, Fma-\ref{4-magic:graceful-difference}, Fma-\ref{4-magic:felicitous-difference}, Fma-\ref{4-magic:different-edge-magic}, Fma-\ref{4-magic:different-edge-difference}, Fma-\ref{4-magic:different-graceful-difference}, Fma-\ref{4-magic:different-felicitous-difference}$\}$ defined in Definition \ref{defn:homoge-4-magic-string-colorings}.

(i) If the edge set $E(G)$ of the graph $G$ can be classified into different subsets, that is $E(G)=\bigcup E_{\varepsilon}$ for each $\varepsilon \in S_{magic}$, such that the colors $g(w_k)$, $g(w_kz_k)$ and $g(z_k)$ for each edge $w_kz_k\in E_{\varepsilon}$ holds the constraint $\varepsilon$ true, then the total string-coloring $g$ is called \emph{magic-kaleidoscope sub-proper total string-coloring}.

(ii) If the edge set $E(G)=\bigcup ^m_{j=1} E_{\varepsilon_j}$ for $\varepsilon_j \in S_{magic}$ and $2\leq m$, such that the colors $g(w_k)$, $g(w_kz_k)$ and $g(z_k)$ for each edge $w_kz_k\in E_{\varepsilon_j}$ holds the constraint $\varepsilon_j$ true, then the total string-coloring $g$ is called \emph{partly magic-kaleidoscope sub-proper total string-coloring}.\qqed
\end{defn}

\begin{problem}\label{qeu:444444}
About Definition \ref{defn:homoge-uniformly-string-total-colorings}, Definition \ref{defn:homoge-various-string-total-colorings} and Definition \ref{defn:homoge-4-magic-string-colorings}, we have

(i) For the set $\textbf{\textrm{S}}_{tring}(1)$ as $n=1$, the \emph{hostring-colorings} defined in Definition \ref{defn:homoge-uniformly-string-total-colorings}, Definition \ref{defn:homoge-various-string-total-colorings} and Definition \ref{defn:homoge-4-magic-string-colorings} are popular colorings of graph theory.

(ii) \textbf{Find} graphs admitting part of the \emph{hostring-colorings} defined in Definition \ref{defn:homoge-uniformly-string-total-colorings}, Definition \ref{defn:homoge-various-string-total-colorings} and Definition \ref{defn:homoge-4-magic-string-colorings}.

(iii) Let $S^{magic}_{W}$ be a $W$-constraint subset of $\textbf{\textrm{S}}_{tring}(n)$. For example, $W$-constraint $\in S_{magic}$, where the set $S_{magic}$ is defined in Definition \ref{defn:magic-kaleidoscope-t-string-coloring}, \textbf{does} $S^{magic}_{W}\neq \emptyset$?
\end{problem}

\subsection{Non-homogeneous number-based string-colorings}

\begin{defn} \label{defn:non-homoge-string-colorings}
$^*$ \textbf{Non-homogeneous magic-constraint total string-colorings.} Let $\textbf{\textrm{S}}_{tring}(\leq n)$ be the set of \emph{$m$-rank number-based strings} $\alpha_{1}\alpha_{2}\cdots \alpha_{m}$ with each number $\alpha_{j}\ge 0$ for $j\in [1,m]$ and $m\leq n$. A $(p,q)$-graph $G$ admits a total string-coloring $g:V(G)\cup E(G)\rightarrow \textbf{\textrm{S}}_{tring}(\leq n)$. For the edge set $E(G)=\{w_kz_k:k\in [1,q]\}$, we have
\begin{equation}\label{eqa:555555}
g(w_k)=a_{k,1}a_{k,2}\cdots a_{k,m(a,k)},~g(z_k)=b_{k,1}b_{k,2}\cdots b_{k,m(b,k)},~g(w_kz_k)=c_{k,1}c_{k,2}\cdots c_{k,m(c,k)}
\end{equation} for each edge $w_kz_k\in E(G)$ with $1\leq m(a,k), m(b,k), m(c,k)\leq n$; and let $\eta$ be a non-negative integer. For each $k\in [1,q]$, we define

\textbf{Statement-ABC-I}: Each $c_{k,j}$ with $j\in [1,m(c,k)]$ corresponds to $a_{k,j\,'}$ and $b_{k,j\,''}$ with $j\,'\in [1,m(a,k)]$ and $j\,''\in [1,m(b,k)]$ holding \textbf{A} true, and each $a_{k,j}$ with $j\in [1,m(a,k)]$ corresponds to $b_{k,i\,'}$ and $c_{k,i\,''}$ with $i\,'\in [1,m(b,k)]$ and $i\,''\in [1,m(c,k)]$ holding \textbf{B} true, as well as each $b_{k,j}$ with $j\in [1,m(b,k)]$ corresponds to $a_{k,s\,'}$ and $c_{k,s\,''}$ with $s\,'\in [1,m(a,k)]$ and $s\,''\in [1,m(c,k)]$ holding \textbf{C} true.

\textbf{Statement-ABC-II}: Each $c_{k,j}$ with $j\in [1,m(c,k)]$ corresponds to $a_{k,j\,'}$ and $b_{k,j\,''}$ with $j\,'\in [1,m(a,k)]$ and $j\,''\in [1,m(b,k)]$ holding \textbf{A-$\mu_j$} with $\mu_j\in \{\mu_j\}^{m(c,k)}_{j=1}$ true, and each $a_{k,i}$ with $i\in [1,m(a,k)]$ corresponds to $b_{k,i\,'}$ and $c_{k,i\,''}$ with $i\,'\in [1,m(b,k)]$ and $i\,''\in [1,m(c,k)]$ holding \textbf{B-$\mu_i$} with $\mu_i\in \{\mu_j\}^{m(c,k)}_{j=1}$ true, as well as each $b_{k,s}$ with $s\in [1,m(b,k)]$ corresponds to $a_{k,s\,'}$ and $c_{k,s\,''}$ with $s\,'\in [1,m(a,k)]$ and $s\,''\in [1,m(c,k)]$ holding \textbf{C-$\mu_s$} with $\mu_s\in \{\mu_j\}^{m(c,k)}_{j=1}$ true.

There are the following constraints:
\begin{asparaenum}[\textbf{\textrm{Nm}}-1. ]
\item \label{nonhomoge:adjacent-v} $g(w_k)\neq g(z_k)$ for each edge $w_kz_k\in E(G)$.
\item \label{nonhomoge:adjacent-e} $g(w_kz_k)\neq g(w_ku_k)$ for each pair of two adjacent edges $w_kz_k,w_ku_k\in E(G)$.
\item \label{nonhomoge:incident-v-e} $g(w_k)\neq g(w_kz_k)$ and $g(z_k)\neq g(w_kz_k)$ for each edge $w_kz_k\in E(G)$.

----- \emph{traditional 4-magic}

\item \label{nonhomoge:tradi-edge-magic} Each number $c_{k,r}$ with $r\in [1,m(c,k)]$ corresponds to $a_{k,r\,'}$ and $b_{k,r\,''}$ with $r\,'\in [1,m(a,k)]$ and $r\,''\in [1,m(b,k)]$ holding the edge-magic constraint $a_{k,r\,'}+b_{k,r\,''}+c_{k,r}=\gamma$ true.
\item \label{nonhomoge:tradi-edge-difference} Each number $c_{k,s}$ with $s\in [1,m(c,k)]$ corresponds to $a_{k,s\,'}$ and $b_{k,s\,''}$ with $s\,'\in [1,m(a,k)]$ and $s\,''\in [1,m(b,k)]$ holding the edge-difference constraint $c_{k,s}+|a_{k,s\,'}-b_{k,s\,''}|=\gamma$ true.
\item \label{nonhomoge:tradi-graceful-difference} Each number $c_{k,t}$ with $t\in [1,m(c,k)]$ corresponds to $a_{k,t\,'}$ and $b_{k,t\,''}$ with $t\,'\in [1,m(a,k)]$ and $t\,''\in [1,m(b,k)]$ holding the graceful-difference constraint $\big ||a_{k,t\,'}-b_{k,t\,''}|-c_{k,t}\big |=\gamma$ true.
\item \label{nonhomoge:tradi-felicitous-difference} Each number $c_{k,d}$ with $d\in [1,m(c,k)]$ corresponds to $a_{k,d\,'}$ and $b_{k,d\,''}$ with $d\,'\in [1,m(a,k)]$ and $d\,''\in [1,m(b,k)]$ holding the felicitous-difference constraint $|a_{k,d\,'}+b_{k,d\,''}-c_{k,d}|=\gamma$ true.

----- \emph{strong 4-magic}

\item \label{nonhomoge:uniform-edge-magic} \textbf{Statement-ABC-I}: \textbf{A}=the edge-magic constraint $a_{k,j\,'}+b_{k,j\,''}+c_{k,j}=\eta$, \textbf{B}=the edge-magic constraint $a_{k,j}+b_{k,i\,'}+c_{k,i\,''}=\eta$, \textbf{C}=the edge-magic constraint $a_{k,s\,'}+b_{k,j}+c_{k,s\,''}=\eta$.
\item \label{nonhomoge:uniform-edge-difference} \textbf{Statement-ABC-I}: \textbf{A}=the edge-difference constraint $c_{k,j}+|a_{k,j\,'}-b_{k,j\,''}|=\eta$, \textbf{B}=the edge-difference constraint $c_{k,i\,''}+|a_{k,j}-b_{k,i\,'}|=\eta$, \textbf{C}=the edge-difference constraint $c_{k,s\,''}+|a_{k,s\,'}-b_{k,j}|=\eta$.
\item \label{nonhomoge:uniform-graceful-difference} \textbf{Statement-ABC-I}: \textbf{A}=the graceful-difference constraint $\big ||a_{k,j\,'}-b_{k,j\,''}|-c_{k,j}\big |=\eta$, \textbf{B}=the graceful-difference constraint $\big ||a_{k,j}-b_{k,i\,'}|-c_{k,i\,''}\big |=\eta$, \textbf{C}=the graceful-difference constraint $\big ||a_{k,s\,'}-b_{k,j}|-c_{k,s\,''}\big |=\eta$.
\item \label{nonhomoge:uniform-felicitous-difference} \textbf{Statement-ABC-I}: \textbf{A}=the felicitous-difference constraint $|a_{k,j\,'}+b_{k,j\,''}-c_{k,j}|=\eta$, \textbf{B}=the felicitous-difference constraint $|a_{k,j}+b_{k,i\,'}-c_{k,i\,''}|=\eta$, \textbf{C}=the felicitous-difference constraint $|a_{k,s\,'}+b_{k,j}-c_{k,s\,''}|=\eta$.

----- \emph{multiple magic constants}

\item \label{nonhomoge:different-edge-magic} \textbf{Statement-ABC-II}: \textbf{A-$\mu_j$}=the edge-magic constraint $a_{k,j\,'}+b_{k,j\,''}+c_{k,j}=\mu_j$ with $\mu_j\in \{\mu_j\}^{m(c,k)}_{j=1}$, and \textbf{B-$\mu_i$}=the edge-magic constraint $a_{k,i}+b_{k,i\,'}+c_{k,i\,''}=\mu_i$ with $\mu_i\in \{\mu_j\}^{m(c,k)}_{j=1}$, as well as \textbf{C-$\mu_s$}=the edge-magic constraint $a_{k,s\,'}+b_{k,s}+c_{k,s\,''}=\mu_s$ with $\mu_s\in \{\mu_j\}^{m(c,k)}_{j=1}$.
\item \label{nonhomoge:different-edge-difference} \textbf{Statement-ABC-II}: \textbf{A-$\mu_j$}=the edge-difference constraint $c_{k,j}+|a_{k,j\,'}-b_{k,j\,''}|=\mu_j$ with $\mu_j\in \{\mu_j\}^{m(c,k)}_{j=1}$, and \textbf{B-$\mu_i$}=the edge-difference constraint $c_{k,i\,''}+|a_{k,i}-b_{k,i\,'}|=\mu_i$ with $\mu_i\in \{\mu_j\}^{m(c,k)}_{j=1}$, as well as \textbf{C-$\mu_s$}=the edge-difference constraint $c_{k,s\,''}+|a_{k,s\,'}-b_{k,s}|=\mu_s$ with $\mu_s\in \{\mu_j\}^{m(c,k)}_{j=1}$.
\item \label{nonhomoge:different-graceful-difference} \textbf{Statement-ABC-II}: \textbf{A-$\mu_j$}=the graceful-difference constraint $\big ||a_{k,j\,'}-b_{k,j\,''}|-c_{k,j}\big |=\mu_j$ with $\mu_j\in \{\mu_j\}^{m(c,k)}_{j=1}$, and \textbf{B-$\mu_i$}=the graceful-difference constraint $\big ||a_{k,i}-b_{k,i\,'}|-c_{k,i\,''}\big |=\mu_i$ with $\mu_i\in \{\mu_j\}^{m(c,k)}_{j=1}$, as well as \textbf{C-$\mu_s$}=the graceful-difference constraint $\big ||a_{k,s\,'}-b_{k,s}|-c_{k,s\,''}\big |=\mu_s$ with $\mu_s\in \{\mu_j\}^{m(c,k)}_{j=1}$.
\item \label{nonhomoge:different-felicitous-difference} \textbf{Statement-ABC-II}: \textbf{A-$\mu_j$}=the felicitous-difference constraint $|a_{k,j\,'}+b_{k,j\,''}-c_{k,j}|=\mu_j$ with $\mu_j\in \{\mu_j\}^{m(c,k)}_{j=1}$, and \textbf{B-$\mu_i$}=the felicitous-difference constraint $|a_{k,i}+b_{k,i\,'}-c_{k,i\,''}|=\mu_i$ with $\mu_i\in \{\mu_j\}^{m(c,k)}_{j=1}$, as well as \textbf{C-$\mu_s$}=the felicitous-difference constraint $|a_{k,s\,'}+b_{k,s}-c_{k,s\,''}|=\mu_s$ with $\mu_s\in \{\mu_j\}^{m(c,k)}_{j=1}$.

----- \emph{weak 4-magic}

\item \label{nonhomoge:edge-magic} Some $c_{k,r}$ with $r\in [1,m(c,k)]$ corresponds to $a_{k,r\,'}$ and $b_{k,r\,''}$ with $r\,'\in [1,m(a,k)]$ and $r\,''\in [1,m(b,k)]$ holding the edge-magic constraint $a_{k,r\,'}+b_{k,r\,''}+c_{k,r}=\gamma$ true, but not all.
\item \label{nonhomoge:edge-difference} Some $c_{k,s}$ with $s\in [1,m(c,k)]$ corresponds to $a_{k,s\,'}$ and $b_{k,s\,''}$ with $s\,'\in [1,m(a,k)]$ and $s\,''\in [1,m(b,k)]$ holding the edge-difference constraint $c_{k,s}+|a_{k,s\,'}-b_{k,s\,''}|=\gamma$ true, but not all.
\item \label{nonhomoge:graceful-difference} Some $c_{k,t}$ with $t\in [1,m(c,k)]$ corresponds to $a_{k,t\,'}$ and $b_{k,t\,''}$ with $t\,'\in [1,m(a,k)]$ and $t\,''\in [1,m(b,k)]$ holding the graceful-difference constraint $\big ||a_{k,t\,'}-b_{k,t\,''}|-c_{k,t}\big |=\gamma$ true, but not all.
\item \label{nonhomoge:felicitous-difference} Some $c_{k,d}$ with $d\in [1,m(c,k)]$ corresponds to $a_{k,d\,'}$ and $b_{k,d\,''}$ with $d\,'\in [1,m(a,k)]$ and $d\,''\in [1,m(b,k)]$ holding the felicitous-difference constraint $|a_{k,d\,'}+b_{k,d\,''}-c_{k,d}|=\gamma$ true, but not all.
\end{asparaenum}
\textbf{Then $g$ is}

\begin{asparaenum}[\textbf{\textrm{Magco}}-1. ]
\item an \emph{edge-magic proper total string-coloring} if the constraints Nm-\ref{nonhomoge:adjacent-v}, Nm-\ref{nonhomoge:adjacent-e}, Nm-\ref{nonhomoge:incident-v-e} and Nm-\ref{nonhomoge:tradi-edge-magic} hold true.
\item an \emph{edge-difference proper total string-coloring} if the constraints Nm-\ref{nonhomoge:adjacent-v}, Nm-\ref{nonhomoge:adjacent-e}, Nm-\ref{nonhomoge:incident-v-e} and Nm-\ref{nonhomoge:tradi-edge-difference} hold true.
\item a \emph{graceful-difference proper total string-coloring} if the constraints Nm-\ref{nonhomoge:adjacent-v}, Nm-\ref{nonhomoge:adjacent-e}, Nm-\ref{nonhomoge:incident-v-e} and Nm-\ref{nonhomoge:tradi-graceful-difference} hold true.
\item a \emph{felicitous-difference proper total string-coloring} if the constraints Nm-\ref{nonhomoge:adjacent-v}, Nm-\ref{nonhomoge:adjacent-e}, Nm-\ref{nonhomoge:incident-v-e} and Nm-\ref{nonhomoge:tradi-felicitous-difference} hold true.

----- \emph{uniformly magic}

\item a \emph{$\eta$-uniformly edge-magic proper total string-coloring} if the constraints Nm-\ref{nonhomoge:adjacent-v}, Nm-\ref{nonhomoge:adjacent-e}, Nm-\ref{nonhomoge:incident-v-e} and Nm-\ref{nonhomoge:uniform-edge-magic} hold true.
\item a \emph{$\eta$-uniformly edge-difference proper total string-coloring} if the constraints Nm-\ref{nonhomoge:adjacent-v}, Nm-\ref{nonhomoge:adjacent-e}, Nm-\ref{nonhomoge:incident-v-e} and Nm-\ref{nonhomoge:uniform-edge-difference} hold true.
\item a \emph{$\eta$-uniformly graceful-difference proper total string-coloring} if the constraints Nm-\ref{nonhomoge:adjacent-v}, Nm-\ref{nonhomoge:adjacent-e}, Nm-\ref{nonhomoge:incident-v-e} and Nm-\ref{nonhomoge:uniform-graceful-difference} hold true.
\item a \emph{$\eta$-uniformly felicitous-difference proper total string-coloring} if the constraints Nm-\ref{nonhomoge:adjacent-v}, Nm-\ref{nonhomoge:adjacent-e}, Nm-\ref{nonhomoge:incident-v-e} and Nm-\ref{nonhomoge:uniform-felicitous-difference} hold true.

----- \emph{multiple magic constants}

\item a \emph{$\{\{\mu_{k,j}\}^{m(c,k)}_{j=1}\}^q_{k=1}$-edge-magic proper total string-coloring} if the constraints Nm-\ref{nonhomoge:adjacent-v}, Nm-\ref{nonhomoge:adjacent-e}, Nm-\ref{nonhomoge:incident-v-e} and Nm-\ref{nonhomoge:different-edge-magic} hold true.
\item a \emph{$\{\{\mu_{k,j}\}^{m(c,k)}_{j=1}\}^q_{k=1}$-edge-difference proper total string-coloring} if the constraints Nm-\ref{nonhomoge:adjacent-v}, Nm-\ref{nonhomoge:adjacent-e}, Nm-\ref{nonhomoge:incident-v-e} and Nm-\ref{nonhomoge:different-edge-difference} hold true.
\item a \emph{$\{\{\mu_{k,j}\}^{m(c,k)}_{j=1}\}^q_{k=1}$-graceful-difference proper total string-coloring} if the constraints Nm-\ref{nonhomoge:adjacent-v}, Nm-\ref{nonhomoge:adjacent-e}, Nm-\ref{nonhomoge:incident-v-e} and Nm-\ref{nonhomoge:different-graceful-difference} hold true.
\item a \emph{$\{\{\mu_{k,j}\}^{m(c,k)}_{j=1}\}^q_{k=1}$-felicitous-difference proper total string-coloring} if the constraints Nm-\ref{nonhomoge:adjacent-v}, Nm-\ref{nonhomoge:adjacent-e}, Nm-\ref{nonhomoge:incident-v-e} and Nm-\ref{nonhomoge:different-felicitous-difference} hold true.

----- \emph{weak 4-magic}

\item a \emph{weak-edge-magic proper total string-coloring} if the constraints Nm-\ref{nonhomoge:adjacent-v}, Nm-\ref{nonhomoge:adjacent-e}, Nm-\ref{nonhomoge:incident-v-e} and Nm-\ref{nonhomoge:edge-magic} hold true.
\item a \emph{weak-edge-difference proper total string-coloring} if the constraints Nm-\ref{nonhomoge:adjacent-v}, Nm-\ref{nonhomoge:adjacent-e}, Nm-\ref{nonhomoge:incident-v-e} and Nm-\ref{nonhomoge:edge-difference} hold true.
\item a \emph{weak-graceful-difference proper total string-coloring} if the constraints Nm-\ref{nonhomoge:adjacent-v}, Nm-\ref{nonhomoge:adjacent-e}, Nm-\ref{nonhomoge:incident-v-e} and Nm-\ref{nonhomoge:graceful-difference} hold true.
\item a \emph{weak-felicitous-difference proper total string-coloring} if the constraints Nm-\ref{nonhomoge:adjacent-v}, Nm-\ref{nonhomoge:adjacent-e}, Nm-\ref{nonhomoge:incident-v-e} and Nm-\ref{nonhomoge:felicitous-difference} hold true.

----- \emph{mixed-4-magic}

\item a \emph{mixed-4-magic proper total string-coloring} if the constraints Nm-\ref{nonhomoge:adjacent-v}, Nm-\ref{nonhomoge:adjacent-e}, Nm-\ref{nonhomoge:incident-v-e}, Nm-\ref{nonhomoge:edge-magic}, Nm-\ref{nonhomoge:edge-difference}, Nm-\ref{nonhomoge:graceful-difference} and Nm-\ref{nonhomoge:felicitous-difference} hold true.\qqed
\end{asparaenum}
\end{defn}

\begin{defn} \label{defn:non-homoge-sub-proper-eachostring-coloring}
$^*$ \textbf{Non-homogeneous magic-constraint sub-proper total string-colorings.} As removing the constraint ``$g(w_k)\neq g(w_kz_k)$ and $g(z_k)\neq g(w_kz_k)$ for each edge $w_kz_k\in E(G)$'' from Definition \ref{defn:non-homoge-string-colorings}, \textbf{then $g$ is}:
\begin{asparaenum}[\textbf{\textrm{Subpro}}-1. ]
\item an \emph{edge-magic sub-proper total string-coloring} if the constraints Nm-\ref{nonhomoge:adjacent-v}, Nm-\ref{nonhomoge:adjacent-e} and Nm-\ref{nonhomoge:tradi-edge-magic} hold true.
\item an \emph{edge-difference sub-proper total string-coloring} if the constraints Nm-\ref{nonhomoge:adjacent-v}, Nm-\ref{nonhomoge:adjacent-e} and Nm-\ref{nonhomoge:tradi-edge-difference} hold true.
\item a \emph{graceful-difference sub-proper total string-coloring} if the constraints Nm-\ref{nonhomoge:adjacent-v}, Nm-\ref{nonhomoge:adjacent-e} and Nm-\ref{nonhomoge:tradi-graceful-difference} hold true.
\item a \emph{felicitous-difference sub-proper total string-coloring} if the constraints Nm-\ref{nonhomoge:adjacent-v}, Nm-\ref{nonhomoge:adjacent-e} and Nm-\ref{nonhomoge:tradi-felicitous-difference} hold true.

\item a \emph{$\eta$-uniformly edge-magic sub-proper total string-coloring} if the constraints Nm-\ref{nonhomoge:adjacent-v}, Nm-\ref{nonhomoge:adjacent-e} and Nm-\ref{nonhomoge:uniform-edge-magic} hold true.
\item a \emph{$\eta$-uniformly edge-difference sub-proper total string-coloring} if the constraints Nm-\ref{nonhomoge:adjacent-v}, Nm-\ref{nonhomoge:adjacent-e} and Nm-\ref{nonhomoge:uniform-edge-difference} hold true.
\item a \emph{$\eta$-uniformly graceful-difference sub-proper total string-coloring} if the constraints Nm-\ref{nonhomoge:adjacent-v}, Nm-\ref{nonhomoge:adjacent-e} and Nm-\ref{nonhomoge:uniform-graceful-difference} hold true.
\item a \emph{$\eta$-uniformly felicitous-difference sub-proper total string-coloring} if the constraints Nm-\ref{nonhomoge:adjacent-v}, Nm-\ref{nonhomoge:adjacent-e} and Nm-\ref{nonhomoge:uniform-felicitous-difference} hold true.

\item a \emph{$\{\{\mu_{k,j}\}^{m(c,k)}_{j=1}\}^q_{k=1}$-edge-magic sub-proper total string-coloring} if the constraints Nm-\ref{nonhomoge:adjacent-v}, Nm-\ref{nonhomoge:adjacent-e} and Nm-\ref{nonhomoge:different-edge-magic} hold true.
\item a \emph{$\{\{\mu_{k,j}\}^{m(c,k)}_{j=1}\}^q_{k=1}$-edge-difference sub-proper total string-coloring} if the constraints Nm-\ref{nonhomoge:adjacent-v}, Nm-\ref{nonhomoge:adjacent-e} and Nm-\ref{nonhomoge:different-edge-difference} hold true.
\item a \emph{$\{\{\mu_{k,j}\}^{m(c,k)}_{j=1}\}^q_{k=1}$-graceful-difference sub-proper total string-coloring} if the constraints Nm-\ref{nonhomoge:adjacent-v}, Nm-\ref{nonhomoge:adjacent-e} and Nm-\ref{nonhomoge:different-graceful-difference} hold true.
\item a \emph{$\{\{\mu_{k,j}\}^{m(c,k)}_{j=1}\}^q_{k=1}$-felicitous-difference sub-proper total string-coloring} if the constraints Nm-\ref{nonhomoge:adjacent-v}, Nm-\ref{nonhomoge:adjacent-e} and Nm-\ref{nonhomoge:different-felicitous-difference} hold true.

\item a \emph{weak-edge-magic sub-proper total string-coloring} if the constraints Nm-\ref{nonhomoge:adjacent-v}, Nm-\ref{nonhomoge:adjacent-e} and Nm-\ref{nonhomoge:edge-magic} hold true.
\item a \emph{weak-edge-difference sub-proper total string-coloring} if the constraints Nm-\ref{nonhomoge:adjacent-v}, Nm-\ref{nonhomoge:adjacent-e} and Nm-\ref{nonhomoge:edge-difference} hold true.
\item a \emph{weak-graceful-difference sub-proper total string-coloring} if the constraints Nm-\ref{nonhomoge:adjacent-v}, Nm-\ref{nonhomoge:adjacent-e} and Nm-\ref{nonhomoge:graceful-difference} hold true.
\item a \emph{weak-felicitous-difference sub-proper total string-coloring} if the constraints Nm-\ref{nonhomoge:adjacent-v}, Nm-\ref{nonhomoge:adjacent-e} and Nm-\ref{nonhomoge:felicitous-difference} hold true.

\item a \emph{mixed-4-magic sub-proper total string-coloring} if the constraints Nm-\ref{nonhomoge:adjacent-v}, Nm-\ref{nonhomoge:adjacent-e}, Nm-\ref{nonhomoge:edge-magic}, Nm-\ref{nonhomoge:edge-difference}, Nm-\ref{nonhomoge:graceful-difference} and Nm-\ref{nonhomoge:felicitous-difference} hold true.\qqed
\end{asparaenum}
\end{defn}

By Theorem \ref{thm:10-k-d-total-coloringss}, we have
\begin{thm}\label{thm:666666}
$^*$ Each tree admits every one of the $W$-constraint sub-proper total string-colorings defined in Definition \ref{defn:non-homoge-sub-proper-eachostring-coloring}.
\end{thm}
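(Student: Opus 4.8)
The plan is to build each required rank-$n$ string total coloring by stacking, coordinate by coordinate, ordinary $W$-constraint $(k,d)$-total colorings whose existence on trees is already guaranteed. First I would dispose of the small cases: Theorem~\ref{thm:10-k-d-total-coloringss} assumes $D(T)\ge 3$, so the stars $K_{1,t}$ (diameter $\le 2$) and the single edge must be coloured by hand, which is routine since each admits an edge-magic, edge-difference, felicitous-difference and graceful-difference total coloring directly. For every remaining tree $T$ (of diameter at least $3$), Theorem~\ref{thm:10-k-d-total-coloringss} supplies, for each $W$-constraint $\in \{$edge-magic, edge-difference, graceful-difference, felicitous-difference$\}$ and each admissible pair $(k,d)$, a single-number $W$-constraint $(k,d)$-total coloring $f$; moreover Theorem~\ref{thm:equivalent-k-d-total-colorings}, applicable because trees are bipartite, lets me pass freely among these four constraint types on one and the same tree.

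Given such single-number colorings, I would define a rank-$n$ string total coloring $g:V(T)\cup E(T)\to \textbf{\textrm{S}}_{tring}(\leq n)$ by concatenation: choose colorings $f_1,\dots,f_n$ of $T$ and set $g(w)=f_1(w)f_2(w)\cdots f_n(w)$ for each vertex $w$ and $g(wz)=f_1(wz)f_2(wz)\cdots f_n(wz)$ for each edge $wz$, so that the $j$-th coordinate of every string is governed by $f_j$. With the coordinate correspondence $j'=j''=j$, the magic constraint carried by $f_j$ is precisely the coordinatewise constraint demanded in Definition~\ref{defn:non-homoge-string-colorings}. The sub-proper requirements Nm-\ref{nonhomoge:adjacent-v} and Nm-\ref{nonhomoge:adjacent-e} reduce to distinctness in a single coordinate: condition Nm-\ref{nonhomoge:adjacent-e} is automatic, since a $W$-constraint $(k,d)$-total coloring sends the $q$ edges bijectively onto the $q$-element set $S_{q-1,k,0,d}$ and so $f_1$ already separates adjacent edges, while for Nm-\ref{nonhomoge:adjacent-v} I would take $f_1$ set-ordered, so that $f_1(u)<f_1(v)$ across every edge and hence $g(w)\ne g(z)$ and $g(wz)\ne g(wu)$. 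Because Definition~\ref{defn:non-homoge-sub-proper-eachostring-coloring} drops the incidence condition, no relation between vertex strings and edge strings need be checked.

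It then remains to tune the constants to match each named variant. For the $\eta$-uniformly versions I would take $f_1=\cdots=f_n=f$, a single $W$-constraint $(k,d)$-total coloring with magic constant $c$; every coordinate then yields the same constant $\eta=c$. For the multiple-constant versions $\{\{\mu_{k,j}\}\}$ I would instead choose the $f_j$ with distinct parameter pairs $(k_j,d_j)$, which produce distinct coordinate constants, thereby realising \emph{some} admissible family of magic constants (existence is all that is claimed). The mixed-4-magic version is handled by letting different coordinates carry different magic types---legitimate since, by Theorem~\ref{thm:equivalent-k-d-total-colorings}, the tree simultaneously admits all four---while the weak versions are the easiest, as they only require the constraint to hold in at least one coordinate.

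The main obstacle I anticipate is the constant-matching bookkeeping rather than any structural difficulty: for the $\eta$-uniformly and multiple-constant variants one must verify that the magic constant of a $W$-constraint $(k,d)$-total coloring depends controllably on $(k,d)$, so that repetition yields a genuinely uniform constant and distinct parameters yield distinct constants. Establishing this dependence explicitly (reading off $c$ as an affine function of $k$ and $d$ from Definition~\ref{defn:kd-w-type-colorings}) is the one place where care is needed; everything else is a formal consequence of the coordinatewise stacking together with the cited existence and equivalence theorems.
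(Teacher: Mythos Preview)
Your proposal is correct and follows the same route the paper takes: the paper's entire justification is the single clause ``By Theorem~\ref{thm:10-k-d-total-coloringss}, we have'' preceding the statement, with no further argument. You have supplied exactly the details that this citation leaves implicit---the coordinatewise stacking construction (which the paper formalises elsewhere as Theorem~\ref{thm:homogeneous-total-coloring-to-string-coloring}), the verification of Nm-\ref{nonhomoge:adjacent-v} via set-orderedness and of Nm-\ref{nonhomoge:adjacent-e} via the bijectivity of the edge colouring onto $S_{q-1,k,0,d}$, and the case analysis over the seventeen variants. Your handling of the diameter-$\le 2$ base cases and your observation that the ``traditional'' variants (Subpro-1 through Subpro-4) are implied by the $\eta$-uniformly ones are both sound refinements that the paper does not spell out.
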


\begin{defn} \label{defn:non-homoge-traditional-string-colorings}
$^*$ \textbf{Non-homogeneous traditional total string-colorings.} Let $\textbf{\textrm{S}}_{tring}(\leq n)$ be the set of \emph{$m$-rank number-based strings} $\alpha_{1}\alpha_{2}\cdots \alpha_{m}$ with each number $\alpha_{j}\ge 0$ for $j\in [1,m]$ and $m\leq n$. A $(p,q)$-graph $H$ admits a total string-coloring $\theta :V(H)\cup E(H)\rightarrow \textbf{\textrm{S}}_{tring}(\leq n)$, such that
\begin{equation}\label{eqa:555555}
\theta (x_k)=a_{k,1}a_{k,2}\cdots a_{k,m(a,k)},~\theta (y_k)=b_{k,1}b_{k,2}\cdots b_{k,m(b,k)},~\theta (x_ky_k)=c_{k,1}c_{k,2}\cdots c_{k,m(c,k)}
\end{equation}for each edge $x_ky_k\in E(H)=\{x_ky_k:k\in [1,q]\}$ with $1\leq m(a,k), m(b,k), m(c,k)\leq n$. For each $k\in [1,q]$, there are the following constraints:
\begin{asparaenum}[\textbf{\textrm{Tra}}-1. ]
\item \label{nonhomoge:tradi-adjacent-v} $\theta (x_k)\neq \theta (y_k)$ for each edge $x_ky_k\in E(H)$.
\item \label{nonhomoge:tradi-adjacent-e} $\theta (x_ky_k)\neq \theta (x_ku_k)$ for each pair of two adjacent edges $x_ky_k,x_ku_k\in E(H)$.
\item \label{nonhomoge:tradi-incident-v-e} $\theta (x_k)\neq \theta (x_ky_k)$ and $\theta (y_k)\neq \theta (x_ky_k)$ for each edge $x_ky_k\in E(H)$.

\item \label{nonhomoge:tradi-e-graceful} Each number $c_{k,r}$ with $r\in [1,m(c,k)]$ corresponds to $a_{k,r\,'}$ and $b_{k,r\,''}$ with $r\,'\in [1,m(a,k)]$ and $r\,''\in [1,m(b,k)]$ holding the \emph{graceful constraint} $c_{k,r}=|a_{k,r\,'}-b_{k,r\,''}|$ true.
\item \label{nonhomoge:tradi-each-odd} Each number $c_{k,s}$ with $r\in [1,m(c,k)]$ is odd and corresponds to $a_{k,s\,'}$ and $b_{k,s\,''}$ with $s\,'\in [1,m(a,k)]$ and $s\,''\in [1,m(b,k)]$ holding the \emph{odd-graceful constraint} $c_{k,s}=|a_{k,s\,'}-b_{k,s\,''}|$ true.
\item \label{nonhomoge:tradi-harmonious} Each number $c_{k,j}$ with $j\in [1,m(c,k)]$ corresponds to $a_{k,j\,'}$ and $b_{k,j\,''}$ for $j\,'\in [1,m(a,k)]$ and $j\,''\in [1,m(b,k)]$ holding the \emph{harmonious constraint} $c_{k,j}=a_{k,j\,'}+b_{k,j\,''}~(\bmod~q)$ true.
\item \label{nonhomoge:tradi-graceful-q-edges} $|f(E(H))|=q$.
\item \label{nonhomoge:tradi-common-factor} Each number $c_{k,i}$ with $i\in [1,m(c,k)]$ corresponds to $a_{k,i\,'}$ and $b_{k,i\,''}$ with $i\,'\in [1,m(a,k)]$ and $i\,''\in [1,m(b,k)]$ holding the \emph{common-factor constraint} $c_{k,i}=\textrm{gcd}(a_{k,i\,'},b_{k,i\,''})$ true.
\end{asparaenum}
\textbf{Then $\theta$ is}

\begin{asparaenum}[\textbf{\textrm{Traco}}-1. ]
\item a \emph{proper total string-coloring} if the constraints Tra-\ref{nonhomoge:tradi-adjacent-v}, Tra-\ref{nonhomoge:tradi-adjacent-e} and Tra-\ref{nonhomoge:tradi-incident-v-e} hold true.
\item a \emph{sub-proper total string-coloring} if the constraints Tra-\ref{nonhomoge:tradi-adjacent-v} and Tra-\ref{nonhomoge:tradi-adjacent-e} hold true.
\item a \emph{graceful proper total string-coloring} if the constraints Tra-\ref{nonhomoge:tradi-adjacent-v}, Tra-\ref{nonhomoge:tradi-adjacent-e}, Tra-\ref{nonhomoge:tradi-incident-v-e}, Tra-\ref{nonhomoge:tradi-e-graceful} and Tra-\ref{nonhomoge:tradi-graceful-q-edges} hold true.
\item a \emph{graceful sub-proper total string-coloring} if the constraints Tra-\ref{nonhomoge:tradi-adjacent-v}, Tra-\ref{nonhomoge:tradi-adjacent-e}, Tra-\ref{nonhomoge:tradi-e-graceful} and Tra-\ref{nonhomoge:tradi-graceful-q-edges} hold true.

\item an \emph{odd-graceful proper total string-coloring} if the constraints Tra-\ref{nonhomoge:tradi-adjacent-v}, Tra-\ref{nonhomoge:tradi-adjacent-e}, Tra-\ref{nonhomoge:tradi-incident-v-e}, Tra-\ref{nonhomoge:tradi-each-odd} and Tra-\ref{nonhomoge:tradi-graceful-q-edges} hold true.
\item an \emph{odd-graceful sub-proper total string-coloring} if the constraints Tra-\ref{nonhomoge:tradi-adjacent-v}, Tra-\ref{nonhomoge:tradi-adjacent-e}, Tra-\ref{nonhomoge:tradi-each-odd} and Tra-\ref{nonhomoge:tradi-graceful-q-edges} hold true.

\item a \emph{harmonious proper total string-coloring} if the constraints Tra-\ref{nonhomoge:tradi-adjacent-v}, Tra-\ref{nonhomoge:tradi-adjacent-e}, Tra-\ref{nonhomoge:tradi-incident-v-e}, Tra-\ref{nonhomoge:tradi-harmonious} and Tra-\ref{nonhomoge:tradi-graceful-q-edges} hold true.
\item a \emph{harmonious sub-proper total string-coloring} if the constraints Tra-\ref{nonhomoge:tradi-adjacent-v}, Tra-\ref{nonhomoge:tradi-adjacent-e}, Tra-\ref{nonhomoge:tradi-harmonious} and Tra-\ref{nonhomoge:tradi-graceful-q-edges} hold true.
\item a \emph{common-factor proper total string-coloring} if the constraints Tra-\ref{nonhomoge:tradi-adjacent-v}, Tra-\ref{nonhomoge:tradi-adjacent-e}, Tra-\ref{nonhomoge:tradi-incident-v-e} and Tra-\ref{nonhomoge:tradi-common-factor} hold true.
\item a \emph{common-factor sub-proper total string-coloring} if the constraints Tra-\ref{nonhomoge:tradi-adjacent-v}, Tra-\ref{nonhomoge:tradi-adjacent-e} and Tra-\ref{nonhomoge:tradi-common-factor} hold true.
\end{asparaenum}
\end{defn}

\begin{thm}\label{thm:connected-graph-vs-trees-colorings}
Each connected graph admits some of the various string-type colorings defined in Definition \ref{defn:homoge-uniformly-string-total-colorings}, Definition \ref{defn:homoge-various-string-total-colorings} and Definition \ref{defn:homoge-4-magic-string-colorings}, Definition \ref{defn:non-homoge-string-colorings}, Definition \ref{defn:non-homoge-sub-proper-eachostring-coloring} and Definition \ref{defn:non-homoge-traditional-string-colorings}.
\end{thm}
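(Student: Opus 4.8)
The plan is to reduce the statement to two ingredients that are already in hand: that trees carry all of these string-colorings, and that every connected graph decomposes into edge-disjoint trees. First I would record the \emph{layering principle} isolated in the Remark following Definition \ref{defn:magic-kaleidoscope-t-string-coloring}: a rank-$n$ total string-coloring of $G$ is exactly a superposition of $n$ ordinary total colorings $f_1,f_2,\dots,f_n$, where $f_j$ reads off the $j$-th coordinate of every vertex- and edge-string, and the per-coordinate constraint (graceful, harmonious, edge-magic, edge-difference, graceful-difference, felicitous-difference, common-factor, etc.) is precisely the requirement that each $f_j$ be a $W$-constraint total coloring. The string-level adjacency requirements Res-\ref{stringco:adjacent-v}, Res-\ref{stringco:adjacent-e} (and their non-homogeneous analogues Nm-\ref{nonhomoge:adjacent-v}, Nm-\ref{nonhomoge:adjacent-e}) are met as soon as a single coordinate separates the relevant strings; for the non-homogeneous families the same verification proceeds coordinate-by-coordinate over the ragged index ranges $m(a,k),m(b,k),m(c,k)$. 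Hence it suffices, for each family in the list, to exhibit the requisite collection of ordinary colorings of $G$.

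Next I would settle the tree case, which the statement inherits from earlier results. For a tree $T$ with diameter at least $3$, Theorem \ref{thm:10-k-d-total-coloringss} supplies the proper $W$-constraint $(k,d)$-total colorings underlying the homogeneous families, and the three ``each tree admits every sub-proper total (ho)string-coloring'' theorems proved just above Definitions \ref{defn:magic-kaleidoscope-t-string-coloring} and \ref{defn:non-homoge-traditional-string-colorings} cover the sub-proper families. Stacking $n$ copies of one such coloring produces the homogeneous string-colorings directly; for the anti-equitable variants, where the coordinates must be pairwise distinct, I would instead stack $n$ genuinely different colorings, whose existence is guaranteed by the count of at least $2^m$ colorings in Theorem \ref{thm:10-k-d-total-coloringss}. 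Small-diameter trees (paths and stars) are handled by direct assignment. This proves the theorem for trees.

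For a non-tree connected $(p,q)$-graph $G$ I would invoke Theorem \ref{thm:graph-vertex-split-m-tree-set} to write $E(G)=\bigcup_{i=1}^{m}E(T_i)$ as an edge-disjoint union of trees (equivalently, use the representation $G=[\odot_m]T$ of Theorem \ref{thm:connected-graph-vs-tree-set}). I would then color each $T_i$ by the appropriate string-coloring from the tree case, assign each edge of $G$ the string it carries in its owning tree, and assign each vertex $v$ the string obtained by concatenating the strings it receives in the trees incident to it. Since the edge set of $G$ is partitioned among the $T_i$, each per-edge $W$-constraint is inherited from the tree that owns the edge; the magic constants will in general differ from tree to tree, which is exactly why the conclusion lands in the ``multiple magic constants'', ``mixed-4-magic'', and sub-proper families of Definitions \ref{defn:non-homoge-string-colorings} and \ref{defn:non-homoge-sub-proper-eachostring-coloring} rather than in the uniform ones, and why the theorem asserts only that $G$ admits \emph{some} of the colorings.

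The delicate point, and the one I expect to absorb most of the effort, is the gluing at coincided vertices. After identifying vertices of the $T_i$ into a single vertex of $G$ I must verify that adjacent vertices still receive distinct strings and adjacent edges still receive distinct strings, and that the concatenated vertex-string is consistent with the chosen constant in each coordinate. This is precisely where only \emph{sub-proper} colorings survive in full generality: the incidence condition ``$f(v)\neq f(uv)$'' can fail at a glued vertex, so the proper homogeneous families hold unconditionally only on the tree components. I would therefore organize the bookkeeping so that the coordinate lengths of the concatenated strings align across the trees meeting at each vertex, guaranteeing the ragged non-homogeneous constraints of Definition \ref{defn:non-homoge-string-colorings} can be matched; once this alignment and the two adjacency checks are in place, the remainder is a routine transcription of the tree colorings through the vertex-coinciding operation.
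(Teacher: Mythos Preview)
Your proposal is a valid alternative, but the paper takes a shorter route. You decompose $G$ into several edge-disjoint trees $T_1,\dots,T_m$ via Theorem \ref{thm:graph-vertex-split-m-tree-set}, equip each $T_i$ with a rank-$n$ string-coloring built by stacking ordinary $W$-constraint colorings, and then glue by concatenating vertex-strings across the trees meeting at each vertex. The paper instead vertex-splits $G$ to a \emph{single} tree $T$ of $q+1$ vertices (Theorem \ref{thm:connected-graph-vs-tree-set}), gives $T$ one ordinary $W$-constraint proper total coloring $f$, passes through the intermediate set-coloring $F(x)=\{f(x_1),\dots,f(x_{a_x})\}$ at each $x\in V(G)$ (where $x_1,\dots,x_{a_x}\in V(T)$ are the preimages of $x$), and then defines the string-coloring $\theta$ by concatenating the elements of each $F(x)$ in some order; each edge keeps the singleton value $f(uv)$. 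The string structure thus arises directly from the vertex-splitting multiplicity, with no need to pre-build string-colorings on trees or to manage several trees at once. In particular the edge-adjacency check is immediate, since every edge lies in the one tree $T$ where $f$ is already proper --- exactly the bookkeeping you correctly flag as the delicate point of your multi-tree glue. Your layering observation is the content of Theorem \ref{thm:homogeneous-total-coloring-to-string-coloring}, but the paper's argument for the present theorem does not invoke it.
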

\begin{proof} We use the \emph{vertex-splitting operation} to a connected $(p,q)$-graph $G$ for producing a tree $T$ of $q+1$ vertices, clearly, we have a graph homomorphism $T\rightarrow G$. Suppose that $T$ admits a $W$-constraint proper total coloring $f$, so the connected $(p,q)$-graph $G$ admits a $W$-constraint set-coloring $F$ induced by $f$ under the graph homomorphism $T\rightarrow G$ as:

(i) $F(w)=\{f(w)\}$ for $w\in V^*\subset V(G)$, where each vertex $w$ does not participate the vertex-splitting operation.

(ii) $F(x)=\{f(x_1),f(x_2),\dots ,f(x_{a_x})\}$ for $x_1,x_2,\dots ,x_{a_x}\in V(T)\setminus V^*$, where $x=x_1\odot x_2\odot\cdots \odot x_{a_x}$ with $x\in V(G)\setminus V^*$, that is the vertices $x_1,x_2,\dots ,x_{a_x}$ of $T$ are the result of vertex-splitting the vertex $x$ of $G$.

(iii) $F(uv)=\{f(uv)\}$ for $uv\in E(T)=E(G)$.

By means of the above set-coloring $F$, we define a $W$-constraint (sub-)proper total string-coloring $\theta$ for the $(p,q)$-graph $G$ in the following way:

(a-1) $\theta (w)=\gamma$ with $\gamma \in F(w)$ for $w\in V(G)\cup E(G)$ if $|F(w)|=1$;

(a-2) If $F(w)=\{f(x_1),f(x_2),\dots ,f(x_{a_x})\}$ with $a_x\geq 2$ for $w\in V(G)\cup E(G)$, then color $w$ with $\theta (w)=\gamma_{i_1}\gamma_{i_2}\cdots \gamma_{i_{a_x}}$, where $\gamma_{i_1},\gamma_{i_2},\dots ,\gamma_{i_{a_x}}$ is a permutation of the elements of the set $F(w)$.

The proof of the theorem is complete.
\end{proof}

\begin{rem}\label{rem:333333}
The computational complexity of Theorem \ref{thm:connected-graph-vs-trees-colorings} is based on the following analysis:

(i) Using the \emph{vertex-splitting operation} to a connected $(p,q)$-graph $G$ will produce many trees of $q+1$ vertices, called \emph{vsplit-trees}, so we put them into a set $V^{tree}_{split}(G)$. Thereby, each tree $T\in V^{tree}_{split}(G)$ is graph homomorphism to $G$, i.e. $T\rightarrow G$. Determining the cardinality of the vsplit-tree set $V^{tree}_{split}(G)$ is not easy, since the \textbf{Subgraph Isomorphic Problem}, a NP-complete problem.

(ii) Determining whether a tree set admits a $W$-constraint labeling/coloring is difficult, refer to Graceful Tree Conjecture and Strongly Graceful Tree Conjecture (Ref. \cite{Bondy-2008} and \cite{Gallian2021}).

(iii) There are two or more $W$-constraint (sub-)proper total string-colorings $\theta$ defined in the proof of Theorem \ref{thm:connected-graph-vs-trees-colorings}, since there are $(a_x)!$ permutations for $$F(w)=\{f(x_1),f(x_2),\dots ,f(x_{a_x})\},~w\in V(G)\cup E(G)
$$ with $a_x\geq 2$ based on a tree $T\in V^{tree}_{split}(G)$.\paralled
\end{rem}

\subsection{Constructing number-based string-colorings}

\begin{thm} \label{thm:homogeneous-total-coloring-to-string-coloring}
$^*$ Let $\textbf{\textrm{S}}_{tring}(\leq n)$ be the set of \emph{$m$-rank number-based strings} $\alpha_{1}\alpha_{2}\cdots \alpha_{m}$ with each number $\alpha_{j}\ge 0$ for $j\in [1,m]$ and $m\leq n$. Suppose that a connected $(p,q)$-graph $G$ admits $n$ different proper total colorings $f_1,f_2$, $\dots $, $f_n$ such that each total coloring $f_i$ obeys a $W_i$-constraint $W_i\langle f_i(u),f_i(uv),f_i(v)\rangle =0$ for each edge $uv\in E(G)$ and $i\in [1,n]$, then this graph $G$ admits a $\{W_i\}^n_{i=1}$-constraint \emph{homogeneous proper total string-coloring} $F:V(G)\cup E(G)\rightarrow \textbf{\textrm{S}}_{tring}(\leq n)$ defined by
\begin{equation}\label{eqa:555555}
F(u)=f_1(u)f_2(u)\cdots f_n(u),~F(uv)=f_1(uv)f_2(uv)\cdots f_n(uv),~F(v)=f_1(v)f_2(v)\cdots f_n(v)
\end{equation} holding

(i) $F(u)\neq F(v)$, $F(u)\neq F(uv)$ and $F(uv)\neq F(v)$ for each edge $uv\in E(H)$;

(ii) $F(uv)\neq F(uw)$ for any pair of adjacent edges $uv,uw\in E(H)$;

(iii) the $W_i$-constraint $W_i\langle f_i(u),f_i(uv),f_i(v)\rangle =0$ for each edge $uv\in E(G)$ and $i\in [1,n]$.\qqed
\end{thm}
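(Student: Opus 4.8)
The plan is to verify directly that the concatenated map $F$ inherits properness from each individual coloring coordinate-by-coordinate, and that the $W_i$-constraints survive because each acts only on the $i$-th coordinate. First I would note that $F$ is well-defined as a map into $\textbf{\textrm{S}}_{tring}(\leq n)$: since every $f_i$ is a total coloring of $G$, each element $z\in V(G)\cup E(G)$ receives the $n$-rank string $F(z)=f_1(z)f_2(z)\cdots f_n(z)$ whose $j$-th entry is precisely $f_j(z)$. The crucial observation, which I would state explicitly at the outset, is that two $n$-rank number-based strings are equal if and only if they agree in every coordinate; thus $F(a)=F(b)$ forces $f_i(a)=f_i(b)$ for all $i\in[1,n]$, and conversely a single coordinate in which they differ already separates them.

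With this coordinate criterion in hand, the properness conditions (i) and (ii) reduce to one-line arguments. For an edge $uv$, because $f_1$ is a proper total coloring we have $f_1(u)\neq f_1(v)$, $f_1(u)\neq f_1(uv)$ and $f_1(v)\neq f_1(uv)$; since these are the first coordinates of $F(u)$, $F(v)$ and $F(uv)$, the three strings are pairwise distinct, giving (i). Likewise, for adjacent edges $uv,uw$ the properness of $f_1$ yields $f_1(uv)\neq f_1(uw)$ in the first coordinate, so $F(uv)\neq F(uw)$, which is (ii). Any fixed index works here; I single out $f_1$ only for concreteness. Condition (iii) is then immediate: the $i$-th coordinates of $F(u)$, $F(uv)$, $F(v)$ are exactly $f_i(u)$, $f_i(uv)$, $f_i(v)$, and these satisfy $W_i\langle f_i(u),f_i(uv),f_i(v)\rangle=0$ by hypothesis, so the homogeneous string-coloring carries the entire family $\{W_i\}^n_{i=1}$ of constraints simultaneously.

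The argument is structurally elementary, so I do not expect a genuine mathematical obstacle; the single point demanding care is the equality criterion for $n$-rank strings. I would make sure to read $F(z)$ as a rank-$n$ string with $n$ designated slots, one per coloring, rather than as a flat digit sequence, since a naive digit-concatenation could in principle collapse distinct coordinate tuples into the same digit string when the values $f_i(z)$ are multi-digit. Pinning down this slot structure, which is guaranteed by the homogeneous framework of Definition \ref{defn:homoge-uniformly-string-total-colorings} where each string is indexed coordinate-wise, removes the ambiguity and makes every step above rigorous.
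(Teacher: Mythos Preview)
Your proposal is correct and complete. The paper itself provides no explicit proof for this theorem---it simply states the result and closes with a box---so your coordinate-wise verification is precisely the argument the paper leaves implicit, and your careful remark about reading $F(z)$ as an $n$-slot string rather than a flat digit concatenation addresses the only genuine subtlety in the construction.
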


\begin{rem}\label{rem:333333}
In Theorem \ref{thm:homogeneous-total-coloring-to-string-coloring}, we have the Topcode-matrix
$$T_{code}(G,F)=T_{code}(G,f_1f_2\cdots f_n)
$$ and the graph $G$ admits $n!$ different homogeneous proper total string-colorings, since there are $n!$ permutations of the colorings $f_1,f_2,\dots ,f_n$.\paralled
\end{rem}

\begin{example}\label{exa:8888888888}
A complete graph $K_q$ admits a proper vertex coloring $f:V(K_q)\rightarrow [1,q]$, such that the vertex color set $f(V(K_q))=[1,q]$. Then there are $q^{q-2}$ different colored spanning trees of $K_q$ by the famous Cayley's formula $\tau(K_q)=q^{q-2}$ (Ref. \cite{Bondy-2008}). We use a set $S_{pan}(K_q)$ to collect all colored spanning trees of $K_q$, and each colored spanning tree $T\in S_{pan}(K_q)$ admits a proper vertex coloring $f_T=f$. Next, we define a total coloring $F^1_T$ for the colored spanning tree $T\in S_{pan}(K_q)$ as:

(i) $F^1_T(x)=f_T(x)$ for each vertex $x\in V(T)=V(K_q)$; and

(ii) each edge $xy\in E(T)$ is colored by $F^1_T(xy)=|F^1_T(x)-F^1_T(y)|$, or $F^1_T(xy)=F^1_T(x)(\bullet)F^1_T(y)$, where ``$(\bullet)$'' is an operation.

Notice that this spanning tree $T$ admits $n-1$ total colorings $F^2_T, F^3_T,\dots ,F^{n}_T$ with $n\geq 11$ according to Theorem \ref{thm:10-k-d-total-coloringss}, where each coloring $F^i_T$ is a $W_i$-constraint total coloring with $2\leq i\leq n$, and some coloring $F^j_T$ with $j\in [1, n]$ is a $W_j$-constraint proper total coloring. We define a total string-coloring $F\,^*_T$ as:

(iii) $F\,^*_T(x)=F^1_T(x)F^2_T(x)F^3_T(x)\cdots F^{n}_T(x)$ for each vertex $x\in V(T)=V(K_q)$; and

(iv) $F\,^*_T(xy)=F^1_T(xy)F^2_T(xy)F^3_T(xy)\cdots F^{n}_T(xy)$ for each edge $xy\in E(T)$.

Thereby, each $F\,^*_T(xy)$ holds $F^i_T(xy)=W_i\langle F^i_T(x),F^i_T(y)\rangle$ for each edge $xy\in E(T)$, then $F\,^*_T$ is really a proper total homogeneous string-coloring of the spanning tree $T$.

The set $S^{n\textrm{-color}}_{pan}(K_q)$ contains each spanning tree $T\in S_{pan}(K_q)$ admitting a proper total homogeneous string-coloring $F\,^*_T$, such that the cardinality $|S^{n\textrm{-color}}_{pan}(K_q)|=q^{q-2}$.\qqed
\end{example}

\begin{thm}\label{thm:infinite-elements-n-segment}
Let $\textbf{\textrm{S}}_{tring}(n)$ be the set of \emph{$n$-rank number-based strings} $\alpha_{1}\alpha_{2}\cdots \alpha_{n}$ with each number $\alpha_{j}\ge 0$ for $j\in [1,n]$. Then the set $\textbf{\textrm{S}}_{tring}(n)$ contains infinite elements.
\end{thm}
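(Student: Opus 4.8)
The plan is to exhibit an explicit injection from an infinite set into $\textbf{\textrm{S}}_{tring}(n)$, which immediately forces $\textbf{\textrm{S}}_{tring}(n)$ to be infinite. The crucial observation is that, by the definition of an $n$-rank number-based string, each entry $\alpha_j$ is an arbitrary element of $Z^0$, the set of non-negative integers, and is \emph{not} restricted to a single digit in $[0,9]$; this is precisely what distinguishes a number-based string from a $[0,9]$-string. Hence a single coordinate already ranges over the infinite set $Z^0$.

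First I would fix the last $n-1$ coordinates to be $0$ and let only the first coordinate vary. Concretely, define a map $\phi:Z^0\rightarrow \textbf{\textrm{S}}_{tring}(n)$ by
$$\phi(t)=t\,\underbrace{0\,0\cdots 0}_{n-1},\qquad t\in Z^0,$$
so that $\phi(t)$ is the $n$-rank number-based string whose first number is $t$ and whose remaining $n-1$ numbers are all $0$. Each $\phi(t)$ is a legitimate element of $\textbf{\textrm{S}}_{tring}(n)$, since every coordinate lies in $Z^0$.

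Next I would verify injectivity: if $t\neq t\,'$, then $\phi(t)$ and $\phi(t\,')$ differ in their first number, hence $\phi(t)\neq \phi(t\,')$. Thus $\phi$ is an injection, and because $Z^0=\{0,1,2,\dots\}$ is infinite, its image $\phi(Z^0)\subseteq \textbf{\textrm{S}}_{tring}(n)$ is infinite; therefore $\textbf{\textrm{S}}_{tring}(n)$ itself is infinite.

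There is essentially no computational step here; the only point demanding care is the reading of the definition. The main obstacle, if any, is to make sure that ``number'' is understood as an arbitrary non-negative integer rather than a bounded digit, because if the entries were confined to $[0,9]$ the set would be finite, of size $10^n$. Once this is settled the argument is complete, and one could equally well remark that even the case $n=1$ already yields the infinite family $\{0,1,2,\dots\}$ directly.
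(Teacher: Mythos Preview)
Your argument is correct and proves exactly what the theorem asserts: since each coordinate ranges over all of $Z^0$, a single-coordinate injection from $Z^0$ into $\textbf{\textrm{S}}_{tring}(n)$ already forces the set to be infinite.

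The paper, however, takes a genuinely different and more elaborate route. Rather than exhibiting one unconstrained injection, the paper works inside the context of Definition~\ref{defn:homoge-uniformly-string-total-colorings}: it starts from a triple of $n$-rank strings $A_k,B_k,C_k$ attached to an edge $u_kv_k$ and shows, case by case for each constraint type (graceful, odd-graceful, edge-magic, edge-difference, graceful-difference, felicitous-difference, harmonious, common-factor, partition, factorization), that a componentwise translation $a_{k,j}\mapsto a_{k,j}+\beta$, $b_{k,j}\mapsto b_{k,j}+\beta$ (with the edge string adjusted appropriately) preserves the relevant constraint for every $\beta\in Z^0$. Thus each constrained subclass of $\textbf{\textrm{S}}_{tring}(n)$ already contains an infinite one-parameter family, and along the way the paper records the resulting \emph{string-algebraic operations} such as $\gamma A_k[+]\beta B_k$. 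Your proof is shorter and settles the literal statement; the paper's argument buys more, namely that the infiniteness persists under every homogeneous constraint used in the string-colorings, which is what is actually exploited later when generating unboundedly many admissible colorings.
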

\begin{proof}Notice that
\begin{equation}\label{eqa:infinite elementss}
f(u_k)=A_k=a_{k,1}a_{k,2}\cdots a_{k,n},~f(v_k)=B_k=b_{k,1}b_{k,2}\cdots b_{k,n},~f(u_kv_k)=C_k=c_{k,1}c_{k,2}\cdots c_{k,n}
\end{equation} for each edge $u_kv_k\in E(G)=\{u_kv_k:k\in [1,q]\}$ defined in Definition \ref{defn:homoge-uniformly-string-total-colorings}.

(i) Set $c^*_{k,j}=c_{k,j}$, $a^*_{k,j}=a_{k,j}+\beta$ and $b^*_{k,j}=b_{k,j}+\beta$ with $\beta\in Z^0$, then each of the following statements

\begin{asparaenum}[\textbf{\textrm{State}}-1. ]
\item Each $j\in [1,n]$ holds the graceful constraint $c_{k,j}=|a_{k,j}-b_{k,j}|$ true.

\item Each $c_{k,j}$ is odd and holds the odd-graceful constraint $c_{k,j}=|a_{k,j}-b_{k,j}|$ true for $j\in [1,n]$.

\item Each $j\in [1,n]$ holds the edge-magic constraint $a_{k,j}+b_{k,j}+c_{k,j}=\lambda$ true.

\item Each $j\in [1,n]$ holds the edge-difference constraint $c_{k,j}+|a_{k,j}-b_{k,j}|=\lambda$ true.

\item Each $j\in [1,n]$ holds the graceful-difference constraint $\big ||a_{k,j}-b_{k,j}|-c_{k,j}\big |=\lambda$ true.
\end{asparaenum}
keeps no change. We get the algebraic operation on $n$-rank number-based strings as follows:
\begin{equation}\label{eqa:555555}
{
\begin{split}
C^*_k&=C_k\\
A^*_k&=A_k[+]\beta =(a_{k,1}+\beta)(a_{k,2}+\beta)\cdots (a_{k,n}+\beta)=\beta +A_k\\
B^*_k&=B_k[+]\beta =(b_{k,1}+\beta)(b_{k,2}+\beta)\cdots (b_{k,n}+\beta)=\beta +B_k\\
\big |A^*_k[-]B^*_k|&=\big |(a_{k,1}+\beta)-(b_{k,1}+\beta)| \cdot |(a_{k,2}+\beta)-(b_{k,2}+\beta)|\cdot \cdots \cdot |(a_{k,n}+\beta)-(b_{k,n}+\beta)|\\
&=|a_{k,1}-b_{k,1}| \cdot |a_{k,2}-b_{k,2}|\cdot \cdots \cdot |a_{k,n}-b_{k,n}|\\
A^*_k[+]B^*_k&=[(a_{k,1}+\beta)+(b_{k,1}+\beta)] \cdot [(a_{k,2}+\beta)+(b_{k,2}+\beta)]\cdot \cdots \cdot [(a_{k,n}+\beta)+(b_{k,n}+\beta)]\\
&=[a_{k,1}+b_{k,1}+2\beta]\cdot [a_{k,2}+b_{k,2}+2\beta]\cdot \cdots \cdot [a_{k,n}+b_{k,n}+2\beta]\\
&=A_k[+]B_k+2\beta
\end{split}}
\end{equation} Moreover, we write the above facts by
$$
C^*_k=\big |A^*_k[-]B^*_k\big |,~A^*_k[+]B^*_k[+]C^*_k=\lambda,~\big ||A^*_k[-]B^*_k|[-]C^*_k\big |=\lambda
$$ respectively.

(ii) Set $c^*_{k,j}=c_{k,j}+2\beta$, $a^*_{k,j}=a_{k,j}+\beta$ and $b^*_{k,j}=b_{k,j}+\beta$ with $\beta\in Z^0$, then we can keep the statement ``Each $j\in [1,n]$ holds the felicitous-difference constraint $|a_{k,j}+b_{k,j}-c_{k,j}|=\lambda$ true'' no change. So, we have $\big |A^*_k[+]B^*_k[-]C^*_k\big |=\lambda$.

(iii) Set $c^*_{k,j}=c_{k,j}$, $a^*_{k,j}=a_{k,j}+\beta q$ and $b^*_{k,j}=b_{k,j}+\beta q$ with $\beta\in Z^0$, then ``Each $j\in [1,n]$ holds the harmonious constraint $c^*_{k,j}=a^*_{k,j}+b^*_{k,j}~(\bmod~q)$ true''. Then $C^*_k=A^*_k[+]B^*_k~(\bmod~q)$.

(iv) For $a_{k,j}$ and $b_{k,j}$ are prime numbers, and $c_{k,j}=a_{k,j}+b_{k,j}$, we get $c^*_{k,j}=c_{k,j}+4\beta$, two prime numbers $a^*_{k,j}=a_{k,j}+2\beta$ and $b^*_{k,j}=b_{k,j}+2\beta$ with $\beta\in Z^0$, such that $c^*_{k,j}=a^*_{k,j}+b^*_{k,j}$.

(v) For the common-factor constraint $c_{k,j}=\textrm{gcd}(a_{k,j},b_{k,j})$, we get $c^*_{k,j}=\beta\cdot c_{k,j}$, two prime numbers $a^*_{k,j}=\beta\cdot a_{k,j}$ and $b^*_{k,j}=\beta\cdot b_{k,j}$ with $\beta\in Z^0$, then the common-factor constraint $c^*_{k,j}=\textrm{gcd}(a^*_{k,j},b^*_{k,j})$ holds true.
\begin{equation}\label{eqa:555555}
{
\begin{split}
A^*_k&=\beta \cdot A_k=(\beta \cdot a_{k,1})(\beta \cdot a_{k,2})\cdots (\beta \cdot a_{k,n})\\
B^*_k&=\beta \cdot B_k=(\beta \cdot b_{k,1})(\beta \cdot b_{k,2})\cdots (\beta \cdot b_{k,n})\\
C^*_k&=\beta \cdot C_k=(\beta \cdot c_{k,1})(\beta \cdot c_{k,2})\cdots (\beta \cdot c_{k,n})\\
&=\beta \cdot \textrm{gcd}(a_{k,1},b_{k,1})\beta \cdot \textrm{gcd}(a_{k,2},b_{k,2})\cdots \beta \cdot \textrm{gcd}(a_{k,n},b_{k,n})\\
&=\beta ^n\textrm{gcd}(a_{k,1},b_{k,1})\textrm{gcd}(a_{k,2},b_{k,2})\cdots \textrm{gcd}(a_{k,n},b_{k,n})
\end{split}}
\end{equation} and $C^*_k=\textrm{gcd}(A^*_k,B^*_k)$, as well as
\begin{equation}\label{eqa:555555}
\beta \cdot C_k=C^*_k=\textrm{gcd}(A^*_k,B^*_k)=\textrm{gcd}(\beta \cdot A_k,\beta \cdot B_k)=\beta \cdot \textrm{gcd}(A_k,B_k)
\end{equation}

(vi) For two prime numbers $a_{k,j}$ and $b_{k,j}$ holding $c_{k,j}=a_{k,j}+b_{k,j}$, we set
$$c^*_{k,j}=c_{k,j}+2\beta a_{k,j}+2\beta b_{k,j}+4\beta ^2,~a^*_{k,j}=a_{k,j}+\beta ,~b^*_{k,j}=b_{k,j}+2\beta
$$ with $\beta\in Z^0$, then $c^*_{k,j}=a^*_{k,j}\cdot b^*_{k,j}$.

Summary of the above facts, we get the following \emph{string-algebraic operations}
\begin{equation}\label{eqa:555555}
{
\begin{split}
\gamma A_k[+]\beta B_k&=(\gamma a_{k,1}+\beta b_{k,1}) \cdot (\gamma a_{k,2}+\beta b_{k,2})\cdot \cdots \cdot (\gamma a_{k,n}+\beta b_{k,n})\\
\big |\gamma A_k[-]\beta B_k\big |&=|\gamma a_{k,1}-\beta b_{k,1}| \cdot |\gamma a_{k,2}-\beta b_{k,2}|\cdot \cdots \cdot |\gamma a_{k,n}-\beta b_{k,n}|
\end{split}}
\end{equation} based on $n$-rank number-based strings.
\end{proof}

\begin{thm} \label{thm:homoge-more-string-total-coloringss}
$^*$ Let $\textbf{\textrm{S}}_{tring}(\leq n)$ be the set of \emph{$m$-rank number-based strings} $\alpha_{1}\alpha_{2}\cdots \alpha_{m}$ with each number $\alpha_{j}\ge 0$ for $j\in [1,m]$ and $m\leq n$. A connected $(p,q)$-graph $G$ admits $B$ different string-colorings
$$g_j:V(G)\cup E(G)\rightarrow \textbf{\textrm{S}}_{tring}(\leq n),~j\in [1,B]$$ and each string-coloring $g_j$ is one of the string-colorings defined in Definition \ref{defn:homoge-uniformly-string-total-colorings}, Definition \ref{defn:homoge-various-string-total-colorings} and Definition \ref{defn:homoge-4-magic-string-colorings}, Definition \ref{defn:non-homoge-string-colorings}, Definition \ref{defn:non-homoge-sub-proper-eachostring-coloring} and Definition \ref{defn:non-homoge-traditional-string-colorings}. Then $G$ admits a \emph{compounded homogeneous proper total string-coloring} $F$ holding
\begin{equation}\label{eqa:555555}
F(u)=g_1(u)g_2(u)\cdots g_B(u),F(uv)=g_1(uv)g_2(uv)\cdots g_B(uv),F(v)=g_1(v)g_2(v)\cdots g_B(v)
\end{equation} for each edge $uv\in E(G)$. Moreover, if some $g_j$ is proper, so is $F$.\qqed
\end{thm}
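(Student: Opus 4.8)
The plan is to take $F$ to be exactly the concatenation prescribed in the statement and to verify the two assertions — that $F$ is a total string-coloring carrying the component constraints, and that properness of a single component transfers — directly from the block structure of the concatenated strings. First I would record the target: since each $g_j$ maps into $\textbf{\textrm{S}}_{tring}(\leq n)$, every value $F(w)=g_1(w)g_2(w)\cdots g_B(w)$ for $w\in V(G)\cup E(G)$ is a number-based string of rank at most $Bn$, so $F:V(G)\cup E(G)\rightarrow \textbf{\textrm{S}}_{tring}(\leq Bn)$ is well defined. Because each of the $W_i$-constraints listed in Definition \ref{defn:homoge-uniformly-string-total-colorings}, Definition \ref{defn:homoge-4-magic-string-colorings} and the non-homogeneous variants is imposed \emph{position by position} on the triple $\langle F(u),F(uv),F(v)\rangle$, and concatenation merely juxtaposes the positions contributed by the several $g_j$, each constraint satisfied by $g_j$ on its own block is inherited verbatim by $F$ on that block; hence $F$ satisfies every $W_i$-constraint carried by its constituents.

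Next I would prove the properness transfer. Suppose $g_{j_0}$ is proper for some $j_0\in[1,B]$, so that $g_{j_0}$ obeys the analogues of Res-\ref{stringco:adjacent-v}, Res-\ref{stringco:adjacent-e} and Res-\ref{stringco:incident-v-e}. Fix an edge $uv$; then $g_{j_0}(u)\neq g_{j_0}(v)$, $g_{j_0}(u)\neq g_{j_0}(uv)$ and $g_{j_0}(v)\neq g_{j_0}(uv)$, and for adjacent edges $uv,uw$ we have $g_{j_0}(uv)\neq g_{j_0}(uw)$. The key step is a concatenation lemma: if the $j_0$-th blocks of two compounded strings differ \emph{and the blocks are aligned}, then the compounded strings differ. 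When all components are homogeneous of a common rank $n$, the $j_0$-th block of $F(w)$ occupies the fixed positions $(j_0-1)n+1,\dots,j_0 n$, so $F(x)=F(y)$ forces $g_{j_0}(x)=g_{j_0}(y)$ on those positions; contraposing yields $F(x)\neq F(y)$ in each of the four adjacency situations, which is precisely the properness (i)--(iii) asserted in the statement.

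The main obstacle is the alignment hypothesis in this lemma: when some $g_j$ are drawn from the \emph{non-homogeneous} families of Definition \ref{defn:non-homoge-string-colorings}--Definition \ref{defn:non-homoge-traditional-string-colorings}, the block lengths $m(a,k),m(b,k),m(c,k)$ vary from element to element, so the $j_0$-th blocks of $F(x)$ and of $F(y)$ need not begin at the same position, and a bare equality of the concatenated digit-strings could in principle mask a difference in a single block. I would resolve this by equipping $F$ with its induced three-row compounded Topcode-matrix $T_{code}(G,F)=\big(T_{code}(G,g_1),\dots,T_{code}(G,g_B)\big)$, in the spirit of the remark attached to Definition \ref{defn:magic-type-sub-proper-tring-colorings}, so that the block decomposition is part of the datum and equality of compounded colors is tested block-by-block; equivalently one restricts to the homogeneous fixed-rank regime named in the conclusion, where alignment is automatic. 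Finally I would note that the closing sentence of the statement follows at once: a single distinguishing block $j_0$ already forces $F(x)\neq F(y)$, so properness of one component $g_{j_0}$ delivers all of (i)--(iii) simultaneously, while the remaining constraints contributed by the non-proper components leave these inequalities undisturbed.
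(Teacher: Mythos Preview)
The paper gives no proof of this theorem: it ends with the bare end-of-statement marker \qqed and is treated as a constructive definition whose verification is left to the reader. Your approach --- define $F$ by concatenation, observe that the positionwise $W_i$-constraints pass to blocks, and deduce properness of $F$ from a single proper component $g_{j_0}$ --- is the natural verification and is consistent with the way the paper handles the parallel Theorem \ref{thm:homogeneous-total-coloring-to-string-coloring}, which also carries no argument.

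You go further than the paper in one genuine respect. The alignment issue you flag is real: with components drawn from $\textbf{\textrm{S}}_{tring}(\leq n)$ of varying ranks, literal digit-string equality of two concatenations need not force equality of matching blocks, so ``$g_{j_0}(x)\neq g_{j_0}(y)\Rightarrow F(x)\neq F(y)$'' can fail if $F$ is read as a flat string. The paper never confronts this. Your fix --- record $F$ together with its block decomposition, i.e.\ compare $F$-values as $B$-tuples of strings rather than as flattened digit sequences --- is the correct reading and matches the paper's own practice elsewhere (the tuple-of-Topcode-matrices viewpoint in the remark after Definition \ref{defn:homoge-4-magic-string-colorings} and in Eq.~(\ref{eqa:topcode-matrix-another-expression})). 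With that convention your argument is complete; without it, the properness claim as literally stated is not quite justified, and the paper is silent on the point.
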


\begin{example}\label{exa:5-labelings-string-set-vector}
In Fig.\ref{fig:construct-string-00}, a graph $G$ admits the following labelings:

(i) The graph $G_1$ admits a \emph{graceful labeling} $g_1$ holding $g_1(xy)=|g_1(x)-g_1(y)|$ for each edge $xy\in E(G_1)$, such that the vertex color set $g_1(V(G_1))\subset [0,11]$ and the edge color set $g_1(E(G_1))=[1,11]$;

(ii) The graph $G_2$ admits an \emph{odd-edge graceful labeling} $g_2$ holding
$$g_2(xy)=2|g_2(x)-g_2(y)|-1,~xy\in E(G_2)
$$ such that the vertex color set $g_2(V(G_2))\subset [0,11]$ and the edge color set $g_2(E(G_2))=[1,21]^o$;

(iii) The graph $G_3$ admits an \emph{edge-difference total labeling} $g_3$ holding the edge-difference constraint
$$g_3(xy)+|g_3(x)-g_3(y)|=12,~xy\in E(G_3)
$$ such that the vertex color set $g_3(V(G_3))\subset [0,11]$ and the edge color set $g_3(E(G_3))=[1,11]$;

(iv) The graph $G_4$ admits a \emph{felicitous-difference total labeling} $g_4$ holding the felicitous-difference constraint
$$|g_4(x)+g_4(y)-g_4(xy)|=5,~xy\in E(G_4)
$$ such that the vertex color set $g_4(V(G_4))\subset [0,11]$ and the edge color set $g_4(E(G_4))=[1,11]$;

(v) The graph $G_5$ admits an \emph{edge-magic graceful total labeling} $g_5$ holding the edge-magic constraint
$$g_5(x)+g_5(y)+g_5(xy)=17,~xy\in E(G_5)
$$ such that the vertex color set $g_5(V(G_5))\subset [0,11]$ and the edge color set $g_5(E(G_5))=[1,11]$.
\qqed
\end{example}

\begin{figure}[h]
\centering
\includegraphics[width=16.4cm]{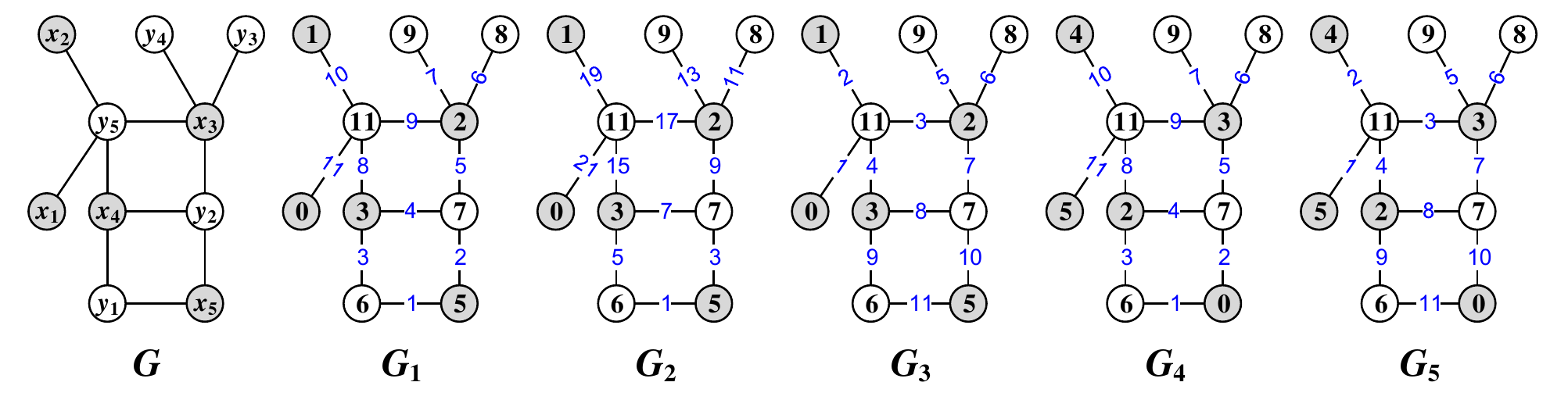}\\
\caption{\label{fig:construct-string-00} {\small A connected $(10,11)$-graph $G$ admits five colorings.}}
\end{figure}

\begin{example}\label{exa:8888888888}
In Fig.\ref{fig:333string-vector-set-coloring}, the connected $(10,11)$-graph $G$ shown in Fig.\ref{fig:construct-string-00} admits a $\{W_i\}^n_{i=1}$-constraint homogeneous proper total string-coloring $F_{stri}$ defined in Theorem \ref{thm:homogeneous-total-coloring-to-string-coloring}, a total vector-coloring $F_{vector}$ and a total set-coloring $F_{set}$ defined in Definition \ref{defn:n-di-vector-colorings-definition}. Three colorings $F$, $F_{vector}$ and $F_{set}$ are induced by the colorings $g_1$, $g_2$, $g_3$, $g_4$ and $g_5$ introduced in Example \ref{exa:5-labelings-string-set-vector}.\qqed
\end{example}

\begin{figure}[h]
\centering
\includegraphics[width=16.4cm]{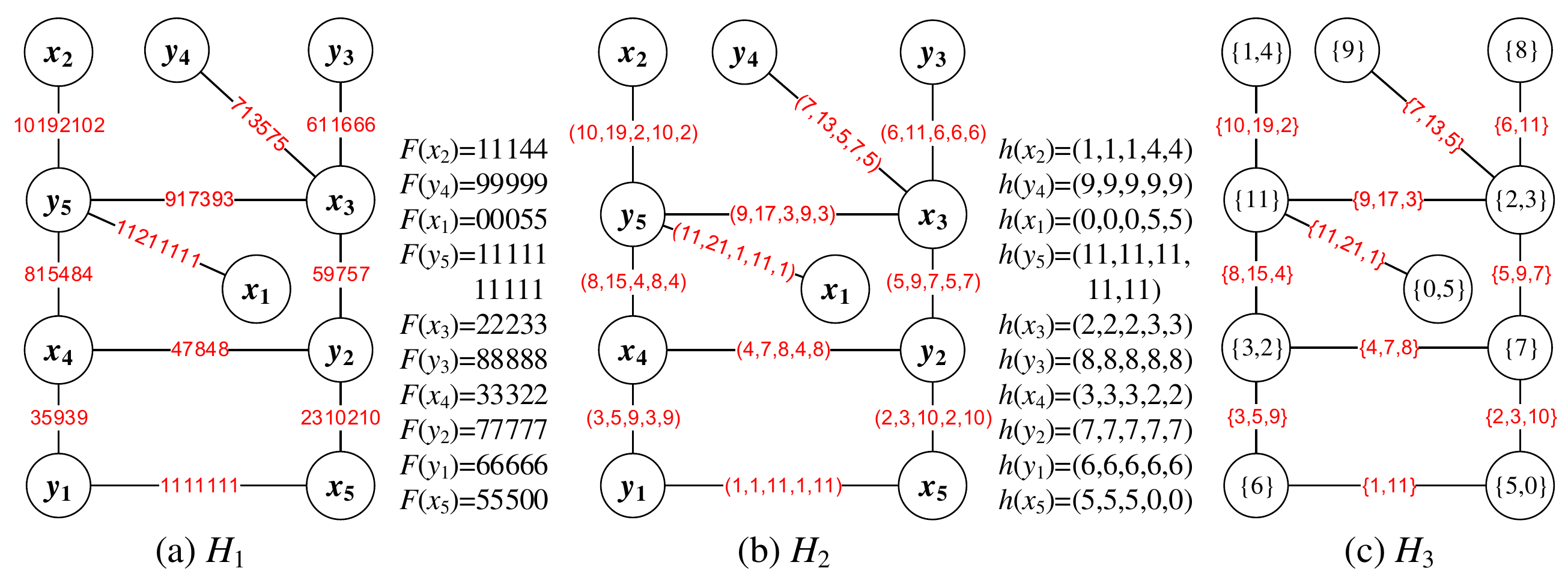}\\
\caption{\label{fig:333string-vector-set-coloring} {\small (a) A connected $(10,11)$-graph $H_1$ admits a string-coloring $F_{stri}$; (b) a connected $(10,11)$-graph $H_2$ admits a vector-coloring $F_{vector}$; (c) a connected $(10,11)$-graph $H_3$ admits a set-coloring $F_{set}$.}}
\end{figure}

\begin{defn} \label{defn:more-k-d-string-total-coloring}
\cite{Bing-Yao-arXiv:2207-03381} \textbf{Parameterized string-colorings and set-colorings.} By Definition \ref{defn:kd-w-type-colorings} and Definition \ref{defn:odd-edge-W-type-total-labelings-definition}, let $G$ be a bipartite $(p,q)$-graph, and its vertex set $V(G)=X\cup Y$ with $X\cap Y=\emptyset$ such that each edge $uv\in E(G)$ holds $u\in X$ and $v\in Y$. There is a proper total coloring
$$
f_s:X\rightarrow S_{m,0,0,d}=\{0,d,\dots ,md\},~f_s:Y\cup E(G)\rightarrow S_{n,k,0,d}=\{k,k+d,\dots ,k+nd\}
$$ here it is allowed $f_s(u)=f_s(w)$ for some distinct vertices $u,w\in V(G)$) for $s\in [1,B]$ with integer $B\geq 2$, such that $f_s\in \{$gracefully $(k_s,d_s)$-total coloring, odd-gracefully $(k_s,d_s)$-total coloring, edge anti-magic $(k_s,d_s)$-total coloring, harmonious $(k_s,d_s)$-total coloring, odd-elegant $(k_s,d_s)$-total coloring, edge-magic $(k_s,d_s)$-total coloring, edge-difference $(k_s,d_s)$-total coloring, felicitous-difference $(k_s,d_s)$-total coloring, graceful-difference $(k_s,d_s)$-total coloring, odd-edge edge-magic $(k_s,d_s)$-total coloring, odd-edge edge-difference $(k_s,d_s)$-total coloring, odd-edge felicitous-difference $(k_s,d_s)$-total coloring, odd-edge graceful-difference $(k_s,d_s)$-total coloring$\}$ with $s\in [1,B]$. Then

(i) The bipartite $(p,q)$-graph $G$ admits a \emph{parameterized compounded homogeneous proper total string-coloring} $F$ holding
$$
F(u)=f_{i_1}(u)f_{i_2}(u)\cdots f_{i_B}(u),F(uv)=f_{j_1}(uv)f_{j_2}(uv)\cdots f_{j_B}(uv),F(v)=f_{s_1}(v)f_{s_2}(v)\cdots f_{s_B}(v)
$$true for each edge $uv\in E(G)$, where the number-based string $f_{i_1}(u)f_{i_2}(u)\cdots f_{i_B}(u)$ is a permutation of $f_1(u),f_2(u),\cdots $, $f_B(u)$, and the number-based string $f_{j_1}(uv)f_{j_2}(uv)\cdots f_{j_B}(uv)$ is a permutation of $f_1(uv),f_2(uv),\cdots ,f_B(uv)$, as well as the number-based string $f_{s_1}(v)f_{s_2}(v)$ $\cdots f_{s_B}(v)$ is a permutation of $f_1(v),f_2(v),\cdots ,f_B(v)$. Moreover, if some $f_{i_j}$ is proper, then $F$ is proper too.

(ii) The bipartite $(p,q)$-graph $G$ admits a \emph{parameterized proper total set-coloring} $\theta$ holding
$${
\begin{split}
\theta(u)&=\{f_1(u),f_2(u),\cdots ,f_B(u)\},~\theta(uv)=\{f_1(uv),f_2(uv),\cdots ,f_B(uv)\},\\
\theta(v)&=\{f_1(v),f_2(v),\cdots ,f_B(v)\}
\end{split}}
$$ true for each edge $uv\in E(G)$.\qqed
\end{defn}

\begin{rem}\label{rem:333333}
In Definition \ref{defn:more-k-d-string-total-coloring}, each sequence $\{(k_s,d_s)\}^B_{s=1}$ will induce random number-based strings generated from the Topcode-matrix of the bipartite $(p,q)$-graph $G$ admitting a parameterized compounded homogeneous total string-coloring; refer to Definition \ref{defn:basic-W-type-labelings}. Similarly, a compounded total string-coloring $F$ of the bipartite $(p,q)$-graph $G$ shown in Theorem \ref{thm:homoge-more-string-total-coloringss} is more complex than each one of the hostring-colorings defined in Definition \ref{defn:homoge-uniformly-string-total-colorings}, Definition \ref{defn:homoge-various-string-total-colorings} and Definition \ref{defn:homoge-4-magic-string-colorings}.\paralled
\end{rem}

By Lemma \ref{lem:set-ordered-w-cond-set-co}, Theorem \ref{thm:10-k-d-total-coloringss} and Theorem \ref{thm:connected-graph-vs-trees-colorings}, we have

\begin{thm}\label{thm:99999}
$^*$ Each connected graphs admits some of the various total string-colorings and total set-colorings defined from Definition \ref{defn:homoge-uniformly-string-total-colorings} to Definition \ref{defn:more-k-d-string-total-coloring}.
\end{thm}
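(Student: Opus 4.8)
The plan is to exploit the reduction already packaged in Theorem \ref{thm:connected-graph-vs-trees-colorings}, namely that the \emph{vertex-splitting tree-operation} converts any connected $(p,q)$-graph $G$ into a tree $T$ of $q+1$ vertices with a graph homomorphism $T\rightarrow G$ (Theorem \ref{thm:graph-vertex-split-m-tree-set}), and then to transport colorings of $T$ back to $G$. First I would fix such a vsplit-tree $T\in V^{tree}_{split}(G)$ and record the partition of $V(T)$ into the blocks that collapse onto the split-vertices of $G$, since every induced coloring on $G$ is read off from these blocks exactly as in the items (i)--(iii) of the proof of Theorem \ref{thm:connected-graph-vs-trees-colorings}.

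Next I would supply $T$ with the base colorings the statement ranges over. By Theorem \ref{thm:10-k-d-total-coloringss}, a tree of diameter at least $3$ admits at least $2^m$ distinct $W$-constraint $(k,d)$-total colorings for every $W$-constraint in the list (graceful, harmonious, edge-difference, edge-magic, graceful-difference, felicitous-difference, and their odd-edge variants), which covers the magic-type and uniformly-type constraints of Definition \ref{defn:homoge-4-magic-string-colorings} and Definition \ref{defn:non-homoge-string-colorings}; and by Lemma \ref{lem:set-ordered-w-cond-set-co}, a set-ordered $W$-constraint labeling of $T$ simultaneously yields a $W$-constraint proper total set-coloring and a $W$-constraint proper $(k,d)$-total set-coloring, which supply the set-coloring half of the claim. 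For the degenerate diameter cases (stars and very short paths) I would note that the constraints reduce to finitely many direct verifications, so the quantifier ``some of the various'' is met there as well. Applying the block-wise induction of Theorem \ref{thm:connected-graph-vs-trees-colorings}, each such coloring of $T$ descends to a $W$-constraint (sub-)proper total set-coloring $F$ of $G$, and choosing, for each block, a permutation $\gamma_{i_1}\gamma_{i_2}\cdots\gamma_{i_{a_x}}$ of its color-set converts $F$ into the corresponding total string-coloring.

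Finally, for the \emph{compounded} and \emph{parameterized} colorings of Theorem \ref{thm:homoge-more-string-total-coloringss} and Definition \ref{defn:more-k-d-string-total-coloring}, I would iterate: take $B$ of the colorings produced above and concatenate them coordinatewise, $F(u)=g_1(u)g_2(u)\cdots g_B(u)$ and likewise on edges, observing that properness of any single $g_j$ forces properness of $F$, exactly as asserted there; varying the sequence $\{(k_s,d_s)\}^B_{s=1}$ then realizes the parameterized family. I expect the main obstacle to be purely one of \emph{coverage and bookkeeping} rather than of genuine mathematical depth: the statement sweeps across every definition from Definition \ref{defn:homoge-uniformly-string-total-colorings} to Definition \ref{defn:more-k-d-string-total-coloring}, so the real work is checking that each named constraint admits a representative on $T$ and that the incidence conditions Res-\ref{stringco:incident-v-e}, Nos-\ref{substring:incident-v-e} and Nm-\ref{nonhomoge:incident-v-e} survive the descent to $G$ (they may fail where split-vertices merge, which is precisely why the ``sub-proper'' variants are listed as the fallback). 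I would therefore organize the argument as a table matching each definition-item to the generating theorem, and treat the loss of a vertex--edge distinctness condition under vertex-coinciding as the one place requiring care.
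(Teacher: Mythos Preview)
Your proposal is correct and follows essentially the same approach as the paper: the paper's entire proof is the one-line remark ``By Lemma \ref{lem:set-ordered-w-cond-set-co}, Theorem \ref{thm:10-k-d-total-coloringss} and Theorem \ref{thm:connected-graph-vs-trees-colorings}, we have'' preceding the statement, and your plan is precisely a fleshed-out execution of that citation chain (vertex-split to a tree, color the tree via Theorem \ref{thm:10-k-d-total-coloringss} and Lemma \ref{lem:set-ordered-w-cond-set-co}, descend via Theorem \ref{thm:connected-graph-vs-trees-colorings}, then concatenate for the compounded versions). Your attention to the sub-proper fallback when incidence conditions fail under vertex-coinciding is already more care than the paper itself exercises.
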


\begin{thm} \label{thm:no-order-homogeneous-proper-total-string-coloringss}
$^*$ Let $\textbf{\textrm{S}}_{tring}(\leq n)$ be the set of \emph{$m$-rank number-based strings} $\alpha_{1}\alpha_{2}\cdots \alpha_{m}$ with each number $\alpha_{j}\ge 0$ for $j\in [1,m]$ and $m\leq n$. Suppose that a connected $(p,q)$-graph $H$ admits $n$ proper total colorings $h_1,h_2$, $\dots $, $h_n$ such that each total coloring $h_i$ obeys a $W_i$-constraint $h_i(uv)=W_i\langle h_i(u),h_i(v)\rangle $ for each edge $uv\in E(H)$ and $i\in [1,n]$, then this graph $G$ admits a $\{W_i\}^n_{i=1}$-constraint \emph{no-order homogeneous proper total string-coloring} $\varphi:V(H)\cup E(H)\rightarrow \textbf{\textrm{S}}_{tring}(\leq n)$ defined by
\begin{equation}\label{eqa:no-order-homogeneous-string-colors}
{
\begin{split}
\varphi(u)&=h_{i_1}(u)h_{i_2}(u)\cdots h_{i_n}(u),~\varphi(uv)=h_{j_1}(uv)h_{j_2}(uv)\cdots h_{j_n}(uv),\\
\varphi(v)&=h_{k_1}(v)h_{k_2}(v)\cdots h_{k_n}(v)
\end{split}}
\end{equation} where the \emph{vertex string} $h_{i_1}(u)h_{i_2}(u)\cdots h_{i_n}(u)$ is a permutation of vertex colors $h_1(u),h_2(u)$, $\dots $, $h_n(u)$, and the \emph{edge string} $h_{j_1}(uv)h_{j_2}(uv)\cdots h_{j_n}(uv)$ is a permutation of edge colors $h_1(uv),h_2(uv)$, $\dots $, $h_n(uv)$, and the \emph{vertex string} $h_{k_1}(v)h_{k_2}(v)\cdots h_{k_n}(v)$ is a permutation of vertex colors $h_1(v)$, $h_2(v)$, $\dots $, $h_n(v)$; such that

(i) $\varphi(u)\neq \varphi(v)$, $\varphi(u)\neq \varphi(uv)$ and $\varphi(uv)\neq \varphi(v)$ for each edge $uv\in E(H)$;

(ii) $\varphi(uv)\neq \varphi(uw)$ for any pair of adjacent edges $uv,uw\in E(H)$;

(iii) each edge number $h_{j_c}(uv)$ corresponds two vertex numbers $h_{i_a}(u)$ and $h_{k_b}(v)$ holding the $W_{j_c}$-constraint $h_{j_c}(uv)=W_{j_c}\langle h_{i_a}(u),h_{k_b}(v)\rangle $ for each edge $uv\in E(H)$;

(iv) each vertex number $h_{i_a}(u)$ corresponds two numbers $h_{j_c}(uv)$ and $h_{k_b}(v)$ holding the $W_{i_a}$-constraint $h_{j_c}(uv)=W_{i_a}\langle h_{i_a}(u),h_{k_b}(v)\rangle $ for each vertex $u\in V(H)$.

(v) each vertex number $h_{k_b}(v)$ corresponds two numbers $h_{j_c}(uv)$ and $h_{i_a}(u)$ holding the $W_{k_b}$-constraint $h_{j_c}(uv)=W_{k_b}\langle h_{i_a}(u),h_{k_b}(v)\rangle $ for each vertex $v\in V(H)$.
\end{thm}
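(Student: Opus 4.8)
The plan is to exhibit $\varphi$ by choosing, for each vertex and each edge of $H$, a permutation of the $n$ colors supplied by the given colorings $h_1,h_2,\dots,h_n$, and then to verify the five assertions after splitting them into the ``position-existence'' conditions (iii)--(v), which hold for \emph{every} choice of permutations, and the ``properness'' conditions (i)--(ii), where the real work lies. First I would record the raw material: for every edge $uv\in E(H)$ and every index $\ell\in[1,n]$ the hypothesis gives $h_\ell(uv)=W_\ell\langle h_\ell(u),h_\ell(v)\rangle$, and each $h_\ell$ is a proper total coloring, so $h_\ell(u)\neq h_\ell(v)$, $h_\ell(uv)\neq h_\ell(uw)$ for adjacent edges $uv,uw$, and $h_\ell(u)\neq h_\ell(uv)$. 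By construction each of $\varphi(u)$, $\varphi(uv)$, $\varphi(v)$ is a length-$n$ string that lists \emph{all} of the colors $h_1(\cdot),\dots,h_n(\cdot)$ in some order, hence each lies in $\textbf{\textrm{S}}_{tring}(\leq n)$.

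Next I would dispatch (iii), (iv) and (v) in one stroke, since they are insensitive to the permutations chosen. Fix an edge $uv$ and an index $\ell$. Because $\varphi(u)$, $\varphi(uv)$ and $\varphi(v)$ are permutations of the full color lists, the value $h_\ell(u)$ occupies some position $a$ of $\varphi(u)$, the value $h_\ell(uv)$ occupies some position $c$ of $\varphi(uv)$, and $h_\ell(v)$ occupies some position $b$ of $\varphi(v)$; that is, $i_a=j_c=k_b=\ell$. Then $h_{j_c}(uv)=h_\ell(uv)=W_\ell\langle h_\ell(u),h_\ell(v)\rangle=W_{j_c}\langle h_{i_a}(u),h_{k_b}(v)\rangle$, which is exactly the matching required in (iii); reading the same identity from the viewpoints of $u$ and of $v$ yields (iv) and (v). Thus the three correspondence conditions are automatic.

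The main obstacle is the properness block (i)--(ii): independently permuting the color lists at neighboring elements can destroy the coordinatewise distinctness that keeps two permutations of equal multisets unequal. The safe route is the \emph{aligned} choice in which every vertex and edge uses the identity permutation; then $\varphi$ is precisely the compounded homogeneous proper total string-coloring $F$ furnished by Theorem~\ref{thm:homogeneous-total-coloring-to-string-coloring}, and (i)--(ii) follow. Directly, since $h_1$ is proper its values already separate the first coordinates, so $\varphi(u)$ and $\varphi(v)$ differ in position $1$ (as $h_1(u)\neq h_1(v)$), adjacent edge-strings $\varphi(uv)$ and $\varphi(uw)$ differ in position $1$ (as $h_1(uv)\neq h_1(uw)$), and each incident pair $\varphi(u),\varphi(uv)$ differs in position $1$ (as $h_1(u)\neq h_1(uv)$). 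This establishes existence of a valid $\varphi$.

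Finally I would explain the genuinely ``no-order'' freedom: one may depart from the identity at any element provided neighboring strings stay distinct. When the underlying multisets of two neighboring elements already differ, any permutations keep the strings unequal; when they coincide, properness of the $h_\ell$ forces that common multiset to contain at least two distinct values (otherwise some $h_\ell$ would assign a vertex and its neighbor the same color), so distinct permutations can be selected there. Carrying out such a selection consistently over the whole graph is the delicate point, but it is never forced---the aligned choice already proves the theorem---and it shows that $H$ in fact admits many such no-order colorings.
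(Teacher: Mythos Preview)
Your proposal is correct. The paper states this theorem without proof---it is presented as a constructive result where the map $\varphi$ is exhibited in the statement and the verification of (i)--(v) is implicit---so your argument supplies exactly the details the paper leaves to the reader. Your reduction of (i)--(ii) to the aligned (identity-permutation) case, which invokes Theorem~\ref{thm:homogeneous-total-coloring-to-string-coloring}, and your observation that (iii)--(v) are automatic because each string is a full permutation of $h_1(\cdot),\dots,h_n(\cdot)$, together constitute the intended verification.
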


\begin{thm} \label{thm:non-homogeneous-proper-total-string-colorings}
$^*$ Let $\textbf{\textrm{S}}_{tring}(\leq n)$ be the set of \emph{$m$-rank number-based strings} $\alpha_{1}\alpha_{2}\cdots \alpha_{m}$ with each number $\alpha_{j}\ge 0$ for $j\in [1,m]$ and $m\leq n$. Suppose that a connected $(p,q)$-graph $T$ admits $n$ proper total colorings $g_1,g_2$, $\dots $, $g_n$ such that each total coloring $g_i$ obeys a $W_i$-constraint $g_i(uv)=W_i\langle g_i(u),g_i(v)\rangle $ for each edge $uv\in E(T)$ and $i\in [1,n]$, then this graph $G$ admits a $\{W_i\}^n_{i=1}$-constraint \emph{non-homogeneous proper total string-coloring} $\theta:V(T)\cup E(T)\rightarrow \textbf{\textrm{S}}_{tring}(\leq n)$ defined by
\begin{equation}\label{eqa:non-homogeneous-string-colorss}
{
\begin{split}
\theta(u)&=g_{i_1}(u)g_{i_2}(u)\cdots g_{i_A}(u),~\theta (uv)=g_{j_1}(uv)g_{j_2}(uv)\cdots g_{j_B}(uv),\\
\theta (v)&=g_{k_1}(v)g_{k_2}(v)\cdots g_{k_C}(v)
\end{split}}
\end{equation} with $1\leq A,B,C\leq n$, where the \emph{vertex string} $g_{i_1}(u)g_{i_2}(u)\cdots g_{i_A}(u)$ is a sub-permutation of vertex colors $g_1(u),g_2(u)$, $\dots $, $g_n(u)$, and the \emph{edge string} $g_{j_1}(uv)g_{j_2}(uv)\cdots g_{j_B}(uv)$ is a sub-permutation of edge colors $g_1(uv),g_2(uv)$, $\dots $, $g_n(uv)$, and the \emph{vertex string} $g_{k_1}(v)g_{k_2}(v)\cdots g_{k_C}(v)$ is a sub-permutation of vertex colors $g_1(v)$, $g_2(v)$, $\dots $, $g_n(v)$; such that

(i) $\theta (u)\neq \theta (v)$, $\theta (u)\neq \theta (uv)$ and $\theta (uv)\neq \theta (v)$ for each edge $uv\in E(T)$;

(ii) $\theta (uv)\neq \theta (uw)$ for any pair of adjacent edges $uv,uw\in E(T)$;

(iii) each edge number $g_{j_s}(uv)$ corresponds two vertex numbers $g_{i_r}(u)$ and $g_{k_t}(v)$ holding the $W_{j_s}$-constraint $g_{j_s}(uv)=W_{j_s}\langle g_{i_r}(u),g_{k_t}(v)\rangle $ for each edge $uv\in E(T)$;

(iv) each vertex number $g_{i_r}(u)$ corresponds two numbers $g_{j_s}(uv)$ and $g_{k_t}(v)$ holding the $W_{i_r}$-constraint $g_{j_s}(uv)=W_{i_r}\langle g_{i_r}(u),g_{k_t}(v)\rangle $ for each vertex $u\in V(T)$.

(v) each vertex number $g_{k_t}(v)$ corresponds two numbers $g_{j_s}(uv)$ and $g_{i_r}(u)$ holding the $W_{k_t}$-constraint $g_{j_s}(uv)=W_{k_t}\langle g_{i_r}(u),g_{k_t}(v)\rangle $ for each vertex $v\in V(T)$.
\end{thm}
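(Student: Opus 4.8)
The plan is to mimic the stacking construction used for Theorems~\ref{thm:homogeneous-total-coloring-to-string-coloring} and~\ref{thm:no-order-homogeneous-proper-total-string-coloringss}, but now allowing each vertex- and edge-string to be built from a \emph{subset} of the $n$ given colorings rather than from all of them. Concretely, I would attach to every vertex $u$ an index set $I(u)=\{i_1,\dots,i_A\}\subseteq[1,n]$ and to every edge $uv$ an index set $J(uv)=\{j_1,\dots,j_B\}\subseteq[1,n]$ (with $I(v)=\{k_1,\dots,k_C\}$), together with a chosen ordering of each, and then \emph{define} $\theta$ by Eq.~(\ref{eqa:non-homogeneous-string-colorss}), reading off the colors $g_{i_r}(u)$, $g_{j_s}(uv)$, $g_{k_t}(v)$ in those orders. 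The entire content of the theorem then reduces to showing that the index sets and orderings can be chosen so that the properness conditions (i)--(ii) and the $W$-constraint correspondences (iii)--(v) hold simultaneously.

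First I would dispose of properness. Since each $g_i$ is a proper total coloring, $g_i(u)\neq g_i(v)$, $g_i(uv)\neq g_i(uw)$ and $g_i(u)\neq g_i(uv)$ hold for every $i\in[1,n]$ on the relevant adjacent/incident pairs. Hence if adjacent or incident elements are read using a common index in a common position, the two strings already differ in that block, which yields (i) and (ii); the only care needed is that a number-based string $\alpha_1\alpha_2\cdots\alpha_m$ is compared \emph{block-wise}, so that strings built from different blocks are not accidentally identified through concatenation. This uses only the hypothesis that the $g_i$ are proper, not the $W_i$-constraints.

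The substantive step is (iii)--(v), and this is where I expect the main obstacle. Because each $g_i$ satisfies $g_i(uv)=W_i\langle g_i(u),g_i(v)\rangle$, the $W_{j_s}$-correspondence demanded in (iii) is met by the \emph{canonical matching} $i_r=j_s=k_t$: reading the same color index in the edge-string and in both incident vertex-strings makes $g_{j_s}(uv)=W_{j_s}\langle g_{j_s}(u),g_{j_s}(v)\rangle$ hold by hypothesis. Imposing this matching for (iii), (iv) and (v) at once forces, along each edge $uv$ at which the correspondence is required, the containments $J(uv)\subseteq I(u)\cap I(v)$ (from (iii)) and $I(u)\cup I(v)\subseteq J(uv)$ (from (iv)--(v)), hence $I(u)=I(v)=J(uv)$. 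Thus the index sets must agree across every edge, and since $T$ is \emph{connected} this local agreement propagates to a single common index family $S\subseteq[1,n]$ used at every vertex and edge. I would therefore fix any such $S$ (for instance $S=[1,n]$, which recovers Theorem~\ref{thm:no-order-homogeneous-proper-total-string-coloringss} as the full case), choose a common ordering of $S$, and verify (iii)--(v) by the index-matching just described.

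Finally, I would record that the freedom remaining after fixing $S$ produces many distinct non-homogeneous colorings: shrinking $S$ to a proper subset gives the genuinely ``sub-permutation'' instances with $A,B,C<n$, while varying the orderings (subject to keeping adjacent and incident strings distinct) multiplies the choices further. The delicate point to get right in the write-up is exactly this propagation argument, namely that the edge-by-edge equality $I(u)=I(v)=J(uv)$ forced by (iii)--(v) is \emph{globally} consistent across $T$; this rests on connectivity and mirrors the role connectivity plays in the vertex-splitting argument of Theorem~\ref{thm:connected-graph-vs-trees-colorings}.
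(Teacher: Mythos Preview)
The paper states this theorem without proof (it is marked $^*$ and followed immediately by Section~4), so there is no argument to compare against; your construction---fix a subset $S\subseteq[1,n]$, use a common ordering of $S$ at every vertex and edge, and verify (i)--(v) via the canonical matching $i_r=j_s=k_t$---is exactly the natural extension of the constructions behind Theorems~\ref{thm:homogeneous-total-coloring-to-string-coloring} and~\ref{thm:no-order-homogeneous-proper-total-string-coloringss}, and it succeeds.

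Two small points to tighten in the write-up. First, your ``propagation'' argument overreaches: conditions (iii)--(v) require only the \emph{existence} of some matching indices, not the canonical one, so they do not literally force $I(u)=I(v)=J(uv)$ unless you add a genericity assumption on the $g_i$. For the existence statement you are proving, the correct move is simply to \emph{choose} a common $S$ and verify, not to argue that such a choice is forced. Second, your remark that the orderings can be varied freely is not quite safe for (i)--(ii): if $\theta(u)$ lists $S$ in one order and $\theta(v)$ in another, the block-wise inequality $\theta(u)\neq\theta(v)$ is no longer guaranteed by the properness of a single $g_i$. Keeping a common ordering across all vertices and edges avoids this and is all you need.
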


\section{Graph-based strings}

\subsection{Total graph-colorings}

\begin{defn} \label{defn:graphic-topcode-matrix}
\cite{Yao-Wang-2106-15254v1} Let $G_{x,i},G_{e,j}$ and $G_{y,k}$ be colored graphs with $1\leq i,j,k\leq q$. A \emph{graphic Topcode-matrix} $G_{code}$ is defined as
\begin{equation}\label{eqa:graphic-topcode-matrix}
\centering
{
\begin{split}
G_{code}= \left(
\begin{array}{ccccc}
G_{x,1} & G_{x,2} & \cdots & G_{x,q}\\
G_{e,1} & G_{e,2} & \cdots & G_{e,q}\\
G_{y,1} & G_{y,2} & \cdots & G_{y,q}
\end{array}
\right)=
\left(\begin{array}{c}
G_X\\
G_E\\
G_Y
\end{array} \right)=(G_X,~G_E,~G_Y)^{T}
\end{split}}
\end{equation} both colored graphs $G_{x,i}$ and $G_{y,i}$ are the \emph{ends} of the colored graph $G_{e,i}$, and we call
$$G_X=(G_{x,1}, G_{x,2},\dots ,G_{x,q}),~G_E=(G_{e,1}, G_{e,2},\dots ,G_{e,q}),~G_Y=(G_{y,1}, G_{y,2},\dots ,G_{y,q})
$$ \emph{graphic vectors}.

Moreover the graphic Topcode-matrix $G_{code}$ is \emph{constraint valued} if there is a group of constraints $\varphi_1,\varphi_2,\dots ,\varphi_m$ such that $G_{e,i}=\varphi_j\langle G_{x,i},G_{y,i}\rangle$ for some $j\in [1,m]$ and each $i\in [1,q]$.\qqed
\end{defn}

\begin{defn} \label{defn:topological-number-based-string}
$^*$ If a number-based string $s(n)$ can be made by some colored graph, then we call it \emph{topological number-based string}, and the colored graph is a \emph{topological expression} of the number-based string $s(n)$.\qqed
\end{defn}

\begin{rem}\label{rem:333333}
In Definition \ref{defn:topological-number-based-string}, a topological number-based string can be expressed by two or more topological expressions.\paralled
\end{rem}

Let $\textbf{\textrm{T}}_i=\big (T_{i,1},T_{i,2},\dots, T_{i,n_i}\big )$ be a \emph{colored graph base} for $n_i\geq 1$ and $i\in [1,M]$, and each colored graph $T_{i,j}$ admits a total coloring (or a total graph-coloring) $f_{i,j}$ for $j\in [1,n_i]$. Suppose that each colored graph base $\textbf{\textrm{T}}_i$ is an \emph{every-zero additive group} defined in Definition \ref{defn:every-zero-abstract-group}.

\begin{defn} \label{defn:edge-index-graphic-group222}
$^*$ A new Abelian additive operation is defined as
\begin{equation}\label{eqa:edge-index-graphic-group222}
T_{i,r}~[\ominus \oplus _k] ~T_{i,j}:=T_{i,k}\ominus T_{i,r}\oplus T_{i,j}=T_{i,\mu}\in \textbf{\textrm{T}}_i
\end{equation} with the index $\mu=k-(r+j)~(\bmod~M)$ for any preappointed \emph{zero} $T_{i,k}\in \textbf{\textrm{T}}_i$. So, we get an \emph{every-zero edge-index graphic group} $\{F^-_M(\textbf{\textrm{T}}_i);\ominus \oplus\}$ of order $M$ defined by the new Abelian operation $T_{i,r}~[\ominus \oplus _k] ~T_{i,j}$ based on a colored graph base $\textbf{\textrm{T}}_i$.\qqed
\end{defn}

\begin{figure}[h]
\centering
\includegraphics[width=16.4cm]{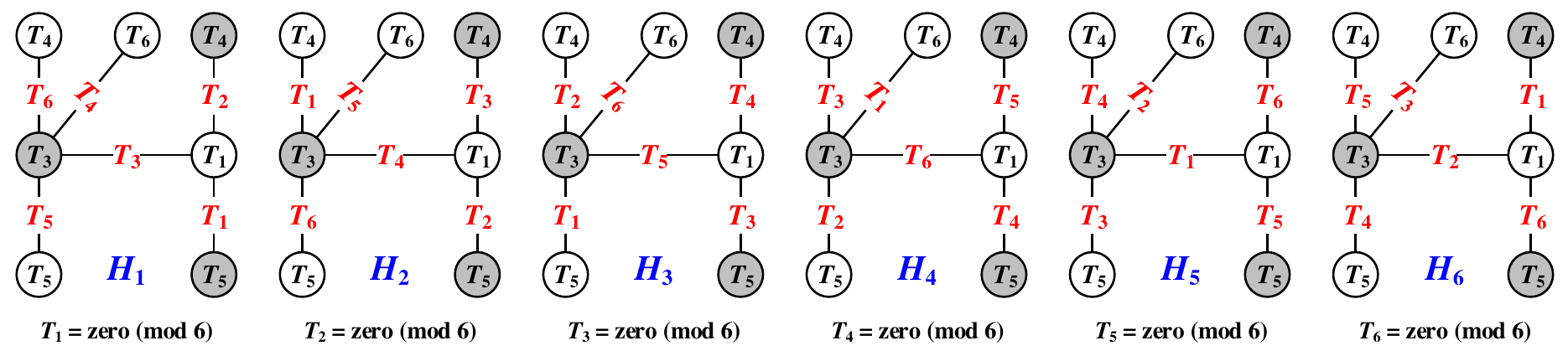}\\
\caption{\label{fig:graphic-group-operation22} {\small An edge-index graphic group $\{F^-_6(G);\ominus \oplus\}$ based on $G\{T_1,T_2,\dots ,T_6\}$ for understanding Definition \ref{defn:edge-index-graphic-group222}.}}
\end{figure}

\begin{defn} \label{defn:edge-index-total-graph-coloring}
$^*$ A $(p,q)$-graph $H$ admits an \emph{edge-index total graph-coloring} $\theta:V(H)\cup E(H)\rightarrow \textbf{\textrm{T}}_i=\big \{T_{i,1},T_{i,2},\dots, T_{i,n_i}\big \}$, such that each edge $uv\in E(H)$ holds $F(u)=T_{i,r}$, $F(v)=T_{i,j}$ and $F(uv)=T_{i,\lambda}$ with the index $\lambda=r+j-k~(\bmod~M)$ under the Abelian additive operation
\begin{equation}\label{eqa:edge-index-graph-coloring}
T_{i,r}~[\oplus \ominus_k] ~T_{i,j}:=T_{i,r}\oplus T_{i,j}\ominus T_{i,k}=T_{i,\lambda}\in \textbf{\textrm{T}}_i
\end{equation} for any preappointed \emph{zero} $T_{i,k}\in \textbf{\textrm{T}}_i$, where $M$ is a positive integer. Let
$$F_{\textrm{e-index}}(E(H))=\{s=r+j-k~(\bmod~M): ~\theta(uv)=T_{i,s},uv\in E(H)\}
$$ be the \emph{edge-index set} for $\theta(u)=T_{i,r}$, $\theta(v)=T_{i,j}$ and $\theta(uv)=T_{i,s}$. Let $\lambda$ be a non-negative integer. \textbf{Then we call $\theta$}:
\begin{asparaenum}[\textbf{Incol}-1.]
\item an \emph{edge-index graceful total graph-coloring} if $F_{\textrm{e-index}}(E(H))=[1,q]$ holds true.
\item an \emph{edge-index odd-graceful total graph-coloring} if $F_{\textrm{e-index}}(E(H))=[1,2q-1]^o$ holds true.
\item an \emph{edge-index edge-difference total graph-coloring} if the edge-difference constraint $s+|r-j|=\lambda$ holds true.
\item an \emph{edge-index felicitous-difference total graph-coloring} if the felicitous-difference constraint $|r+j-s|=\lambda$ holds true.
\item an \emph{edge-index graceful-difference total graph-coloring} if the graceful-difference constraint $\big ||r+j|-s\big |=\lambda$ holds true.
\item an \emph{edge-index edge-magic total graph-coloring} if the edge-magic constraint $s+r+j=\lambda$ holds true.\qqed
\end{asparaenum}
\end{defn}

\begin{example}\label{exa:8888888888}
In Fig.\ref{fig:odd-edge-index-colorings}, $F(uv)=T_{ij}$, $F(u)=T_{i}$ and $F(v)=T_{j}$. So, we have

(a) An odd-graceful edge-index felicitous-difference total graph-coloring holding the felicitous-difference constraint $|i+j-ij|=6~(\bmod ~12)$;

(b) an odd-graceful edge-index graceful-difference total graph-coloring holding the graceful-difference constraint $\big ||i-j|-ij\big |=0~(\bmod ~12)$, and each vertex of (a) is colored the same color with the image vertex of (b);

(c) an odd-graceful edge-index edge-magic total graph-coloring holding the edge-magic constraint $i+j+ij=4~(\bmod ~12)$ under $4-(i+j)=ij~(\bmod ~12)$, and each edge index of (c) (as a \emph{public-key}) matches with edge index of the image edge of (b) (as a \emph{private-key}), such that the sum of two edge indices is equal to 12;

(d) an odd-graceful edge-index felicitous-difference total graph-coloring holding the felicitous-difference constraint $|i+j-ij|=4~(\bmod ~12)$;

(e) an odd-graceful edge-index felicitous-difference total graph-coloring holding the felicitous-difference constraint $|i+j-ij|=4~(\bmod ~12)$, where each vertex index of (b) matches with vertex index of the image edge of (e), such that the sum of two vertex indices is equal to 15;

(f) each edge-graph $F(uv)=T_{ij}$ corresponds to another edge-graph $F(xy)=T_{xy}$, such that $i+j=xy~(\bmod ~12)$.\qqed
\end{example}

\begin{figure}[h]
\centering
\includegraphics[width=16.4cm]{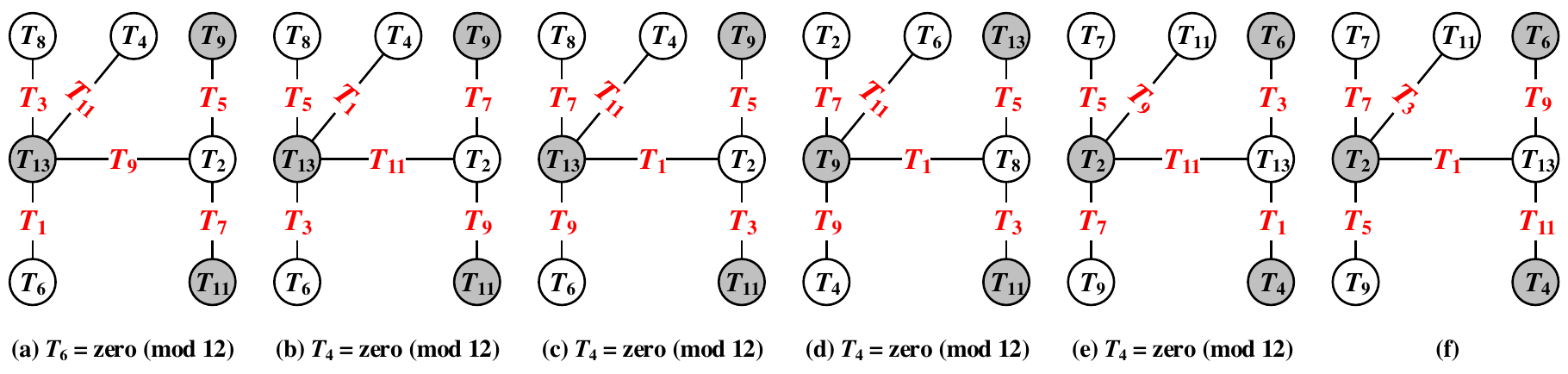}\\
\caption{\label{fig:odd-edge-index-colorings} {\small Six graphs admit six edge-index total graph-colorings; refer to Definition \ref{defn:edge-index-total-graph-coloring}.}}
\end{figure}

\begin{example}\label{exa:8888888888}
In Fig.\ref{fig:edge-index-colorings}, $F(uv)=T_{ij}$, $F(u)=T_{i}$ and $F(v)=T_{j}$, we have

(a) A graceful edge-index felicitous-difference total graph-coloring holding the felicitous-difference constraint $|i+j-ij|=5~(\bmod ~6)$;

(b) a graceful edge-index edge-difference total graph-coloring holding the edge-difference constraint $ij+|i-j|=7~(\bmod ~6)$;

(c) a graceful edge-index felicitous-difference total graph-coloring holding the felicitous-difference constraint $|i+j-ij|=4~(\bmod ~6)$;

(d) an edge-index felicitous-difference total graph-coloring holding the graceful-difference constraint $\big ||i-j|-ij\big |=3~(\bmod ~6)$;

(e) a graceful edge-index felicitous-difference total graph-coloring holding the felicitous-difference constraint $|i+j-ij|=5~(\bmod ~6)$;

(f) a graceful edge-index felicitous-difference total graph-coloring holding the felicitous-difference constraint $|i+j-ij|=2~(\bmod ~6)$;

(g) a graceful edge-index edge-magic total graph-coloring with $i+j+ij=10~(\bmod ~6)$, where each edge-graph $F(uv)=T_{ij}$ corresponds to another edge-graph $F(xy)=T_{xy}$ with $i+j-6~(\bmod ~6)=xy$;

(h) a graceful edge-index felicitous-difference total graph-coloring with $i+j+ij=12~(\bmod ~6)$, where $k-(i+j)=ij~(\bmod ~7)$.\qqed
\end{example}

\begin{figure}[h]
\centering
\includegraphics[width=16.4cm]{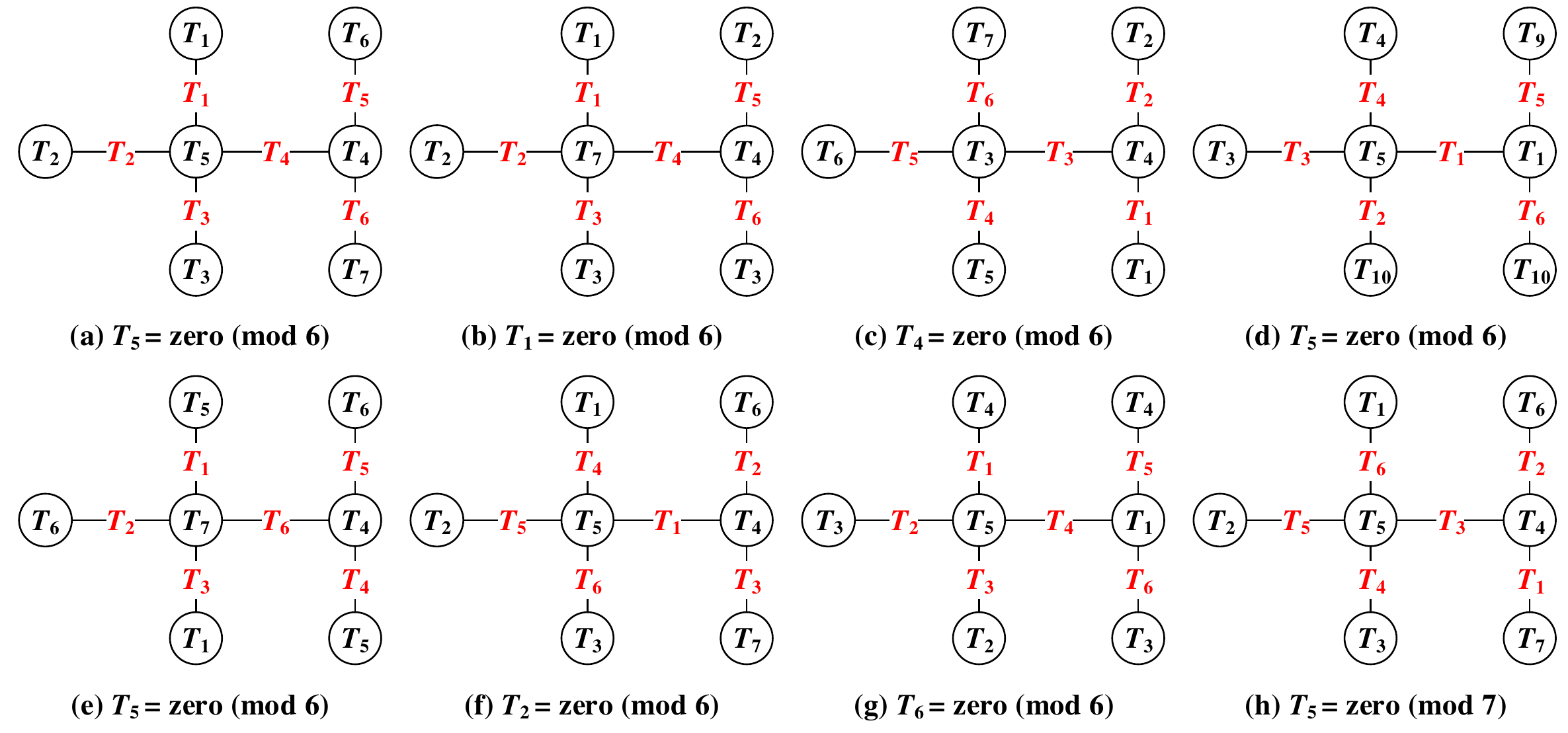}\\
\caption{\label{fig:edge-index-colorings} {\small Examples for understanding Definition \ref{defn:edge-index-total-graph-coloring}.}}
\end{figure}

\subsection{TOTAL-graph-coloring algorithm-I}

We show the TOTAL-graph-coloring algorithms for producing \emph{$t$-rank topological number-based strings} in this subsection.

\vskip 0.4cm

\noindent\textbf{TOTAL-graph-coloring algorithm-I.}

\textbf{Initialization.} By Definition \ref{defn:edge-index-total-graph-coloring}, a $(p,q)$-graph $H$ has its own Topcode-matrix
\begin{equation}\label{eqa:555555}
T_{code}\big (H,F^{1}_0\big )=\big (X^{1}_0,E^{1}_0,Y^{1}_0\big )^T
\end{equation} with the vertex vector $X^{1}_0=\big (F^{1}_0\big (x^{0}_1\big )$, $F^{1}_0\big (x^{0}_2\big ),\dots ,F^{1}_0\big (x^{0}_{q}\big )\big )$, the edge vector $E^{1}_0=\big (F^{1}_0\big (x^{0}_1y^{0}_1\big )$, $F^{1}_0\big (x^{0}_2y^{0}_2\big )$, $\dots ,F^{1}_0\big (x^{0}_{q}y^{0}_{q}\big )\big )$ and the vertex vector $Y^{1}_0=\big (F^{1}_0\big (y^{0}_1\big ),F^{1}_0\big (y^{0}_2\big ),\dots ,F^{1}_0\big (y^{0}_{q}\big )\big )$ under an \emph{edge-index total graph-coloring} $F^1_0$ defined in Definition \ref{defn:edge-index-total-graph-coloring}. For the simplicity of statement, we write
\begin{equation}\label{eqa:n-rank-graph-based-string11}
U^0(H)=X^{1}_0\cup E^{1}_0\cup Y^{1}_0=\big \{F^{1}_0\big (w_1\big ),F^{1}_0\big (w_2\big ),\dots ,F^{1}_0\big (w_{3q}\big )\big \}
\end{equation}
and we get a \emph{graph-based string}
\begin{equation}\label{eqa:graph-based-string-3q-k-0}
S^{3q}_{k_0}(H)=F^{1}_0\big (w_{j_1}\big )F^{1}_0\big (w_{j_2}\big )\cdots F^{1}_0\big (w_{j_{3q}}\big ),~k_0\in [1,(3q)!]
\end{equation} where $F^{1}_0\big (w_{j_1}\big ),F^{1}_0\big (w_{j_2}\big ),\cdots ,F^{1}_0\big (w_{j_{3q}}\big )$ is a permutation of elements of the graph set $U^0(H)$ defined in Eq.(\ref{eqa:n-rank-graph-based-string11}).

Since each graph $F^{1}_0\big (w_{j_s}\big )=T_{1,k_{j_s}}\in \textbf{\textrm{T}}_1$ for $s\in [1,3q]$ and $1\leq k_{j_s}\leq n_1$, then the graph $T_{1,k_{j_s}}$ has its own Topcode-matrix
\begin{equation}\label{eqa:555555}
T_{code}\big (T_{1,k_{j_s}},f_{1,k_{j_s}}\big )=T_{code}\big (F^{1}_0\big (w_{j_s}\big ),f_{1,k_{j_s}}\big )=\big (X_{1,k_{j_s}},E_{1,k_{j_s}},Y_{1,k_{j_s}}\big )^T
\end{equation} under the total coloring

(i) $f_{1,k_{j_s}}:V(T_{1,k_{j_s}})\cup E(T_{1,k_{j_s}})\rightarrow Z_{num}$, where $Z_{num}$ is an integer set; or

(ii) $f_{1,k_{j_s}}:V(T_{1,k_{j_s}})\cup E(T_{1,k_{j_s}})\rightarrow Z_{grd}$, where $Z_{grd}$ is a graph set.

\vskip 0.4cm

\textbf{Step 1.} Notice that $F^{1}_0(w_i)=T_{1,k^{1}_i}\in \textbf{\textrm{T}}_1$ with $i\in [1,3q_1]$ and $1\leq k^{1}_i \leq n_1$, and let $N_{1,i}=|E\big (T_{1,k^{1}_i}\big )|$. There is the Topcode-matrix
\begin{equation}\label{eqa:555555}
T_{code}\big (T_{1,k^{1}_i},f_{1,k^{1}_i}\big )=T_{code}\big (F^{1}_0(w_i),f_{1,k^{1}_i}\big )=\big (X_{1,k^{1}_i},E_{1,k^{1}_i},Y_{1,k^{1}_i}\big )^T
\end{equation} having

its own vertex vector $X_{1,k^{1}_i}=\big (f_{1,k^{1}_i}\big (x^{1}_1\big ),f_{1,k^{1}_i}\big (x^{1}_2\big ),\dots ,f_{1,k^{1}_i}\big (x^{1}_{N_{1,i}}\big )\big )$ with $x^{1}_j\in V(T_{1,k^{1}_i})$ for $j\in [1,N_{1,i}]$,

its own edge vector $E_{1,k^{1}_i}=\big (f_{1,k^{1}_i}\big (x^{1}_1y^{1}_1\big ),f_{1,k^{1}_i}\big (x^{1}_2y^{1}_2\big ),\dots ,f_{1,k^{1}_i}\big (x^{1}_{q_1}y^{1}_{N_{1,i}}\big )\big )$ with $x^{1}_jy^{1}_j\in E(T_{1,k^{1}_i})$ for $j\in [1,N_{1,i}]$, and

its own vertex vector $Y_{1,k^{1}_i}=\big (f_{1,k^{1}_i}\big (y^{1}_1\big ),f_{1,k^{1}_i}\big (y^{1}_2\big ),\dots ,f_{1,k^{1}_i}\big (y^{1}_{N_{1,i}}\big )\big )$ with $y^{1}_j\in V(T_{1,k^{1}_i})$ for $j\in [1,N_{1,i}]$
under the total coloring

(i) $f_{1,k^{1}_i}:V(T_{1,k^{1}_i})\cup E(T_{1,k^{1}_i})\rightarrow Z_{num}$, where $Z_{num}$ is an integer set; or

(ii) $f_{1,k^{1}_i}:V(T_{1,k^{1}_i})\cup E(T_{1,k^{1}_i})\rightarrow Z_{grd}$, where $Z_{grd}$ is a graph set.\\
If the above (ii) holds true, we have a graph set
\begin{equation}\label{eqa:n-rank-graph-based-string22}
U^{1}_{1,k^{1}_i}(T_{1,k^{1}_i})=X_{1,k^{1}_i}\cup E_{1,k^{1}_i}\cup Y_{1,k^{1}_i}=\big \{f_{1,k^{1}_i}\big (w_{k^{1}_i,1}\big ),f_{1,k^{1}_i}\big (w_{k^{1}_i,2}\big ),\dots ,f_{1,k^{1}_i}\big (w_{k^{1}_i,~3N_{1,i}}\big )\big \}
\end{equation} Moreover, we get a graph-based string
\begin{equation}\label{eqa:graph-based-string-N-1-i}
S^{N_{1,i}}_{k_{1,i}}(T_{1,k^{1}_i})=f_{1,k^{1}_i}\big (w_{k^{1}_i,j_1}\big )f_{1,k^{1}_i}\big (w_{k^{1}_i,j_2}\big )\cdots f_{1,k^{1}_i}\big (w_{k^{1}_i,~j_{3N_{1,i}}}\big )=S^{N_{1,j_1}N_{1,j_2}\cdots N_{1,j_{3N_{1,i}}}}_{k_{1,j_1}k_{1,j_2}\cdots k_{1,j_{3N_{1,i}}}}
\end{equation} with $k_{1,i}\in [1,(3N_{1,i})!]$, where $f_{1,k^{1}_i}\big (w_{k^{1}_i,j_1}\big ),f_{1,k^{1}_i}\big (w_{k^{1}_i,j_2}\big ),\dots ,f_{1,k^{1}_i}\big (w_{k^{1}_i,~j_{3N_{1,i}}}\big )$ is a permutation of elements of the graph set $U^{2}_{1,k^{1}_i}(T_{1,k^{1}_i})$ defined in Eq.(\ref{eqa:n-rank-graph-based-string22}).

By Eq.(\ref{eqa:graph-based-string-3q-k-0}) and Eq.(\ref{eqa:graph-based-string-N-1-i}), we have
\begin{equation}\label{eqa:555555}
S^{3q}_{k_0}(H)=F^{1}_0\big (w_{j_1}\big )F^{1}_0\big (w_{j_2}\big )\cdots F^{1}_0\big (w_{j_{3q}}\big )=S^{N_{1,j_1}N_{1,j_2}\cdots N_{1,j_{3q}}}_{k_{1,j_1}k_{1,j_2}\cdots k_{1,j_{3q}}}
\end{equation}

\vskip 0.4cm

\textbf{Step 2.} Since each graph $f_{1,k^{1}_i}\big (w_{k^{1}_i,~i}\big )=T_{2,k^{2}_i}\in \textbf{\textrm{T}}_2$ with $i\in [1,3N_{1,i}]$ and $1\leq k^{2}_i \leq n_2$, and let $N_{2,i}=|E\big (T_{2,k^{2}_i}\big )|$. We get the Topcode-matrix
\begin{equation}\label{eqa:555555}
T_{code}\big (T_{2,k^{2}_i},f_{2,k^{2}_i}\big )=T_{code}\big (f_{1,k^{1}_i}\big (w_{k^{1}_i,~i}\big ),f_{2,k^{2}_i}\big )=\big (X_{2,k^{2}_i},E_{2,k^{2}_i},Y_{2,k^{2}_i}\big )^T
\end{equation} having

its own vertex vector $X_{2,k^{2}_i}=\big (f_{2,k^{2}_i}\big (x^{2}_1\big ),f_{2,k^{2}_i}\big (x^{2}_2\big ),\dots ,f_{2,k^{2}_i}\big (x^{2}_{N_{2,i}}\big )\big )$ with $x^{2}_j\in V(T_{2,k^{2}_i})$ for $j\in [1,N_{2,i}]$,

its own edge vector $E_{2,k^{2}_i}=\big (f_{2,k^{2}_i}\big (x^{2}_1y^{2}_1\big ),f_{2,k^{2}_i}\big (x^{2}_2y^{2}_2\big ),\dots ,f_{2,k^{2}_i}\big (x^{2}_{q_1}y^{2}_{N_{2,i}}\big )\big )$ with $x^{2}_jy^{2}_j\in E(T_{2,k^{2}_i})$ for $j\in [1,N_{2,i}]$,

its own vertex vector $Y_{2,k^{2}_i}=\big (f_{2,k^{2}_i}\big (y^{2}_1\big ),f_{2,k^{2}_i}\big (y^{2}_2\big ),\dots ,f_{2,k^{2}_i}\big (y^{2}_{N_{2,i}}\big )\big )$ with $y^{2}_j\in V(T_{2,k^{2}_i})$ for $j\in [1,N_{2,i}]$
under the total coloring

(i) $f_{2,k^{2}_i}:V(T_{2,k^{2}_i})\cup E(T_{2,k^{2}_i})\rightarrow Z_{num}$, where $Z_{num}$ is an integer set; or

(ii) $f_{2,k^{2}_i}:V(T_{2,k^{2}_i})\cup E(T_{2,k^{2}_i})\rightarrow Z_{grd}$, where $Z_{grd}$ is a graph set.\\
If the above (ii) holds true, by
\begin{equation}\label{eqa:n-rank-graph-based-string33}
U^{2}_{2,k^{2}_i}(T_{2,k^{2}_i})=X_{2,k^{2}_i}\cup E_{2,k^{2}_i}\cup Y_{2,k^{2}_i}=\big \{f_{2,k^{2}_i}\big (w_{k^{2}_i,1}\big ),f_{2,k^{2}_i}\big (w_{k^{2}_i,2}\big ),\dots ,f_{2,k^{2}_i}\big (w_{k^{2}_i,~3N_{2,i}}\big )\big \}
\end{equation} we get a graph-based string
\begin{equation}\label{eqa:graph-based-string-N-2-i}
S^{N_{2,i}}_{k_{2,i}}(T_{2,k^{2}_i})=f_{2,k^{2}_i}\big (w_{k^{2}_i,j_1}\big )f_{2,k^{2}_i}\big (w_{k^{2}_i,j_2}\big )\cdots f_{2,k^{2}_i}\big (w_{k^{2}_i,~j_{3N_{2,i}}}\big )=S^{N_{2,j_1}N_{2,j_2}\cdots N_{2,j_{3N_{2,i}}}}_{k_{2,j_1}k_{2,j_2}\cdots k_{2,j_{3N_{2,i}}}}
\end{equation} with $k_2\in [1,(3N_{2,i})!]$, where $f_{2,k^{2}_i}\big (w_{k^{2}_i,j_1}\big ),f_{2,k^{2}_i}\big (w_{k^{2}_i,j_2}\big ),\cdots ,f_{2,k^{2}_i}\big (w_{k^{2}_i,~j_{3N_{2,i}}}\big )$ is a permutation of elements of the graph set $U^{2}_{2,k^{2}_i}(T_{2,k^{2}_i})$ defined in Eq.(\ref{eqa:n-rank-graph-based-string33}).

\vskip 0.4cm

\textbf{Step $t$.} Since $f_{t-1,k^{t-1}_i}\big (w_{k^{t-1}_i,~i}\big )=T_{t,k^{t}_i}\in \textbf{\textrm{T}}_t$ with $i\in [1,3N_{t-1,i}]$ and $1\leq k^{t}_i \leq n_t$. Let $N_{t,i}=|E\big (T_{t,k^{t}_i}\big )|$, so we have the Topcode-matrix
\begin{equation}\label{eqa:555555}
T_{code}\big (T_{t,k^{t}_i},f_{t,k^{t}_i}\big )=T_{code}\big (f_{t-1,k^{t-1}_i}\big (w_{k^{t-1}_i,i}\big ),f_{t,k^{t}_i}\big )=\big (X_{t,k^{t}_i},E_{t,k^{t}_i},Y_{t,k^{t}_i}\big )^T
\end{equation} having

its own vertex vector $X_{t,k^{t}_i}=\big (f_{t,k^{t}_i}\big (x^{t}_1\big ),f_{t,k^{t}_i}\big (x^{t}_2\big ),\dots ,f_{t,k^{t}_i}\big (x^{t}_{N_{t,i}}\big )\big )$ with $x^{t}_j\in V(T_{t,k^{t}_i})$ for $j\in [1,N_{t,i}]$,

its own edge vector $E_{t,k^{t}_i}=\big (f_{t,k^{t}_i}\big (x^{t}_1y^{t}_1\big ),f_{t,k^{t}_i}\big (x^{t}_2y^{t}_2\big ),\dots ,f_{t,k^{t}_i}\big (x^{t}_{N_{t,i}}y^{t}_{N_{t,i}}\big )\big )$ with $x^{t}_jy^{t}_j\in E(T_{t,k^{t}_i})$ for $j\in [1,N_{t,i}]$,

its own vertex vector $Y_{t,k^{t}_i}=\big (f_{t,k^{t}_i}\big (y^{t}_1\big ),f_{t,k^{t}_i}\big (y^{t}_2\big ),\dots ,f_{t,k^{t}_i}\big (y^{t}_{N_{t,i}}\big )\big )$ with $y^{t}_j\in V(T_{t,k^{t}_i})$ for $j\in [1,N_{t,i}]$
under the total coloring

(i) $f_{t,k^{t}_i}:V(T_{t,k^{t}_i})\cup E(T_{t,k^{t}_i})\rightarrow Z_{num}$, where $Z_{num}$ is an integer set; or

(ii) $f_{t,k^{t}_i}:V(T_{t,k^{t}_i})\cup E(T_{t,k^{t}_i})\rightarrow Z_{grd}$, where $Z_{grd}$ is a graph set.\\
If the above (ii) holds true, thus, we get a graph set
\begin{equation}\label{eqa:n-rank-graph-based-string44}
U^{t}_{t,k^{t}_i}(T_{t,k^{t}_i})=X_{t,k^{t}_i}\cup E_{t,k^{t}_i}\cup Y_{t,k^{t}_i}=\big \{f_{t,k^{t}_i}\big (w_{k^{t}_i,1}\big ),f_{t,k^{t}_i}\big (w_{k^{t}_i,2}\big ),\dots ,f_{t,k^{t}_i} (w_{k^{t}_i,~3N_{t,i}}\big )\big \}
\end{equation}There is a graph-based string
\begin{equation}\label{eqa:555555}
S^{N_{t,i}}_{k_{t,i}}(T_{t,k^{t}_i})=f_{t,k^{t}_i}\big (w_{k^{t}_i,j_1}\big )f_{t,k^{t}_i}\big (w_{k^{t}_i,j_2}\big )\cdots f_{t,k^{t}_i}\big (w_{k^{t}_i,~j_{3N_{t,i}}}\big )=S^{N_{t,j_1}N_{t,j_2}\cdots N_{t,j_{3N_{t,i}}}}_{k_{t,j_1}k_{t,j_2}\cdots k_{t,j_{3N_{t,i}}}}
\end{equation} with $k_{t,i}\in [1,(3N_{t,i})!]$, where $f_{t,k^{t}_i}\big (w_{k^{t}_i,j_1}\big ),f_{t,k^{t}_i}\big (w_{k^{t}_i,j_2}\big ),\dots ,f_{t,k^{t}_i} (w_{k^{t}_i,~j_{3N_{t,i}}}\big )$ is a permutation of elements of the graph set $U^{t}_{t,k^{t}_i}(T_{t,~k^{t}_i})$ defined in Eq.(\ref{eqa:n-rank-graph-based-string44}).

Since each graph $f_{t,k^{t}_i}\big (w_{k^{t}_i,j_s})=T_{t,k_{j_s}}\in \textbf{\textrm{T}}_t$ for $s\in [1,3N_{t,i}]$ and $1\leq k_{j_s}\leq n_t$, then the graph $T_{t,j_s}$ has its own Topcode-matrix
\begin{equation}\label{eqa:555555}
T_{code}\big (T_{t,j_s},f_{t,j_s}\big )=T_{code}\big (f_{t,k^{t}_i}\big (w_{k^{t}_i,j_s}),f_{t,j_s}\big )=\big (X_{t,j_s},E_{t,j_s},Y_{t,j_s}\big )^T
\end{equation} having

its own vertex vector $X_{t,j_s}=\big (f_{t,j_s}(x_1 ),f_{t,j_s}(x_2 ),\dots ,f_{t,j_s}(x_{N_{t,i}})\big )$,

its own edge vector $E_{t,j_s}=\big (f_{t,j_s}(x_1y_1 ),f_{t,j_s}(x_2y_2 ),\dots ,f_{t,j_s}(x_{N_{t,i}}y_{N_{t,i}} )\big )$ and

its own vertex vector $Y_{t,j_s}=\big (f_{t,j_s}(y_1 ),f_{t,j_s}(y_2 ),\dots ,f_{t,j_s}(y_{N_{t,i}} )\big )$.\\
And we have the set
\begin{equation}\label{eqa:last-number-based-string}
U^{t}(T_{t,j_s})=X_{t,j_s}\cup E_{t,j_s}\cup Y_{t,j_s}=\big \{f_{t,j_s}(w_{j_s,1}),f_{t,j_s}(w_{j_s,2}),\dots ,f_{t,j_s}(w_{j_s,N})\big \},~N=3N_{t,i}
\end{equation}

\vskip 0.4cm

\textbf{Step end.} Return $t$-rank topological number-based strings.

\vskip 0.4cm

Suppose that $f_{t,j_s}:V(T_{t,j_s})\cup E(T_{t,j_s})\rightarrow Z_{num}$, where $Z_{num}$ is an integer set, so we have

(i) the $[0,9]$-string $f_{t,j_s}(x_i)=d_{i,1}d_{i,2}\cdots d_{i,a_i}$ with $d_{i,j}\in [0,9]$ and $j\in [1,a_i]$, $i\in [1,N_{t,i}]$;

(ii) the $[0,9]$-string $f_{t,j_s}(x_iy_i)=d\,'_{i,1}d\,'_{i,2}\cdots d\,'_{i,b_i}$ with $d\,'_{i,j}\in [0,9]$ and $j\in [1,b_i]$, $i\in [1,N_{t,i}]$ and

(iii) the $[0,9]$-string $f_{t,j_s}(y_i)=d\,''_{i,1}d\,''_{i,2}\cdots d\,''_{i,c_i}$ with $d\,''_{i,j}\in [0,9]$ and $j\in [1,c_i]$, $i\in [1,N_{t,i}]$.

For each permutation $f_{t,j_s}(w_{j_s,j_1}),f_{t,j_s}(w_{j_s,j_2}),\dots ,f_{t,j_s}(w_{j_s,j_N})$ of elements of the graph set $U^{t}(T_{t,j_s})$ defined in Eq.(\ref{eqa:last-number-based-string}), we have a number-based string
\begin{equation}\label{eqa:555555}
n^{j_s(r)}_{bs}(T_{t,j_s})=f_{t,j_s}(w_{j_s,j_1})f_{t,j_s}(w_{j_s,j_2})\cdots f_{t,j_s}(w_{j_s,j_N})
\end{equation} with $s\in [1,3N_{t,i}]$, $j_s\in [1,3N_{t,i}]$ and $r\in [1,N]$. In total, we have a number-based string
$$
n(T_{t,j_s})=n^{j_s(1)}_{bs}(T_{t,j_s})n^{j_s(2)}_{bs}(T_{t,j_s})\cdots n^{j_s(N)}_{bs}(T_{t,j_s}),~s\in [1,3N_{t,i}]
$$ Moreover, we get a compound number-based string $n^k_{gbs}(T_{t,k^{t}_i}):=n(T_{t,j_1})n(T_{t,j_2})\cdots n(T_{t,j_N})$ with $k\in [1,3N_{t,i}]$. In general, we have
{\small
\begin{equation}\label{eqa:555555}
{
\begin{split}
&S^{3q}_{k_0}(H)=F^{1}_0\big (w_{j_1}\big )F^{1}_0\big (w_{j_2}\big )\cdots F^{1}_0\big (w_{j_{3q}}\big )=S^{N_{1,j_1}N_{1,j_2}\cdots N_{1,j_{3q}}}_{k_{1,j_1}k_{1,j_2}\cdots k_{1,j_{3q}}}\\
&S^{N_{1,j_1}N_{1,j_2}\cdots N_{1,j_{3q}}}_{k_{1,j_1}k_{1,j_2}\cdots k_{1,j_{3q}}}=S^{N_{2,j_1}N_{2,j_2}\cdots N_{2,j_{3N_{1,j_1}}},N_{2,j_1}N_{2,j_2}\cdots N_{2,j_{3N_{1,j_2}}},\cdots \cdots ~, N_{2,j_{3N_{1,j_1}}}N_{2,j_1}N_{2,j_2}\cdots N_{2,j_{3N_{1,j_{3q}}}}}_{k_{2,j_1}k_{2,j_2}\cdots k_{2,j_{3N_{1,j_1}}},k_{2,j_1}k_{2,j_2}\cdots k_{2,j_{3N_{1,j_2}}},\cdots \cdots ~,k_{2,j_1}k_{2,j_2}\cdots k_{2,j_{3N_{1,j_{3q}}}}}\\
& \quad \cdots \cdots \cdots \\
&S^{N_{t,j_1}N_{t,j_2}\cdots N_{t,j_{3q(t)}}}_{k_{t,j_1}k_{t,j_2}\cdots k_{t,j_{3q(t)}}}=S^{N_{t,j_1}N_{t,j_2}\cdots N_{t,j_{3N_{t,j_1}}}N_{t,j_1},N_{t,j_2}\cdots N_{t,j_{3N_{t,j_2}}},\cdots \cdots ~, N_{t,j_{3N_{t,j_1}}}N_{t,j_1}N_{t,j_2}\cdots N_{t,j_{3N_{t,j_{3q(t)}}}}}_{k_{t,j_1}k_{t,j_2}\cdots k_{t,j_{3N_{t,j_1}}},k_{t,j_1}k_{t,j_2}\cdots k_{t,j_{3N_{t,j_2}}},\cdots\cdots ~,k_{t,j_1}k_{t,j_2}\cdots k_{t,j_{3N_{t,j_{3q(t)}}}}}
\end{split}}
\end{equation}
}

By the TOTAL-graph-coloring algorithm-I, we have

\textbf{Level-1.} Each element $w_k\in V(H)\cup E(H)$ with $k\in [1,p+q]$ corresponds to a graph $T^k_{1,i}\in \textbf{\textrm{T}}_1$ with $e_{1,k}=|E(T^k_{1,i})|$ and $v_{1,k}=|V(T^k_{1,i})|$, the Topcode-matrix $T_{code}(T^k_{1,i},f^k_{1,i})=(X^k_{1,i},E^k_{1,i},Y^k_{1,i})^T$ produces $(3e_{1,k})!$ $1$-rank graph-based strings
\begin{equation}\label{eqa:555555}
g(T^k_{1,i},r)=f^k_{1,i}(w^1_{r,1})f^k_{1,i}(w^1_{r,2})\cdots f^k_{1,i}(w^1_{r,3e_{1,k}}),~r\in [1,(3e_{1,k})!],k\in [1,p+q]
\end{equation} There are $N_1$ different $1$-rank graph-based strings like $g(T^k_{1,i},r)$, where $N_1=[(3e_{1,1})!]\cdot [(3e_{1,2})!]\cdot \cdots \cdot [(3e_{1,p+q})!]$.

\textbf{Level-2.} Each element $w^1_k\in V(T^k_{1,i})\cup E(T^k_{1,i})$ with $k\in [1,v_{1,k}+e_{1,k}]$ corresponds to a graph $T^k_{2,i}\in \textbf{\textrm{T}}_2$ with $e_{2,k}=|E(T^k_{2,i})|$ and $v_{2,k}=|V(T^k_{2,i})|$, the Topcode-matrix $T_{code}(T^k_{2,i},f^k_{2,i})=(X^k_{2,i},E^k_{2,i},Y^k_{2,i})^T$ produces $(3e_{2,k})!$ $2$-rank graph-based strings
\begin{equation}\label{eqa:555555}
g(T^k_{2,i},r)=f^k_{2,i}(w^2_{r,1})f^k_{2,i}(w^2_{r,2})\cdots f^k_{2,i}(w^2_{r,3e_{2,k}}),~r\in [1,(3e_2(k))!],k\in [1,v_{1,k}+e_{1,k}]
\end{equation} There are $N_2$ different $2$-rank graph-based strings like $g(T^k_{2,i},r)$, where
$$N_2=[(3e_{2,1})!]\cdot [(3e_{2,2})!]\cdot \cdots \cdot [(3e_{2,v_{1,k}+e_{1,k}})!]$$

\textbf{Level-$t$.} Each element $w^{t-1}\in V(T_{t-1,i})\cup E(T_{t-1,i})$ with $k\in [1,v_{t-1,k}(k)+e_{t-1,k}(k)]$ corresponds to a graph $T^k_{t,i}\in \textbf{\textrm{T}}_t$ with $e_t(k)=|E(T^k_{t,i})|$ and $v_t(k)=|V(T^k_{2,i})|$, the Topcode-matrix $T_{code}(T^k_{t,i},f^k_{t,i})=(X^k_{t,i},E^k_{t,i},Y^k_{t,i})^T$ produces $(3e_t(k))!$ graph-based strings
\begin{equation}\label{eqa:555555}
g(T^k_{t,i},r)=f^k_{t,i}(w^t_{r,1})f^k_{t,i}(w^t_{r,2})\cdots f^k_{t,i}(w^t_{r,3e_t(k)}),~r\in [1,(3e_t(k))!],k\in [1,v_{t-1,k}+e_{t-1,k}]
\end{equation} There are
\begin{equation}\label{eqa:555555}
N_t=[(3e_{t,1})!]\cdot [(3e_{t,2})!]\cdot \cdots \cdot [(3e_{t,v_{t-1,k}+e_{t-1,k}})!]
\end{equation} different $t$-rank graph-based strings like $g(T^k_{2,i},r)$, in total.

At the end of Level-$t$, the $(p,q)$-graph $H$ produces $I(t)$ $t$-rank graph-based strings, in total, where $I(t)=N_1\cdot N_2\cdot \cdots \cdot N_t=\prod ^t_{i=1}N_i$.

\subsection{TOTAL-graph-coloring algorithm-II}

Let $\textbf{\textrm{O}}=(O_1,O_2,\dots ,O_p)$ be a \emph{graph operation base}, where each $O_i$ is a graph operation of graph theory, and let $\textbf{\textrm{T}}_i=\big (T_{i,1},T_{i,2},\dots, T_{i,n_i}\big )$ be a \emph{colored graph base} for $n_i\geq 1$ and $i\in [1,M]$, so that each colored graph $T_{i,j}$ admits a total coloring (or a total graph-coloring) $f_{i,j}$. Suppose that each colored graph base $\textbf{\textrm{T}}_i$ is an every-zero additive group defined in Definition \ref{defn:every-zero-abstract-group}.

\vskip 0.4cm

\noindent\textbf{TOTAL-graph-coloring algorithm-II with graph operation.}

\textbf{Input.} A $(p_1,q_1)$-graph $H_1$ admits an \emph{edge-index total graph-coloring} $F_{1}$ defined in Definition \ref{defn:edge-index-total-graph-coloring}.

\textbf{Output.} $t$-rank topological number-based strings.
\vskip 0.4cm

\textbf{Step-II-1.} A $(p_1,q_1)$-graph $H_1$ admitting an \emph{edge-index total graph-coloring} $F_{1}:V(H_1)\cup E(H_1)\rightarrow \textbf{\textrm{T}}_1$, such that each edge $uv\in E(H_1)$ holds $F_{1}(u)=T_{1,i}$, $F_{1}(v)=T_{1,j}$ and $F_{1}(uv)=T_{1,\lambda}$ with the index $\lambda=i+j-k_1~(\bmod~M_1)$ for a preappointed \emph{zero} $k_1\in [1,n_1]$.

\textbf{Step-II-1.1.} From the edge-index total graph-coloring $F_{1}$ of the graph $H_1$, we get a Topcode-matrix $T_{code}(H_1,F_{1})= (X_{1},E_{1},Y_{1} )^T$ with the vertex vector $X_{1}=(F_{1}(x_{1,1})$, $F_{1}(x_{1,2}),\dots ,F_{1}(x_{1,q_1}))$, the edge vector $E_{1}=(F_{1}(x_{1,1}y_{1,1}),F_{1}(x_{1,2}y_{1,2}),\dots ,F_{1}(x_{1,q_1}y_{1,q_1}))$ and the vertex vector $Y_{1}=(F_{1}(y_{1,1}),F_{1}(y_{1,2}),\dots ,F_{1}(y_{1,q_1}))$. For the simplicity of statement, we write
\begin{equation}\label{eqa:edge-index-graph-based-string11}
U(H_1)=X_{1}\cup E_{1}\cup Y_{1}=\{F_{1}(w_{1,1}),F_{1}(w_{1,2}),\dots ,F_{1}(w_{1,3q_1})\}
\end{equation}
and we get \emph{graph-based strings}
\begin{equation}\label{eqa:graph-based-string-3q-11}
S_{gb}(H_1,r)=F_{1}(w_{1,r,1})F_{1}(w_{1,r,2})\cdots F_{1}(w_{1,r,3q_1})
\end{equation} with $r\in [1,(3q_1)!]$, where $F_{1}(w_{1,r,1}),F_{1}(w_{1,r,2}),\dots ,F_{1}(w_{1,r,3q_1})$ is a permutation of elements of the graph set $U(H_1)$ defined in Eq.(\ref{eqa:edge-index-graph-based-string11}).

\textbf{Step-II-1.2.} Since each graph $F_{1}(w_{1,r,j})=T_{1,(r,j)}\in \textbf{\textrm{T}}_1$ with $j\in [1,3q_1]$, and $T_{1,(r,j)}$ admits a total coloring
$$f_{1,(r,j)}:V(T_{1,(r,j)})\cup E(T_{1,(r,j)})\rightarrow W_{1,(r,j)}
$$ where $W_{1,(r,j)}$ is a non-negative integer set. Let $q(1,r,j)=|E(T_{1,(r,j)})|$. Thereby, we have a Topcode-matrix
$$
T_{code}(T_{1,(r,j)},f_{1,(r,j)})= (X_{1,(r,j)},E_{1,(r,j)},Y_{1,(r,j)})^T
$$ with its own \emph{vertex vectors} and \emph{edge vector} as follows

$X_{1,(r,j)}=(f_{1,(r,j)}(u_{1,(r,j),1})$, $f_{1,(r,j)}(u_{1,(r,j),2}),\dots ,f_{1,(r,j)}(u_{1,(r,j),q(1,r,j)}))$,

$E_{1,(r,j)}=(f_{1,(r,j)}(u_{1,(r,j),1}v_{1,(r,j),1}),f_{1,(r,j)}(u_{1,(r,j),2}v_{1,(r,j),2}),\dots ,f_{1,(r,j)}(u_{1,(r,j),q(1,r,j)}v_{1,(r,j),q(1,r,j)}))$,

$Y_{1,(r,j)}=(f_{1,(r,j)}(v_{1,(r,j),1}),f_{1,(r,j)}(v_{1,(r,j),2}),\dots ,f_{1,(r,j)}(v_{1,(r,j),q(1,r,j)}))$.\\
We get a graph set
\begin{equation}\label{eqa:graph-based-string-base}
{
\begin{split}
U(T_{1,(r,j)})&=X_{1,(r,j)}\cup E_{1,(r,j)}\cup Y_{1,(r,j)}\\
&=\{f_{1,(r,j)}(z_{1,(r,j),1}),f_{1,(r,j)}(z_{1,(r,j),3}),\dots ,f_{1,(r,j)}(z_{1,(r,j),3q(1,r,j)})\}
\end{split}}
\end{equation} with $r\in [1,(3q_1)!]$ and $j\in [1,3q_1]$.

Each graph $F_{1}(w_{1,r,j})=T_{1,(r,j)}$ distributes us a topological number-based string
$$f_{1,(r,j)}(z_{1,(r,j),1})f_{1,(r,j)}(z_{1,(r,j),3})\cdots f_{1,(r,j)}(z_{1,(r,j),3q(1,r,j)})=S^{q(1,r,j)}_{(r,j)}
$$ then each graph-based string $S_{gb}(H_1,r)$ defined in Eq.(\ref{eqa:graph-based-string-3q-11}) distributes us a \emph{$1$-rank topological number-based string}
\begin{equation}\label{eqa:555555}
s(H_1,r)=S^{q(1,r,1)q(1,r,2)\cdots q(1,r,3q_1)}_{(r,1)(r,2)\cdots (r,3q_1)}\leftarrow S_{gb}(H_1,r)=F_{1}(w_{1,r,1})F_{1}(w_{1,r,2})\cdots F_{1}(w_{1,r,3q_1})
\end{equation} with $r\in [1,(3q_1)!]$.

\vskip 0.4cm
\textbf{Step-II-1.3.} A $(p_2,q_2)$-graph $H_2$ is obtained in the following way: For each edge $uv\in E(H_1)$, do a graph operation $O_{s}\in \textbf{\textrm{O}}$ to the graphs $F_{1}(u)$ and $F_{1}(uv)$ for producing a subgraph $F_{1}(u)O_{s}F_{1}(uv) $, and do a graph operation $O_{k}\in \textbf{\textrm{O}}$ to the graphs $F_{1}(v)$ and $F_{1}(uv)$ for producing a subgraph $ F_{1}(v)O_{k}F_{1}(uv) $, then we get a subgraph $F_{1}(u)O_{s}F_{1}(uv)O_{k}F_{1}(v)$, called a \emph{graph-edge}. We write $H_2=H_1[\textbf{\textrm{O}}]^{n_1}_{k=1}a_{1,k}T_{1,k}$ for $a_{1,k}\in Z^0$ and $T_{1,k}\in \textbf{\textrm{T}}_1$, immediately, we get an \emph{operation graphic lattice}
\begin{equation}\label{eqa:edge-index-graphic-lattice11}
\textbf{\textrm{L}}(\textbf{\textrm{Q}}_1[\textbf{\textrm{O}}]\textbf{\textrm{T}}_1)=\left \{H[\textbf{\textrm{O}}]^{n_1}_{k=1}a_{1,k}T_{1,k}:~a_{1,k}\in Z^0,~T_{1,k}\in \textbf{\textrm{T}}_1, H\in \textbf{\textrm{Q}}_1 \right \}
\end{equation} with $\sum ^{n_1}_{k=1}a_{1,k}\geq 1$, where $\textbf{\textrm{Q}}_1$ is a graph set.

\vskip 0.4cm

\textbf{Step-II-2.} Suppose the $(p_2,q_2)$-graph $H_2$ admits an \emph{edge-index total graph-coloring} $F_{2}:V(H_2)\cup E(H_2)\rightarrow \textbf{\textrm{T}}_2$, such that each edge $xy\in E(H_2)$ holds $F_{2}(x)=T_{2,i}$, $F_{2}(y)=T_{2,j}$ and $F_{2}(xy)=T_{2,\lambda}$ with the index $\lambda=i+j-k_2~(\bmod~M_2)$ for a preappointed \emph{zero} $T_{2,k_2}\in \textbf{\textrm{T}}_2$; refer to Definition \ref{defn:edge-index-total-graph-coloring}.

\textbf{Step-II-2.1.} From the edge-index total graph-coloring $F_{2}$ of the graph $H_2$, we get a Topcode-matrix $T_{code}(H_2,F_{2})= (X_{2},E_{2},Y_{2} )^T$ with the vertex vector $X_{2}=(F_{2}(x_{2,1})$, $F_{2}(x_{2,2}),\dots ,F_{2}(x_{2,q_2}))$, the edge vector $E_{2}=(F_{2}(x_{2,1}y_{2,1}),F_{2}(x_{2,2}y_{2,2}),\dots ,F_{2}(x_{2,q_2}y_{2,q_2}))$ and the vertex vector $Y_{2}=(F_{2}(y_{2,1}),F_{2}(y_{2,2}),\dots ,F_{2}(y_{2,q_2}))$. For the simplicity of statement, we write
\begin{equation}\label{eqa:t-rank-graph-based-string22}
U(H_2)=X_{2}\cup E_{2}\cup Y_{2}=\{F_{2}(w_{2,r,1}),F_{1}(w_{2,r,2}),\cdots ,F_{1}(w_{2,r,3q_2})\}
\end{equation}
and we get a \emph{graph-based string}
\begin{equation}\label{eqa:graph-based-string-3q-22}
S_{gb}(H_2,r)=F_{2}(w_{2,r,1})F_{1}(w_{2,r,2})\cdots F_{1}(w_{2,r,3q_2})
\end{equation} with $r\in [1,(3q_2)!]$, where $F_{2}(w_{2,r,1})F_{1}(w_{2,r,2})\cdots F_{1}(w_{2,r,3q_2})$ is a permutation of elements of the graph set $U(H_2)$ defined in Eq.(\ref{eqa:t-rank-graph-based-string22}).

\textbf{Step-II-2.2.} Since each graph $F_{2}(w_{2,r,j})=T_{2,(r,j)}\in \textbf{\textrm{T}}_2$ with $j\in [1,3q_2]$, and $T_{2,(r,j)}$ admits a a total coloring $f_{2,(r,j)}:V(T_{2,(r,j)})\cup E(T_{2,(r,j)})\rightarrow W_{2,(r,j)}$, where $W_{2,(r,j)}$ is a non-negative integer set. Let $q(2,r,j)=|E(T_{2,(r,j)})|$.

So we have a Topcode-matrix $T_{code}(T_{2,(r,j)},f_{2,(r,j)})= (X_{2,(r,j)},E_{2,(r,j)},Y_{2,(r,j)})^T$ with its own vertex vectors and edge vector as follows

$X_{2,(r,j)}=(f_{2,(r,j)}(u_{2,(r,j),1})$, $f_{2,(r,j)}(u_{2,(r,j),2}),\dots ,f_{2,(r,j)}(u_{2,(r,j),q(2,r,j)}))$,

$E_{2,(r,j)}=(f_{2,(r,j)}(u_{2,(r,j),1}v_{2,(r,j),1}),f_{2,(r,j)}(u_{2,(r,j),2}v_{2,(r,j),2}),\dots ,f_{2,(r,j)}(u_{2,(r,j),q(2,r,j)}v_{2,(r,j),q(2,r,j)}))$,

$Y_{2,(r,j)}=(f_{2,(r,j)}(v_{2,(r,j),1}),f_{2,(r,j)}(v_{2,(r,j),2}),\dots ,f_{2,(r,j)}(v_{2,(r,j),q(2,r,j)}))$.\\
We get a graph set
\begin{equation}\label{eqa:graph-based-string-base}
{
\begin{split}
U(T_{2,(r,j)})&=X_{2,(r,j)}\cup E_{2,(r,j)}\cup Y_{2,(r,j)}\\
&=\{f_{2,(r,j)}(z_{2,(r,j),1}),f_{2,(r,j)}(z_{2,(r,j),3}),\dots ,f_{2,(r,j)}(z_{2,(r,j),3q(2,r,j)})\}
\end{split}}
\end{equation} with $r\in [1,(3q_2)!]$ and $j\in [1,3q_2]$.

Each graph $F_{2}(w_{2,r,j})=T_{2,(r,j)}\in \textbf{\textrm{T}}_2$ distributes us a topological number-based string
$$f_{2,(r,j)}(z_{2,(r,j),1})f_{2,(r,j)}(z_{2,(r,j),3})\cdots f_{2,(r,j)}(z_{2,(r,j),3q(2,r,j)})=S^{q(2,r,j)}_{(r,j)}
$$ then each graph-based string $S_{gb}(H_2,r)$ defined in Eq.(\ref{eqa:graph-based-string-3q-22}) distributes us a \emph{$2$-rank topological number-based string} as follows
\begin{equation}\label{eqa:555555}
s(H_2,r)=S^{q(2,r,1)q(2,r,2)\cdots q(2,r,3q_2)}_{(r,1)(r,2)\cdots (r,3q_2)}\leftarrow S_{gb}(H_2,r)=F_{2}(w_{2,r,1})F_{2}(w_{2,r,2})\cdots F_{2}(w_{2,r,3q_2})
\end{equation} with $r\in [1,(3q_2)!]$.

\vskip 0.4cm

\textbf{Step-II-2.3.} A $(p_3,q_3)$-graph $H_3$ is obtained in the following way: For each edge $xy\in E(H_2)$, do a graph operation $O_{xy}\in \textbf{\textrm{O}}$ to the graphs $F_{2}(x)$ and $F_{2}(xy)$ for producing a subgraph $F_{2}(x)O_{xy}F_{2}(xy) $, and do a graph operation $O_{j}\in \textbf{\textrm{O}}$ to the graphs $F_{2}(y)$ and $F_{2}(xy)$ for producing a subgraph $F_{2}(y)O_{j}F_{2}(xy) $, then we get a subgraph (also graph-edge) $F_{2}(x)O_{xy}F_{2}(xy)O_{j}F_{2}(y)$. We write $H_3=H_2[\textbf{\textrm{O}}]^{n_2}_{k=1}a_{2,k}T_{2,k}$ for $a_{2,k}\in Z^0$ and $T_{2,k}\in \textbf{\textrm{T}}_2$.

Let $\textbf{\textrm{Q}}_2=\textbf{\textrm{L}}(\textbf{\textrm{Q}}_1[\textbf{\textrm{O}}]\textbf{\textrm{T}}_1)$ defined in Eq.(\ref{eqa:edge-index-graphic-lattice11}), we get an \emph{operation graphic lattice}
\begin{equation}\label{eqa:edge-index-graphic-lattice22}
\textbf{\textrm{L}}(\textbf{\textrm{Q}}_2[\textbf{\textrm{O}}]\textbf{\textrm{T}}_2)=\big \{H[\textbf{\textrm{O}}]^{n_2}_{k=1}a_{2,k}T_{2,k}:~a_{2,k}\in Z^0,~T_{2,k}\in \textbf{\textrm{T}}_2, H\in \textbf{\textrm{Q}}_2 \big \}
\end{equation} with $\sum ^{n_2}_{k=1}a_{2,k}\geq 1$.

\vskip 0.4cm

\textbf{Step-II-$t$.} Suppose the $(p_t,q_t)$-graph $H_t$ admits an \emph{edge-index total graph-coloring} $F_{t}:V(H_t)\cup E(H_t)\rightarrow \textbf{\textrm{T}}_t$, such that each edge $xy\in E(H_t)$ holds $F_{t}(x)=T_{t,i}$, $F_{t}(y)=T_{t,j}$ and $F_{t}(xy)=T_{t,\lambda}$ with the index $\lambda=i+j-k_t~(\bmod~M_t)$ for a preappointed \emph{zero-index} $T_{t,k_t}\in \textbf{\textrm{T}}_t$; refer to Definition \ref{defn:edge-index-total-graph-coloring}.

\textbf{Step-II-$t$.1.} From the edge-index total graph-coloring $F_{t}$ of the graph $H_t$, we get a Topcode-matrix $T_{code}(H_t,F_{t})= (X_{t},E_{t},Y_{t} )^T$ with the vertex vector $X_{t}=(F_{t}(x_{t,1})$, $F_{t}(x_{t,2}),\dots ,F_{t}(x_{t,q_t}))$, the edge vector $E_{t}=(F_{t}(x_{t,1}y_{t,1}),F_{t}(x_{t,2}y_{t,2}),\dots ,F_{t}(x_{t,q_t}y_{t,q_t}))$ and
the vertex vector $Y_{t}=(F_{t}(y_{t,1})$, $F_{t}(y_{t,2}),\dots ,F_{t}(y_{t,q_t}))$. For the simplicity of statement, we write
\begin{equation}\label{eqa:t-rank-graph-based-stringtt}
U(H_t)=X_{t}\cup E_{t}\cup Y_{t}=\{F_{t}(w_{t,r,1}),F_{t}(w_{t,r,2}),\cdots ,F_{t}(w_{t,r,3q_t})\}
\end{equation}
and we get a \emph{graph-based string}
\begin{equation}\label{eqa:graph-based-string-3q-tt}
S_{gb}(H_t,r)=F_{t}(w_{t,r,1})F_{t}(w_{t,r,2})\cdots F_{t}(w_{t,r,3q_t})
\end{equation} with $r\in [1,(3q_t)!]$, where $F_{t}(w_{t,r,1})F_{t}(w_{t,r,2})\cdots F_{t}(w_{t,r,3q_t})$ is a permutation of elements of the graph set $U(H_t)$ defined in Eq.(\ref{eqa:t-rank-graph-based-stringtt}).

\textbf{Step-II-$t$.2.} Since each graph $F_{t}(w_{t,r,j})=T_{t,(r,j)}\in \textbf{\textrm{T}}_t$ with $j\in [1,3q_t]$, and $T_{t,(r,j)}$ admits a a total coloring $f_{t,(r,j)}:V(T_{t,(r,j)})\cup E(T_{t,(r,j)})\rightarrow W_{t,(r,j)}$, where $W_{t,(r,j)}$ is a non-negative integer set. Let $q(t,r,j)=|E(T_{t,(r,j)})|$. So we have a Topcode-matrix
$$
T_{code}(T_{t,(r,j)},f_{t,(r,j)})= (X_{t,(r,j)},E_{t,(r,j)},Y_{t,(r,j)})^T
$$ with its own vertex vectors and edge vector as follows

$X_{t,(r,j)}=(f_{t,(r,j)}(u_{t,(r,j),1})$, $f_{t,(r,j)}(u_{t,(r,j),2}),\dots ,f_{t,(r,j)}(u_{t,(r,j),q(t,r,j)}))$,

$E_{t,(r,j)}=(f_{t,(r,j)}(u_{t,(r,j),1}v_{t,(r,j),1}),f_{t,(r,j)}(u_{t,(r,j),2}v_{t,(r,j),2}),\dots ,f_{t,(r,j)}(u_{t,(r,j),q(t,r,j)}v_{t,(r,j),q(t,r,j)}))$,

$Y_{t,(r,j)}=(f_{t,(r,j)}(v_{t,(r,j),1}),f_{t,(r,j)}(v_{t,(r,j),2}),\dots ,f_{t,(r,j)}(v_{t,(r,j),q(t,r,j)}))$.\\
We get a number-based string set
\begin{equation}\label{eqa:graph-based-string-base}
{
\begin{split}
U(T_{t,(r,j)})&=X_{t,(r,j)}\cup E_{t,(r,j)}\cup Y_{t,(r,j)}\\
&=\{f_{t,(r,j)}(z_{t,(r,j),1}),f_{t,(r,j)}(z_{t,(r,j),3}),\dots ,f_{t,(r,j)}(z_{t,(r,j),3q(t,r,j)})\}
\end{split}}
\end{equation} with $r\in [1,(3q_t)!]$ and $j\in [1,3q_t]$.

Each graph $F_{t}(w_{t,r,j})=T_{t,(r,j)}$ distributes us a topological number-based string
$$f_{t,(r,j)}(z_{t,(r,j),1})f_{t,(r,j)}(z_{t,(r,j),3})\cdots f_{t,(r,j)}(z_{t,(r,j),3q(t,r,j)})=S^{q(t,r,j)}_{(r,j)}
$$ then each graph-based string $S_{gb}(H_t,r)$ defined in Eq.(\ref{eqa:graph-based-string-3q-tt}) distributes us a \emph{$t$-rank topological number-based string}
\begin{equation}\label{eqa:complex-topological-number-based-string-tt}
s(H_t,r)=S^{q(t,r,1)q(t,r,2)\cdots q(t,r,3q_t)}_{(r,1)(r,2)\cdots (r,3q_t)}\leftarrow S_{gb}(H_t,r)=F_{t}(w_{t,r,1})F_{t}(w_{t,r,2})\cdots F_{t}(w_{t,r,3q_t})
\end{equation} with $r\in [1,(3q_t)!]$.

\vskip 0.4cm

\textbf{Step-II-$t$.3.} A $(p_{t+1},q_{t+1})$-graph $H_{t+1}$ is obtained as: For each edge $xy\in E(H_t)$, do a graph operation $O_{j}\in \textbf{\textrm{O}}$ to the graphs $F_{t}(x)$ and $F_{t}(xy)$ for producing a subgraph $F_{t}(x)O_{j}F_{t}(xy) $, and do a graph operation $O_{j}\in \textbf{\textrm{O}}$ to the graphs $F_{t}(y)$ and $F_{t}(xy)$ for producing a subgraph $F_{t}(y)O_{j}F_{t}(xy) $, then we get a subgraph (also graph-edge) $F_{t}(x)O_{j}F_{t}(xy)O_{j}F_{t}(y)$. Thereby, we obtain an \emph{operation graphic lattice}
\begin{equation}\label{eqa:edge-index-graphic-lattice33}
\textbf{\textrm{L}}(\textbf{\textrm{Q}}_{t+1}[\textbf{\textrm{O}}]\textbf{\textrm{T}}_{t+1})=\big \{H[\textbf{\textrm{O}}]^{n_{t+1}}_{k=1}a_{t+1,k}T_{t+1,k}:~a_{t+1,k}\in Z^0,~T_{t+1,k}\in \textbf{\textrm{T}}_{t+1}, H\in \textbf{\textrm{Q}}_{t+1}\big \}
\end{equation} with $\sum ^{n_{t+1}}_{k=1}a_{t+1,k}\geq 1$, and $\textbf{\textrm{Q}}_{s+1}=\textbf{\textrm{L}}(\textbf{\textrm{Q}}_{s}[\textbf{\textrm{O}}]\textbf{\textrm{T}}_s)$ for $s\geq 1$.

\subsection{Complex number-based strings}

By Eq.(\ref{eqa:complex-topological-number-based-string-tt}), we have a set of $t$-rank topological number-based strings as follows
\begin{equation}\label{eqa:555555}
T_{onbs}(H_t)=\left \{s(H_t,r)=S^{q(t,r,1)q(t,r,2)\cdots q(t,r,3q_t)}_{(r,1)(r,2)\cdots (r,3q_t)}:r\in [1,(3q_t)!]\right \}
\end{equation} defined on the $(p_{t},q_{t})$-graph $H_{t}$ admitting an edge-index total graph-coloring
$$F_{t}:V(H_t)\cup E(H_t)\rightarrow \textbf{\textrm{T}}_t
$$ refer to Definition \ref{defn:edge-index-total-graph-coloring}. We have an operation
\begin{equation}\label{eqa:topological-number-based-string-group}
s(H_t,i)[\oplus \ominus_k] ~s(H_t,j):=s(H_t,i)[\oplus ]s(H_t,j)[\ominus ]s(H_t,k)=s(H_t,\lambda)\in T_{onbs}(H_t)
\end{equation} with the index $\lambda=i+j-k~(\bmod~M)$ for a preappointed \emph{zero} $s(H_t,k)\in T_{onbs}(H_t)$. So, we get a \emph{topological number-based string every-zero group} $\{T_{onbs}(H_t);\oplus\ominus\}$ of order $M$.

Let $N_1=(3q_t)!$, we have a \emph{$1$-rank complex topological number-based string}
$$
s^1_{per}(m)=s(H_t,r_1)s(H_t,r_2)\cdots s(H_t,r_{N_1})
$$ that is a permutation $s(H_t,r_1),s(H_t,r_2),\cdots ,s(H_t,r_{N_1})$ of elements of the group $\{T_{onbs}(H_t);\oplus\ominus\}$, so there are $(N_1)!$ permutations, we put them into a set $T^1_{onbs}(H_t)=\{s^1_{per}(m):~m\in [1,(N_1)!]\}$. Then we have an operation
\begin{equation}\label{eqa:complex-number-based-string-group11}
s^1_{per}(i)[\oplus \ominus_k] ~ s^1_{per}(j):=s^1_{per}(i)[\oplus ]s^1_{per}(j)[\ominus ]s^1_{per}(k)=s^1_{per}(\lambda)\in T^1_{onbs}(H_t)
\end{equation} with the index $\lambda=i+j-k~(\bmod~(N_1)!)$ for a preappointed \emph{zero} $s^1_{per}(k)\in T^1_{onbs}(H_t)$, which has defined a \emph{topological number-based string every-zero group} $\{T^1_{onbs}(H_t);\oplus\ominus\}$ of order $(N_1)!$.

Let $N_2=(N_1)!$. The elements of the group $\left \{T^1_{onbs}(H_t);\oplus\ominus\right \}$ distributes us $(N_2)!$ $2$-rank complex topological number-based strings of the form $s^2_{per}(m)=s^1_{per}(m_1)s^1_{per}(m_2)\cdots s^1_{per}(m_{N_2})$, we get a topological number-based string every-zero group $\{T^2_{onbs}(H_t);\oplus\ominus\}$ based on the number-based string set $T^2_{onbs}(H_t)=\left \{s^2_{per}(m):~m\in [1,(N_2)!]\right \}
$. Go on in this way, we have topological number-based string every-zero groups $\left \{T^k_{onbs}(H_t);\oplus\ominus\right \}$ defined on the number-based string set
\begin{equation}\label{eqa:555555}
T^k_{onbs}(H_t)=\left \{s^k_{per}(m):~m\in [1,(N_k)!]\right \}, ~N_k=(N_{k-1})!, ~k\geq 2
\end{equation} where the set $T^k_{onbs}(H_t)$ contains \emph{$k$-rank complex topological number-based strings} of order $N_k$.

The computational complexity of TOTAL-graph-coloring algorithm-II can be get from Eq.(\ref{eqa:graph-based-string-3q-tt}) and Eq.(\ref{eqa:complex-topological-number-based-string-tt}), we omit it.

\section{Miscellaneous methods for generating strings}

\subsection{Number-based strings with parameters}

\begin{defn}\label{defn:parameterized-topcode-matrix}
\cite{Bing-Yao-arXiv:2207-03381} A \emph{parameterized Topcode-matrix} $P_{(k,d)}=(X_{k,d},E_{k,d},Y_{k,d})^T$ shown in Eq.(\ref{eqa:basic-formula-1}) is defined as

$X_{k,d}=(\alpha_1k+a_1d,\alpha_2k+a_2d,\dots ,\alpha_qk+a_qd)$, $Y_{k,d}=(\beta_1k+b_1d,\beta_2k+b_2d,\dots ,\beta_qk+b_qd)$, and

$E_{k,d}=(\gamma_1k+c_1d,\gamma_2k+c_2d,\dots ,\gamma_qk+c_qd)$\\
for two parameters $k,d\geq 1$. \qqed
\end{defn}

Moreover we can partition a parameterized Topcode-matrix defined in Definition \ref{defn:parameterized-topcode-matrix} into the linear combination of two Topcode-matrices as follows:
\begin{equation}\label{eqa:basic-formula-1}
{
\begin{split}
P_{(k,d)}&=\left(
\begin{array}{ccccc}
\alpha_1k+a_1d & \alpha_2k+a_2d & \cdots & \alpha_qk+a_qd\\
\gamma_1k+c_1d & \gamma_2k+c_2d & \cdots & \gamma_qk+c_qd\\
\beta_1k+b_1d & \beta_2k+b_2d & \cdots & \beta_qk+b_qd
\end{array}
\right)_{3\times q}\\
&=\left(
\begin{array}{ccccc}
\alpha_1k & \alpha_2k & \cdots & \alpha_qk\\
\gamma_1k & \gamma_2k & \cdots & \gamma_qk\\
\beta_1k & \beta_2k & \cdots & \beta_qk
\end{array}
\right)_{3\times q}+\left(
\begin{array}{ccccc}
a_1d & a_2d & \cdots & a_qd\\
c_1d & c_2d & \cdots & c_qd\\
b_1d & b_2d & \cdots & b_qd
\end{array}
\right)_{3\times q}\\
&=k\cdot I_{3\times q}+d\cdot C_{3\times q}
\end{split}}
\end{equation}
where $I_{3\times q}$ is called \emph{constant Topcode-matrix} defined as
\begin{equation}\label{eqa:basic-formula-2}
{
\begin{split}
I_{3\times q}=\left(
\begin{array}{ccccc}
\alpha_1 & \alpha_2 & \cdots & \alpha_q\\
\gamma_1 & \gamma_2 & \cdots & \gamma_q\\
\beta_1 & \beta_2 & \cdots & \beta_q
\end{array}
\right)_{3\times q}=(I_X,I_E,I_Y)^T
\end{split}}
\end{equation} with three vectors $I_X=(\alpha_1,\alpha_2,\dots ,\alpha_q)$, $I_E=(\gamma_1,\gamma_2,\dots ,\gamma_q)$ and $I_Y=(\beta_1,\beta_2,\dots ,\beta_q)$; and $C_{3\times q}$, called \emph{main Topcode-matrix}, is defined as
\begin{equation}\label{eqa:basic-formula-3}
{
\begin{split}
C_{3\times q}=\left(
\begin{array}{ccccc}
a_1 & a_2 & \cdots & a_q\\
c_1 & c_2 & \cdots & c_q\\
b_1 & b_2 & \cdots & b_q
\end{array}
\right)_{3\times q}=(C_X,C_E,C_Y)^T
\end{split}}
\end{equation} for three vectors $C_X=(a_1, a_2, \dots ,a_q)$, $C_E=(c_1, c_2, \dots ,c_q)$, and $C_Y=(b_1, b_2, \dots ,b_q)$. Also, the parameterized Topcode-matrix can be written as
\begin{equation}\label{eqa:basic-formula-4}
{
\begin{split}
P_{(k,d)}=&k\cdot (I_X,I_E,I_Y)^T+d\cdot (C_X,C_E,C_Y)^T\\
=&(k\cdot I_X+d\cdot C_X,k\cdot I_E+d\cdot C_E,k\cdot I_Y+d\cdot C_Y)^T\\
=&k\cdot I_{3\times q}+d\cdot C_{3\times q}.
\end{split}}
\end{equation}

\begin{example}\label{exa:8888888888}
In Eq.(\ref{eqa:basic-formula-1}), if we have an edge-difference constraint
$$(\gamma_ik+c_id)+|(\beta_ik+b_id)-(\alpha_ik+a_id)|=(\gamma_i+\beta_i-\alpha_i)k+(c_i+b_i-a_i)d=k\cdot A_{ed}+d\cdot B_{ed}
$$ for $i\in [1,q]$, we write the above fact as $I_E+|I_Y-I_X|:=A_{ed}$ and $C_E+|C_Y-C_X|:=B_{ed}$, where $A_{ed}$ and $B_{ed}$ are constants.

If there is an edge-magic constraint
$$
(\gamma_ik+c_id)+(\beta_ik+b_id)+(\alpha_ik+a_id=(\gamma_i+\beta_i+\alpha_i)k+(c_i+b_i+a_i)d)=k\cdot A_{em}+d\cdot B_{em}
$$ for $i\in [1,q]$, we write the above fact as $I_X+I_E+I_Y:=A_{em}$ and $C_X+C_E+C_Y:=B_{em}$ for constants $A_{em}$ and $B_{em}$.

If we have a graceful-difference constraint
$$
\big ||(\beta_ik+b_id)-(\alpha_ik+a_id)|-(\gamma_ik+c_id)\big |=(\beta_i-\alpha_i-\gamma_i)k+(b_i-a_i-c_i)d=k\cdot A_{gd}+d\cdot B_{gd}
$$ for $i\in [1,q]$, we write the above fact as $\big ||I_Y-I_X|-I_E\big |:=A_{gd}$ and $\big ||C_Y-C_X|-C_E\big |:=B_{gd}$ for constants $A_{gd}$ and $B_{gd}$.

If we have felicitous-difference constraint
$$
|(\beta_ik+b_id)+(\alpha_ik+a_id)-(\gamma_ik+c_id)|=(\beta_i+\alpha_i-\gamma_i)k+(b_i+a_i-c_i)d=k\cdot A_{fd}+d\cdot B_{fd}
$$ for $i\in [1,q]$, we write the above fact as $|I_Y+I_X-I_E|:=A_{fd}$ and $|C_Y+C_X-C_E|:=B_{fd}$ for constants $A_{fd}$ and $B_{fd}$.\qqed
\end{example}

If there is no confusion, we omit ``order $3\times q$'' in the following discussion, or add a sentence `` the Topcode-matrices $I$, $T_{code}(G)$ and $P_{(k,d)}(G)$ have the same order''. For \emph{bipartite graphs}, especially, we define the \emph{unite Topcode-matrix} as follows
\begin{equation}\label{eqa:unit-Topcode-matrix}
{
\begin{split}
I\,^0=\left(
\begin{array}{ccccc}
0 & 0 & \cdots & 0\\
1 & 1 & \cdots & 1\\
1 & 1 & \cdots & 1
\end{array}
\right)_{3\times q}=(X\,^0,~E\,^0,~Y\,^0)^T
\end{split}}
\end{equation} with two vertex-vectors $X\,^0=(0, 0, \dots ,0)_{1\times q}$ and $Y\,^0=(1, 1, \dots ,1)_{1\times q}$, and the edge-vector $E\,^0=(1, 1, \dots ,1)_{1\times q}$.

\begin{defn} \label{defn:bipartite-parameterized-topcode-matrix}
\cite{Bing-Yao-arXiv:2207-03381} Let $G$ be a bipartite $(p,q)$-graph with $V(G)=X\cup Y$ and $X\cap Y=\emptyset$, and let $k,d$ be integers. If $G$ admits a set-ordered $W$-constraint coloring $f$, that is $\max f(X)<\min f(Y)$, so we get a \emph{parameterized Topcode-matrix} defined by
\begin{equation}\label{eqa:definition-parameterized-topcode-matrix}
{
\begin{split}
P_{(k,d)}(G,F)&=k\cdot I\,^0+d\cdot T_{code}(G,f)\\
&=\left(
\begin{array}{rrrrr}
f(u_1)d & f(u_2)d & \cdots & f(u_q)d\\
k+f(u_1v_1)d & k+f(u_2v_2)d & \cdots & k+f(u_qv_q)d\\
k+f(v_1)d & k+f(v_2)d & \cdots & k+f(v_q)d
\end{array}
\right)
\end{split}}
\end{equation}
where three Topcode-matrices $I\,^0$, $T_{code}(G,f)$ and $P_{(k,d)}(G,F)$ have the same order $3\times q$, and $F$ is a \emph{$W$-constraint parameterized coloring} of $G$, as well as
\begin{equation}\label{eqa:set-type-topcode-matrix}
\centering
{
\begin{split}
T_{code}(G,f)= \left(
\begin{array}{ccccc}
f(u_1) & f(u_2) & \cdots & f(u_q)\\
f(u_1v_1) & f(u_2v_2) & \cdots & f(u_qv_q)\\
f(v_1) & f(v_2) & \cdots & f(v_q)
\end{array}
\right)
\end{split}}
\end{equation} hoding the $W$-constraint $f(u_kv_k)=W\langle f(u_k), f(v_k)\rangle $ for each edge $u_kv_k\in E(G)$ with $u_k\in X$ and $v_k\in Y$.\qqed
\end{defn}

\begin{rem}\label{rem:333333}
\textbf{The assignment Topcode-matrices.} By Definition \ref{defn:evaluated-topcode-matrix}, we have a \emph{assignment Topcode-matrix} $T_{code}(G,f):=P_{(k,d)}(G,F)$ defined in Eq.(\ref{eqa:definition-parameterized-topcode-matrix}), which converts a parameterized number-based string \begin{equation}\label{eqa:strings-plane-kd-coordinates}
s(k,d)=c_{1}(k,d)c_{2}(k,d)\cdots c_{3q}(k,d)
\end{equation} with longer bytes made by $P_{(k,d)}(G,F)$ defined in Definition \ref{defn:bipartite-parameterized-topcode-matrix} to a string $s=c_{1}c_{2}\cdots c_{3q}$ with shorter bytes made by $T_{code}(G,f)$.

\textbf{The fractional strings.} The parameterized Topcode-matrix $P_{(k,d)}(G,F)$ is useful in the discussion of fractional strings. The limitation $(k,d)\rightarrow (k_0,d_0)$ enables us to induce real-valued strings. For example, a \emph{fractional $(k_n,d_n)$-string} $s^*_n=c^*_{n,1}c^*_{n,2}\cdots c^*_{n,m}$ holds:

(i) there is at least one $c^*_{n,j}$ to be a positive fractional number;

(ii) $(k_n,d_n)\rightarrow (k_0,d_0)$ as $n\rightarrow \infty$;

(iii) there is a positive integer $M_n$ for each $n$ holding
$$
M_n[\bullet ]s^*_n=(M_n\cdot c^*_{n,1})(M_n\cdot c^*_{n,2})\cdots (M_n\cdot c^*_{n,m})=s_n
$$ such that $s_n$ is just a \emph{proper number-based string} with positive integer $M_n\cdot c^*_{n,j}$ for $j\in [1,m]$. In other words, the research of fractional strings can be translated into the investigation of proper number-based strings.\paralled
\end{rem}

\begin{defn} \label{defn:plane-coordinate-string-sequence}
$^*$ We call a parameterized number-based string $s(k,d)$ made by the parameterized Topcode-matrix $P_{(k,d)}(G,F)$ defined in Eq.(\ref{eqa:definition-parameterized-topcode-matrix}) \emph{plane-curve-attached string}. Let $pc(x,y)=0$ be a \emph{plane curve} defined on a domain $[\alpha,\beta]^{r}$ for $0\leq \alpha<\beta$. If there are positive integer points $(k_n,d_n)\in [\alpha,\beta]^{r}$ holding $pc(k_n,d_n)=0$ for integers $k_n,d_n\geq 0$ with $n\in [1,m]$, then we get a \emph{plane-curve-attached string sequence} $\{s(k_n,d_n)\}^m_{n=1}$ based on the plane curve $pc(x,y)=0$.\qqed
\end{defn}

\begin{thm}\label{thm:one-encryption-one-time}
$^*$ By Definition \ref{defn:plane-coordinate-string-sequence}, there are infinite plane-curve-attached string sequences based on a parameterized Topcode-matrix $P_{(k,d)}(G,F)$ defined in Eq.(\ref{eqa:definition-parameterized-topcode-matrix}) and infinite plane curves, which provides the theoretical basis for the one-encryption one-time.
\end{thm}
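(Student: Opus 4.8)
The plan is to prove the statement by an explicit construction: I will exhibit an infinite family of plane curves, each of which carries infinitely many admissible lattice points $(k_n,d_n)$, and show that each such curve induces a genuine plane-curve-attached string sequence in the sense of Definition~\ref{defn:plane-coordinate-string-sequence}. Fix once and for all a bipartite $(p,q)$-graph $G$ with a set-ordered $W$-constraint coloring $f$, so that the parameterized Topcode-matrix $P_{(k,d)}(G,F)=k\cdot I\,^0+d\cdot T_{code}(G,f)$ of Eq.(\ref{eqa:definition-parameterized-topcode-matrix}) is defined. For each pair of non-negative integers $(k,d)$ every entry of $P_{(k,d)}(G,F)$ equals $k\cdot (I\,^0)_{ij}+d\cdot (T_{code})_{ij}$, a non-negative integer, so the parameterized string $s(k,d)=c_{1}(k,d)c_{2}(k,d)\cdots c_{3q}(k,d)$ of Eq.(\ref{eqa:strings-plane-kd-coordinates}) is a well-formed number-based string at every lattice point.

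First I would produce the family of curves. The cleanest choice is the family of lines $pc_t(x,y)=y-x-t=0$, one for each $t\in Z^0$. The line $pc_t=0$ contains the infinitely many lattice points $(n,n+t)$ with $n\geq 1$, all of which have positive coordinates as required by Definition~\ref{defn:plane-coordinate-string-sequence}. Hence, for every fixed $t$, the family $\{s(n,n+t)\}^{\infty}_{n=1}$ is a plane-curve-attached string sequence based on $pc_t=0$; truncating to any length $m$ recovers the finite version stated in the definition. Because the lines $pc_t=0$ are pairwise distinct for distinct $t$, these sequences are indexed by the infinite set $Z^0$, giving infinitely many plane-curve-attached string sequences and establishing the numerical content of the theorem. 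To reinforce the phrase ``infinite plane curves'', I would also note that nonlinear families work equally well: for each non-square $D$ the Pell conic $x^2-Dy^2=1$ carries infinitely many positive integer solutions, and varying $D$ again yields infinitely many distinct curves.

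Next I would upgrade the bare existence statement to the security reading of \emph{one-encryption one-time}. For this it suffices to show that along any one of the chosen curves the produced strings are pairwise distinct, so that successive encryptions never reuse a key. Along the line $pc_t=0$ the second argument $d=n+t$ is strictly increasing in $n$; since $T_{code}(G,f)$ has at least one strictly positive entry (indeed its edge-row is positive), the corresponding coordinate $c_{i}(n,n+t)$ of $s(n,n+t)$ is strictly increasing in $n$, whence the strings $s(n,n+t)$ are mutually distinct. Thus each lattice point on the curve furnishes a new, never-repeated number-based key, which is precisely the one-time property.

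The main obstacle is not computational but conceptual and definitional. The mathematics reduces to the elementary fact that infinitely many integral curves carry infinitely many non-negative lattice points, which the linear family settles at once. The delicate points are (i) checking that the entries stay non-negative so that $s(k,d)$ is always a legitimate number-based string, handled by the non-negativity of $I\,^0$ and $T_{code}(G,f)$, and (ii) making the passage from ``infinitely many sequences'' to ``theoretical basis for the one-encryption one-time'' precise, which I address through the strict-monotonicity and distinctness argument above. I would present the distinctness step with care, since it is the only place where a genuine property of the particular matrix $P_{(k,d)}(G,F)$, rather than of the ambient lattice, is actually invoked.
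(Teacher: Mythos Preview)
Your proposal is correct and in fact considerably more detailed than what the paper itself offers. The paper does not give a formal proof of this theorem (it carries the $^*$ marker used throughout for results stated without proof); the only justification appears in the remark immediately following it, which simply observes that ``there are infinite real-valued functions and there are infinite integer points in a plane curve'', together with a later example exhibiting the polynomial family $y=\sum_{k=0}^{n}a_kx^k$ with non-negative integer coefficients $a_k$.

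Your argument follows the same spirit but is more carefully executed. Your linear family $y=x+t$ is a special case of the paper's polynomial family, so the underlying idea coincides. What you add beyond the paper is the explicit verification that the entries of $P_{(k,d)}(G,F)$ remain non-negative (so that $s(k,d)$ is a legitimate number-based string), and the distinctness argument along each curve via the strictly increasing entry, which makes precise the passage from ``infinitely many lattice points'' to ``one-encryption one-time''. The paper leaves both of these implicit. Your mention of Pell conics is a nice alternative the paper does not use; the paper instead gestures toward elliptic curves over finite fields in the remark, but only as a hardness heuristic rather than as a source of integer points.
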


\begin{rem}\label{rem:333333}
For a \emph{public-key graph} $G$ admitting a $W$-constraint parameterized coloring $F$, we use this \emph{parameter-colored graph} $G$ and a plane curve $pc(x,y)=0$ to form a \emph{private-key graph} in a \emph{topological signature authentication}, the private-key graph is denoted as $H=\langle G,F,pc(x,y)=0\rangle$. Since there are infinite real-valued functions and there are infinite integer points in a plane curve, so we can get infinite number-based strings to encrypt or to decrypt a file consisted of many segments in the method of \emph{asymmetric topology cryptography}, and these number-based strings are random since the plane curve are taken randomly in the private-key graphs like as $H=\langle G,F,pc(x,y)=0\rangle$.

If the plane curve $pc(x,y)=0$ is an elliptic curve: $y^2=x^3+ ax^2+bx+c$ defined on a finite field $[0,A_{prime}]$ with a prime number $A_{prime}$, then deciphering the plane-curve-attached string sequence $\{s(k_n,d_n)\}^m_{n=1}$ is even more difficult, even impossible.\paralled
\end{rem}

\subsection{$(abc)$-linear colorings}

\begin{thm}\label{thm:abc-linear-magic-string-colorings}
\cite{Bing-Yao-arXiv:2207-03381} If a bipartite and connected $(p,q)$-graph $G$ admits a \emph{set-ordered} $W$-constraint total coloring $f$ defined in Definition \ref{defn:basic-W-type-labelings}, Definition \ref{defn:kd-w-type-colorings} and Definition \ref{defn:odd-edge-W-type-total-labelings-definition}. For positive integers $a,b,c$ and non-negative integer $\lambda$, if there are

(i) the $(abc)$-edge-magic constraint $af(x)+bf(y)+cf(xy)=\lambda$ for each edge $xy\in E(G)$; or

(ii) the $(abc)$-felicitous-difference constraint $|af(x)+bf(y)-cf(xy)|=\lambda$ for each edge $xy\in E(G)$; or

(iii) the $(abc)$-graceful-difference constraint $\big ||af(x)-bf(y)|-cf(xy)\big |=\lambda$ for each edge $xy\in E(G)$; or

(iv) the $(abc)$-edge-difference constraint $cf(xy)+|af(x)-bf(y)|=c$ for each edge $xy\in E(G)$.\\
Then this bipartite and connected graph $G$ admits a $W$-constraint $(k,d)$-total coloring $F$ holding each one of the above four $(abc)$-constraints.
\end{thm}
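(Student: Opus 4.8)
The plan is to construct $F$ directly from the given coloring $f$ by means of the parameterized Topcode-matrix of Definition~\ref{defn:bipartite-parameterized-topcode-matrix}. Writing $V(G)=X\cup Y$ with $\max f(X)<\min f(Y)$, I would set $F(x)=d\,f(x)$ for $x\in X$, $F(y)=k+d\,f(y)$ for $y\in Y$, and $F(xy)=k+d\,f(xy)$ for each edge $xy\in E(G)$; this is precisely the coloring recorded by $P_{(k,d)}(G,F)=k\cdot I\,^0+d\cdot T_{code}(G,f)$. Since $f$ is set-ordered, $F$ maps $X$ into $S_{m,0,0,d}$ and $Y\cup E(G)$ into $S_{n,k,0,d}$, so $F$ is a legitimate $(k,d)$-total coloring in the sense of Definition~\ref{defn:kd-w-type-colorings}; the remaining task is to check that each of the four $(abc)$-constraints is inherited from $f$.

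The engine is the linearity of $F=k\cdot I\,^0+d\cdot f$ in the two parameters. For the edge-magic case (i) I would substitute directly:
\[
aF(x)+bF(y)+cF(xy)=(b+c)k+d\bigl(af(x)+bf(y)+cf(xy)\bigr)=(b+c)k+d\lambda,
\]
a single constant independent of the edge $xy$, so $F$ is $(abc)$-edge-magic with magic value $(b+c)k+d\lambda$. The three difference constraints reduce to the same pattern once the affine pieces are collected; for instance in (ii),
\[
aF(x)+bF(y)-cF(xy)=(b-c)k+d\bigl(af(x)+bf(y)-cf(xy)\bigr),
\]
and in (iv) one has $aF(x)-bF(y)=-bk+d\bigl(af(x)-bf(y)\bigr)$ together with $cF(xy)=ck+d\,c\,f(xy)$.

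To finish (ii)--(iv) I must pass the outer absolute value through the substitution, and here the set-ordered hypothesis $\max f(X)<\min f(Y)$ does the decisive work: it forces the quantity inside each absolute value to keep a fixed sign over all edges. For example, once $af(x)-bf(y)$ is non-positive on every edge, $\lvert aF(x)-bF(y)\rvert=bk+d\,\lvert af(x)-bf(y)\rvert$, so
\[
cF(xy)+\lvert aF(x)-bF(y)\rvert=(b+c)k+d\bigl(cf(xy)+\lvert af(x)-bf(y)\rvert\bigr)=(b+c)k+d\lambda,
\]
establishing (iv); in (ii) and (iii) the inner bracket collapses to the edge-independent value $\pm\lambda$ and the constraint value becomes $\lvert(b-c)k\pm d\lambda\rvert$.

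The main obstacle is exactly this sign-consistency of the absolute-value expressions. Set-orderedness gives $f(x)<f(y)$ for $x\in X,\,y\in Y$, but for coefficients $a,b,c$ that do not respect this order (say $a$ much larger than $b$) the sign of $af(x)-bf(y)$, or of $\lvert af(x)-bf(y)\rvert-cf(xy)$, need not be constant a priori. To secure it I would invoke the concrete structure of the underlying $W$-constraint recorded in $f$ --- for graceful-type colorings the relation $f(xy)=f(y)-f(x)$ together with the set-ordered ranges of $f(X)$ and $f(Y)$ pins the sign down on every edge --- and, where needed, appeal to the equivalences of Theorem~\ref{thm:equivalent-k-d-total-colorings} to reduce to the graceful case. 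Checking that the chosen sign is \emph{uniform} across all edges, rather than merely edge-by-edge, is the one place where the argument is not purely formal.
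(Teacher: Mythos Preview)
Your construction of $F$ via $P_{(k,d)}(G,F)=k\cdot I\,^0+d\cdot T_{code}(G,f)$ and your treatment of case~(i) coincide exactly with the paper's proof, which obtains $aF(x)+bF(y)+cF(xy)=d\lambda+(b+c)k$.

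Where you diverge is in cases (ii)--(iv). The paper does \emph{not} attempt the sign-consistency argument you outline; instead it simply applies the triangle inequality. For instance in~(ii) it writes
\[
\bigl|aF(x)+bF(y)-cF(xy)\bigr|=\bigl|[af(x)+bf(y)-cf(xy)]d+(b-c)k\bigr|\le d\lambda+|b-c|k,
\]
and similarly obtains only upper bounds $d\lambda+|b-c|k$ in~(iii) and $d\lambda+(b+c)k$ in~(iv). So the paper settles for inequalities rather than a single edge-independent constant, and thereby sidesteps entirely the sign-uniformity obstacle you flagged. Your approach is the more honest attempt to show that $F$ literally satisfies an $(abc)$-constraint with a fixed value; the paper's computation is shorter but, read strictly, establishes less. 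Your worry that sign-consistency may fail for arbitrary $a,b,c$ is well-founded and is precisely the point the paper's triangle-inequality shortcut glosses over.
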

\begin{proof} Let $(X,Y)$ be the vertex bipartition of a bipartite $(p,q)$-graph $G$, so a set-ordered $W$-constraint total coloring $f$ of $G$ holds $\max f(X)<\min f(Y)$ true. There is a $W$-constraint $(k,d)$-total coloring $F$ defined by setting $F(x)=f(x)d$, $F(y)=k+f(y)d$ and $F(xy)=k+f(xy)d$ made by the parameterized Topcode-matrix $P_{(k,d)}(G,F)$ defined in Eq.(\ref{eqa:definition-parameterized-topcode-matrix}), so we have:

(i) If the $(abc)$-edge-magic constraint $af(x)+bf(y)+cf(xy)=\lambda$ holds true for each edge $xy\in E(G)$, then the $(abc)$-edge-magic constraint
\begin{equation}\label{eqa:555555}
{
\begin{split}
aF(x)+bF(y)+cF(xy)&=af(x)d+b[k+f(y)d]+c[k+f(xy)d]\\
&=[af(x)+bf(y)+cf(xy)]d+(b +c)k\\
&=d \cdot \lambda+(b +c)k
\end{split}}
\end{equation} holds true for each edge $xy\in E(G)$.

(ii) If the $(abc)$-felicitous-difference constraint $|af(x)+bf(y)-cf(xy)|=\lambda$ holds true for each edge $xy\in E(G)$, then we get the $(abc)$-felicitous-difference constraint
\begin{equation}\label{eqa:555555}
{
\begin{split}
\big |aF(x)+bF(y)-cF(xy)\big |&=\big |af(x)d+b[k+f(y)d]-c[k+f(xy)d]\big |\\
&=\big |[af(x)+bf(y)-cf(xy)]d+(b -c)k\big |\\
&\leq \big |af(x)+bf(y)-cf(xy)\big |d+\big |b -c\big |k\\
&=d \cdot \lambda+|b-c|k
\end{split}}
\end{equation} for each edge $xy\in E(G)$.

(iii) If the $(abc)$-graceful-difference constraint $\big ||af(x)-bf(y)|-cf(xy)\big |=\lambda$ holds true for each edge $xy\in E(G)$, thus, the $(abc)$-graceful-difference constraint
\begin{equation}\label{eqa:555555}
{
\begin{split}
\big ||aF(x)-bF(y)|-cF(xy)\big |&=\big ||af(x)d-b[k+f(y)d]|-c[k+f(xy)d]\big |\\
&\leq\big |bf(y)-af(x)-cf(xy)\big |d+|b-c|k\\
&=d \cdot \lambda+|b -c|k
\end{split}}
\end{equation} or
\begin{equation}\label{eqa:555555}
{
\begin{split}
\big ||aF(x)-bF(y)|-cF(xy)\big |&=\big ||af(x)d-b[k+f(y)d]|-c[k+f(xy)d]\big |\\
&\leq\big |af(x)-bf(y)-cf(xy)\big |d+|b-c|k\\
&=d \cdot \lambda+|b -c|k
\end{split}}
\end{equation} holds true for each edge $xy\in E(G)$.

(iv) If the $(abc)$-edge-difference constraint $cf(xy)+|af(x)-bf(y)|=\lambda$ holds true for each edge $xy\in E(G)$, then we have the $(abc)$-graceful-difference constraint
\begin{equation}\label{eqa:555555}
{
\begin{split}
cF(xy)+\big |aF(x)-bF(y)\big |&=c[k+f(xy)d]+|af(x)d-b[k+f(y)d] |\\
&\leq [cf(xy)+|af(x)-bf(y)|]d+(b+c)k\\
&=d \cdot \lambda+(b+c)k
\end{split}}
\end{equation} for each edge $xy\in E(G)$.

Moreover, we have other two relations as follows:
$${
\begin{split}
|aF(x)-bF(y)|&=|b[k+f(y)d]-af(x)d|\leq |bf(y)-af(x)|d+bk\\
&=c[k+f(xy)d]+(b-c)k\\
&=cF(xy)+(b-c)k
\end{split}}
$$ if $cf(xy)=|bf(y)-af(x)|$ for each edge $xy\in E(G)$; and moreover
$${
\begin{split}
aF(x)+bF(y)&=af(x)d+b[k+f(y)d]=[bf(y)+af(x)]d+bk\\
&=c[k+f(xy)d]+(b-c)k\\
&=cF(xy)+(b-c)k
\end{split}}
$$ if $cf(xy)=af(x)+bf(y)$ for each edge $xy\in E(G)$.

The proof of the theorem is complete.
\end{proof}

\begin{rem}\label{rem:333333}
About Theorem \ref{thm:abc-linear-magic-string-colorings}, trees admit some \emph{set-ordered} $W$-constraint total colorings defined in Definition \ref{defn:basic-W-type-labelings}, Definition \ref{defn:kd-w-type-colorings} and Definition \ref{defn:odd-edge-W-type-total-labelings-definition}; refer to Theorem \ref{thm:10-k-d-total-coloringss}. In Theorem \ref{thm:abc-linear-magic-string-colorings}, the parameterized Topcode-matrix $P_{(a,b,c)}(G,F)$ of the bipartite and connected $(p,q)$-graph $G$ admitting the $W$-constraint $(k,d)$-total coloring $F$ holding each one of four $(abc)$-constraints can provide us more complex number-based strings for real application.

The ABC-conjecture (also, Oesterl\'{e}-Masser conjecture, 1985) involves the equation $a+b=c$ and the relationship between prime numbers. Proving or disproving the ABC-conjecture could impact: Diophantine (polynomial) math problems including Tijdeman's theorem, Vojta's conjecture, Erd\"{o}s-Woods conjecture, Fermat's last theorem, Wieferich prime and Roth's theorem, and so on \cite{Cami-Rosso2017Abc-conjecture}.\paralled
\end{rem}

\begin{defn} \label{defn:n-dimensioncurve-attached-colorings}
$^*$ Suppose that $f$ is a $W$-constraint total coloring of a $(p,q)$-graph $G$, so we have its own Topcode-matrix $T_{code}(G,f)$. We define new colorings by the $W$-constraint total coloring $f$ as follows:

\begin{asparaenum}[\textbf{\textrm{Pacolo}}-1.]
\item Setting a new total coloring $\alpha(w)=f(w)+x+y+z$ for each element $w\in V(G)\cup E(G)$, where $(x,y,z)$ is an integer point of a space curve $F(x,y,z)=0$ with integers $x,y,z\geq 0$.
\item Letting a new total coloring $\beta(w)=s(x,y,z)f(w)+\lambda$ for each element $w\in V(G)\cup E(G)$, where
\begin{equation}\label{eqa:555555}
s(x,y,z)=c_1(x,y,z)c_2(x,y,z)\cdots c_m(x,y,z)
\end{equation} is a parameterized number-based string, and $\lambda=a_1a_2\cdots a_n$ is a number-based string, and $(x,y,z)$ is an integer point of a space curve $F(x,y,z)=0$ with integers $x,y,z\geq 0$.
\item Defining a new total coloring $\gamma(w)=s(x_1,x_2,\dots ,x_n)f(w)+\mu$ for $w\in V(G)\cup E(G)$, where
\begin{equation}\label{eqa:555555}
s(x_1,x_2,\dots ,x_n)=c_1(x_1,x_2,\dots ,x_n)c_2(x_1,x_2,\dots ,x_n)\cdots c_m(x_1,x_2,\dots ,x_n)
\end{equation} is a parameterized number-based string, and $\mu=b_1b_2\cdots b_s$ is a number-based string, as well as $(x_1,x_2$, $\dots $, $x_n)$ is an integer point of a $n$-dimension curve $F(x_1,x_2,\dots ,x_n)=0$ with integers $x_1,x_2,\dots ,x_n\geq 0$.
\end{asparaenum}

Since the total colorings $\alpha,\beta$ and $\gamma$ defined above are based one curves $F(x,y,z)=0$ and $F(x_1,x_2,\dots ,x_n)=0$, as well as the $W$-constraint total coloring $f$ of the $(p,q)$-graph $G$, we call them \emph{$n$-dimension curve-attached $W$-constraint total colorings}, and the number-based strings induced by the parameterized $W$-constraint Topcode-matrices $T_{code}(G,\alpha)$, $T_{code}(G,\beta)$ and $T_{code}(G,\gamma)$ are called \emph{$n$-dimension curve-attached $W$-constraint number-based strings}. \qqed
\end{defn}

\vskip 0.4cm

\begin{example}\label{exa:8888888888}
As the application of Theorem \ref{thm:abc-linear-magic-string-colorings}, we let $(x_k,y_k,z_k)$ be an integer point of a space curve $F(x,y,z)=0$ with non-negative integers $x_k,y_k,z_k\geq 0$ for $k\in [1,M]$. By Definition \ref{defn:n-dimensioncurve-attached-colorings}, we get a $3$-dimension curve-attached $W$-constraint total colorings $\beta_k(w)=s(x_k,y_k,z_k)f(w)+\lambda_k$ for each element $w\in V(G)\cup E(G)$ with
\begin{equation}\label{eqa:555555}
s(x_k,y_k,z_k)=c_1(x_k,y_k,z_k)c_2(x_k,y_k,z_k)\cdots c_m(x_k,y_k,z_k)
\end{equation} and $\lambda_k=a_{k,1}a_{k,2}\cdots a_{k,n_k}$ for $k\in [1,M]$.

The parameterized Topcode-matrix $T_{code}(G,\beta_k)$ with $k\in [1,M]$ induces $M\cdot (3q)!$ number-based strings of form $\beta_k(w_{j_1})\beta_k(w_{j_2})\cdots \beta_k(w_{j_{3q}})$, where $\beta_k(w_{j_1}),\beta_k(w_{j_2}),\dots ,\beta_k(w_{j_{3q}})$ is a permutation of the elements of the parameterized Topcode-matrix $T_{code}(G,\beta_k)$.

Then we have:

(i) The edge-magic constraint
\begin{equation}\label{eqa:555555}
{
\begin{split}
\beta_k(u)+\beta_k(uv)+\beta_k(v)=&s(x_k,y_k,z_k)\big [f(u)+f(uv)+f(v)\big ]+3\lambda_k\\
=&s(x_k,y_k,z_k)\mu _1+3\lambda_k
\end{split}}
\end{equation} if the edge-magic constraint $f(u)+f(uv)+f(v)=\mu_1$ for each edge $uv\in E(G)$ holds true.

(ii) The edge-difference constraint
\begin{equation}\label{eqa:555555}
{
\begin{split}
\beta_k(uv)+|\beta_k(u)-\beta_k(v)|=&s(x_k,y_k,z_k)\big [f(uv)+|f(u)-f(v)|\big ]+\lambda_k\\
=&s(x_k,y_k,z_k)\mu_2 +\lambda_k
\end{split}}
\end{equation} if the edge-difference constraint $f(uv)+|f(u)-f(v)|=\mu_2$ for each edge $uv\in E(G)$ holds true.

(iii) The graceful-difference constraint
\begin{equation}\label{eqa:555555}
{
\begin{split}
\big ||\beta_k(u)-\beta_k(v)|-\beta_k(uv)\big |\leq &s(x_k,y_k,z_k)\big ||f(u)-f(v)|-f(uv)\big |+\lambda_k\\
=&s(x_k,y_k,z_k)\mu_3 +\lambda_k
\end{split}}
\end{equation} if the graceful-difference constraint $\big ||f(u)-f(v)|-f(uv)\big |=\mu_3$ for each edge $uv\in E(G)$ holds true.

(iv) The felicitous-difference constraint
\begin{equation}\label{eqa:555555}
{
\begin{split}
|\beta_k(u)+\beta_k(v)-\beta_k(uv)|\leq &s(x_k,y_k,z_k)\big |f(u)+f(v)-f(uv)\big |+\lambda_k\\
=&s(x_k,y_k,z_k)\mu_4 +\lambda_k
\end{split}}
\end{equation} if the felicitous-difference constraint $\big |f(u)+f(v)-f(uv)\big |=\mu_4$ for each edge $uv\in E(G)$ holds true.\qqed
\end{example}

\begin{thm}\label{thm:one-encryption-one-time11}
Let $f$ be a $W$-constraint total coloring of a $(p,q)$-graph $G$, and $F(x_1,x_2,\dots ,x_n)=0$ be a $n$-dimension curve. Then there are infinite $n$-dimension curve-attached $W$-constraint total colorings $$\gamma(w)=s(x_1,x_2,\dots ,x_n)f(w)+\mu,~w\in V(G)\cup E(G)
$$ defined in Definition \ref{defn:n-dimensioncurve-attached-colorings}, where each $x_i$ in the string $s(x_1,x_2,\dots ,x_n)$ for an integer point $(x_1,x_2,\dots $, $x_n)$ of the $n$-dimension curve is a non-negative integer for $i\in [1,n]$.
\end{thm}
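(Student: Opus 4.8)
The plan is to read the statement as an assertion about the cardinality of the family of colorings produced by construction Pacolo-3 of Definition \ref{defn:n-dimensioncurve-attached-colorings}, rather than about any single $W$-constraint being preserved: in the name ``$n$-dimension curve-attached $W$-constraint total coloring'' the phrase $W$-constraint refers only to the underlying coloring $f$, which is fixed throughout. First I would fix the $W$-constraint total coloring $f$ of $G$, fix the curve $F(x_1,x_2,\dots,x_n)=0$, and record that the recipe $\gamma(w)=s(x_1,x_2,\dots,x_n)f(w)+\mu$ is well-posed as soon as the curve carries at least one non-negative integer point $(x_1,x_2,\dots,x_n)$, since then the parameterized number-based string $s(x_1,x_2,\dots,x_n)$ evaluates to a concrete integer and $\gamma$ becomes a genuine total coloring $V(G)\cup E(G)\rightarrow Z^0$; requiring $s(x_1,x_2,\dots,x_n)\ne 0$ (not all digits zero) keeps the affine map $t\mapsto s\cdot t+\mu$ injective, so any properness of $f$ is inherited by $\gamma$.

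The core of the argument is to exhibit an explicit infinite subfamily. The cleanest source of infinitude is the additive number-based string $\mu=b_1b_2\cdots b_s$: its length $s$ and its digits $b_j\ge 0$ are entirely free, so the set of admissible $\mu$ is infinite. For a fixed integer point and a fixed value $s(x_1,x_2,\dots,x_n)$, two distinct strings $\mu\ne\mu'$ give $\gamma_{\mu}(w)-\gamma_{\mu'}(w)=\mu-\mu'\ne 0$ for every $w\in V(G)\cup E(G)$, so the colorings $\gamma_{\mu}$ are pairwise distinct; this alone yields infinitely many curve-attached colorings. I would then add a second, independent source: whenever $F(x_1,x_2,\dots,x_n)=0$ admits infinitely many non-negative integer points, the evaluations $s(x_1,x_2,\dots,x_n)$ differ across these points, producing a further infinite supply of distinct scalings and hence of colorings $\gamma$.

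To confirm that each $\gamma$ is a structured coloring and not merely an arbitrary relabeling, I would reuse the computation carried out in the example following Theorem \ref{thm:abc-linear-magic-string-colorings}, where the affine rescaling $\gamma=s\cdot f+\mu$ transports a magic-type constraint of $f$ into the same type of constraint for $\gamma$ with only the constant shifted (an edge-magic constant $\mu_1$ becomes $s\mu_1+3\mu$, an edge-difference constant $\mu_2$ becomes $s\mu_2+\mu$, and similarly for the graceful-difference and felicitous-difference constraints). This shows each $\gamma$ is again a well-behaved total coloring whose Topcode-matrix $T_{code}(G,\gamma)$ distributes the $(3q)!$ permutation-induced number-based strings, so the infinite family of colorings indeed delivers an infinite family of $n$-dimension curve-attached $W$-constraint number-based strings.

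The main obstacle I anticipate is bookkeeping about what ``distinct'' and ``valid'' should mean: whether two colorings are counted as different functions or only through the number-based strings they induce, and the observation that the additive shift $\mu$ preserves the magic-type constraints (edge-magic, edge-difference, graceful-difference, felicitous-difference) but not the bare graceful or odd-graceful constraint $f(uv)=|f(u)-f(v)|$, where $\mu$ would have to be absorbed differently. I would sidestep this by resting the infinitude purely on the unbounded length of $\mu$, which is robust and independent both of the particular $W$-constraint carried by $f$ and of how many integer points the curve $F(x_1,x_2,\dots,x_n)=0$ happens to carry, so that the conclusion holds even for a curve with a single admissible integer point.
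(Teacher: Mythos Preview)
Your proposal is correct and, in fact, supplies more detail than the paper itself does: the paper states Theorem~\ref{thm:one-encryption-one-time11} without an accompanying proof environment, leaving the assertion to be read off directly from Definition~\ref{defn:n-dimensioncurve-attached-colorings} and the preceding worked example. Your argument via the unbounded freedom in the additive string $\mu$ is sound and is the most robust route, since it works even when the curve $F(x_1,\dots,x_n)=0$ has only finitely many non-negative integer points; the paper's implicit reasoning (by analogy with Theorem~\ref{thm:one-encryption-one-time} and the remark following it) leans instead on the infinitude of integer points along the curve, which is not guaranteed for an arbitrary curve. Your observation that $\mu$ preserves the four magic-type constraints but not the bare graceful constraint $f(uv)=|f(u)-f(v)|$ is a genuine refinement the paper does not make explicit, and your decision to rest the infinitude purely on $\mu$ so that the conclusion is independent of the particular $W$-constraint is the cleaner way to close the argument.
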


\subsection{Set-based strings}

\begin{defn} \label{defn:together-definition-set-labelings}
\cite{Yao-Sun-Zhang-Li-Yan-Zhang-Wang-ITOEC-2017} Let a $(p,q)$-graph $G$ admit a mapping $F: X\rightarrow S$, where $S$ is a subset of subsets of the power set $[0,pq]^2$, and let $Rs(m)=\{c_1,c_2,\dots, c_m\}$ be a given constraint set. There are the following constraints:
\begin{asparaenum}[(a)]
\item \label{vertex-set} $X=V(G)$.
\item \label{edge-set} $X=E(G)$.
\item \label{total-set} $X=V(G)\cup E(G)$.
\item \label{adjacent-vertex-labeling} $F(u)\not =F(v)$ if $uv\in E(G)$ (it may happen $F(u)\cap F(v)\neq \emptyset$).
\item \label{adjacent-edge-labeling} $F(uv)\not =F(uw)$ for each pair of adjacent edges $uv$ and $uw$ of $G$.
\item \label{vertex-labeling} $F(x)\not =F(y)$ for distinct vertices $x$ and $y$ of $G$.
\item \label{edge-labeling} $F(xy)\not =F(uv)$ for different edges $uv$ and $xy$ of $G$.
\item \label{edge-induced} A mapping $F\,': E(G)\rightarrow S$ is induced by $F$ subject to $Rs(m)$, that is, each edge $uv\in E(G)$ is labelled by a number of the set $F\,'(uv)$ such that each $c\in F\,'(uv)$ is generated by some $a\in F(u)$, $b\in F(v)$ and one constraint or more constraints of $Rs(m)$.
\end{asparaenum}
\textbf{Then }

(1-1) $F$ is called a \emph{vertex-distinguishing set-labeling} of $G$ if the constraints (\ref{vertex-set}) and (\ref{vertex-labeling}) hold true.

(1-2) $F$ is called an \emph{edge-distinguishing set-labeling} of $G$ if the constraints (\ref{edge-set}) and (\ref{edge-labeling}) hold true.

(1-3) $F$ is called a \emph{all-distinguishing set-coloring} of $G$ if (\ref{total-set}), (\ref{vertex-labeling}) and (\ref{edge-labeling}) hold true.

(1-4) $F\,'$ is called an \emph{induced edge-distinguishing edge set-labeling} of $G$ if both constraints (\ref{edge-labeling}) and (\ref{edge-induced}) hold true.

(1-5) $(F,F\,')$ is called an \emph{induced all-distinguishing set-coloring} subject to $Rs(m)$ if the constraints (\ref{vertex-set}), (\ref{vertex-labeling}), (\ref{edge-labeling}) and (\ref{edge-induced}) hold true.

(2-1) $F$ is called a \emph{proper set-labeling} of $G$ if it satisfies the constraints (\ref{vertex-set}) and (\ref{adjacent-vertex-labeling}) simultaneously.

(2-2) $F$ is called a \emph{proper edge set-labeling} of $G$ if it satisfies the constraints (\ref{edge-set}) and (\ref{adjacent-edge-labeling}) simultaneously.

(2-3) $F$ is called a \emph{proper total set-coloring} of $G$ if it satisfies the constraints (\ref{total-set}), (\ref{adjacent-vertex-labeling}) and (\ref{adjacent-edge-labeling}) simultaneously.

(2-4) $(F,F\,')$ is called an \emph{induced proper total set-coloring} subject to $Rs(m)$ if the constraints (\ref{vertex-set}), (\ref{adjacent-vertex-labeling}), (\ref{adjacent-edge-labeling}) and (\ref{edge-induced}) hold true simultaneously.

(3-1) $F$ is called a \emph{pseudo-vertex set-labeling} of $G$ if it holds the constraints (\ref{vertex-set}), but not (\ref{adjacent-vertex-labeling}).

(3-2) $F$ is called a \emph{pseudo-edge set-labeling} of $G$ if it holds the constraints (\ref{edge-set}), but not (\ref{adjacent-edge-labeling}).

(3-3) $F$ is called a \emph{pseudo-total set-coloring} of $G$ if it holds the constraint (\ref{total-set}), but not (\ref{adjacent-vertex-labeling}), or not (\ref{adjacent-edge-labeling}), or not both (\ref{adjacent-vertex-labeling}) and (\ref{adjacent-edge-labeling}).\qqed
\end{defn}

\begin{defn}\label{defn:parameters-set-labelings-colorings}
\cite{Yao-Sun-Zhang-Li-Yan-Zhang-Wang-ITOEC-2017} We have the following set-labelings/set-colorings:

(i) There are two parameters $$v_s=\min \{|F(x)|:x\in V(G)\},~v_l=\max \{|F(y)|:y\in V(G)\}
$$ by Definition \ref{defn:together-definition-set-labelings}. Then the set-labeling/set-coloring $F$ is called an \emph{$\alpha$-uniformly vertex set-labeling/set-coloring} of $G$ if $v_s=v_l=\alpha$. As $\alpha=1$, $(F,F\,')$ is a popular labeling of graph theory.

(ii) There are two parameters $$e_s=\min \{|F\,'(uv)|:~uv\in E(G)\},~e_l=\max \{|F\,'(xy)|:~xy\in E(G)\}
$$ from Definition \ref{defn:together-definition-set-labelings}. Then the edge set-labeling/set-coloring $F\,'$ is called a \emph{$\beta$-uniformly edge set-labeling/set-coloring} of $G$ if $e_s=e_l=\beta$.

(iii) Definition \ref{defn:together-definition-set-labelings} induces two parameters $$t_s=\min \{|F(x)|:x\in V(G)\cup E(G)\},~t_l=\max \{|F(y)|:~y\in V(G)\cup E(G)\}
$$ If $k=t_s=t_l$, then the total set-labeling/set-coloring $F$ is called \emph{$k$-uniformly total set-labeling/set-coloring}.\qqed
\end{defn}

\begin{rem}\label{rem:333333}
The colorings defined in Definition \ref{defn:together-definition-set-labelings} and Definition \ref{defn:parameters-set-labelings-colorings} form set-type Topcode-matrices, which can induce more complex number-based strings.\paralled
\end{rem}

\subsubsection{Distinguishing set-colorings}
\begin{defn} \label{defn:v-e-ve-distinguishing-set-coloringss}
$^*$ Let $H$ be a $(p,q)$-graph, and let $Q$ be a set of subsets with each subset $e\in Q$ holding $e\neq \emptyset$ true. There is a \emph{set-coloring} $\theta: S\rightarrow Q$ for $S\subseteq V(H)\cup E(H)$. There are the following constraints:
\begin{asparaenum}[\textbf{\textrm{Dint}}-1.]
\item \label{intersect:vertex} $S=V(H)$.
\item \label{intersect:edge} $S=E(H)$.
\item \label{intersect:total} $S=V(H)\cup E(H)$.

--- \emph{local distinguishing}

\item \label{intersect:proper-vertex} $\theta(u)\neq \theta(v)$ for each edge $uv\in E(H)$.
\item \label{intersect:proper-edges} $\theta(uv)\neq \theta(uw)$ for $v,w\in N_{ei}(u)$ and each vertex $u\in V(H)$.
\item \label{intersect:incident-edge-vertex} $F(u)\neq F(uv)$ and $F(v)\neq F(uv)$ for each edge $uv\in E(G)$.

--- \emph{local intersected}

\item \label{intersect:only-join-edge} $\theta(u)\cap \theta(v)\neq \emptyset $ for each edge $uv\in E(H)$.
\item \label{intersect:join-edge-r-rank} $\theta(u)\cap \theta(v)\subseteq \theta(uv)$ and $|\theta(u)\cap \theta(v)|\geq r\geq 1$ for each edge $uv\in E(H)$.
\item \label{intersect:join-edge-vertex} $\theta(uv)\cap \theta(u)\neq \emptyset$ and $\theta(uv)\cap \theta(v)\neq \emptyset$ for each edge $uv\in E(H)$.
\item \label{intersect:join-adjacent-edges} $\theta(uv)\cap \theta(uw)\neq \emptyset$ for $v,w\in N_{ei}(u)$ and each vertex $u\in V(H)$.

--- \emph{v-adjacent distinguishing}

\item \label{intersect:adjacent-vertex-union-dis} $\bigcup _{v\in N_{ei}(u)}\theta(v)\neq \bigcup _{z\in N_{ei}(w)}\theta(z)$ for each edge $uw\in V(H)$.
\item \label{intersect:adjacent-vertex-intersect-dis} $\bigcap _{v\in N_{ei}(u)}\theta(v)\neq \bigcap _{z\in N_{ei}(w)}\theta(z)$ for each edge $uw\in V(H)$.
\item \label{intersect:adjacent-all-vertex-union-dis} $\theta(u)\cup \big [\bigcup _{v\in N_{ei}(u)}\theta(v)\big ]\neq \theta(w)\cup \big [\bigcup _{z\in N_{ei}(w)}\theta(z)\big ]$ for each edge $uw\in V(H)$.
\item \label{intersect:adjacent-all-vertex-intersect-dis} $\theta(u)\cap \big [\bigcap _{v\in N_{ei}(u)}\theta(v)\big ]\neq \theta(w)\cap \big [\bigcap _{z\in N_{ei}(w)}\theta(z)\big ]$ for each edge $uw\in V(H)$.

--- \emph{e-adjacent distinguishing}

\item \label{intersect:adjacent-edges-union-dis} $\bigcup _{v\in N_{ei}(u)}\theta(uv)\neq \bigcup _{z\in N_{ei}(w)}\theta(wz)$ for each edge $uw\in V(H)$.
\item \label{intersect:adjacent-edges-intersect-dis} $\bigcap _{v\in N_{ei}(u)}\theta(uv)\neq \bigcap _{z\in N_{ei}(w)}\theta(wz)$ for each edge $uw\in V(H)$.

--- \emph{ve-adjacent distinguishing}

\item \label{intersect:adjacent-vertex-edges-union-dis} $\theta(u)\cup \big [\bigcup _{v\in N_{ei}(u)}\theta(uv)\big ]\neq \theta(w)\cup \big [\bigcup _{z\in N_{ei}(w)}\theta(wz)\big ]$ for each edge $uw\in V(H)$.
\item \label{intersect:adjacent-vertex-edges-intersect-dis} $\theta(u)\cap \big [\bigcap _{v\in N_{ei}(u)}\theta(uv)\big ]\neq \theta(w)\cap \big [\bigcap _{z\in N_{ei}(w)}\theta(wz)\big ]$ for each edge $uw\in V(H)$.
\end{asparaenum}
\textbf{We call $\theta$}

\noindent --- \emph{traditional}

\begin{asparaenum}[\textbf{\textrm{Tingco}}-1.]
\item a \emph{proper set-coloring} if the constraints Dint-\ref{intersect:vertex} and Dint-\ref{intersect:proper-vertex} hold true.
\item a \emph{proper edge set-coloring} if the constraints Dint-\ref{intersect:edge} and Dint-\ref{intersect:proper-edges} hold true.
\item a \emph{proper total set-coloring} if the constraints Dint-\ref{intersect:total}, Dint-\ref{intersect:proper-vertex}, Dint-\ref{intersect:proper-edges} and Dint-\ref{intersect:incident-edge-vertex} hold true.

--- \emph{intersected}

\item a \emph{$v$-intersected proper set-coloring} if the constraints Dint-\ref{intersect:vertex}, Dint-\ref{intersect:proper-vertex} and Dint-\ref{intersect:only-join-edge} hold true.
\item an \emph{$r$-rank $e$-intersected proper total set-coloring} if the constraints Dint-\ref{intersect:total}, Dint-\ref{intersect:proper-vertex}, Dint-\ref{intersect:proper-edges}, Dint-\ref{intersect:incident-edge-vertex} and Dint-\ref{intersect:join-edge-r-rank} hold true.
\item an \emph{$(e,ve)$-intersected proper total set-coloring} if the constraints Dint-\ref{intersect:total}, Dint-\ref{intersect:proper-vertex}, Dint-\ref{intersect:proper-edges}, Dint-\ref{intersect:incident-edge-vertex}, Dint-\ref{intersect:join-edge-r-rank} and Dint-\ref{intersect:join-edge-vertex} hold true.
\item an \emph{$(e,ve,ee)$-intersected proper total set-coloring} if the constraints Dint-\ref{intersect:total}, Dint-\ref{intersect:proper-vertex}, Dint-\ref{intersect:proper-edges}, Dint-\ref{intersect:incident-edge-vertex}, Dint-\ref{intersect:join-edge-r-rank}, Dint-\ref{intersect:join-edge-vertex} and Dint-\ref{intersect:join-adjacent-edges} hold true.

--- \emph{v-distinguishing}

\item a \emph{$v$-union adjacent-$v$ distinguishing proper set-coloring} if the constraints Dint-\ref{intersect:vertex}, Dint-\ref{intersect:proper-vertex} and Dint-\ref{intersect:adjacent-vertex-union-dis} hold true.
\item a \emph{$v$-intersected adjacent-$v$ distinguishing proper set-coloring} if the constraints Dint-\ref{intersect:vertex}, Dint-\ref{intersect:proper-vertex} and Dint-\ref{intersect:adjacent-vertex-intersect-dis} hold true.
\item a \emph{$[v]$-union adjacent-$v$ distinguishing proper set-coloring} if the constraints Dint-\ref{intersect:vertex}, Dint-\ref{intersect:proper-vertex} and Dint-\ref{intersect:adjacent-all-vertex-union-dis} hold true.
\item a \emph{$[v]$-intersected adjacent-$v$ distinguishing proper set-coloring} if the constraints Dint-\ref{intersect:vertex}, Dint-\ref{intersect:proper-vertex} and Dint-\ref{intersect:adjacent-all-vertex-intersect-dis} hold true.

--- \emph{edge, e-distinguishing}

\item an \emph{ $ee$-intersected set-coloring} if the constraints Dint-\ref{intersect:edge}, Dint-\ref{intersect:proper-edges} and Dint-\ref{intersect:join-adjacent-edges} hold true.
\item an \emph{$e$-union adjacent-$v$ distinguishing set-coloring} if the constraints Dint-\ref{intersect:edge}, Dint-\ref{intersect:proper-edges} and Dint-\ref{intersect:adjacent-edges-union-dis} hold true.
\item an \emph{$e$-intersected adjacent-$v$ distinguishing set-coloring} if the constraints Dint-\ref{intersect:edge}, Dint-\ref{intersect:proper-edges} and Dint-\ref{intersect:adjacent-edges-intersect-dis} hold true.

--- \emph{total distinguishing}

\item a \emph{$v$-union adjacent-$v$ distinguishing proper total set-coloring} if the constraints Dint-\ref{intersect:total}, Dint-\ref{intersect:proper-vertex}, Dint-\ref{intersect:proper-edges}, Dint-\ref{intersect:incident-edge-vertex} and Dint-\ref{intersect:adjacent-vertex-union-dis} hold true.
\item a \emph{$v$-intersected adjacent-$v$ distinguishing proper total set-coloring} if the constraints Dint-\ref{intersect:total}, Dint-\ref{intersect:proper-vertex}, Dint-\ref{intersect:proper-edges}, Dint-\ref{intersect:incident-edge-vertex} and Dint-\ref{intersect:adjacent-vertex-intersect-dis} hold true.
\item an \emph{$e$-union adjacent-$v$ distinguishing proper total set-coloring} if the constraints Dint-\ref{intersect:total}, Dint-\ref{intersect:proper-vertex}, Dint-\ref{intersect:proper-edges}, Dint-\ref{intersect:incident-edge-vertex} and Dint-\ref{intersect:adjacent-edges-union-dis} hold true.
\item an \emph{$e$-intersected adjacent-$v$ distinguishing proper total set-coloring} if the constraints Dint-\ref{intersect:total}, Dint-\ref{intersect:proper-vertex}, Dint-\ref{intersect:proper-edges}, Dint-\ref{intersect:incident-edge-vertex} and Dint-\ref{intersect:adjacent-edges-intersect-dis} hold true.

--- \emph{closed total distinguishing}

\item a \emph{$[v]$-union adjacent-$v$ distinguishing proper total set-coloring} if the constraints Dint-\ref{intersect:total}, Dint-\ref{intersect:proper-vertex}, Dint-\ref{intersect:proper-edges}, Dint-\ref{intersect:incident-edge-vertex} and Dint-\ref{intersect:adjacent-all-vertex-union-dis} hold true.
\item a \emph{$[v]$-intersected adjacent-$v$ distinguishing proper total set-coloring} if the constraints Dint-\ref{intersect:total}, Dint-\ref{intersect:proper-vertex}, Dint-\ref{intersect:proper-edges}, Dint-\ref{intersect:incident-edge-vertex} and Dint-\ref{intersect:adjacent-all-vertex-intersect-dis} hold true.

\item a \emph{$[ve]$-union adjacent-$v$ distinguishing proper total set-coloring} if the constraints Dint-\ref{intersect:total}, Dint-\ref{intersect:proper-vertex}, Dint-\ref{intersect:proper-edges}, Dint-\ref{intersect:incident-edge-vertex} and Dint-\ref{intersect:adjacent-vertex-edges-union-dis} hold true.
\item a \emph{$[ve]$-intersected adjacent-$v$ distinguishing proper total set-coloring} if the constraints Dint-\ref{intersect:total}, Dint-\ref{intersect:proper-vertex}, Dint-\ref{intersect:proper-edges}, Dint-\ref{intersect:incident-edge-vertex} and Dint-\ref{intersect:adjacent-vertex-edges-intersect-dis} hold true.

--- \emph{multiple-adjacent distinguishing}

\item an \emph{$(e,ve,ee)$-intersected $v$-union adjacent-$v$ distinguishing proper total set-coloring} if the constraints Dint-\ref{intersect:total}, Dint-\ref{intersect:proper-vertex}, Dint-\ref{intersect:proper-edges}, Dint-\ref{intersect:incident-edge-vertex}, Dint-\ref{intersect:join-edge-r-rank}, Dint-\ref{intersect:join-edge-vertex}, Dint-\ref{intersect:join-adjacent-edges} and Dint-\ref{intersect:adjacent-vertex-union-dis} hold true.
\item an \emph{$(e,ve,ee)$-$v$-intersected adjacent-$v$ distinguishing proper total set-coloring} if the constraints Dint-\ref{intersect:total}, Dint-\ref{intersect:proper-vertex}, Dint-\ref{intersect:proper-edges}, Dint-\ref{intersect:incident-edge-vertex}, Dint-\ref{intersect:join-edge-r-rank}, Dint-\ref{intersect:join-edge-vertex}, Dint-\ref{intersect:join-adjacent-edges} and Dint-\ref{intersect:adjacent-vertex-intersect-dis} hold true.
\item an \emph{$(e,ve,ee)$-intersected $[v]$-union adjacent-$v$ distinguishing proper total set-coloring} if the constraints Dint-\ref{intersect:total}, Dint-\ref{intersect:proper-vertex}, Dint-\ref{intersect:proper-edges}, Dint-\ref{intersect:incident-edge-vertex}, Dint-\ref{intersect:join-edge-r-rank}, Dint-\ref{intersect:join-edge-vertex}, Dint-\ref{intersect:join-adjacent-edges} and Dint-\ref{intersect:adjacent-all-vertex-union-dis} hold true.
\item an \emph{$(e,ve,ee)$-$[v]$-intersected adjacent-$v$ distinguishing proper total set-coloring} if the constraints Dint-\ref{intersect:total}, Dint-\ref{intersect:proper-vertex}, Dint-\ref{intersect:proper-edges}, Dint-\ref{intersect:incident-edge-vertex}, Dint-\ref{intersect:join-edge-r-rank}, Dint-\ref{intersect:join-edge-vertex}, Dint-\ref{intersect:join-adjacent-edges} and Dint-\ref{intersect:adjacent-all-vertex-intersect-dis} hold true.
\item an \emph{$(e,ve,ee)$-intersected $e$-union adjacent-$v$ distinguishing proper total set-coloring} if the constraints Dint-\ref{intersect:total}, Dint-\ref{intersect:proper-vertex}, Dint-\ref{intersect:proper-edges}, Dint-\ref{intersect:incident-edge-vertex}, Dint-\ref{intersect:join-edge-r-rank}, Dint-\ref{intersect:join-edge-vertex}, Dint-\ref{intersect:join-adjacent-edges} and Dint-\ref{intersect:adjacent-edges-union-dis} hold true.
\item an \emph{$(e,ve,ee)$-$e$-intersected adjacent-$v$ distinguishing proper total set-coloring} if the constraints Dint-\ref{intersect:total}, Dint-\ref{intersect:proper-vertex}, Dint-\ref{intersect:proper-edges}, Dint-\ref{intersect:incident-edge-vertex}, Dint-\ref{intersect:join-edge-r-rank}, Dint-\ref{intersect:join-edge-vertex}, Dint-\ref{intersect:join-adjacent-edges} and Dint-\ref{intersect:adjacent-edges-intersect-dis} hold true.
\item an \emph{$(e,ve,ee)$-intersected $[ve]$-union adjacent-$v$ distinguishing proper total set-coloring} if the constraints Dint-\ref{intersect:total}, Dint-\ref{intersect:proper-vertex}, Dint-\ref{intersect:proper-edges}, Dint-\ref{intersect:incident-edge-vertex}, Dint-\ref{intersect:join-edge-r-rank}, Dint-\ref{intersect:join-edge-vertex}, Dint-\ref{intersect:join-adjacent-edges} and Dint-\ref{intersect:adjacent-vertex-edges-union-dis} hold true.
\item an \emph{$(e,ve,ee)$-$[ve]$-intersected adjacent-$v$ distinguishing proper total set-coloring} if the constraints Dint-\ref{intersect:total}, Dint-\ref{intersect:proper-vertex}, Dint-\ref{intersect:proper-edges}, Dint-\ref{intersect:incident-edge-vertex}, Dint-\ref{intersect:join-edge-r-rank}, Dint-\ref{intersect:join-edge-vertex}, Dint-\ref{intersect:join-adjacent-edges} and Dint-\ref{intersect:adjacent-vertex-edges-intersect-dis} hold true.\qqed
\end{asparaenum}
\end{defn}

\subsubsection{Hyperedge-set colorings}

\begin{defn} \label{defn:distinguishing-hyperedge-set-colorings}
$^*$ Let $G$ be a $(p,q)$-graph, and let $\Lambda$ be a finite set of numbers. There is a \emph{hyperedge-set coloring} $F: S\rightarrow \mathcal{E}$, where $\mathcal{E}$ is a \emph{hyperedge set} of the power set $\Lambda^2$ holding $\Lambda=\bigcup _{e\in \mathcal{E}}e$, and $S\subseteq V(G)\cup E(G)$. There are the following constraints:
\begin{asparaenum}[\textbf{\textrm{Hyset}}-1.]
\item \label{hyper:vertex} $S=V(G)$.
\item \label{hyper:edge} $S=E(G)$.
\item \label{hyper:total} $S=V(G)\cup E(G)$.

--- \emph{local distinguishing}

\item \label{hyper:adjacent-vertex} $F(u)\neq F(v)$ for each edge $uv\in E(G)$.
\item \label{hyper:adjacent-edge} $F(uv)\neq F(uw)$ for adjacent edges $uv,uw\in E(G)$ and $u\in V(G)$.
\item \label{hyper:incident-edge-vertex} $F(u)\neq F(uv)$ and $F(v)\neq F(uv)$ for each edge $uv\in E(G)$.

--- \emph{local intersected}

\item \label{hyper:join-oper-verticeice} $F(u)\cap F(v)\neq \emptyset$ for each edge $uv\in E(G)$.
\item \label{hyper:join-oper-adjacent-edges} $F(uv)\cap F(uw)\neq \emptyset$ for adjacent edges $uv,uw\in E(G)$.
\item \label{hyper:join-edge} $F(u)\cap F(v)\subseteq F(uv)$ and $F(u)\cap F(v)\neq \emptyset $ for each edge $uv\in E(H)$.
\item \label{hyper:join-oper-vertex-edge} $F(uv)\cap F(u)\neq \emptyset$ and $F(uv)\cap F(v)\neq \emptyset$ for each edge $uv\in E(G)$.

--- \emph{v-adjacent distinguishing}

\item \label{hyper:adjacent-vertex-union-dis} $\bigcup _{v\in N_{ei}(u)}F(v)\neq \bigcup _{z\in N_{ei}(w)}F(z)$ for each edge $uw\in V(H)$.
\item \label{hyper:adjacent-vertex-intersect-dis} $\bigcap _{v\in N_{ei}(u)}F(v)\neq \bigcap _{z\in N_{ei}(w)}F(z)$ for each edge $uw\in V(H)$.
\item \label{hyper:adjacent-all-vertex-union-dis} $F(u)\cup \big [\bigcup _{v\in N_{ei}(u)}F(v)\big ]\neq F(w)\cup \big [\bigcup _{z\in N_{ei}(w)}F(z)\big ]$ for each edge $uw\in V(H)$.
\item \label{hyper:adjacent-all-vertex-intersect-dis} $F(u)\cap \big [\bigcap _{v\in N_{ei}(u)}F(v)\big ]\neq F(w)\cap \big [\bigcap _{z\in N_{ei}(w)}F(z)\big ]$ for each edge $uw\in V(H)$.

--- \emph{e-adjacent distinguishing}

\item \label{hyper:adjacent-edges-union-dis} $\bigcup _{v\in N_{ei}(u)}F(uv)\neq \bigcup _{z\in N_{ei}(w)}F(wz)$ for each edge $uw\in V(H)$.
\item \label{hyper:adjacent-edges-intersect-dis} $\bigcap _{v\in N_{ei}(u)}F(uv)\neq \bigcap _{z\in N_{ei}(w)}F(wz)$ for each edge $uw\in V(H)$.

--- \emph{ve-adjacent distinguishing}

\item \label{hyper:adjacent-vertex-edges-union-dis} $F(u)\cup \big [\bigcup _{v\in N_{ei}(u)}F(uv)\big ]\neq F(w)\cup \big [\bigcup _{z\in N_{ei}(w)}F(wz)\big ]$ for each edge $uw\in V(H)$.
\item \label{hyper:adjacent-vertex-edges-intersect-dis} $F(u)\cap \big [\bigcap _{v\in N_{ei}(u)}F(uv)\big ]\neq F(w)\cap \big [\bigcap _{z\in N_{ei}(w)}F(wz)\big ]$ for each edge $uw\in V(H)$.
\end{asparaenum}
\textbf{Then}

\noindent --- \emph{hyperedge-set colorings}

\begin{asparaenum}[\textbf{\textrm{Scolo}}-1.]
\item $F$ is called \emph{proper hyperedge-set coloring} if the constraints Hyset-\ref{hyper:vertex}, and Hyset-\ref{hyper:adjacent-vertex} hold true.
\item $F$ is called \emph{proper edge hyperedge-set coloring} if the constraints Hyset-\ref{hyper:edge}, and Hyset-\ref{hyper:adjacent-edge} hold true.
\item $F$ is called \emph{proper total hyperedge-set coloring} if the constraints Hyset-\ref{hyper:total}, Hyset-\ref{hyper:adjacent-vertex}, Hyset-\ref{hyper:adjacent-edge}, and Hyset-\ref{hyper:incident-edge-vertex} hold true.

--- \emph{intersected hyperedge-set colorings}

\item $F$ is called \emph{$v$-intersected proper hyperedge-set coloring} if the constraints Hyset-\ref{hyper:vertex}, Hyset-\ref{hyper:adjacent-vertex} and Hyset-\ref{hyper:join-oper-verticeice} hold true.
\item $F$ is called \emph{$e$-intersected proper edge hyperedge-set coloring} if the constraints Hyset-\ref{hyper:edge}, Hyset-\ref{hyper:adjacent-edge} and Hyset-\ref{hyper:join-oper-adjacent-edges} hold true.
\item $F$ is called \emph{$ee$-intersected proper total hyperedge-set coloring} if the constraints Hyset-\ref{hyper:total}, Hyset-\ref{hyper:adjacent-vertex}, Hyset-\ref{hyper:adjacent-edge}, Hyset-\ref{hyper:incident-edge-vertex}, Hyset-\ref{hyper:join-oper-verticeice}, Hyset-\ref{hyper:join-oper-adjacent-edges} and Hyset-\ref{hyper:join-oper-vertex-edge} hold true.

--- \emph{v-distinguishing}

\item $F$ is called \emph{$v$-union adjacent-$v$ distinguishing proper hyperedge-set coloring} if the constraints Hyset-\ref{hyper:vertex}, Hyset-\ref{hyper:adjacent-vertex} and Hyset-\ref{hyper:adjacent-vertex-union-dis} hold true.
\item $F$ is called \emph{$v$-intersected adjacent-$v$ distinguishing proper hyperedge-set coloring} if the constraints Hyset-\ref{hyper:vertex}, Hyset-\ref{hyper:adjacent-vertex} and Hyset-\ref{hyper:adjacent-vertex-intersect-dis} hold true.
\item $F$ is called \emph{$[v]$-union adjacent-$v$ distinguishing proper hyperedge-set coloring} if the constraints Hyset-\ref{hyper:vertex}, Hyset-\ref{hyper:adjacent-vertex} and Hyset-\ref{hyper:adjacent-all-vertex-union-dis} hold true.
\item $F$ is called \emph{$[v]$-intersected adjacent-$v$ distinguishing proper hyperedge-set coloring} if the constraints Hyset-\ref{hyper:vertex}, Hyset-\ref{hyper:adjacent-vertex} and Hyset-\ref{hyper:adjacent-all-vertex-intersect-dis} hold true.

--- \emph{e-distinguishing}

\item $F$ is called \emph{$v$-union adjacent-$v$ distinguishing proper edge hyperedge-set coloring} if the constraints Hyset-\ref{hyper:edge}, Hyset-\ref{hyper:adjacent-edge} and Hyset-\ref{hyper:adjacent-edges-union-dis} hold true.
\item $F$ is called \emph{$v$-intersected adjacent-$v$ distinguishing proper edge hyperedge-set coloring} if the constraints Hyset-\ref{hyper:edge}, Hyset-\ref{hyper:adjacent-edge} and Hyset-\ref{hyper:adjacent-edges-intersect-dis} hold true.
\item $F$ is called \emph{$(e,v)$-intersected adjacent-$v$ distinguishing proper edge hyperedge-set coloring} if the constraints Hyset-\ref{hyper:edge}, Hyset-\ref{hyper:adjacent-edge}, Hyset-\ref{hyper:join-oper-adjacent-edges} and Hyset-\ref{hyper:adjacent-edges-intersect-dis} hold true.

--- \emph{total-distinguishing}

\item $F$ is called \emph{$v$-union adjacent-$v$ distinguishing proper total hyperedge-set coloring} if the constraints Hyset-\ref{hyper:total}, Hyset-\ref{hyper:adjacent-vertex}, Hyset-\ref{hyper:adjacent-edge}, and Hyset-\ref{hyper:incident-edge-vertex} and Hyset-\ref{hyper:adjacent-vertex-union-dis} hold true.
\item $F$ is called \emph{$v$-intersected adjacent-$v$ distinguishing proper total hyperedge-set coloring} if the constraints Hyset-\ref{hyper:total}, Hyset-\ref{hyper:adjacent-vertex}, Hyset-\ref{hyper:adjacent-edge}, and Hyset-\ref{hyper:incident-edge-vertex} and Hyset-\ref{hyper:adjacent-vertex-intersect-dis} hold true.

\item $F$ is called \emph{$[v]$-union adjacent-$v$ distinguishing proper total hyperedge-set coloring} if the constraints Hyset-\ref{hyper:total}, Hyset-\ref{hyper:adjacent-vertex}, Hyset-\ref{hyper:adjacent-edge}, and Hyset-\ref{hyper:incident-edge-vertex} and Hyset-\ref{hyper:adjacent-all-vertex-union-dis} hold true.
\item $F$ is called \emph{$[v]$-intersected adjacent-$v$ distinguishing proper total hyperedge-set coloring} if the constraints Hyset-\ref{hyper:total}, Hyset-\ref{hyper:adjacent-vertex}, Hyset-\ref{hyper:adjacent-edge}, and Hyset-\ref{hyper:incident-edge-vertex} and Hyset-\ref{hyper:adjacent-all-vertex-intersect-dis} hold true.
\item $F$ is called \emph{$e$-union adjacent-$v$ distinguishing proper total hyperedge-set coloring} if the constraints Hyset-\ref{hyper:total}, Hyset-\ref{hyper:adjacent-vertex}, Hyset-\ref{hyper:adjacent-edge}, and Hyset-\ref{hyper:incident-edge-vertex} and Hyset-\ref{hyper:adjacent-edges-union-dis} hold true.
\item $F$ is called \emph{$e$-intersected adjacent-$v$ distinguishing proper total hyperedge-set coloring} if the constraints Hyset-\ref{hyper:total}, Hyset-\ref{hyper:adjacent-vertex}, Hyset-\ref{hyper:adjacent-edge}, and Hyset-\ref{hyper:incident-edge-vertex} and Hyset-\ref{hyper:adjacent-edges-intersect-dis} hold true.

\item $F$ is called \emph{$[ve]$-union adjacent-$v$ distinguishing proper total hyperedge-set coloring} if the constraints Hyset-\ref{hyper:total}, Hyset-\ref{hyper:adjacent-vertex}, Hyset-\ref{hyper:adjacent-edge}, and Hyset-\ref{hyper:incident-edge-vertex} and Hyset-\ref{hyper:adjacent-vertex-edges-union-dis} hold true.
\item $F$ is called \emph{$[ve]$-intersected adjacent-$v$ distinguishing proper total hyperedge-set coloring} if the constraints Hyset-\ref{hyper:total}, Hyset-\ref{hyper:adjacent-vertex}, Hyset-\ref{hyper:adjacent-edge}, and Hyset-\ref{hyper:incident-edge-vertex} and Hyset-\ref{hyper:adjacent-vertex-edges-intersect-dis} hold true.

\item $F$ is called \emph{$ee$-intersected $v$-union adjacent-$v$ distinguishing proper total hyperedge-set coloring} if the constraints Hyset-\ref{hyper:total}, Hyset-\ref{hyper:adjacent-vertex}, Hyset-\ref{hyper:adjacent-edge}, Hyset-\ref{hyper:incident-edge-vertex}, Hyset-\ref{hyper:join-oper-verticeice}, Hyset-\ref{hyper:join-oper-adjacent-edges}, Hyset-\ref{hyper:join-oper-vertex-edge} and Hyset-\ref{hyper:adjacent-vertex-union-dis} hold true.
\item $F$ is called \emph{$ee$-$v$-intersected adjacent-$v$ distinguishing proper total hyperedge-set coloring} if the constraints Hyset-\ref{hyper:total}, Hyset-\ref{hyper:adjacent-vertex}, Hyset-\ref{hyper:adjacent-edge}, Hyset-\ref{hyper:incident-edge-vertex}, Hyset-\ref{hyper:join-oper-verticeice}, Hyset-\ref{hyper:join-oper-adjacent-edges}, Hyset-\ref{hyper:join-oper-vertex-edge} and Hyset-\ref{hyper:adjacent-vertex-intersect-dis} hold true.
\item $F$ is called \emph{$ee$-intersected $[v]$-union adjacent-$v$ distinguishing proper total hyperedge-set coloring} if the constraints Hyset-\ref{hyper:total}, Hyset-\ref{hyper:adjacent-vertex}, Hyset-\ref{hyper:adjacent-edge}, Hyset-\ref{hyper:incident-edge-vertex}, Hyset-\ref{hyper:join-oper-verticeice}, Hyset-\ref{hyper:join-oper-adjacent-edges}, Hyset-\ref{hyper:join-oper-vertex-edge} and Hyset-\ref{hyper:adjacent-all-vertex-union-dis} hold true.
\item $F$ is called \emph{$ee$-$[v]$-intersected adjacent-$v$ distinguishing proper total hyperedge-set coloring} if the constraints Hyset-\ref{hyper:total}, Hyset-\ref{hyper:adjacent-vertex}, Hyset-\ref{hyper:adjacent-edge}, Hyset-\ref{hyper:incident-edge-vertex}, Hyset-\ref{hyper:join-oper-verticeice}, Hyset-\ref{hyper:join-oper-adjacent-edges}, Hyset-\ref{hyper:join-oper-vertex-edge} and Hyset-\ref{hyper:adjacent-all-vertex-intersect-dis} hold true.
\item $F$ is called \emph{$ee$-intersected $e$-union adjacent-$v$ distinguishing proper total hyperedge-set coloring} if the constraints Hyset-\ref{hyper:total}, Hyset-\ref{hyper:adjacent-vertex}, Hyset-\ref{hyper:adjacent-edge}, Hyset-\ref{hyper:incident-edge-vertex}, Hyset-\ref{hyper:join-oper-verticeice}, Hyset-\ref{hyper:join-oper-adjacent-edges}, Hyset-\ref{hyper:join-oper-vertex-edge} and Hyset-\ref{hyper:adjacent-edges-union-dis} hold true.
\item $F$ is called \emph{$ee$-$e$-intersected adjacent-$v$ distinguishing proper total hyperedge-set coloring} if the constraints Hyset-\ref{hyper:total}, Hyset-\ref{hyper:adjacent-vertex}, Hyset-\ref{hyper:adjacent-edge}, Hyset-\ref{hyper:incident-edge-vertex}, Hyset-\ref{hyper:join-oper-verticeice}, Hyset-\ref{hyper:join-oper-adjacent-edges}, Hyset-\ref{hyper:join-oper-vertex-edge} and Hyset-\ref{hyper:adjacent-edges-intersect-dis} hold true.
\item $F$ is called \emph{$ee$-intersected $[ve]$-union adjacent-$v$ distinguishing proper total hyperedge-set coloring} if the constraints Hyset-\ref{hyper:total}, Hyset-\ref{hyper:adjacent-vertex}, Hyset-\ref{hyper:adjacent-edge}, Hyset-\ref{hyper:incident-edge-vertex}, Hyset-\ref{hyper:join-oper-verticeice}, Hyset-\ref{hyper:join-oper-adjacent-edges}, Hyset-\ref{hyper:join-oper-vertex-edge} and Hyset-\ref{hyper:adjacent-vertex-edges-union-dis} hold true.
\item $F$ is called \emph{$ee$-$[ve]$-intersected adjacent-$v$ distinguishing proper total hyperedge-set coloring} if the constraints Hyset-\ref{hyper:total}, Hyset-\ref{hyper:adjacent-vertex}, Hyset-\ref{hyper:adjacent-edge}, Hyset-\ref{hyper:incident-edge-vertex}, Hyset-\ref{hyper:join-oper-verticeice}, Hyset-\ref{hyper:join-oper-adjacent-edges}, Hyset-\ref{hyper:join-oper-vertex-edge} and Hyset-\ref{hyper:adjacent-vertex-edges-intersect-dis} hold true.\qqed
\end{asparaenum}
\end{defn}

\begin{defn} \label{defn:distinguishing-constraint-hypergraph}
$^*$ For Definition \ref{defn:distinguishing-hyperedge-set-colorings}, we add a \textbf{Feedback-constraint}: Each pair of hyperedges $e,e\,'\in \mathcal{E}$ holding $|e\cap e\,'|\geq 1$ corresponds to an edge $xy\in E(G)$, such that $F(x)=e$ and $F(y)=e\,'$.

Suppose that a graph $G$ admits a $W$-constraint hyperedge-set coloring defined in Definition \ref{defn:distinguishing-hyperedge-set-colorings}. If the graph $G$ satisfies the Feedback-constraint, then we call $G$ the \emph{$W$-constraint graph} of the hypergraph $\mathcal{H}_{yper}=(\Lambda,\mathcal{E})$.\qqed
\end{defn}

\begin{defn} \label{defn:intersected-graph-vs-hypergraph}
Let $\Lambda$ be a set of finite elements. A graph $G$ has its own vertex set $V(G)=\mathcal{E}$, where $\mathcal{E}$ is a \emph{hyperedge set} holding $\Lambda=\bigcup_{e\in \mathcal{E}}e$, and each edge $e_xe_y\in E(G)$ means $e_x\cap e_y\neq \emptyset$, then we call $G$ \emph{intersected graph} of the hypergraph $\mathcal{H}_{yper}=(\Lambda,\mathcal{E})$.
\end{defn}

\begin{rem}\label{rem:333333}
Definition \ref{defn:distinguishing-constraint-hypergraph} enables us to know more complex hypergraphs through $W$-constraint graphs, for example, a hypergraph $\mathcal{H}_{yper}=(\Lambda,\mathcal{E})$ with its own $ee$-$[ve]$-intersected adjacent-$v$ distinguishing hyperedge set $\mathcal{E}$ defined in Definition \ref{defn:distinguishing-hyperedge-set-colorings}.

Clearly, some topological structures of the intersected graph $G$ are just what the hypergraph $\mathcal{H}_{yper}=(\Lambda,\mathcal{E})$ has by Definition \ref{defn:intersected-graph-vs-hypergraph}, so a hypergraph is equivalent to its intersected graph. Thereby, intersected graphs help us to know hypergraphs, and moreover some properties of the intersected graphs are just what corresponding hypergraphs have; refer to \cite{Yao-Ma-arXiv-2201-13354v1}.\paralled
\end{rem}

\begin{rem}\label{rem:333333}
\textbf{The edge-set-coloring 1-2-3-conjecture.} If there is an \emph{edge set-coloring} $\theta: E(H)\rightarrow Q$ for a connected graph $H$, such that the vertex induced color holding
\begin{equation}\label{eqa:edge-set-coloring-123-conjecture}
\sum _{v\in N_{ei}(u)}|\theta(uv)|=: \theta(u)\neq \theta(w):=\sum _{z\in N_{ei}(w)}|\theta(wz)|
\end{equation} for each edge $uw\in V(H)$, then we call $\theta$ a \emph{proper vertex set-coloring} of the connected graph $H$, where $Q$ is a set of subsets with different cardinalities, or $Q$ is a hyperedge set. Motivated from the 1-2-3-conjecture, we propose the \textbf{edge-set-coloring 1-2-3-conjecture}: \emph{The cardinality $|Q|\leq 3$ is enough for every connected graph $H$ to admits an edge set-coloring $\theta$ based on $Q$ such that $\theta$ induces a proper vertex set-coloring} of $H$.\paralled
\end{rem}

\begin{thm}\label{thm:666666}
$^*$ If the intersected graph $G$ of a hypergraph $\mathcal{H}_{yper}=(\Lambda,\mathcal{E})$ supports the \textbf{1-2-3-conjecture}, then $\mathcal{H}_{yper}=(\Lambda,\mathcal{E})$ supports the \textbf{edge-set-coloring 1-2-3-conjecture} too.
\end{thm}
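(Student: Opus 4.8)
The plan is to transport a numerical $1$-$2$-$3$ edge weighting of the intersected graph $G$ into a set-valued edge coloring whose induced vertex sums coincide with the numerical vertex sums, so that properness is inherited for free. First I would invoke the hypothesis that $G$ supports the \textbf{1-2-3-conjecture}: this furnishes an edge weighting $w:E(G)\rightarrow \{1,2,3\}$ such that the vertex function $\sigma(u)=\sum_{v\in N_{ei}(u)}w(uv)$ satisfies $\sigma(u)\neq \sigma(w)$ for every edge $uw\in E(G)$. Here I tacitly use that $G$, being the intersected graph under consideration, is connected and not a single edge, so the conjecture is meaningful for it; and by the preceding remark a hypergraph is equivalent to its intersected graph, hence it suffices to produce the required edge set-coloring on $G$ itself in order to certify the conjecture for $\mathcal{H}_{yper}=(\Lambda,\mathcal{E})$.

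Next I would fix a palette of exactly three subsets of pairwise distinct cardinalities, say $Q=\{S_1,S_2,S_3\}$ with $|S_i|=i$ for $i\in [1,3]$, choosing them inside the power set $\Lambda^2$ when $|\Lambda|\geq 3$ and otherwise inside any sufficiently large ground set, so that $Q$ is a legitimate set of subsets with different cardinalities. Then I would define the edge set-coloring $\theta:E(G)\rightarrow Q$ by $\theta(uv)=S_{w(uv)}$ for each edge $uv\in E(G)$; this is well defined precisely because $w(uv)\in \{1,2,3\}$ indexes the three palette members.

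The verification is then a one-line computation: for every vertex $u$ the induced color is
\begin{equation}
\theta(u)=\sum_{v\in N_{ei}(u)}|\theta(uv)|=\sum_{v\in N_{ei}(u)}|S_{w(uv)}|=\sum_{v\in N_{ei}(u)}w(uv)=\sigma(u),
\end{equation}
so that $\theta(u)=\sigma(u)$ identically. Properness of $\theta$, i.e. $\theta(u)\neq \theta(w)$ for each edge $uw\in E(G)$, is therefore inherited verbatim from the properness of $\sigma$, and since $|Q|=3\leq 3$ this exhibits $\theta$ as an edge set-coloring realizing the \textbf{edge-set-coloring 1-2-3-conjecture} for $G$, hence for the hypergraph $\mathcal{H}_{yper}=(\Lambda,\mathcal{E})$.

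The main thing to handle carefully, rather than a genuine obstacle, is the cardinality bookkeeping: the edge-set-coloring conjecture measures $|\theta(uv)|$ and not a numerical weight, so the whole argument hinges on choosing the palette $Q$ with $|S_i|=i$ to make the assignment $i\mapsto S_i$ cardinality-preserving. A secondary point I would state explicitly is the equivalence (from the earlier remarks) that lets ``the hypergraph supports the conjecture'' reduce to ``its intersected graph $G$ supports it''; once that reduction is recorded, no Feedback-constraint or additional hypergraph structure is needed, and the construction above suffices.
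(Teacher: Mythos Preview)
Your proposal is correct and is the natural argument: lift the numerical $\{1,2,3\}$ weighting to three sets of cardinalities $1,2,3$, so that $\sum|\theta(uv)|=\sum w(uv)$ and properness transfers directly. The paper states this theorem without proof, so there is nothing to compare against; your argument fills the gap cleanly, and your care in recording the hypergraph--intersected-graph equivalence and the cardinality bookkeeping is exactly what is needed.
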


\begin{rem}\label{rem:333333}
For vector $\overrightarrow{d}_i=(a_i,b_i,c_i)$, the \textbf{edge-vector-coloring 1-2-3-conjecture}: Each connected graph $G$ admits an \emph{edge vector-coloring} $f:E(G)\rightarrow V_{ector}=\{\overrightarrow{d}_i=(a_i,b_i,c_i):i\in [1,3]\}$ such that $f$ colors properly the vertices of $G$, that is
\begin{equation}\label{eqa:vector-coloring-123-conjecture}
\sum _{v\in N_{ei}(u)}\| f(uv)\|=:f(u)\neq f(w):=\sum _{z\in N_{ei}(w)}\| f(wz)\|
\end{equation} for each edge $uw\in E(G)$.\paralled
\end{rem}

\subsection{Integer partition into magic-constraints}

Since number-based strings were induced from the colorings of graphs, we present some theorems for determining graphs admitting colorings with the magic-constraints in this subsection.

\begin{thm}\label{thm:partition-integer-edge-magic-cons}
The set $S_{ema}(m)$ contains each group of mutually distinct positive integers $a,b,c$ holding the $ema$-constraint $m=a+b+c$ for a fixed integer $m\geq 5$. There is at least a connected graph $G$ admitting a proper total coloring $f:V(G)\cup E(G)\rightarrow S_{ema}(m)$, such that each edge $uv\in E(G)$ holds the edge-magic constraint $f(u)+f(v)+f(uv)=m$ with $f(u)=a$, $f(v)=b$ and $f(uv)=c$, and $\Delta(G)\leq m-5$, or $\Delta(G)\leq m-4$, and $G$ contains a path $P$ of length being greater than any given positive integer, as well as the edge number $|E(G)|$ can be as big as it wants.
\end{thm}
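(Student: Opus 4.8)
The plan is to realize the required graph as a \emph{caterpillar}, and in the simplest case as a single long path, coloured greedily so that the edge-magic constant of Eq.~(\ref{eqa:definition-edge-magic-constraint}) equals $m$ on every edge. First I would fix the palette. A positive integer $v$ can occur as a colour only if it sits in a triple of mutually distinct positive integers summing to $m$, whose two companions are at least $1$ and $2$; hence $m-v\geq 3$, so every usable colour lies in $[1,m-3]$, and $S_{ema}(m)\neq\emptyset$ precisely when $m\geq 6$ (for $m=5$ the minimal distinct sum $1+2+3=6>5$, so the statement is vacuous there and I would assume $m\geq 6$). The central bookkeeping device is this: at a vertex coloured $x$, an incident edge coloured $c$ forces its opposite endpoint to be coloured $y=m-x-c$, and $\{x,y,c\}$ must consist of three distinct positive integers. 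Counting the admissible edge-colours at $x$ (namely $1\leq c\leq m-x-1$ with $c\neq x$, $c\neq m-2x$, and $c\neq (m-x)/2$) shows every admissible $x$ has at least two legal edge-colours and at most $m-4$, the sharper value $m-5$ arising exactly in the cases where the exclusion $c=(m-x)/2$ also bites (e.g. at $x=1$ with $m$ odd).

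Second I would build the graph greedily. Start $v_1$ with any admissible colour; having placed $v_i$ coloured $x$ with incoming edge-colour $e_{i-1}$, choose an outgoing edge-colour $e_i$ from the at least two legal edge-colours at $x$ while avoiding the single value $e_{i-1}$, and set $f(v_{i+1})=m-x-e_i$. The crucial simplification is that, since each individual edge carries one mutually distinct triple, the relations $f(u)\neq f(v)$, $f(u)\neq f(uv)$ and $f(v)\neq f(uv)$ hold on every edge automatically; so the only cross-edge condition left to enforce for a proper total colouring (adjacent vertices, adjacent edges, and incident vertex--edge pairs all distinct) is $e_i\neq e_{i-1}$ on consecutive edges, which the greedy step guarantees. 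This produces a properly totally coloured path of any prescribed length, hence a connected $G$ containing a path longer than any given integer, with $|E(G)|=n-1$ as large as desired and $\Delta(G)=2$.

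Third I would read off the degree bounds and, if a denser witness is wanted, densify. Because $\Delta=2$, the path itself already satisfies $\Delta(G)\leq m-5$ for $m\geq 7$ and $\Delta(G)\leq m-4$ for $m=6$, which is exactly the stated dichotomy. To exhibit a graph using the full colour budget, I would attach at each spine vertex $w$ coloured $x$ one leaf per still-unused legal edge-colour $c$ at $w$ (its at most two spine edges occupy at most two colours), giving that leaf colour $y=m-x-c$ and joining edge colour $c$. Leaves are pairwise non-adjacent, so the only new requirement is distinctness of the edge-colours at $w$, which holds by construction, and every new edge again meets the magic constant $m$; this keeps $\Delta(G)\leq m-4$ (with $\leq m-5$ in the parities where the extra exclusion applies) while letting $|E(G)|$ grow without bound, confirming that $m-4$ is the governing degree budget.

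The hard part will be the purely arithmetic lemma underpinning everything: that each admissible colour $x$ has at least two legal edge-colours (so the greedy never stalls) and at most $m-4$ of them (so the degree bound and the $m-5$/$m-4$ separation are exact). I would prove it by the explicit count above, treating separately the extremal colours $x=1$, $x=2$, $x=m-3$ and the parities of $m$; in particular $x=m-3$ forces the unique triple $\{1,2,m-3\}$, which still yields two configurations and so does not break the greedy argument. The only remaining care is the small endpoints: $m=5$ (empty palette, vacuous) and $m=6$ (single triple $\{1,2,3\}$), for which I would simply record the explicit period-$3$ colouring with vertices $1,2,3,1,2,3,\dots$ and edges $3,1,2,3,1,2,\dots$ to show the path construction goes through with $\Delta=2=m-4$.
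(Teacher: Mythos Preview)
Your argument is correct and self-contained, but the route is genuinely different from the paper's. The paper does not build a long path first; it exhibits the \emph{star} $K_{1,m-4}$ with centre coloured $1$, leaf $v_i$ coloured $m-2-i$ and edge $uv_i$ coloured $i+1$, which immediately realises $\Delta=m-4$; when $m$ is odd the single collision $i+1=m-2-i$ is fixed by rerouting one edge, dropping the degree to $m-5$. The paper then simply asserts that the long-path and unbounded-edge claims follow from ``polynomial construction methods'' and points to a figure, without giving the greedy extension you carry out explicitly. So your caterpillar approach actually \emph{proves} the part the paper waves at, and your counting of legal edge-colours at a vertex (at least two, at most $m-4$, with the parity drop to $m-5$) is the analytic content hidden behind the paper's star picture. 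Conversely, the paper's star is the cleanest witness that the degree budget $m-4$ can be saturated at a single vertex, which you only recover after densifying. Your observation that $m=5$ is vacuous (since $1+2+3=6>5$) is a point the paper does not address; its own star construction in fact degenerates there.
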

\begin{proof}
$G=K_{1,m-4}$, $V(K_{1,m-4})=\{u,v_i:i\in [1,m-4]\}$, $E(K_{1,m-4})=\{uv_i:i\in [1,m-4]\}$; $f(u)=1$, $f(uv_i)=i+1$ and $f(v_i)=m-2-i$ for $i\in [1,m-4]$. Clearly, $f(u)+f(uv_i)+f(v_i)=m$ and $\Delta(K_{1,m-4})\leq m-4$; see Fig.\ref{fig:integer-edge-magic} (a).

If $i+1=m-2-i$, or $2i=m-3$, we get another connected graph $K_{1,m-4}-uv_i+v_1v_i$, and add $f(v_1v_i)=i$, here $\Delta(K_{1,m-4}-uv_i+v_1v_i)\leq m-5$; see Fig.\ref{fig:integer-edge-magic} (b).

We omit the proofs about the path length and the edge number, since they can be obtained by the polynomial construction methods; see Fig.\ref{fig:integer-edge-magic} (c).
\end{proof}

\begin{figure}[h]
\centering
\includegraphics[width=16.4cm]{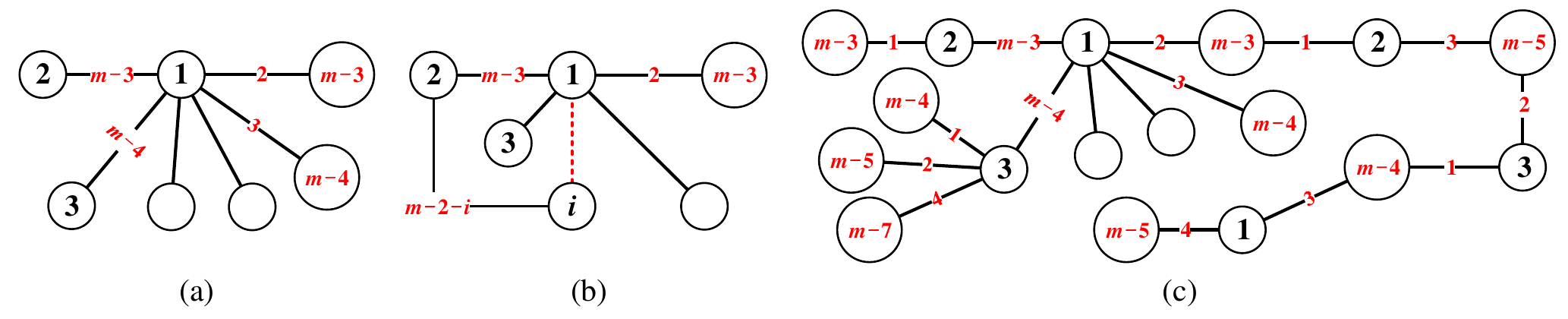}\\
\caption{\label{fig:integer-edge-magic}{\small Examples for the integer partition under the edge-magic constraint.}}
\end{figure}

\begin{thm}\label{thm:edge-difference-integer-partition}
The set $S_{edi}(m)$ for a fixed integer $m\geq 5$ contains each group of mutually distinct positive integers $a,b,c$ holding the $edi$-constraint $m=edi \langle a,b,c\rangle=c+|a-b|$. For given integers $d\geq 2$ and $r\geq 2$, there is a connected graph $H$ admitting a proper total coloring $g:V(H)\cup E(H)\rightarrow S_{edi}(m)$, such that $g(u)=a$, $g(v)=b$ and $g(uv)=c$, and

(i) each edge $uv\in E(H)$ holds the edge-difference constraint $g(uv)+|g(u)-g(v)|=m$ true;

(ii) the maximal degree $\Delta(H)<m$;

(iii) the edge number $|E(H)|\geq r$;

(iii) the graph $H$ contains a path $P$ of length no less than $d$.
\end{thm}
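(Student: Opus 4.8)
The plan is to produce $H$ directly as a path, because a path simultaneously delivers all three size requirements: it has maximum degree $2$, it can have as many edges as we like, and it is its own long path. Thus the only substantive task is to equip a long path with a proper total coloring satisfying the edge-difference constraint, and the guiding observation is that the requirement $g(uv)+|g(u)-g(v)|=m$ confines every edge color to the window $[1,m-1]$ and caps every consecutive vertex-color difference at $m-1$, while placing \emph{no} upper bound on the vertex colors themselves. This asymmetry is exactly what makes a path colorable: we may let the vertex colors drift upward as far as we wish, keeping only the successive differences small.

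Concretely, I would fix $L=\max\{r,d\}$, take $H=P=x_0x_1\cdots x_L$, and color the vertices by an increasing sequence anchored above the edge window, for instance $g(x_i)=m+\lfloor 3i/2\rfloor$, whose consecutive increments alternate between $1$ and $2$. The constraint then forces $g(x_{i-1}x_i)=m-|g(x_{i-1})-g(x_i)|$, which equals $m-1$ when $i$ is odd and $m-2$ when $i$ is even. Since $m\ge 5$ both edge colors are positive and distinct, so adjacent edges get different colors; since the vertex colors are strictly increasing and all exceed $m-1$, adjacent vertices differ and no vertex color can coincide with an incident edge color. Hence $g$ is a proper total coloring, and at each edge the triple $(g(x_{i-1}),g(x_i),g(x_{i-1}x_i))$ is a group of three mutually distinct positive integers obeying $g(x_{i-1}x_i)+|g(x_{i-1})-g(x_i)|=m$, so it lies in $S_{edi}(m)$; this is item (i).

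The remaining items then fall out immediately: a path has $\Delta(H)=2<m$, giving (ii); it has exactly $L\ge r$ edges, giving (iii); and it is a path of length $L\ge d$, giving (iv). Regarding the normalization $g(u)=a,\ g(v)=b,\ g(uv)=c$, I read these (as in Theorem~\ref{thm:partition-integer-edge-magic-cons}) as the colors at a generic edge, so nothing more is required; but if one insists that a prescribed admissible triple $(a,b,c)$ with $c+|a-b|=m$ appear on a designated edge, I would assign it to $x_0x_1$ and then climb back into the regime of colors $\ge m$ using the same alternating increments. The main obstacle is precisely the tension exposed above: the magic constraint cripples the edge palette (only values $\le m-1$ are available) and tightly couples neighbouring vertex colors, so properness of the \emph{total} coloring—especially incident vertex–edge distinctness and adjacent-edge distinctness—is not automatic. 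Anchoring all vertex colors strictly above $m-1$ while alternating the increments resolves both difficulties at once; the only residual care is in the prescribed-triple variant, where the finitely many low-valued boundary incidences near $a,b,c$ may need a local edge-swap of the type used in Theorem~\ref{thm:partition-integer-edge-magic-cons} (replacing an offending edge $uv_i$ by $v_1v_i$), after which the construction proceeds unchanged.
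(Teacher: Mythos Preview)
Your construction is correct: the path with vertex colors $g(x_i)=m+\lfloor 3i/2\rfloor$ and edge colors forced by the constraint gives a proper total coloring with the edge-difference property, and the path immediately satisfies the degree, edge-count, and path-length requirements. The checks you list (strictly increasing vertex colors above the edge window, alternating edge colors $m-1,m-2$) are all sound for $m\ge 5$.

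The paper does not actually write out a proof of this theorem; it only points to Fig.~\ref{fig:integer-edge-difference} for illustrative examples, and implicitly relies on the reader to adapt the star-based argument given for the preceding edge-magic case (Theorem~\ref{thm:partition-integer-edge-magic-cons}). There the construction begins with a star $K_{1,m-4}$, fixes one edge if a collision occurs, and appeals to unspecified ``polynomial construction methods'' for the path-length and edge-number extensions. Your approach is genuinely different and, in a sense, more efficient: by working with a path from the outset and pushing all vertex colors above the edge window, you avoid the local repair step entirely and get all three size conditions in one stroke. The trade-off is that the paper's star construction uses a bounded color set (depending only on $m$), whereas your vertex colors grow linearly with the length of the path; but since $S_{edi}(m)$ imposes no upper bound on vertex values, this costs nothing here.
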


\begin{figure}[h]
\centering
\includegraphics[width=16.4cm]{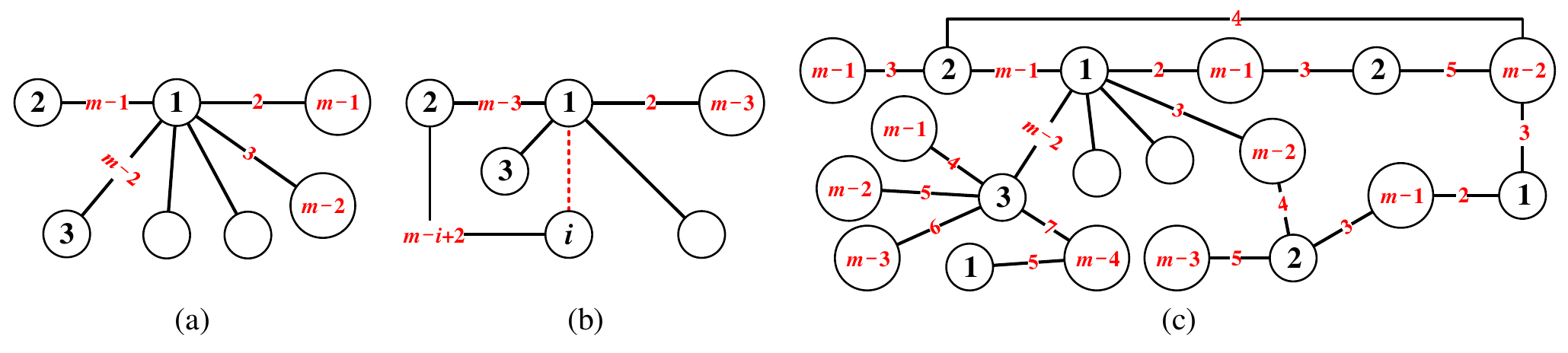}\\
\caption{\label{fig:integer-edge-difference}{\small Examples for illustrating Theorem \ref{thm:edge-difference-integer-partition} under the edge-difference constraint.}}
\end{figure}

\begin{thm}\label{thm:felicitous-difference-integer-partition}
The set $S_{fdi}(m)$ for a fixed integer $m\geq 5$ contains each group of mutually distinct positive integers $a,b,c$ holding the $fdi$-constraint $m=fdi \langle a,b,c\rangle=|a+b-c|$. Then there are infinite elements in $S_{fdi}(m)$, and for given integers $\Delta\geq 2$, $d\geq 2$ and $r\geq 2$, there is a connected graph $T$ admitting a proper total coloring $h:V(T)\cup E(T)\rightarrow S_{fdi}(m)$, such that $h(u)=a$, $h(v)=b$ and $h(uv)=c$, and

(i) the felicitous-difference constraint $|h(u)+h(v)-h(uv)|=m$ for each edge $uv\in E(T)$ holds true;

(ii) the maximal degree $\Delta(T)=\Delta$;

(iii) the edge number $|E(T)|\geq r$;

(iii) the graph $T$ contains a path $P$ of length no less than $d$.
\end{thm}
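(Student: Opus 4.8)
The plan is to prove the theorem by an explicit construction, the felicitous-difference analogue of the star/caterpillar argument used for Theorem \ref{thm:partition-integer-edge-magic-cons}. First I would settle the infinitude of $S_{fdi}(m)$: for any positive integer $t$, put $a=t$, $b=t+1$ and $c=a+b+m=2t+1+m$; then $|a+b-c|=|-m|=m$, and the three numbers $t,\,t+1,\,2t+1+m$ are mutually distinct because $m\ge 5$. Letting $t$ range over all positive integers yields infinitely many legal triples, so $S_{fdi}(m)$ is infinite. This same computation records the device I use throughout: given any two positive vertex colors $a$ and $b$, setting the incident edge color to $c=a+b+m$ forces $|a+b-c|=m$ and keeps $c$ positive and strictly larger than both $a$ and $b$, so the triple $(a,b,c)$ is automatically an $fdi$-constrained member triple of $S_{fdi}(m)$. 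In particular every positive integer occurs as the first coordinate of such a triple, whence $S_{fdi}(m)$ contains all of $\mathbb{Z}^{+}$.

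Next I would build the underlying graph $T$ as a caterpillar tuned to $(\Delta,d,r)$. Take a spine path $P=w_0w_1\cdots w_L$ with $L=\max\{d,r\}$, which already guarantees a path of length $\ge d$ and at least $r$ edges. If $\Delta=2$ I keep $T=P$, a path of maximum degree exactly $2$. If $\Delta\ge 3$ I attach $\Delta-2$ extra leaves $\ell_1,\dots,\ell_{\Delta-2}$ to the internal vertex $w_1$ (internal since $L\ge 2$), so that $\deg_T(w_1)=2+(\Delta-2)=\Delta$ while every other vertex has degree at most $2$; thus $\Delta(T)=\Delta$, $|E(T)|=L+(\Delta-2)\ge r$, and $T$ still contains $P$. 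This realizes (ii), the edge-count statement (iii), and the path-length statement.

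The coloring $h$ is then defined in two stages. On the spine I set $h(w_j)=j+1$ for $j\in[0,L]$ and $h(w_jw_{j+1})=h(w_j)+h(w_{j+1})+m=2j+3+m$; consecutive spine vertices get consecutive integers (adjacent vertices differ), consecutive spine edges differ by $2$, and each spine edge color exceeds $m+2$, hence exceeds every incident vertex color. For the leaves at $w_1$ (color $2$) I assign $h(\ell_s)$ to be $\Delta-2$ distinct values from $\{L+2,L+3,\dots\}\subseteq S_{fdi}(m)$ and set $h(w_1\ell_s)=2+h(\ell_s)+m$. Because $S_{fdi}(m)$ is infinite there is always room to choose the $h(\ell_s)$ distinct from one another and from the finitely many forbidden values, so that $h(\ell_s)\ne h(w_1)$, the leaf-edge colors $2+h(\ell_s)+m$ are mutually distinct and distinct from the two spine-edge colors $m+3$ and $m+5$ incident to $w_1$, and no incident vertex-edge clash occurs. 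Since every edge together with its ends forms a triple of shape $(a,b,a+b+m)$, every color lies in $S_{fdi}(m)$ and every edge satisfies $|h(u)+h(v)-h(uv)|=m$, giving (i).

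The main obstacle is the bookkeeping of the \emph{properness} conditions (adjacent vertices distinct, adjacent edges distinct, and every vertex distinct from its incident edges) while simultaneously pinning the maximum degree to exactly $\Delta$ and keeping every color inside $S_{fdi}(m)$; the degree and edge-count demands force many elements to be colored from a single constraint set, which is where collisions threaten. The escape is precisely that $S_{fdi}(m)$ is infinite, so only finitely many values are ever forbidden at any stage and fresh admissible leaf colors remain available. The one genuinely special case is $\Delta=2$, where $T$ is forced to be a path and the leaf stage is vacuous; there the spine coloring alone delivers all of (i)--(iii).
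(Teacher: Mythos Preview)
Your construction is correct and is precisely the kind of argument the paper intends. The paper gives no written proof for this theorem at all; it supplies only an illustrating figure (Fig.~\ref{fig:integer-felicitous-difference}), and the one explicit proof in this block, for the edge-magic analogue Theorem~\ref{thm:partition-integer-edge-magic-cons}, builds a star $K_{1,m-4}$ and then dismisses the path-length and edge-count claims as following from ``polynomial construction methods'' with no further detail. Your caterpillar with edge colors forced by $c=a+b+m$ is exactly the felicitous-difference adaptation of that sketch, and you actually carry out the properness bookkeeping (adjacent vertices, adjacent edges, incidence) and the degree-$\Delta$ tuning that the paper leaves to the reader and its figure. In short, same approach, but you have written the proof the paper only gestures at.
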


\begin{figure}[h]
\centering
\includegraphics[width=16.4cm]{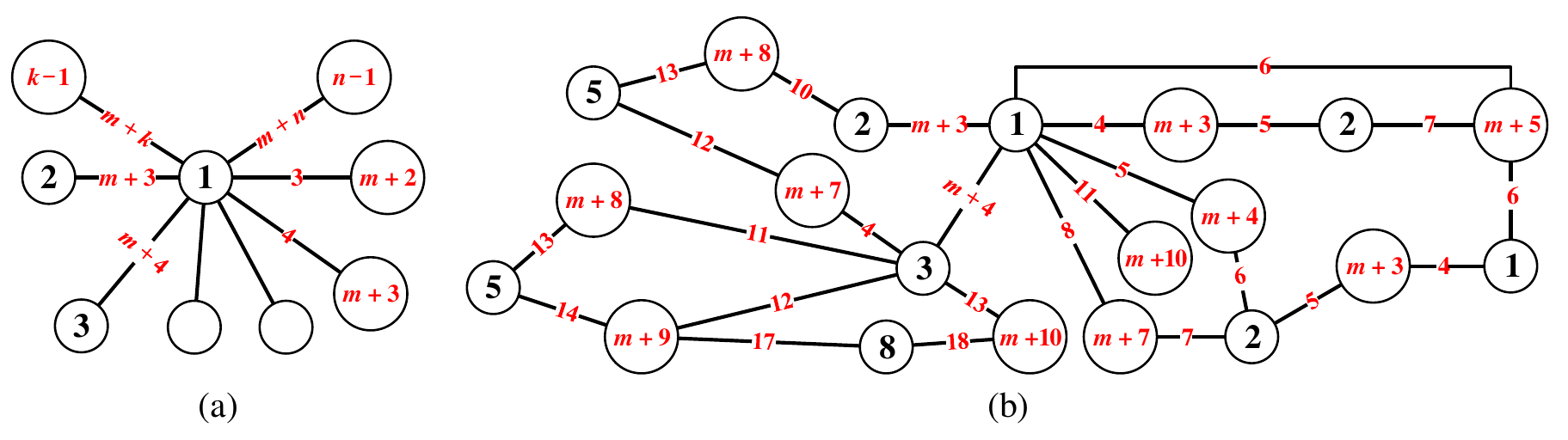}\\
\caption{\label{fig:integer-felicitous-difference}{\small A diagram for illustrating Theorem \ref{thm:felicitous-difference-integer-partition} under the felicitous-difference constraint.}}
\end{figure}

\begin{thm}\label{thm:graceful-difference-integer-partition}
The set $S_{gdi}(m)$ for a fixed integer $m\geq 0$ contains each group of mutually distinct positive integers $a,b,c$ holding the $gdi$-constraint $m=gdi \langle a,b,c\rangle=\big ||a-b|-c\big |$. Then there are infinite elements in $S_{gdi}(m)$, and for given integers $\Delta\geq 2$, $d\geq 2$ and $r\geq 2$, there is a connected graph $J$ admitting a proper total coloring $\theta:V(J)\cup E(J)\rightarrow S_{gdi}(m)$, such that $\theta(u)=a$, $\theta(v)=b$ and $\theta(uv)=c$, and

(i) the graceful-difference constraint $\big ||\theta(u)-\theta(v)|-\theta(uv)\big |=m$ for each edge $uv\in E(J)$ holds true;

(ii) the maximal degree $\Delta(J)=\Delta$;

(iii) the edge number $|E(J)|\geq r$;

(iii) the graph $J$ contains a path $P$ of length no less than $d$.
\end{thm}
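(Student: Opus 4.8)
The plan is to follow the construction pattern of Theorem \ref{thm:partition-integer-edge-magic-cons}, exploiting the fact that the graceful-difference constraint $\big||a-b|-c\big|=m$ leaves the third entry essentially free: for any two distinct positive integers $a,b$ the choice $c=|a-b|+m$ already forces $\big||a-b|-c\big|=m$. First I would settle the infinitude of $S_{gdi}(m)$. For each integer $t\geq 1$ with $t\neq m$, consider the triple $(a,b,c)=(3t,\,t,\,2t+m)$. Then $\big||3t-t|-(2t+m)\big|=\big|2t-(2t+m)\big|=m$, and the three entries are mutually distinct, since $3t\neq t$, while $2t+m\neq 3t$ follows from $t\neq m$ and $2t+m>t$ always holds. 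As $t$ ranges over the infinitely many admissible values, the first coordinates $3t$ are pairwise distinct, so $S_{gdi}(m)$ is infinite.

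Next I would build the underlying graph $J$ as a caterpillar so as to realize (ii)--(iv) simultaneously. If $\Delta=2$, take $J$ to be a single path of length $\max\{d,r\}$; this has maximum degree $2$, at least $r$ edges, and contains a path of length $\geq d$. If $\Delta\geq 3$, take a spine path $P=v_0v_1\cdots v_s$ with $s\geq d$, attach $\Delta-2$ pendant leaves to the internal vertex $v_1$ so that $\deg_J(v_1)=\Delta$, and, if necessary, lengthen the spine or attach further leaves to distinct spine vertices (each receiving at most $\Delta-2$ of them) until $|E(J)|\geq r$; every vertex other than $v_1$ then has degree at most $\Delta$, so $\Delta(J)=\Delta$, while the spine witnesses a path of length $\geq d$.

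It remains to produce a proper total coloring $\theta$ valued in $S_{gdi}(m)$ realizing the constraint (i). Since $J$ is a tree, I would root it at $v_0$ and color the vertices in breadth-first order, each edge color being induced by $\theta(uv)=|\theta(u)-\theta(v)|+m$ so that the $gdi$-constraint holds automatically. When a newly reached vertex $v$ has its (unique) already-colored parent $u$, the value $\theta(v)$ must merely avoid the single value $\theta(u)$ (adjacent-vertex properness) together with the finitely many values of $\theta(v)$ that would make the induced color $|\theta(u)-\theta(v)|+m$ coincide with $\theta(u)$, with $\theta(v)$ itself, or with any of the finitely many edge colors already present at $u$ (incident and adjacent-edge properness). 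Because every sufficiently large positive integer lies in $S_{gdi}(m)$ and the induced color $|\theta(u)-\theta(v)|+m$ takes infinitely many distinct values as $\theta(v)\to\infty$, only finitely many choices of $\theta(v)$ are forbidden, so an admissible value always exists; this greedy process terminates with the desired proper total coloring.

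The routine parts are the infinitude and the bookkeeping ensuring $\Delta(J)=\Delta$, $|E(J)|\geq r$ and the prescribed path length; the only point requiring care --- and the main obstacle --- is verifying that the greedy step never gets stuck, i.e. that the finitely many properness exclusions can never exhaust the infinite supply of legal colors. I expect this to reduce to the two observations already noted, namely that every large integer belongs to $S_{gdi}(m)$ and that the induced edge color is an unbounded function of $\theta(v)$, together with the fact that in a tree each newly colored vertex has exactly one previously colored neighbor, so that no global consistency condition can obstruct the local choice.
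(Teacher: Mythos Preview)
Your proposal is correct. The infinitude argument via the family $(3t,t,2t+m)$ is clean, the caterpillar realises the prescribed degree, length and size simultaneously, and the greedy BFS step is sound because at each new vertex only a bounded list of values is forbidden (the parent's color, the values making the induced edge color collide with an endpoint or with one of the at most $\Delta-1$ edge colors already incident to the parent), while the supply of vertex colors is infinite and the map $\theta(v)\mapsto|\theta(u)-\theta(v)|+m$ is eventually injective and unbounded. One small phrasing issue: $S_{gdi}(m)$ in the statement is a set of triples, not of single integers, so it is slightly imprecise to say ``every sufficiently large positive integer lies in $S_{gdi}(m)$''; what you mean, and what you actually use, is that for any already-chosen $\theta(u)$ there are infinitely many $\theta(v)$ for which $(\theta(u),\theta(v),|\theta(u)-\theta(v)|+m)\in S_{gdi}(m)$.

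As to comparison with the paper: the paper gives no written proof for this theorem and refers only to Fig.~\ref{fig:integer-graceful-difference} for examples. The pattern it follows, visible in the proof of the companion edge-magic Theorem~\ref{thm:partition-integer-edge-magic-cons}, is to write down an explicit star $K_{1,n}$ with closed-form colors, patch one edge if a collision occurs, and then assert that the path-length and edge-number requirements ``can be obtained by the polynomial construction methods''. Your caterpillar-plus-greedy argument is a genuinely different route: instead of explicit formulas on a star, you fix the shape first to meet conditions (ii)--(iv) and then color greedily. What your approach buys is a uniform, self-contained argument that covers all three structural requirements at once without an unspecified extension step; what the paper's approach buys is concrete formulas that can be read off directly for small cases.
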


\begin{figure}[h]
\centering
\includegraphics[width=16.4cm]{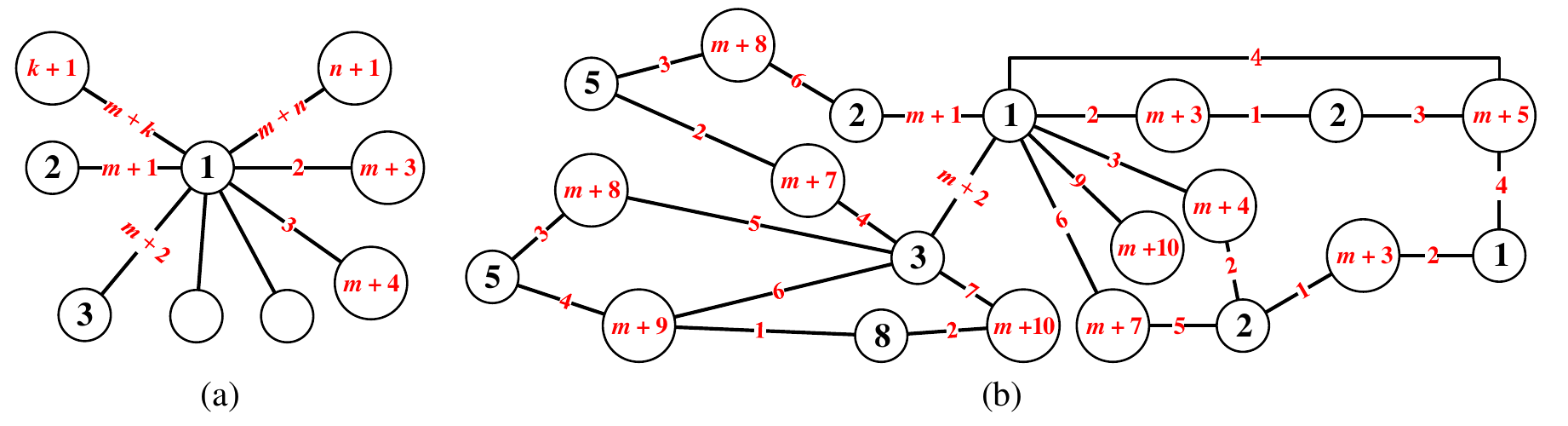}\\
\caption{\label{fig:integer-graceful-difference}{\small Examples for illustrating Theorem \ref{thm:graceful-difference-integer-partition} under the graceful-difference constraint.}}
\end{figure}

\begin{problem}\label{problem:integer-partitions-other-4}
Let $\varepsilon \in \{ema, edi, fdi, gdi\}$, and let $G^{~\varepsilon}_{raph}(m)$ be the set of graphs, such that each graph $G\in G^{~\varepsilon}_{raph}(m)$ admits a proper total coloring
\begin{equation}\label{eqproblem:integer-partitions-other-4}
f:V(G)\cup E(G)\rightarrow S_{\varepsilon}(m)
\end{equation} holding the $\varepsilon$-constraint $m=\varepsilon \langle a,b,c\rangle$ for each edge $uv\in E(G)$ with $f(u)=a$, $f(v)=b$ and $f(uv)=c$ about four sets $S_{ema}(m)$, $S_{edi}(m)$, $S_{fdi}(m)$ and $S_{gdi}(m)$ defined in Theorem \ref{thm:partition-integer-edge-magic-cons}, Theorem \ref{thm:edge-difference-integer-partition}, Theorem \ref{thm:felicitous-difference-integer-partition} and Theorem \ref{thm:graceful-difference-integer-partition}, respectively.
\begin{asparaenum}[\textbf{\textrm{Four}}-1. ]
\item \textbf{Characterize} each set $S_{\varepsilon}(m)$ for each $m\geq 5$ and $\varepsilon \in \{ema, edi, fdi, gdi\}$.
\item \textbf{Characterize} some properties of each graph set $G^{~\varepsilon}_{raph}(m)$ for $\varepsilon \in \{ema, edi, fdi, gdi\}$.
\item For each group of the $\varepsilon$-constraints $m=\varepsilon \langle a_i,b_i,c_i\rangle $ with $i\in [1,A]$ in $S_{\varepsilon}(m)$, \textbf{does} there exist a connected $(p,A)$-graph $G\in G^{~\varepsilon}_{raph}(m)$ admitting a proper total coloring $\varphi:V(G)\cup E(G)\rightarrow S_{\varepsilon}(m)$, such that each edge $x_iy_i\in E(G)$ with $i\in [1,A]$ just holding $\varphi(x_i)=a_i$, $\varphi(y_i)=b_i$ and $\varphi(x_iy_i)=c_i$ true? Moreover, considering the following particular cases:

\qquad (i) $a_i,b_i$ and $c_i$ are prime numbers;

\qquad (ii) $a_i,b_i$ and $c_i$ are odd numbers;

\qquad (iii) $c_i=\textrm{gcd}(a_i,b_i)$.
\item \textbf{Find} a connected graph $H\in G^{~\varepsilon}_{raph}(m)$ holding
$$\frac{|E(H)|}{|V(H)|}\geq \frac{|E(T)|}{|V(T)|}
$$ for any connected graph $T\in G^{~\varepsilon}_{raph}(m)$ with $\varepsilon \in \{ema, edi, fdi, gdi\}$.
\item For each graph $G\in G^{~\varepsilon}_{raph}(m)$, \textbf{find} two proper total colorings $f^{v}_{\min}$ and $f^{v}_{\max}$ (resp. $f^{e}_{\min}$ and $f^{e}_{\max}$) defined in Eq.(\ref{eqproblem:integer-partitions-other-4}), such that cardinalities
\begin{equation}\label{eqa:555555}
|f^{v}_{\min}(V(G))|\leq |f(V(G))|\leq |f^{v}_{\max}(V(G))|~(\textrm{resp.}|f^{e}_{\min}(E(G))|\leq |f(E(G))|\leq |f^{e}_{\max}(E(G))|)
\end{equation} over all proper total colorings $f$ of $G$ defined in Eq.(\ref{eqproblem:integer-partitions-other-4}).
\item \textbf{Find} a proper total coloring $f^{ve}$ for each graph $G\in G^{~\varepsilon}_{raph}(m)$ such that
$$\frac{|f(E(G))|}{|f(V(G))|}\leq \frac{|f^{ve}(E(G))|}{|f^{ve}(V(G))|}
$$ over all proper total colorings $f$ of $G$ defined in Eq.(\ref{eqproblem:integer-partitions-other-4}).
\item \textbf{Find} some connections between $S_{\varepsilon}(m)$ and $S_{\varepsilon}(n)$ (resp. $G^{~\varepsilon}_{raph}(m)$ and $G^{~\varepsilon}_{raph}(n)$) for $m\neq n$ and $\varepsilon \in \{ema, edi, fdi, gdi\}$.
\item \textbf{Find} non-empty subsets $X^{sub}_{graph}\subseteq G^{~\varepsilon}_{raph}(m)\cap G^{~\gamma}_{raph}(m)$ for distinct $\varepsilon,\gamma \in \{ema$, $edi$, $fdi$, $gdi\}$, because it is guaranteed by Theorem \ref{thm:10-k-d-total-coloringss} (Ref. \cite{Bing-Yao-arXiv:2207-03381}).
\end{asparaenum}
\end{problem}

\begin{thm}\label{thm:666666}
$^*$ By the graph sets $G^{~\varepsilon}_{raph}(m)$ with $\varepsilon \in \{ema, edi, fdi, gdi\}$ defined in Problem \ref{problem:integer-partitions-other-4}, we have:

(i) If a connected non-tree $(p,q)$-graph $G^*\in G^{~\varepsilon}_{raph}(m)$ admits a proper total coloring $\pi:V(G^*)\cup E(G^*)\rightarrow S_{\varepsilon}(m)$, such that each edge $uv\in E(G^*)$ holds the $\varepsilon$-constraint $m=\varepsilon \langle a,b,c\rangle $ true, then there is another connected $(p\,',q)$-graph $H\in G^{~\varepsilon}_{raph}(m)$ admitting a proper total coloring $\varphi:V(H)\cup E(H)\rightarrow \pi(V(G^*)\cup E(G^*))$ holding the $\varepsilon$-constraint $m=\varepsilon \langle a,b,c\rangle $ true.

(ii) Each graph $H\in G^{~\varepsilon}_{raph}(m)$ holding the $\varepsilon$-constraint $m=\varepsilon \langle a,b,c\rangle $ corresponds to another graph $H^*\in G^{~\varepsilon}_{raph}(km)$ with $k\geq 2$, since the $\varepsilon$-constraint $m=\varepsilon \langle a,b,c\rangle $ corresponds to the $\varepsilon$-constraint $km=\varepsilon \langle ka,kb,kc\rangle $.

(iii) Each graph $H\in G^{~\varepsilon}_{raph}(m)$ holding the $\varepsilon$-constraint $m=\varepsilon \langle a,b,c\rangle $ corresponds to another graph $H^*\in G^{~\varepsilon}_{raph}(m^*)$ admitting a proper total coloring $g:V(H^*)\cup E(H^*)\rightarrow S_{\varepsilon}(m^*)$ holding the $\varepsilon$-constraint
$$m^*=m+k=\varepsilon \langle a+k,b+k,c+k\rangle\textrm{ or }m^*=|m-k|=\varepsilon \langle a+k,b+k,c+k\rangle $$
\end{thm}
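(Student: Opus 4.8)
The plan is to treat the three assertions separately, since each rests on a different elementary operation: (i) on the vertex-splitting operation, (ii) on scalar multiplication of a coloring, and (iii) on a uniform shift of a coloring. Throughout I would keep the edge-by-edge reading of the $\varepsilon$-constraint as the invariant to track.

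For (i), I would start from the hypothesis that the non-tree connected $(p,q)$-graph $G^*$ carries a proper total coloring $\pi$ into $S_{\varepsilon}(m)$ realizing $m=\varepsilon\langle a,b,c\rangle$ on every edge. By Theorem \ref{thm:connected-graph-vs-tree-set} (equivalently Theorem \ref{thm:graph-vertex-split-m-tree-set}), $G^*$ admits a tree $H\in V_{\textrm{split}}(G^*)$ of $q+1$ vertices with $G^*=\wedge^{-1}_m(H)$. The vertex-splitting operation leaves $E(G^*)$ and the incidence pattern of each edge intact, so I would transport $\pi$ to $H$ by assigning each split-vertex the $\pi$-color of the vertex it came from and each edge its $\pi$-color. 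This produces $\varphi:V(H)\cup E(H)\to \pi(V(G^*)\cup E(G^*))$ for which every edge of $H$ still reads off exactly the triple $(a,b,c)$ it read in $G^*$; hence the $\varepsilon$-constraint is inherited verbatim, and with colors drawn from $\pi(V(G^*)\cup E(G^*))\subseteq S_{\varepsilon}(m)$ we get $H\in G^{~\varepsilon}_{raph}(m)$. Since $p'=q+1>p$, $H$ is genuinely a different graph.

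The crux of (i) is checking that $\varphi$ remains a \emph{proper} total coloring. When a vertex $x$ of color $a$ is split into $x\,'$ and $x\,''$, these two copies are non-adjacent in $H$ (the split is performed on a cycle vertex so that connectivity survives while the copies are separated), so their common color $a$ violates no edge; the edges reassigned to $x\,'$ were pairwise adjacent at $x$ in $G^*$ and carried distinct $\pi$-colors, and they remain pairwise adjacent at $x\,'$ with the same distinct colors, so the adjacent-edge and incident vertex-edge conditions survive. For (ii) I would keep $H$ fixed and set $g=k\pi$ with $k\geq 2$: scalar multiplication preserves every distinctness requirement, so $g$ is again proper, each triple becomes $(ka,kb,kc)$ of mutually distinct positive integers, and each constraint value is homogeneous of degree one, namely $ka+kb+kc=k(a+b+c)$, $kc+|ka-kb|=k(c+|a-b|)$, $|ka+kb-kc|=k|a+b-c|$, and $\big||ka-kb|-kc\big|=k\big||a-b|-c\big|$; thus $g$ lands in $S_{\varepsilon}(km)$ and $H$ recolored by $g$ lies in $G^{~\varepsilon}_{raph}(km)$.

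For (iii) I would again fix the graph and shift the coloring, $g(w)=\pi(w)+k$ for all $w\in V(H)\cup E(H)$; a uniform shift preserves properness, and the new constraint value follows from a short case analysis on $\varepsilon$. For $edi$ the shift cancels inside the difference, giving $(c+k)+|a-b|=m+k$; for $fdi$ and $gdi$ the shifted value is $|a+b-c+k|$ respectively $\big||a-b|-c-k\big|$, and since $a+b-c=\pm m$ (resp. $|a-b|-c=\pm m$) this equals $m+k$ or $|m-k|$ according to the sign, which is precisely the stated dichotomy. The expected main obstacle of the whole theorem is concentrated here: the edge-magic case $ema$ does \emph{not} shift by $k$ but by $3k$, since $g(u)+g(v)+g(uv)=m+3k$, so I would record this case explicitly and present (iii) as a genuinely difference-type phenomenon, with $ema$ absorbed by the evident variant $m^*=m+3k$. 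The remaining bookkeeping—verifying that the shifted triples stay distinct, positive, and fall inside $S_{\varepsilon}(m^*)$—is routine and I would not grind through it.
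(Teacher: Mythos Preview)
The paper gives no separate proof for this theorem; parts (ii) and (iii) carry their own justification in the ``since'' clauses, and part (i) is left to the reader. Your proposal is therefore more explicit than anything the paper supplies, and it follows exactly the lines the paper's surrounding material suggests: vertex-splitting (Theorem~\ref{thm:connected-graph-vs-tree-set} and Theorem~\ref{thm:graph-vertex-split-m-tree-set}) for (i), and the homogeneity and shift computations that the statement itself already sketches for (ii) and (iii).

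Your treatment is sound, including the properness check in (i), and you are right to flag the $ema$ case in (iii): a uniform shift by $k$ sends $a+b+c=m$ to $m+3k$, not to $m+k$ or $|m-k|$, so the theorem's stated dichotomy does not cover edge-magic as written. The paper does not address this, so your proposed variant $m^*=m+3k$ for $ema$ is the honest fix rather than a deviation from the paper's argument.
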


\begin{rem}\label{rem:333333}
Theorem \ref{thm:10-k-d-total-coloringss} shows that a connected graph $G$ may admit two or more $\varepsilon$-constraint proper total colorings with $\varepsilon$-constraint $\in \{$$ema$-constraint $m=ema \langle a,b,c\rangle=a+b+c$, $edi$-constraint $m=edi \langle a,b,c\rangle=c+|a-b|$, $fdi$-constraint $m=fdi \langle a,b,c\rangle=|a+b-c|$, $gdi$-constraint $m=gdi \langle a,b,c\rangle=\big ||a-b|-c\big |$$\}$.\paralled
\end{rem}

\begin{thm}\label{thm:2-magic-coloring-in-one}
$^*$ Suppose that a bipartite connected graph $G$ admits a proper total coloring $f$ holding the $ema$-constraint $m_1=ema \langle a,b,c\rangle=a+b+c$ and another proper total coloring $g$ holding the $edi$-constraint $m_2=edi \langle a,b,c\rangle=c+|a-b|$, such that $f(X)<f(Y)$ and $g(X)<g(Y)$ with $V(G)=X\cup Y$ and $X\cap Y=\emptyset$. We make a new proper total coloring $F=f+g$ for the connected graph $G$, then
\begin{equation}\label{eqa:2-magic-coloring-in-one11}
F(x)+F(y)+F(xy)+F(xy)+|F(x)-F(y)|\leq 2(m_1+m_2)
\end{equation} for each edge $xy\in E(G)$.
\end{thm}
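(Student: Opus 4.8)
The plan is to fix an arbitrary edge $xy\in E(G)$, say with $x\in X$ and $y\in Y$, and to exploit the set-ordered hypotheses $f(X)<f(Y)$ and $g(X)<g(Y)$ to eliminate the absolute value. Since $f(x)<f(y)$ and $g(x)<g(y)$, the combined coloring $F=f+g$ also satisfies $F(x)<F(y)$, so $|F(x)-F(y)|=F(y)-F(x)$. The first move is then a purely algebraic simplification: writing $S_1=F(x)+F(y)+F(xy)$ (the edge-magic expression of $F$) and $S_2=F(xy)+|F(x)-F(y)|=F(xy)+F(y)-F(x)$ (the edge-difference expression of $F$), the two occurrences of $F(x)$ cancel and the left-hand side collapses to $S_1+S_2=2\big(F(y)+F(xy)\big)$. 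This is the key structural observation, and it reduces the whole theorem to bounding the single quantity $F(y)+F(xy)$.

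Next I would decompose $F(y)+F(xy)=\big(f(y)+f(xy)\big)+\big(g(y)+g(xy)\big)$ and evaluate each summand using the constraint carried by the corresponding coloring. From the $ema$-constraint $f(x)+f(y)+f(xy)=m_1$ one gets $f(y)+f(xy)=m_1-f(x)$, and from the $edi$-constraint $g(xy)+g(y)-g(x)=m_2$ (the absolute value already resolved by set-orderedness) one gets $g(y)+g(xy)=m_2+g(x)$. Substituting yields the exact identity
\[
F(x)+F(y)+F(xy)+F(xy)+|F(x)-F(y)|=2\big(m_1+m_2+g(x)-f(x)\big),
\]
so the claimed inequality is equivalent to the single pointwise inequality $g(x)\le f(x)$ at the $X$-endpoint of every edge.

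Establishing $g(x)\le f(x)$ is where the real content lies, and I expect it to be the main obstacle, since in full generality the two colorings need not be related on their vertex values. The natural route is to invoke the equivalence machinery of Theorem \ref{thm:equivalent-k-d-total-colorings} together with the set-ordered constructions behind Theorem \ref{thm:connections-several-labelings}: both an edge-magic and an edge-difference total coloring can be produced from one common set-ordered base coloring in such a way that only the edge colors are altered while the vertex colors are preserved. For such a canonical compatible pair one has $f(x)=g(x)$ on $X$, the displayed identity becomes the equality $2(m_1+m_2)$, and the bound holds sharply. In the general case I would normalize the two colorings, using the non-negativity $f(x)\ge 0$ and the shift/scale freedom of the $(k,d)$-parameterization, so that the edge-difference coloring assigns on $X$ no larger a color than the edge-magic one, giving $g(x)-f(x)\le 0$ and hence the stated upper bound. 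The delicate point to get right is that this normalization must be carried out simultaneously in $k$ and $d$ and must leave the two magic constants $m_1,m_2$ on the right-hand side unchanged.
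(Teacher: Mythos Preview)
Your decomposition into $S_1=F(x)+F(y)+F(xy)$ and $S_2=F(xy)+|F(x)-F(y)|$ is exactly the route the paper takes: it evaluates $S_1$ via the constraints (its Eq.~(\ref{eqa:2-magic-coloring-in-one22})) and $S_2$ via the constraints (its Eq.~(\ref{eqa:2-magic-coloring-in-one33})) and adds. Where you and the paper diverge is in the final line of the $S_2$ computation. The paper writes $f(xy)+|f(x)-f(y)|=m_1-2g(x)$, so that the $+2g(x)$ from $S_1$ and the $-2g(x)$ from $S_2$ cancel and the bound holds with equality. Your derivation --- using Case~A-1.1 of the ``Connections between magic-constraint colorings'' subsection applied to $f$ --- correctly gives $f(xy)+|f(x)-f(y)|=m_1-2f(x)$, and hence the exact identity $S_1+S_2=2\big(m_1+m_2+g(x)-f(x)\big)$ that you state. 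The appearance of $g(x)$ rather than $f(x)$ in the paper's last substitution is a slip; your formula is the correct one.

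So you have correctly located the real issue: the inequality as stated is equivalent to $g(x)\le f(x)$ for every $x\in X$, and nothing in the hypotheses forces this. Your proposed fix, however, does not close the gap. Appealing to Theorem~\ref{thm:equivalent-k-d-total-colorings} or the $(k,d)$ shift/scale freedom only shows that \emph{some} compatible pair $(f,g)$ with $f|_X=g|_X$ exists; it does not control an \emph{arbitrary} pair satisfying the hypotheses of the theorem. And any ``normalization'' that shrinks $g$ on $X$ will in general change the edge-difference constant $m_2$ on the right-hand side, so the bound you are trying to prove moves at the same time. In short, your computation is sharper than the paper's, the paper's cancellation is an arithmetical error, and the inequality genuinely requires the extra hypothesis $g(x)\le f(x)$ on $X$ (or, as the paper implicitly uses, $f|_X=g|_X$) rather than a normalization trick.
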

\begin{proof} By Case A shown in the subsection ``Connections between magic-constraint colorings of graphs'', we have
\begin{equation}\label{eqa:2-magic-coloring-in-one22}
{
\begin{split}
F(x)+F(y)+F(xy)&=f(x)+f(y)+f(xy)+g(x)+g(y)+g(xy)\\
&=m_1+g(x)+g(y)+g(xy)\\
&=m_1+m_2+2g(x)
\end{split}}
\end{equation}

By Case B shown in the subsection ``Connections between magic-constraint colorings of graphs'', we have
\begin{equation}\label{eqa:2-magic-coloring-in-one33}
{
\begin{split}
F(xy)+|F(x)-F(y)|&=(f(xy)+|f(x)-f(y)|)+(g(xy)+|g(x)-g(y)|)\\
&=f(xy)+|f(x)-f(y)|+m_2\\
&=m_1-2g(x)+m_2
\end{split}}
\end{equation} So, Eq.(\ref{eqa:2-magic-coloring-in-one11}) follows Eq.(\ref{eqa:2-magic-coloring-in-one22}) and Eq.(\ref{eqa:2-magic-coloring-in-one33}).
\end{proof}

\begin{rem}\label{rem:333333}
About Theorem \ref{thm:2-magic-coloring-in-one} we have other cases as follows:

\textbf{Case (i)} The proper total coloring $f$ holds the $ema$-constraint $m_1=ema \langle a,b,c\rangle=a+b+c$ and, the proper total coloring $g$ holds the $fdi$-constraint $m_3=fdi \langle a,b,c\rangle=|a+b-c|$, and $f(X)<f(Y)$ and $g(X)<g(Y)$ with $V(G)=X\cup Y$ and $X\cap Y=\emptyset$.

By Case A shown in the subsection ``Connections between magic-constraint colorings of graphs'', we have
\begin{equation}\label{eqa:2-magic-coloring-in-one44}
{
\begin{split}
F(x)+F(y)+F(xy)&=f(x)+f(y)+f(xy)+g(x)+g(y)+g(xy)\\
&=m_1+g(x)+g(y)+g(xy)\\
&=m_1+m_3-2g(xy)
\end{split}}
\end{equation}

By Case C shown in the subsection ``Connections between magic-constraint colorings of graphs'', we have
\begin{equation}\label{eqa:2-magic-coloring-in-one55}
{
\begin{split}
|F(x)+F(y)-F(xy)|&=|(f(x)+f(y)-f(xy))+(g(x)+g(y)-g(xy))|\\
&\leq |f(x)+f(y)-f(xy)|+m_3\\
&=m_3+m_1+2g(xy)
\end{split}}
\end{equation} We get
$$F(x)+F(y)+F(xy)+|F(x)+F(y)-F(xy)|\leq 2(m_1+m_3)
$$
 for each edge $xy\in E(G)$.

\textbf{Case (ii)} The proper total coloring $f$ holds the $ema$-constraint $m_1=ema \langle a,b,c\rangle=a+b+c$ and, the proper total coloring $g$ holds the $gdi$-constraint $m_4=gdi \langle a,b,c\rangle=\big ||a-b|-c\big |$, and $f(X)<f(Y)$ and $g(X)<g(Y)$ with $V(G)=X\cup Y$ and $X\cap Y=\emptyset$.

By Case A shown in the subsection ``Connections between magic-constraint colorings of graphs'', we have
\begin{equation}\label{eqa:2-magic-coloring-in-one66}
{
\begin{split}
F(x)+F(y)+F(xy)&=f(x)+f(y)+f(xy)+g(x)+g(y)+g(xy)\\
&=m_1+g(x)+g(y)+g(xy)\\
&=m_1+m_4-2g(y)
\end{split}}
\end{equation}

By Case D shown in the subsection ``Connections between magic-constraint colorings of graphs'', we have
\begin{equation}\label{eqa:2-magic-coloring-in-one77}
{
\begin{split}
\big ||F(x)-F(y)|-F(xy)\big |&\leq \big ||f(x)-f(y)|-f(xy)\big |+\big ||g(x)-g(y)|-g(xy)\big |\\
&=\big ||f(x)-f(y)|-f(xy)\big |+m_4\\
&=m_4+m_1-2g(y)
\end{split}}
\end{equation} We obtain
$$F(x)+F(y)+F(xy)+\big ||F(x)-F(y)|-F(xy)\big |\leq 2(m_1+m_4)
$$ for each edge $xy\in E(G)$.

\textbf{Case (iii)} The proper total coloring $f$ holds the $edi$-constraint $m_2=edi \langle a,b,c\rangle=c+|a-b|$ and, the proper total coloring $g$ holds the $fdi$-constraint $m_3=fdi \langle a,b,c\rangle=|a+b-c|$, and $f(X)<f(Y)$ and $g(X)<g(Y)$ with $V(G)=X\cup Y$ and $X\cap Y=\emptyset$. Since
$${
\begin{split}
F(xy)+|F(y)-F(x)|&=f(xy)+g(xy)+|[f(x)-f(y)]+[g(x)-g(y)]|\\
&\leq f(xy)+g(xy)+|[f(x)-f(y)]|+|[g(x)-g(y)]|\\
&=m_2+g(xy)+|g(x)-g(y)|\\
&=m_2+m_3+2g(y)
\end{split}}
$$
and
$${
\begin{split}
|F(y)+F(x)-F(xy)|&=|[f(x)+f(y)-f(xy)]+[g(x)+g(y)-g(xy)]|\\
&\leq |f(x)+f(y)-f(xy)|+|g(x)+g(y)-g(xy)|\\
&=|f(x)+f(y)-f(xy)|+m_3\\
&=m_2-2f(xy)+m_3
\end{split}}
$$
so we obtain
$${
\begin{split}
F(xy)+|F(y)-F(x)|+|F(y)+F(x)-F(xy)|\leq 2(m_2+m_3)
\end{split}}
$$ for each edge $xy\in E(G)$.

\textbf{Case (iv)} The proper total coloring $f$ holds the $edi$-constraint $m_2=edi \langle a,b,c\rangle=c+|a-b|$ and, the proper total coloring $g$ holds the $gdi$-constraint $m_4=gdi \langle a,b,c\rangle=\big ||a-b|-c\big |$, and $f(X)<f(Y)$ and $g(X)<g(Y)$ with $V(G)=X\cup Y$ and $X\cap Y=\emptyset$. Because of
$${
\begin{split}
F(xy)+|F(y)-F(x)|&=f(xy)+g(xy)+|[f(x)-f(y)]+[g(x)-g(y)]|\\
&\leq m_2+g(xy)+|g(x)-g(y)|\\
&=m_2+m_4+2g(xy)
\end{split}}
$$
and
$${
\begin{split}
\big ||F(x)-F(y)|-F(xy)\big |&\leq \big ||f(x)-f(y)|-f(xy)\big |+\big ||g(x)-g(y)|-g(xy)\big |\\
&=\big ||f(x)-f(y)|-f(xy)\big |+m_4\\
&=m_2-2f(xy)+m_4
\end{split}}
$$
so we obtain
$$F(xy)+|F(y)-F(x)|+\big ||F(x)-F(y)|-F(xy)\big |\leq 2(m_2+m_4)
$$ for each edge $xy\in E(G)$.

\textbf{Case (v)} The proper total coloring $f$ holds the $fdi$-constraint $m_3=fdi \langle a,b,c\rangle=|a+b-c|$ and, the proper total coloring $g$ holds the $gdi$-constraint $m_4=gdi \langle a,b,c\rangle=\big ||a-b|-c\big |$, and $f(X)<f(Y)$ and $g(X)<g(Y)$ with $V(G)=X\cup Y$ and $X\cap Y=\emptyset$. Notice
$${
\begin{split}
|F(y)+F(x)-F(xy)|&\leq |f(y)+f(x)-f(xy)|+|g(y)+g(x)-g(xy)|\\
&=m_3+|g(y)+g(x)-g(xy)|\\
&=m_3+m_4+2g(x)
\end{split}}
$$
and
$${
\begin{split}
\big ||F(x)-F(y)|-F(xy)\big |&\leq \big ||f(x)-f(y)|-f(xy)\big |+\big ||g(x)-g(y)|-g(xy)\big |\\
&=\big ||f(x)-f(y)|-f(xy)\big |+m_4\\
&=m_3-2f(x)+m_4
\end{split}}
$$
then we get
$$|F(y)+F(x)-F(xy)|+\big ||F(x)-F(y)|-F(xy)\big |\leq 2(m_3+m_4)
$$ for each edge $xy\in E(G)$. \paralled
\end{rem}

\subsection{Integer-partitioned and integer-decomposed strings}

\begin{defn}\label{defn:integer-partitioned-strings}
$^*$ Suppose that a positive integer $m$ can be partitioned into a sum
\begin{equation}\label{eqa:intege-m-partitioned-00}
m=m_{i,1}+m_{i,2}+\cdots +m_{i,a_i},~m_{i,j}>0,~a_i\geq 2
\end{equation} for $i\in [1,P_{art}(m)]$, where $P_{art}(m)$ is the number of all possible partitions of the positive integer $m$. Then we have:

(i) Directly, there are \emph{$m$-partitioned number-based strings}
\begin{equation}\label{eqa:intege-m-partitioned-11}
s^*_i=m_{i,1}m_{i,2}\dots m_{i,a_i},~i\in [1,P_{art}(m)]
\end{equation}

(ii) If a number $m_{i,j}$ appeared in Eq.(\ref{eqa:intege-m-partitioned-00}) can be partitioned into a sum
\begin{equation}\label{eqa:m-partitioned-twin-numbers11}
m_{i,j}=b_{i,j,1}+b_{i,j,2}+\cdots +b_{i,j,s(i,j)},~b_{i,j,s}>0,~s(i,j)\geq 2
\end{equation} then we get a \emph{compound $m$-partitioned number-based string}
\begin{equation}\label{eqa:intege-m-partitioned-22}
s^j_i=m_{i,1}m_{i,2}\dots m_{i,j-1}~b_{i,j,1}b_{i,j,2}\dots b_{i,j,s(i,j)}~m_{i,j+1}\dots m_{i,a_i}
\end{equation}

(iii) Let $m_{i,j,1},m_{i,j,2},\dots ,m_{i,j,a_i}$ be the $j$th permutation of the numbers $m_{i,1},m_{i,2},\dots ,m_{i,a_i}$ for $j\in[1,(a_i)!]$ and $i\in [1,P_{art}(m)]$. We have $m$-partitioned number-based strings
\begin{equation}\label{eqa:intege-m-partitioned-33}
s_{i,j}=m_{i,j,1}m_{i,j,2}\cdots m_{i,j,a_i},~j\in[1,(a_i)!],~i\in [1,P_{art}(m)]
\end{equation}

There are $\prod^{P_{art}(m)}_{i=1}(a_i)!$ different $m$-partitioned number-based strings of form $s_{i,j}$ defined in Eq.(\ref{eqa:intege-m-partitioned-33}).\qqed
\end{defn}

\begin{problem}\label{qeu:444444}
Given a number-based string $s=c_1c_2\cdots c_n$ with $c_i\in [0,9]$ and $c_1\geq 1$, \textbf{find} a positive integer $m$, such that $m$ can be expressed as a sum $m=m_{i,1}+m_{i,2}+\cdots +m_{i,a_i}$ defined in Eq.(\ref{eqa:intege-m-partitioned-00}), and there is $s=s^*_i$ defined in Eq.(\ref{eqa:intege-m-partitioned-11}), or $s=s^j_i$ defined in Eq.(\ref{eqa:intege-m-partitioned-22}), or $s=s_{i,j}$ defined in Eq.(\ref{eqa:intege-m-partitioned-33}).
\end{problem}

\begin{defn}\label{defn:integer-decomposed-strings}
$^*$ Suppose that a positive integer $m$ can be decomposed into a product
\begin{equation}\label{eqa:intege-n-decomposed-00}
n=p_{i,1}\cdot p_{i,2}\cdot\cdots \cdot p_{i,b_i},~p_{i,j}\geq 3,~b_i\geq 2,~i\in [1,D_{com}(n)]
\end{equation} where $D_{com}(n)$ is the number of all possible decompositions of the positive integer $n$. We have:

(i) there are \emph{$n$-decomposed number-based strings}
\begin{equation}\label{eqa:intege-n-decomposed-11}
r^*_i=p_{i,1}\cdot p_{i,2}\cdot \dots \cdot p_{i,b_i},~i\in [1,D_{com}(n)]
\end{equation}

(ii) If a number $p_{i,j}$ appeared in Eq.(\ref{eqa:intege-n-decomposed-00}) can be decomposed into a product
\begin{equation}\label{eqa:intege-n-decomposed-public-key}
p_{i,j}=c_{i,j,1}\cdot c_{i,j,2}\cdot \cdots \cdot c_{i,j,t(i,j)},~c_{i,j,s}>0,~t(i,j)\geq 2
\end{equation} then we get a \emph{compound $n$-decomposed number-based string}
\begin{equation}\label{eqa:intege-n-decomposed-22}
r^j_i=p_{i,1}\cdot p_{i,2}\cdot \dots \cdot p_{i,j-1}~\cdot c_{i,j,1}\cdot c_{i,j,2}\cdot \dots \cdot c_{i,j,s(i,j)}\cdot ~p_{i,j+1}\dots \cdot p_{i,b_i}
\end{equation}

(iii) Let $p_{i,j,1},p_{i,j,2},\dots ,p_{i,j,b_i}$ be the $j$th permutation of the numbers $p_{i,1},p_{i,2},\dots ,p_{i,b_i}$ for $j\in[1,(b_i)!]$ and $i\in [1,D_{com}(n)]$. We have $n$-decomposed number-based strings
\begin{equation}\label{eqa:intege-n-decomposed-33}
r_{i,j}=p_{i,j,1}\cdot p_{i,j,2}\cdot \cdots \cdot p_{i,j,b_i},~j\in[1,(b_i)!],~i\in [1,D_{com}(n)]
\end{equation}

In general, there are $\prod^{D_{com}(n)}_{i=1}(b_i)!$ different $m$-decomposed number-based strings of form $r_{i,j}$ defined in Eq.(\ref{eqa:intege-n-decomposed-33}).\qqed
\end{defn}

\begin{problem}\label{qeu:444444}
Given a number-based string $s^*=a_1a_2\cdots a_m$ with $a_i\in [0,9]$ and $a_1\geq 1$, \textbf{find} a positive integer $n$, such that $n$ can be expressed as a product $n=p_{i,1} p_{i,2}\cdots p_{i,b_i}$ defined in Eq.(\ref{eqa:intege-n-decomposed-00}), and there is $s^*=r^*_i$ defined in Eq.(\ref{eqa:intege-n-decomposed-11}), or $s^*=r^j_i$ defined in Eq.(\ref{eqa:intege-n-decomposed-22}), or $s^*=r_{i,j}$ defined in Eq.(\ref{eqa:intege-n-decomposed-33}).
\end{problem}

\begin{example}\label{exa:8888888888}
We show some number-based strings as \emph{public-key strings} and \emph{private-key strings} made by the integer partition and the integer decomposition.

(i) \textbf{The twin-sum string authentication}. In Eq.(\ref{eqa:intege-m-partitioned-11}), two numbers $m_{i,j}$ and $m_{i,j+1}$ are called \emph{twin numbers}, by Eq.(\ref{eqa:m-partitioned-twin-numbers11}) and the following Eq.(\ref{eqa:m-partitioned-twin-numbers22})
\begin{equation}\label{eqa:m-partitioned-twin-numbers22}
m_{i,j+1}=b_{i,j+1,1}+b_{i,j+1,2}+\cdots +b_{i,j+1,s(i,j+1)},~b_{i,j+1,s}>0,~s(i,j+1)\geq 2
\end{equation} we have a \emph{public-key string} $s^j_i$ shown in Eq.(\ref{eqa:intege-m-partitioned-22}) and the following \emph{private-key string}
\begin{equation}\label{eqa:m-partitioned-twin-private-key}
s^{j+1}_i=m_{i,1}m_{i,2}\cdots m_{i,j}~b_{i,j+1,1}b_{i,j+1,2}\cdots b_{i,j+1,s(i,j+1)}~m_{i,j+2}\cdots m_{i,a_i}
\end{equation} Finally, the following number-based string
\begin{equation}\label{eqa:m-partitioned-twin-authentication}
{
\begin{split}
m_{i,1}m_{i,2}\cdots m_{i,j-1}~b_{i,j,1}b_{i,j,2}\cdots b_{i,j,s(i,j)}~b_{i,j+1,1}b_{i,j+1,2}\cdots b_{i,j+1,s(i,j+1)}~m_{i,j+2}\cdots m_{i,a_i}
\end{split}}
\end{equation} is called \emph{twin-sum string authentication}, denoted as $A^{sum}_{uth}\langle s^j_i, s^{j+1}_i\rangle$.

(ii) \textbf{The twin-product string authentication}. In Eq.(\ref{eqa:intege-n-decomposed-00}), two numbers $p_{i,j}$ and $p_{i,j+1}$ are called \emph{twin numbers}, like Eq.(\ref{eqa:intege-n-decomposed-public-key}), we have
\begin{equation}\label{eqa:intege-n-decomposed-private-key}
p_{i,j+1}=c_{i,j+1,1} c_{i,j+1,2} \cdots c_{i,j+1,t(i,j+1)},~c_{i,j+1,s}>0,~t(i,j+1)\geq 2
\end{equation} Immediately, we get the following number-based string
\begin{equation}\label{eqa:intege-n-decomposed-public-private}
{
\begin{split}
A^{pro}_{uth}\langle r^j_i, r^{j+1}_i\rangle=&p_{i,1} p_{i,2} \cdots p_{i,j-1}~ c_{i,j,1} c_{i,j,2} \cdots c_{i,j,s(i,j)} \\
&c_{i,j+1,1} c_{i,j+1,2} \cdots c_{i,j+1,t(i,j+1)}~p_{i,j+1}\cdots p_{i,b_i}
\end{split}}
\end{equation} called \emph{twin-product string authentication}.

(iii) \textbf{The twin-mixed string authentication}. By Eq.(\ref{eqa:m-partitioned-twin-numbers11}), Eq.(\ref{eqa:m-partitioned-twin-numbers22}), Eq.(\ref{eqa:intege-n-decomposed-public-key}) and Eq.(\ref{eqa:intege-n-decomposed-private-key}), we have a \emph{twin-mixed string authentication}
\begin{equation}\label{eqa:mixed-partitioned-public-private}
{
\begin{split}
A^{mix}_{uth}\langle r^j_i, s^{j+1}_i\rangle=&m_{i,1}m_{i,2}\cdots m_{i,j-1}~c_{i,j,1} c_{i,j,2} \cdots c_{i,j,s(i,j)}~m_{i,j+1}\\
&b_{i,j+1,1}b_{i,j+1,2}\cdots b_{i,j+1,s(i,j+1)}~m_{i,j+2}\cdots m_{i,a_i}
\end{split}}
\end{equation}
where $r^j_i$ is as a \emph{public-key string} and $s^{j+1}_i$ is as a \emph{private-key string}; and moreover the following number-based string
\begin{equation}\label{eqa:mixed-decomposed-public-private}
{
\begin{split}
A^{mix}_{uth}\langle s^j_i, r^{j+1}_i\rangle=&p_{i,1} p_{i,2} \cdots p_{i,j-1}~b_{i,j,1}b_{i,j,2}\cdots b_{i,j,s(i,j)} \\
&c_{i,j+1,1} c_{i,j+1,2} \cdots c_{i,j+1,t(i,j+1)}~p_{i,j+1}\cdots p_{i,b_i}
\end{split}}
\end{equation} is called \emph{twin-mixed string authentication}, where $s^j_i$ is as a \emph{public-key string} and $r^{j+1}_i$ is as a \emph{private-key string}.

(iv) \textbf{The $A_m$-sum $B_n$-product complementary key-pairs}. We take $A$ numbers $m_{i,j_t}$ with $t\in [1,A]$ from a sum $m=m_{i,1}+m_{i,2}+\cdots +m_{i,a_i}$, and take $B$ numbers $p_{i,j_s}$ with $s\in [1,B]$ from a product $n=p_{i,1} p_{i,2}\cdots p_{i,b_i}$, and use them to make a \emph{public-key string}
$$s_{pub}(A_m,B_n)=\langle m_{i,j_t}|^A_{t=1},p_{i,j_s}|^B_{s=1}\rangle
$$ so we have two sets

$\overline{S}_m=\{m_{i,1},m_{i,2},\dots ,m_{i,a_i}\}\setminus\{m_{i,j_t}:t\in [1,A]\}$ and

$\overline{P}_n=\{p_{i,1}, p_{i,2},\dots ,p_{i,b_i}\}\setminus \{p_{i,j_s}:s\in [1,B]\}$,\\
the elements of them produce a \emph{private-key string} $s_{pri}(\overline{A}_m,\overline{B}_n)=\langle \overline{S}_m,\overline{P}_n\rangle$. We call $s_{pub}(A_m,B_n)$ and $s_{pri}(\overline{A}_m,\overline{B}_n)$ as a \emph{$A_m$-sum $B_n$-product complementary Key-pair}.

\vskip 0.4cm

It is meaningful to \textbf{do}:

(do-1) Analysis the computational complexity of the number-based strings defined above.

(do-2) Make more complex number-based strings by sums $m=m_{i,1}+m_{i,2}+\cdots +m_{i,a_i}$ and products $n=p_{i,1} p_{i,2}\cdots p_{i,b_i}$ for Key-pairs.\qqed
\end{example}

\begin{rem}\label{rem:333333}
\textbf{The computational complexity of the Integer Partition Problem.} We partition a positive integer $m$ into a group of $a_i$ parts $m_{i,1},m_{i,2},\dots ,m_{i,a_i}$ holding $m=m_{i,1}+m_{i,2}+\cdots +m_{i,a_i}$ with each $m_{i,j}>0$ and $a_i\geq 2$; refer to Eq.(\ref{eqa:intege-m-partitioned-00}). Suppose there is $P_{art}(m)$ groups of such $a_i$ parts. Computing the number $P_{art}(m)$ can be transformed into finding the number $A(m,a_i)$ of solutions of \emph{Diophantine equation} $m=\sum ^k_{i=1}ix_i$. There is a recursive formula
\begin{equation}\label{eqa:c3xxxxx}
A(m,a_i)=A(m,a_i-1)+A(m-a_i,a_i)
\end{equation}
with $0 \leq a_i\leq m$. It is not easy to compute the exact value of $A(m,a_i)$, for example, the authors in \cite{Shuhong-Wu-Accurate-2007} and \cite{WU-Qi-qi-2001} computed exactly
$${
\begin{split}
A(m,6)=&\biggr\lfloor \frac{1}{1036800}(12m^5 +270m^4+1520m^3-1350m^2-19190m-9081)+\\
&\frac{(-1)^m(m^2+9m+7)}{768}+\frac{1}{81}\left[(m+5)\cos \frac{2m\pi}{3}\right ]\biggr\rfloor
\end{split}}
$$

On the other hands, for any odd integer $m\geq 7$, it was conjectured $m=p_1+p_2+p_3$ with three primes $p_1,p_2,p_3$ from the famous Goldbach's conjecture: ``\emph{Every even integer, greater than 2, can be expressed as the sum of two primes}.'' In other word, determining $A(m,3)$ is difficult, also, it is difficult to express an odd integer $m=\sum^{3n}_{k=1} p\,'_k$ with each $p\,'_k$ is a prime number.

Clearly, it is difficult to partition an integer $m$ into $m=m_{i,1}+m_{i,2}+\cdots +m_{i,a_i}$ defined in Eq.(\ref{eqa:intege-m-partitioned-00}), or decompose an integer $n$ as $n=p_{i,1}\cdot p_{i,2}\cdot\cdots \cdot p_{i,b_i}$ defined in Eq.(\ref{eqa:intege-n-decomposed-00}).\paralled
\end{rem}

\subsection{Graphic lattices for making number-based strings}

\subsubsection{A series of operation graphic lattices}

Let $\textbf{\textrm{O}}=(O_1,O_2,\dots ,O_p)$ be a \emph{graph operation base}, where each $O_i$ is a graph operation of graph theory, and let $\textbf{\textrm{T}}_i=\big (T_{i,1},T_{i,2},\dots, T_{i,n_i}\big )$ be a \emph{colored graph base} for $n_i\geq 1$ and $i\in [1,M]$, so that each colored graph $T_{i,j}$ admits a total coloring (or a total graph-coloring) $f_{i,j}$. Suppose that each colored graph base $\textbf{\textrm{T}}_i$ is an every-zero additive group defined in Definition \ref{defn:every-zero-abstract-group}. By Eq.(\ref{eqa:edge-index-graphic-lattice11}), Eq.(\ref{eqa:edge-index-graphic-lattice11}) and Eq.(\ref{eqa:edge-index-graphic-lattice11}) in the TOTAL-graph-coloring algorithm-II with graph operation, there is a series of \emph{operation graphic lattices}.

First operation graphic lattice is
\begin{equation}\label{eqa:operation-edge-index-graphic-lattice11}
\textbf{\textrm{L}}(\textbf{\textrm{Q}}_1[\textbf{\textrm{O}}]\textbf{\textrm{T}}_1)=\left \{H[\textbf{\textrm{O}}]^{n_1}_{k=1}a_{1,k}T_{1,k}:~a_{1,k}\in Z^0,~T_{1,k}\in \textbf{\textrm{T}}_1, H\in \textbf{\textrm{Q}}_1 \right \}
\end{equation} with $\sum ^{n_1}_{k=1}a_{1,k}\geq 1$, where $\textbf{\textrm{Q}}_1$ is a graph set.

Let $\textbf{\textrm{Q}}_2=\textbf{\textrm{L}}(\textbf{\textrm{Q}}_1[\textbf{\textrm{O}}]\textbf{\textrm{T}}_1)$, the second operation graphic lattice is
\begin{equation}\label{eqa:operation-edge-index-graphic-lattice22}
\textbf{\textrm{L}}(\textbf{\textrm{Q}}_2[\textbf{\textrm{O}}]\textbf{\textrm{T}}_2)=\big \{H[\textbf{\textrm{O}}]^{n_2}_{k=1}a_{2,k}T_{2,k}:~a_{2,k}\in Z^0,~T_{2,k}\in \textbf{\textrm{T}}_2, H\in \textbf{\textrm{Q}}_2 \big \}
\end{equation} with $\sum ^{n_2}_{k=1}a_{2,k}\geq 1$.

In general, the $s$th operation graphic lattice is
\begin{equation}\label{eqa:operation-edge-index-graphic-lattice33}
\textbf{\textrm{L}}(\textbf{\textrm{Q}}_{s+1}[\textbf{\textrm{O}}]\textbf{\textrm{T}}_{s+1})=\big \{H[\textbf{\textrm{O}}]^{n_{s+1}}_{k=1}a_{s+1,k}T_{s+1,k}:~a_{s+1,k}\in Z^0,~T_{s+1,k}\in \textbf{\textrm{T}}_{s+1}, H\in \textbf{\textrm{Q}}_{s+1}\big \}
\end{equation} with $\sum ^{n_{s+1}}_{k=1}a_{s+1,k}\geq 1$, and $\textbf{\textrm{Q}}_{s+1}=\textbf{\textrm{L}}(\textbf{\textrm{Q}}_{s}[\textbf{\textrm{O}}]\textbf{\textrm{T}}_s)$ for $s\geq 1$.

\subsubsection{Tree-base graphic lattices}

In \cite{Bing-Yao-arXiv:2207-03381}, a \emph{tree base} $\textbf{\textrm{T}}=(T_1,T_2,\dots ,T_m)$ consists of mutually vertex-disjoint uncolored trees $T_1,T_2,\dots ,T_m$ holding $T_i\not \cong T_j$ if $i\neq j$. We use the edges of an edge set $E^*$ to join these mutually disjoint trees together, the resulting graph is denoted as $\textbf{\textrm{T}}+E^*$. Notice that one tree $T_i$ may be used two or more times in $\textbf{\textrm{T}}+E^*$, so we write $E^*[\ominus]^n_{k=1} \alpha_kT_k$ with $\sum ^m_{k=1}\alpha_k\geq 1$ for replacing $\textbf{\textrm{T}}+E^*$ by the \emph{edge-join operation} ``$[\ominus ]$''. We have a \emph{tree-base graphic lattice} as follows:
\begin{equation}\label{eqa:tree-edge-join-tree-lattice}
\textbf{\textrm{L}}(\textbf{\textrm{E}}[\ominus ]Z^0\textbf{\textrm{T}})=\big \{E^*[\ominus ]^m_{k=1}\mu_kT_k:~\mu_k\in Z^0,T_k\in \textbf{\textrm{T}},E^*\in \textbf{\textrm{E}}\big \}
\end{equation} with $\sum ^m_{k=1}\mu_k\geq 1$.

\vskip 0.4cm

Let $J_{i_1},J_{i_2},\dots ,J_{i_M}$ be a permutation of the trees $\lambda_1T_1$, $\lambda_2T_2$, $\dots $, $\lambda_mT_m$ for $M=\sum ^m_{k=1}\lambda_k\geq 1$. By the vertex-coinciding operation ``$[\odot ]$'', we vertex-coincide some vertices $x_{i_j,1},x_{i_j,2},\dots ,x_{i_j,s}$ of the tree $J_{i_j}$ with some vertices $y_{i_{j+1},1},y_{i_{j+1},2},\dots ,y_{i_{j+1},t}$ of the tree $J_{i_{j+1}}$ into
$$\{x_{i_j,1},x_{i_j,2},\dots ,x_{i_j,s}\}\odot \{y_{i_{j+1},1},y_{i_{j+1},2},\dots ,y_{i_{j+1},t}\},~j\in [1,M-1]
$$ the resultant graph is denoted as
\begin{equation}\label{eqa:555555}
J_{i_1}[\odot ]J_{i_2}[\odot ]\dots [\odot ]J_{i_M}=[\odot ]^M_{j=1}J_{i_j}=[\odot ]^m_{k=1}a_kT_k
\end{equation}
which induces a \emph{tree-base graphic lattice}
\begin{equation}\label{eqa:tree-vertex-odot-tree-lattice}
\textbf{\textrm{L}}([\odot ]Z^0\textbf{\textrm{T}})=\big \{[\odot ]^m_{k=1}\lambda_kT_k:~\lambda_k\in Z^0,T_k\in \textbf{\textrm{T}}\big \}
\end{equation} with $\sum ^m_{k=1}\lambda_k\geq 1$.

\begin{problem}\label{qeu:444444}
About tree-base graphic lattices defined in Eq.(\ref{eqa:tree-vertex-odot-tree-lattice}), \textbf{is} there
$$\textbf{\textrm{L}}([\odot ]Z^0\textbf{\textrm{T}})\cap \textbf{\textrm{L}}([\odot ]Z^0\textbf{\textrm{H}})\neq \emptyset$$
for different tree-bases $\textbf{\textrm{T}}$ and $\textbf{\textrm{H}}$?
\end{problem}

We do the \emph{edge-join operation} ``$[\ominus ]$'' and the vertex-coinciding operation ``$[\odot ]$'' to a permutation $H_{i_1},H_{i_2},\dots ,H_{i_M}$ of the trees $\gamma_1T_1$, $\gamma_2T_2$, $\dots $, $\gamma_mT_m$ for $M=\sum ^m_{k=1}\gamma_k\geq 1$, and write the resultant graphs as $[\ominus \odot]^m_{k=1}\gamma_kT_k$, immediately, we get a \emph{tree-base graphic lattice}
\begin{equation}\label{eqa:tree-vertex-edge-mixed-tree-lattice}
\textbf{\textrm{L}}([\ominus \odot]Z^0\textbf{\textrm{T}})=\big \{[\ominus \odot]^m_{k=1}\gamma_kT_k:~\gamma_k\in Z^0,T_k\in \textbf{\textrm{T}}\big \}
\end{equation} with $\sum ^m_{k=1}\gamma_k\geq 1$.

\vskip 0.2cm

\textbf{All-tree-graphic lattice.} For considering the coloring closure of graphic lattices, we let $\textbf{\textrm{L}}_{\textrm{tree}}(\textbf{\textrm{E}}[\ominus ]Z^0\textbf{\textrm{T}})$ be the set of all trees in a tree-base graphic lattice $\textbf{\textrm{L}}(\textbf{\textrm{E}}[\ominus ]Z^0\textbf{\textrm{T}})$ defined in Eq.(\ref{eqa:tree-edge-join-tree-lattice}), and we call it \emph{all-tree-graphic lattice}. Similarly, we have other two all-tree-graphic lattices
$$\textbf{\textrm{L}}_{\textrm{tree}}([\odot]Z^0\textbf{\textrm{T}})\subset \textbf{\textrm{L}}([\odot]Z^0\textbf{\textrm{T}}), ~\textbf{\textrm{L}}_{\textrm{tree}}([\ominus \odot]Z^0\textbf{\textrm{T}})\subset\textbf{\textrm{L}}([\ominus \odot]Z^0\textbf{\textrm{T}})
$$ defined in Eq.(\ref{eqa:tree-vertex-odot-tree-lattice}) and Eq.(\ref{eqa:tree-vertex-edge-mixed-tree-lattice}), respectively.

\vskip 0.4cm

By Theorem \ref{thm:10-k-d-total-coloringss}, these three all-tree-graphic lattices are \emph{coloring closure} to the $(k,d)$-total colorings stated in the following theorem \ref{thm:coloring-closure-kd-total-colorings}:

\begin{thm}\label{thm:coloring-closure-kd-total-colorings}
\cite{Bing-Yao-arXiv:2207-03381} Each tree contained in $\textbf{\textrm{L}}_{\textrm{tree}}(\textbf{\textrm{E}}[\ominus ]Z^0\textbf{\textrm{T}})$, $\textbf{\textrm{L}}_{\textrm{tree}}([\odot]Z^0\textbf{\textrm{T}})$ and $\textbf{\textrm{L}}_{\textrm{tree}}([\ominus \odot]Z^0\textbf{\textrm{T}})$ admits the following $(k,d)$-total colorings: graceful $(k,d)$-total coloring, harmonious $(k,d)$-total coloring, (odd-edge) edge-difference $(k,d)$-total coloring, (odd-edge) graceful-difference $(k,d)$-total coloring, (odd-edge) felicitous-difference $(k,d)$-total coloring, (odd-edge) edge-magic $(k,d)$-total coloring.
\end{thm}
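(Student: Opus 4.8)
The plan is to reduce the statement to the single fact that \emph{every tree} admits each of the listed $(k,d)$-total colorings, and then to invoke the already-established Theorem \ref{thm:10-k-d-total-coloringss}. By construction, each of the three sets $\textbf{\textrm{L}}_{\textrm{tree}}(\textbf{\textrm{E}}[\ominus ]Z^0\textbf{\textrm{T}})$, $\textbf{\textrm{L}}_{\textrm{tree}}([\odot]Z^0\textbf{\textrm{T}})$ and $\textbf{\textrm{L}}_{\textrm{tree}}([\ominus \odot]Z^0\textbf{\textrm{T}})$ is defined to collect precisely the trees lying in the corresponding graphic lattice; hence an arbitrary member $T$ of any of them is a tree. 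The specific way $T$ was assembled (by edge-joining, vertex-coinciding, or the mixed operation on copies of the base trees $T_1,\dots ,T_m$) is irrelevant to the existence of colorings, so the whole claim collapses to: any tree admits a graceful, harmonious, (odd-edge) edge-difference, (odd-edge) graceful-difference, (odd-edge) felicitous-difference and (odd-edge) edge-magic $(k,d)$-total coloring.

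First I would verify that this list of $W$-constraints is exactly the one covered by Theorem \ref{thm:10-k-d-total-coloringss}, whose hypothesis is $D(T)\geq 3$. That theorem asserts at least $2^m$ distinct $W$-constraint $(k,d)$-total colorings for $W\in\{$graceful, harmonious, edge-difference, graceful-difference, felicitous-difference, edge-magic, odd-edge edge-difference, odd-edge graceful-difference, odd-edge felicitous-difference, odd-edge edge-magic$\}$, which matches the parenthetical ``(odd-edge)'' items of the present statement verbatim. Thus for every tree $T$ of diameter at least $3$ the conclusion is immediate, with no further computation required.

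The remaining --- and genuinely load-bearing --- step is the boundary range $D(T)\leq 2$, which Theorem \ref{thm:10-k-d-total-coloringss} does not cover: these trees are exactly $K_1$, the single edge $K_2$, and the stars $K_{1,n}$ with $n\geq 2$. Each is bipartite, so I would exhibit a set-ordered graceful labeling (center class below leaf class) and a set-ordered odd-graceful labeling by hand, then feed these into the parameterized Topcode-matrix $P_{(k,d)}(G,F)=k\cdot I\,^0+d\cdot T_{code}(G,f)$ of Definition \ref{defn:bipartite-parameterized-topcode-matrix} to manufacture the corresponding graceful (resp. odd-graceful) $(k,d)$-total coloring. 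The equivalence Theorem \ref{thm:equivalent-k-d-total-colorings} then promotes the graceful $(k,d)$-total coloring to each of the edge-magic, edge-difference, graceful-difference, felicitous-difference and harmonious $(k,d)$-total colorings on these bipartite graphs, while the odd-edge variants follow from the magic-constraint transfer identities recorded in Cases A--D of the subsection on connections between magic-constraint colorings. I expect this small-diameter bookkeeping to be the only real obstacle, since the generic case is handed to us by Theorem \ref{thm:10-k-d-total-coloringss}; care is needed because the three lattices may genuinely contain such degenerate trees (for instance when a base tree is a star and the coinciding operation produces another star), so they cannot simply be excluded from the argument.
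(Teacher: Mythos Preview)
Your approach is essentially the same as the paper's: the paper introduces the theorem with the single sentence ``By Theorem \ref{thm:10-k-d-total-coloringss}, these three all-tree-graphic lattices are \emph{coloring closure} to the $(k,d)$-total colorings stated in the following theorem,'' and offers no further argument beyond this reduction. Your proposal is in fact more careful than the paper, since you explicitly flag and handle the small-diameter trees ($D(T)\leq 2$) that fall outside the hypothesis of Theorem \ref{thm:10-k-d-total-coloringss}; the paper simply glosses over this boundary case.
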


\subsubsection{Tree-base $[\odot_{\textrm{prop}}]$-operation graphic lattices}

Let $\textbf{\textrm{T}}=(T_1,T_2,\dots ,T_m)$ be a \emph{tree base}, where the mutually vertex-disjoint uncolored trees $T_1,T_2$, $\dots $, $T_m$ holds $T_i\not \cong T_j$ if $i\neq j$.

\textbf{Tree-base $[\odot_{\textrm{prop}}]$-operation graphic lattice.} A \emph{proper vertex-coinciding operation} ``$[\odot_{\textrm{prop}}]$'' to the trees of the tree base $\textbf{\textrm{T}}$ produces graphs $H=[\odot_{\textrm{prop}}]^n_{j=1}T_j$ holding:

(a-1) there is a graph homomorphism $\big \langle \bigcup^m_{j=1}T_j\big \rangle \rightarrow H$;

(a-2) $H$ is connected;

(a-3) $H$ has no multiple-edge, and $E(H)= \bigcup^m_{j=1}E(T_j)$.

\vskip 0.4cm

Moreover, we take a permutation $L_{i_1},L_{i_2},\dots ,L_{i_A}$ of the trees $b_1T_1$, $b_2T_2$, $\dots $, $b_mT_m$ from the tree base $\textbf{\textrm{T}}$ with $A=\sum ^m_{k=1}b_k\geq 1$. Then we get graphs
\begin{equation}\label{eqa:proper-operation-tree-base-lattice111}
G=[\odot_{\textrm{prop}}]^A_{j=1}L_{i_j}=[\odot_{\textrm{prop}}]^m_{k=1}b_kT_k
\end{equation} after doing the \emph{proper vertex-coinciding operation} ``$[\odot_{\textrm{prop}}]$'' to the permutation $L_{i_1},L_{i_2},\dots ,L_{i_A}$, such that

(b-1) there is a graph homomorphism $\big \langle \bigcup^A_{j=1}L_{i_j}\big \rangle \rightarrow G$;

(b-2) $G$ is connected;

(b-3) $G$ has no multiple-edge, and $E(G)= \bigcup^A_{j=1}E(L_{i_j})$.\\
Thereby, we get a \emph{tree-base $[\odot_{\textrm{prop}}]$-operation graphic lattice}
\begin{equation}\label{eqa:proper-operation-tree-base-lattice}
\textbf{\textrm{L}}([\odot_{\textrm{prop}}]Z^0\textbf{\textrm{T}})=\big \{[\odot_{\textrm{prop}}]^m_{k=1}b_kT_k:~b_k\in Z^0,T_k\in \textbf{\textrm{T}}\big \}
\end{equation} with $\sum ^m_{k=1}b_k\geq 1$.

\begin{problem}\label{qeu:444444}
\textbf{Is} there another tree base $\textbf{\textrm{H}}=(H_1,H_2,\dots ,H_n)$ with $H_i\not \cong H_j$ if $i\neq j$, such that two tree-base $[\odot_{\textrm{prop}}]$-operation graphic lattices hold $\textbf{\textrm{L}}([\odot_{\textrm{prop}}]Z^0\textbf{\textrm{H}})=\textbf{\textrm{L}}([\odot_{\textrm{prop}}]Z^0\textbf{\textrm{T}})$ true?
\end{problem}

\begin{rem}\label{rem:333333}
Clearly, each connected graph $G\in \textbf{\textrm{L}}([\odot_{\textrm{prop}}]Z^0\textbf{\textrm{T}})$ defined in Eq.(\ref{eqa:proper-operation-tree-base-lattice}) can be vertex-split (also, decompose) into the trees $b_1T_1$, $b_2T_2$, $\dots $, $b_mT_m$ from the tree base $\textbf{\textrm{T}}$ with $\sum ^m_{k=1}b_k\geq 1$; refer to Problem \ref{qeu:verious-tree-decompositions}. There are the following particular cases about the lattices $\textbf{\textrm{L}}([\odot_{\textrm{prop}}]Z^0\textbf{\textrm{T}})$:

\textbf{Case-1.} If each tree $T_i$ in the tree base $\textbf{\textrm{T}}$ holds $|V(T_i)|=p$ true, then the connected graph $G\in \textbf{\textrm{L}}([\odot_{\textrm{prop}}]Z^0\textbf{\textrm{T}})$ is \emph{spanning-tree decomposable} as $|V(G)|=p$.

\textbf{Case-2.} If each tree $T_i$ in the tree base $\textbf{\textrm{T}}$ is a \emph{caterpillar} (resp. \emph{lobster}), then we say that the connected graph $G\in \textbf{\textrm{L}}([\odot_{\textrm{prop}}]Z^0\textbf{\textrm{T}})$ is \emph{caterpillar decomposable} (resp. \emph{lobster decomposable}).

\textbf{Case-3.} \textbf{The extremum problems.} Let $\textbf{\textrm{L}}(b_1,b_2,\dots, b_m)$ be the set of graphs in $\textbf{\textrm{L}}([\odot_{\textrm{prop}}]Z^0\textbf{\textrm{T}})$ for a fixed group of non-negative integers $b_1,b_2,\dots, b_m$ such that each connected graph $G\in \textbf{\textrm{L}}(b_1,b_2,\dots, b_m)$ holding $G=[\odot_{\textrm{prop}}]^m_{k=1}b_kT_k$. Motivated from some traditional lattice problems, we propose the following extremum problems: \textbf{Find} a connected graph $H^*\in \textbf{\textrm{L}}(b_1,b_2,\dots, b_m)$, such that

\begin{asparaenum}[(i) ]
\item the vertex numbers $|V(H^*)|\leq |V(G)|$ for any connected graph $G\in \textbf{\textrm{L}}(b_1,b_2,\dots, b_m)$.

\item the diameters $D(H^*)\leq D(G)$ for any connected graph $G\in \textbf{\textrm{L}}(b_1,b_2,\dots, b_m)$.

\item the maximal degrees $\Delta(H^*)\geq \Delta(G)$ for any connected graph $G\in \textbf{\textrm{L}}(b_1,b_2,\dots, b_m)$.

\item $H^*$ contains a path $P^*$ holding $|V(P^*)|\geq |V(P)|$ for any path $P$ contained in any connected graph $G\in \textbf{\textrm{L}}(b_1,b_2,\dots, b_m)$.

\item the number of cycles of $H^*$ is largest in $\textbf{\textrm{L}}(b_1,b_2,\dots, b_m)$.

\item the combinatorics of the above extremum cases.

\item \textbf{determine} the cardinality $|\textbf{\textrm{L}}(b_1,b_2,\dots, b_m)|$.
\end{asparaenum}

\textbf{Case-4.} \textbf{The set-colorings of connected graphs in $\textbf{\textrm{L}}([\odot_{\textrm{prop}}]Z^0\textbf{\textrm{T}})$.} Since each tree $T_i$ with $i\in [1,m]$ admits a group of colorings $h_1,h_2,\dots ,h_n$, then each connected $(p,q)$-graph $G\in \textbf{\textrm{L}}([\odot_{\textrm{prop}}]Z^0\textbf{\textrm{T}})$ admits a set-coloring $F=\langle h_1,h_2,\dots ,h_n\rangle $, consider the following questions:
\begin{asparaenum}[\textbf{\textrm{Scola}}-1.]
\item If each $h_i$ is a set-ordered graceful labeling, then the connected $(p,q)$-graph $G$ admits a set-coloring $F=\langle h_1,h_2,\dots ,h_n\rangle $ such that the edge color set $F(E(G))=[1,q]$.
\item If each $h_i$ is a set-ordered odd-graceful labeling, then the connected $(p,q)$-graph $G$ admits a set-coloring $F=\langle h_1,h_2,\dots ,h_n\rangle $ such that the edge color set $F(E(G))=[1,2q-1]^o$.
\item If each $h_i$ is one coloring/labeling defined in Definition \ref{defn:basic-W-type-labelings}, Definition \ref{defn:kd-w-type-colorings} and Definition \ref{defn:odd-edge-W-type-total-labelings-definition}, what property does the connected $(p,q)$-graph $G$ have such that $G$ admits a total set-coloring $F$ defined by $h_1$, $h_2$, $\dots $, $h_n$?\paralled
\end{asparaenum}
\end{rem}

\begin{problem}\label{qeu:444444}
For a fixed group of non-negative integers $b_1,b_2,\dots, b_m$ with $\sum ^m_{k=1}b_k\geq 1$, let $\textbf{\textrm{L}}(b_1,b_2,\dots, b_m)$ be the set of graphs of the graphic lattice $\textbf{\textrm{L}}([\odot_{\textrm{prop}}]Z^0\textbf{\textrm{T}})$ defined in Eq.(\ref{eqa:proper-operation-tree-base-lattice}).

(i) If $K_{n}=[\odot_{\textrm{prop}}]^m_{i=1}T_i$ with $n=|V(T_i)|$ for $i\in [1,m]$ in the tree base $\textbf{\textrm{T}}$, then the set $\textbf{\textrm{L}}(b_1,b_2,\dots, b_m)$ contains graphs with cliques $K_{n}$. \textbf{Determine} a connected graph $G\in \textbf{\textrm{L}}(b_1,b_2,\dots$, $b_m)$ such that $G$ contains the largest number of cliques like $K_{n}$.

(ii) If $K_{m}=[\odot_{\textrm{prop}}]^m_{k=1}T_k$ with the vertex number $|V(T_i)|=k$ for $k\in [1,m]$ in the tree base $\textbf{\textrm{T}}$, that is $\langle T_1,T_2,\dots$, $ T_m\mid K_m\rangle$; refer to Conjecture \ref{conj:c2-KT-conjecture}. \textbf{Find} a connected graph $G\in \textbf{\textrm{L}}(b_1,b_2,\dots, b_m)$ such that $G$ contains the most cliques of form $K_{m}$.
\end{problem}

\begin{conj} \label{conj:c2-KT-conjecture}
$K$-$T$ \textbf{conjecture} (Gy\'{a}r\'{a}s and Lehel, 1978; B\'{e}la Bollob\'{a}s, 1995): For integer $n\geq 3$, given $n$ mutually disjoint trees $T_k$ of $k$ vertices with respect to $1\leq k\leq n$. Then the complete graph $K_n$ can be decomposed into the union of $n$ mutually edge-disjoint trees $H_k$, namely $K_n=\bigcup ^{n}_{k=1}H_k$, such that $T_k\cong H_k$ whenever $1\leq k\leq n$. Also, write this case as $\langle T_1,T_2,\dots, T_n\mid K_n\rangle$.
\end{conj}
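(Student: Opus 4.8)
The plan is to attack the $K$-$T$ conjecture by induction on $n$, using the vertex-splitting and proper vertex-coinciding machinery developed earlier (Theorem \ref{thm:K-2m-spanning-trees-2m-1-edges}, Corollary \ref{cor:complete-graph-vertex-split-m-trees}, and the operation $[\odot_{\textrm{prop}}]$ with its conditions (b-1)--(b-3)). First I would record the arithmetic identity $|E(K_n)|=\binom{n}{2}=\sum_{k=1}^{n}(k-1)=\sum_{k=1}^{n}|E(T_k)|$, which shows that an edge-disjoint \emph{packing} of copies $H_k\cong T_k$ into $K_n$ is automatically a \emph{decomposition} $K_n=\bigcup_{k=1}^{n}H_k$. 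Hence it suffices to place the $n$ prescribed trees edge-disjointly, and this is exactly equivalent to exhibiting a proper vertex-coinciding configuration $\langle T_1,T_2,\dots,T_n\mid K_n\rangle$ realizing $K_n=[\odot_{\textrm{prop}}]^n_{k=1}H_k$ with $E(K_n)=\bigcup_{k=1}^{n}E(H_k)$ and no multi-edge created.

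Second, I would split into parity cases to obtain a scaffold. When $n=2m$, Theorem \ref{thm:K-2m-spanning-trees-2m-1-edges} already decomposes $K_{2m}$ into $m$ edge-disjoint spanning trees of $2m-1$ edges, and when $n=2m+1$ the same theorem yields a star $K_{1,m}$ together with $m$ spanning trees. The idea is to use this canonical spanning-tree decomposition as a skeleton and then redistribute edges so that the \emph{prescribed} isomorphism types $T_k$ appear. Since $T_n$ is a spanning tree of $K_n$, I would embed a copy $H_n\cong T_n$ first, and then attempt to pack $T_1,\dots,T_{n-1}$ into the residual graph $K_n-E(H_n)$, tracking each placement as a vertex-coinciding of $T_k$ into the current partial configuration (so that condition (b-3) of $[\odot_{\textrm{prop}}]$ is preserved at every step).

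The hard part will be precisely this residual step: $K_n-E(H_n)$ is no longer complete, so the induction hypothesis stated for $K_{n-1}$ does not apply directly, and one cannot simply reuse a decomposition guaranteed only for complete graphs. I would therefore strengthen the induction hypothesis into a statement about packing a tree sequence into a graph with a prescribed \emph{residual degree sequence} (a degree-surplus condition), and control the embedding greedily, vertex by vertex, by a systems-of-distinct-representatives / Hall-type argument that assigns each leaf of $T_k$ to an available vertex of sufficiently high residual degree. The vertex-splitting operation $\wedge$ and the tree set $V_{\textrm{split}}(K_n)$ of Theorem \ref{thm:connected-graph-vs-tree-set} supply the bookkeeping, since every legal placement of $H_k$ corresponds to a split/coincide move that stays inside $V_{\textrm{split}}(K_n)$.

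Finally, I would dispose of the base cases $n\le n_0$ for some small bound by a finite computer check (as was done for the Graceful Tree Conjecture in the cited verifications), and I expect the genuine obstruction --- the reason this statement remains a conjecture --- to be the guarantee that the greedy placement never gets stuck when the $T_k$ have large maximum degree, where the available residual degrees can fail the Hall condition. A realistic intermediate target reachable with the present tools is the \emph{caterpillar} (or \emph{lobster}) case: when each $T_k$ is a caterpillar, the leaf-removed path structure makes the residual-degree bookkeeping tractable, and the proper vertex-coinciding configuration $[\odot_{\textrm{prop}}]^n_{k=1}H_k$ can be assembled layer by layer along the spanning-tree scaffold, giving a genuine partial confirmation of $\langle T_1,T_2,\dots,T_n\mid K_n\rangle$.
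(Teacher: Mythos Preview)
This statement is recorded in the paper as a \emph{conjecture}, not as a theorem: the paper offers no proof, and the $K$--$T$ (Gy\'{a}rf\'{a}s--Lehel) tree-packing conjecture remains open in the literature. So there is no paper's proof to compare your proposal against; the correct reading is that Conjecture~\ref{conj:c2-KT-conjecture} is stated only to motivate Definition~\ref{defn:totally-graceful-split-tree-groups} and the surrounding discussion.

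Your proposal is not a proof but an outline of an attack, and you are candid about this: you flag the step where it breaks. The gap is real and is exactly the one that has blocked all known inductive approaches. After embedding $H_n\cong T_n$ as a spanning tree, the residual $K_n-E(H_n)$ has $n-1$ vertices of degree $n-2$ and one vertex of degree $n-1-\Delta(T_n)$ (or more precisely degree $n-1-\deg_{H_n}(v)$ at each $v$), and there is no known degree-surplus hypothesis that is (i) strong enough to force a Hall-type greedy embedding of the remaining $T_1,\dots,T_{n-1}$ and (ii) weak enough to be re-established after each placement. The spanning-tree scaffolds from Theorem~\ref{thm:K-2m-spanning-trees-2m-1-edges} decompose $K_{2m}$ into $m$ trees each of size $2m-1$, whereas the conjecture demands $n$ trees of sizes $1,2,\dots,n$, so that theorem gives the right edge count but the wrong block structure; redistributing those edges to match prescribed isomorphism types is the entire difficulty, not a bookkeeping detail. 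Your suggested fallback to caterpillars is reasonable as a partial result and is in the spirit of what is known, but even that would require a self-contained argument not supplied by the tools in this paper.
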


\begin{defn} \label{defn:totally-graceful-split-tree-groups}
\cite{Yao-Su-Ma-Wang-Yang-arXiv-2202-03993v1} Suppose that a complete graph $K_n$ admits a total coloring $f:V(K_n)\cup E(K_n)\rightarrow [1,n]$ holding $|f(V(K_n))|=n$ and $f(E(K_n))=\{f(uv)=|f(u)-f(v)|:uv\in E(K_n)\}$. Doing the vertex-split operation to $K_n$ produces a group of trees $T_1,T_{2},T_{3},\dots ,T_{n}$, where each tree $T_{k}$ has just $k$ vertices for $k\in [1,n]$, such that $K_n=\odot|^n_{k=1}T_{k}$, also $\langle T_1,T_{2},T_{3},\dots ,T_{n}\mid K_n\rangle $ (refer to Conjecture \ref{conj:c2-KT-conjecture}). We call the set $S=\{T_1,T_{2},T_{3},\dots ,T_{n}\}$ a \emph{vertex-split tree-group} of $K_n$. Let $f_k$ be the total coloring of each tree $T_{k}$, so $f_k(x)\neq f_k(y)$ for $x,y\in V(T_{k})$, if
$$f_k(E(T_{k}))=\big \{f_k(uv)=|f_k(u)-f_k(v)|:uv\in E(T_{k})\big \}=[1,k-1]
$$ we call this set $S$ \emph{totally graceful vertex-split tree-group} of $K_n$.
\qqed
\end{defn}

\begin{figure}[h]
\centering
\includegraphics[width=16.4cm]{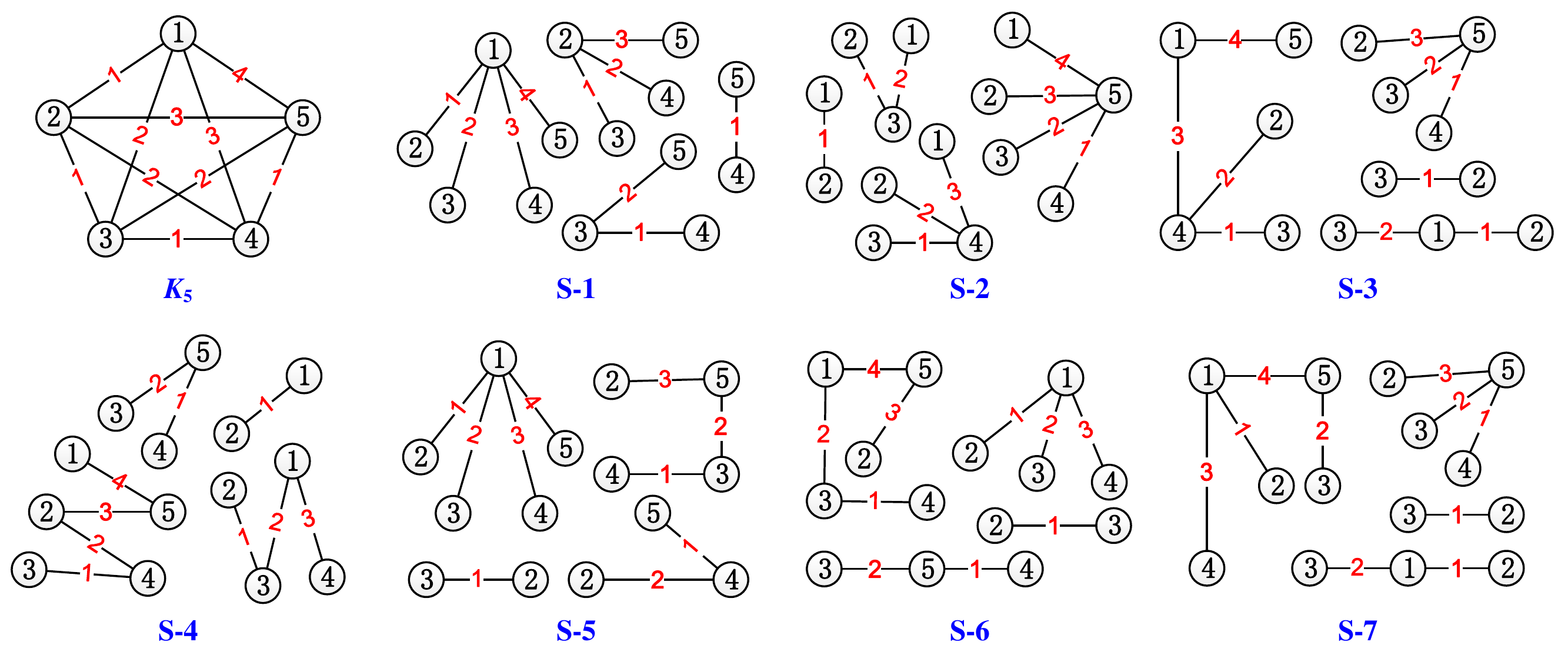}\\
\caption{\label{fig:total-graceful-group}{\small Seven totally graceful vertex-split tree-groups of $K_5$ for understanding Definition \ref{defn:totally-graceful-split-tree-groups}, cited from \cite{Yao-Su-Ma-Wang-Yang-arXiv-2202-03993v1}.}}
\end{figure}

\begin{problem}\label{qeu:444444}
\textbf{Does} each complete graph $K_n$ have a totally graceful vertex-split tree-group defined in Definition \ref{defn:totally-graceful-split-tree-groups}?
\end{problem}

\begin{defn} \label{defn:55-v-set-e-proper-more-labelings}
\cite{Yao-Wang-2106-15254v1} A \emph{$W$-constraint v-set e-proper total coloring} of a connected graph $G$ is a mapping $f:V(G)\cup E(G)\rightarrow \Omega$, where $\Omega$ consists of numbers and sets, such that

(i) $f(u)$ for each vertex $u\in V(G)$ is a set;

(ii) $f(xy)$ for each edge $xy\in E(G)$ is a number $c$; and

(iii) there are some $a\in f(u)$ and $b\in f(v)$ holding the $W$-constraint $c=W\langle a,b\rangle $.\qqed
\end{defn}

By Theorem \ref{thm:K-2m-spanning-trees-2m-1-edges} and Theorem \ref{thm:10-k-d-total-coloringss}, we have

\begin{thm}\label{thm:666666}
$^*$ Each complete graph $K_{2m}$ of $2m$ vertices admits a graceful v-set e-proper total coloring $f$ such that the edge set $E(K_{2m})=\bigcup^m_{k=1}E_k$, and there are some $a_u\in f(u)$ and $b_v\in f(v)$, and $f(uv)=k=|a_u-b_v|$ for each edge $uv\in E_k$ with $k\in [1,m]$.
\end{thm}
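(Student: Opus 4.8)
The plan is to read the index $k$ simultaneously as the name of a colour and as the label of a spanning tree, so that the $m$-fold spanning-tree decomposition of $K_{2m}$ supplies the partition $E(K_{2m})=\bigcup_{k=1}^m E_k$ for free. First I would invoke Theorem \ref{thm:K-2m-spanning-trees-2m-1-edges} (Result-1) to vertex-split $K_{2m}$ (for $m\ge 2$) into $m$ mutually edge-disjoint spanning trees $T_1,T_2,\dots,T_m$, each with $2m-1$ edges, so that $E(K_{2m})=\bigcup_{k=1}^m E(T_k)$ is a disjoint union. I then set $E_k:=E(T_k)$ and declare $f(uv):=k$ for every edge $uv\in E_k$. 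This already secures clause (ii) of Definition \ref{defn:55-v-set-e-proper-more-labelings} and the required edge-decomposition, with every edge colour lying in $[1,m]$; the degenerate case $m=1$ (the single edge of $K_2$) is handled by hand.

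Next I would build the vertex sets using the bipartiteness of trees. Since each $T_k$ is a tree it is bipartite, say $V(K_{2m})=A_k\cup B_k$; define a partial value $\ell_k(u)=0$ for $u\in A_k$ and $\ell_k(u)=k$ for $u\in B_k$. Every edge of $T_k$ joins $A_k$ to $B_k$, hence $|\ell_k(u)-\ell_k(v)|=k$ holds on all of $E_k$. Setting $f(u):=\{\ell_k(u):k\in[1,m]\}$ makes $f(u)$ a set, giving clause (i), and for any edge $uv\in E_k$ the witnesses $a_u:=\ell_k(u)\in f(u)$ and $b_v:=\ell_k(v)\in f(v)$ yield $f(uv)=k=|a_u-b_v|$, which is exactly the graceful $W$-constraint $W\langle a,b\rangle=|a-b|$ of clause (iii). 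Thus $f$ is a graceful v-set e-proper total colouring with the stated property, and one checks clause (iii) edge-by-edge as an existential statement, so no global compatibility among the $\ell_k$ across different trees is needed.

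Theorem \ref{thm:10-k-d-total-coloringss} enters when one wants the colouring to be genuinely graceful on each individual tree rather than resting on the collapsed two-value assignment $\ell_k$: for every $T_k$ of diameter at least $3$ it furnishes a bona fide graceful $(k,d)$-total colouring $g_k$, whose vertex values I would adjoin to the vertex sets, replacing $f(u)$ by $\{\ell_k(u):k\in[1,m]\}\cup\{g_k(u):k\in[1,m]\}$. The added values neither destroy the difference-$k$ witness (still provided by $\ell_k$) nor the set structure, but they enrich $f(u)$ so that the distinct-value and graceful features of the trees become available.

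The main obstacle I anticipate is bookkeeping rather than depth. I must confirm that the decomposition is genuinely edge-disjoint and spanning, so that each edge of $K_{2m}$ receives exactly one colour $k$ and each $E_k$ is nonempty (indeed $|E_k|=2m-1\ge 3$ for $m\ge 2$). I must also handle the few low-diameter trees to which Theorem \ref{thm:10-k-d-total-coloringss} does not apply (stars, paths of length $\le 2$, or $K_2$): for the difference-$k$ requirement these cause no trouble, since bipartiteness alone suffices for the assignment $\ell_k$, so the core colouring survives and only the optional enrichment step is restricted.
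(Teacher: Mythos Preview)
Your proposal is correct and follows the same skeleton the paper signals: the paper gives no detailed proof, merely writing ``By Theorem \ref{thm:K-2m-spanning-trees-2m-1-edges} and Theorem \ref{thm:10-k-d-total-coloringss}, we have'' before the statement, so the intended argument is precisely your spanning-tree decomposition $E(K_{2m})=\bigcup_{k=1}^m E(T_k)$ followed by per-tree colouring.

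Your bipartite shortcut $\ell_k\in\{0,k\}$ is in fact slightly more direct than what the paper's citation of Theorem \ref{thm:10-k-d-total-coloringss} would supply. A genuine graceful $(k,d)$-total colouring on a tree with $2m-1\ge 3$ edges yields \emph{distinct} edge colours $k,k+d,\dots,k+(2m-2)d$, so it cannot by itself produce the constant edge value $f(uv)=k$ on all of $E_k$ that the statement demands; your two-value bipartite assignment does this immediately. In that sense your core step is actually necessary, and your relegation of Theorem \ref{thm:10-k-d-total-coloringss} to an optional enrichment of the vertex sets is the right call. The paper's citation of that theorem is likely meant in the same loose sense (trees can always be coloured as needed), but your explicit $\{0,k\}$ construction makes the argument self-contained.
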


\begin{thm}\label{thm:graceful-v-set-e-proper-co}
$^*$ For a fixed group of non-negative integers $b_1,b_2,\dots, b_m$ with $B=\sum ^m_{k=1}b_k\geq 1$, let $\textbf{\textrm{L}}(b_1,b_2,\dots, b_m)$ be the set of graphs in the lattice $\textbf{\textrm{L}}([\odot_{\textrm{prop}}]Z^0\textbf{\textrm{T}})$ defined in Eq.(\ref{eqa:proper-operation-tree-base-lattice}) such that each connected $(p,q)$-graph $G\in \textbf{\textrm{L}}(b_1,b_2,\dots, b_m)$ holds $G=[\odot_{\textrm{prop}}]^m_{k=1}b_kT_k$ true. If each tree $T_i$ of the lattice base $\textbf{\textrm{T}}$ admits a \emph{graceful $(k,d)$-total coloring} $f_i$ for $i\in [1,m]$, then each connected $(p,q)$-graph $G\in \textbf{\textrm{L}}(b_1,b_2,\dots, b_m)$ admits at least $B!$ different \emph{graceful v-set e-proper colorings} $F_k$ such that each edge color set $F_k(E(G))=[1,q]$ for $k\in [1,B!]$.
\end{thm}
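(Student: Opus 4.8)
The plan is to exploit the tree decomposition that comes with $G$ and to tile the edge-color interval $[1,q]$ by mutually shifted graceful labelings of the individual pieces, using a different shift-pattern for each of the $B!$ orderings of those pieces.

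First I would fix the data carried by $G$. By the construction preceding Eq.(\ref{eqa:proper-operation-tree-base-lattice}) and Eq.(\ref{eqa:proper-operation-tree-base-lattice111}), a graph $G=[\odot_{\textrm{prop}}]^m_{k=1}b_kT_k$ arises from a proper vertex-coinciding operation applied to a list $L_1,L_2,\dots,L_B$ with $B=\sum^m_{k=1}b_k$, where each $L_j$ is a copy of some base tree $T_{k(j)}$, the $L_j$ are pairwise edge-disjoint subgraphs of the fixed graph $G$, and $E(G)=\bigsqcup^B_{j=1}E(L_j)$ since vertex-coinciding merges only vertices and creates no multiple edge. Writing $q_j=|E(L_j)|=|V(L_j)|-1$ gives $q=\sum^B_{j=1}q_j$. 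Because each base tree admits a graceful $(k,d)$-total coloring, it admits a set-ordered graceful labeling (Theorem \ref{thm:connections-several-labelings} together with the $(k,d)=(1,1)$ Tradition cases following Definition \ref{defn:kd-w-type-colorings}); transporting it through the isomorphism, each $L_j$ carries a set-ordered graceful labeling $\theta_j$ with bipartition $V(L_j)=X_j\cup Y_j$, $\max\theta_j(X_j)<\min\theta_j(Y_j)$, and edge color set $[1,q_j]$.

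Next, for a permutation $\sigma$ of $[1,B]$ I set the offset $s_j=\sum_{t:\,\sigma(t)<\sigma(j)}q_t$, so the intervals $[s_j+1,\,s_j+q_j]$ for $j\in[1,B]$ are pairwise disjoint and tile $[1,q]$. I shift each labeling by its own offset, putting $\theta^{\sigma}_j(x)=\theta_j(x)$ for $x\in X_j$ and $\theta^{\sigma}_j(y)=\theta_j(y)+s_j$ for $y\in Y_j$; since the $Y_j$-colors exceed the $X_j$-colors, every induced edge color becomes $\theta_j(y)-\theta_j(x)+s_j$, so $\theta^{\sigma}_j$ stays graceful with edge color set $[s_j+1,\,s_j+q_j]$, i.e. a graceful $(s_j+1,1)$-total coloring of $L_j$ in the sense of Definition \ref{defn:kd-w-type-colorings}. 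Now I define $F_\sigma$ on $G$ by setting, for each edge $e\in E(L_j)$, $F_\sigma(e)=\theta^{\sigma}_j(e)$ (a number), and for each vertex $w\in V(G)$, $F_\sigma(w)=\{\theta^{\sigma}_j(w_j):L_j\text{ passes through }w\}$ (a set, where $w_j$ is the vertex of $L_j$ coincided into $w$). Since each edge $uv$ belongs to a unique $L_j$ and $F_\sigma(uv)=|\theta^{\sigma}_j(u_j)-\theta^{\sigma}_j(v_j)|$ with $\theta^{\sigma}_j(u_j)\in F_\sigma(u)$ and $\theta^{\sigma}_j(v_j)\in F_\sigma(v)$, the map $F_\sigma$ meets the three conditions of Definition \ref{defn:55-v-set-e-proper-more-labelings} for the graceful constraint; the disjoint blocks give $F_\sigma(E(G))=[1,q]$, a bijection, so the edge coloring is proper and $F_\sigma$ is a graceful v-set e-proper coloring. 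This is exactly the mechanism already used for the complete graphs $K_{2m}$ earlier in this subsection.

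Finally I would count. As all base trees are nontrivial, $q_j\ge 1$ and the offsets are strictly increasing along positions, so each tree-copy receives a distinct consecutive block; from a coloring $F_\sigma$ the value $\min F_\sigma(E(L_j))=s_j+1$ recovers the position of every $L_j$, hence recovers $\sigma$ itself. Thus $\sigma\mapsto F_\sigma$ is injective and the $B!$ permutations produce $B!$ pairwise distinct graceful v-set e-proper colorings with $F_\sigma(E(G))=[1,q]$, as claimed. I expect the counting step to be the main obstacle: one must guarantee that distinct orderings yield genuinely distinct colorings even when several isomorphic copies of the same base tree or several trees of equal size occur. The resolution is that the $L_j$ are pairwise edge-disjoint, hence distinguishable as subgraphs of the fixed graph $G$, and the strictly increasing offsets make the assignment of blocks to subgraphs injective; the shift construction and the vertex-set unions are otherwise routine given Theorem \ref{thm:connections-several-labelings} and Definition \ref{defn:55-v-set-e-proper-more-labelings}.
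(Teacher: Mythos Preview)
Your proposal is correct and follows essentially the same approach as the paper: both proofs assign to each of the $B$ tree-copies a shifted graceful $(k,d)$-total coloring (with $d=1$ and $k$ equal to the running offset $\Sigma(j-1)+1$, which is exactly your $s_j+1$) so that the edge colors of the copies tile $[1,q]$, then combine them into a v-set e-proper coloring and range over the $B!$ orderings. Your argument is in fact tighter than the paper's on the distinctness step---you explicitly recover $\sigma$ from $\min F_\sigma(E(L_j))$, whereas the paper simply invokes the existence of $B!$ permutations---but the underlying construction is the same.
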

\begin{proof} By Theorem \ref{thm:10-k-d-total-coloringss}, each tree $T$ with diameter $D(T)\geq 3$ admits at least $2^m$ different graceful $(k,d)$-total colorings for $m+1=\Big \lceil \frac{D(T)}{2}\Big \rceil $.

Notice that each tree $T_i\in \textbf{\textrm{T}}$ is a bipartite and connected graph with its own vertex set $V(T_i)=X_i\cup Y_i$ and $X_i\cap Y_i=\emptyset $. According to Definition \ref{defn:basic-W-type-labelings}, a graceful $(k,d)$-total coloring $f_i$ of each tree $T_i$ is defined as
\begin{equation}\label{eqa:graceful-v-set-e-proper-00}
{
\begin{split}
&f_i:X_i\rightarrow S_{m,0,0,d}=\{0,d,\dots ,md\}, \\
&f_i:Y_i\cup E(T_i)\rightarrow S_{q_i-1,k,0,d}=\{k,k+d,\dots ,k+(q_i-1)d\}
\end{split}}
\end{equation} where $q_i=|E(T_i)|$, and it is allowed $f_i(u)=f_i(w)$ for some distinct vertices $u,w\in V(T_i)$, such that the edge color set
\begin{equation}\label{eqa:graceful-v-set-e-proper-11}
f(E(T_i))=\{f(uv)=|f(u)-f(v)|:uv\in E(T_i)\}=\{k,k+d,\dots ,k+(q_i-1)d\}
\end{equation}

For a permutation $H_{i_1},H_{i_2},\dots ,H_{i_B}$ of the trees $b_1T_1$, $b_2T_2$, $\dots $, $b_mT_m$ from the tree base $\textbf{\textrm{T}}$ with $B=\sum ^m_{k=1}b_k\geq 1$, we have connected graphs
\begin{equation}\label{eqa:proper-operation-tree-base-lattice-coloring}
G=[\odot_{\textrm{prop}}]^B_{j=1}H_{i_j}=[\odot_{\textrm{prop}}]^m_{k=1}b_kT_k
\end{equation} and each tree $H_{i_j}$ admits a graceful $(k_{i_j},d_{i_j})$-total coloring $g_{i_j}$ defined as
\begin{equation}\label{eqa:graceful-v-set-e-proper-22}
{
\begin{split}
&g_{i_j}:X_{i_j}\rightarrow S_{m_{i_j},0,0,d_{i_j}}=\big \{0,d_{i_j},\dots ,m_{i_j}d_{i_j}\big \}, \\
&g_{i_j}:Y_{i_j}\cup E(H_{i_j})\rightarrow S_{q_{i_j}-1,k_{i_j},0,d_{i_j}}=\big \{k_{i_j},k_{i_j}+d_{i_j},\dots ,k_{i_j}+(q_{i_j}-1)d_{i_j}\big \}
\end{split}}
\end{equation} where $q_{i_j}=|E(H_{i_j})|$ and the vertex set $V(H_{i_j})=X_{i_j}\cup Y_{i_j}$ with $X_{i_j}\cap Y_{i_j}=\emptyset $, and it is allowed $g_{i_j}(x)=g_{i_j}(u)$ for some distinct vertices $x,u\in V(T_i)$, such that the edge color set
\begin{equation}\label{eqa:graceful-v-set-e-proper-33}
{
\begin{split}
g_{i_j}(E(H_{i_j}))&=\big \{g_{i_j}(uv)=|g_{i_j}(u)-g_{i_j}(v)|:uv\in E(H_{i_j})\big \}\\
&=\big \{k_{i_j},k_{i_j}+d_{i_j},\dots ,k_{i_j}+(q_{i_j}-1)d_{i_j}\big \}
\end{split}}
\end{equation}

Let $\Sigma (j)=\sum ^{j}_{k=1}q_{i_k}$ for $j\in [1,B]$. For $(k_{i_1},d_{i_1})=(1,1)$, we get the edge color set
$$g_{i_1}(E(H_{i_1}))=\big \{1,2,\dots ,q_{i_1}\big \}=[1,q_{i_1}]=[1,\Sigma (1)]
$$ As $(k_{i_2},d_{i_2})=(\Sigma (1)+1,1)$, we get the edge color set
$$g_{i_2}(E(H_{i_2}))=\big \{\Sigma (1)+1,\Sigma (1)+2,\dots ,\Sigma (1)+1+(q_{i_2}-1)\big \}=[\Sigma (1)+1,\Sigma (2)]
$$

In general, we take $(k_{i_{j+1}},d_{i_{j+1}})=(\Sigma (j)+1,1)$, then we get the edge color set
$$g_{i_{j+1}}(E(H_{i_{j+1}}))=\big \{\Sigma (j)+1,\Sigma (j)+2,\dots ,\Sigma (j)+1+(q_{i_{j+1}}-1)\big \}=\left [\Sigma (j)+1,\Sigma (j+1)\right ]
$$
For the last $(k_{i_B},d_{i_B})=(\Sigma (B-1)+1,1)$, we have the edge color set
$$g_{i_B}(E(H_{i_B}))=\big \{\Sigma (B-1)+1,\Sigma (B-1)+2,\dots ,\Sigma (B-1)+1+(q_{i_B}-1)\big \}=\left [\Sigma (B-1)+1,\Sigma (B)\right ]
$$

By Eq.(\ref{eqa:proper-operation-tree-base-lattice-coloring}) and the permutation $H_{i_1},H_{i_2},\dots ,H_{i_B}$ of the trees $b_1T_1,b_2T_2,\dots ,b_mT_m$, the connected graph $G$ admits a graceful v-set e-proper coloring $F_i=\langle g_{i_1},g_{i_2},\dots $, $g_{i_B}\rangle$ defined by setting $g_{i_j}(x)\subseteq F_i(x)$, $g_{i_j}(y)\subseteq F_i(y)$ and $F_i(xy)=g_{i_j}(xy)$ if edge $xy\in E(H_{i_j})\subseteq E(G)$, such that the edge color set
$$
F_i(E(G))=\big \{F_i(uv)=|F_i(u)-F_i(v)|:uv\in E(G)\big \}=\bigcup ^B_{j=1}g_{i_j}(E(H_{i_j}))=[1,q]
$$ where $F_i(uv)=|F_i(u)-F_i(v)|$ stands for $g_{i_j}(uv)=|g_{i_j}(u)-g_{i_j}(v)|$ if edge $uv\in E(H_{i_j})$.

Since there are $B!$ permutations $H_{i_1},H_{i_2},\dots ,H_{i_B}$ of the trees $b_1T_1,b_2T_2,\dots ,b_mT_m$ for $i\in [1,B!]$ with $B=\sum ^m_{k=1}b_k\geq 1$, we have shown that the connected graph $G\in \textbf{\textrm{L}}(b_1,b_2,\dots, b_m)$ admits $B!$ different graceful v-set e-proper colorings $f$ with the edge color set $f(E(G))=[1,q]$.

The proof of the theorem is complete.
\end{proof}

\begin{cor}\label{cor:others-v-set-e-proper-co}
$^*$ For a fixed group of non-negative integers $a_1,a_2,\dots, a_m$ with $A=\sum ^m_{k=1}a_k\geq 1$, let $\textbf{\textrm{L}}(a_1,a_2,\dots, a_m)$ be the set of graphs in the lattice $\textbf{\textrm{L}}([\odot_{\textrm{prop}}]Z^0\textbf{\textrm{T}})$ defined in Eq.(\ref{eqa:proper-operation-tree-base-lattice}) such that each connected $(p,q)$-graph $G\in \textbf{\textrm{L}}(a_1,a_2,\dots, a_m)$ holding $G=[\odot_{\textrm{prop}}]^m_{k=1}a_kT_k$. If each tree $T_i$ in the lattice base $\textbf{\textrm{T}}$ admits a \emph{graceful $(k,d)$-total coloring} $\varphi_i$ for $i\in [1,m]$, then each connected $(p,q)$-graph $G\in \textbf{\textrm{L}}(a_1,a_2,\dots, a_m)$ admits at least $A!$ different \emph{$W$-constraint v-set e-proper colorings} $\theta_k=\langle \varphi_1,\varphi_2,\dots ,\varphi_m\rangle $ such that

(i) As $W$-constraint = \emph{harmonious}, so each edge color set $\theta_k(E(G))=[1,q]$ for $k\in [1,A!]$.

(ii) As $W$-constraint = \emph{odd-graceful}, then each edge color set $\theta_k(E(G))=[1,2q-1]^o$ for $k\in [1,A!]$.

(iii) As $W$-constraint = \emph{odd-elegant}, thus each edge color set $\theta_k(E(G))=[1,2q-1]^o$ for $k\in [1,A!]$.
\end{cor}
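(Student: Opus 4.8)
The plan is to adapt the proof of Theorem~\ref{thm:graceful-v-set-e-proper-co} almost verbatim, replacing the graceful $(k,d)$-total coloring of each tree by a $W$-constraint $(k,d)$-total coloring with $W\in\{$harmonious, odd-graceful, odd-elegant$\}$, and then checking that the chosen parameters make the per-tree edge color sets tile the target color set exactly. First I would fix a permutation $H_{i_1},H_{i_2},\dots,H_{i_A}$ of the trees $a_1T_1,a_2T_2,\dots,a_mT_m$ with $A=\sum_{k=1}^m a_k\geq 1$, so that $G=[\odot_{\textrm{prop}}]^A_{j=1}H_{i_j}=[\odot_{\textrm{prop}}]^m_{k=1}a_kT_k$ and $E(G)=\bigcup_{j=1}^A E(H_{i_j})$ is an edge-disjoint union by the definition of the lattice in Eq.(\ref{eqa:proper-operation-tree-base-lattice}). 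Each $H_{i_j}$ is a bipartite tree with bipartition $V(H_{i_j})=X_{i_j}\cup Y_{i_j}$; since it admits a graceful $(k,d)$-total coloring by hypothesis, Theorem~\ref{thm:equivalent-k-d-total-colorings} applied to the bipartite tree $H_{i_j}$ supplies a harmonious $(k,d)$-total coloring, the same set-ordered difference structure with $d=2$ supplies an odd-graceful $(k,d)$-total coloring in the sense of Definition~\ref{defn:kd-w-type-colorings}, and the tree equivalences of Theorem~\ref{thm:connections-several-labelings} supply an odd-elegant one.

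Second, I would carry out the tiling of the edge color set exactly as in Theorem~\ref{thm:graceful-v-set-e-proper-co}, writing $\Sigma(j)=\sum_{k=1}^j q_{i_k}$ with $q_{i_k}=|E(H_{i_k})|$. For the harmonious case I take $d=1$, $k_{i_1}=1$ and $k_{i_{j+1}}=\Sigma(j)+1$, so that the harmonious $(k_{i_j},1)$-total coloring of $H_{i_j}$ has edge color set $[\Sigma(j-1)+1,\Sigma(j)]$, and these consecutive intervals tile $[1,q]$. For the odd-graceful and odd-elegant cases I take $d=2$ and shift $k_{i_j}$ through the odd residues so that the edge color set of $H_{i_j}$ is the block $[2\Sigma(j-1)+1,\,2\Sigma(j)-1]^o$ of $q_{i_j}$ consecutive odd integers, and these blocks tile $[1,2q-1]^o$. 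In each case I then define $\theta_i=\langle g_{i_1},g_{i_2},\dots,g_{i_A}\rangle$ on $G$ by placing the vertex colors of the $H_{i_j}$ into the vertex sets $\theta_i(x)$ and setting $\theta_i(xy)=g_{i_j}(xy)$ for $xy\in E(H_{i_j})$; this is a $W$-constraint v-set e-proper coloring in the sense of Definition~\ref{defn:55-v-set-e-proper-more-labelings}, since each edge color is realized by a pair $a\in\theta_i(x)$, $b\in\theta_i(y)$ satisfying the constraint, and since the edge color sets of distinct trees are disjoint the induced edge coloring of $G$ is proper. Ranging over the $A!$ permutations of $a_1T_1,\dots,a_mT_m$ then yields $A!$ such colorings, completing assertions (i)--(iii).

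The main obstacle will be the two sum-based constraints, harmonious and odd-elegant, rather than the difference-based odd-graceful one. For difference constraints the edge color $|g(u)-g(v)|$ is insensitive to a uniform shift of a whole tree's vertex colors, so the tiling argument transfers from the graceful case with no extra bookkeeping. For harmonious and odd-elegant, however, the edge color is $g(u)+g(v)$ reduced modulo $q_{i_j}d$ inside the tree $H_{i_j}$, whereas the v-set e-proper constraint on $G$ is read modulo $qd$ with $q=|E(G)|=\sum_j q_{i_j}$. I will therefore have to verify that, under the chosen parameters $(k_{i_j},d)$, the vertex-color sums on each tree stay in a residue window that prevents any wraparound differing between the two moduli, so that the intended block (the interval $[\Sigma(j-1)+1,\Sigma(j)]$, respectively the odd block $[2\Sigma(j-1)+1,2\Sigma(j)-1]^o$) is reproduced exactly and the blocks remain disjoint. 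Once this modular consistency is established for the sum-based constraints, the remainder of the argument, including the counting of the $A!$ permutations, is identical to that of Theorem~\ref{thm:graceful-v-set-e-proper-co}.
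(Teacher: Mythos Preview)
Your proposal is correct and follows essentially the same approach the paper implicitly uses: the corollary is stated immediately after Theorem~\ref{thm:graceful-v-set-e-proper-co} with no separate proof, so the intended argument is precisely ``run the proof of Theorem~\ref{thm:graceful-v-set-e-proper-co} with the appropriate $W$-constraint $(k,d)$-total coloring in place of the graceful one,'' which is what you do. Your tiling choices (intervals for harmonious with $d=1$, odd blocks for odd-graceful and odd-elegant with $d=2$) and the assembly of $\theta_i=\langle g_{i_1},\dots,g_{i_A}\rangle$ match the paper's template exactly, and the $A!$ count comes from the same permutation argument.

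The modular-consistency obstacle you flag for the sum-based constraints is a genuine technical wrinkle that the paper itself does not address; note, however, that Definition~\ref{defn:55-v-set-e-proper-more-labelings} only requires that for each edge $xy$ there exist \emph{some} $a\in\theta_i(x)$, $b\in\theta_i(y)$ with $c=W\langle a,b\rangle$, so the $W$-constraint can be read locally on each $H_{i_j}$ with its own modulus $q_{i_j}d$, and you need not reconcile a global modulus $qd$. With that reading, the wraparound worry disappears and the rest of your argument goes through unchanged.
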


\begin{thm}\label{thm:k-d-graceful-v-set-e-proper-co}
$^*$ For a fixed group of non-negative integers $a_1,a_2,\dots, a_m$ with $A=\sum ^m_{k=1}a_k\geq 1$, let $\textbf{\textrm{L}}(a_1,a_2,\dots, a_m)$ be the set of graphs in the lattice $\textbf{\textrm{L}}([\odot_{\textrm{prop}}]Z^0\textbf{\textrm{T}})$ defined in Eq.(\ref{eqa:proper-operation-tree-base-lattice}) such that each connected $(p,q)$-graph $G\in \textbf{\textrm{L}}(a_1,a_2,\dots, a_m)$ holding $G=[\odot_{\textrm{prop}}]^m_{k=1}a_kT_k$. If each tree $T_i$ in the lattice base $\textbf{\textrm{T}}$ admits a \emph{graceful $(k,d)$-total coloring} $g_i$ for $i\in [1,m]$, then each connected $(p,q)$-graph $G\in \textbf{\textrm{L}}(a_1,a_2,\dots, a_m)$ admits at least $A!$ different \emph{$(k,d)$-graceful v-set e-proper colorings} $\pi_i$ with the edge color set
\begin{equation}\label{eqa:555555}
\pi_i(E(G))=\big \{k,k+d,\dots ,k+(q-1)d\big \}=[k,k+(q-1)d]_{+d},~i\in [1,A!]
\end{equation}
\end{thm}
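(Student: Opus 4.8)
The plan is to mimic almost verbatim the argument of Theorem \ref{thm:graceful-v-set-e-proper-co}, with the single structural change that the common difference $d$ is now held general throughout and the base parameter $k$ is the only quantity shifted from one tree-block to the next. First I would invoke Theorem \ref{thm:10-k-d-total-coloringss} to guarantee that each tree $T_i$ of the lattice base $\textbf{\textrm{T}}$ admits a graceful $(k,d)$-total coloring for the required admissible parameters; recall that each $T_i$ is bipartite with $V(T_i)=X_i\cup Y_i$ and $X_i\cap Y_i=\emptyset$, and that such a coloring is set-ordered as in Eq.(\ref{eqa:graceful-v-set-e-proper-22}), so its edge color set is the arithmetic progression $\{k,k+d,\dots,k+(q_i-1)d\}$ with $q_i=|E(T_i)|$.

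Next I would fix a permutation $H_{i_1},H_{i_2},\dots,H_{i_A}$ of the multiset $a_1T_1,a_2T_2,\dots,a_mT_m$ with $A=\sum_{k=1}^m a_k$, and form the connected graph $G=[\odot_{\textrm{prop}}]^A_{j=1}H_{i_j}=[\odot_{\textrm{prop}}]^m_{k=1}a_kT_k$ as in Eq.(\ref{eqa:proper-operation-tree-base-lattice-coloring}). Writing $q_{i_j}=|E(H_{i_j})|$ and $\Sigma(j)=\sum_{l=1}^{j}q_{i_l}$ with $\Sigma(0)=0$, the key idea is to choose for $H_{i_j}$ a graceful $(k_{i_j},d)$-total coloring $g_{i_j}$ with the \emph{same} common difference $d$ but the shifted base $k_{i_j}=k+\Sigma(j-1)d$. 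Then the edge color set of $H_{i_j}$ becomes $\{k+\Sigma(j-1)d,\dots,k+(\Sigma(j)-1)d\}$, so these $A$ blocks tile the target progression $[k,k+(q-1)d]_{+d}$ perfectly, where $q=\Sigma(A)=|E(G)|$, with no repetition since consecutive blocks abut at $k+\Sigma(j)d$.

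Having the $g_{i_j}$ in hand, I would assemble them into a single $(k,d)$-graceful v-set e-proper coloring $\pi_i$ of $G$ exactly as in Theorem \ref{thm:graceful-v-set-e-proper-co}: under the proper vertex-coinciding operation each vertex of $G$ absorbs the colors of the tree-vertices coinciding into it, so $\pi_i(w)$ is a set for each $w\in V(G)$, while $\pi_i(xy)=g_{i_j}(xy)$ is the single number on the edge $xy\in E(H_{i_j})\subseteq E(G)$, and the graceful witness $g_{i_j}(xy)=|g_{i_j}(u)-g_{i_j}(v)|$ supplies the required $a\in\pi_i(u)$, $b\in\pi_i(v)$ with $\pi_i(xy)=|a-b|$ (Definition \ref{defn:55-v-set-e-proper-more-labelings}). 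Taking the union of the $A$ edge color blocks gives $\pi_i(E(G))=[k,k+(q-1)d]_{+d}$, and running over all $A!$ permutations of $a_1T_1,\dots,a_mT_m$ yields the claimed $A!$ colorings.

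The main obstacle will be the bookkeeping that the blocks tile the progression without collision while $d$ stays fixed: I must check that holding $d$ fixed and shifting only the base by integer multiples of $d$ still produces a legitimate graceful $(k_{i_j},d)$-total coloring on each $H_{i_j}$ (this is precisely where Theorem \ref{thm:10-k-d-total-coloringss} is essential, since it provides such colorings for arbitrary admissible $(k,d)$), and that $[\odot_{\textrm{prop}}]$ does not collapse two edges of different blocks to a common color — which follows because the blocks $[k+\Sigma(j-1)d,k+(\Sigma(j)-1)d]_{+d}$ are pairwise disjoint by construction and $\odot_{\textrm{prop}}$ keeps $E(G)=\bigcup_{j}E(H_{i_j})$ multiplicity-free.
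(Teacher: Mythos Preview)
Your proposal is correct and follows essentially the same route as the paper: the paper's proof likewise references the argument of Theorem \ref{thm:graceful-v-set-e-proper-co}, keeps the common difference $d$ fixed throughout, and shifts only the base parameter by setting $(k_{i_{j}},d_{i_{j}})=(k+d\sum_{l<j}q_{i_l},\,d)$ so that the edge-color blocks of the $H_{i_j}$ tile $[k,k+(q-1)d]_{+d}$; the colorings $\pi_i$ are then assembled exactly as you describe, and the $A!$ count again comes from the permutations of $a_1T_1,\dots,a_mT_m$.
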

\textbf{The main idea of the proof of Theorem \ref{thm:k-d-graceful-v-set-e-proper-co}.} By Eq.(\ref{eqa:graceful-v-set-e-proper-00}), Eq.(\ref{eqa:graceful-v-set-e-proper-11}), Eq.(\ref{eqa:graceful-v-set-e-proper-22}) and Eq.(\ref{eqa:graceful-v-set-e-proper-33}) shown in the proof of Theorem \ref{thm:graceful-v-set-e-proper-co}, we present the main proof process as follows:

Let $\Sigma _j(k,d)=k+d\sum ^{j}_{k=1}q_{i_k}$ for $j\in [0,A]$, where $\Sigma _0(k,d)=k$.

For $(k_{i_1},d_{i_1})=(k,d)$, we get the edge color set
$$g_{i_1}(E(H_{i_1}))=\big \{k,k+d,\dots ,k+(q_{i_1}-1)d\big \}=[\Sigma _0(k,d),\Sigma _1(k,d)-d]_{+d}
$$ As $(k_{i_2},d_{i_2})=(k+(q_{i_1}-1)d+d,d)=(\Sigma _1(k,d),d)$, we get the edge color set
$${
\begin{split}
g_{i_2}(E(H_{i_2}))&=\big \{k+q_{i_1}d,k+q_{i_1}d+d,\dots ,k+q_{i_1}d+(q_{i_2}-1)d\big \}\\
&=\big \{\Sigma _1(k,d),\Sigma _1(k,d)+d,\dots ,\Sigma _1(k,d)+(q_{i_2}-1)d\big \}\\
&=[\Sigma _1(k,d),\Sigma _2(k,d)-d]_{+d}
\end{split}}
$$

In general, we take $(k_{i_{j+1}},d_{i_{j+1}})=(\Sigma _j(k,d),d)$, then we get the edge color set
$${
\begin{split}
g_{i_{j+1}}(E(H_{i_{j+1}}))&=\big \{\Sigma _j(k,d),\Sigma _j(k,d)+d,\dots ,\Sigma _j(k,d)+(q_{i_{j+1}}-1)d\big \}\\
&= [\Sigma _j(k,d),\Sigma _{j+1}(k,d)-d ]_{+d}
\end{split}}
$$
For the last $(k_{i_A},d_{i_A})=(\Sigma _{A-1}(k,d),d)$, we have the edge color set
$${
\begin{split}
g_{i_A}(E(H_{i_A}))&=\big \{\Sigma _{A-1}(k,d),\Sigma _{A-1}(k,d)+d,\dots ,\Sigma _{A-1}(k,d)+(q_{i_A}-1)d\big \}\\
&= [\Sigma _{A-1}(k,d),\Sigma _{A}(k,d)-d ]_{+d}
\end{split}}
$$ Thereby, based on the permutation $H_{i_1},H_{i_2},\dots ,H_{i_A}$ of the trees $b_1T_1,b_2T_2,\dots ,b_mT_m$, the connected graph $G$ admits a graceful v-set e-proper coloring $\pi_i=\langle g_{i_1},g_{i_2},\dots ,g_{i_A}\rangle$ defined by setting $g_{i_j}(x)\subseteq F_i(x)$, $g_{i_j}(y)\subseteq F_i(y)$ and $F_i(xy)=g_{i_j}(xy)$ if edge $xy\in E(H_{i_j})\subseteq E(G)$, such that the edge color set
$${
\begin{split}
\pi_i(E(G))&=\big \{\pi_i(uv)=|\pi_i(u)-\pi_i(v)|:uv\in E(G)\big \}=\bigcup ^A_{j=1}g_{i_j}(E(H_{i_j}))\\
&=\bigcup ^A_{j=0}[\Sigma _j(k,d),\Sigma _{j+1}(k,d)-d ]_{+d}\\
&=[k,k+(q-1)d]_{+d}
\end{split}}
$$ where $\pi_i(uv)=|\pi_i(u)-\pi_i(v)|$ stands for $g_{i_j}(uv)=|g_{i_j}(u)-g_{i_j}(v)|$ if edge $uv\in E(H_{i_j})$. Refer to the proof of Theorem \ref{thm:graceful-v-set-e-proper-co} for the complete proof of Theorem \ref{thm:k-d-graceful-v-set-e-proper-co}.\qqed

\begin{rem}\label{rem:333333}
\textbf{The analysis of computational complexity} about Theorem \ref{thm:graceful-v-set-e-proper-co}, Corollary \ref{cor:others-v-set-e-proper-co}, and Theorem \ref{thm:k-d-graceful-v-set-e-proper-co}.

Suppose that each tree $T_i$ in the tree base $\textbf{\textrm{T}}=(T_1,T_2,\dots ,T_m)$ has its own diameter $D(T_i)\geq 3$. Let $m_i+1=\Big \lceil \frac{D(T_i)}{2}\Big \rceil $ for $i\in [1,m]$, then the each tree $T_i$ admits at least $2^{m_i}$ different \emph{graceful $(k,d)$-total coloring} $g_{i_j}$ for $j\in [1,2^{m_i}]$.

For a permutation $J_{i_1},J_{i_2},\dots ,J_{i_A}$ of the trees $a_1T_1$, $a_2T_2$, $\dots $, $a_mT_m$ from the tree base $\textbf{\textrm{T}}$ with $A=\sum ^m_{k=1}a_k\geq 1$, then the each tree $J_{i_j}$ admits at least $2^{m_{i_j}}$ different \emph{graceful $(k,d)$-total coloring} $g_{i_j}$ for $j\in [1,2^{m_{i_j}}]$, where $m_{i_j}+1=\Big \lceil \frac{D(T_{i_j})}{2} \Big \rceil $. Let
\begin{equation}\label{eqa:analysis-complexity-123}
n_{um}(A,m_{i_j})=A!\prod ^A_{j=1}2^{m_{i_j}}
\end{equation}

(i) In Theorem \ref{thm:graceful-v-set-e-proper-co}, each connected $(p,q)$-graph $G\in \textbf{\textrm{L}}(b_1,b_2,\dots, b_m)$ admits at least $n_{um}(A,m_{i_j})$ \emph{graceful v-set e-proper colorings} $F_k$ such that each edge color set $F_k(E(G))=[1,q]$.

Since each Topcode-matrix $T_{code}(G,F_k)$ produces $(3q)!$ super-strings, then this graph $G$ distributes us $n_{um}(A,m_{i_j})\cdot (3q)!$ different number-based strings, in total.

\vskip 0.2cm

(ii) In Corollary \ref{cor:others-v-set-e-proper-co}, each connected $(p,q)$-graph $G\in \textbf{\textrm{L}}(a_1,a_2,\dots, a_m)$ admits at least $n_{um}(A,m_{i_j})$ \emph{$W$-constraint v-set e-proper colorings} for $W$-constraint $\in \{$harmonious, odd-graceful, odd-elegant$\}$.

\vskip 0.2cm

(iii) In Theorem \ref{thm:k-d-graceful-v-set-e-proper-co}, each connected $(p,q)$-graph $G\in \textbf{\textrm{L}}(a_1,a_2,\dots, a_m)$ admits at least $n_{um}(A,m_{i_j})$ \emph{$(k,d)$-graceful v-set e-proper colorings} $\pi_i$ with the edge color set $\pi_i(E(G))=[k,k+(q-1)d]_{+d}$. When this graph $G$ is bipartite, we have a parameterized Topcode-matrix
$$P_{(k,d)}(G,\pi_i)=k\cdot I\,^0+d\cdot T_{code}(G,F_i)
$$ for a graceful v-set e-proper colorings $F_i$ for $i\in [1,n_{um}(A,m_{i_j})]$.\paralled
\end{rem}

\begin{thm}\label{thm:666666}
By the hypothesis of Theorem \ref{thm:graceful-v-set-e-proper-co} and the number $n_{um}(A,m_{i_j})$ defined in Eq.(\ref{eqa:analysis-complexity-123}), each connected $(p,q)$-graph $G\in \textbf{\textrm{L}}(b_1,b_2,\dots, b_m)$ holding $G=[\odot_{\textrm{prop}}]^m_{k=1}b_kT_k$ admits at least $n_{um}(A,m_{i_j})$

(i) \emph{different graceful v-set e-proper colorings} $F_k$ such that each edge color set $F_k(E(G))=[1,q]$ with $k\in [1, n_{um}(A,m_{i_j})]$;

(ii) different $W$-constraint v-set e-proper colorings for each $W$-constraint $\in \{$harmonious, odd-graceful, odd-elegant$\}$;

(iii) different $(k,d)$-graceful v-set e-proper colorings $\pi_i$ with the edge color set
$$\pi_i(E(G))=[k,k+(q-1)d]_{+d}=\{k,k+d,\dots ,k+(q-1)d\},~i\in [1, n_{um}(A,m_{i_j})]$$
\end{thm}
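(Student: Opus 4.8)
The plan is to upgrade the three counting results Theorem \ref{thm:graceful-v-set-e-proper-co}, Corollary \ref{cor:others-v-set-e-proper-co}, and Theorem \ref{thm:k-d-graceful-v-set-e-proper-co} from the crude count $A!$ of tree-permutations to the finer count $n_{um}(A,m_{i_j})=A!\prod^A_{j=1}2^{m_{i_j}}$, by observing that in the constructions used there each individual tree-part may be coloured in many mutually independent ways. I would keep the skeleton of the proof of Theorem \ref{thm:graceful-v-set-e-proper-co} intact: fix the decomposition $G=[\odot_{\textrm{prop}}]^m_{k=1}b_kT_k$, take a permutation $H_{i_1},H_{i_2},\dots,H_{i_A}$ of the tree-copies $b_1T_1,b_2T_2,\dots,b_mT_m$ with $A=\sum^m_{k=1}b_k$, and assign consecutive parameters $(k_{i_{j+1}},d_{i_{j+1}})=(\Sigma(j)+1,1)$ for parts (i),(ii) (resp. the parameterized choice $(\Sigma_j(k,d),d)$ for part (iii)) so that the edge-colour intervals $g_{i_j}(E(H_{i_j}))$ tile $[1,q]$ (resp. $[k,k+(q-1)d]_{+d}$), exactly as before.

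First I would invoke Theorem \ref{thm:10-k-d-total-coloringss}: since every tree $T_k$ in the base has diameter $D(T_k)\geq 3$, setting $m_k+1=\lceil D(T_k)/2\rceil$ it admits at least $2^{m_k}$ distinct graceful $(k,d)$-total colourings, all sharing the same edge-colour set for the chosen $(k,d)$. Hence, within a fixed permutation, each tree-part $H_{i_j}$ can realize its assigned interval through any one of its $2^{m_{i_j}}$ colourings, and these choices over the $A$ parts are independent; this yields $\prod^A_{j=1}2^{m_{i_j}}$ graceful v-set e-proper colourings per permutation. Multiplying by the $A!$ permutations produces $n_{um}(A,m_{i_j})$ colourings, all with $F_k(E(G))=[1,q]$, proving (i). For (ii) I would repeat the argument verbatim, replacing the graceful tiling with the harmonious, odd-graceful, and odd-elegant tilings of Corollary \ref{cor:others-v-set-e-proper-co}, the per-part factor $2^{m_{i_j}}$ being uniform across these $W$-constraints by Theorem \ref{thm:10-k-d-total-coloringss}; for (iii) I would run the parameterized tiling of Theorem \ref{thm:k-d-graceful-v-set-e-proper-co}, again multiplying each permutation by its per-part coloring choices to reach the edge-colour set $\pi_i(E(G))=[k,k+(q-1)d]_{+d}$.

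The main obstacle will be distinctness: one must ensure the $n_{um}(A,m_{i_j})$ constructions yield genuinely pairwise different colourings of $G$, not merely different recipes. Two different permutations assign different colour-intervals to the fixed edge-subsets $E(H_{i_j})$ of $E(G)$, so their induced single-valued edge functions $F(xy)$ already differ, and distinctness there is immediate. The delicate case is two colour-choices inside one permutation: the only danger is that changing one tree-part's colouring alters merely a vertex value $g_{i_j}(x)$ at a coincided vertex $x$ in a way that is absorbed into the set-valued colour $F(x)$, which is the union of the part-colourings over all tree-parts meeting $x$. To exclude this I would argue at the edge level, namely that the $2^{m_{i_j}}$ graceful $(k,d)$-total colourings furnished by Theorem \ref{thm:10-k-d-total-coloringss} can be taken pairwise distinct as edge-labellings, so that the numbers $F(xy)$ separate the choices; should such edge-level separation fail to hold for all $2^{m_{i_j}}$ colourings, the fallback is to pass to a sub-family of edge-distinct colourings of the same cardinality $2^{m_{i_j}}$, or to track a distinguishing element inside each modified part. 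Establishing this injectivity (or, failing that, replacing $n_{um}$ by the honest count of edge-distinct colourings) is the one step requiring genuine care; the tiling and the multiplicativity of the two independent choice-sets are routine once it is settled.
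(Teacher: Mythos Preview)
Your proposal is correct and follows essentially the same approach as the paper: the theorem is stated without a separate proof, but the preceding Remark on computational complexity already supplies the argument you describe --- invoke Theorem \ref{thm:10-k-d-total-coloringss} to obtain $2^{m_{i_j}}$ colourings per tree-part, multiply these independent choices over the $A$ parts, and then multiply by the $A!$ permutations, feeding the result through the constructions of Theorem \ref{thm:graceful-v-set-e-proper-co}, Corollary \ref{cor:others-v-set-e-proper-co}, and Theorem \ref{thm:k-d-graceful-v-set-e-proper-co}. Your distinctness discussion is actually more careful than the paper, which does not address whether different recipes yield genuinely different colourings of $G$; the paper simply asserts the count $n_{um}(A,m_{i_j})$ without justifying injectivity, so your edge-level separation argument (distinguishing permutations via the interval assigned to each $E(H_{i_j})$, and distinguishing intra-permutation choices via the single-valued edge colours) is a welcome strengthening rather than a deviation.
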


\begin{thm}\label{thm:v-set-e-proper-4-magic}
$^*$ For a fixed group of non-negative integers $d_1,d_2,\dots, d_m$ with $\Phi=\sum ^m_{k=1}d_k\geq 1$, let $\textbf{\textrm{L}}(d_1,d_2,\dots, d_m)$ be the set of graphs in the lattice $\textbf{\textrm{L}}([\odot_{\textrm{prop}}]Z^0\textbf{\textrm{T}})$ defined in Eq.(\ref{eqa:proper-operation-tree-base-lattice}) such that each connected $(p,q)$-graph $G\in \textbf{\textrm{L}}(d_1,d_2,\dots, d_m)$ holding $G=[\odot_{\textrm{prop}}]^m_{k=1}d_kT_k$. If each tree $T_i$ of the lattice base $\textbf{\textrm{T}}$ admits a \emph{graceful $(k,d)$-total coloring} $g_i$ for $i\in [1,m]$, then each connected $(p,q)$-graph $G\in \textbf{\textrm{L}}(d_1,d_2,\dots, d_m)$ admits at least $\Phi !$ different \emph{magic-constraint v-set e-proper colorings} $F_i$ with $i\in [1,\Phi!]$.
\end{thm}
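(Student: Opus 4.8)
The plan is to mirror the proofs of Theorem \ref{thm:graceful-v-set-e-proper-co} and Theorem \ref{thm:k-d-graceful-v-set-e-proper-co} in structure, merely swapping the graceful $(k,d)$-total colorings used there for the four magic-constraint $(k,d)$-total colorings. First I would invoke Theorem \ref{thm:equivalent-k-d-total-colorings}: since every tree $T_i$ is bipartite and connected, and each $T_i$ admits a graceful $(k,d)$-total coloring $g_i$ by hypothesis, the same $T_i$ also admits an edge-magic, an edge-difference, a felicitous-difference and a graceful-difference $(k,d)$-total coloring. Thus the ``magic-constraint'' colorings required in the conclusion are available on every lattice-base tree, which is the only new ingredient beyond the graceful case.

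Next I would reproduce the telescoping-parameter construction. Fix a permutation $H_{i_1},H_{i_2},\dots,H_{i_\Phi}$ of the trees $d_1T_1,d_2T_2,\dots,d_mT_m$ with $\Phi=\sum^m_{k=1}d_k$, so that $G=[\odot_{\textrm{prop}}]^{\Phi}_{j=1}H_{i_j}$ and $E(G)=\bigcup^{\Phi}_{j=1}E(H_{i_j})$ with no multiple edges. On each $H_{i_j}$ I would place a magic-constraint $(k_{i_j},d_{i_j})$-total coloring $g_{i_j}$, choosing the starting parameters by the same cumulative rule $\Sigma_j(k,d)=k+d\sum^{j}_{k=1}q_{i_k}$ used in Theorem \ref{thm:k-d-graceful-v-set-e-proper-co}, so that the edge color blocks of consecutive subtrees abut into one arithmetic progression. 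I would then assemble the v-set e-proper coloring $F_i=\langle g_{i_1},\dots,g_{i_\Phi}\rangle$ exactly as before: each vertex receives the union of the numbers assigned to its split-copies (hence a set), each edge $xy\in E(H_{i_j})$ keeps the number $g_{i_j}(xy)$, and the graph homomorphism $\langle\bigcup_j H_{i_j}\rangle\to G$ guarantees the vertex-set/edge-number pattern of Definition \ref{defn:55-v-set-e-proper-more-labelings} is well defined.

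The verification of the magic constraint then reduces to a per-edge statement: for each edge $uv$ there is a unique subtree $H_{i_j}$ containing it, and the numbers $a=g_{i_j}(u)$, $b=g_{i_j}(v)$, $c=g_{i_j}(uv)$ already satisfy the chosen magic relation by the choice of $g_{i_j}$; since $a\in F_i(u)$ and $b\in F_i(v)$, the local requirement $f(uv)=W\langle a,b\rangle$ of Definition \ref{defn:55-v-set-e-proper-more-labelings} holds. Finally, the $\Phi!$ distinct permutations yield $\Phi!$ colorings whose induced Topcode-matrices (equivalently, whose vertex-set distributions) differ, giving the claimed count.

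The hard part will be the status of the magic \emph{constant}. Unlike the graceful and edge-difference cases, an edge-magic (or felicitous-/graceful-difference) $(k,d)$-total coloring carries a constant that varies with $(k,d)$ --- for instance the edge-magic constant becomes of the $d\cdot\lambda+(b+c)k$ form computed in Theorem \ref{thm:abc-linear-magic-string-colorings} --- so distinct starting parameters on distinct subtrees force the subtrees to have \emph{distinct} magic constants. I would therefore argue that the notion of a magic-constraint v-set e-proper coloring in Definition \ref{defn:55-v-set-e-proper-more-labelings} demands only the edge-local identity $c=W\langle a,b\rangle$, which is met subtree-by-subtree, so no global constant is needed; the alternative of insisting on a single constant across $G$ would require all $(k_{i_j},d_{i_j})$ to coincide, which collapses the telescoping and would need a separate argument to keep the $\Phi!$ colorings distinct. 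Pinning down which of these two readings is intended, and discharging it cleanly, is the only genuine obstacle.
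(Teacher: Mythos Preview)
Your proposal takes a genuinely different route from the paper, and the difference lies exactly in the ``hard part'' you isolated. The paper does \emph{not} use the telescoping parameter choice $\Sigma_j(k,d)$ from Theorems \ref{thm:graceful-v-set-e-proper-co} and \ref{thm:k-d-graceful-v-set-e-proper-co}; that device was needed there only to make the edge-color blocks of the subtrees tile a prescribed interval, and no such tiling is demanded in the magic-constraint statement. Instead, the paper lets each subtree $G_{i_j}$ carry a magic-constraint total coloring $\eta_{i_j}$ with its own constant $\lambda_{i_j}$, then \emph{rescales}: setting $\beta_i=\prod_{k=1}^{\Phi}\lambda_{i_k}$ and $\beta_{i_j}=\beta_i/\lambda_{i_j}$, the scalar multiple $h_{i_j}=\beta_{i_j}\,\eta_{i_j}$ satisfies the same magic identity with the \emph{common} constant $\beta_i$, because each of the four magic constraints is homogeneous of degree one. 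The assembled $F_i=\langle h_{i_1},\dots,h_{i_\Phi}\rangle$ is then a magic-constraint v-set e-proper coloring of $G$ with a single global constant $\beta_i$, exactly the stronger reading you worried might be unattainable.

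So your dichotomy ``either accept a per-subtree constant, or force all $(k_{i_j},d_{i_j})$ to coincide and lose the $\Phi!$ count'' was a false one: the scalar-multiplication trick gives a third way out. Your approach is not wrong under the weak local reading of Definition \ref{defn:55-v-set-e-proper-more-labelings}, but it is weaker than what the paper actually establishes (cf.\ the global constant $\beta_i$ in Corollary \ref{cor:uniformly-v-set-e-proper-4-magic}), and it drags along the telescoping machinery for no purpose in this setting.
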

\begin{proof} By Theorem \ref{thm:10-k-d-total-coloringss}, each tree $T$ with diameter $D(T)\geq 3$ admits at least $2^m$ different magic-constraint $(k,d)$-total colorings for magic-constraint$\{$edge-magic, edge-difference, felicitous-difference, graceful-difference$\}$, where $m+1=\Big \lceil \frac{D(T)}{2}\Big \rceil $.

Taking the $i$th permutation $G_{i_1},G_{i_2},\dots ,G_{i_\Phi}$ of the trees $d_1T_1$, $d_2T_2$, $\dots $, $d_mT_m$ from the tree base $\textbf{\textrm{T}}$ with $\Phi=\sum ^m_{k=1}d_k\geq 1$, we have connected graphs
\begin{equation}\label{eqa:v-set-e-proper-4-magic00}
G=[\odot_{\textrm{prop}}]^{\Phi}_{k=1}G_{i_k}=[\odot_{\textrm{prop}}]^m_{k=1}d_kT_k
\end{equation} and each tree $G_{i_j}$ admits a magic-constraint total coloring $\eta_{i_j}$ related a constant $\lambda_{i_j}$.

Let $\beta_i=\Pi^{\Phi}_{k=1}\lambda_{i_k}$, and $\beta_{i_j}=\frac{\beta_i}{\lambda_{i_j}}$ for $j\in [1,\Phi]$.

We define a new total coloring $h_{i_j}=\beta_{i_j} \eta_{i_j}$. For each edge $xy\in E(G_{i_j})$, then we have the edge-magic constraint
\begin{equation}\label{eqa:v-set-e-proper-4-magic11}
h_{i_j}(x)+h_{i_j}(y)+h_{i_j}(xy)=\beta_{i_j}[\eta_{i_j}(x)+\eta_{i_j}(y)+\eta_{i_j}(xy)]=\beta_{i_j}\lambda_{i_j}=\beta_i
\end{equation} when the edge-magic constraint $\eta_{i_j}(x)+\eta_{i_j}(y)+\eta_{i_j}(xy)=\lambda_{i_j}$ holds true; and we have the edge-difference constraint
\begin{equation}\label{eqa:v-set-e-proper-4-magic22}
h_{i_j}(xy)+|h_{i_j}(x)-h_{i_j}(y)|=\beta_{i_j}[\eta_{i_j}(xy)+|\eta_{i_j}(x)-\eta_{i_j}(y)|]=\beta_{i_j}\lambda_{i_j}=\beta_i
\end{equation} when as the edge-difference constraint $\eta_{i_j}(xy)+|h_{i_j}(x)-\eta_{i_j}(y)|=\lambda_{i_j}$ holds true; and we have the felicitous-difference constraint
\begin{equation}\label{eqa:v-set-e-proper-4-magic33}
|h_{i_j}(x)+h_{i_j}(y)-h_{i_j}(xy)|=\beta_{i_j}|\eta_{i_j}(x)+\eta_{i_j}(y)-\eta_{i_j}(xy)|=\beta_{i_j}\lambda_{i_j}=\beta_i
\end{equation} if the felicitous-difference constraint $|\eta_{i_j}(x)+\eta_{i_j}(y)-\eta_{i_j}(xy)|=\lambda_{i_j}$ holds true; and we have the graceful-difference constraint
\begin{equation}\label{eqa:v-set-e-proper-4-magic44}
\big ||h_{i_j}(x)-h_{i_j}(y)|-h_{i_j}(xy)\big |=\beta_{i_j}\big ||\eta_{i_j}(x)-\eta_{i_j}(y)|-\eta_{i_j}(xy)\big |=\beta_{i_j}\lambda_{i_j}=\beta_i
\end{equation} as the graceful-difference constraint $\big ||\eta_{i_j}(x)-\eta_{i_j}(y)|-\eta_{i_j}(xy)\big |=\lambda_{i_j}$ holds true.

We define a \emph{magic-constraint v-set e-proper coloring} $F_i=\langle h_{i_1},h_{i_2},\dots ,h_{i_\Phi}\rangle $ with the constant $\beta_i$ by setting $h_{i_j}(x)\subseteq F_i(x)$, $h_{i_j}(y)\subseteq F_i(y)$ and $F_i(xy)=h_{i_j}(xy)$ if edge $xy\in E(G_{i_j})$. Thereby, the theorem follows Eq.(\ref{eqa:v-set-e-proper-4-magic11}), Eq.(\ref{eqa:v-set-e-proper-4-magic22}), Eq.(\ref{eqa:v-set-e-proper-4-magic33}) and Eq.(\ref{eqa:v-set-e-proper-4-magic44}) for $i\in [1,\Phi!]$, since there are $\Phi!$ permutations of the trees $d_1T_1$, $d_2T_2$, $\dots $, $d_mT_m$ from the tree base $\textbf{\textrm{T}}$ with $\Phi=\sum ^m_{k=1}d_k\geq 1$.
\end{proof}

\begin{cor}\label{cor:uniformly-v-set-e-proper-4-magic}
$^*$ By the hypothesis and the proof of Theorem \ref{thm:v-set-e-proper-4-magic} and each magic-constraint v-set e-proper coloring $F_i=\langle h_{i_1},h_{i_2},\dots ,h_{i_\Phi}\rangle $ with the constant $\beta_i$ is defined by $h_{i_j}(x)\subseteq F_i(x)$, $h_{i_j}(y)\subseteq F_i(y)$ and $F_i(xy)=h_{i_j}(xy)$ for edge $xy\in E(G_{i_j})$ and $i\in [1,\Phi!]$, since each $F_i(x)$ is a set, we use the \emph{assignment symbol} ``$:=$'' in the following particular cases:
\begin{asparaenum}[\textbf{\textrm{Parti}}-1.]
\item If each tree $G_{i_j}$ for $j\in [1,\Phi]$ holds the edge-magic constraint $\eta_{i_j}(x)+\eta_{i_j}(y)+\eta_{i_j}(xy)=\lambda_{i_j}$ for each edge $xy\in E(G_{i_j})$, then the connected graph $G$ defined in Eq.(\ref{eqa:v-set-e-proper-4-magic00}) admits an \emph{edge-magic v-set e-proper coloring} $F_i=\langle h_{i_1},h_{i_2},\dots ,h_{i_\Phi}\rangle $ holding the edge-magic constraint
\begin{equation}\label{eqa:555555}
F_i(x)+F_i(xy)+F_i(y):=h_{i_j}(x)+h_{i_j}(y)+h_{i_j}(xy)=\beta_i
\end{equation} for each edge $xy\in E(G_{i_j})\subset E(G)$ according to Eq.(\ref{eqa:v-set-e-proper-4-magic11}).
\item If each tree $G_{i_j}$ for $j\in [1,\Phi]$ holds the edge-difference constraint $\eta_{i_j}(xy)+|h_{i_j}(x)-\eta_{i_j}(y)|=\lambda_{i_j}$ for each edge $xy\in E(G_{i_j})$, then the connected graph $G$ defined in Eq.(\ref{eqa:v-set-e-proper-4-magic00}) admits an \emph{edge-difference v-set e-proper coloring} $F_i=\langle h_{i_1},h_{i_2},\dots ,h_{i_\Phi}\rangle $ holding the edge-difference constraint
\begin{equation}\label{eqa:555555}
F_i(xy)+|F_i(x)-F_i(y)|:=h_{i_j}(xy)+|h_{i_j}(x)-h_{i_j}(y)|=\beta_i
\end{equation} for each edge $xy\in E(G_{i_j})\subset E(G)$ according to Eq.(\ref{eqa:v-set-e-proper-4-magic22}).
\item If each tree $G_{i_j}$ for $j\in [1,\Phi]$ holds the felicitous-difference constraint $|\eta_{i_j}(x)+\eta_{i_j}(y)-\eta_{i_j}(xy)|=\lambda_{i_j}$ for each edge $xy\in E(G_{i_j})$, then the connected graph $G$ defined in Eq.(\ref{eqa:v-set-e-proper-4-magic00}) admits a \emph{felicitous-difference v-set e-proper coloring} $F_i=\langle h_{i_1},h_{i_2},\dots ,h_{i_\Phi}\rangle $ holding the felicitous-difference constraint
\begin{equation}\label{eqa:555555}
|F_i(x)+F_i(y)-F_i(xy)|:=|h_{i_j}(x)+h_{i_j}(y)-h_{i_j}(xy)|=\beta_i
\end{equation} for each edge $xy\in E(G_{i_j})\subset E(G)$ according to Eq.(\ref{eqa:v-set-e-proper-4-magic33}).
\item If each tree $G_{i_j}$ for $j\in [1,\Phi]$ holds the graceful-difference constraint $\big ||\eta_{i_j}(x)-\eta_{i_j}(y)|-\eta_{i_j}(xy)\big |=\lambda_{i_j}$ for each edge $xy\in E(G_{i_j})$, then the connected graph $G$ defined in Eq.(\ref{eqa:v-set-e-proper-4-magic00}) admits a \emph{graceful-difference v-set e-proper coloring} $F_i=\langle h_{i_1},h_{i_2},\dots ,h_{i_\Phi}\rangle $ holding the graceful-difference constraint
\begin{equation}\label{eqa:555555}
\big ||F_i(x)-F_i(y)|-F_i(xy)\big |:=\big ||h_{i_j}(x)-h_{i_j}(y)|-h_{i_j}(xy)\big |=\beta_i
\end{equation} for each edge $xy\in E(G_{i_j})\subset E(G)$ according to Eq.(\ref{eqa:v-set-e-proper-4-magic44}).
\end{asparaenum}
\end{cor}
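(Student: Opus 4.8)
The plan is to obtain all four items Parti-1 through Parti-4 as immediate specializations of the construction already carried out in the proof of Theorem \ref{thm:v-set-e-proper-4-magic}, where the four magic-type identities were verified simultaneously in Eq.(\ref{eqa:v-set-e-proper-4-magic11}), Eq.(\ref{eqa:v-set-e-proper-4-magic22}), Eq.(\ref{eqa:v-set-e-proper-4-magic33}) and Eq.(\ref{eqa:v-set-e-proper-4-magic44}). First I would fix the $i$th permutation $G_{i_1},G_{i_2},\dots,G_{i_\Phi}$ of the multiset $d_1T_1,d_2T_2,\dots,d_mT_m$ and recall that the proper vertex-coinciding operation produces $G=[\odot_{\textrm{prop}}]^{\Phi}_{k=1}G_{i_k}$ with $E(G)=\bigcup^{\Phi}_{j=1}E(G_{i_j})$, so that every edge of $G$ lies in exactly one subtree $G_{i_j}$. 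Each $G_{i_j}$ carries a magic-type total coloring $\eta_{i_j}$ with its own constant $\lambda_{i_j}$ guaranteed by Theorem \ref{thm:10-k-d-total-coloringss}, and the scaled coloring $h_{i_j}=\beta_{i_j}\eta_{i_j}$ with $\beta_{i_j}=\beta_i/\lambda_{i_j}$ and $\beta_i=\prod^{\Phi}_{k=1}\lambda_{i_k}$ rescales the constant of $G_{i_j}$ to the common value $\beta_i$.

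The essential observation is that each of the four magic constraints is homogeneous of degree one in the three colors $\eta_{i_j}(x),\eta_{i_j}(y),\eta_{i_j}(xy)$; hence multiplying all colors on $G_{i_j}$ by $\beta_{i_j}$ multiplies the left-hand side by $\beta_{i_j}$ and sends $\lambda_{i_j}$ to $\beta_{i_j}\lambda_{i_j}=\beta_i$, independently of $j$. Thus, if in the hypothesis every $\eta_{i_j}$ is specifically an edge-magic (resp. edge-difference, felicitous-difference, graceful-difference) total coloring, then the assembled coloring $F_i=\langle h_{i_1},h_{i_2},\dots,h_{i_\Phi}\rangle$, defined by $h_{i_j}(x)\subseteq F_i(x)$, $h_{i_j}(y)\subseteq F_i(y)$ and $F_i(xy)=h_{i_j}(xy)$ for $xy\in E(G_{i_j})$, realizes the single corresponding identity with the uniform constant $\beta_i$ on every edge. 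Because the proper vertex-coinciding operation may merge several subtree-vertices into one vertex of $G$, each $F_i(x)$ is genuinely a \emph{set}; the assignment symbol $:=$ then records precisely the v-set e-proper requirement of Definition \ref{defn:55-v-set-e-proper-more-labelings}, namely that for an edge $xy$ lying in $G_{i_j}$ the representatives $a=h_{i_j}(x)\in F_i(x)$ and $b=h_{i_j}(y)\in F_i(y)$ are the colors satisfying the magic identity together with $F_i(xy)$.

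I expect the only delicate point to be the well-definedness of the common constant $\beta_i$: the scaling factor $\beta_{i_j}=\prod_{k\neq j}\lambda_{i_k}$ is a positive integer exactly when every magic constant $\lambda_{i_j}$ is positive, so I would either restrict to the colorings of Theorem \ref{thm:10-k-d-total-coloringss} whose constants are strictly positive (automatic for the edge-magic, edge-difference and felicitous-difference cases) or, for the graceful-difference case where $\lambda_{i_j}$ may vanish, first shift the underlying graceful $(k,d)$-total colorings to force $\lambda_{i_j}>0$ before scaling. The remaining verifications are routine: properness of $F_i$ on adjacent edges is inherited edge-by-edge from the proper total colorings $\eta_{i_j}$ of the individual subtrees, and the count of $\Phi!$ distinct colorings is immediate, since the $\Phi!$ permutations of $d_1T_1,d_2T_2,\dots,d_mT_m$ yield distinct assemblies $F_i$, exactly as in Theorem \ref{thm:graceful-v-set-e-proper-co} and Theorem \ref{thm:k-d-graceful-v-set-e-proper-co}.
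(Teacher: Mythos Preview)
Your proposal is correct and follows essentially the same route as the paper: the corollary is stated in the paper without a separate proof, relying entirely on the four identities Eq.(\ref{eqa:v-set-e-proper-4-magic11})--Eq.(\ref{eqa:v-set-e-proper-4-magic44}) already established in the proof of Theorem \ref{thm:v-set-e-proper-4-magic}, and your argument is exactly this specialization to the case where every $\eta_{i_j}$ satisfies the same magic-type constraint. Your added remarks on degree-one homogeneity and on the possible vanishing of $\lambda_{i_j}$ in the graceful-difference case are more careful than anything the paper records, but they do not change the underlying approach.
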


\begin{thm}\label{thm:666666}
By the hypothesis and the proof of Theorem \ref{thm:v-set-e-proper-4-magic}, there is a number-based string
\begin{equation}\label{eqa:555555}
s(\beta)=\beta_1\beta_2\cdots \beta_M
\end{equation} where $\beta_i=\lambda_{i_1}\lambda_{i_2}\cdots \lambda_{i_\Phi}$, $M=\Phi!$ with $\Phi=\sum ^m_{k=1}d_k\geq 1$. Moreover, there are:

(i) different number-based strings $s_{permu}(k)=\gamma_{k_1}\gamma_{k_2}\cdots \gamma_{k_M}$ for $k\in [1,M!]$ induced from all permutations $\gamma_{k_1},\gamma_{k_2},\cdots ,\gamma_{k_M}$ made by the numbers $\beta_1,\beta_2,\dots ,\beta_M$.

(ii) different number-based strings $\beta_{i,t_i}=\rho^{t_i}_{j_1}\rho^{t_i}_{j_2}\cdots \rho^{t_i}_{j_\Phi}$ with $t_i\in [1,\Phi!]$ made by all permutations $\rho^{t_i}_{j_1},\rho^{t_i}_{j_2},\dots ,\rho^{t_i}_{j_\Phi}$ of the numbers $\lambda_{i_1},\lambda_{i_2},\dots ,\lambda_{i_\Phi}$. Thereby, we get number-based strings
\begin{equation}\label{eqa:555555}
s(\beta_t)=\beta_{1,t_1}\beta_{2,t_2}\cdots \beta_{M,t_M}
\end{equation} with more complex structure.

(iii) different number-based strings made by the combination of the above two cases.
\end{thm}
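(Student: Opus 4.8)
The plan is to treat this as a bookkeeping result that reads off the data already produced in the proof of Theorem~\ref{thm:v-set-e-proper-4-magic} and reorganizes it into number-based strings, so that essentially no new graph theory is needed. First I would recall the output of that proof: for each of the $\Phi!$ permutations $G_{i_1},G_{i_2},\dots ,G_{i_\Phi}$ of the trees $d_1T_1,d_2T_2,\dots ,d_mT_m$ (with $\Phi=\sum^m_{k=1}d_k$), the $i$-th permutation carries a magic-constraint v-set e-proper coloring $F_i$ whose associated magic constant is $\beta_i=\prod^\Phi_{k=1}\lambda_{i_k}$, where $\lambda_{i_k}$ is the positive-integer magic constant of the tree $G_{i_k}$ appearing in Eq.(\ref{eqa:v-set-e-proper-4-magic11})--Eq.(\ref{eqa:v-set-e-proper-4-magic44}). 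Since each $\lambda_{i_k}$ is a positive integer, both the block $\beta_i$, viewed as the concatenation $\lambda_{i_1}\lambda_{i_2}\cdots \lambda_{i_\Phi}$, and the full concatenation $s(\beta)=\beta_1\beta_2\cdots \beta_M$ over all $M=\Phi!$ permutations are legitimate number-based strings in the sense of the pan-string definitions of the preliminary section. This establishes the existence of $s(\beta)$ immediately.

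For part (i) I would exhibit the $M!$ permutations of the length-$M$ sequence $(\beta_1,\beta_2,\dots ,\beta_M)$; each permutation $\gamma_{k_1},\gamma_{k_2},\dots ,\gamma_{k_M}$ is again a finite sequence of positive integers, hence its concatenation $s_{permu}(k)=\gamma_{k_1}\gamma_{k_2}\cdots \gamma_{k_M}$ is a number-based string indexed by $k\in[1,M!]$. For part (ii) I would exploit the independent freedom inside each block: the factors of a fixed $\beta_i=\lambda_{i_1}\lambda_{i_2}\cdots \lambda_{i_\Phi}$ can be reordered in $\Phi!$ ways, producing $\beta_{i,t_i}=\rho^{t_i}_{j_1}\rho^{t_i}_{j_2}\cdots \rho^{t_i}_{j_\Phi}$ for $t_i\in[1,\Phi!]$, and concatenating the reordered blocks yields $s(\beta_t)=\beta_{1,t_1}\beta_{2,t_2}\cdots \beta_{M,t_M}$. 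Part (iii) is then obtained by composing the two permutation actions, namely an outer reordering of the $M$ blocks together with an independent inner reordering of the $\Phi$ factors within each block; I would phrase this cleanly as the natural action of the product group $S_M\times (S_\Phi)^{M}$ on the doubly-indexed family $\{\lambda_{i_k}\}$, whose orbit supplies the claimed family of strings.

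The genuinely delicate point, and the only place the argument needs care, is the word \emph{different}. Permutations of a sequence give the same \emph{concatenated} string whenever the reordered digit-blocks happen to line up, which can occur if some $\beta_i$ (or some $\lambda_{i_k}$) coincide, or if a short block is a suffix or prefix of an adjacent one. So the hard part will be to pin down what ``different'' is meant to assert. My plan is to read the counts $M!$ and $\Phi!$ as counting the permutations themselves, that is, the underlying ordered tuples, for which the claim is exact and trivial, and then to append a remark upgrading this to \emph{pairwise distinct} number-based strings under the mild genericity hypothesis that the constants $\lambda_{i_k}$ are pairwise distinct and share a common byte-length. This is the generic situation, since the $\lambda_{i_k}$ are manufactured from graceful $(k,d)$-total colorings with freely chosen parameters $k,d$ (Definition~\ref{defn:kd-w-type-colorings} and Theorem~\ref{thm:10-k-d-total-coloringss}); under it, concatenation is injective on ordered tuples, so the counts $M!$ and $\Phi!$ hold verbatim, while without it one obtains only an upper bound, which I would state explicitly rather than gloss over.
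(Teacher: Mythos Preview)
Your proposal is correct and matches the paper's treatment: the paper states this theorem with no accompanying proof at all, presenting it as an immediate read-off from the data $\beta_i=\prod^\Phi_{k=1}\lambda_{i_k}$ constructed in the proof of Theorem~\ref{thm:v-set-e-proper-4-magic}, so your bookkeeping argument is exactly the intended content. Your careful discussion of what ``different'' should mean --- counting ordered tuples versus actually distinct concatenated strings, and the genericity hypothesis needed for the latter --- is in fact more rigorous than the paper, which simply asserts the counts $M!$ and $\Phi!$ without addressing possible collisions.
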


\begin{example}\label{exa:8888888888}
Let $\textbf{\textrm{L}}(4am)$ be the set of graphs in the lattice $\textbf{\textrm{L}}([\odot_{\textrm{prop}}]Z^0\textbf{\textrm{T}})$ defined in Eq.(\ref{eqa:proper-operation-tree-base-lattice}) such that each connected $(p,q)$-graph $G\in \textbf{\textrm{L}}(4am)$ holding $G=[\odot_{\textrm{prop}}]^m_{k=1}4aT_k$. So, we have tree sets $\textbf{\textrm{T}}_{r,s}(4am)=\{T_{j,r,s}: j\in [1,m]\}$ for $r\in [1,4]$ and $s\in [1,a]$ holding $T_{j,r,s}=T_j\in \textbf{\textrm{T}}$ for $j\in [1,m]$. Suppose that the trees in $\textbf{\textrm{L}}_{r,s}(4am)$ hold the following facts:

(i) Each tree $T_{j,1,s}\in \textbf{\textrm{T}}_{1,s}(4am)$ admits an edge-magic total coloring $f_{j,1,s}$, and the connected graph $G\in \textbf{\textrm{L}}(4am)$ admits \emph{edge-magic v-set e-proper colorings} $F_{1,s}$ for $s\in [1,a]$ holding the edge-magic constraint
\begin{equation}\label{eqa:555555}
F_{1,s}(x)+F_{1,s}(xy)+F_{1,s}(y):=\big \{f_{j,1,s}(x)+f_{j,1,s}(y)+f_{j,1,s}(xy)=\beta_{j,1,s}: j\in [1,m]\big \}
\end{equation} for each edge $xy\in E(T_{j,1,s})\subset E(G)$.

(ii) Each tree $T_{j,2,s}\in \textbf{\textrm{T}}_{2,s}(4am)$ admits an edge-difference total coloring $f_{j,2,s}$, and the connected graph $G\in \textbf{\textrm{L}}(4am)$ admits \emph{edge-difference v-set e-proper colorings} $F_{2,s}$ for $k\in [1,a]$ holding the edge-difference constraint
\begin{equation}\label{eqa:555555}
F_{2,s}(xy)+|F_{2,s}(x)-F_{2,s}(y)|:=\big \{f_{j,2,s}(xy)+|f_{j,2,s}(x)-f_{j,2,s}(y)|=\beta_{j,2,s}: j\in [1,m]\big \}
\end{equation} for each edge $xy\in E(T_{j,2,s})\subset E(G)$.

(iii) Each tree $T_{j,3,s}\in \textbf{\textrm{T}}_{3,s}(4am)$ admits a felicitous-difference total coloring $f_{j,3,s}$, and the connected graph $G\in \textbf{\textrm{L}}(4am)$ admits \emph{felicitous-difference v-set e-proper colorings} $F_{3,s}$ for $k\in [1,a]$ holding the felicitous-difference constraint
\begin{equation}\label{eqa:555555}
|F_{3,s}(x)+F_{3,s}(y)-F_i(xy)|:=\big \{|f_{j,3,s}(x)+f_{j,3,s}(y)-f_{j,3,s}(xy)|=\beta_{j,3,s}: j\in [1,m]\big \}
\end{equation} for each edge $xy\in E(T_{j,3,s})\subset E(G)$.

(iv) Each tree $T_{j,4,s}\in \textbf{\textrm{T}}_{4,s}(4am)$ admits a graceful-difference total coloring $f_{j,4,s}$, and the connected graph $G\in \textbf{\textrm{L}}(4am)$ admits \emph{graceful-difference v-set e-proper colorings} $F_{4,s}$ for $k\in [1,a]$ holding the graceful-difference constraint
\begin{equation}\label{eqa:555555}
\big ||F_{4,s}(x)-F_{4,s}(y)|-F_{4,s}(xy)\big |:=\big \{\big ||f_{j,4,s}(x)-f_{j,4,s}(y)|-f_{j,4,s}(xy)\big |=\beta_{j,4,s}: j\in [1,m]\big \}
\end{equation} for each edge $xy\in E(T_{j,4,s})\subset E(G)$.

Thereby, we get number-based strings $s_{j,r}=\beta_{j,r,1}\beta_{j,r,2}\cdots \beta_{j,r,a}$ and $s^k_{j,r}=\beta_{j,r,k_1}\beta_{j,r,k_2}\cdots \beta_{j,r,k_a}$ with $r\in [1,4]$, where $\beta_{j,r,k_1},\beta_{j,r,k_2},\dots .\beta_{j,r,k_a}$ is a permutation of $\beta_{j,r,1},\beta_{j,r,2},\dots ,\beta_{j,r,a}$ with $k\in [1,a!]$ and $r\in [1,4]$.

Each connected graph $G\in \textbf{\textrm{L}}(4am)$ admits a compound v-set e-proper coloring induced by the $W$-constraint v-set e-proper colorings $F_{1,s},F_{2,s},F_{3,s},F_{4,s}$ with $s\in [1,a]$. Obviously, finding a particular connected graph $G$ and the number-based strings $s_{j,1}s_{j,2}s_{j,3}s_{j,4}$ made by the colorings admitted by the connected graph $G$ seems to be quite difficult, even it is almost impossible.\qqed
\end{example}

\subsection{Number-based strings generated from indexed-colorings}

In \cite{Yao-Su-Ma-Wang-Yang-arXiv-2202-03993v1} the authors use maximal planar graphs to make various topological signature authentications. Here, we will use them to make number-based strings, since maximal planar graphs have many nice properties. In Wikipedia, the enumeration of planar graphs is as:

(i) The asymptotic for the number of \emph{(labeled) planar graphs} on $n$ vertices is $g\cdot n^{-7/2}\cdot n!\cdot \gamma ^{n}$, where $\gamma \approx 27.22687$ and $g\approx 0.43\times 10^{-5}$.

(ii) Almost all planar graphs have an exponential number of automorphisms.

(iii) The number of \emph{unlabeled (non-isomorphic) planar graphs} on $n$ vertices is between $27.2^{n}$ and $30.06^{n}$, roughly, between $2^{4.7n}$ and $2^{6n}$.

\subsubsection{Maximal planar graphic lattices}

A \emph{maximal planar graph} is a connected planar graph with each face being a \emph{triangle}.
\begin{defn} \label{defn:maximal-planar-graph-embedding}
For two vertex-disjoint maximal planar graphs $G$ and $H$, we overlap a triangular face $\Delta_{abc}$ of $G$ with another triangular face $\Delta_{xyz}$ of $H$ into one triangular face
$$\Delta(abc\overline{\ominus} xyz)=\Delta_{abc}[\overline{\ominus}^{cyc}_{3}]\Delta_{xyz}
$$ by edge-coinciding

(i) two edges $ab,xy$ into one edge $ab\overline{\ominus} xy$ with $a\odot x$ and $b\odot y$;

(ii) two edges $bc, yz$ into one edge $bc\overline{\ominus} yz$ with $b\odot y$; and

(iii) two edges $ca, zx$ into one edge $ca\overline{\ominus} zx$ with $c\odot z$ and $x\odot a$. \\
The resultant graph is a maximal planar graph too, denoted as $G[\overline{\ominus}^{cyc}_{3}]H$, and we call the process of obtaining $G[\overline{\ominus}^{cyc}_{3}]H$ \emph{triangular-face embedding operation}, and the maximal planar graph $G[\overline{\ominus}^{cyc}_{3}]H$ is the result of $G$ embedding $H$ (resp. $H$ embedding $G$).\qqed
\end{defn}

\begin{rem}\label{remark:maximal-planar-graph-embedding}
As the generalization of Definition \ref{defn:maximal-planar-graph-embedding}, we split a cycle $C$ of $k$ vertices in a maximal planar graph $G$ in to two cycles $C\,'$ and $C\,''$, such that the cycle-split graph $G\wedge C$ has just two vertex disjoint components $G^C_{out}$ and $G^C_{in}$, called \emph{semi-maximal planar graphs} (also \emph{configurations} \cite{Jin-Xu-Maximal-Science-Press-2019, Jin-Xu-55-56-configurations-arXiv-2107-05454v1}), where $G^C_{out}$ contains the cycle $C\,''$ and is in the \emph{infinite plane}, and $G^C_{in}$ contains the cycle $C\,'$ and is inside of $G$. So, we write $G=G^C_{out}[\overline{\ominus}^{cyc}_k]G^C_{in}$ as the result of doing the cycle-coinciding operation $[\overline{\ominus}^{cyc}_k]$ to two semi-maximal planar graphs $G^C_{out}$ and $G^C_{in}$.

We call a semi-maximal planar graph $G^C$ \emph{configuration}, and the set $U^*$ contains all configurations. Moreover, if any maximal planar graph $H$ contains at least a configuration in $U^*$, we say that $U^*$ is \emph{unavoidable}. Assume that a maximal planar graph $H^*$ of $n$ vertices does not admit a proper 4-coloring, and any maximal planar graph $G$ with $|V(G)|<|V(H^*)|$ is 4-colorable, then $H^*$ is called a smallest counterexample of the Four-coloring Conjecture, \emph{smallest FCC-counterexample}. If any smallest FCC-counterexample does not contain a configuration $G^C$, we say that the configuration $G^C$ is \emph{reducible}. If any maximal planar graph contains at least a configuration in $U^*$, and each configuration in $U^*$ is reducible, then $U^*$ is called a \emph{unavoidable and reducible configuration set}.

As two semi-maximal planar graphs $G^C_{out}$ and $G^C_{in}$ in a maximal planar graph $G=G^C_{out}[\overline{\ominus}^{cyc}_k]G^C_{in}$ approach to two infinite planes, then $G$ becomes a sphere with an equatorial line $C$ and each face to be a triangular face. \paralled
\end{rem}

\begin{cor}\label{cor:99999}
$^*$ Let $T^{C_{2k}}_{1},T^{C_{2k}}_{2},\dots ,T^{C_{2k}}_{m}$ be semi-maximal planar graphs with the cycle $C_{2k}$ on $2k$ vertices, then doing the cycle-coinciding operation $[\overline{\ominus}^{cyc}_{2k}]$ to them produces a \emph{semi-maximal planar graph-book}
$$[\overline{\ominus}^{cyc}_{2k}]^m_{i=1}T^{C_{2k}}_{i}=T^{C_{2k}}_{1}[\overline{\ominus}^{cyc}_{2k}]T^{C_{2k}}_{2}[\overline{\ominus}^{cyc}_{2k}]T^{C_{2k}}_{2}\cdots [\overline{\ominus}^{cyc}_{2k}]T^{C_{2k}}_{m}
$$ with each \emph{book-page} $T^{C_{2k}}_{i}$ and the \emph{book-spine} $C_{2k}$. Moreover, if each book-page $T^{C_{2k}}_{i}$ admits a proper vertex $4$-coloring such that its own cycle $C_{2k}$ is properly colored with two colors, then the semi-maximal planar graph-book $[\overline{\ominus}^{cyc}_{2k}]^m_{i=1}T^{C_{2k}}_{i}$ admits a proper vertex $4$-coloring too.
\end{cor}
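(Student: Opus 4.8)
The plan is to separate the statement into its structural assertion — that the cycle-coinciding operation $[\overline{\ominus}^{cyc}_{2k}]$ really does produce a genuine graph-book with spine $C_{2k}$ and pages $T^{C_{2k}}_{i}$ — and its coloring assertion, that the resulting book is properly $4$-colorable, and to treat the latter as the main content. For the structural part I would unwind the operation directly from Definition~\ref{defn:maximal-planar-graph-embedding} and its $2k$-cycle generalization in the preceding remark: coinciding the $m$ boundary cycles identifies precisely their $2k$ spine vertices and $2k$ spine edges into a single cycle $C_{2k}$, while all interior vertices of the pages stay mutually distinct. The crucial structural observation, which I would record explicitly for reuse, is that the operation introduces no new adjacencies; hence every edge of $[\overline{\ominus}^{cyc}_{2k}]^m_{i=1}T^{C_{2k}}_{i}$ is either a spine edge of $C_{2k}$ or lies entirely inside a single page $T^{C_{2k}}_{i}$, and no edge joins interior vertices of two distinct pages.

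For the coloring part, the key fact I would exploit is that $C_{2k}$ is an even cycle, so it is connected and bipartite with a \emph{unique} bipartition $(A,B)$ of its vertex set; consequently any proper coloring of $C_{2k}$ using exactly two colors must be constant on each of $A$ and $B$. I would fix a reference coloring $\psi$ of the spine assigning color $1$ to $A$ and color $2$ to $B$. Let $\phi_i$ be the given proper $4$-coloring of $T^{C_{2k}}_{i}$ whose restriction to the cycle uses two colors, and write $c_i^A=\phi_i(A)$, $c_i^B=\phi_i(B)$ with $c_i^A\neq c_i^B$. Choosing a permutation $\sigma_i\in S_4$ with $\sigma_i(c_i^A)=1$ and $\sigma_i(c_i^B)=2$ (extended arbitrarily to a bijection of the remaining two colors), the recolored map $\sigma_i\circ\phi_i$ is still a proper $4$-coloring of the page and now agrees with $\psi$ on the entire spine. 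I would then glue these aligned page-colorings into a single map $\Phi$ defined to equal $\sigma_i\circ\phi_i$ on each page; this is well defined since all pages share only the spine, where every $\sigma_i\circ\phi_i$ equals $\psi$. Properness of $\Phi$ follows edge by edge from the structural observation: a spine edge is properly colored because $\psi$ is proper on $C_{2k}$, and every other edge lies inside some page and is properly colored by $\sigma_i\circ\phi_i$, with no cross-page edges to check.

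The hard part will be the bookkeeping at the two genuinely structural points. First, I must verify that after coinciding, all pages inherit the \emph{same} bipartition $(A,B)$ of the spine, since this is what makes the color-permutation alignment simultaneously possible for every page; this rests on the cyclic vertex-correspondence implicit in $[\overline{\ominus}^{cyc}_{2k}]$ together with the evenness of $C_{2k}$, which forces each page's two-color restriction to respect the shared bipartition. Second, I must justify carefully that the operation creates no adjacencies between interiors of distinct pages, as this is exactly what lets the gluing preserve properness. I would also note that for $m\geq 3$ the graph-book need not be planar, so this $4$-colorability is a genuine combinatorial consequence of the two-color hypothesis on each page rather than an instance of the Four-Color Theorem — which, I believe, is the reason for isolating the corollary.
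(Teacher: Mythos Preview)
The paper states this corollary without proof, so there is nothing to compare against; your argument is correct and is exactly the natural one. The two points you flag as ``bookkeeping'' are indeed the substance: the evenness of $C_{2k}$ forces a unique bipartition of the spine (this is why the corollary is stated for $2k$ rather than arbitrary cycle length), which makes the simultaneous color-permutation alignment possible, and the cycle-coinciding operation by construction adds no edges beyond those already in the pages. Your closing remark that for $m\geq 3$ the book need not be planar, so this is not a consequence of the Four-Color Theorem, is a nice observation that the paper does not make explicit.
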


\begin{example}\label{exa:8888888888}
In Fig.\ref{fig:planar-graph-c-operation}, the graphs $P$, $B_1$, $B_2$ and $B_3$ are four semi-maximal planar graphs, such that each maximal planar graph $A_i=P[\overline{\ominus}^{cyc}_6]B_i$ for $1\leq i\leq 3$. Conversely, we split the cycle $C$ of $k$ vertices in each maximal planar graph $A_i$, the resultant graph $A_i\wedge C$ has just two vertex disjoint semi-maximal planar graphs $P~(=G^C_{out}$, called \emph{out-planar graph}) and $B_i~(=G^C_{in}$, called \emph{inner-planar graph}) for $1\leq i\leq 3$.

In real application, we use the semi-maximal planar graph $P$ as a \emph{public-key graph}, and three semi-maximal planar graphs $B_1$, $B_2$ and $B_3$ are \emph{private-key graphs}, thus, each maximal planar graph $A_i=P[\overline{\ominus}^{cyc}_6]B_i$ for $1\leq i\leq 3$ is just a \emph{topological signature authentication} $A_i=A_{uth}\langle P,B_i\rangle $. Moreover, the Topcode-matrix $T_{code}(A_i,f_i)$ of each maximal planar graph $A_i$ admitting a $4$-coloring $f_i$ can provide us $(3q_i)!$ different number-based strings, where $q_i=|E(A_i)|$ for $1\leq i\leq 3$. \qqed
\end{example}

\begin{figure}[h]
\centering
\includegraphics[width=16.4cm]{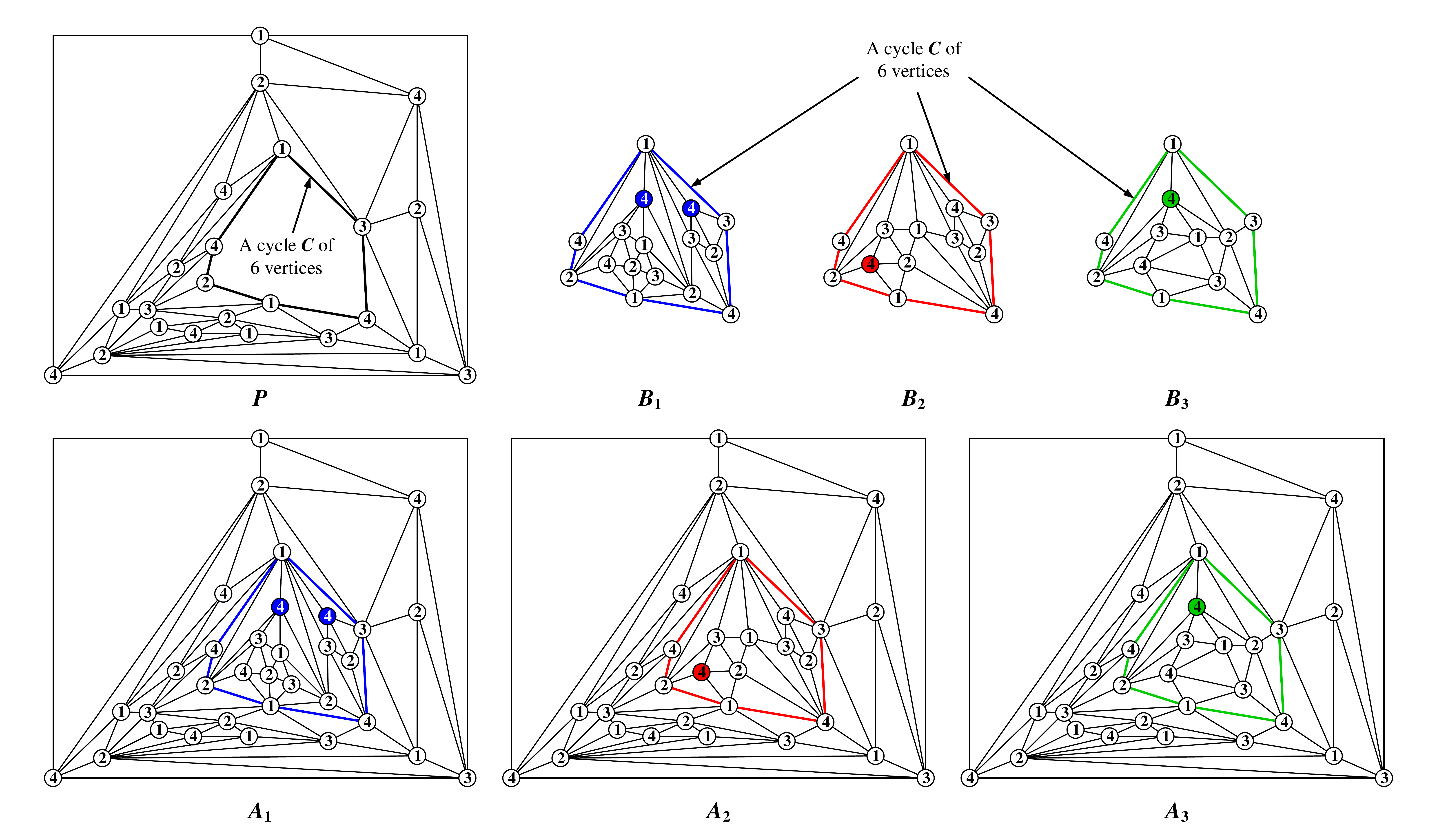}\\
\caption{\label{fig:planar-graph-c-operation}{\small Examples for the cycle-coinciding operation $[\overline{\ominus}^{cyc}_k]$, cited from \cite{Yao-Su-Ma-Wang-Yang-arXiv-2202-03993v1}.}}
\end{figure}

\textbf{1.Uncolored maximal planar graphic lattices.} Suppose that $\textbf{\textrm{P}}=(P_1,P_2,\dots ,P_n)$ is a \emph{maximal planar graph base} with mutually vertex-disjoint uncolored maximal planar graphs $P_1,P_2,\dots ,P_n$ holding $P_i\not \cong P_j$ if $i\neq j$. Each graph $G=[\overline{\ominus}^{cyc}_{3}]^n_{k=1}a_kP_k$ is a maximal planar graph obtained by doing the triangular-face embedding operation defined in Definition \ref{defn:maximal-planar-graph-embedding} on the maximal planar graphs $P_1,P_2,\dots ,P_n$ of the maximal planar graph base $\textbf{\textrm{P}}$, we get a \emph{maximal planar graphic lattice} as follows
\begin{equation}\label{eqa:maximal-planar-graph-lattice}
\textbf{\textrm{L}}([\overline{\ominus}^{cyc}_{3}]Z^0\textbf{\textrm{P}})=\big \{[\overline{\ominus}^{cyc}_{3}]^n_{k=1}a_kP_k: a_k\in Z^0,P_k\in \textbf{\textrm{P}}\big\}
\end{equation} with $\sum^n_{k=1}a_k\geq 1$.

However, there are many maximal planar graphs out of the maximal planar graphic lattices $\textbf{\textrm{L}}([\overline{\ominus}^{cyc}_{3}]Z^0\textbf{\textrm{P}})$. Let $T_{i_1},T_{i_2},\dots ,T_{i_M}$ be a permutation of the maximal planar graphs $c_1P_1,c_2P_2$, $\dots $, $c_nP_n$ of the maximal planar graph base $\textbf{\textrm{P}}$, where with $M=\sum^n_{k=1}c_k\geq 1$.

For a maximal planar graph $H$, we do the triangular-face embedding operation to $H_0=H$ and $T_{i_1}$, the resultant maximal planar graph is denoted as $H_{1}=H_{0}[\overline{\ominus}^{cyc}_{3}]T_{i_1}$, next we get another maximal planar graph is denoted as $H_{2}=H_{1}[\overline{\ominus}^{cyc}_{3}]T_{i_2}$, in general, we have maximal planar graphs
\begin{equation}\label{eqa:555555}
H_{k}=H_{k-1}[\overline{\ominus}^{cyc}_{3}]T_{i_k},~k\in [1,M]
\end{equation} We write $H_{M}=H_{M-1}[\overline{\ominus}^{cyc}_{3}]T_{i_M}=H[\overline{\ominus}^{cyc}_{3}]^n_{i=1}c_iP_i$, and get a maximal planar graphic lattice
\begin{equation}\label{eqa:graphs-maximal-planar-lattice}
\textbf{\textrm{L}}(S_{mpg}[\overline{\ominus}^{cyc}_{3}]Z^0\textbf{\textrm{P}})=\big \{H[\overline{\ominus}^{cyc}_{3}]^n_{i=1}c_iP_i: c_i\in Z^0,P_i\in \textbf{\textrm{P}},H\in S_{mpg}\big\}
\end{equation} with $\sum^n_{k=1}c_k\geq 1$, where $S_{mpg}$ is a set of maximal planar graphs.

\begin{problem}\label{qeu:444444}
About the lattices $\textbf{\textrm{L}}([\overline{\ominus}^{cyc}_{3}]Z^0\textbf{\textrm{P}})$ defined in Eq.(\ref{eqa:maximal-planar-graph-lattice}), $\textbf{\textrm{L}}([\overline{\ominus}^{cyc}_{3}]Z^0\textbf{\textrm{P}}^c)$ defined in Eq.(\ref{eqa:colored-maximal-planar-lattice}), $\textbf{\textrm{L}}(S_{mpg}[\overline{\ominus}^{cyc}_{3}]Z^0\textbf{\textrm{P}})$ defined in Eq.(\ref{eqa:graphs-maximal-planar-lattice}) and $\textbf{\textrm{L}}(S^c_{mpg}[\overline{\ominus}^{cyc}_{3}]Z^0\textbf{\textrm{P}}^c)$ defined in Eq.(\ref{eqa:colored-graphs-maximal-planar-lattice}), we have the following questions:

(i) In a colored maximal planar graph $J^c=G^c[\overline{\ominus}^{cyc}_{3}]H^c$, if $H^c$ is not $4$-colorable, so is $J^c$. If there is a maximal planar graph that cannot be colored properly with 4 colors, then there are infinite maximal planar graphs that are not colored properly by 4 colors.

(ii) Suppose that a maximal planar graph base $\textbf{\textrm{P}}$ contains all unavoidable and reducible configurations of maximal planar graphs. Is there a maximal planar graph $G^*=G[\overline{\ominus}^{cyc}_{3}]T$ with a maximal planar graph $T$ and $G\in \textbf{\textrm{L}}([\overline{\ominus}^{cyc}_{3}]Z^0\textbf{\textrm{P}})$, or $G\in \textbf{\textrm{L}}(S_{mpg}[\overline{\ominus}^{cyc}_{3}]Z^0\textbf{\textrm{P}})$, such that $G^*$ contains no any one of the known unavoidable and reducible configurations of maximal planar graphs? For example, using the triangular-face embedding operation or other technique to destroy as many configurations as we can along the way in the process of making $G^*$.

(iii) For a \emph{maximal planar graph sequence} $G_{n+1}=G_{n}[\overline{\ominus}^{cyc}_{3}]H_n$ with $n\geq 1$, each $H_n$ is a maximal planar graph and $G_1\in \textbf{\textrm{L}}([\overline{\ominus}^{cyc}_{3}]Z^0\textbf{\textrm{P}})$, or $G_1\in \textbf{\textrm{L}}(S_{mpg}[\overline{\ominus}^{cyc}_{3}]Z^0\textbf{\textrm{P}})$. As $n\rightarrow \infty$, the maximal planar graph sequence $\{G_{n+1}\}$ induces a $xOy$-plane titled by maximal planar graphs, we call it \emph{triangular-plane} $\Delta_{xOy}$. In this construction, the \emph{boundary} of the triangular-plane $\Delta_{xOy}$ is a \emph{triangle}. Does the triangular-plane $\Delta_{xOy}$ contains all configurations of maximal planar graphs?
\end{problem}

\textbf{2. Colored maximal planar graphic lattices.} If a colored maximal planar graph $G^c$ admits a proper vertex coloring $f$ and another colored maximal planar graph $H^c$ admits a proper vertex coloring $g$, and $f(a)=g(x)$, $f(b)=g(y)$ and $f(c)=g(z)$ in tow triangular faces $\Delta_{abc}\subset G^c$ and $\Delta_{xyz}\subset H^c$, we have a colored maximal planar graph $G^c[\overline{\ominus}^{cyc}_{3}]H^c$ admitting a proper vertex coloring $F=f[\overline{\ominus}^{cyc}_{3}]g$ by the triangular-face embedding operation defined in Definition \ref{defn:maximal-planar-graph-embedding}, distinguishingly, we call the process of obtaining the colored maximal planar graph $G^c[\overline{\ominus}^{cyc}_{3}]H^c$ \emph{colored triangular-face embedding operation}.

Let $\textbf{\textrm{P}}^c=(P^c_1,P^c_2,\dots ,P^c_n)$ be a \emph{colored maximal planar graph base} with mutually disjoint maximal planar graphs $P^c_1,P^c_2,\dots ,P^c_n$ holding $P^c_i\not \cong P^c_j$ if $i\neq j$, and each maximal planar graph $P^c_j$ admits a proper vertex coloring $g_j$ with $j\in [1,n]$. Doing the colored triangular-face embedding operation defined above to the mutually disjoint maximal planar graphs $b_1P^c_1,b_2P^c_2,\dots ,b_nP^c_n$ for $\sum^n_{k=1}b_k\geq 1$, then we get colored maximal planar graphs $G=[\overline{\ominus}^{cyc}_{3}]^n_{k=1}b_kP^c_k$ admitting proper vertex coloring induced by the proper vertex colorings $g_1,g_2\dots, g_n$, immediately, the following set
\begin{equation}\label{eqa:colored-maximal-planar-lattice}
\textbf{\textrm{L}}([\overline{\ominus}^{cyc}_{3}]Z^0\textbf{\textrm{P}}^c)=\big \{ [\overline{\ominus}^{cyc}_{3}]^n_{k=1}b_kP^c_k: b_k\in Z^0,~P^c_k\in \textbf{\textrm{P}}^c\big\}
\end{equation}
with $\sum^n_{k=1}b_k\geq 1$ is called \emph{colored maximal planar graphic lattice}.

\begin{thm}\label{thm:colored-mpg-lattice-Coloring-closure}
$^*$ \textbf{Coloring closure.} If each maximal planar graph $P^c_j$ of a colored maximal planar graph base $\textbf{\textrm{P}}^c$ admits a proper vertex $4$-coloring, then each maximal planar graph $G\in \textbf{\textrm{L}}([\overline{\ominus}^{cyc}_{3}]Z^0\textbf{\textrm{P}}^c)$ defined in Eq.(\ref{eqa:colored-maximal-planar-lattice}) admits a proper vertex $4$-coloring too.
\end{thm}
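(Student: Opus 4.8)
The plan is to prove Theorem \ref{thm:colored-mpg-lattice-Coloring-closure} by induction on the number $M=\sum^n_{k=1}b_k$ of base-graph copies assembled by the colored triangular-face embedding operation $[\overline{\ominus}^{cyc}_{3}]$ of Definition \ref{defn:maximal-planar-graph-embedding}. I would write a lattice element $G=[\overline{\ominus}^{cyc}_{3}]^n_{k=1}b_kP^c_k$ from Eq.(\ref{eqa:colored-maximal-planar-lattice}) as a successive embedding $G_M$, where $G_1=T_{i_1}$ is one of the base graphs and $G_s=G_{s-1}[\overline{\ominus}^{cyc}_{3}]T_{i_s}$ for $s\in[2,M]$, each $T_{i_s}$ being some $P^c_j$. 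The base case $G_1=P^c_j$ already admits a proper vertex $4$-coloring by hypothesis, so the whole task reduces to showing that one embedding step preserves $4$-colorability.

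The decisive structural observation is that in any proper vertex $4$-coloring of a maximal planar graph the three vertices of every triangular face receive three pairwise distinct colors, since those vertices induce a $K_3$. So if $G_{s-1}$ carries a proper $4$-coloring $F_{s-1}$ and I embed $T_{i_s}=P^c_j$ (with proper $4$-coloring $g_j$) by overlapping a face $\Delta_{abc}\subset G_{s-1}$ with a face $\Delta_{xyz}\subset T_{i_s}$, then both $\{F_{s-1}(a),F_{s-1}(b),F_{s-1}(c)\}$ and $\{g_j(x),g_j(y),g_j(z)\}$ are $3$-element subsets of the common $4$-color palette. First I would choose a permutation $\sigma$ of the four colors with $\sigma(g_j(x))=F_{s-1}(a)$, $\sigma(g_j(y))=F_{s-1}(b)$, $\sigma(g_j(z))=F_{s-1}(c)$, sending the fourth color to the remaining one; since $\sigma$ is a bijection, $\sigma\circ g_j$ is again a proper $4$-coloring of $T_{i_s}$ whose face $\Delta_{xyz}$ now matches $\Delta_{abc}$ vertex-by-vertex. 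This palette-permutation trick is available precisely because a triangle consumes only three of the four colors.

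Having matched the colors on the coinciding vertices $a\odot x$, $b\odot y$, $c\odot z$, I would define $F_s=F_{s-1}[\overline{\ominus}^{cyc}_{3}](\sigma\circ g_j)$ on $G_s$, i.e. $F_s$ equals $F_{s-1}$ on $V(G_{s-1})$ and equals $\sigma\circ g_j$ on $V(T_{i_s})$; the two rules agree on the identified vertices, so $F_s$ is well defined. To confirm $F_s$ is proper it then suffices to note that every edge of $G_s$ is either an edge of $G_{s-1}$, an edge of $T_{i_s}$, or one of the three coincided edges $ab\overline{\ominus}xy$, $bc\overline{\ominus}yz$, $ca\overline{\ominus}zx$, and each of these is already properly colored inside the respective graph. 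Iterating through $s=2,\dots,M$ yields a proper vertex $4$-coloring $F_M$ of $G=G_M$, completing the induction.

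The step I expect to demand the most care is justifying that no adjacency other than the three identified triangle-edges is created, that is, that after the three edge-identifications $E(G_s)=E(G_{s-1})\cup E(T_{i_s})$ with no spurious edge joining the interior of $G_{s-1}$ to the interior of $T_{i_s}$. This is exactly the content of the triangular-face embedding operation: one maximal planar graph is placed inside a single triangular face of the other, so their only shared elements are the three boundary vertices and three boundary edges, and Definition \ref{defn:maximal-planar-graph-embedding} already guarantees the resultant graph is again maximal planar. Once that clean-embedding fact is invoked, the properness of each $F_s$ is immediate, and the induction goes through.
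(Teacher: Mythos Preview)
Your proposal is correct and follows the same idea the paper relies on: the paper does not write out a proof but folds the argument into the definition of the \emph{colored} triangular-face embedding operation (where $f(a)=g(x)$, $f(b)=g(y)$, $f(c)=g(z)$ is required, so the merged coloring $F=f[\overline{\ominus}^{cyc}_{3}]g$ is proper by construction), and the theorem is then immediate from iterating that operation. Your explicit permutation-of-colors step, showing that any two proper $4$-colorings can be made to agree on a chosen pair of triangular faces, is the one detail the paper leaves implicit; it is exactly what guarantees the colored embedding can always be carried out, and your induction on $M$ makes the iteration rigorous.
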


Let $S^c_{mpg}$ be a set of colored maximal planar graphs. By the way similarly with that of obtaining the maximal planar graphic lattice $\textbf{\textrm{L}}(S_{mpg}[\overline{\ominus}^{cyc}_{3}]Z^0\textbf{\textrm{P}})$, we get a colored maximal planar graphic lattice
\begin{equation}\label{eqa:colored-graphs-maximal-planar-lattice}
\textbf{\textrm{L}}(S^c_{mpg}[\overline{\ominus}^{cyc}_{3}]Z^0\textbf{\textrm{P}}^c)=\big \{H^c[\overline{\ominus}^{cyc}_{3}]^n_{i=1}d_iP^c_i: d_i\in Z^0,~P^c_i\in \textbf{\textrm{P}}^c,~H^c\in S^c_{mpg}\big\}
\end{equation} with $\sum^n_{i=1}d_i\geq 1$.

\begin{thm}\label{thm:colored-mpg-lattice-Coloring-closure22}
$^*$ \textbf{Coloring closure.} If each maximal planar graph $P^c_j$ in a colored maximal planar graph base $\textbf{\textrm{P}}^c$ admits a proper vertex $4$-coloring, and each maximal planar graph $H\in S^c_{mpg}$ admits a proper vertex $4$-coloring, then each maximal planar graph of the colored maximal planar graphic lattice $\textbf{\textrm{L}}(S^c_{mpg}[\overline{\ominus}^{cyc}_{3}]Z^0\textbf{\textrm{P}}^c)$ defined in Eq.(\ref{eqa:colored-graphs-maximal-planar-lattice}) admits a proper vertex $4$-coloring too.
\end{thm}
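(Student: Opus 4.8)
The plan is to argue by induction on the total number $M=\sum^n_{i=1}d_i$ of embedded base copies, exploiting the elementary fact that \emph{every triangular face of a maximal planar graph forces three pairwise distinct colors} under any proper vertex $4$-coloring. First I would record this observation: since the three boundary vertices $a,b,c$ of a triangular face $\Delta_{abc}$ are mutually adjacent, a proper vertex $4$-coloring restricts to an injection on $\{a,b,c\}$, so $\{f(a),f(b),f(c)\}$ is a $3$-subset of the palette $[1,4]$. This is precisely what makes the colored triangular-face embedding operation of Definition \ref{defn:maximal-planar-graph-embedding} compatible with proper $4$-colorings.

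Next I would establish the key single-step lemma: if two maximal planar graphs $G^c$ and $H^c$ each admit a proper vertex $4$-coloring, then so does $G^c[\overline{\ominus}^{cyc}_{3}]H^c$. Fix proper $4$-colorings $f$ of $G^c$ and $g$ of $H^c$, and suppose we embed along $\Delta_{abc}\subset G^c$ and $\Delta_{xyz}\subset H^c$. By the observation both color triples are injective, so there is a permutation $\sigma$ of $[1,4]$ with $\sigma(g(x))=f(a)$, $\sigma(g(y))=f(b)$ and $\sigma(g(z))=f(c)$. Because post-composing a proper coloring with a palette permutation keeps it proper, $g'=\sigma\circ g$ is again a proper $4$-coloring of $H^c$ that now agrees with $f$ on the identified triangle. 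The induced map $F=f[\overline{\ominus}^{cyc}_{3}]g'$ is therefore well defined: the three coincided vertices and three coincided edges carry a single consistent color, edges interior to $G^c$ retain $f$-colors, edges interior to $H^c$ retain $g'$-colors, and since the operation creates no adjacency between a $G^c$-interior vertex and an $H^c$-interior vertex, $F$ is proper.

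With the lemma in hand I would run the induction along the growing construction $H_{k}=H_{k-1}[\overline{\ominus}^{cyc}_{3}]T_{i_k}$ used to build the elements of the lattice preceding Eq.(\ref{eqa:colored-graphs-maximal-planar-lattice}). The base case $H_0=H^c\in S^c_{mpg}$ is $4$-colorable by hypothesis. Given a proper $4$-coloring of $H_{k-1}$ and the fact that each embedded piece $T_{i_k}$ is a copy of some $P^c_i\in\textbf{\textrm{P}}^c$ (hence $4$-colorable by hypothesis), the lemma extends the coloring to $H_k$; after $M=\sum^n_{i=1}d_i$ steps we obtain a proper vertex $4$-coloring of the arbitrary lattice element $H^c[\overline{\ominus}^{cyc}_{3}]^n_{i=1}d_iP^c_i$. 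This is essentially Theorem \ref{thm:colored-mpg-lattice-Coloring-closure} with the initial graph taken from $S^c_{mpg}$ rather than from the base $\textbf{\textrm{P}}^c$, so I would either invoke that theorem directly or reuse its inductive skeleton.

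The main point requiring care — the only genuine obstacle — is confirming from Definition \ref{defn:maximal-planar-graph-embedding} that the embedding produces no edge outside the shared triangle joining an interior vertex of one summand to an interior vertex of the other; this structural fact is what guarantees that the two partial colorings can be glued without hidden monochromatic edges. Once it is in place, the color-permutation alignment makes each gluing automatic and the induction becomes routine. A secondary subtlety is bookkeeping across a long sequence of embeddings: at each step the permutation $\sigma$ is applied only to the freshly added copy, leaving the already-fixed colors of $H_{k-1}$ untouched, so no earlier alignment is ever disturbed.
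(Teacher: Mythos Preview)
Your proposal is correct. The paper does not give an explicit proof of this theorem; it is stated immediately after the construction of the lattice, and the intended justification is the paragraph preceding Eq.(\ref{eqa:colored-maximal-planar-lattice}), where the \emph{colored} triangular-face embedding operation is defined so as to require $f(a)=g(x)$, $f(b)=g(y)$, $f(c)=g(z)$ on the identified triangle. Under that definition the combined coloring $F=f[\overline{\ominus}^{cyc}_{3}]g$ is proper by construction, and the theorem follows by iterating along the sequence $H_k=H_{k-1}[\overline{\ominus}^{cyc}_{3}]T_{i_k}$ exactly as you describe.

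Your argument goes one step further than the paper: the palette-permutation lemma shows that compatible colorings on the shared triangle can \emph{always} be produced from arbitrary proper $4$-colorings of the two summands, so the colored embedding operation is never vacuous. The paper builds this compatibility into the definition and leaves its attainability implicit; you supply the missing justification. The structural point you flag as the main obstacle --- that the embedding creates no adjacency between interior vertices of the two summands beyond the shared triangle --- is indeed what Definition \ref{defn:maximal-planar-graph-embedding} gives, and once noted the induction is routine, as you say.
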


\begin{rem}\label{rem:333333}
In \cite{Jin-Xu-Maximal-Science-Press-2019}, Prof. Xu mentioned: There are many variations of the Euler's formula, which play an essential role in the development of discharging, the key approach for exploiting the computer-assisted proof of the Four-Color Conjecture by investigating the unavoidability and reducibility of some configurations of maximal planar graphs \cite{A-Soifer-Springer-2009}. Haken and Appel spent seven years investigating configurations in more details and eventually in 1976 (with the help of Koch and about 1200 hour of fast mainframe computer) gave a computer-based proof of the Four-Color Conjecture (Ref. \cite{K-Appel-W-Haken-discharging-1977} and \cite{K-Appel-W-Haken-reducibility-1977}), where the number of discharging rules and the number of \emph{unavoidable configurations} they used are 487 and 1936, respectively.

In the research process of the Four-Color Conjecture, many challenges are encountered inevitably, and as a result new conjectures are proposed accordingly, such as Uniquely Four Chromatic planar graphs conjecture.\paralled
\end{rem}

Let $M_{PG}(k)$ be a maximal planar graph containing the $k$th unavoidable configuration of 1936 unavoidable configurations \cite{Jin-Xu-Maximal-Science-Press-2019}, and the vertex number $|V(M_{PG}(k))|$ is the smallest one in all maximal planar graphs containing the $k$th unavoidable configuration. Then we have a \emph{MPG-configuration graphic lattice}
\begin{equation}\label{eqa:555555}
\textbf{\textrm{L}}([\overline{\ominus}]Z^0\textbf{\textrm{C}}_{\textrm{onfi}})=\big \{[\overline{\ominus}^{cyc}_3]^{1936}_{k=1}a_kM_{PG}(k):~a_k\in Z^0, ~M_{PG}(k)\in \textbf{\textrm{C}}_{\textrm{onfi}}\big \}
\end{equation} with the \emph{configuration graph base} $\textbf{\textrm{C}}_{\textrm{onfi}}=(M_{PG}(1),M_{PG}(2),\dots ,M_{PG}(1936))$.

Let $S_{mpg}$ be the set of maximal planar graphs. Do cycle-coinciding operation to a inner face and a maximal planar graph $M_{PG}(k)\in S_{mpg}$, the resultant maximal planar graph is denoted as $G[\overline{\ominus}^{cyc}_3]^{1936}_{k=1}a_kM_{PG}(k)$, then we have a \emph{MPG-configuration graphic lattice}
\begin{equation}\label{eqa:555555}
\textbf{\textrm{L}}(S_{mpg}[\overline{\ominus}]Z^0\textbf{\textrm{C}}_{\textrm{onfi}})=\big \{G[\overline{\ominus}^{cyc}_3]^{1936}_{k=1}b_kM_{PG}(k):~G\in S_{mpg}, ~b_k\in Z^0, ~M_{PG}(k)\in \textbf{\textrm{C}}_{\textrm{onfi}}\big \}
\end{equation}

It is necessary to prove that there are only 1936 unavoidable configurations on the maximum planar graphs for determining whether the graphs in two MPG-configuration graphic lattices $\textbf{\textrm{L}}([\overline{\ominus}]Z^0\textbf{\textrm{C}}_{\textrm{onfi}})$ and $\textbf{\textrm{L}}(S_{mpg}[\overline{\ominus}]Z^0\textbf{\textrm{C}}_{\textrm{onfi}})$ are 4-color colorable.

\begin{thm}\label{thm:666666}
Using maximal planar graphs titles the $xOy$-plane, such that the boundary of the triangular-plane $\Delta_{xOy}$ is a triangle.
\end{thm}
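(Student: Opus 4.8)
The plan is to realize the theorem as the geometric limit of the maximal planar graph sequence described in Problem-(iii), namely $G_{n+1}=G_n[\overline{\ominus}^{cyc}_3]H_n$ with each $H_n$ a maximal planar graph and $G_1$ taken from the maximal planar graphic lattice $\textbf{\textrm{L}}([\overline{\ominus}^{cyc}_3]Z^0\textbf{\textrm{P}})$. First I would record the two structural facts that drive everything. (1) Every maximal planar graph $G$ on $n\geq 3$ vertices has an essentially unique planar drawing (it is $3$-connected for $n\geq 4$) in which all $2n-4$ faces, including the unbounded one, are triangles, so its boundary is a triangle $\Delta_{abc}$. (2) By Definition~\ref{defn:maximal-planar-graph-embedding}, the operation $[\overline{\ominus}^{cyc}_3]$ applied to two maximal planar graphs again yields a maximal planar graph, gluing them along one shared triangular face and creating no non-triangular face.

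Next I would make the inductive step precise so that the boundary stays a triangle at every stage. Starting from a seed $G_1$ with outer boundary triangle $C_1$, I would choose $H_n$ and perform the embedding in the \emph{outward} direction: overlap the outer triangular face of $G_n$ with an inner triangular face of $H_n$, so that $H_n$ surrounds $G_n$ and $G_n$ occupies a former bounded face of $H_n$. Then $G_{n+1}=G_n[\overline{\ominus}^{cyc}_3]H_n$ is maximal planar by fact (2), it contains $G_n$ as a subgraph with the shared triangle identified, and its outer boundary $C_{n+1}$ is exactly the outer triangle of $H_n$. Hence by induction every $G_n$ has a triangular boundary $C_n$, the drawing of $G_n$ lies in the interior of $C_{n+1}$, and all faces of every $G_n$ are triangles.

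Then I would pass to the geometric limit. Using F\'{a}ry's straight-line embedding theorem I would fix, stage by stage, a straight-line drawing of each $G_n$ so that $C_{n+1}$ strictly contains $C_n$ in its interior and so that the annular ring of new triangular faces contributed by $H_n$ is realized with its outer boundary at inradius exceeding any prescribed bound. Because each step adds at least one layer of triangular faces around the previous drawing, the nested regions bounded by $C_1\subset C_2\subset\cdots$ grow to exhaust the $xOy$-plane, and their union is a triangulation $\Delta_{xOy}$ in which every bounded face is a triangle. Thus the triangular-plane is tiled by maximal planar graphs, while at each finite stage, and therefore for the whole construction, the outer boundary is a triangle, which is the assertion.

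The hard part will be the exhaustion/covering argument in the last step: the combinatorics (faces and boundary remaining triangles) is immediate from the operation, but showing that the infinite union equals the whole plane with no leftover holes requires a careful geometric realization, verifying that the boundary triangles can be chosen with inradii tending to infinity while the drawings stay consistent across stages so that every compact subset is eventually covered. A clean way to sidestep pathologies is to fix in advance a fixed background triangulation of the plane whose unbounded region is bounded by a triangle, and then choose each $G_n$ as an initial finite segment of it; this reduces the limit claim to the standard fact that the plane admits such a triangulation, and confines the novelty to the lattice-theoretic realization via the $[\overline{\ominus}^{cyc}_3]$ operation.
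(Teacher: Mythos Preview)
Your approach is aligned with the paper's intended construction, but you should know that the paper provides no proof of this theorem at all: it is simply stated, without argument, immediately after the problem block in which the sequence $G_{n+1}=G_n[\overline{\ominus}^{cyc}_3]H_n$ is introduced and where the sentence ``In this construction, the \emph{boundary} of the triangular-plane $\Delta_{xOy}$ is a \emph{triangle}'' already appears. The paper treats the claim as a direct consequence of that construction rather than as something requiring an independent proof.

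Your proposal is therefore considerably more detailed than anything the paper offers. You use the same outward-growing sequence the paper has in mind, but you supply the inductive invariant (triangular outer face at every stage), a concrete geometric realization via F\'{a}ry's theorem, and an exhaustion argument with growing inradii. These are exactly the points one would want to pin down to make the paper's assertion rigorous, and none of them is addressed in the paper. Your caution about the covering step is well placed: the combinatorial part is immediate from the closure of maximal planar graphs under $[\overline{\ominus}^{cyc}_3]$, while the geometric exhaustion genuinely needs the careful choice of drawings you describe. Your fallback of fixing a background planar triangulation with triangular outer boundary and reading off the $G_n$ as nested finite pieces is a clean way to finish.
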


\begin{rem}\label{rem:not-necessarily-mpgs}
If it is allowed that the results of the operation $[\overline{\ominus}^{cyc}_{3}]^n_{k=1}b_kP^c_k$ in $\textbf{\textrm{L}}([\overline{\ominus}^{cyc}_{3}]Z^0\textbf{\textrm{P}}^c)$ defined in Eq.(\ref{eqa:colored-maximal-planar-lattice}) are not necessarily maximal planar graphs, then we get a graphic lattice $\textbf{\textrm{L}}^*([\overline{\ominus}^{cyc}_{3}]Z^0\textbf{\textrm{P}}^c)$ holding
$\textbf{\textrm{L}}([\overline{\ominus}^{cyc}_{3}]Z^0\textbf{\textrm{P}}^c)\subset \textbf{\textrm{L}}^*([\overline{\ominus}^{cyc}_{3}]Z^0\textbf{\textrm{P}}^c)$.

Similarly, the results of the operation $H^c[\overline{\ominus}^{cyc}_{3}]^n_{i=1}d_iP^c_i$ in $\textbf{\textrm{L}}(S^c_{mpg}[\overline{\ominus}^{cyc}_{3}]Z^0\textbf{\textrm{P}}^c)$ defined in Eq.(\ref{eqa:colored-graphs-maximal-planar-lattice}) are not necessarily maximal planar graphs, then we get a graphic lattice $\textbf{\textrm{L}}^*(S^c_{mpg}[\overline{\ominus}^{cyc}_{3}]Z^0\textbf{\textrm{P}}^c)$ holding $\textbf{\textrm{L}}(S^c_{mpg}[\overline{\ominus}^{cyc}_{3}]Z^0\textbf{\textrm{P}}^c)\subset \textbf{\textrm{L}}^*(S^c_{mpg}[\overline{\ominus}^{cyc}_{3}]Z^0\textbf{\textrm{P}}^c)$.

Notice that \textbf{determining} whether a graph $G\in \textbf{\textrm{L}}^*([\overline{\ominus}^{cyc}_{3}]Z^0\textbf{\textrm{P}}^c)$ or $G\in \textbf{\textrm{L}}^*(S^c_{mpg}[\overline{\ominus}^{cyc}_{3}]Z^0\textbf{\textrm{P}}^c)$ to be a maximal planar graph is NP-complete. \paralled
\end{rem}

\begin{thm}\label{thm:666666}
$^*$ Each graph of two graphic lattices $\textbf{\textrm{L}}^*([\overline{\ominus}^{cyc}_{3}]Z^0\textbf{\textrm{P}}^c)$ and $\textbf{\textrm{L}}^*(S^c_{mpg}[\overline{\ominus}^{cyc}_{3}]Z^0\textbf{\textrm{P}}^c)$ defined in Remark \ref{rem:not-necessarily-mpgs} admits a proper vertex $4$-coloring under the hypothesis of Theorem \ref{thm:colored-mpg-lattice-Coloring-closure} and Theorem \ref{thm:colored-mpg-lattice-Coloring-closure22}.
\end{thm}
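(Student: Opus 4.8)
The plan is to re-run the coloring-inheritance argument of Theorem~\ref{thm:colored-mpg-lattice-Coloring-closure} and Theorem~\ref{thm:colored-mpg-lattice-Coloring-closure22}, observing that \emph{maximality of the planar graph was never actually used} in those proofs, so the same induction extends verbatim to the enlarged lattices $\textbf{\textrm{L}}^*([\overline{\ominus}^{cyc}_{3}]Z^0\textbf{\textrm{P}}^c)$ and $\textbf{\textrm{L}}^*(S^c_{mpg}[\overline{\ominus}^{cyc}_{3}]Z^0\textbf{\textrm{P}}^c)$ of Remark~\ref{rem:not-necessarily-mpgs}. First I would unwind the definition: any graph $G$ in either starred lattice is obtained from a seed (a piece $H^c\in S^c_{mpg}$ in the second case, or an initial base graph in the first case) by a finite sequence of colored triangular-face embedding operations $G_{t}=G_{t-1}[\overline{\ominus}^{cyc}_{3}]P^c_{i_t}$, where each base piece $P^c_{i_t}\in \textbf{\textrm{P}}^c$ and the seed all admit a proper vertex $4$-coloring by the hypotheses borrowed from the two cited theorems. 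The induction is then on the number of embedding operations performed.

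The engine of the induction is the claim that the colored operation $F=f[\overline{\ominus}^{cyc}_{3}]g$ transports proper vertex $4$-colorings: if $f$ properly $4$-colors $G^c$, $g$ properly $4$-colors $H^c$, and the matching condition $f(a)=g(x)$, $f(b)=g(y)$, $f(c)=g(z)$ holds on the overlapped triangular faces $\Delta_{abc}\subset G^c$ and $\Delta_{xyz}\subset H^c$, then $F$ is a proper vertex $4$-coloring of $G^c[\overline{\ominus}^{cyc}_{3}]H^c$. The verification is purely combinatorial, and I would stress that it is \emph{independent of whether the result is maximal planar, or even planar}: the triangular-face embedding of Definition~\ref{defn:maximal-planar-graph-embedding} creates no edge that was not already an edge of $G^c$ or of $H^c$, since every edge of the result is the coincided image of an edge of one of the two pieces; consequently each edge of the result joins two vertices whose $F$-colors agree with their original $f$- or $g$-colors and hence differ, and as both $f$ and $g$ take values in a common palette of four colors, so does $F$. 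This is exactly the content that survives the passage from $\textbf{\textrm{L}}$ (Eq.(\ref{eqa:colored-maximal-planar-lattice}), Eq.(\ref{eqa:colored-graphs-maximal-planar-lattice})) to the starred lattices $\textbf{\textrm{L}}^*$.

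I expect the only genuine step to be the \emph{alignment} of palettes at each embedding. Before attaching $P^c_{i_t}$ to $G_{t-1}$ along a chosen target triangular face, one must exhibit a triangular face of $P^c_{i_t}$ whose three vertices carry exactly the three colors on the target face. Since any triangular face is properly $3$-colored out of the $4$-element palette, and since any permutation of the palette applied to $P^c_{i_t}$ is again a proper $4$-coloring, I would pre-compose the coloring of $P^c_{i_t}$ with a palette bijection sending its chosen face's three colors to the target's three colors; such a bijection always exists because a prescribed match on three of the four palette elements extends to a bijection of the whole four-element palette, and permuting colors never destroys properness. Carrying this alignment through the induction then yields a proper vertex $4$-coloring of $G$ at the final step, completing the proof for both starred lattices simultaneously, the $S^c_{mpg}$ case differing only in the choice of the (already $4$-colorable) seed. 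As a consistency check I would note that whenever a graph of $\textbf{\textrm{L}}^*$ happens to be planar its $4$-colorability also follows from the Four-Color Theorem, but the inductive construction above is preferable because it is explicit, algorithmic, and does not even require planarity of the resulting graph.
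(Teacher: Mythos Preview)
Your proof is correct, and the paper itself does not supply a proof of this theorem (nor, in fact, of the two theorems it cites, Theorem~\ref{thm:colored-mpg-lattice-Coloring-closure} and Theorem~\ref{thm:colored-mpg-lattice-Coloring-closure22}): all three are stated as coloring-closure facts without argument. Your inductive reconstruction is exactly the intended mechanism. The central observation --- that the operation $G^c[\overline{\ominus}^{cyc}_{3}]H^c$ of Definition~\ref{defn:maximal-planar-graph-embedding} introduces no edge that is not already an image of some edge of $G^c$ or of $H^c$, so a pair of proper $4$-colorings agreeing on the coincided triangle glues to a proper $4$-coloring of the result --- is correct and, as you note, uses neither planarity nor maximality of the output. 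That is precisely what survives the passage from $\textbf{\textrm{L}}$ to $\textbf{\textrm{L}}^*$.

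One small comment: your palette-alignment paragraph is sound but arguably redundant under the paper's definitions. The \emph{colored} triangular-face embedding is defined only when the matching condition $f(a)=g(x)$, $f(b)=g(y)$, $f(c)=g(z)$ already holds, so any graph that lies in the colored lattice $\textbf{\textrm{L}}^*([\overline{\ominus}^{cyc}_{3}]Z^0\textbf{\textrm{P}}^c)$ or $\textbf{\textrm{L}}^*(S^c_{mpg}[\overline{\ominus}^{cyc}_{3}]Z^0\textbf{\textrm{P}}^c)$ was, by definition, built through a sequence of operations in which the colors were already aligned at each step. Your permutation argument therefore proves a slightly stronger fact --- that the alignment is always \emph{achievable}, so the colored lattice is not vacuously restricted --- but for the bare statement of the theorem it is not strictly needed. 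Keeping it does no harm and makes the proof self-contained.
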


\subsubsection{Indexed-colorings}

For converting traditional colorings to string-colorings, we introduce the following indexed-colorings.

\begin{defn}\label{defn:general-graphs-colorings-redefine}
\cite{Yao-Sun-Wang-Su-Maximal-Planar-Graphs-2021} There are three \emph{indexed-colorings} of graphs as follows:

1. Suppose that a graph $G$ admits a proper vertex $k$-coloring $f:V(G)\rightarrow [1,k]$, such that the vertex set $V(G)=\bigcup^k_{i=1} V_i(G)$, and each vertex $u_{i,j}\in V_i(G)$ is colored with color $f(u_{i,j})=i_j$ for $j\in [1,n_i]$ and $i\in [1,k]$, where $n_i=|V_i(G)|$, we have vertex color sets
$$C_i(G)=\{f(x_{i,j}):x_{i,j}\in V_i(G)\}=\{i_1,i_2,\dots ,i_{n_i}\},~i\in [1,k]
$$ where the color $i_j$ is called \emph{$j$th color $i$}. For each edge $uv\in E(G)$, the \emph{edge induced-color} $f^*(uv)=F\langle f(u),f(v)\rangle$ defined by one of \textbf{addition} $a_i(+)b_j=(a+b)_{i+j}=F\langle f(u),f(v)\rangle$, \textbf{multiplication} $a_i(\cdot ) b_j=(a\cdot b)_{ij}=F\langle f(u),f(v)\rangle$ and \textbf{subtraction} $a_i(-)b_j=|a-b|_{|i-j|}=F\langle f(u),f(v)\rangle$, where two vertex colors $f(u)=a_i$ and $f(v)=b_j$. We call $f$ a \emph{proper vertex indexed-coloring}, and $\langle f,f^*\rangle$ an \emph{indexed proper total coloring}.

2. Suppose that a graph $G$ admits a proper edge $k$-coloring $g:E(G)\rightarrow [1,k]$, such that the edge set $E(G)=\bigcup^k_{r=1} E_r(G)$, and each edge $u_{r,j}v_{r,j}\in E_r(G)$ is colored with color $r_j$ for $j\in [1,s_r]$ and $r\in [1,k]$, where $s_r=|E_r(G)|$, we have edge color sets
$$
EC_r(G)=\{g(u_{r,j}v_{r,j}):u_{r,j}v_{r,j}\in E_r(G)\}=\{r_1,r_2,\dots ,r_{s_r}\},~r\in [1,k]
$$ For each vertex $x$ of $G$ with its neighbor set $N_{ei}(x)=\{y_1,y_2,\dots ,y_d\}$, the \emph{vertex induced-color} is as
\begin{equation}\label{eqa:indexed-proper-edge-coloring}
g^*(x)=A[r_1,r_2,\dots ,r_d]_{B[j(r_1),j(r_2),\dots ,j(r_d)]}
\end{equation} for $g(xy_{r_i,j})=(r_i)_{j(r_i)}$ with $xy_{r_i,j}\in E_{r_i}(G)$ and $r_i\in [1,k]$, where $A[r_1,r_2,\dots ,r_d]$ is a function of colors $r_i$ with $r_i\in [1,k]$, and $B[j(r_1),j(r_2),\dots ,j(r_d)]$ is a function of $j(r_i)$th color $r_i$ with $i\in [1,d]$, as well as degree $d=\textrm{deg}(x)$. We call $g$ a \emph{proper edge indexed-coloring}, and $\langle g,g^*\rangle$ a \emph{proper total indexed-coloring}.

3. Suppose that a $(p,q)$-graph $G$ admits a proper total $M$-coloring $h:V(G)\cup E(G)\rightarrow [1,M]$, such that the total set $V(G)\cup E(G)=\bigcup^M_{j=1} S_j(G)$, and each element $w_{j,k}\in S_j(G)$ is colored with color $j_k$ for $k\in [1,t_j]$ and $j\in [1,M]$, where $t_j=|S_j(G)|$ and $p+q=\sum ^M_{j=1} t_j$, we have the edge color sets
$$TC_j(G)=\{h(w_{j,k}):w_{j,k}\in S_j(G)\}=\{j_1,j_2,\dots ,j_{t_j}\},~j\in [1,M]
$$ We call $h$ a \emph{proper total indexed-coloring}.\qqed
\end{defn}

\begin{rem}\label{rem:333333}
\cite{Yao-Sun-Wang-Su-Maximal-Planar-Graphs-2021} In the proper edge $k$-coloring of Definition \ref{defn:general-graphs-colorings-redefine}, the vertex induced-color $g^*(x)$ defined in Eq.(\ref{eqa:indexed-proper-edge-coloring}) can be as

$A[r_1,r_2,\dots ,r_d]=\sum ^d_{t=1}r_t$ and $B[j(r_1),j(r_2),\dots ,j(r_d)]=\sum ^d_{t=1}j(r_t)$, or

$A[r_1,r_2,\dots ,r_d]=\prod ^d_{t=1}r_t$ and $B[j(r_1),j(r_2),\dots ,j(r_d)]=\prod ^d_{t=1}j(r_t)$, or

$A[r_1,r_2,\dots ,r_d]=\sum ^d_{t=1}r_t$ and $B[j(r_1),j(r_2),\dots ,j(r_d)]=\prod ^d_{t=1}j(r_t)$, or

$A[r_1,r_2,\dots ,r_d]=\prod ^d_{t=1}r_t$ and $B[j(r_1),j(r_2),\dots ,j(r_d)]=\sum ^d_{t=1}j(r_t)$, and so on.

There are the following characteristics of Definition \ref{defn:general-graphs-colorings-redefine} for the computational complexity of topological coding:

(i) The colors in each of color sets $C_i(G)$, $EC_r(G)$ and $TC_j(G)$ differ from each other, since their indexes are different from each other. Thereby, a $W$-constraint indexed-coloring induces more $W$-constraint indexed-colorings, where $W$-type$\in \{$proper vertex, proper edge, proper total$\}$.

(ii) The proper vertex indexed-coloring and the proper edge indexed-coloring both produce proper total indexed-colorings of $G$ by the vertex induced-colors and the edge induced-colors. There are many ways to produce the vertex induced-color $g^*(x)=A[r_1,r_2,\dots ,r_d]_{B[j(r_1),j(r_2),\dots ,j(r_d)]}$ and the edge induced-color $f^*(uv)=F(f(u),f(v))$.

(iii) See an indexed proper total $M$-coloring $h$ of a $(p,q)$-graph $G$ defined in Definition \ref{defn:general-graphs-colorings-redefine}. Let $a_{j_1},a_{j_2},\dots ,a_{j_{t_j}}$ be a permutation of colors $j_1,j_2,\dots ,j_{t_j}$ of $TC_j(G)$, so the elements of $S_j(G)$ defined in Definition \ref{defn:general-graphs-colorings-redefine} can be colored by $(t_j)!$ different ways. And moreover, there are $\Omega$ different indexed-colorings for the $(p,q)$-graph $G$ based on an indexed proper total $M$-coloring $h$, where $\Omega=\prod ^M_{j=1}(t_j)!$. Suppose that a $(p,q)$-graph $G$ admits $N_{tc}$ different proper total indexed-colorings, then we get $\Omega\cdot N_{tc}$ different proper total indexed-colorings of $G$. It will help us resist the attack of quantum computation by producing huge amount of topological codes.

(iv) Many of $\Omega\cdot N_{tc}$ different proper total indexed-colorings of $G$ are distinguishing colorings, so are those proper total indexed-colorings induced by the proper vertex indexed-coloring and the proper edge indexed-coloring.

(v) As $k=\chi(G), \chi\,'(G)$ and $M=\chi\,''(G)$, determining the colorings defined in Definition \ref{defn:general-graphs-colorings-redefine} is not easy, even difficult, since there are two longstanding conjectures:
$${
\begin{split}
&\textrm{Reed's conjecture: }~\chi(G)\leq \left \lceil \frac{\Delta(G)+1+K(G)}{2} \right \rceil, \\
&\textrm{Behzad and Vizing's conjecture: } ~\chi\,''(G)\leq \Delta(G)+2
\end{split}}
$$ proposed by Bruce Reed (1998), Behzad (1965), Vizing (1964), respectively. \paralled
\end{rem}

\begin{defn} \label{defn:equitable-total-coloring-def}
An $i$-\emph{color set} $S_i=V_i\cup E_i$ consists of $V_i=\{u: f(u)=i,u\in V(G)\}$ and $E_i=\{xy: f(xy)=i,xy\in E(G)\}$ foe a graph $G$. If $\big ||S_i|-|S_j|\big |\leq 1$ for any pair of color sets $S_i$ and $S_j$ $(1\leq i,j\leq k)$, we say that $f$ is a $k$-\emph{equitable total coloring} of the graph $G$.
Straightly, the number $$\chi\,''_{e}(G)=\min\{k: \text{ over all
$k$-equitable total colorings of }G \}$$ is called \emph{equitable total chromatic number} of the graph $G$.\qqed
\end{defn}

\begin{conj} \label{conj:c4-Weifan-Wang-2002}
(Weifan Wang, 2002) For every graph $G$, then $\chi\,''_e(G)\leq \Delta(G)+2$.
\end{conj}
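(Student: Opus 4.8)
The plan is to build an equitable $(\Delta(G)+2)$-total coloring by first securing an (unbalanced) proper total coloring using at most $\Delta(G)+2$ colors, and then redistributing elements among the color classes until their sizes differ by at most one, all while preserving properness. The natural reformulation is to pass to the \emph{total graph} $\mathcal{T}(G)$, whose vertices are $V(G)\cup E(G)$ and in which two elements are adjacent precisely when they are adjacent or incident in $G$; then a proper total coloring of $G$ is exactly a proper vertex coloring of $\mathcal{T}(G)$, its color classes $S_i=V_i\cup E_i$ are the independent sets of $\mathcal{T}(G)$, and $\chi\,''_e(G)\le\Delta(G)+2$ becomes the assertion that $\mathcal{T}(G)$ admits an \emph{equitable} proper vertex coloring with $\Delta(G)+2$ colors. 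Note $\Delta(\mathcal{T}(G))=2\Delta(G)$, so the budget $\Delta(G)+2$ is generous relative to the per-element degree, and this slack is the resource I would exploit.

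The first step is to import a proper total coloring $f$ with classes $S_1,\dots,S_k$, $k\le\Delta(G)+2$; for the graph families actually assembled in Section~2 of this paper (trees, caterpillars, lobsters, the complete graphs $K_{2m}$, and graphs built by the vertex-coinciding and edge-joining operations) such colorings exist from the established total-coloring results. The second step is an \emph{equitability transfer}: whenever two classes satisfy $|S_i|-|S_j|\ge 2$, I would seek an element $w\in S_i$ whose closed total-neighborhood in $\mathcal{T}(G)$ omits color $j$, so that recoloring $w$ to $j$ stays proper and reduces the imbalance by two. When no single element moves directly, I would fall back on Kempe-chain swaps: reversing a maximal alternating $(i,j)$-component of $\mathcal{T}(G)$ preserves properness, and choosing a component with more $i$-elements than $j$-elements strictly decreases $\big|\,|S_i|-|S_j|\,\big|$.

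The key counting lever I would make precise is that the ``$+2$'' slack forces free colors to exist at overloaded classes: a total-neighborhood of a vertex $u$ consumes at most $\deg_G(u)$ edge-colors together with the colors on its neighbors, and for an edge even fewer, so the number of elements \emph{forbidden} to leave an oversized class should be shown strictly smaller than that class's excess. For the recursive and compositional graphs here I would instead carry equitability through the operations directly: when $G=T_i[\odot]T_j$ or $G=[\odot_{\textrm{prop}}]^m_{k=1}b_kT_k$, I would glue equitable total colorings of the pieces and re-balance only across the coinciding vertices, a bounded local perturbation that the transfer/Kempe argument can absorb.

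The hard part will be that Conjecture~\ref{conj:c4-Weifan-Wang-2002} is, for arbitrary graphs, an open strengthening of the still-unresolved Total Coloring Conjecture, so no fully general argument of this type can be completed unconditionally: both the existence of a $(\Delta(G)+2)$-total coloring and the simultaneous achievability of balance can resist local swaps when the classes are rigid, as happens near large cliques where the Reed and Behzad--Vizing bounds interact. I therefore expect a genuinely complete proof only for the restricted families used in the topological-coding constructions, and the decisive technical step in each such family is to verify that the excess of every overloaded class always dominates the number of move-blocked elements, which is exactly what guarantees termination of the relocation-plus-Kempe rebalancing.
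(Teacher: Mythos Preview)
The statement you are addressing is a \emph{conjecture}, not a theorem: the paper presents it as an open problem attributed to Weifan Wang (2002) and offers no proof whatsoever. There is therefore nothing in the paper to compare your proposal against.

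You correctly identify in your final paragraph that this conjecture strictly strengthens the still-open Total Coloring Conjecture, and that no general argument along your lines can succeed unconditionally. That diagnosis is accurate, and it locates the essential gap in your own plan: your Step~1 --- importing a proper $(\Delta(G)+2)$-total coloring to start from --- already \emph{presupposes} the Total Coloring Conjecture, so at best your outline is a conditional reduction (equitable TCC modulo TCC), not a proof. Your Step~2 rebalancing via single moves and Kempe swaps is a reasonable heuristic, but the sentence ``the number of elements forbidden to leave an oversized class should be shown strictly smaller than that class's excess'' is precisely the unproven core: you have named the decisive inequality without supplying any mechanism to establish it, and there is no known argument that does so in general. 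Since the paper treats this as an open conjecture, your honest acknowledgment that a complete proof is out of reach is the correct conclusion.
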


\begin{rem}\label{rem:total-coloring-conjecture-00}
Let $\textbf{\textrm{H}}_{total}=(H_1,H_2,\dots ,H_n)$ be a \emph{graph base}, where each connected graph $H_i$ with $\chi\,''(H_i)\leq \Delta(H_i)+2$ has a property $P_{rop}(H_i)$ for $i\in [1,m]$, and each connected graph $H_j$ with a property $P_{rop}(H_j)$ holds $\chi\,''(H_j)\leq \Delta(H_j)+2$ for $j\in [m+1,n]$. Each connected graph $H_r$ admits a proper total coloring $f_r:V(H_r)\cup E(H_r)\rightarrow [1,\chi\,''(H_r)]$ for $r\in [1,n]$.

Let $G_{i,1},G_{i,2},\dots ,G_{i,M}$ be the $i$th permutation of graphs $a_1H_1,a_2H_2,\dots ,a_nH_n$ with $M=\sum^n_{k=1}a_k$ and $i\in [1,M!]$.

We do the vertex-coinciding operation to each graph $G_{i,j}$ for $j\in [1,M]$ and $i\in [1,M!]$ in the following way: Let $u_r\in V(G_{i,r})$ be a \emph{maximal degree vertex} with $\textrm{deg}_{G_{i,r}}(u_r)=\Delta(G_{i,r})$ for $r\in [1,M]$, and without loss of generality, $f_r(u_r)=1$. We vertex-coincide a maximal degree vertex $u_r$ with a maximal degree vertex $u_{r+1}$ into one vertex $u_r\odot u_{r+1}$, define a new total coloring $f^*$ for the graph $G_{i,r}[\odot]G_{i,r+1}$ as: $f^*(u_{r})=f^*(u_{r+1})=1$, and recolor edges $u_{r+1}v_{r+1,s}$ with $f^*(u_{r+1}v_{r+1,s})=\Delta(H_{r})+2+s$ for $s\in [1,\Delta(H_{r+1})]$ if $\textrm{deg}_{H_r}(u_r)\geq \textrm{deg}_{H_{r+1}}(u_{r+1})$, so $$f^*(u_{r+1}v_{r+1,\Delta(G_{i,r+1})})=\Delta(G_{i,r})+2+\Delta(G_{i,r+1})=\Delta(G_{i,r}[\odot]G_{i,r+1})+2
$$ and keep the colors of other vertices and edges in the graph $G_{i,r}[\odot]G_{i,r+1}$ as they were. The vertex-coincided graph $G$ is denoted as
$$G=G_{i,1}[\odot]G_{i,2}[\odot]\cdots [\odot]G_{i,M}=[\odot]^n_{k=1}a_kH_k
$$ and
\begin{equation}\label{eqa:total-coloring-conjecture}
E(G)=\bigcup ^n_{k=1}a_kE(H_k)
\end{equation} It is allowed that the maximal degree vertex $u_1$ is vertex-coincided with each maximal degree vertex $u_r$ for $r\in [2,n]$ such that the vertex-coincided graph $G=[\odot]^n_{k=1}a_kH_k$ has
$\Delta(G)=\sum^n_{k=1} \Delta(G_{i,k})$. Thereby, the vertex-coincided graph $G=[\odot]^n_{k=1}a_kH_k$ is simple and connected, and admits a proper total coloring $f^*$ holding
\begin{equation}\label{eqa:total-coloring-conjecture11}
\max \{f^*(w):w\in V(G)\cup E(G)\}\leq \Delta(G)+2
\end{equation} and holds Eq.(\ref{eqa:total-coloring-conjecture}). Then we get a \emph{graphic lattice}
\begin{equation}\label{eqa:total-coloring-conjecture-lattice}
\textbf{\textrm{L}}([\odot]Z^0\textbf{\textrm{H}}_{total})=\Big \{[\odot]^n_{k=1}a_kH_k:a_k\in Z^0,H_k\in \textbf{\textrm{H}}_{total}\Big \}
\end{equation} with $M=\sum^n_{k=1}a_k\geq 1$. As $a_i\geq 1$ and $a_j\geq 1$ for $i\neq j$, the connected graph $G\in \textbf{\textrm{L}}([\odot]Z^0\textbf{\textrm{H}}_{total})$ does not hold two properties $P_{rop}(H_i)$ and $P_{rop}(H_j)$ at the same time. In other word, if there are no necessary and sufficient conditions for Behzad and Vizing's conjecture, then the number of connected graphs in the graph base $\textbf{\textrm{H}}_{total}$ will get bigger and bigger.\paralled
\end{rem}

\begin{problem}\label{question:444444}
By Remark \ref{rem:total-coloring-conjecture-00}, it is natural to guess: ``A connected graph $G$ holds $\chi\,''(G)\leq \Delta(G)+2$ if and only if $G$ can be vertex-split into mutually edge-disjoint connected graphs $G_1,G_2,\dots ,G_m$ such that each connected graph $G_i$ holds $\chi\,''(G_i)\leq \Delta(G_i)+2$ for $i\in [1,m]$''.

We, also, can conjecture: ``A connected graph $G$ holds $\chi\,''(G)\leq \Delta(G)+2$ if and only if there is a graph base $\textbf{\textrm{H}}_{total}=(H_1,H_2,\dots ,H_n)$ such that the connected graph $G$ belongs to a graphic lattice $\textbf{\textrm{L}}([\odot]Z^0\textbf{\textrm{H}}_{total})$ defined in Eq.(\ref{eqa:total-coloring-conjecture-lattice})''.
\end{problem}

\subsubsection{Indexed-colorings of planar graphs}

\begin{defn}\label{defn:planar-graphs-color-sets-4-operations}
\cite{Yao-Sun-Wang-Su-Maximal-Planar-Graphs-2021} Suppose that a planar graph $T$ admits a proper vertex $4$-coloring $g$, so its vertex set $V(T)$ can be divided into four subsets, namely, $V(T)=\bigcup^4_{k=1} V_k(T)$ such that each vertex $x_{k,j}\in V_k(T)$ is colored with color $k_j$ for $j\in [1,m_k]$ and $k\in [1,4]$, where $m_k=|V_k(T)|$. So we have four vertex color sets
$$C_k(T)=\{g(x_{k,j}):x_{k,j}\in V_k(T)\}=\{k_1,k_2,\dots ,k_{m_k}\},~k\in [1,4]
$$ And for two ends of each edge $uv\in E(G)$ colored with $g(u)=a_i$ and $g(v)=b_j$, we define the edge color $g(uv)$ with one of \emph{indexed operations}: \textbf{indexed addition} $a_i(+)b_j=(a+b)_{(i+j)}$, \textbf{indexed multiplication} $a_i(\cdot ) b_j=ab_{ij}$ and \textbf{indexed subtraction} $a_i(-)b_j=|a-b|_{|i-j|}$. Then the proper vertex $4$-coloring $g$ and the indexed operations induce a \emph{total indexed-coloring} of $T$.\qqed
\end{defn}

\begin{example}\label{exa:8888888888}
In Fig.\ref{fig:mpg-4-color-authen-new}, a \emph{public-key graph} $G$ corresponds a \emph{private-key graph} $H_1$, and a topological signature authentication $T_1=\langle G, H_1\rangle$; another topological signature authentication $T_2=\langle G, H_2\rangle$ made by the public-key graph $G$ and the private-key graph $H_2$. There is a cycle $C_6$ in $G$, $H_1$, $T_1$, $H_2$ and $T_2$. We have

(i) the public-key graph $G$ admits a proper vertex 4-coloring $f$;

(ii) each private-key graph $H_i$ admits a proper vertex 4-coloring $g_i$ for $i=1,2$;

(iii) each topological signature authentication $T_i$ admits a proper vertex 4-coloring $F_i$ for $i=1,2$;

(iv) the cycle $C_6$ admits a proper vertex 4-coloring $h$.\\
Moreover, we have

$V(G)=\bigcup^4_{i=1} V_i(G)$, then the vertex color sets $f(V_1(G))=\{1_1,1_2,1_3,1_4\}$, $f(V_2(G))=\{2_1\}$, $f(V_3(G))=\{3_1,3_2\}$ and $f(V_4(G))=\{4_1,4_2,4_3\}$.

$V(H_1)=\bigcup^4_{i=1} V_i(H_1)$, then the vertex color sets $g_1(V_1(H_1))=\{1_2,1_3,1_4\}$, $g_1(V_2(H_1))=\{2_2\}$, $g_1(V_3(H_1))=\{3_2,3_3\}$ and $g_1(V_4(H_1))=\{4_2,4_3\}$.

$V(T_1)=\bigcup^4_{i=1} V_i(T_1)$, then the vertex color sets $F(V_1(T_1))=\{1_1,1_2,1_3,1_4\}$, $F(V_2(T_1))=\{2_1,2_2\}$, $F(V_3(T_1))=\{3_1,3_2,3_3\}$ and $F(V_4(T_1))=\{4_1,4_2,4_3\}$.

\begin{figure}[h]
\centering
\includegraphics[width=16.4cm]{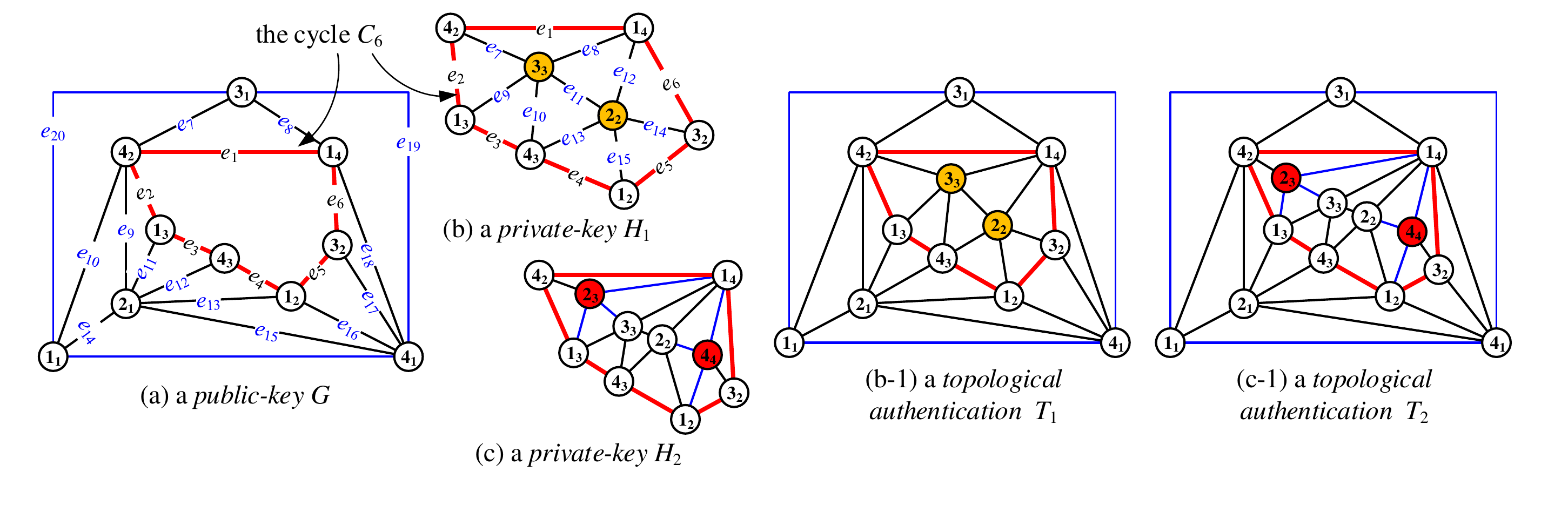}\\
\caption{\label{fig:mpg-4-color-authen-new}{\small Examples for the topological signature authentication based on maximal planar graphs, cited from \cite{Yao-Sun-Wang-Su-Maximal-Planar-Graphs-2021}.}}
\end{figure}

\textbf{Topological matrices.} We get the following Topcode-matrices $T_{code}(G,f)_{3\times 20}$ shown in Eq.(\ref{eqa:public-key-G-matrix}), $T_{code}(H_1,g_1)_{3\times 15}$ shown in Eq.(\ref{eqa:private-key-H1-topcode-matrix}), $T_{code}(C_6,h)_{3\times 6}$ shown in Eq.(\ref{eqa:common-cycle-C6-topcode-matrix}), and the Topcode-matrix $T_{code}(T_1,F_1)_{3\times 29}$ shown in Eq.(\ref{eqa:44-topcode-matrices-connections}), and the following connections
\begin{equation}\label{eqa:44-topcode-matrices-connections}
{
\begin{split}
&T_{code}(G,f)\cap T_{code}(H_1,g_1)=T_{code}(C_6,h)_{3\times 6}\\
&T_{code}(T_1,F_1)_{3\times 29}=T_{code}(G,f)\cup \big [T_{code}(H_1,g_1)\setminus T_{code}(C_6,h)\big ]
\end{split}}
\end{equation}

{\footnotesize
\begin{equation}\label{eqa:public-key-G-matrix}
\centering
{
\begin{split}
T_{code}(G,f)= \left(
\begin{array}{lcccccccccccccccc}
1_{4}~~4_{2}~~1_{3}~~1_{2}~~3_{2}~~3_{2}~~4_{2}~~1_{4}~~4_{2}~~4_{2} ~~~1_{3} ~~~4_{3}~~~1_{2}~~~1_{1}~~~4_{1}~~~4_{1}~~~3_{2}~~~1_{4}~~~3_{1}~~~3_{1}\\
e_{1}~~e_{2}~~e_{3}~~e_{4}~~e_{5}~~e_{6}~~e_{7}~~e_{8}~~e_{9}~~e_{10}~~e_{11}~~e_{12}~~e_{13}~~e_{14}~~e_{15}~~e_{16}~~e_{17}~~e_{18}~~e_{19}~~e_{20}\\
4_{2}~~1_{3}~~4_{3}~~4_{3}~~1_{2}~~1_{4}~~3_{1}~~3_{1}~~2_{1}~~1_{1} ~~~2_{1} ~~~2_{1}~ ~~2_{1}~~2_{1}~~~2_{1}~~~1_{2}~~~~4_{1}~~~4_{1}~~~~4_{1}~~~1_{1}
\end{array}
\right)
\end{split}}
\end{equation}
}

\begin{equation}\label{eqa:private-key-H1-topcode-matrix}
\centering
{
\begin{split}
T_{code}(H_1,g_1)= \left(
\begin{array}{ccccccccccccccccc}
1_{4} & 4_{2} & 1_{3} & 1_{2} & 3_{2} & 3_{2} & 4_{2} & 1_{4} & 3_{3} & 3_{3} & 3_{3} & 1_{4} & 4_{3} & 3_{2} & 1_{2}\\
e_{1} & e_{2} & e_{3} & e_{4} & e_{5} & e_{6} & e_{7} & e_{8} & e_{9} & e_{10} & e_{11} & e_{12} & e_{13} & e_{14} & e_{15}\\
4_{2} & 1_{3} & 4_{3} & 4_{3} & 1_{2} & 1_{4} & 3_{3} & 3_{3} & 1_{3} & 4_{3} & 2_{2} & 2_{2} & 2_{2} & 2_{2} & 2_{2}
\end{array}
\right)
\end{split}}
\end{equation}

\begin{equation}\label{eqa:common-cycle-C6-topcode-matrix}
\centering
{
\begin{split}
T_{code}(C_6,h)= \left(
\begin{array}{ccccccccccccccccc}
1_{4} & 4_{2} & 1_{3} & 1_{2} & 3_{2} & 3_{2}\\
e_{1} & e_{2} & e_{3} & e_{4} & e_{5} & e_{6} \\
4_{2} & 1_{3} & 4_{3} & 4_{3} & 1_{2} & 1_{4}
\end{array}
\right)
\end{split}}
\end{equation}

\textbf{Computational complexity.} The Topcode-matrix $T_{code}(G,f)_{3\times 20}$ produces $(60)!$ different number-based string; the Topcode-matrix $T_{code}(H_1,g_1)_{3\times 15}$ can distribute us $(45)!$ different number-based string; the Topcode-matrix $T_{code}(C_6,h)_{3\times 6}$ can make $(18)!$ different number-based string; and the Topcode-matrix $T_{code}(T_1,F_1)_{3\times 29}$ can distribute us $(87)!$ different number-based string.

\textbf{Total colorings.} By Definition \ref{defn:planar-graphs-color-sets-4-operations}, the proper 4-coloring $h$ admitted by the cycle $C_6$ induces the edge color of $C_6$ for forming total colorings $h(+_+)$, $h(-_-)$, $h(\times_{\times})$ and $h(\times_+)$ shown in Eq.(\ref{eqa:common-cycle-C6-topcode-matrix-2add}), where four Topcode-matrices are as follows:
$$A_1=T_{code}(C_6,h(+_+)),A_2=T_{code}(C_6,h(-_-)),A_3=T_{code}(C_6,h(\times_{\times})),A_4=T_{code}(C_6,h(\times_{+}))$$

\begin{equation}\label{eqa:common-cycle-C6-topcode-matrix-2add}
\centering
{
\begin{split}
&A_1= \left(
\begin{array}{ccccccccccccccccc}
1_{4} & 4_{2} & 1_{3} & 1_{2} & 3_{2} & 3_{2}\\
5_{6} & 5_{5} & 5_{6} & 5_{5} & 4_{4} & 4_{6} \\
4_{2} & 1_{3} & 4_{3} & 4_{3} & 1_{2} & 1_{4}
\end{array}
\right),~A_2= \left(
\begin{array}{ccccccccccccccccc}
1_{4} & 4_{2} & 1_{3} & 1_{2} & 3_{2} & 3_{2}\\
3_{2} & 3_{1} & 3_{0} & 3_{1} & 2_{0} & 2_{2} \\
4_{2} & 1_{3} & 4_{3} & 4_{3} & 1_{2} & 1_{4}
\end{array}
\right),\\
&A_3= \left(
\begin{array}{ccccccccccccccccc}
1_{4} & 4_{2} & 1_{3} & 1_{2} & 3_{2} & 3_{2}\\
4_{8} & 4_{6} & 4_{9} & 4_{6} & 3_{4} & 3_{8} \\
4_{2} & 1_{3} & 4_{3} & 4_{3} & 1_{2} & 1_{4}
\end{array}
\right),~A_4= \left(
\begin{array}{ccccccccccccccccc}
1_{4} & 4_{2} & 1_{3} & 1_{2} & 3_{2} & 3_{2}\\
4_{6} & 4_{5} & 4_{6} & 4_{5} & 3_{4} & 3_{6} \\
4_{2} & 1_{3} & 4_{3} & 4_{3} & 1_{2} & 1_{4}
\end{array}
\right)
\end{split}}
\end{equation}

\textbf{Number-based strings.} We show some number-based strings shown in Eq.(\ref{eqa:number-based-five-strings-matrix}):
\begin{equation}\label{eqa:number-based-five-strings-matrix}
\centering
{
\begin{split}
\begin{array}{ll}
&s_{C_6}(24)=144213123232421343431214\\
&s^{+_+}_{C_6}(36)=144213123232565556554446421343431214\\
&s^{-_-}_{C_6}(36)=144213123232323130312022421343431214\\
&s^{\times_\times}_{C_6}(36)=144213123232484649463438421343431214\\
&s^{\times_+}_{C_6}(36)=144213123232464546453436421343431214
\end{array}
\end{split}}
\end{equation} where

$T_{code}(C_6,h)$ produces the number-based string $s_{C_6}(24)$ with 24 bytes;

$T_{code}(C_6,h(+_+))$ produces the number-based string $s^{+_+}_{C_6}(36)$ with 36 bytes;

$T_{code}(C_6,h(-_-))$ produces the number-based string $s^{-_-}_{C_6}(36)$ with 36 bytes;

$T_{code}(C_6,h(\times_{\times}))$ produces the number-based string $s^{\times_\times}_{C_6}(36)$ with 36 bytes;

$T_{code}(C_6,h(\times_{+}))$ produces the number-based string $s^{\times_+}_{C_6}(36)$ with 36 byte.\qqed
\end{example}

\begin{thm}\label{thm:planar-graph-M-coloring plans}
$^*$ Since there are $(m_k)!$ permutations of $k_1,k_2,\dots ,k_{m_k}$ to color the vertices of $V_k(T)$ of a planar graph $T$ in Definition \ref{defn:planar-graphs-color-sets-4-operations}, so we have $\prod^4_{k=1} (m_k)!$ different coloring plans under a proper vertex $4$-coloring $g$. Suppose that the planar graph $T$ admits $|C^0_4(T)|$ different proper vertex $4$-colorings, then we have $M(T)$ different coloring plans for $T$, where the number
$$
M(T)=|C^0_4(T)|\cdot \prod^4_{k=1} (m_k)!
$$ refer to \cite{Jin-Xu-55-56-configurations-arXiv-2107-05454v1} and \cite{Jin-Xu-Maximal-Science-Press-2019}.
\end{thm}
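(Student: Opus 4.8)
The plan is to treat the statement as a pure counting identity established by the multiplication principle, organized into two independent layers: the choice of an underlying proper vertex $4$-coloring, and the choice of an index assignment within each color class. No deep theorem is required; everything rests on Definition \ref{defn:planar-graphs-color-sets-4-operations} and the elementary product rule.

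First I would fix a single proper vertex $4$-coloring $g$ of the planar graph $T$. By Definition \ref{defn:planar-graphs-color-sets-4-operations}, $g$ partitions the vertex set as $V(T)=\bigcup_{k=1}^4 V_k(T)$ with $m_k=|V_k(T)|$, and a coloring plan refines $g$ by attaching to each vertex $x\in V_k(T)$ an indexed color $k_j$ with $j\in[1,m_k]$, where inside each class the indices $1,2,\dots,m_k$ are each used exactly once. Thus specifying the indices inside $V_k(T)$ is precisely the same data as choosing a bijection $V_k(T)\to[1,m_k]$, of which there are exactly $(m_k)!$. Since the four classes $V_1(T),\dots,V_4(T)$ are disjoint and their index assignments are selected independently, the multiplication principle yields $\prod_{k=1}^4 (m_k)!$ distinct coloring plans refining the fixed $g$.

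Next I would account for the freedom in the base coloring. Let $C^0_4(T)$ denote the set of all proper vertex $4$-colorings of $T$. The key observation is that the map sending an indexed coloring to the proper $4$-coloring obtained by forgetting all indices is well defined and recovers exactly the base coloring $g$ from which the plan was built; hence the families of plans produced by two distinct base colorings $g\ne g'$ are disjoint. Consequently the total count is the sum over base colorings of the per-coloring counts, and under the understanding that each proper $4$-coloring contributes $\prod_{k=1}^4 (m_k)!$ plans, one obtains $M(T)=|C^0_4(T)|\cdot \prod_{k=1}^4 (m_k)!$.

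The main obstacle I anticipate is not the combinatorics but the well-definedness of the factor $\prod_{k=1}^4 (m_k)!$: the class sizes $m_k$ need not agree across two different proper $4$-colorings of $T$ (for instance one coloring may have size profile $(3,1,1,1)$ and another $(2,2,1,1)$), so the honest count is the sum $\sum_{g\in C^0_4(T)}\prod_{k=1}^4 (m_k(g))!$, and the clean product form $|C^0_4(T)|\cdot\prod_{k=1}^4 (m_k)!$ is exact only when the profile $(m_1,m_2,m_3,m_4)$ is independent of the chosen coloring. I would therefore state the result with $m_k$ understood as the class sizes of the coloring under consideration, and remark that the displayed formula is precisely the per-coloring count multiplied by the number of base colorings in the uniform-profile case.
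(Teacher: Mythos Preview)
Your argument is correct and matches the paper's approach exactly: the paper gives no separate proof environment for this theorem, and the entire justification is already embedded in the theorem statement itself (``Since there are $(m_k)!$ permutations \dots\ so we have $\prod^4_{k=1}(m_k)!$ \dots''), which is precisely the two-layer multiplication-principle count you spell out. Your observation that the honest total is $\sum_{g\in C^0_4(T)}\prod_{k=1}^4 (m_k(g))!$ and that the displayed product form tacitly assumes a uniform size profile across all proper $4$-colorings is a legitimate refinement the paper does not address.
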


\begin{rem}\label{rem:333333}
\cite{Yao-Su-Ma-Wang-Yang-arXiv-2202-03993v1} In Definition \ref{defn:planar-graphs-color-sets-4-operations}, we have defined the edge color of an edge $uv$ with two ends colors $g(u)=a_i$ and $g(v)=b_j$ under a proper vertex $4$-coloring $g$. Furthermore, we have
\begin{asparaenum}[\textbf{Oper}-1. ]
\item \textbf{Three proper operations}: Addition $a_i(+)b_j=(a+b)_{(i+j)}$, multiplication $a_i(\cdot ) b_j=ab_{ij}$, subtraction $a_i(-)b_j=|a-b|_{|i-j|}$.
\item \textbf{Mixed operation 1}: $a_i(+\times )b_j=(a+b)_{ij}$, $a_i(+|-|)b_j=(a+b)_{|i-j|}$.
\item \textbf{Mixed operation 2}: $a_i(\times +)b_j=(a\cdot b)_{i+j}$, $a_i(\times |-|)b_j=(a\cdot b)_{|i-j|}$.
\item \textbf{Mixed operation 3}: $a_i(|-|\times )b_j=|a-b|_{ij}$, $a_i(|-|+)b_j=|a-b|_{i+j}$.
\item \textbf{Operation based on Klein four-group}: Let $0:=1$, $a:=2$, $b:=3$, $c:=4$ in the table $K^+_{\textrm{lein}}$ shown in Eq.(\ref{eqa:Klein-field-four-group}). We define $(K^+_{\textrm{lein}})$ addition with the commutative law $r_i(K^+_{\textrm{lein}})s_j=s_j(K^+_{\textrm{lein}})r_i$ in the following computations: $1_i(K^+_{\textrm{lein}})1_j=1_{i+j}$, $1_i(K^+_{\textrm{lein}})2_j=2_{i+j}$, $1_i(K^+_{\textrm{lein}})3_j=3_{i+j}$, $1_i(K^+_{\textrm{lein}})4_j=4_{i+j}$; $2_i(K^+_{\textrm{lein}})2_j=2_{i+j}$, $2_i(K^+_{\textrm{lein}})3_j=4_{i+j}$, $2_i(K^+_{\textrm{lein}})4_j=3_{i+j}$; $3_i(K^+_{\textrm{lein}})3_j=1_{i+j}$, $3_i(K^+_{\textrm{lein}})4_j=2_{i+j}$; and $4_i(K^+_{\textrm{lein}})4_j=1_{i+j}$; refer to Fig.\ref{fig:Klein-addition-multiplication} (a) and (a-1).

\quad We define $(K^{\times}_{\textrm{lein}})$ multiplication with the commutative law $r_i(K^{\times}_{\textrm{lein}})s_j=s_j(K^{\times}_{\textrm{lein}})r_i$ as follows: $1_i(K^{\times}_{\textrm{lein}})1_j=1_{ij}$, $1_i(K^{\times}_{\textrm{lein}})2_j=1_{ij}$, $1_i(K^{\times}_{\textrm{lein}})3_j=1_{ij}$, $1_i(K^{\times}_{\textrm{lein}})4_j=1_{ij}$; $2_i(K^{\times}_{\textrm{lein}})2_j=2_{ij}$, $2_i(K^{\times}_{\textrm{lein}})3_j=3_{ij}$, $2_i(K^{\times}_{\textrm{lein}})4_j=4_{ij}$; $3_i(K^{\times}_{\textrm{lein}})3_j=3_{ij}$, $3_i(K^{\times}_{\textrm{lein}})4_j=2_{ij}$; and $4_i(K^{\times}_{\textrm{lein}})4_j=3_{ij}$; refer to Fig.\ref{fig:Klein-addition-multiplication} (b) and (b-1).\paralled
\end{asparaenum}
\end{rem}

\begin{rem}\label{rem:333333}
\cite{Yao-Su-Ma-Wang-Yang-arXiv-2202-03993v1} Let $K_{\textrm{lein}}= \{0, a, b, c\}$ be \emph{Klein four-group} with addition ``$+$'' in $K^+_{\textrm{lein}}$ shown in Eq.(\ref{eqa:Klein-field-four-group}). The Klein four-group can be extended to a \emph{finite field}, called the \emph{Klein field}, where multiplication is added as a second operation, with $0$ as the zero element and $a$ as the identity element. The multiplication table is $K^{\times}_{\textrm{lein}}$ shown in Eq.(\ref{eqa:Klein-field-four-group}). Multiplication ``$\times$'' and addition ``$+$'' obey the distributive law.

\begin{equation}\label{eqa:Klein-field-four-group}
\centering
K^+_{\textrm{lein}}=
\begin{array}{c|cccc|c}
+ & 0 & a & b & c & \textrm{substitution}\\
\hline
0 & 0 & a & b & c\\
a & a & 0 & c & b & (1,2)(3,4)\\
b & b & c & 0 & a & (1,3)(2,4) \\
c & c & b & a & 0 & (1,4)(2,3)
\end{array}
\qquad K^{\times}_{\textrm{lein}}=
\begin{array}{c|ccccc}
\times & 0 & a & b & c \\
\hline
0 & 0 & 0 & 0 & 0 \\
a & 0 & a & b & c \\
b & 0 & b & c & a \\
c & 0 & c & a & b
\end{array}
\end{equation}

A Klein four-group, also, is a normal subgroup of the \emph{alternating group} $A_4$ and of the \emph{symmetric group} $S_4$ over four letters.\paralled
\end{rem}

\begin{figure}[h]
\centering
\includegraphics[width=16.4cm]{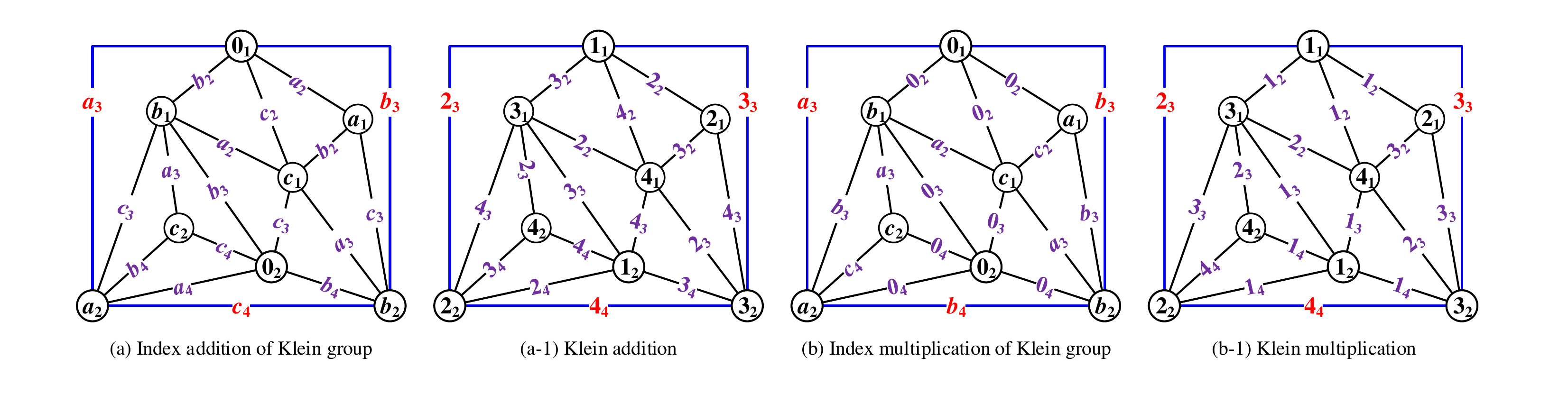}\\
\caption{\label{fig:Klein-addition-multiplication}{\small Examples for the addition and multiplication based on two Klein groups $K^+_{\textrm{lein}}$ and $K^{\times}_{\textrm{lein}}$ shown in Eq.(\ref{eqa:Klein-field-four-group}).}}
\end{figure}

\subsection{Hanzi-based strings}

Since Hanzi-graphs admit many colorings/labelings defined in this article, applying them to the field of information will greatly help people of using Chinese characters to be unimpeded in digital finance and personal privacy protection, on the other hands, Hanzi-graphs can be feeded into a computer by speaking, writing and keyboard inputs in Chinese \cite{Yao-Wang-2106-15254v1, Yao-Su-Ma-Wang-Yang-arXiv-2202-03993v1, Yao-Mu-Sun-Sun-Zhang-Wang-Su-Zhang-Yang-Zhao-Wang-Ma-Yao-Yang-Xie2019, Yarong-Mu-Bing-Yao-Hanzi-graphs-2018, Zhang-Yang-Mu-Zhao-Sun-Yao-2019}. Hanzi-graphs are planar graphs, so they are easy to be imputed into computer by many techniques.

\subsubsection{Planar graphs and Hanzi-based strings}

Let $C_{sent}=H_{a_1b_1c_1d_1}H_{a_2b_2c_2d_2}\cdots H_{a_nb_nc_nd_n}$ be a Hanzi-based string, where each Hanzi-graph $H_{a_ib_ic_id_i}$ for $i\in [1,n]$ is defined in \cite{GB2312-80}. Suppose that each Hanzi-graph $H_{a_ib_ic_id_i}$ for $i\in [1,n]$ admits $m_i$ colorings $f_{i,1}, f_{i,2},\dots ,f_{i,m_i}$ defined on sets of numbers, then we have \emph{Hanzi-graph Topcode-matrices}
\begin{equation}\label{eqa:555555}
T_{code}(H_{a_ib_ic_id_i},f_{i,j})_{3\times q_i}=(X^j_{a_ib_ic_id_i},E^j_{a_ib_ic_id_i},Y^j_{a_ib_ic_id_i})^T
\end{equation} where the edge number $q_i=|E(H_{a_ib_ic_id_i})|$ for $j\in [1,m_i]$ and $i\in [1,n]$.

Each Hanzi-graph Topcode-matrix $T_{code}(H_{a_ib_ic_id_i},f_{i,j})_{3\times q_i}$ produces $(3q_i)!$ different Hanzi-based strings in total, then each Hanzi-graph $H_{a_ib_ic_id_i}$ produces $m_i(3q_i)!$ different number-based strings.

Thereby, the Chinese sentence $C_{sent}$ provides us
$$n_{string}(C_{sent})=\sum^n_{i=1}m_i(3q_i)!$$ different Hanzi-based strings in total.

\begin{figure}[h]
\centering
\includegraphics[width=16.4cm]{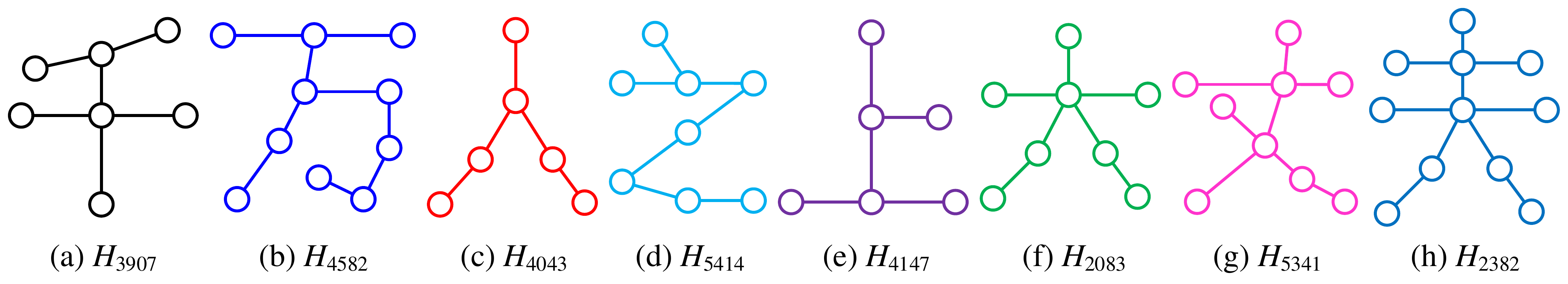}\\
\caption{\label{fig:hundreds-people}{\small A Hanzi-based string $H_{an}=H_{3907}H_{4582}H_{4043}H_{5414}H_{4147}H_{2083}H_{5341}H_{2382}$.}}
\end{figure}

There are $40320=8!$ sentences made by the Chinese letters $H_{3907}$, $H_{4582}$, $H_{4043}$, $H_{5414}$, $H_{4147}$, $H_{2083}$, $H_{5341}$, and $H_{2382}$ shown in Fig.\ref{fig:hundreds-people}, in total; part of them are shown in Fig.\ref{fig:Hanzi-combinators}.

\begin{example}\label{exa:8888888888}
Fig.\ref{fig:Hanzi-centence-colorings} shows us some colorings and labelings of the Hanzi-graphs $H_{3907}$, $H_{4582}$, $H_{4043}$, $H_{5414}$, $H_{4147}$, $H_{2083}$, $H_{5341}$ and $H_{2382}$ shown in Fig.\ref{fig:hundreds-people} as follows:

(H-1) the Hanzi-graph $H_{3907}$ admits an \emph{edge-magic total labeling} $f_1$ holding the edge-magic constraint
$$f_1(u)+f_1(uv)+f_1(v)=10$$ for each edge $uv\in E(H_{3907})$, such that the vertex color set $f_1(V(H_{3907}))=[0,6]$ and the edge color set $f_1(E(H_{3907}))=[1,6]$;

(H-2) the Hanzi-graph $H_{4582}$ admits a non-set-ordered odd-graceful labeling $f_2$ holding each odd number $f_2(uv)=|f_2(u)-f_2(v)|$ for each edge $uv\in E(H_{4582})$, such that the vertex color set $f_2(V(H_{4582}))\subset [0,17]$ and the edge color set $f_2(E(H_{4582}))=[1,17]^o$;

(H-3) the Hanzi-graph $H_{4043}$ admits an edge-difference total labeling $f_3$ holding the edge-difference constraint
$$f_3(uv)+|f_3(u)-f_3(v)|=6$$ for each edge $uv\in E(H_{4043})$, such that the vertex color set $f_3(V(H_{4043}))=[0,5]$ and the edge color set $f_3(E(H_{4043}))=[1,5]$;

(H-4) the Hanzi-graph $H_{5414}$ admits a felicitous-difference total labeling $f_4$ holding the felicitous-difference constraint
$$|f_4(u)+f_4(v)-f_4(uv)|=2$$ for each edge $uv\in E(H_{5414})$, such that the vertex color set $f_4(V(H_{5414}))=[0,7]$ and the edge color set $f_4(E(H_{5414}))=[1,7]$;

(H-5) the Hanzi-graph $H_{4147}$ admits an all-odd set-ordered graceful-difference total labeling $f_5$ holding the graceful-difference constraint
$$\big ||f_5(v)-f_5(u)|-f_5(uv)\big |=1$$ for each edge $uv\in E(H_{4147})$, such that the vertex color set $f_5(V(H_{4147}))=[1,9]^o$ and the edge color set $f_5(E(H_{4147}))=[1,11]^o$;

(H-6) the Hanzi-graph $H_{2083}$ admits an all-odd set-ordered graceful-difference total labeling $f_6$ holding $f_6(uv)=f_6(u)+f_6(v)~(\bmod~7)$ for each edge $uv\in E(H_{2083})$, such that the vertex color set $f_6(V(H_{2083}))=[1,9]^o$ and the edge color set $f_6(E(H_{2083}))=[1,11]^o$;

(H-7) the Hanzi-graph $H_{5341}$ admits an odd-even-separated set-ordered graceful-difference total labeling $f_7$ holding each even number $f_7(uv)=|f_7(u)-f_7(v)|$ for each edge $uv\in E(H_{5341})$, such that the vertex color set $f_7(V(H_{5341}))=[1,17]^o$ and the edge color set $f_7(E(H_{5341}))=[2,16]^e$;

(H-8) the Hanzi-graph $H_{2382}$ admits a non-set-ordered graceful labeling $f_8$ holding $f_8(uv)=|f_8(u)-f_8(v)|$ for each edge $uv\in E(H_{2382})$, such that the vertex color set $f_8(V(H_{2382}))=[0,10]$ and the edge color set $f_8(E(H_{2382}))=[1,10]$.\qqed
\end{example}

\begin{figure}[h]
\centering
\includegraphics[width=12cm]{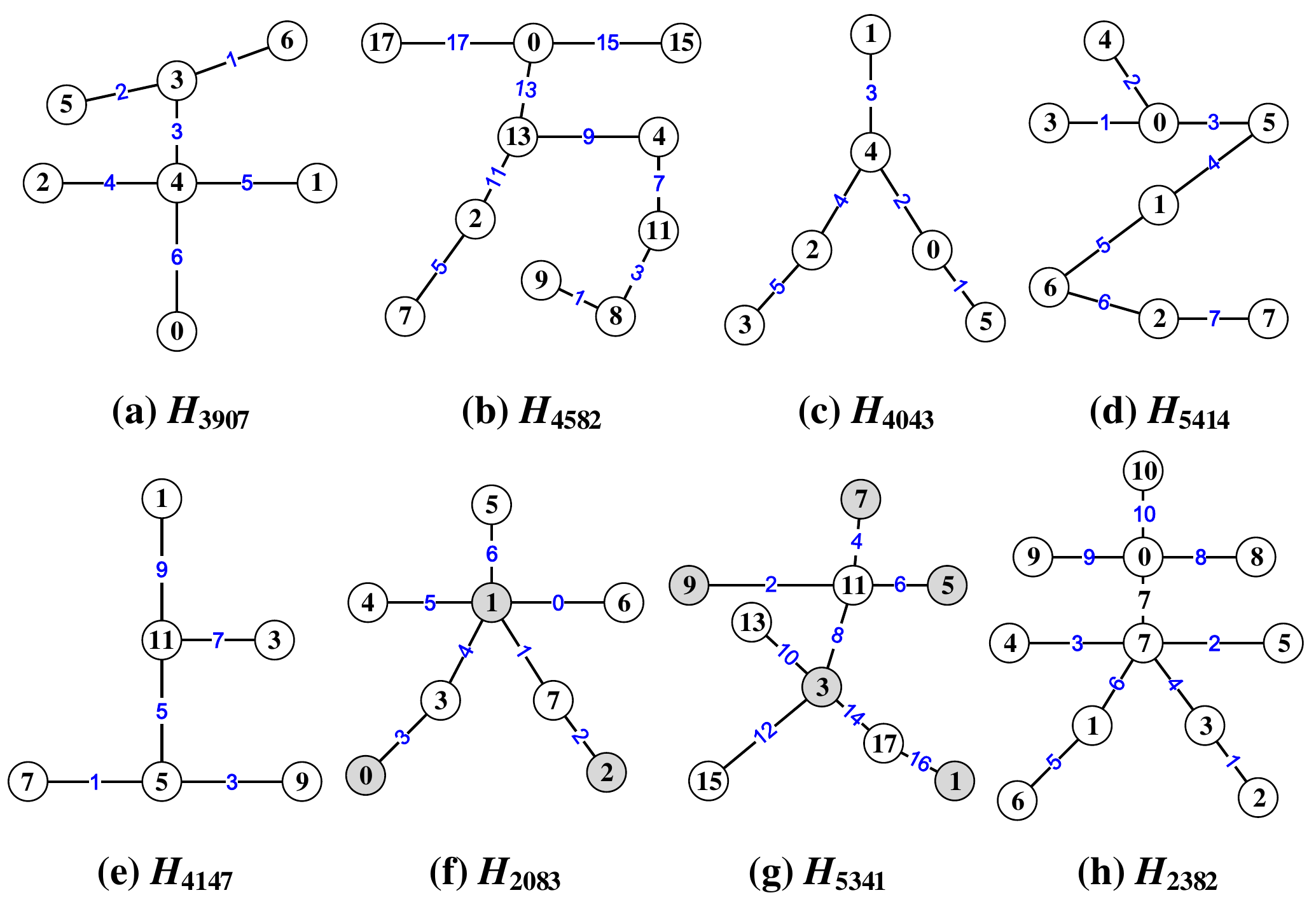}\\
\caption{\label{fig:Hanzi-centence-colorings}{\small Some colorings and labelings of the Hanzi-graphs $H_{3907},H_{4582},H_{4043},H_{5414},H_{4147},H_{2083},H_{5341}$ and $H_{2382}$.}}
\end{figure}

About planar graphs and Hanzi-graphs, we have the following jobs to do:

(i) \textbf{Hanzi-graphs $\rightarrow$ planar graphs}. Let
\begin{equation}\label{eqa:Hanzi-graph-base11}
\textbf{\textrm{H}}_{an}=(H_{a_1b_1c_1d_1},H_{a_2b_2c_2d_2},\dots ,H_{a_nb_nc_nd_n})
\end{equation}be a \emph{Hanzi-graph base} made by mutually disjoint Hanzi-graphs $H_{a_1b_1c_1d_1}$, $H_{a_2b_2c_2d_2}$, $\dots $, $H_{a_nb_nc_nd_n}$, where each Hanzi-graph $H_{a_ib_ic_id_i}$ for $i\in [1,n]$ is defined in \cite{GB2312-80}. Each graph $G=[\odot_{\textrm{plan}}]^n_{i=1}H_{a_ib_ic_id_i}$ is a planar graph; see examples shown in Fig.\ref{fig:Hanzi-vs-planar-graphs}. \textbf{Find} $\min\{|V(G)|:~G=[\odot_{\textrm{plan}}]^n_{i=1}H_{a_ib_ic_id_i}\}$.

\begin{figure}[h]
\centering
\includegraphics[width=16.4cm]{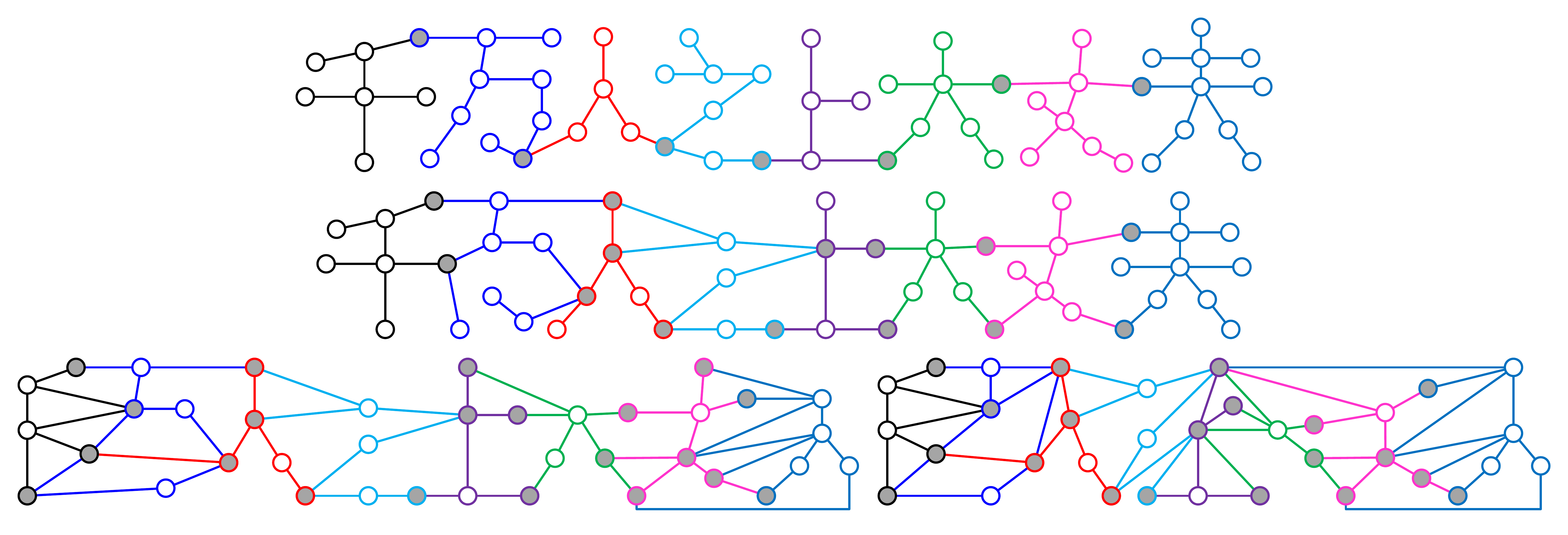}\\
\caption{\label{fig:Hanzi-vs-planar-graphs}{\small Different planar graphs obtained by vertex-coinciding operation to the Hanzi-based string $H_{an}$ shown in Fig.\ref{fig:hundreds-people}.}}
\end{figure}

Finding another planar graph $G^*$ forms a semi-planar graph
$$G^{semi}=G^*[\odot_{\textrm{plan}}]G=G^*[\odot_{\textrm{plan}}]\Big ([\odot_{\textrm{plan}}]^n_{i=1}H_{a_ib_ic_id_i}\Big )
$$ such that the vertex number $|V(G^*)|$ is as smaller as possible.

Conversely, a semi-planar graph $J$ can be vertex-split into different groups of Hanzi-graphs $G_{i,1}$, $ G_{i,2}$, $\dots $, $G_{i,n_i}$, and put them into a set $C_{split}(J)$, \textbf{determine} the set $C_{split}(J)$ and the number $\min\{n_i\}$.

(ii) \textbf{Color} planar graphs $G=[\odot_{\textrm{plan}}]^n_{i=1}H_{a_ib_ic_id_i}$ based on a Hanzi-graph base $\textbf{\textrm{H}}_{an}$ by the indexed-colorings defined in Definition \ref{defn:planar-graphs-color-sets-4-operations}.

(iii) We have a \emph{$[\odot_{\textrm{plan}}]$-graphic lattice} defined as
\begin{equation}\label{eqa:555555}
\textbf{\textrm{L}}([\odot_{\textrm{plan}}]Z^0\textbf{\textrm{H}}_{an})=\Big \{[\odot_{\textrm{plan}}]^n_{i=1}\lambda_iH_{a_ib_ic_id_i}:\lambda_i\in Z^0,H_{a_ib_ic_id_i}\in \textbf{\textrm{H}}_{an}\Big \}
\end{equation} based on the Hanzi-graph base $\textbf{\textrm{H}}_{an}$ shown in Eq.(\ref{eqa:Hanzi-graph-base11}), where $\sum ^n_{i=1}\lambda_i\geq 1$.

A process of starting from a semi-planar graph $J$ and coming back to a semi-planar graph $G^{semi}$ is as follows:
\begin{equation}\label{eqa:555555}
J \rightarrow _{vsplt} \textbf{\textrm{H}}_{an} \rightarrow \textbf{\textrm{L}}([\odot_{\textrm{plan}}]Z^0\textbf{\textrm{H}}_{an}) \rightarrow G \rightarrow G^*[\odot_{\textrm{plan}}]G=G^{semi}
\end{equation}

\subsubsection{Hanzi-Lattices}

In \cite{Wang-Yao-Su-Wanjia-Zhang-2021-IMCEC}, the authors point out that graphic lattices not only have their important applications in information security and mathematics, but also have their potential in other fields. As known, there are 250 ``Pianpang'' and ``Bushou'' in Chinese characters (Hanzis) according to ``Chinese Character GB18030-2000'' containing 27484 Chinese characters, where ``Pianpang'' is the right and left part of a Chinese character, and ``Bushou'' can arrange Chinese characters; we have 299 ``Dutizi''; and there are 16 punctuation being the marks to clarify meaning by indicating separation of words into sentences and clauses and phrases. Let $H_{an}(1),H_{an}(2),\dots ,H_{an}(565)$ indicate all of ``Pianpang'', ``Bushou'', ``Dutizi'' and punctuation in Chinese characters, so we call $\textbf{\textrm{H}}_{an}=(H_{an}(1),H_{an}(2),\dots ,H_{an}(565))$ a \emph{Hanzi-Pianpang base}. Thereby, we can build up a \emph{Hanzi-literary lattice} $\textbf{\textrm{L}}(Z^0\textbf{\textrm{H}}_{an})$ of various Chinese writings by
\begin{equation}\label{eqa:GB18030-2000-Hanzi-lattice}
\textbf{\textrm{L}}(Z^0\textbf{\textrm{H}}_{an})=\left \{C_{para}=\bigcup ^{565}_{k=1}x_kH_{an}(k):x_k\in Z^0,H_{an}(k)\in \textbf{\textrm{H}}_{an}\right \}
\end{equation}
with $\sum ^{565}_{k=1}x_k\geq 1$. Let $C_{para}=\sum ^{565}_{k=1}x_kH_{an}(k)$ in $\textbf{\textrm{L}}(Z^0\textbf{\textrm{H}}_{an})$. We can confirm the following facts:

\begin{problem}\label{qeu:444444}
We need to consider the following problems:
\begin{asparaenum}[\textrm{\textbf{Prob}}-1. ]
\item \textbf{List} all Chinese paragraphs in $\textbf{\textrm{L}}(Z^0\textbf{\textrm{H}}_{an})$ with $\sum ^{565}_{k=1}x_k\leq M$ for a fixed integer $M\in Z^0\setminus \{0\}$.
\item \textbf{Judge} whether each $C_{para}$ defined in Eq.(\ref{eqa:GB18030-2000-Hanzi-lattice}) is meaningful or meaningless in Chinese.
\end{asparaenum}

Considering the above \textrm{Prob}-1 and \textrm{Prob}-2, we can build another \emph{Hanzi-literary lattice}
\begin{equation}\label{eqa:Hanzi-Pianpang-123}
\textbf{\textrm{L}}(Z^0\textbf{\textrm{C}}_{cha})=\left \{\bigcup ^{27500}_{k=1}y_kC_{cha}(k):y_k\in Z^0,C_{cha}(k)\in \textbf{\textrm{C}}_{cha}\right \}
\end{equation}
based on a \emph{Hanzi-graph base} $\textbf{\textrm{C}}_{cha}=(C_{cha}(1),C_{cha}(2),\dots ,C_{cha}(27500))$ and $\sum ^{27500}_{k=17}y_k\geq 1$, where each $C_{cha}(i)$ with $i\in [1,16]$ is a punctuation, and each $C_{cha}(j)$ for $j\in [17,27500]$ is a Chinese character in Chinese Character GB18030-2000.
\end{problem}

\begin{thm}\label{thm:Hanzi-literary-lattice}
\cite{Wang-Yao-Su-Wanjia-Zhang-2021-IMCEC} (1) Chinese paragraphs $C_{para}$ defined in Eq.(\ref{eqa:GB18030-2000-Hanzi-lattice}) can be used in text-based passwords to make more complicated \emph{public-key graphs} and \emph{private-key graphs} for increasing the cost of decryption and attackers.

(2) Each Chinese writing limited in Chinese Character GB18030-2000 has been contained in the Hanzi-literary lattices $\textbf{\textrm{L}}(Z^0\textbf{\textrm{H}}_{an})$ defined in Eq.(\ref{eqa:GB18030-2000-Hanzi-lattice}) and $\textbf{\textrm{L}}(Z^0\textbf{\textrm{C}}_{cha})$ defined in Eq.(\ref{eqa:Hanzi-Pianpang-123}), such as poems, novels, essay, proses, reports, news \emph{et al.}
\end{thm}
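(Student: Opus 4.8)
The plan is to treat the two parts of Theorem \ref{thm:Hanzi-literary-lattice} separately, proving the coverage claim (2) by a constructive decomposition argument and then deducing the cryptographic claim (1) from the coloring machinery already developed in the paper. The overall strategy is a generation-and-counting argument: part (2) is a statement that a fixed finite base generates everything, and part (1) is a statement that the generated objects carry enough colorings to serve as keys.

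First I would prove part (2). The key observation is that the base $\textbf{\textrm{H}}_{an}=(H_{an}(1),\dots,H_{an}(565))$ is chosen to exhaust all ``Pianpang'', ``Bushou'', ``Dutizi'' and punctuation, and that the standard GB18030-2000 is a fixed, finite, and closed character set. I would first record two structural facts as lemmas: (i) every character in GB18030-2000 is either a ``Dutizi'' (already some $H_{an}(k)$) or a compound that decomposes into finitely many ``Pianpang''/``Bushou'' components, each belonging to $\textbf{\textrm{H}}_{an}$; and (ii) any Chinese writing $W$ --- poem, novel, essay, and so on --- is a finite string of characters and punctuation marks. Given an arbitrary writing $W$, the construction proceeds in two stages: tokenize $W$ into its finite sequence of characters and punctuation, then decompose each token into base components using (i). Collecting these components and letting $x_k$ be the number of occurrences of $H_{an}(k)$ gives a nonnegative integer vector $(x_1,\dots,x_{565})$ with $\sum_{k=1}^{565}x_k\ge 1$, so $W=\bigcup_{k=1}^{565}x_kH_{an}(k)\in \textbf{\textrm{L}}(Z^0\textbf{\textrm{H}}_{an})$ by Eq.(\ref{eqa:GB18030-2000-Hanzi-lattice}). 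For the lattice $\textbf{\textrm{L}}(Z^0\textbf{\textrm{C}}_{cha})$ of Eq.(\ref{eqa:Hanzi-Pianpang-123}) the argument is even shorter, since $\textbf{\textrm{C}}_{cha}$ already contains every whole character and punctuation mark, so tokenization alone yields the multiplicities $y_k$ and the membership of $W$.

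Next I would prove part (1). Each paragraph $C_{para}$ produced above is a Hanzi-graph, hence a planar graph, so it carries the indexed-colorings of Definition \ref{defn:planar-graphs-color-sets-4-operations} together with the various $W$-constraint total colorings exhibited for individual Hanzi-graphs in the examples. I would designate a colored Hanzi-graph $G$ as a \emph{public-key graph} and an associated colored graph (for instance a vertex-split component or a twin graph) as a \emph{private-key graph}, invoking the topological-signature-authentication framework already used in the paper. The cost bound then follows by counting: each Hanzi-graph Topcode-matrix $T_{code}(H_{a_ib_ic_id_i},f_{i,j})_{3\times q_i}$ distributes $(3q_i)!$ number-based strings, so a sentence of $n$ Hanzi-graphs yields $\sum_{i=1}^{n} m_i(3q_i)!$ strings; composing characters into paragraphs multiplies these factorial counts, giving a search space that grows super-exponentially in the paragraph length and thereby increasing the decryption cost.

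The main obstacle will be making the coverage claim (2) genuinely rigorous rather than merely linguistic. Two gaps need attention. The first is the generating property itself: asserting that the $565$-element base decomposes every one of the roughly $27500$ characters is an empirical statement about the standard, so a careful proof must either cite the official decomposition tables of GB18030-2000 or restrict the claim to those characters admitting such a decomposition. The second, and subtler, gap is that the lattice operation $\bigcup_{k}x_kH_{an}(k)$ records only the multiset of components used, not their spatial arrangement, whereas a specific character (and a specific text) is determined by how the components are placed and vertex-coincided. To close this I would enrich the operation with arrangement data --- interpreting $\bigcup$ through the planar vertex-coinciding operation $[\odot_{\textrm{plan}}]$ so that the resulting Hanzi-graph is reconstructed exactly --- and then verify that every required arrangement is realizable. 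This realizability, rather than the counting in part (1), is where the real work lies.
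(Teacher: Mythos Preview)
The paper does not prove this theorem at all: it is stated with the citation \cite{Wang-Yao-Su-Wanjia-Zhang-2021-IMCEC} and no argument is given in the present text. So there is no ``paper's own proof'' to compare your proposal against; you have supplied considerably more than the authors did here.

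That said, your proposal is a reasonable reconstruction of what a proof ought to look like, and you have already put your finger on its only real weakness. Your two-stage argument for part~(2) --- tokenize a text into characters and punctuation, then decompose each character into Pianpang/Bushou/Dutizi --- is exactly the intended reading of the lattice in Eq.~(\ref{eqa:GB18030-2000-Hanzi-lattice}), and for $\textbf{\textrm{L}}(Z^0\textbf{\textrm{C}}_{cha})$ it is immediate since the base already enumerates every character and mark. The obstacle you flag is genuine: the expression $\bigcup_{k} x_k H_{an}(k)$ is a multiset of components with no arrangement data, so strictly speaking it does not reconstruct a specific text, only the multiset of pieces used. The paper treats this informally (and the theorem is really a heuristic statement rather than a sharp mathematical one), so your proposed fix --- reading $\bigcup$ through the planar vertex-coinciding operation $[\odot_{\textrm{plan}}]$ --- goes beyond what the authors require but is the right way to make the claim precise. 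Your part~(1) argument via Topcode-matrix counting is likewise in the spirit of the surrounding text and is adequate for what is, at bottom, an applicability claim rather than a theorem with a crisp conclusion.
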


\begin{defn}\label{defn:flawed-odd-graceful-labeling}
\cite{Yao-Mu-Sun-Sun-Zhang-Wang-Su-Zhang-Yang-Zhao-Wang-Ma-Yao-Yang-Xie2019} Let $H=E^*+G$ be a connected graph, where $E^*$ is a non-empty set of edges and $G=\bigcup^m_{i=1}G_i$ is a disconnected graph, where $G_1,G_2,\dots, G_m$ are connected graphs and vertex-disjoint from each other. If $H$ admits a (set-ordered) graceful labeling (resp. a (set-ordered) odd-graceful labeling) $f$, then we call $f$ a \emph{flawed (set-ordered) graceful labeling} (resp. a \emph{flawed (set-ordered) odd-graceful labeling}) of $G$.\qqed
\end{defn}

\begin{defn}\label{defn:flawed-labeling}
\cite{Yao-Mu-Sun-Sun-Zhang-Wang-Su-Zhang-Yang-Zhao-Wang-Ma-Yao-Yang-Xie2019} Suppose that $G=\bigcup^m_{i=1}G_i$ is a disconnected graph, where $G_1,G_2,\dots $, $G_m$ are connected graphs and vertex-disjoint from each other. We have a connected graph $G+E^*$ obtained by adding the edges of an edge set $E^*$ to $G$. If $G+E^*$ admits a $W$-constraint coloring $f:S\subseteq V(G)\cup E(G)\cup E^*\rightarrow [a,b]$, then we say that the disconnected graph $G$ admits a \emph{flawed $W$-constraint coloring} $f: S\setminus E^* \rightarrow [a,b]$.\qqed
\end{defn}

\begin{thm}\label{thm:colosed-flawed-graceful-Pianpang}
\cite{Wang-Yao-Su-Wanjia-Zhang-2021-IMCEC} If each Pianpang $H\,'_{an}(k)$ of a Hanzi-Pianpang base $\textbf{\textrm{H}}\,'_{an}=(H\,'_{an}(1),H\,'_{an}(2),\dots $, $H\,'_{an}(m))$ admits a set-ordered graceful labeling for $k\in [1,m]$, then each disconnected graph of the lattice
\begin{equation}\label{eqa:555555}
\textbf{\textrm{L}}(Z^0\textbf{\textrm{H}}\,'_{an})=\left \{\bigcup ^{m}_{k=1}x_kH\,'_{an}(k):x_k\in Z^0,H\,'_{an}(k)\in \textbf{\textrm{H}}\,'_{an}\right \}
\end{equation} admits a \emph{flawed graceful labeling}.
\end{thm}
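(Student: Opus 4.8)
The plan is to prove the statement directly from Definition \ref{defn:flawed-labeling}: for an arbitrary disconnected graph $G=\bigcup^m_{k=1}x_kH\,'_{an}(k)$ of the lattice I will construct a non-empty edge set $E^*$ together with a graceful labeling $F$ of the connected graph $G+E^*$, so that $F|_{S\setminus E^*}$ is the required flawed graceful labeling of $G$. First I would relabel the components as a single list $T_1,T_2,\dots,T_N$ with $N=\sum^m_{k=1}x_k$, where each $T_i$ is a copy of some $H\,'_{an}(k)$ and hence carries the set-ordered graceful labeling $f_i$ inherited from that Pianpang, with vertex bipartition $(X_i,Y_i)$ and edge number $q_i=|E(T_i)|$. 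When $N=1$ the graph is already connected and graceful, so I may assume $N\geq 2$.

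The key preliminary is a structural normalization of each $f_i$. Because $f_i$ is graceful and set-ordered (Definition \ref{defn:basic-W-type-labelings}), the edge of colour $1$ must have as its ends the largest $X$-label and the smallest $Y$-label; writing $M_i=\max f_i(X_i)$ this forces $\min f_i(Y_i)=M_i+1$, together with $\min f_i(X_i)=0$ and $\max f_i(Y_i)=q_i$. Thus $f_i(X_i)\subseteq[0,M_i]$, $f_i(Y_i)\subseteq[M_i+1,q_i]$, and $q_i\geq M_i+1$. I would then assemble $F$ by affine shifts that preserve each component's internal graceful differences: the $X$-parts are stacked into consecutive blocks via $F(x)=f_i(x)+A_i$ with $A_1=0$ and $A_{i+1}=A_i+M_i+1$, so that $\max F(X)=P_X-1$ where $P_X=\sum^N_{i=1}(M_i+1)$.

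The decisive choice is to place the edge-colour blocks of the components in \emph{reverse} order. I would assign $T_i$ the colour block $[L^E_i,L^E_i+q_i-1]$ with $L^E_i=1+\sum_{j>i}(q_j+1)$, leaving single-integer gaps $g_i=\sum_{j\geq i}q_j+(N-i)+1$ between consecutive blocks; this determines the $Y$-shifts $F(y)=f_i(y)+C_i$ with $C_i=A_i+\sum_{j>i}(q_j+1)$. A short computation then shows that the $Y$-blocks stack consecutively above $P_X$, the lowest $Y$-label being exactly $P_X=\max F(X)+1$ and the highest being $Q+N-1$ for $Q=\sum_i q_i$, using only the inequality $q_j\geq M_j$; hence $F$ is injective, set-ordered, and ranges in $[0,Q+N-1]$. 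Finally, for each gap colour $g_i$ with $i=2,\dots,N$ I would add to $E^*$ the edge $e_i$ joining the $0$-labelled vertex of $X_i$ to the $\min$-labelled vertex of $Y_{i-1}$; one checks its induced colour equals $F(y)-F(x)=g_i$, and since $e_i$ joins $T_i$ to $T_{i-1}$ these $N-1$ edges chain the components into a connected graph. The component blocks together with the filled gaps then tile $[1,Q+N-1]$ exactly once, so $F$ is a graceful labeling of $G+E^*$ with $E^*\neq\emptyset$, as required by Definition \ref{defn:flawed-odd-graceful-labeling}.

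The main obstacle is the simultaneous reconciliation of three competing demands: the within-component differences must remain distinct and tile contiguous blocks, all $Y$-labels must lie strictly above all $X$-labels, and the $N-1$ connecting colours must be realizable by genuine cross-component edges (which are automatically new edges, since $G$ has no edges between its vertex-disjoint pieces). A naive left-to-right block assignment keeps the arithmetic simple but pushes the $Y$-labels of the earliest components below $\max F(X)$, destroying set-orderedness; the reverse-order assignment is precisely what repairs this, and the only non-trivial verifications are that the resulting $Y$-blocks nest consecutively and that each gap colour coincides exactly with the difference of its chosen connecting pair, both of which reduce to the elementary bound $q_j\geq M_j+1$ supplied by the normalization. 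Since every $H\,'_{an}(k)$ admits a set-ordered graceful labeling by hypothesis, the construction applies to every disconnected graph of $\textbf{\textrm{L}}(Z^0\textbf{\textrm{H}}\,'_{an})$, which completes the proof.
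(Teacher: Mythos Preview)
Your construction is correct. The normalization $\min f_i(Y_i)=M_i+1$, $\max f_i(Y_i)=q_i$ follows exactly as you argue from set-orderedness together with the existence of edges of colours $1$ and $q_i$; the reverse-order edge-block assignment then makes the $Y$-intervals $[C_i+M_i+1,C_i+q_i]$ abut one another (the identity $C_{i-1}+M_{i-1}+1=(C_i+q_i)+1$ drops out of your formulas), and the gap colours $g_i$ coincide with the differences across the chaining edges $e_i$, so $F$ is a genuine graceful labeling of the connected graph $G+E^*$ with $|E^*|=N-1$.

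The paper itself does not supply a proof of this theorem; it simply cites the external reference \cite{Wang-Yao-Su-Wanjia-Zhang-2021-IMCEC}. Your argument is the standard ``block-shift'' technique for amalgamating set-ordered graceful labelings of bipartite pieces into a single graceful labeling of a tree-like join, so it is exactly the method one expects to underlie the cited result. Two minor remarks: you should state explicitly that each Pianpang $H'_{an}(k)$ is assumed connected (implicit in the context, and needed so that $G+E^*$ is connected as Definition~\ref{defn:flawed-odd-graceful-labeling} requires); and your aside ``using only the inequality $q_j\geq M_j$'' should read $q_j\geq M_j+1$, as you correctly use elsewhere.
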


\subsubsection{Maximal planar graphs and Hanzi-based strings}

Using maximal planar graphs and Hanzi-graphs of topological coding to make asymmetric topology ciphers is as follows:

(i) For a semi-maximal planar graph $G$ as a \emph{public-key graph}, its own adjacent matrix $A(G)$ is unique under the matrix similarity operation, that is, if $B(G)$ is an adjacent matrix of $G$, then there is a matrix $P$ such that $B(G)=PA(G)P^{-1}$; and for another semi-maximal planar graph $H$ as a \emph{private-key graph}, its own adjacent matrix $A(H)$ is unique under the matrix similarity operation in graph theory. Thus, these two semi-maximal planar graphs $G$ and $H$ can be distinguished in the computer.

(ii) There is a topological signature authentication $J=G[\overline{\ominus}^{cyc}_k]H$, since there is a common cycle $C_k$ in two semi-maximal planar graphs $G$ and $H$.

(iii) Two Topcode-matrices $T_{code}(G,f)$ and $T_{code}(H,g)$ of two semi-maximal planar graphs $G$ and $H$ may correspond to other two or more graphs. In other words, these two graphs $G$ and $H$ are not easy to be determined by the Topcode-matrices, since it will meet the Subgraph Isomorphic Problem. Fig.\ref{fig:maximal-planar-hanzi} shows us a topological signature authentication $J=G[\overline{\ominus}^{cyc}_6]H$ that is a maximal planar graph admitting a coloring $F$ induced by two colorings $f$ and $g$.

(iv) We vertex-split the semi-maximal planar graph $H$ (as a \emph{private-key}) into mutually edge-disjoint graphs $H^*,H_1,H_2,\dots ,H_k$, such that each graph $H_i$ for $i\in [1,k]$ is a Hanzi-graph, and $H_1H_2\dots H_k$ is a Hanzi-based string, but $H^*$ is not a Hanzi-graph, sometimes, or $H^*$ does not exist; refer to Fig.\ref{fig:conincide-hanzi-graphs}.

(v) Recolor the topological signature authentication $J=G[\overline{\ominus}^{cyc}_k]H$ by a new proper 4-coloring defined in Definition \ref{defn:planar-graphs-color-sets-4-operations}, and induces number-based strings as desired.

\begin{figure}[h]
\centering
\includegraphics[width=16.4cm]{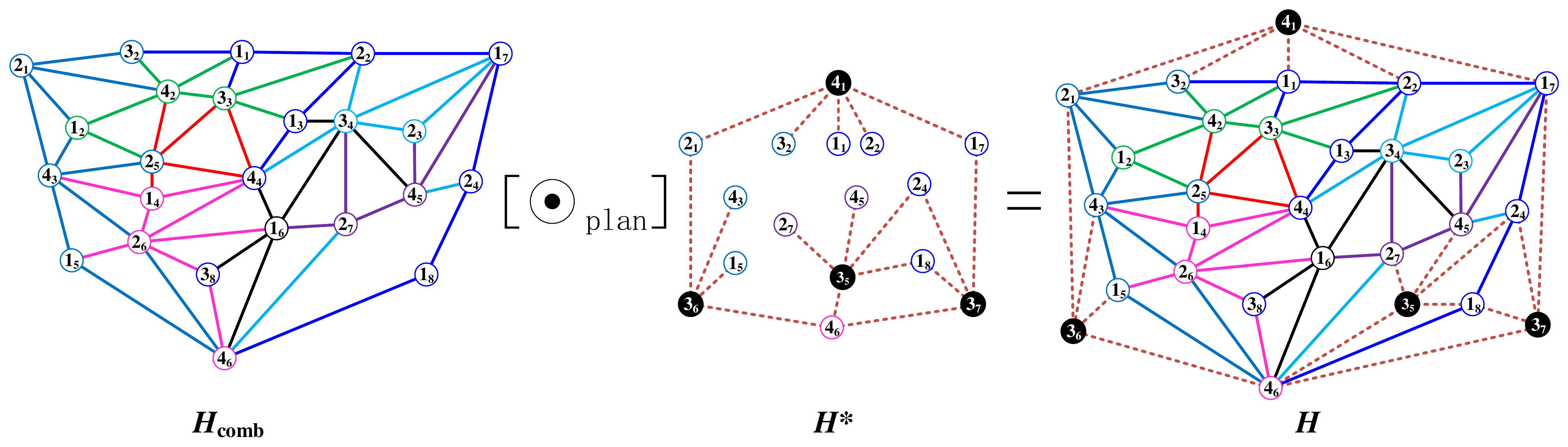}\\
\caption{\label{fig:conincide-hanzi-graphs}{\small A semi-maximal planar graph $H=H_{\textrm{comb}}[\odot_{\textrm{plan}}]H^*$, where $H_{\textrm{comb}}$ is obtained by doing the vertex-coinciding operation to the Hanzi-based string $H_{an}$ shown in Fig.\ref{fig:hundreds-people}.}}
\end{figure}

\begin{figure}[h]
\centering
\includegraphics[width=16.4cm]{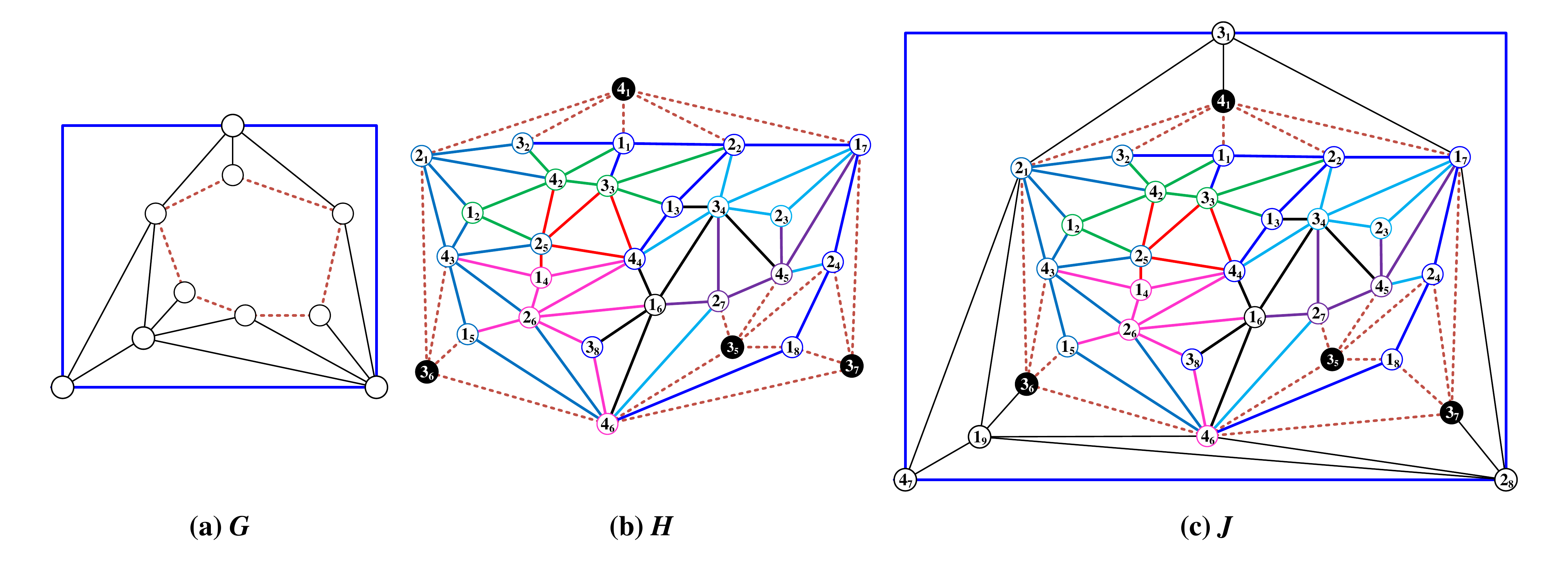}\\
\caption{\label{fig:maximal-planar-hanzi}{\small A topological signature authentication made by a maximal planar graph $J=G[\overline{\ominus}^{cyc}_6]H$, where $G$ is a \emph{public-key graph}, and the \emph{private-key graph} $H$ is shown in Fig.\ref{fig:conincide-hanzi-graphs}.}}
\end{figure}

\begin{problem}\label{qeu:444444}
Vertex-splitting a semi-maximal planar graph $H$ into a group of mutually edge-disjoint Hanzi-graphs $H_1,H_2,\dots ,H_k$ and a non-Hanzi-graph $H^*$ is not a slight job, and there are many Hanzi-based strings $H_{i_1}H_{i_2}\dots H_{i_k}$ obtained from the permutations of the mutually edge-disjoint Hanzi-graphs $H_1,H_2,\dots ,H_k$; see Fig.\ref{fig:Hanzi-combinators}. We may obtain other group of edge-disjoint Hanzi-graphs $G_1,G_2,\dots ,G_m$ after vertex-splitting the semi-maximal planar graph $H$.

Thereby, we get a set $H_{vspl} (H)$ of groups of mutually edge-disjoint Hanzi-graphs $H_{i,1},H_{i,2},\dots ,H_{i,m_i}$ after vertex-splitting the semi-maximal planar graph $H$, that is
\begin{equation}\label{eqa:groups-disjoint-Hanzi-graphs}
H_{vspl} (H)=\Big \{V^{split}_i=\{H_{i,1},H_{i,2},\dots ,H_{i,m_i}\}:i\in [1,n_{vspl}(H)]\Big \}
\end{equation} \textbf{Determine} the set $H_{vspl} (H)$ and the number $n_{vspl}(H)$.
\end{problem}

\subsubsection{The $[\odot_{\textrm{plan}}]$-Hanzi-graphic lattices}

The number-based string $a_jb_jc_jd_j$ of each Hanzi-graph $H_{a_jb_jc_jd_j}$ of a \emph{Hanzi-graph base}
$$\textbf{\textrm{H}}_{GB18}=(H_{a_1b_1c_1d_1},H_{a_2b_2c_2d_2},\dots ,H_{a_{27484}b_{27484}c_{27484}d_{27484}})
$$ is defined in \cite{GB2312-80}, such that $H_{a_ib_ic_id_i}\not \cong H_{a_jb_jc_jd_j}$ if $i\neq j$.

For a permutation $G_{i_1},G_{i_2},\dots ,G_{i_M}$ of mutually disjoint Hanzi-graphs $\lambda_jH_{a_jb_jc_jd_j}$ for $j\in [1,27484]$ and $\lambda_j\in Z^0$, where $M=\sum ^{27484}_{j=1}\lambda_j\geq 1$, we do the vertex-coinciding operation ``$[\odot_{\textrm{plan}}]$'' to the permutation $G_{i_1},G_{i_2},\dots ,G_{i_M}$ as: $T_{1}=G_{i_1}[\odot_{\textrm{plan}}]G_{i_2}$, where $T_1$ is a planar graph, and then we get the second planar graph $T_{2}=T_{1}[\odot_{\textrm{plan}}]G_{i_3}$, go on in this way, we get planar graphs
$$T_{n+1}=T_{n}[\odot_{\textrm{plan}}]G_{i_{n+2}},~n\in [2,27482]
$$ we write the last planar graph as
\begin{equation}\label{eqa:555555}
T_{27483}=T_{27482}[\odot_{\textrm{plan}}]G_{i_{27484}}=[\odot_{\textrm{plan}}]^{27484}_{j=1}\lambda_j H_{a_jb_jc_jd_j}
\end{equation}
We get a \emph{vertex-coincided Hanzi-graphic lattice}
\begin{equation}\label{eqa:graphic-lattice-GB2312-80}
\textbf{\textrm{L}}([\odot_{\textrm{plan}}]Z^0\textbf{\textrm{H}}_{GB18})=\Big \{[\odot_{\textrm{plan}}]^{27484}_{j=1}\lambda_j H_{a_jb_jc_jd_j}:\lambda_j\in Z^0,H_{a_jb_jc_jd_j}\in \textbf{\textrm{H}}_{GB18}\Big \}
\end{equation}
based on the \emph{Hanzi-graph base} $\textbf{\textrm{H}}_{GB18}$.

\begin{thm}\label{thm:666666}
$^*$ Since each graph $G$ of the Hanzi-graphic lattice $\textbf{\textrm{L}}([\odot_{\textrm{plan}}]Z^0\textbf{\textrm{H}}_{GB18})$ defined in Eq.(\ref{eqa:graphic-lattice-GB2312-80}) is a planar graph, then the planar graph $G$ admits a total indexed-coloring defined in Definition \ref{defn:planar-graphs-color-sets-4-operations} and can be decomposed into mutually edge-disjoint Hanzi-graphs.
\end{thm}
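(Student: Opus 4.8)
The plan is to prove the two assertions of the statement separately — the existence of a total indexed-coloring, and the decomposition into mutually edge-disjoint Hanzi-graphs — since each reduces to invoking facts already fixed by the lattice construction together with one external theorem. First I would record the planarity that the statement takes for granted: each Hanzi-graph $H_{a_jb_jc_jd_j}$ of the base $\textbf{\textrm{H}}_{GB18}$ is planar, and every $G\in \textbf{\textrm{L}}([\odot_{\textrm{plan}}]Z^0\textbf{\textrm{H}}_{GB18})$ defined in Eq.(\ref{eqa:graphic-lattice-GB2312-80}) arises from a permutation $G_{i_1},G_{i_2},\dots ,G_{i_M}$ of the copies $\lambda_j H_{a_jb_jc_jd_j}$ through the successive vertex-coinciding operations $T_{n+1}=T_n[\odot_{\textrm{plan}}]G_{i_{n+2}}$, each step yielding a planar graph by construction. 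Hence $G$ is a simple planar graph with $M=\sum^{27484}_{j=1}\lambda_j$ Hanzi-graph factors.

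For the coloring assertion I would invoke the Four Color Theorem (Appel--Haken; Ref. \cite{K-Appel-W-Haken-discharging-1977, K-Appel-W-Haken-reducibility-1977}) to produce a proper vertex $4$-coloring $g:V(G)\rightarrow [1,4]$. Following Definition \ref{defn:planar-graphs-color-sets-4-operations}, I would split $V(G)=\bigcup^4_{k=1}V_k(G)$ into its four color classes, index the vertices of each class to obtain the color sets $C_k(G)=\{k_1,k_2,\dots ,k_{m_k}\}$ with $m_k=|V_k(G)|$, and then assign to each edge $uv$ with $g(u)=a_i$ and $g(v)=b_j$ an edge color by one of the indexed operations, for instance the indexed addition $g(uv)=a_i(+)b_j=(a+b)_{(i+j)}$. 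Properness of $g$ guarantees $a\neq b$ on every edge, so the assignment is well-posed, and by Definition \ref{defn:planar-graphs-color-sets-4-operations} the proper vertex $4$-coloring $g$ together with the indexed operations is exactly a total indexed-coloring of $G$.

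For the decomposition assertion I would read the factorization directly off the lattice definition. Since $G=[\odot_{\textrm{plan}}]^{27484}_{j=1}\lambda_j H_{a_jb_jc_jd_j}$ is assembled by vertex-coinciding the copies $G_{i_1},\dots ,G_{i_M}$, and the vertex-coinciding operation merges only vertices and destroys no edge, I would conclude $E(G)=\bigcup^{M}_{s=1}E(G_{i_s})$ with the factor edge sets mutually disjoint. As each $G_{i_s}$ is a relabeled copy of a Hanzi-graph from $\textbf{\textrm{H}}_{GB18}$, the family $\{G_{i_1},\dots ,G_{i_M}\}$ is precisely a decomposition of $G$ into mutually edge-disjoint Hanzi-graphs, which finishes the argument.

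The hard part is not in either of the above steps proper; the only deep input is the Four Color Theorem, which I treat as a citable black box. The genuine point requiring care is whether the planar vertex-coinciding operation $[\odot_{\textrm{plan}}]$ can identify two edges coming from different factors into a single edge, which would threaten either the simple-graph structure of $G$ or the edge-disjointness of the factors. I would resolve this by adopting, for $[\odot_{\textrm{plan}}]$, the same convention used for the proper operation $[\odot_{\textrm{prop}}]$ earlier (conditions (a-3) and (b-3) of the tree-base $[\odot_{\textrm{prop}}]$-operation), namely that the resulting graph has no multiple edges and satisfies $E(G)=\bigcup_s E(G_{i_s})$; under this convention the factor edge sets are in bijection with $E(G)$ and the decomposition is immediate.
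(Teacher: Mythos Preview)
The paper states this theorem without proof, so there is no explicit argument to compare against. Your proposal correctly supplies the intended reasoning: you invoke the Four Color Theorem to obtain a proper vertex $4$-coloring of the planar graph $G$, then feed it into Definition~\ref{defn:planar-graphs-color-sets-4-operations} to produce the total indexed-coloring; and you read the edge-disjoint Hanzi-graph decomposition directly off the construction $G=[\odot_{\textrm{plan}}]^{27484}_{j=1}\lambda_j H_{a_jb_jc_jd_j}$, using that vertex-coinciding merges only vertices. This is exactly the argument the statement is summarizing, and your caution about edge identification under $[\odot_{\textrm{plan}}]$ is well placed and correctly handled by the paper's standing convention that all graphs are simple.
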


\section{Number-based strings towards application}

For encrypting the whole network and assigning passwords to each node of the network one time, we design several string groups, string lattices, graph homomorphisms based on the strings introduced in the previous sections.

\subsection{Multi-level multi-rank strings}

\subsubsection{Definition}

\begin{defn} \label{defn:m-level-more-rank-string-sets}
$^*$ An \emph{$m$-level $\{n_i\}^m_{i=1}$-rank string-set} contains its own elements being strings $s_{tri}(m)=A^{m}_1A^{m}_2\cdots A^{m}_{n_m}$, where
$A^{m}_{j}=A^{m-1}_{j,1}A^{m-1}_{j,2}\cdots A^{m-1}_{j,n_{m-1}}$ for $j\in [1,n_{m}]$, and
\begin{equation}\label{eqa:555555}
A^{m-s+1}_{j}=A^{m-s}_{j,1}A^{m-s}_{j,2}\cdots A^{m-s}_{j,n_{m-s}},~j\in [1,n_{m-s+1}],~s\in [1,m-1]
\end{equation}Finally, each $A^{2}_{j}=A^{1}_{j,1}A^{1}_{j,2}\cdots A^{1}_{j,n_{1}}$ is a $[0,9]$-string with $A^{1}_{j,t}\in [0,9]$ and $t\in [1,n_{1}]$ and $j\in [1,n_{2}]$.

If $n_i=n$ for $i\in [1,m]$, we call $s_{tri}(m)$ \emph{$m$-level uniformly $n$-rank string}.\qqed
\end{defn}

\begin{example}\label{exa:8888888888}
For illustrating Definition \ref{defn:m-level-more-rank-string-sets} we take $m=3$ with $n_3=4$, $n_2=3$, $n_1=7$, then $s_{tri}(3)=A^{3}_1A^{3}_2\cdots A^{3}_{4}$ having

(i) $A^{3}_{j}=A^{2}_{j,1}A^{2}_{j,2}A^{2}_{j,3}$ for $j\in [1,4]$;

(ii) $A^{2}_{j}=A^{1}_{j,1}A^{1}_{j,2}\cdots A^{1}_{j,7}$ with $A^{1}_{j,t}\in [0,9]$ and $t\in [1,7]$ and $j\in [1,3]$;

(iii) $A^{2}_{j,r}\in B_{asic}(7)=\{6174314,~1123061,~8142857\}$ with $r\in [1,3]$ and $j\in [1,4]$.

We get

$A^{3}_{1}=A^{2}_{1,1}A^{2}_{1,2}A^{2}_{1,3}$ with $A^{2}_{1,1}=6174314$, $A^{2}_{1,2}=1123061$, $A^{2}_{1,3}=8142857$

$A^{3}_{2}=A^{2}_{2,1}A^{2}_{2,2}A^{2}_{2,3}$ with $A^{2}_{2,1}=1123061$, $A^{2}_{2,2}=8142857$, $A^{2}_{2,3}=6174314$,

$A^{3}_{3}=A^{2}_{3,1}A^{2}_{3,2}A^{2}_{3,3}$ with $A^{2}_{3,1}=8142857$, $A^{2}_{3,2}=1123061$, $A^{2}_{3,3}=6174314$,

$A^{3}_{4}=A^{2}_{4,1}A^{2}_{4,2}A^{2}_{4,3}$ with $A^{2}_{4,1}=6174314$, $A^{2}_{4,2}=8142857$, $A^{2}_{4,3}=1123061$,

Finally, we get a $3$-level $\{7,3,4\}$-rank string-set
$${
\begin{split}
s_{tri}(3)=&A^{3}_1A^{3}_2A^{3}_{3}A^{3}_{4}=A^{2}_{1,1}A^{2}_{1,2}A^{2}_{1,3}A^{2}_{2,1}A^{2}_{2,2}A^{2}_{2,3}A^{2}_{3,1}A^{2}_{3,2}A^{2}_{3,3}A^{2}_{4,1}A^{2}_{4,2}A^{2}_{4,3}\\
=&6174314~1123061~8142857~~1123061~8142857~6174314\\
&8142857~1123061~6174314~~6174314~8142857~1123061
\end{split}}
$$ with $84$ bytes.

The set $B_{asic}(7)$ can contain more $[0,9]$-strings with $7$ numbers belonging to $[0,9]$.\qqed
\end{example}

\subsubsection{A new representation of Topcode-matrices}

Suppose that a $(p,q)$-graph $G$ admits a total string-coloring $h:V(G)\cup E(G)\rightarrow \textbf{\textrm{S}}_{tring}(n)$, such that
\begin{equation}\label{eqa:555555}
h(u_k)=r_{k,1}r_{k,2}\cdots r_{k,n},~h(v_k)=s_{k,1}s_{k,2}\cdots s_{k,n},~h(u_kv_k)=t_{k,1}t_{k,2}\cdots t_{k,n}
\end{equation} for each edge $u_kv_k\in E(G)=\{u_kv_k:k\in [1,q]\}$. Then the $(p,q)$-graph $G$ has its own Topcode-matrix
\begin{equation}\label{eqa:5555555555555555}
\centering
{
\begin{split}
T_{code}(G,h)= \left(
\begin{array}{ccccc}
h(u_{1}) & h(u_{2}) & \cdots & h(u_q)\\
h(u_{1}v_{1}) & h(u_{2}v_{2}) & \cdots & h(u_qv_q)\\
h(v_{1}) & h(v_{2}) & \cdots & h(v_q)
\end{array}
\right)=\left(
\begin{array}{cccccccccccccc}
X_h\\
E_h\\
Y_h
\end{array}
\right)=(X_h,E_h,Y_h)^{T}
\end{split}}
\end{equation} where $u_kv_k\in E(G)$ for $k\in [1,q]$.

We rewrite each group of colors $h(u_k),h(u_kv_k)$ and $h(v_k)$ into three vectors, immediately, we get the Topcode-matrix of each graph $G_{u_kv_k}$ as follows:
\begin{equation}\label{eqa:edge-vector-graphs-matices}
\centering
{
\begin{split}
T_{code}(G_{u_kv_k},h)= \left(
\begin{array}{ccccc}
r_{k,1} &r_{k,2} &\cdots &r_{k,n}\\
t_{k,1} &t_{k,2} &\cdots &t_{k,n}\\
s_{k,1} &s_{k,2} &\cdots &s_{k,n}
\end{array}
\right)=\left(
\begin{array}{cccccccccccccc}
h(u_k)\\
h(u_kv_k)\\
h(v_k)
\end{array}
\right)=(h(u_k),h(u_kv_k),h(v_k))^{T}
\end{split}}
\end{equation}
with edge number $|E(G_{u_kv_k})|=n$ for $k\in [1,q]$. We give the Topcode-matrix $T_{code}(G,h)$ a new representation as:
\begin{equation}\label{eqa:topcode-matrix-another-expression}
T_{code}(G,h):=(T_{code}(G_{u_1v_1},h),T_{code}(G_{u_2v_2},h),\dots ,T_{code}(G_{u_qv_q},h))
\end{equation} which differs from that defined in Definition \ref{defn:graphic-topcode-matrix}.

\begin{example}\label{exa:8888888888}
In Fig.\ref{fig:matrix-in-matrix-11}, the multiple-edge graph $J_1$ is made by the multiple-edge graphs $T_1,T_2$, $\dots $, $T_{10}$ shown in Fig.\ref{fig:matrix-in-matrix-00}, where each multiple-edge graph $T_i$ is obtained the vertex color and edge color of an edge of the connected $(10,11)$-graph $H_1$ shown in Fig.\ref{fig:333string-vector-set-coloring} (a). So, the Topcode-matrix of $H_1$ is as follows
$$T_{code}(H_1,F_{stri}):=(T_{code}(T_1,f_1), T_{code}(T_2,f_2),\dots ,T_{code}(T_{11},f_{11}))
$$ where each coloring $f_i$ of $T_i$ for $i\in [1,11]$ is shown in Fig.\ref{fig:matrix-in-matrix-00}.\qqed
\end{example}

\begin{figure}[h]
\centering
\includegraphics[width=16.4cm]{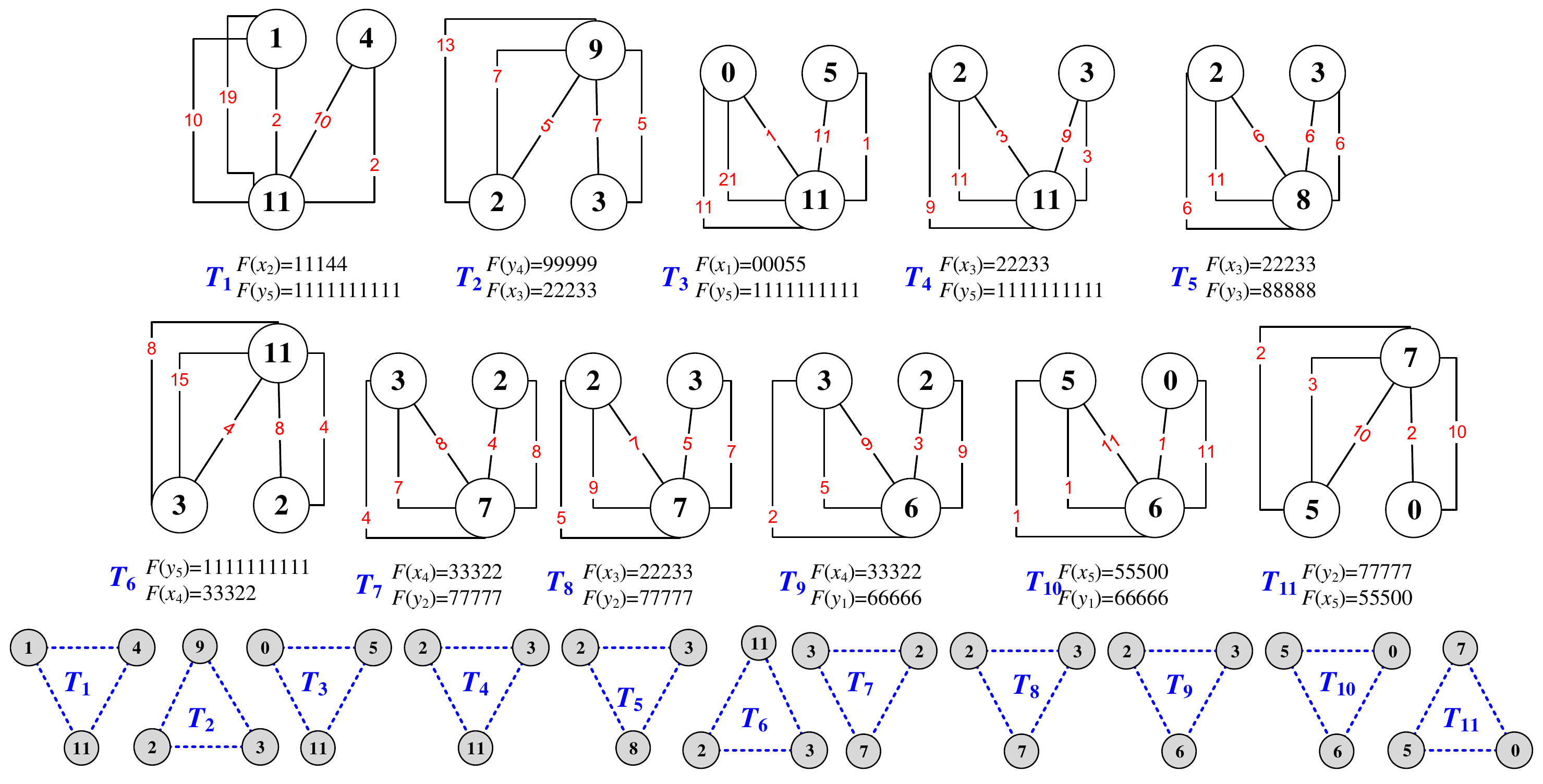}\\
\caption{\label{fig:matrix-in-matrix-00}{\small The multiple-edge graphs $T_1,T_2,\dots ,T_{11}$ induced from the string-coloring $F_{stri}$ of the connected $(10,11)$-graph $H_1$ shown in Fig.\ref{fig:333string-vector-set-coloring} (a).}}
\end{figure}

\begin{problem}\label{qeu:444444}
$^*$ It is not hard to verify that three multiple-edge graphs $J_1$, $J_2$ and $J_3$ shown in Fig.\ref{fig:matrix-in-matrix-11} are not isomorphic from each other. All multiple-edge graphs made by the multiple-edge graphs $T_1,T_2,\dots ,T_{10}$ are collected into a set $G_{\textrm{mu-edge}}(H_1,F_{stri})$, we want to \textbf{characterize} the graphs in the set $G_{\textrm{mu-edge}}(H_1,F_{stri})$.

For a connected $(p,q)$-graph $Q$ admits a total string-coloring $\theta_{stri}:V(Q)\cup E(Q)\rightarrow \textbf{\textrm{S}}_{tring}(n)$, such that each edge $u_kv_k\in E(Q)$ holds
\begin{equation}\label{eqa:555555}
\theta_{stri}(u_k)=a_{k,1}a_{k,2}\cdots a_{k,n},~\theta_{stri}(v_k)=b_{k,1}b_{k,2}\cdots b_{k,n},~\theta_{stri}(u_kv_k)=c_{k,1}c_{k,2}\cdots c_{k,n}
\end{equation} which induce a total set-coloring $\varphi_{set}:V(Q)\cup E(Q)\rightarrow \textbf{\textrm{S}}_{set}(n)$
\begin{equation}\label{eqa:edge-vector-graphs-matices-problem}
{
\begin{split}
&\varphi_{set}(u_k)=(a_{k,1},a_{k,2},\dots ,a_{k,n}),~\varphi_{set}(v_k)=(b_{k,1},b_{k,2},\dots ,b_{k,n}),\\
&\varphi_{set}(u_kv_k)=(c_{k,1},c_{k,2},\dots ,c_{k,n})
\end{split}}
\end{equation} Eq.(\ref{eqa:edge-vector-graphs-matices-problem}) enables us to get a Topcode-matrix $T_{code}(Q_{u_kv_k},\varphi_{set})$ defined in Eq.(\ref{eqa:edge-vector-graphs-matices}), where each $Q_{u_kv_k}$ for $k\in [1,q]$ is a graph or a multiple-edge graph, such that
$$T_{code}(Q,\theta_{stri}):=(T_{code}(Q_{u_1v_1},\varphi_{set}), T_{code}(Q_{u_2v_2},\varphi_{set}),\dots ,T_{code}(Q_{u_qv_q},\varphi_{set}))
$$ We get a set $G_{\textrm{mu-edge}}(Q,\theta_{stri})$ of (multiple-edge) graphs made by these (multiple-edge) graphs $Q_{u_1v_1}, Q_{u_2v_2},\dots , Q_{u_qv_q}$, and moreover we have a set
$$
M_{\textrm{u-edge}}(Q)=\{G_{\textrm{mu-edge}}(Q,\theta_{stri}):\textrm{ each total string-coloring }\theta_{stri}\}
$$ \textbf{Characterize} the graphs in the sets $G_{\textrm{mu-edge}}(Q,\theta_{stri})$ and $M_{\textrm{u-edge}}(Q)$.
\end{problem}

\begin{figure}[h]
\centering
\includegraphics[width=16.4cm]{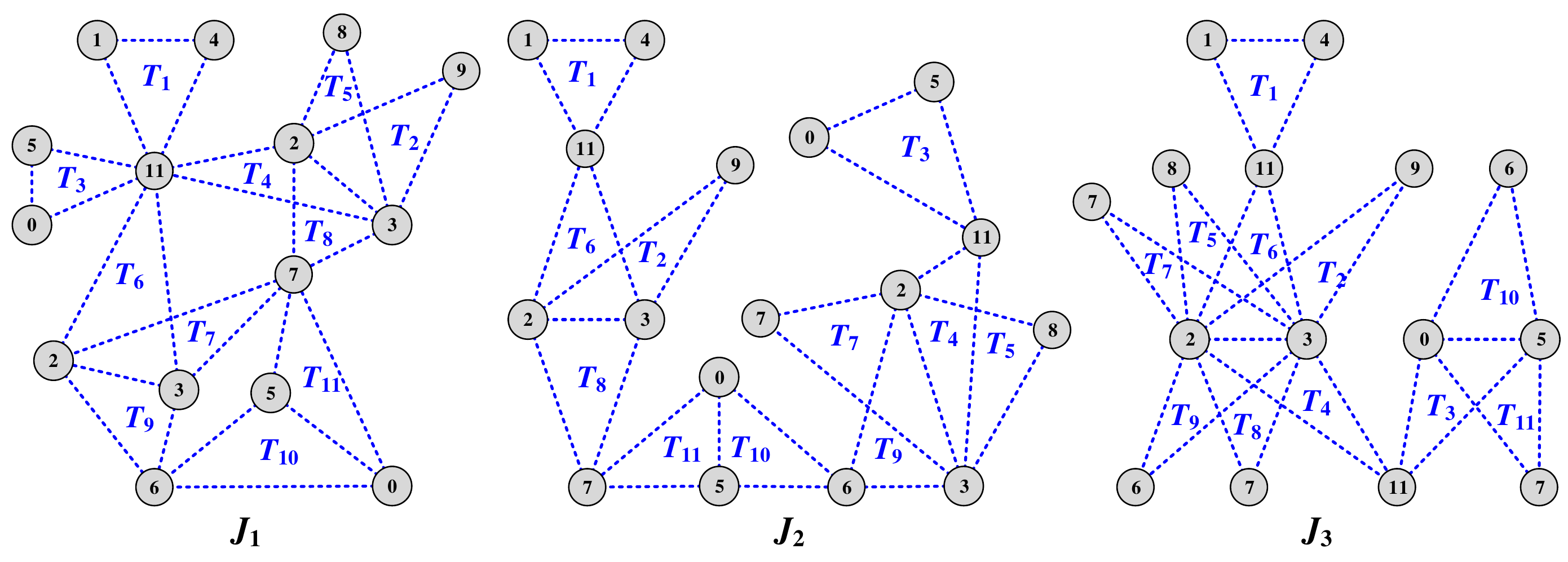}\\
\caption{\label{fig:matrix-in-matrix-11}{\small The multiple-edge graph $J_1$ is obtained from the connected $(10,11)$-graph $H_1$ shown in Fig.\ref{fig:333string-vector-set-coloring} (a). The multiple-edge graphs $J_1$, $J_2$ and $J_3$ are not isomorphic from each other.}}
\end{figure}

\subsection{Graph-colorings based on graphic group}

\subsubsection{Infinite graphic groups}

\textbf{INFINITEGraphic-group Algorithm} \cite{Yao-Wang-2106-15254v1}. Let $f: V(G)\cup E(G)\rightarrow [1,M]$ be a $W$-constraint proper total coloring of a graph $G$ such that two color sets $f(V(G))=\{f(x):x\in V(G)\}$ and $f(E(G))=\{f(uv):uv\in E(G)\}$ hold a collection of constraints true. We define a $W$-constraint proper total coloring $g_{s,k}$ by setting
$$g_{s,k}(x)=f(x)+s~(\bmod~p), x\in V(G)$$ and
$$g_{s,k}(uv)=f(uv)+k~(\bmod~q),~uv\in E(G)$$ Let $F_f(G)$ be the set of graphs $G_{s,k}$ admitting $W$-constraint proper total colorings $g_{s,k}$ defined above, and each graph $G_{s,k}\cong G$ in topological structure. We define an additive operation ``$[\oplus \ominus]$'' on the elements of $F_f(G)$ in the following way: Take arbitrarily an element $G_{a,b}\in F_f(G)$ as \emph{zero}, and
\begin{equation}\label{eqa:555555}
G_{s,k}[\oplus \ominus_{a,b}] G_{i,j}:=G_{s,k}[\oplus] G_{i,j}[\ominus] G_{a,b}
\end{equation} is defined by the following computation
\begin{equation}\label{eqa:mixed-graphic-group}
[g_{s,k}(w)+g_{i,j}(w)-g_{a,b}(w)]~(\bmod~\varepsilon)=g_{\lambda,\mu}(w)
\end{equation}
for each element $w\in V(G)\cup E(G)$, where the index $\lambda=s+i-a~(\bmod~p)$ and the index $\mu=k+j-b~(\bmod~q)$. As $w=x\in V(G)$, the form (\ref{eqa:mixed-graphic-group}) is just equal to
\begin{equation}\label{eqa:mixed-graphic-group11}
[g_{s,k}(x)+g_{i,j}(x)-g_{a,b}(x)]~(\bmod~p)=g_{\lambda,\mu}(x)
 \end{equation}and as $w=uv\in E(G)$, the form (\ref{eqa:mixed-graphic-group}) is defined as follows:
 \begin{equation}\label{eqa:mixed-graphic-group22}
[g_{s,k}(uv)+g_{i,j}(uv)-g_{a,b}(uv)]~(\bmod~q)=g_{\lambda,\mu}(uv)
\end{equation}
Especially, as $s=i=a=\alpha$, we have $\bmod~\varepsilon=\bmod~q$ in Eq.(\ref{eqa:mixed-graphic-group}), and
\begin{equation}\label{eqa:mixed-graphic-group-edge}
[g_{\alpha,k}(uv)+g_{\alpha,j}(uv)-g_{\alpha,b}(uv)]~(\bmod~q)=g_{\alpha,\mu}(uv)
\end{equation} for $uv\in E(G)$; and when $k=j=b=\beta$, so $\bmod~\varepsilon=\bmod~p$ in Eq.(\ref{eqa:mixed-graphic-group}), we have
\begin{equation}\label{eqa:mixed-graphic-group-vertex}
[g_{s,\beta}(x)+g_{i,\beta}(x)-g_{a,\beta}(x)]~(\bmod~p)=g_{\lambda,\beta}(x),~x\in V(G)
\end{equation}

\begin{defn} \label{defn:111111}
\cite{Yao-Wang-2106-15254v1} Sice the graph set $F_f(G)$ made by the INFINITEGraphic-group Algorithm holds:
\begin{asparaenum}[(1) ]
\item \emph{Zero.} Each graph $G_{a,b}\in F_f(G)$ can be determined as a \emph{preappointed zero} such that
$$G_{s,k}[\oplus \ominus_{a,b}] G_{i,j}:=G_{s,k}[\oplus] G_{a,b}[\ominus] G_{a,b}=G_{s,k}
$$

\item \emph{Uniqueness.} If
$$G_{s,k}[\oplus] G_{i,j}[\ominus] G_{a,b}=G_{c,d}\in F_f(G)\textrm{ and }G_{s,k}[\oplus] G_{i,j}[\ominus] G_{a,b}=G_{r,t}\in F_f(G)$$ then the index $c=s+i-a~(\bmod~p)=r$ and the index $d=k+j-b~(\bmod~q)=t$ under any \emph{preappointed zero} $G_{a,b}$.

\item \emph{Inverse.} Each graph $G_{s,k}\in F_f(G)$ has its own \emph{inverse} $G_{s',k'}\in F_f(G)$ such that
$$G_{s,k}[\oplus \ominus_{a,b}] G_{s',k'}:=G_{s,k}[\oplus] G_{s',k'}[\ominus] G_{a,b}=G_{a,b}
$$ determined by $[g_{s,k}(w)+g_{i,j}(w)]~(\bmod~\varepsilon)=2g_{a,b}(w)$ for each element $w\in V(G)\cup E(G)$.
\item \emph{Associative law.} Each triple $G_{s,k},G_{i,j},G_{c,d}\in F_f(G)$ holds
$$G_{s,k}[\oplus] \big ([G_{i,j}[\oplus] G_{c,d}[\ominus] G_{a,b}]\big )[\ominus] G_{a,b}=\big ([G_{s,k}[\oplus] G_{i,j}[\ominus] G_{a,b}]\big )[\oplus] G_{c,d}[\ominus] G_{a,b}
$$ true, that is
$$\big (G_{s,k}[\oplus \ominus_{a,b}] G_{i,j}\big )[\oplus \ominus_{a,b}]G_{c,d}=G_{s,k}[\oplus \ominus_{a,b}] \big (G_{i,j}[\oplus \ominus_{a,b}]G_{c,d}\big )$$
\item \emph{Commutative law.} $G_{s,k}[\oplus] G_{i,j}[\ominus] G_{a,b}=G_{i,j}[\oplus] G_{s,k}[\ominus] G_{a,b}$, also
$$G_{s,k}[\oplus \ominus_{a,b}] G_{i,j}= G_{i,j}[\oplus \ominus_{a,b}]G_{s,k}$$
\end{asparaenum}

Thereby, we call $F_f(G)=\{G_{s,k}:s\in [0,p],k\in [0,q]\}$ an \emph{every-zero graphic group} based on the additive operation ``$[\oplus \ominus]$'' defined in Eq.(\ref{eqa:mixed-graphic-group}), and write this group by $\textbf{\textrm{G}}=\{F_f(G);\oplus \ominus\}$.

There are $pq$ graphs in the every-zero graphic group $\textbf{\textrm{G}}$. There are two particular \emph{every-zero graphic subgroups} $\{F_{v}(G);\oplus \ominus\}\subset \textbf{\textrm{G}}$ and $\{F_{e}(G);\oplus \ominus\}\subset \textbf{\textrm{G}}$, where $F_{v}(G)=\{G_{s,0}:s\in [0,p]\}$ and $F_{e}(G)=\{G_{0,k}:k\in [0,q]\}$. In fact, $\textbf{\textrm{G}}$ contains at least $(p+q)$ different every-zero graphic subgroups.\qqed
\end{defn}

\begin{defn}\label{defn:every-zero-graphic-group-homomorphism}
\cite{Bing-Yao-Hongyu-Wang-graph-homomorphisms-2020} For two every-zero graphic groups $\{F_f(G);\oplus \ominus\}$ based on a graph set $F_f(G)=\{G_i\}^m_1$ and $\{F_h(H);\oplus \ominus\}$ based on a graph set $F_h(H)=\{H_i\}^m_1$, suppose that there are graph homomorphisms $G_i\rightarrow H_i$ defined by $\theta_i:V(G_i)\rightarrow V(H_i)$ with $i\in [1,m]$. We define $\theta=\bigcup^m_{i=1}\theta_i$, and have an \emph{every-zero graphic group homomorphism}
$$\{F_f(G);\oplus \ominus\}\rightarrow \{F_h(H);\oplus \ominus\}
$$ from a graph set $F_f(G)$ to another graph set $F_h(H)$.\qqed
\end{defn}

\textbf{Infinite graphic-sequence groups.} Infinite graphic-sequence groups have been introduced in \cite{yao-sun-su-wang-matching-groups-zhao-2020}. Suppose that a connected $(p,q)$-graph $G$ admits a $W$-constraint total coloring $f$, we define $W$-constraint total colorings $f_{s,k}$ by setting $f_{s,k}(x)=f(x)+s$ for every vertex $x\in V(G)$, and $f_{s,k}(uv)=f(uv)+k$ for each edge $uv\in E(G)$ as two integers $s,k$ belong to the set $Z$ of integers. So, we have each connected $(p,q)$-graph $G_{s,k}\cong G$ admits a $W$-constraint total coloring $f_{s,k}$ defined above, immediately, we get an infinite graphic-sequence $\{\{G_{s,k}\}^{+\infty}_{-\infty}\}^{+\infty}_{-\infty}$. We take a graph $G_{a,b}\in \{\{G_{s,k}\}^{+\infty}_{-\infty}\}^{+\infty}_{-\infty}$ as a \emph{preappointed zero} and any two $G_{s,k}$ and $G_{i,j}$ in $\{\{G_{s,k}\}^{+\infty}_{-\infty}\}^{+\infty}_{-\infty}$, and do the additive computation ``$G_{s,k}[\oplus \ominus_{a,b}] G_{i,j}:=G_{s,k}[\oplus] G_{i,j}[\ominus] G_{a,b}$'' in the following way: For each edge $uv\in E(G)$,
\begin{equation}\label{eqa:edge-graphic-group}
[f_{s,k}(uv)+f_{i,j}(uv)-f_{a,b}(uv)]~(\bmod~ q_W)=f_{\lambda,\mu}(uv).
\end{equation} with the index $\mu=k+j-b~(\bmod~ q_W)$; and for each vertex $x\in V(G)$,
\begin{equation}\label{eqa:vertex-graphic-group}
[f_{s,k}(x)+f_{i,j}(x)-f_{a,b}(x)]~(\bmod~ p_W)=f_{\lambda,\mu}(x)
\end{equation} with the index $\lambda=s+i-a~(\bmod~ p_W)$.

Here, $p_W=|V(G)|$ and $q_W=|E(G)|$ if the $W$-constraint total coloring $f$ is a gracefully total coloring; and $p_W=q_W=2|E(G)|$ if the $W$-constraint total coloring $f$ is an odd-gracefully total coloring (see examples shown in Fig.\ref{fig:odd-graceful-group-vertex} and Fig.\ref{fig:odd-graceful-group-edge}).

\begin{figure}[h]
\centering
\includegraphics[width=16cm]{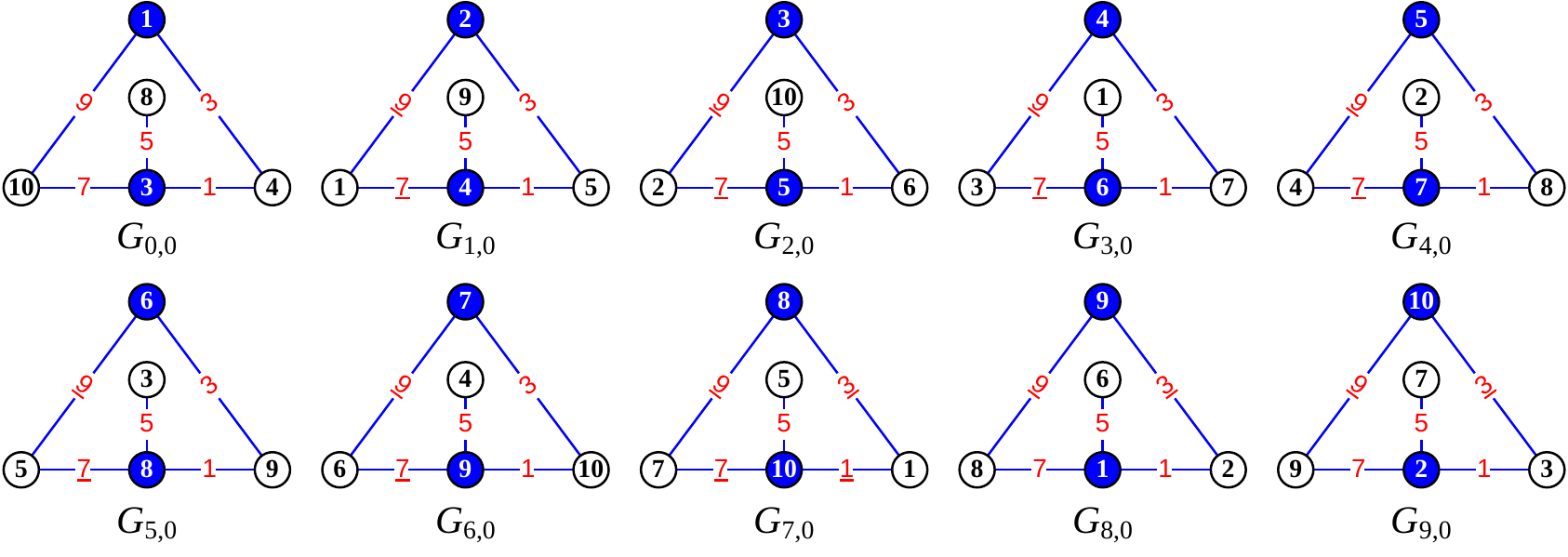}\\
\caption{\label{fig:odd-graceful-group-vertex} {\small An example for $p_W=2|E(G)|=10$ and holding Eq.(\ref{eqa:vertex-graphic-group}), cited from \cite{Yao-Sun-Su-Wang-Zhao-ICIBA-2020}.}}
\end{figure}

\begin{figure}[h]
\centering
\includegraphics[width=16cm]{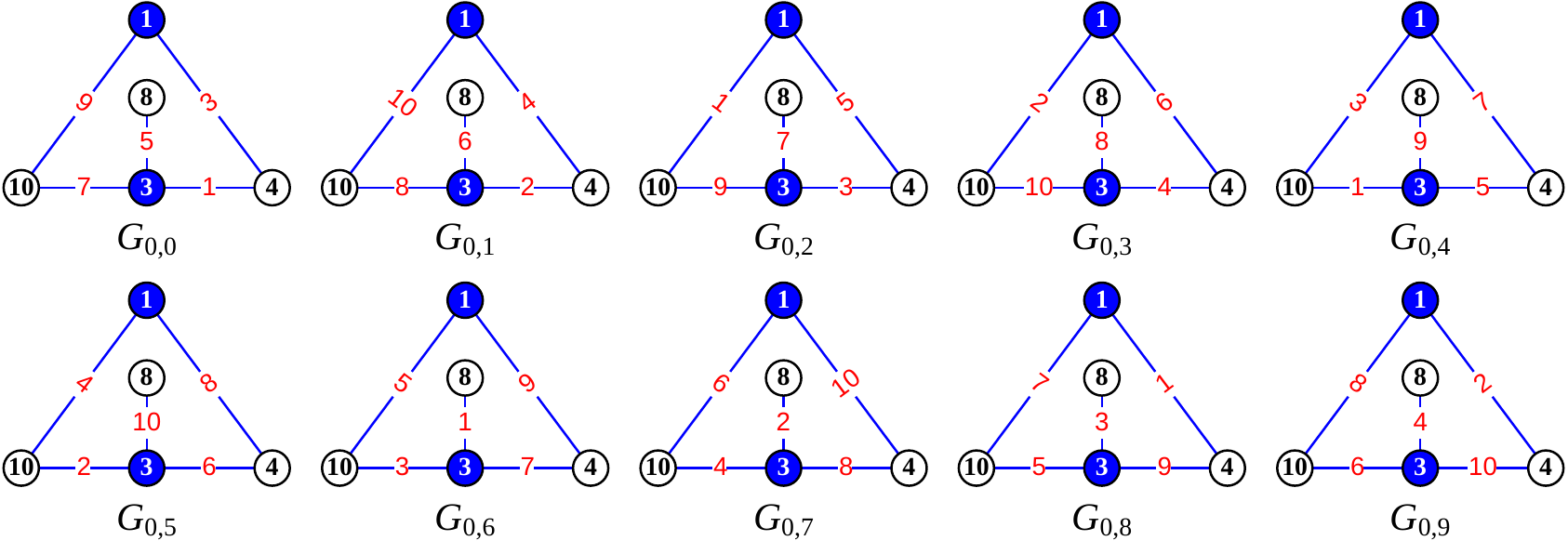}\\
\caption{\label{fig:odd-graceful-group-edge} {\small An example for $q_W=2|E(G)|=10$ and holding Eq.(\ref{eqa:edge-graphic-group}), cited from \cite{Yao-Sun-Su-Wang-Zhao-ICIBA-2020}.}}
\end{figure}

Especially, for an edge subsequence $G_{s,k}, G_{s,k+1},\dots ,G_{s,k+q_W}$, and a vertex subsequence $G_{s,k}$, $G_{s+1,k}$, $\dots $, $G_{s+p_W,k}$, we have two sets $F(\{G_{s,k+j}\}^{q_W}_{j=1};\oplus \ominus;(G,f))$ and $F(\{G_{s+i,k}\}^{p_W}_{i=1};\oplus \ominus;(G,f))$. By the operation ``$[\oplus \ominus]$'' defined in Eq.(\ref{eqa:edge-graphic-group}), we claim that $F(\{G_{s,k+j}\}^{q_W}_{j=1};\oplus \ominus;(G,f))$ is an \emph{every-zero edge-graphic group}, since there are the following facts:

(i) \emph{Zero}. Every graph $G_{s,k+j}$ of $F(\{G_{s,k+j}\}^q_{j=1};\oplus \ominus)$ is as a \emph{preappointed zero} such that
$$G_{s,k+r}[\oplus\ominus_{s,k+j}] G_{s,k+j}:=G_{s,k+r}[\oplus] G_{s,k+j}[\ominus]G_{s,k+j}=G_{s,k+r}$$ for any graph $G_{s,k+r}\in F(\{G_{s,k+j}\}^q_{j=1};\oplus \ominus)$.

(ii) \emph{Closure law}. For the index $r=i+j-j_0~(\bmod~ q_W)$,
$$G_{s,k+r}[\oplus\ominus_{s,k+j_0}] G_{s,k+j}:=G_{s,k+i}[\oplus] G_{s,k+j}[\ominus]G_{s,k+j_0}=G_{s,k+r}\in F(\{G_{s,k+j}\}^q_{j=1};\oplus \ominus)
$$ under any \emph{preappointed zero} $G_{s,k+j_0}$, and $G_{s,k+r}, G_{s,k+j}\in F(\{G_{s,k+j}\}^q_{j=1};\oplus \ominus)$.

(iii) \emph{Inverse.} For the index $i+j=2j_0~(\bmod~ q_W)$, we have
$$G_{s,k+i}[\oplus\ominus_{s,k+j_0}] G_{s,k+j}:=G_{s,k+i}[\oplus] G_{s,k+j}[\ominus]G_{s,k+j}=G_{s,k+j_0}
$$ under any \emph{preappointed zero} $G_{s,k+j_0}$, and $G_{s,k+i}, G_{s,k+j}\in F(\{G_{s,k+j}\}^q_{j=1};\oplus \ominus)$.

(iv) \emph{Associative law}. We have
$$G_{s,k+i}[\oplus\ominus_{s,k+j_0}]\big (G_{s,k+j}[\oplus\ominus_{s,k+j_0}] G_{s,k+r}\big )=\big (G_{s,k+i}[\oplus\ominus_{s,k+j_0}] G_{s,k+j}\big )[\oplus\ominus_{s,k+j_0}] G_{s,k+r}
$$ under any \emph{preappointed zero} $G_{s,k+j_0}$, and $G_{s,k+i}, G_{s,k+j}, G_{s,k+r}\in F(\{G_{s,k+j}\}^q_{j=1};\oplus \ominus)$.

(v) \emph{Commutative law}. For any \emph{preappointed zero} $G_{s,k+j_0}$, there is
$$G_{s,k+i}[\oplus\ominus_{s,k+j_0}] G_{s,k+j}=G_{s,k+j}[\oplus\ominus_{s,k+j_0}] G_{s,k+i}
$$ for $G_{s,k+i}, G_{s,k+j}\in F(\{G_{s,k+j}\}^q_{j=1};\oplus \ominus)$.

\vskip 0.4cm

Similarly, $F(\{G_{s+i,k}\}^{p_W}_{i=1};\oplus \ominus;(G,f))$ is an \emph{every-zero vertex-graphic group} by the operation ``$[\oplus \ominus]$'' defined in Eq.(\ref{eqa:vertex-graphic-group}).

As considering some graphs arbitrarily selected from the sequence $\{\{G_{s,k}\}^{+\infty}_{-\infty}\}^{+\infty}_{-\infty}$, we have
\begin{equation}\label{eqa:mixed-infinite-graphic-group}
[f_{s,k}(w)+f_{i,j}(w)-f_{a,b}(w)]~(\bmod~ p_W, q_W)=f_{\lambda,\mu}(w).
\end{equation} with the indices $\lambda=s+i-a~(\bmod~ p_W)$ and $\mu=k+j-b~(\bmod~ q_W)$ for each element $w\in V(G)\cup E(G)$. See examples shown in Fig.\ref{fig:infinite-graphic-group} and Fig.\ref{fig:odd-graceful-group-mixed}.

Thereby, we call the set $F(\{\{G_{s,k}\}^{+\infty}_{-\infty}\}^{+\infty}_{-\infty};\oplus \ominus;(G,f))$ an \emph{every-zero infinite graphic-sequence group} under the additive operation ``$[\oplus \ominus]$'' based on two modules $p_W$ and $q_W$ and a connected $(p,q)$-graph $G$ admitting a $W$-constraint total coloring, since it possesses the properties of Zero, Closure law, Inverse, Associative law and Commutative law.

\begin{figure}[h]
\centering
\includegraphics[width=16cm]{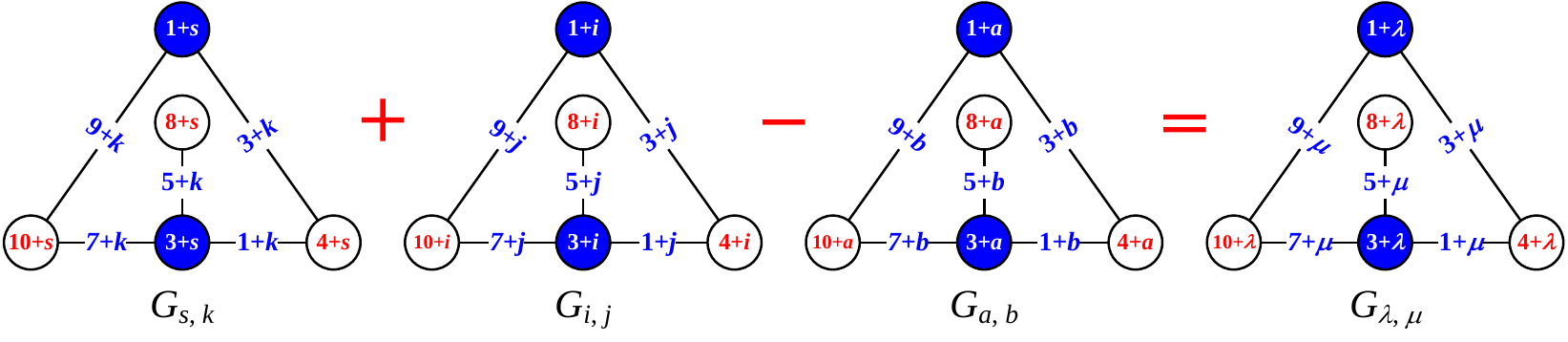}\\
\caption{\label{fig:infinite-graphic-group} {\small A graphic scheme for illustrating the formula Eq.(\ref{eqa:mixed-infinite-graphic-group}), cited from \cite{Yao-Sun-Su-Wang-Zhao-ICIBA-2020}.}}
\end{figure}

\begin{figure}[h]
\centering
\includegraphics[width=16cm]{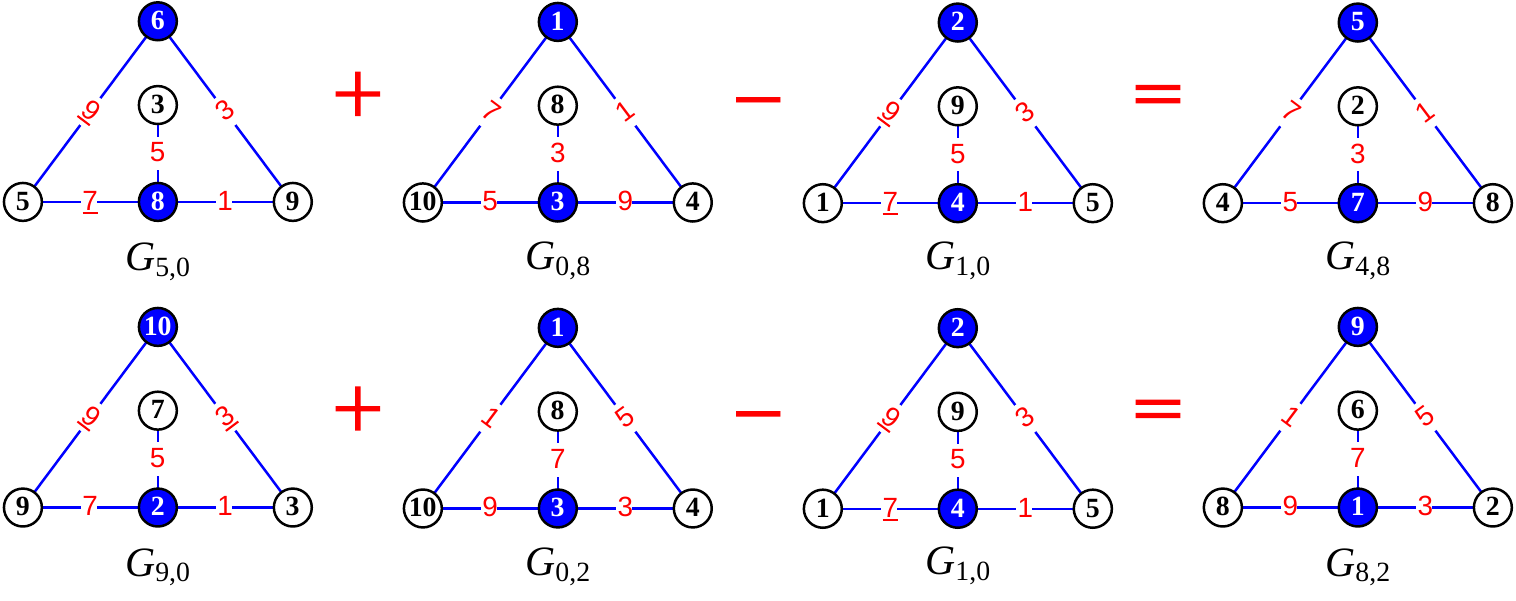}\\
\caption{\label{fig:odd-graceful-group-mixed} {\small Two examples for illustrating the formula Eq.(\ref{eqa:mixed-infinite-graphic-group}), cited from \cite{Yao-Sun-Su-Wang-Zhao-ICIBA-2020}.}}
\end{figure}

\begin{rem}\label{rem:infinite-graphic-sequence-group}
\cite{Yao-Sun-Su-Wang-Zhao-ICIBA-2020} Let $F^*(G,f)=F(\{\{G_{s,k}\}^{+\infty}_{-\infty}\}^{+\infty}_{-\infty};\oplus \ominus;(G,f))$ be an every-zero infinite graphic-sequence group. The elements of the every-zero infinite graphic-sequence group $F^*(G,f)$ can tile fully each point $(x,y)$ of $xOy$-plane. And moreover, $F^*(G,f)$ contains infinite every-zero graphic groups having finite elements, such as $F(\{G_{s+i,k}\}^{p_W}_{i=1}$; $[\oplus \ominus];(G,f))$ and $F(\{G_{s,k+j}\}^{q_W}_{i=1};\oplus \ominus;(G,f))$. Also, $F^*(G,f)$ contains infinite every-zero graphic groups of infinite elements.

Clearly, particular every-zero graphic groups having infinite elements or finite elements can be used easily to encrypt randomly the nodes of networks. Suppose that the coloring $f$ of $G$ in $F^*(G,f)$ is equivalent to another $W_g$-constraint total coloring $g$ of $G$. Then we get another every-zero infinite graphic-sequence group $F^*(G,g)=F(\{\{G_{i,j}\}^{+\infty}_{-\infty}\}^{+\infty}_{-\infty};\oplus \ominus;(G,g))$ with $G\cong G_{i,j}$. Thereby, the every-zero infinite graphic-sequence group $F^*(G,f)$ is a \emph{public-key graphic-sequence group}, the every-zero infinite graphic-sequence group $F^*(G,g)$ is a \emph{private-key graphic-sequence group} accordingly.

Since there exists a mapping $\varphi: V(G)\cup E(G)\rightarrow V(G)\cup E(G)$ such that $g(w)=\varphi(f(w))$ for $w\in V(G)\cup E(G)$, we claim that $F^*(G,f)$ admits an every-zero graphic-sequence homomorphism to $F^*(G,g)$, and moreover
\begin{equation}\label{eqa:555555}
F^*(G,f)\leftrightarrow F^*(G,g),
\end{equation} is a pair of \emph{homomorphically equivalent every-zero graphic-sequence homomorphisms}.\paralled
\end{rem}

\subsubsection{Graphic groups in encrypting dynamic networks}

In \cite{Yao-Sun-Su-Wang-Zhao-ICIBA-2020}, the author have designed the \emph{every-zero infinite graphic groups} with no modular: For a bipartite connected $(p,q)$-graph $H$ admitting a set-ordered $W_g$-constraint total coloring $g$ such that $\max g(X)<\min g(Y)$ for the bipartition $(X,Y)$ of vertices of $H$, we define a graph $H_{i,j}$ holding $H_{i,j}\cong H$ and admitting a $W_g$-constraint total coloring $h_{i,j}$ defined as: Let $i,j\in Z$, $h_{i,j}(u)=g(u)+i$ for each vertex $u\in X$, and $h_{i,j}(v)=g(v)+j$ for each vertex $v\in Y$, and $h_{i,j}(uv)=g(uv)+i+j$ for each edge $uv\in E(H)$. Next, the operation ``$H_{i,j}[\oplus \ominus _{a,b}]H_{k,s}:=H_{i,j}[\oplus ]H_{k,s}[\ominus ]H_{a,b}$'' is defined in the following: For a fixed $H_{a,b}$ admitting a $W_g$-constraint total coloring $h_{a,b}$, we set
\begin{equation}\label{eqa:new-vertex-graphic-group-basic-x}
h_{i,j}(u)+h_{k,s}(u)-h_{a,b}(u)=h_{\lambda,\mu}(u)+\lambda,~u\in X
\end{equation} with the index $\lambda=i+k-a$, and
\begin{equation}\label{eqa:new-vertex-graphic-group-basic-y}
h_{i,j}(v)+h_{k,s}(v)-h_{a,b}(v)=h_{\lambda,\mu}(v)+\mu,~v\in Y
 \end{equation} with the index $\mu=j+s-b$, and for $uv\in E(H)$, thus, we have
\begin{equation}\label{eqa:new-vertex-graphic-group-basic-uv}
{
\begin{split}
h_{i,j}(uv)+h_{k,s}(uv)-h_{a,b}(uv)=h_{\pi,\tau}(uv)+\pi+\tau
\end{split}}
\end{equation} with the indices $\pi=i+k-a$ and $\tau=j+s-b$.

Thereby, we get an infinite graph set $(\textbf{\textrm{H}}_{total})^{+\infty}_{-\infty}|^2=\{H_{s,k}: s,k\in Z\}$, which is an \emph{every-zero infinite graphic group} based on Eq.(\ref{eqa:new-vertex-graphic-group-basic-x}), Eq.(\ref{eqa:new-vertex-graphic-group-basic-y}) and Eq.(\ref{eqa:new-vertex-graphic-group-basic-uv}), as well as

$2h_{a,b}(u)+h_{i,j}(u)=h_{2a+i,2b+j}(u)=g(u)+2a+i$ for $u\in X$,

$2h_{a,b}(v)+h_{i,j}(v)=h_{2a+i,2b+j}(v)=g(v)+2b+j$ for $v\in Y$ and

$2h_{a,b}(uv)+h_{i,j}(uv)=h_{2a+i,2b+j}(uv)=g(uv)+2(a+b)+i+j$ for $uv\in E(H)$.

\vskip 0.4cm

\noindent \textbf{MULTIPLE-JOIN algorithm} is based on an every-zero infinite graphic group $(\textbf{\textrm{H}}_{total})^{+\infty}_{-\infty}|^2$.

\textbf{Initialization.} Select arbitrarily an element $H_{s_0,k_0}\in (\textbf{\textrm{H}}_{total})^{+\infty}_{-\infty}|^2$; two ends $i$ and $j$ of each edge $ij$ of the network model $N(0)$ are encrypted by $H_{s_i,k_i}$ and $H_{s_j,k_j}$ and the edge $ij$ is encrypted by $h_{\pi,\tau}(ij)+\pi+\tau$ with $\pi=s_i+s_j-s_0$ and $\tau=k_i+k_j-k_0$. We call $H_{s_0,k_0}$ an \emph{encryption base} of the network $N(0)$.

\textbf{Iteration.} Assume that a network model $N(t)$ is encrypted well by the every-zero infinite graphic group $(\textbf{\textrm{H}}_{total})^{+\infty}_{-\infty}|^2$ with the encryption base $H_{s_t,k_t}$. Add a new vertex $u$ to the network $N(t)$, and join it with vertices $x_1,x_2,\dots ,x_d$ of the network model $N(t)$ for produce a new network model $N(t+1)$, where each $x_r$ was encrypted by $H_{s_{x_r},k_{x_r}}$ with $r\in [1,d]$; take arbitrarily an element $H_{s_{t+1},k_{t+1}}$ as an encryption base of the network model $N(t+1)$ and encrypt $u$ by selecting randomly $H_{s_u,k_u}\in (\textbf{\textrm{H}}_{total})^{+\infty}_{-\infty}|^2$, each edge $ux_r$ is encrypted by $h_{\pi,\tau}(ux_r)+\pi+\tau$ with the indices $\pi=s_u+s_{x_r}-s_{t+1}$ and $\tau=k_u+k_{x_r}-k_{t+1}$.

\vskip 0.3cm

Since, in the MULTIPLE-JOIN algorithm, the vertices and edges of the network model $N(t)$ are encrypted by the graphs of $(\textbf{\textrm{H}}_{total})^{+\infty}_{-\infty}|^2$, we construct another network model $R(t)$ by replacing each vertex of $N(t)$ with the graphs of $(\textbf{\textrm{H}}_{total})^{+\infty}_{-\infty}|^2$, each edge $uv$ of $N(t)$ is replaced by $H_{s_{uv},k_{uv}}=H_{s_u,k_u}[\oplus ]H_{s_v,k_v}[\ominus ]H_{a,b}$, and use new edges to join $H_{s_u,k_u}$ with $H_{s_{uv},k_{uv}}$ together, and to join $H_{s_{uv},k_{uv}}$ with $H_{s_v,k_v}$ together. We write
$$R(t)=N(t)\big [\overline{\odot}_{uv\in E(N(t))}\big ] H_{s_{uv},k_{uv}}\langle H_{s_u,k_u}, H_{s_v,k_v}\rangle,$$
and moreover we get an \emph{infinite graphic group lattice}
\begin{equation}\label{eqa:c3xxxxx}
{
\begin{split}
\textbf{\textrm{L}}(\textbf{\textrm{F}}\overline{\odot }\textbf{\textrm{H}}_{total})=\big \{R(t):N(t)\in \textbf{\textrm{F}}(n_v,n_e)(t); H_{s_{uv},k_{uv}}, H_{s_u,k_u}, H_{s_v,k_v}\in (\textbf{\textrm{H}}_{total})^{+\infty}_{-\infty}|^2\big \}
\end{split}}
\end{equation} under the infinite base $(\textbf{\textrm{H}}_{total})^{+\infty}_{-\infty}|^2$, where $\textbf{\textrm{F}}(n_v,n_e)(t)$ is a set of networks having vertex numbers $\leq n_v(t)$ and edge numbers $\leq n_e(t)$.

\begin{rem} \label{rem:multiple-join-advantage}
The MULTIPLE-JOIN algorithm has the following advantages:
\begin{asparaenum}[\textrm{\textbf{Advantage}}-1. ]
\item In the process of forming $N(t+1)$, it is allowed that a new vertex joins $d~(\geq 2)$ vertices of $N(t)$, so it is easy to produce particular network models to meet the needs of the \emph{growth} and the \emph{preferential attachment} of BA-models \cite{Barabasi-Reka-Albert-1999}, and we have $n_v(t+1)=n_v(t)+1$ and $n_e(t+1)=n_e(t)+d$.
\item It is allowed to \emph{add new edges} to joining vertices of $N(t)$, and \emph{remove old edges} of $N(t)$ in the process of forming $N(t+1)$ for approximating real networks.
\item Since an every-zero infinite graphic group $(\textbf{\textrm{H}}_{total})^{+\infty}_{-\infty}|^2$ contains infinite elements, and $N(t_i)$ and $N(t_j)$ have different encryption bases at two distinct time steps $t_i,t_j$, so it increases the cost of deciphering $N(t)$.
\item Suppose that each graph $H_{s,k}$ admits a graph homomorphism to a graph $G_{s,k}$, that is $H_{s,k}\rightarrow G_{s,k}$, then we can get another a graph homomorphism group $(\textbf{\textrm{G}}_{total})^{+\infty}_{-\infty}|^2=\{G_{s,k}: s,k\in Z\}$, so we get an \emph{every-zero infinite graphic group homomorphism} $(\textbf{\textrm{H}}_{total})^{+\infty}_{-\infty}|^2\rightarrow (\textbf{\textrm{G}}_{total})^{+\infty}_{-\infty}|^2$.
\item In $(\textbf{\textrm{H}}_{total})^{+\infty}_{-\infty}|^2$, if the bipartite graph $H$ is a tree admitting a set-ordered graceful coloring $g$, then there are many colorings equivalent with $g$ proven in \cite{Gallian2021}, so we get infinite graphs $T^r_{s,k}$ admitting total coloring $\theta^r_{i,j}$ equivalent with $h_{i,j}$ defined in $(\textbf{\textrm{H}}_{total})^{+\infty}_{-\infty}|^2$. Hence, we have many every-zero infinite graphic groups $(\textbf{\textrm{T}}^r_{total})^{+\infty}_{-\infty}|^2=\{T^r_{s,k}: s,k\in Z\}$ with $r\geq 1$. We write this mutually equivalent every-zero infinite graphic groups by $(\textbf{\textrm{H}}_{total})^{+\infty}_{-\infty}|^2~\sim ~(\textbf{\textrm{T}}^r_{total})^{+\infty}_{-\infty}|^2$.
\end{asparaenum}
\end{rem}

\subsubsection{Multi-level graphic group graph-coloring algorithm}

Using a graphic group colors a network overall, where each graph in the graphic group is also colored by another graphic group, we call this process \emph{multi-level graphic group graph-coloring algorithm} (MLGGG-coloring algorithm).

\vskip 0.2cm

\noindent \textbf{Multi-level graphic group graph-coloring algorithm.}

\textbf{Initialization.} Let $G_m=(G_{m,1},G_{m,2},\dots ,G_{m,n_m})$ be a \emph{graph base} with $G_{m,i}\not \cong G_{m,j}$ if $i\neq j$. A connected graph $H$ admits a proper total graph-coloring $\varphi:V(H)\cup E(H)\rightarrow \{F(G_m);\oplus \ominus\}$, where $\{F(G_m);\oplus \ominus\}$ is an \emph{every-zero graphic group}, such that
$$\varphi(xy):=\varphi(x)[\oplus \ominus_p]\varphi(y)
$$ for a preappointed \emph{zero} $G_{m,p}\in G_m$, where $\varphi(x)=G_{m,i}\in G_m$, $\varphi(y)=G_{m,j}\in G_m$ and $\varphi(xy)=G_{m,\lambda}\in G_m$ with the index $\lambda=i+j-p~(\bmod~n_m)$. We get a graphic Topcode-matrix
\begin{equation}\label{eqa:5555555555555555}
\centering
{
\begin{split}
T_{code}(H,\varphi)= \left(
\begin{array}{ccccc}
\varphi(x_{1}) & \varphi(x_{2}) & \cdots & \varphi(x_{q})\\
\varphi(x_{1}y_{1}) & \varphi(x_{2}y_{2}) & \cdots & \varphi(x_{q}y_{q})\\
\varphi(y_{1}) & \varphi(y_{2}) & \cdots & \varphi(y_{q})
\end{array}
\right)=\left(
\begin{array}{cccccccccccccc}
X_\varphi\\
E_\varphi\\
Y_\varphi
\end{array}
\right)=(X_\varphi,E_\varphi,Y_\varphi)^{T}
\end{split}}
\end{equation} where each edge $x_{i}y_{i}\in E(H)$; refer to Definition \ref{defn:graphic-topcode-matrix}.

\textbf{Step 1.} Each graph $G_{m,j}$ for $j\in [1,n_m]$ in the graph base $G_m$ admits a proper total graph-coloring
$$\varphi_{m,j}:V(G_{m,j})\cup E(G_{m,j})\rightarrow \{F(G_{m-1});\oplus \ominus\}
$$ based on a graph base $G_{m-1}=(G_{m-1,1},G_{m-1,2},\dots ,G_{m-1,n_{m-1}})$.

\textbf{Step $k$.} In each graph base $G_{m-k}=(G_{m-k,1},G_{m-k,2},\dots ,G_{m-k,n_{m-k}})$, each graph $G_{m-k,j}$ for $j\in [1,n_{m-k}]$ admits a proper total graph-coloring
$$\varphi_{m-k,j}:V(G_{m-k,j})\cup E(G_{m-k,j})\rightarrow \{F(G_{m-k-1});\oplus \ominus\}
$$ based on a graph base $G_{m-k-1}=(G_{m-k-1,1},G_{m-k-1,2},\dots ,G_{m-k-1,n_{m-k-1}})$ for $k\in [0,m-1]$.

\textbf{Step $m-1$.} Each graph $G_{1,j}$ for $j\in [1,n_{1}]$ in the graph base $G_{1}=(G_{1,1},G_{1,2},\dots ,G_{1,n_{1}})$ admits a proper total graph-coloring
$$\varphi_{1,j}:V(G_{1,j})\cup E(G_{1,j})\rightarrow \{F(G_{0});\oplus \ominus\}
$$ based on a graph base $G_{0}=(G_{0,1},G_{0,2},\dots ,G_{0,n_{0}})$.

\textbf{Step $m$.} Finally, each graph $G_{0,j}$ for $j\in [1,n_{0}]$ in the graph base $G_{0}$ admits a proper total string-coloring
$$\theta_{0,j}:V(G_{0,j})\cup E(G_{0,j})\rightarrow S_{[0,9]\textrm{-string}}
$$ where the $[0,9]$-string set $S_{[0,9]\textrm{-string}}=\{s_i=c_{i,1}c_{i,2}\cdots c_{i,n}:c_{i,j}\in [0,9]\}$.

\begin{rem}\label{rem:333333}
The above every-zero graphic groups
$$\{F(G_m);\oplus \ominus\},\{F(G_{m-1});\oplus \ominus\},\dots,\{F(G_1);\oplus \ominus\},\{F(G_0);\oplus \ominus\}
$$ and a $[0,9]$-string set $S_{[0,9]\textrm{-string}}$ form a multiple protection of a network, and it greatly improves the security of the whole network encryption. In the era of quantum computer, the time consumed by the multiple encryption mentioned here can be completely ignored.\paralled
\end{rem}

\subsection{Vector-colorings made by strings}

Since a number-based string $s=a_1a_2\cdots a_n$ can correspond to a vector $(a_1,a_2,\dots ,a_n)$, or a set $\{a_1,a_2,\dots ,a_n\}$ (no order), or a vertex-degree $\textrm{deg}=(a_1,a_2,\dots ,a_n)$ (no order), we consider \emph{vector groups} based on the following \emph{vector set}
\begin{equation}\label{eqa:string-vector-set}
V^n_{ector}\big (\{(9)_j\}^n_{j=1}\big )=\big \{v_{ector}(\{m_j\}^n_{j=1})=(C_{1,m_1},C_{2,m_2},\dots ,C_{n,m_n}):m_j\in [1,(9)_j],j\in [1,n]\big \}
\end{equation} with notations $C_{1,m_1},C_{2,m_2},\dots ,C_{n,m_n}$ and $(9)_j$ defined in Definition \ref{defn:number-based-super-string-def}; and consider \emph{vertex-degree groups} based on the following \emph{vertex-degree set}
\begin{equation}\label{eqa:string-vertex-degree-set}
D^n_{egree}\big (\{(9)_j\}^n_{j=1}\big )=\big \{\textrm{deg}(\{m_j\}^n_{j=1})=(C_{1,m_1},C_{2,m_2},\dots ,C_{n,m_n}):m_j\in [1,(9)_j],j\in [1,n]\big \}
\end{equation} with notations $C_{1,m_1},C_{2,m_2},\dots ,C_{n,m_n}$ and $(9)_j$ defined in Definition \ref{defn:number-based-super-string-def}.

Let $\textbf{\textrm{S}}_{tring}(n)$ be the set of \emph{$n$-rank number-based strings} $\alpha_{1}\alpha_{2}\cdots \alpha_{n}$ with each number $\alpha_{j}\ge 0$ for $j\in [1,n]$, and a $(p,q)$-graph $G$ admits a total string-coloring $f:V(G)\cup E(G)\rightarrow \textbf{\textrm{S}}_{tring}(n)$, such that
\begin{equation}\label{eqa:homogeneous-vector-colorings11}
f(u_k)=a_{k,1}a_{k,2}\cdots a_{k,n},~f(v_k)=b_{k,1}b_{k,2}\cdots b_{k,n},~f(u_kv_k)=c_{k,1}c_{k,2}\cdots c_{k,n}
\end{equation} for each edge $u_kv_k\in E(G)=\{u_kv_k:k\in [1,q]\}$. Immediately, we get three \emph{vector colorings}
\begin{equation}\label{eqa:homogeneous-vector-colorings22}
\pi(u_k)=(a_{k,1},a_{k,2},\dots ,a_{k,n}),~\pi(v_k)=(b_{k,1},b_{k,2},\dots ,b_{k,n}),~\pi(u_kv_k)=(c_{k,1},c_{k,2},\dots ,c_{k,n})
\end{equation} for each edge $u_kv_k\in E(G)=\{u_kv_k:k\in [1,q]\}$; and moreover three \emph{set-colorings}
\begin{equation}\label{eqa:homogeneous-vector-colorings22}
\psi(u_k)=\{a_{k,1},a_{k,2},\dots ,a_{k,n}\},~\psi(v_k)=\{b_{k,1},b_{k,2},\dots ,b_{k,n}\},~\psi(u_kv_k)=\{c_{k,1},c_{k,2},\dots ,c_{k,n}\}
\end{equation} for each edge $u_kv_k\in E(G)=\{u_kv_k:k\in [1,q]\}$.

\begin{defn} \label{defn:n-di-vector-colorings-definition}
$^*$ \textbf{Homogeneous $(abc)$-magic vector-colorings.} Let $\textbf{\textrm{V}}_{ector}(n)$ be the set of \emph{$n$-rank vectors} $(\alpha_{1},\alpha_{2},\dots ,\alpha_{n})$ with each coordinate $\alpha_{j}\in Z^0$ for $j\in [1,n]$. A $(p,q)$-graph $G$ admits a \emph{$W$-constraint total vector-coloring} $\overrightarrow{\pi} : V(G)\cup E(G)\rightarrow \textbf{\textrm{V}}_{ector}(n)$, such that each edge $u_kv_k\in E(G)=\{u_kv_k:k\in [1,q]\}$ holds
\begin{equation}\label{eqa:n-di-vector-coloringss}
\overrightarrow{\pi}(u_k)=(a_{k,1},a_{k,2},\dots ,a_{k,n}),~\overrightarrow{\pi}(v_k)=(b_{k,1},b_{k,2},\dots ,b_{k,n}),~\overrightarrow{\pi}(u_kv_k)=(c_{k,1},c_{k,2},\dots ,c_{k,n})
\end{equation} subject to the $W$-constraint $\overrightarrow{\pi}(u_kv_k)=W\langle \overrightarrow{\pi}(u_k),\overrightarrow{\pi}(v_k)\rangle$. Let $\lambda$ and $\gamma$ be constants, there are the following $(abc)$-magic vector-constraints:
\begin{asparaenum}[\textbf{\textrm{Vect}}-1. ]
\item \label{4vector-magic:uniform-edge-magic} Each $j\in [1,n]$ holds the edge-magic constraint $a_{k,j}+b_{k,j}+c_{k,j}=\lambda$ true, denoted as
\begin{equation}\label{eqa:555555}
 \overrightarrow{\pi}(w_k)[+]\overrightarrow{\pi}(w_kz_k)[+]\overrightarrow{\pi}(z_k)=\lambda
 \end{equation}
\item \label{4vector-magic:uniform-edge-difference} Each $j\in [1,n]$ holds the edge-difference constraint $c_{k,j}+|a_{k,j}-b_{k,j}|=\lambda$ true, denoted as \begin{equation}\label{eqa:555555}
 \overrightarrow{\pi}(w_kz_k)[+]|\overrightarrow{\pi}(w_k)[-]\overrightarrow{\pi}(z_k)|=\lambda
\end{equation}
\item \label{4vector-magic:uniform-graceful-difference} Each $j\in [1,n]$ holds the graceful-difference constraint $\big ||a_{k,j}-b_{k,j}|-c_{k,j}\big |=\lambda$ true, denoted as
\begin{equation}\label{eqa:555555}
\big ||\overrightarrow{\pi}(w_k)[-]\overrightarrow{\pi}(z_k)|[-]\overrightarrow{\pi}(w_kz_k)\big |=\lambda
\end{equation}
\item \label{4vector-magic:uniform-felicitous-difference} Each $j\in [1,n]$ holds the felicitous-difference constraint $|a_{k,j}+b_{k,j}-c_{k,j}|=\lambda$ true, denoted as
 \begin{equation}\label{eqa:555555}|\overrightarrow{\pi}(w_k)[+]\overrightarrow{\pi}(z_k)[-]\overrightarrow{\pi}(w_kz_k)|=\lambda
\end{equation}
\item \label{4vector-magic:weak-edge-magic} Some $r\in [1,n]$ holds the edge-magic constraint $a_{k,r}+b_{k,r}+c_{k,r}=\gamma$ true, but not all, denoted as $\partial_r \langle \overrightarrow{\pi}(w_k)[+]\overrightarrow{\pi}(w_kz_k)[+]\overrightarrow{\pi}(z_k)=\gamma \rangle$.
\item \label{4vector-magic:weak-edge-difference} Some $s\in [1,n]$ holds the edge-difference constraint $c_{k,s}+|a_{k,s}-b_{k,s}|=\gamma$ true, but not all, denoted as $\partial_s \langle \overrightarrow{\pi}(w_kz_k)[+]|\overrightarrow{\pi}(w_k)[-]\overrightarrow{\pi}(z_k)|=\gamma \rangle$.
\item \label{4vector-magic:weak-graceful-difference} Some $t\in [1,n]$ holds the graceful-difference constraint $\big ||a_{k,t}-b_{k,t}|-c_{k,t}\big |=\gamma$ true, but not all, denoted as $\partial_t \langle \big ||\overrightarrow{\pi}(w_k)[-]\overrightarrow{\pi}(z_k)|[-]\overrightarrow{\pi}(w_kz_k)\big |=\gamma \rangle$.
\item \label{4vector-magic:weak-felicitous-difference} Some $d\in [1,n]$ holds the felicitous-difference constraint $|a_{k,d}+b_{k,d}-c_{k,d}|=\gamma$ true, but not all, denoted as $\partial_d \langle |\overrightarrow{\pi}(w_k)[+]\overrightarrow{\pi}(z_k)[-]\overrightarrow{\pi}(w_kz_k)|=\gamma \rangle$.
\end{asparaenum}
\textbf{We call the total vector-coloring $\overrightarrow{\pi}$}
\begin{asparaenum}[\textbf{\textrm{Vectabc}}-1. ]
\item a \emph{component edge-magic total vector-coloring} if Vect-\ref{4vector-magic:uniform-edge-magic} holds true.
\item a \emph{component edge-difference total vector-coloring} if Vect-\ref{4vector-magic:uniform-edge-difference} is true.
\item a \emph{component graceful-difference total vector-coloring} if Vect-\ref{4vector-magic:uniform-graceful-difference} holds true.
\item a \emph{component felicitous-difference total vector-coloring} if Vect-\ref{4vector-magic:uniform-felicitous-difference} holds true.
\item a \emph{weak-component edge-magic total vector-coloring} if Vect-\ref{4vector-magic:weak-edge-magic} is true.
\item a \emph{weak-component edge-difference total vector-coloring} if Vect-\ref{4vector-magic:weak-edge-difference} holds true.
\item a \emph{weak-component graceful-difference total vector-coloring} if Vect-\ref{4vector-magic:weak-graceful-difference} holds true.
\item a \emph{weak-component felicitous-difference total vector-coloring} if Vect-\ref{4vector-magic:weak-felicitous-difference} is true.\qqed
\end{asparaenum}
\end{defn}

\begin{example}\label{exa:8888888888}
In Definition \ref{defn:n-di-vector-colorings-definition}, a $W$-constraint total vector-coloring $\overrightarrow{\pi} : V(G)\cup E(G)\rightarrow \textbf{\textrm{V}}_{ector}(n)$ holds one $W$-constraint of the following vector-operation constraints
$$
\overrightarrow{\pi}(u_k)+\overrightarrow{\pi}(v_k)=\overrightarrow{\pi}(u_kv_k),~\overrightarrow{\pi}(u_kv_k)+\overrightarrow{\pi}(v_k)=\overrightarrow{\pi}(u_k),~ \overrightarrow{\pi}(u_k)+\overrightarrow{\pi}(u_kv_k)=\overrightarrow{\pi}(v_k)
$$ or $\overrightarrow{\pi}(u_kv_k)=\overrightarrow{\pi}(u_k)\cdot \overrightarrow{\pi}(v_k)$, or $\overrightarrow{\pi}(u_kv_k)=\overrightarrow{\pi}(u_k)\times \overrightarrow{\pi}(v_k)$, or $\lambda=[\overrightarrow{\pi}(u_k)\times \overrightarrow{\pi}(v_k)]\cdot \overrightarrow{\pi}(u_kv_k)$ for each edge $u_kv_k\in E(G)=\{u_kv_k:k\in [1,q]\}$.\qqed
\end{example}

\begin{prop}\label{prop:99999}
$^*$ For a fixed vector $\overrightarrow{\lambda}$, there are more groups of different vectors $\overrightarrow{a},\overrightarrow{b},\overrightarrow{c}$ hold

(i) The \emph{vector-edge-magic constraint} $\overrightarrow{a}+\overrightarrow{b}+\overrightarrow{c}=\overrightarrow{\lambda}$.

(ii) The \emph{vector-edge-difference constraint} $\overrightarrow{c}+\overrightarrow{a}-\overrightarrow{b}=\overrightarrow{\lambda}$.

(iii) The \emph{vector-felicitous-difference constraint} $\overrightarrow{a}+\overrightarrow{b}-\overrightarrow{c}=\overrightarrow{\lambda}$, or $\overrightarrow{c}-(\overrightarrow{a}+\overrightarrow{b})=\overrightarrow{\lambda}$.

(iv) The \emph{vector-graceful-difference constraint} $(\overrightarrow{a}-\overrightarrow{b})-\overrightarrow{c}=\overrightarrow{\lambda}$, or $\overrightarrow{c}-(\overrightarrow{a}-\overrightarrow{b})=\overrightarrow{\lambda}$.
\end{prop}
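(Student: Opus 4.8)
The plan is to reduce each of the four vector identities to its $n$ scalar components and then solve those components one coordinate at a time by invoking the integer-partition theorems already established, namely Theorem \ref{thm:partition-integer-edge-magic-cons}, Theorem \ref{thm:edge-difference-integer-partition}, Theorem \ref{thm:felicitous-difference-integer-partition} and Theorem \ref{thm:graceful-difference-integer-partition}. Writing $\overrightarrow{\lambda}=(\lambda_1,\lambda_2,\dots ,\lambda_n)$ and seeking $\overrightarrow{a}=(a_1,\dots ,a_n)$, $\overrightarrow{b}=(b_1,\dots ,b_n)$, $\overrightarrow{c}=(c_1,\dots ,c_n)$, each vector equation in the statement is equivalent to the conjunction of its $n$ scalar coordinate equations, since vector addition and subtraction act coordinate-wise. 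For item (i) the $j$-th component reads $a_j+b_j+c_j=\lambda_j$, which is exactly the $ema$-constraint $\lambda_j=ema\langle a_j,b_j,c_j\rangle$; hence by Theorem \ref{thm:partition-integer-edge-magic-cons}, for every coordinate with $\lambda_j\geq 5$ the set $S_{ema}(\lambda_j)$ contains at least one group of mutually distinct positive integers $a_j,b_j,c_j$ with $a_j+b_j+c_j=\lambda_j$. Choosing one such scalar group in each coordinate and assembling them into three vectors yields a triple $\overrightarrow{a},\overrightarrow{b},\overrightarrow{c}$ satisfying the vector-edge-magic constraint.

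First I would handle items (ii)–(iv) by the same coordinate-wise mechanism, after reconciling the signed forms in the statement with the absolute-value forms in the cited theorems. For (ii) the component $c_j+a_j-b_j=\lambda_j$ becomes, on the index set where $a_j\geq b_j$, the $edi$-constraint $c_j+|a_j-b_j|=\lambda_j$, so Theorem \ref{thm:edge-difference-integer-partition} and its set $S_{edi}(\lambda_j)$ supply the scalar solutions. Likewise, the two ``or'' clauses of (iii) correspond to the two sign resolutions of $|a_j+b_j-c_j|=\lambda_j$, matched with $S_{fdi}(\lambda_j)$ via Theorem \ref{thm:felicitous-difference-integer-partition}; and the two ``or'' clauses of (iv) correspond to the two branches of $\big ||a_j-b_j|-c_j\big |=\lambda_j$, matched with $S_{gdi}(\lambda_j)$ via Theorem \ref{thm:graceful-difference-integer-partition} (valid already for $\lambda_j\geq 0$). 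In each case the chosen scalar triples are mutually distinct positive integers, so assembling them coordinate-wise gives a vector triple with $\overrightarrow{a},\overrightarrow{b},\overrightarrow{c}$ pairwise distinct, because differing in even a single coordinate forces the vectors to differ.

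To establish the multiplicity claim, that there are \emph{more} (indeed arbitrarily many) such groups, I would use the scaling and shifting mechanism recorded in the theorem following Problem \ref{problem:integer-partitions-other-4}: each scalar identity $m=\varepsilon\langle a,b,c\rangle$ generates the further identity $km=\varepsilon\langle ka,kb,kc\rangle$ (and additive shifts), so each coordinate admits infinitely many solution triples, and the number of ways to build a full vector solution is the product of these choices over the $n$ coordinates. The main obstacle I anticipate is the sign/absolute-value bookkeeping in (ii)–(iv): the proposition is written with signed sums while the integer-partition theorems are phrased with absolute values, so one must carefully partition the coordinate indices according to the orderings of $a_j,b_j$ (and, for (iv), of $|a_j-b_j|$ against $c_j$) so that each branch of every ``or'' clause is realized, while simultaneously keeping $a_j,b_j,c_j$ positive and requiring $\overrightarrow{\lambda}$ to have coordinates large enough (e.g. $\lambda_j\geq 5$ for (i)–(iii)) for the cited sets to be nonempty. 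Once this case analysis is organized, the remaining assembly and distinctness verification are routine.
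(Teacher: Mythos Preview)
The paper states this proposition without proof; it is treated as self-evident, since each of the four vector identities is a single linear equation in the three vector unknowns $\overrightarrow{a},\overrightarrow{b},\overrightarrow{c}$, and coordinate-wise this is one scalar linear equation in three integer unknowns, which is plainly underdetermined and admits infinitely many solutions. For instance, for (i) one may take any $\overrightarrow{a},\overrightarrow{b}$ with small enough non-negative coordinates and set $\overrightarrow{c}=\overrightarrow{\lambda}-\overrightarrow{a}-\overrightarrow{b}$; the other three items are handled by the same parametric device after a trivial rearrangement.

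Your route through Theorems \ref{thm:partition-integer-edge-magic-cons}--\ref{thm:graceful-difference-integer-partition} is logically sound and does produce solutions, but it is heavier than what the proposition needs and imports extra hypotheses that the statement does not carry. Those theorems concern \emph{proper total colorings of graphs} realizing the scalar constraints with \emph{mutually distinct positive} integers and with thresholds such as $\lambda_j\geq 5$; none of that structure is required here, and by invoking it you are forced into the sign/absolute-value case analysis you flag at the end. The direct linear-algebra observation above dispenses with all of that bookkeeping, works for every $\overrightarrow{\lambda}$ with non-negative integer coordinates, and makes the ``more groups'' claim immediate since two of the three vectors are free parameters in each coordinate.
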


\begin{defn} \label{defn:n-di-vector-colorings-definition22}
$^*$ \textbf{Non-homogeneous magic vector-colorings.} Let $V_{ector}$ be a set of vectors. Suppose that a graph $G$ admits a total vector-coloring $F:V(G)\cup E(G)\rightarrow V_{ector}$, such that $F(u)=\overrightarrow{v}_{u}$, $F(v)=\overrightarrow{v}_{v}$, and $F(uv)=\overrightarrow{v}_{uv}$ for each edge $uv\in E(G)$.

(i) If there is a fixed vector $\overrightarrow{\lambda}$, such that each edge $uv\in E(G)$ holds the \emph{vector-edge-magic constraint} $\overrightarrow{v}_{u}+ \overrightarrow{v}_{v}+ \overrightarrow{v}_{uv}=\overrightarrow{\lambda}$ true, we call $F$ \emph{edge-magic total vector-coloring}.

(ii) If there is a fixed vector $\overrightarrow{\lambda}$, such that each edge $uv\in E(G)$ holds one of \emph{vector-edge-difference constraints} $\overrightarrow{v}_{uv}+ (\overrightarrow{v}_{u}- \overrightarrow{v}_{v})=\overrightarrow{\lambda}$ and $\overrightarrow{v}_{uv}+ (\overrightarrow{v}_{v}- \overrightarrow{v}_{u})=\overrightarrow{\lambda}$ true, we call $F$ \emph{edge-difference total vector-coloring}.

(iii) If there is a fixed vector $\overrightarrow{\lambda}$, such that each edge $uv\in E(G)$ holds one of \emph{vector-felicitous-difference constraints} $\overrightarrow{v}_{uv}-(\overrightarrow{v}_{u}+ \overrightarrow{v}_{v}) =\overrightarrow{\lambda}$ and $(\overrightarrow{v}_{v}+ \overrightarrow{v}_{u})- \overrightarrow{v}_{uv}=\overrightarrow{\lambda}$ true, we call $F$ \emph{felicitous-difference total vector-coloring}.

(iv) If there is a fixed vector $\overrightarrow{\lambda}$, such that each edge $uv\in E(G)$ holds one of \emph{vector-graceful-difference constraints} $(\overrightarrow{v}_{u}-\overrightarrow{v}_{v})-\overrightarrow{v}_{uv}=\overrightarrow{\lambda}$, $(\overrightarrow{v}_{v}-\overrightarrow{v}_{u})-\overrightarrow{v}_{uv}=\overrightarrow{\lambda}$, $\overrightarrow{v}_{uv}-(\overrightarrow{v}_{u}-\overrightarrow{v}_{v})=\overrightarrow{\lambda}$ and $\overrightarrow{v}_{uv}-(\overrightarrow{v}_{v}-\overrightarrow{v}_{u})=\overrightarrow{\lambda}$ true, we call $F$ \emph{graceful-difference total vector-coloring}.\qqed
\end{defn}

\begin{rem}\label{rem:333333}
The $W$-constraint $\overrightarrow{\pi}(u_kv_k)=W\langle \overrightarrow{\pi}(u_k),\overrightarrow{\pi}(v_k)\rangle$ appeared in Definition \ref{defn:n-di-vector-colorings-definition} and Definition \ref{defn:n-di-vector-colorings-definition22} is a group of constraints like that defined in Definition \ref{defn:homoge-uniformly-string-total-colorings} (Homogeneous string-coloring), Definition \ref{defn:homoge-various-string-total-colorings} (Weak homogeneous string-coloring) and Definition \ref{defn:homoge-4-magic-string-colorings} (Magic-type homogeneous string-coloring).\paralled
\end{rem}

\begin{defn} \label{defn:integer-string-lattice00}
$^*$ Suppose that there are vectors $\textbf{S}_i=(a_{i,1},a_{i,2},\dots ,a_{i,m})$ made by number-based strings $s_i=a_{i,1}a_{i,2}\cdots a_{i,m}$ with $i\in [1,n]$. If a vector base $\textbf{V}_{string}=(\textbf{S}_1, \textbf{S}_2,\dots , \textbf{S}_n)$ is consisted of $n$ linearly independent vectors $\textbf{S}_1, \textbf{S}_2,\dots , \textbf{S}_n$, then we call the following set
\begin{equation}\label{eqa:ssssss}
\textrm{\textbf{L}}(\textbf{V}_{string}) =\{x_1\textbf{\textrm{S}}_1+x_2\textbf{\textrm{S}}_2+\cdots +x_n\textbf{\textrm{S}}_n : x_i \in Z\}
\end{equation} \emph{integer-string lattice} based on the number-based strings $s_1,s_2,\dots ,s_n$.\qqed
\end{defn}
\begin{rem}\label{rem:333333}
In Definition \ref{defn:integer-string-lattice00}, a group of number-based strings $s_1,s_2,\dots ,s_n$ can induces $(m!)^n$ different vector bases like the vector base $\textbf{V}_{string}$ defined in Definition \ref{defn:integer-string-lattice00}.

The authors in \cite{Zhang-Yang-Yao-Frontiers-Computer-2021} study some relationships between traditional lattices and graph lattices of topological coding. \paralled
\end{rem}

\subsection{Set-colorings made by strings}

\begin{defn} \label{defn:n-di-set-colorings-definition}
$^*$ \textbf{Homogeneous $(abc)$-magic set-colorings.} Let $\textbf{\textrm{S}}_{et}(\leq n)$ be the set of sets $\{\alpha_{1},\alpha_{2},\dots ,\alpha_{m}\}$ with each element $\alpha_{j}\in Z^0$ for $j\in [1,m]$ and $m\leq n$. A $(p,q)$-graph $G$ admits a \emph{$\{W_i\}^A_{i=1}$-constraint total set-coloring} $\psi : V(G)\cup E(G)\rightarrow \textbf{\textrm{S}}_{et}(\leq n)$, such that each edge $u_kv_k\in E(G)=\{u_kv_k:k\in [1,q]\}$ holds
\begin{equation}\label{eqa:n-di-set-coloringss}
{
\begin{split}
&\psi(u_k)=\{a_{k,1},a_{k,2},\dots ,a_{k,n}\},~\psi(v_k)=\{b_{k,1},b_{k,2},\dots ,b_{k,n}\},\\
&\psi(u_kv_k)=\{c_{k,1},c_{k,2},\dots ,c_{k,n}\}
\end{split}}
\end{equation} subject to the $W_i$-constraint $W_i\langle \psi(u_k),\psi(u_kv_k),\psi(v_k)\rangle=0$ for some $i\in [1,A]$. Let $\lambda$ and $\gamma$ be constants, there are the following $(abc)$-magic set-constraints:

\begin{asparaenum}[\textbf{\textrm{Set}}-1. ]
\item \label{4set-magic:uniform-edge-magic} Each $j\in [1,n]$ holds the edge-magic constraint $a_{k,j}+b_{k,j}+c_{k,j}=\lambda$ true, denoted as
\begin{equation}\label{eqa:555555}
\psi(w_k)[+]\psi(w_kz_k)[+]\psi(z_k)=\lambda
\end{equation}
\item \label{4set-magic:uniform-edge-difference} Each $j\in [1,n]$ holds the edge-difference constraint $c_{k,j}+|a_{k,j}-b_{k,j}|=\lambda$ true, denoted as
\begin{equation}\label{eqa:555555}
\psi(w_kz_k)[+]|\psi(w_k)[-]\psi(z_k)|=\lambda
\end{equation}
\item \label{4set-magic:uniform-graceful-difference} Each $j\in [1,n]$ holds the graceful-difference constraint $\big ||a_{k,j}-b_{k,j}|-c_{k,j}\big |=\lambda$ true, denoted as
\begin{equation}\label{eqa:555555}
\big ||\psi(w_k)[-]\psi(z_k)|[-]\psi(w_kz_k)\big |=\lambda
\end{equation}
\item \label{4set-magic:uniform-felicitous-difference} Each $j\in [1,n]$ holds the felicitous-difference constraint $|a_{k,j}+b_{k,j}-c_{k,j}|=\lambda$ true, denoted as
\begin{equation}\label{eqa:555555}
|\psi(w_k)[+]\psi(z_k)[-]\psi(w_kz_k)|=\lambda
\end{equation}
\item \label{4set-magic:weak-edge-magic} Some $r\in [1,n]$ holds the edge-magic constraint $a_{k,r}+b_{k,r}+c_{k,r}=\gamma$ true, but not all, denoted as $\partial_r \langle \psi(w_k)[+]\psi(w_kz_k)[+]\psi(z_k)=\gamma \rangle$.
\item \label{4set-magic:weak-edge-difference} Some $s\in [1,n]$ holds the edge-difference constraint $c_{k,s}+|a_{k,s}-b_{k,s}|=\gamma$ true, but not all, denoted as $\partial_s \langle \psi(w_kz_k)[+]|\psi(w_k)[-]\psi(z_k)|=\gamma \rangle$.
\item \label{4set-magic:weak-graceful-difference} Some $t\in [1,n]$ holds the graceful-difference constraint $\big ||a_{k,t}-b_{k,t}|-c_{k,t}\big |=\gamma$ true, but not all, denoted as $\partial_t \langle \big ||\psi(w_k)[-]\psi(z_k)|[-]\psi(w_kz_k)\big |=\gamma \rangle$.
\item \label{4set-magic:weak-felicitous-difference} Some $d\in [1,n]$ holds the felicitous-difference constraint $|a_{k,d}+b_{k,d}-c_{k,d}|=\gamma$ true, but not all, denoted as $\partial_d \langle |\psi(w_k)[+]\psi(z_k)[-]\psi(w_kz_k)|=\gamma \rangle$.
\end{asparaenum}
\textbf{We call the total set-coloring $\psi$}
\begin{asparaenum}[\textbf{\textrm{Setabc}}-1. ]
\item a \emph{component edge-magic total set-coloring} if it holds Set-\ref{4set-magic:uniform-edge-magic} true.
\item a \emph{component edge-difference total set-coloring} if it holds Set-\ref{4set-magic:uniform-edge-difference} true.
\item a \emph{component graceful-difference total set-coloring} if it holds Set-\ref{4set-magic:uniform-graceful-difference} true.
\item a \emph{component felicitous-difference total set-coloring} if it holds Set-\ref{4set-magic:uniform-felicitous-difference} true.
\item a \emph{weak-component edge-magic total set-coloring} if it holds Set-\ref{4set-magic:weak-edge-magic} true.
\item a \emph{weak-component edge-difference total set-coloring} if it holds Set-\ref{4set-magic:weak-edge-difference} true.
\item a \emph{weak-component graceful-difference total set-coloring} if it holds Set-\ref{4set-magic:weak-graceful-difference} true.
\item a \emph{weak-component felicitous-difference total set-coloring} if it holds Set-\ref{4set-magic:weak-felicitous-difference} true.\qqed
\end{asparaenum}
\end{defn}

\begin{rem}\label{rem:333333}
The $W$-constraint $W_i\langle \psi(u_k),\psi(u_kv_k),\psi(v_k)\rangle=0$ in Definition \ref{defn:n-di-set-colorings-definition} is a group of constraints like that defined in Definition \ref{defn:homoge-uniformly-string-total-colorings} (Homogeneous string-coloring), Definition \ref{defn:homoge-various-string-total-colorings} (Weak homogeneous string-coloring) and Definition \ref{defn:homoge-4-magic-string-colorings} (Magic-type homogeneous string-coloring).

Moreover, we can set the colors of vertices and edges as
\begin{equation}\label{eqa:different-dimendion-set-coloringss}
{
\begin{split}
&\psi(u_k)=\{a_{k,1},a_{k,2},\dots ,a_{k,n(k,r)}\},~\psi(v_k)=\{b_{k,1},b_{k,2},\dots ,b_{k,n(k,s)}\},\\
&\psi(u_kv_k)=\{c_{k,1},c_{k,2},\dots ,c_{k,n(k,t)}\}
\end{split}}
\end{equation} for each edge $u_kv_k\in E(G)=\{u_kv_k:k\in [1,q]\}$ under a \emph{$\{W_i\}^A_{i=1}$-constraint total set-coloring} $\psi : V(G)\cup E(G)\rightarrow \textbf{\textrm{S}}_{et}(\leq n)$ of a $(p,q)$-graph $G$. We modify the conditions of Definition \ref{defn:n-di-set-colorings-definition} slightly, and then get the same set-colorings defined in Definition \ref{defn:n-di-set-colorings-definition}.

For example, we set: If each number $c_{k,j}\in \psi(u_kv_k)$ corresponds to some $a_{k,r}\in \psi(u_k)$ and some $b_{k,s}\in \psi(v_k)$ holding the edge-magic constraint $a_{k,r}+b_{k,s}+c_{k,j}=\lambda$ true; each number $a_{k,r}\in \psi(u_k)$ corresponds to some $c_{k,j}\in \psi(u_kv_k)$ and some $b_{k,s}\in \psi(v_k)$ holding the edge-magic constraint $a_{k,r}+b_{k,s}+c_{k,j}=\lambda$ true; and each number $b_{k,s}\in \psi(v_k)$ corresponds to some $a_{k,r}\in \psi(u_k)$ and some $c_{k,j}\in \psi(u_kv_k)$ holding the edge-magic constraint $a_{k,r}+b_{k,s}+c_{k,j}=\lambda$ true. Then we call the total set-coloring $\psi$ \emph{component edge-magic total set-coloring}.

The set-colorings defined in Definition \ref{defn:n-di-set-colorings-definition} can be related with the intersected-graphs of hypergraphs as the set $\textbf{\textrm{S}}_{et}(\leq n)$ appeared in Definition \ref{defn:n-di-set-colorings-definition} is a \emph{hyperedge set} $\mathcal{E}$ holding $\Lambda=\bigcup_{e\in \mathcal{E}}e$, where $\Lambda$ is a set of finite numbers.\paralled
\end{rem}

\begin{prop}\label{prop:99999}
$^*$ For a fixed set $U$, there are more groups of different sets $S_a,S_b,S_c$ hold

(i) The \emph{set-edge-magic constraint} $S_a\cup S_b\cup S_c=U$.

(ii) The \emph{set-edge-difference constraint} $S_c\cup (S_a\setminus S_b)=U$.

(iii) The \emph{set-felicitous-difference constraint} $(S_a\cup S_b)\setminus S_c=U$.

(iv) The \emph{set-graceful-difference constraint} $(S_a\setminus S_b)\setminus S_c=U$.
\end{prop}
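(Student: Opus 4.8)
The plan is to treat each of the four set equations as a purely \emph{element-wise} (local) condition and to reduce the existence question to an independent per-element count, exactly in the spirit of the integer-partition results of Theorem~\ref{thm:partition-integer-edge-magic-cons} through Theorem~\ref{thm:graceful-difference-integer-partition}, where union plays the role of addition and set difference the role of subtraction. First I would fix an ambient universe $V$ with $U\subseteq V$; since the sets in question naturally live inside a power set such as $[0,M]^2$ (cf. Definition~\ref{defn:together-definition-set-labelings}), it is harmless to take $V$ to be a finite superset of $U$, and I will require $S_a,S_b,S_c\subseteq V$. For a single element $x\in V$ its membership is encoded by the Boolean triple $(a,b,c)=(\mathbf 1[x\in S_a],\mathbf 1[x\in S_b],\mathbf 1[x\in S_c])\in\{0,1\}^3$, and each of the four constraints asserts that a fixed Boolean function of $(a,b,c)$ equals $\mathbf 1[x\in U]$. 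Because the admissible patterns may be chosen independently over distinct elements, the number of admissible triples $(S_a,S_b,S_c)$ factors as $|P_{\mathrm{in}}|^{|U|}\cdot|P_{\mathrm{out}}|^{|V\setminus U|}$, where $P_{\mathrm{in}}$ (resp. $P_{\mathrm{out}}$) is the set of patterns on which the function takes value $1$ (resp. $0$).

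The key step is then a routine truth-table analysis of the four functions. For (i) the function is $a\vee b\vee c$, giving $|P_{\mathrm{in}}|=7$ and $|P_{\mathrm{out}}|=1$. For (ii), rewriting $S_c\cup(S_a\setminus S_b)$ as $c\vee(a\wedge\neg b)$ yields $|P_{\mathrm{in}}|=5$ and $|P_{\mathrm{out}}|=3$. For (iii), $(S_a\cup S_b)\setminus S_c$ corresponds to $(a\vee b)\wedge\neg c$, so $|P_{\mathrm{in}}|=3$ and $|P_{\mathrm{out}}|=5$. For (iv), using $(S_a\setminus S_b)\setminus S_c=S_a\setminus(S_b\cup S_c)$, the function is $a\wedge\neg b\wedge\neg c$, whence $|P_{\mathrm{in}}|=1$ and $|P_{\mathrm{out}}|=7$. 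Substituting into the product formula immediately exhibits $7^{|U|}$, $5^{|U|}3^{|V\setminus U|}$, $3^{|U|}5^{|V\setminus U|}$ and $7^{|V\setminus U|}$ admissible triples respectively, each far larger than $1$, which proves that there are indeed many groups $S_a,S_b,S_c$; moreover the abundance of patterns lets one select, in cases (i)--(iii), patterns forcing $S_a,S_b,S_c$ to be pairwise distinct, matching the word \emph{different} in the statement.

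The main obstacle is the degenerate constraint (iv). Unlike the other three, its function $a\wedge\neg b\wedge\neg c$ admits only the single pattern $(1,0,0)$ on $U$, so if one insists on $V=U$ (i.e. $S_a,S_b,S_c\subseteq U$) the \emph{only} solution is $S_a=U$, $S_b=S_c=\emptyset$, and the claim fails. The resolution, and the point that must be stated carefully, is that $U$ has to be embedded in a strictly larger universe $V\supsetneq U$: the elements of $V\setminus U$ then carry $7$ free patterns each, and $7^{|V\setminus U|}>1$ whenever $V\setminus U\neq\emptyset$. I would therefore open the proof by making this ambient universe explicit (consistent with the power-set setting of Definition~\ref{defn:together-definition-set-labelings}), after which all four cases follow uniformly from the per-element factorization. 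The remaining bookkeeping — verifying the two pattern counts per case and checking that distinct sets can be realized — is routine, and I would not grind through it here.
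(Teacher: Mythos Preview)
The paper offers no proof of this proposition at all: it is simply stated (with the $^*$ marker) and immediately followed by the next definition, so there is nothing to compare your argument against. Your element-wise Boolean factorisation is correct and in fact supplies far more than the paper does --- the truth-table counts $(7,1)$, $(5,3)$, $(3,5)$, $(1,7)$ are right, and your identification of the degeneracy in case~(iv) when the ambient universe equals $U$ is a genuine observation that the paper's bare statement glosses over. Your proposal stands on its own as a complete and more careful treatment than what the paper provides.
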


\begin{defn} \label{defn:111111}
$^*$ Let $S_{set}$ be a set of sets. Suppose that a graph $G$ admits a total set-coloring $F:V(G)\cup E(G)\rightarrow S_{set}$, such that $F(u)=S_{u}$, $F(v)=S_{v}$, and $F(uv)=S_{uv}$ for each edge $uv\in E(G)$.

(i) If there is a fixed set $U$, such that each edge $uv\in E(G)$ holds the \emph{set-edge-magic constraint} $S_{u}\cup S_{v}\cup S_{uv}=U$ true, we say $F$ \emph{set-edge-magic total set-coloring}.

(ii) If there is a fixed set $U$, such that each edge $uv\in E(G)$ holds one of \emph{set-edge-difference constraints} $S_{uv}\cup (S_{u}\setminus S_{v})=U$ and $S_{uv}\cup (S_{v}\setminus S_{u})=U$ true, we say $F$ \emph{set-edge-difference total set-coloring}.

(iii) If there is a fixed set $U$, such that each edge $uv\in E(G)$ holds one of \emph{set-felicitous-difference constraints} $(S_{u}\cup S_{v})\setminus S_{uv}=U$ and $S_{uv}\setminus (S_{v}\cup S_{u})=U$ true, we say $F$ \emph{set-felicitous-difference total set-coloring}.

(iv) If there is a fixed set $U$, such that each edge $uv\in E(G)$ holds one of \emph{set-graceful-difference constraints} $(S_{u}\setminus S_{v})\setminus S_{uv}=U$, $(S_{v}\setminus S_{u})\setminus S_{uv}=U$, $S_{uv}\setminus (S_{u}\setminus S_{v})=U$ and $S_{uv}\setminus (S_{v}\setminus S_{u})=U$ true, we say $F$ \emph{set-graceful-difference total set-coloring}.\qqed
\end{defn}

\begin{thm}\label{thm:666666}
$^*$ Each connected $(p,q)$-graph $G$ admits a \emph{proper total string-coloring}
$$f:V(G)\cup E(G)\rightarrow \{a_{i}b_{i}:~a_{i},b_{i}\in [1,\Delta(G)-k]\}
$$ with $k\leq \Delta(G)-\sqrt{3[1+\Delta(G)]}$ if $\Delta(G)\geq 6$.
\end{thm}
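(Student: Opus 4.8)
The plan is to reduce the statement to a counting comparison between the number of available $2$-rank strings and the total chromatic number $\chi''(G)$. Writing $m=\Delta(G)-k$, the admissible color palette is the set $S_2=\{a_ib_i:a_i,b_i\in[1,m]\}$ of $2$-rank number-based strings, and since a string $a_ib_i$ is determined by the ordered pair $(a_i,b_i)$, we have exactly $|S_2|=m^2$ pairwise distinct strings. The hypothesis $k\le \Delta(G)-\sqrt{3[1+\Delta(G)]}$ is equivalent to $m\ge \sqrt{3[1+\Delta(G)]}$, hence to
$$m^2\ge 3\big(1+\Delta(G)\big)=3\Delta(G)+3.$$
So the first step is simply to record this arithmetic reformulation and to note that a proper total string-coloring with colors in $S_2$ is nothing but an ordinary proper total coloring whose color classes are indexed by distinct elements of $S_2$.

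Second, I would bound the total chromatic number unconditionally. Forming the total graph $T(G)$ on $V(G)\cup E(G)$, a vertex $u$ is adjacent in $T(G)$ only to its $\deg_G(u)$ neighbors and its $\deg_G(u)$ incident edges, while an edge $uv$ is adjacent only to the $\deg_G(u)+\deg_G(v)-2$ edges meeting it and to its two endpoints; in both cases the degree in $T(G)$ is at most $2\Delta(G)$. A greedy coloring of $T(G)$ then yields $\chi''(G)\le \Delta(T(G))+1\le 2\Delta(G)+1$. Combining with the first step, $m^2\ge 3\Delta(G)+3>2\Delta(G)+1\ge \chi''(G)$ for every $\Delta(G)\ge 6$ (indeed for all $\Delta(G)$, since $3\Delta+3-(2\Delta+1)=\Delta+2>0$).

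Third, I would convert a numerical proper total coloring into the desired string-coloring. Fix a proper total coloring $g:V(G)\cup E(G)\to[1,N]$ with $N=\chi''(G)\le 2\Delta(G)+1$, and choose any injection $\phi:[1,N]\to S_2$, which exists because $|S_2|=m^2\ge N$; concretely one may write $c-1=(a-1)m+(b-1)$ and set $\phi(c)=ab$. Define $f=\phi\circ g$. Since $\phi$ is injective and $g$ is a proper total coloring, adjacent vertices, adjacent edges, and incident vertex-edge pairs all receive distinct strings under $f$, so $f$ is a proper total string-coloring into $S_2$, as required. The role of $\Delta(G)\ge 6$ is only to guarantee that the bound permits a genuine reduction $k\ge 1$ of the alphabet below $\Delta(G)$; solving $(\Delta-1)^2\ge 3(\Delta+1)$, i.e.\ $\Delta^2-5\Delta-2\ge 0$, shows this first occurs at $\Delta=6$.

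On difficulty, there is no deep obstruction here, and the crux is exactly the quadratic growth of $|S_2|=m^2$, which lets a reduced alphabet of size $m<\Delta(G)$ still furnish more colors than any proper total coloring needs. The only point requiring care is that I deliberately use the unconditional greedy bound $\chi''(G)\le 2\Delta(G)+1$ rather than the sharper $\Delta(G)+2$, because the latter is the still-open Total Coloring Conjecture (cf.\ Conjecture \ref{conj:c4-Weifan-Wang-2002} and the Behzad--Vizing conjecture); the factor-$3$ slack built into the hypothesis $3\Delta(G)+3$ comfortably absorbs this weaker estimate, so the whole argument stays unconditional.
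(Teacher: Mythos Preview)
Your proof is correct and takes a somewhat different route from the paper's. The paper does not invoke the total chromatic number directly; instead it splits the palette of $M^2$ strings (with $M=\Delta(G)-k$) into two disjoint blocks: it first uses $\Delta(G)+1$ strings for a proper edge-coloring via Vizing's theorem $\chi'(G)\le\Delta(G)+1$, and then argues that the remaining $M^2-(\Delta(G)+1)\ge 2\Delta(G)+2$ strings suffice for a proper vertex-coloring (invoking the bound $\chi(G)\le\Delta(G)+1+K(G)$, which is in any case implied by the trivial greedy bound $\chi(G)\le\Delta(G)+1$). Because the vertex and edge color sets are disjoint, incident vertex--edge pairs automatically receive distinct colors, so the combination is a proper total coloring. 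Your approach instead bounds $\chi''(G)\le 2\Delta(G)+1$ in one step via the total graph and then injects $[1,\chi''(G)]$ into $S_2$. Both arguments exploit the same quadratic-versus-linear gap $m^2\ge 3\Delta(G)+3>2\Delta(G)+2$; yours is a touch more direct in addressing total coloring as a single object, while the paper's decomposition makes the use of Vizing's theorem explicit and shows that one can even keep the vertex and edge palettes disjoint.
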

\begin{proof} As known, the chromatic index $\chi\,'(G)\leq \Delta(G)+1$ (Vadim G. Vizing, 1964) and the chromatic number $\chi(G)\leq \Delta(G)+1+K(G)$ (Bruce Reed, 1998). Let $M=\Delta(G)-k$, so we have $M(M-1)+M=M^2$ different number-based strings $a_{i}b_{i}$, and then we take $\Delta(G)+1$ different number-based strings $a_{i}b_{i}$ with $i\in [1,\Delta(G)+1]$ for coloring properly the edges of $G$ holding $\chi\,'(G)\leq \Delta(G)+1$ true, the remainder $N^*$ different number-based strings are enough for properly coloring the vertices of $G$, by $k\leq \Delta(G)-\sqrt{3[1+\Delta(G)]}$, we have
$$N^*=M^2-\Delta(G)-1\geq 2\Delta(G)+2\geq \Delta(G)+1+K(G)\geq \chi(G)
$$ where $K(G)$ is the number of vertices of a largest clique of the graph $G$, we are done.
\end{proof}

There are several set-colorings and set-labelings introduced in \cite{Yao-Ma-arXiv-2201-13354v1}.

\begin{defn}\label{defn:graceful-intersection}
\cite{Yao-Zhang-Sun-Mu-Sun-Wang-Wang-Ma-Su-Yang-Yang-Zhang-2018arXiv} A $(p,q)$-graph $G$ admits a set-labeling $F:V(G)\rightarrow [1,q]^2$~(resp. $[1,2q-1]^2)$, and induces an edge set-color $F(uv)=F(u)\cap F(v)$ for each edge $uv \in E(G)$. If we can select a \emph{representative} $a_{uv}\in F(uv)$ for each edge color set $F(uv)$ such that $\{a_{uv}:~uv\in E(G)\}=[1,q]$ (resp. $[1,2q-1]^o$), then $F$ is called a \emph{graceful-intersection (resp. an odd-graceful-intersection) total set-labeling} of $G$.\qqed
\end{defn}

\begin{thm}\label{thm:graceful-total-set-labelings}
\cite{Yao-Zhang-Sun-Mu-Sun-Wang-Wang-Ma-Su-Yang-Yang-Zhang-2018arXiv} Each tree $T$ admits a \emph{graceful-intersection (resp. an odd-graceful-intersection) total set-labeling}.
\end{thm}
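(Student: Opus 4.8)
The plan is to exhibit, for every tree $T$ with $q=|E(T)|\geq 1$ edges, an explicit set-labeling of the form described in Definition \ref{defn:graceful-intersection}, avoiding any appeal to the (still open) Graceful Tree Conjecture. The central idea is to reverse the usual order of labeling: instead of labeling vertices and reading off edge values, I would first label the edges bijectively and then define each vertex set to be the collection of labels of its incident edges. Concretely, fix an arbitrary bijection $g:E(T)\to[1,q]$ and set $F(u)=\{g(e): e\in E(T),\ e\text{ is incident to }u\}$ for each $u\in V(T)$. Since every vertex of a tree on at least two vertices has degree at least $1$, each $F(u)$ is a non-empty subset of $[1,q]$, so $F$ is a legitimate map into the power set $[1,q]^2$.

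First I would verify that the induced edge set-color is always a singleton. For an edge $uv\in E(T)$ the induced color is $F(u)\cap F(v)=\{g(e): e\text{ incident to both }u\text{ and }v\}$, using that $g$ is injective. Because a tree is a simple graph, the only edge incident to both endpoints of $uv$ is $uv$ itself; hence $F(u)\cap F(v)=\{g(uv)\}$. This is the key step, and it is precisely where the tree (more generally, simple-graph) structure is used. The choice of a representative is then forced: $a_{uv}=g(uv)$, and since $g$ is a bijection onto $[1,q]$ the representative set is $\{a_{uv}: uv\in E(T)\}=[1,q]$, which is exactly the graceful-intersection requirement.

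For the odd-graceful-intersection statement I would repeat the construction verbatim, only replacing the target of $g$ by the odd-integer set $[1,2q-1]^o=\{1,3,\dots,2q-1\}$, which also has exactly $q$ elements, so a bijection $g:E(T)\to[1,2q-1]^o$ exists; the same singleton-intersection argument yields the representative set $[1,2q-1]^o$. The remaining routine points are the degenerate base case $T=K_2$ (where $q=1$, $F$ sends both vertices to $\{1\}$, and the single representative is $1=[1,1]$) and a brief check that this construction produces distinct sets on adjacent vertices whenever $q\geq 2$, since in a tree at least one endpoint of any edge has degree at least $2$.

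I do not expect a serious obstacle; the proof is short once the order of labeling is inverted. The only genuine care needed is definitional, namely confirming that Definition \ref{defn:graceful-intersection} asks only for a set-labeling together with a valid system of representatives, and not for a vertex-distinguishing labeling, which $K_2$ could never satisfy inside $[1,1]^2$. I would also remark that nothing in the argument uses acyclicity beyond simplicity, so the same edge-incidence construction shows that in fact \emph{every} simple $(p,q)$-graph admits a graceful-intersection (and odd-graceful-intersection) total set-labeling, with trees being a special case.
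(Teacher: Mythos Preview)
Your construction is correct: assigning to each vertex the set of labels of its incident edges under any bijection $g:E(T)\to[1,q]$ yields $F(u)\cap F(v)=\{g(uv)\}$ for every edge, so the representative is forced and the system realizes $[1,q]$ (respectively $[1,2q-1]^o$). The paper does not actually supply a proof of Theorem~\ref{thm:graceful-total-set-labelings}; the result is cited from \cite{Yao-Zhang-Sun-Mu-Sun-Wang-Wang-Ma-Su-Yang-Yang-Zhang-2018arXiv} and only illustrated in Fig.~\ref{fig:0-graceful-intersection}(a). The closest argument in the paper is the inductive proof of the companion Theorem~\ref{thm:rainbow-total-set-labelings}, which deletes a leaf, applies the inductive hypothesis, and then relabels the leaf and its neighbour with the two new largest rainbow sets $[1,q]$ and $[1,q+1]$.

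Your route is genuinely different from that inductive template: it is a one-shot construction, requires no induction, and --- as you note --- extends verbatim to every simple $(p,q)$-graph, since in a simple graph the only edge incident to both endpoints of $uv$ is $uv$ itself. The inductive rainbow approach buys tightly structured vertex sets (nested prefixes $[1,k]$) at the cost of being tied to trees; your incidence-set approach sacrifices that structure but gains generality. Your caveat about $K_2$ is the right one: for $q=1$ both endpoints receive $\{1\}$, so the statement there must rest on the reading of ``set-labeling'' that does not require vertex-distinguishing. For $q\ge 2$ your construction is in fact fully vertex-distinguishing on a tree (not merely on adjacent pairs), since two distinct vertices with identical incident-edge sets would have to be the two leaves of a common single edge, forcing $q=1$.
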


We define a \emph{regular rainbow set-sequence} $\{R_k\}^{q}_1$ as: $R_k=[1,k]$ with $k\in [1,q]$, where $[1,1]=\{1\}$.

\begin{thm}\label{thm:rainbow-total-set-labelings}
\cite{Yao-Zhang-Sun-Mu-Sun-Wang-Wang-Ma-Su-Yang-Yang-Zhang-2018arXiv} Each tree $T$ of $q$ edges admits a \emph{regular rainbow intersection total set-labeling} based on a \emph{regular rainbow set-sequence} $\{[1,k]\}^{q}_{k=1}$.
\end{thm}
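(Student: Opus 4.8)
The plan is to reduce this set-theoretic statement to a purely combinatorial assignment problem, and then settle that problem by a leaf-removal induction on the number of edges $q$, with a uniform ``shift'' trick as the decisive device.

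First I would unpack the definition. Specializing Definition \ref{defn:graceful-intersection} to the chain $\{R_k=[1,k]\}^q_{k=1}$, a regular rainbow intersection total set-labeling amounts to a map $F:V(T)\to\{[1,k]:k\in[1,q]\}$, that is, a level function $g:V(T)\to[1,q]$ with $F(v)=[1,g(v)]$. Because the sets $R_k$ form a chain, the induced edge set collapses to $F(uv)=F(u)\cap F(v)=[1,\min\{g(u),g(v)\}]$. Hence a number $a_{uv}$ is a legal representative of $F(uv)$ exactly when $1\le a_{uv}\le\min\{g(u),g(v)\}$, and the rainbow requirement $\{a_{uv}:uv\in E(T)\}=[1,q]$ says precisely that the representatives constitute a bijection $a:E(T)\to[1,q]$. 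So the theorem is equivalent to the following: there exist a level function $g:V(T)\to[1,q]$ and a bijection $a:E(T)\to[1,q]$ such that $a(uv)\le\min\{g(u),g(v)\}$ for every edge $uv\in E(T)$.

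Next I would induct on $q$. In the base case $q=1$ the single edge takes $g\equiv 1$ and $a=1$, realizing $R_1=\{1\}$. For the inductive step, choose a leaf $x$ with neighbor $y$ (every tree with at least one edge has a leaf), delete $x$ to obtain a tree $T'$ with $q-1$ edges, and apply the inductive hypothesis to get $g'$ and $a'$ on $T'$ valued in $[1,q-1]$. I then extend them to $T$ by setting $g(v)=g'(v)+1$ on the old vertices, $g(x)=1$, and $a(e)=a'(e)+1$ on the old edges, $a(xy)=1$. The three bookkeeping checks are: the levels land in $[1,q]$ (old ones in $[2,q]$, the new one equal to $1$); the map $a$ is a bijection onto $[1,q]$ (the old edges fill $[2,q]$ and the pendant edge supplies $1$); and the defining inequality is preserved, since the uniform shift gives $a(uv)=a'(uv)+1\le\min\{g'(u),g'(v)\}+1=\min\{g(u),g(v)\}$ for each old edge, while the pendant edge satisfies $a(xy)=1\le\min\{g(x),g(y)\}$ automatically.

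The only genuine idea here, and the step I expect to be the obstacle, is the \emph{direction} of the shift. The naive extension that appends the new pendant edge as the largest color $q$ fails, because the older endpoint $y$ already carries a small level $g'(y)\le q-1<q$ and cannot support a representative as large as $q$. The remedy is to assign the new edge the smallest color $1$ and raise every old color and every old level by one in lockstep; this lifts all the $\min$ values exactly in step with the lifted representatives, so every previously valid inequality remains valid for free, and the slack freshly opened at the bottom exactly accommodates the pendant edge. Everything else is routine, and the argument runs in parallel with the companion Theorem \ref{thm:graceful-total-set-labelings}, restricted from the full power set to the regular chain $\{[1,k]\}^q_{k=1}$.
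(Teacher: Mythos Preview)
Your argument is correct and, like the paper, proceeds by leaf-removal induction; the only difference is the direction of the adjustment. The paper keeps all old vertex labels fixed, assigns the new leaf $x$ the set $R_q=[1,q]$, relabels its neighbour $y$ upward to $R_{q+1}=[1,q+1]$, and gives the new edge the top representative $q$; you instead shift every old level and every old representative up by one and give the new leaf and edge the bottom value $1$. These are dual: the paper appends at the top and pays by enlarging one neighbour's label (in fact stepping outside the nominal range $\{[1,k]\}_{k=1}^{q}$), while your prepend-at-the-bottom variant keeps all vertex levels inside $[1,q]$ at the cost of a uniform shift and of allowing repeated vertex labels (unavoidable with $q+1$ vertices and only $q$ sets). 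Your explanation of why the naive ``append $q$'' fails and why the shift repairs it is exactly the right diagnosis, and the reformulation via the level function $g$ makes the verification cleaner than in the paper.
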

\begin{proof} Suppose $x$ is a leaf of a tree $T$ of $q$ edges, so $T-x$ is a tree of $(q-1)$ edges. Assume that $T-x$ admits a regular rainbow set-sequence $\{R_k\}^{q-1}_1$ total set-labeling $f$. Let $y$ be adjacent with $x$ in $T$. We define a labeling $g$ of $T$ in this way: $g(w)=f(w)$ for $w\in V(T)\setminus \{y,x\}$, $g(y)=R_{q+1}=[1,q+1]$ and $g(x)=R_q=[1,q]$. Therefore, we have $g(u_iv_j)=g(u_i)\cap g(v_j)=[1,i]\cap [1,j]$ for $u_iv_j\in E(T)\setminus \{xy\}$, and $g(xy)=g(x)\cap g(y)=[1,q]$, and $g(s)\neq g(t)$ for any pair of vertices $s$ and $t$. We claim that $g$ is a regular rainbow intersection total set-labeling of $T$ by the hypothesis of induction.
\end{proof}

\begin{rem}\label{rem:333333}
Each tree admits a regular odd-rainbow intersection total set-labeling based on a \emph{regular odd-rainbow set-sequence} $\{R_k\}^{q}_1$ defined as: $R_k=[1,2k-1]$ with $k\in [1,q]$, where $[1,1]=\{1\}$. Moreover, we can define a \emph{regular Fibonacci-rainbow set-sequence} $\{R_k\}^{q}_1$ by $R_1=[1,1]$, $R_2=[1,1]$, and $R_{k+1}=R_{k-1}\cup R_{k}$ with $k\in [2,q]$; or a $\tau$-term Fibonacci-rainbow set-sequence $\{\tau,R_i\}^{q}_1$ holds: $R_i=[1,a_i]$ with $a_i>1$ and $i\in [1,q]$, and $R_k=\sum ^{k-1}_{i=k-\tau}R_i$ with $k>\tau$ \cite{Ma-Wang-Wang-Yao-Theoretical-Computer-Science-2018}. It may be an interesting research on various rainbow set-sequences for non-tree graphs.\paralled
\end{rem}

\begin{example}\label{exa:8888888888}
Fig.\ref{fig:0-graceful-intersection} is for understanding Theorem \ref{thm:rainbow-total-set-labelings}, and we have a number-based string
$${
\begin{split}
s(T)=&\underline{1234}1112121231231234\underline{123456789}\underline{1234}\underline{123456789}123456789\underline{12345678910}\\
&12345678910\underline{123456789101112}123456789101112\underline{1234567891011}1234567891011\\
&123456789101112\underline{12345678910111213}12345678\underline{12345678}\underline{1234567}1234567\\
&123456\underline{123456}12345\underline{12345}.
\end{split}}$$ made by the totally-colored tree shown in Fig.\ref{fig:0-graceful-intersection} (b). \qqed
\end{example}

\begin{figure}[h]
\centering
\includegraphics[width=16cm]{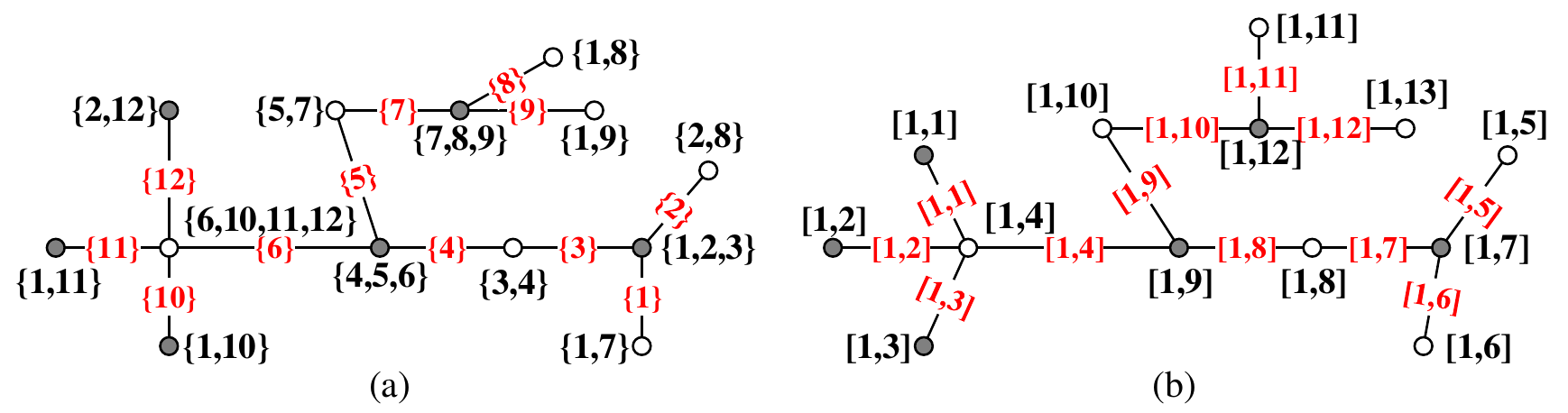}\\
\caption{\label{fig:0-graceful-intersection}{\small For illustrating Theorem \ref{thm:graceful-total-set-labelings} and Theorem \ref{thm:rainbow-total-set-labelings}: Left tree (a) admits a graceful-intersection total set-labeling, and Right tree (b) admits a regular rainbow intersection total set-labeling, cited from \cite{Yao-Zhang-Sun-Mu-Sun-Wang-Wang-Ma-Su-Yang-Yang-Zhang-2018arXiv}.}}
\end{figure}

\subsection{Topcode-matrices with non-number elements}

\subsubsection{Topcode-matrices with the elements of strings, vectors and sets}

\begin{defn} \label{defn:stringic-topcode-matrix}
\cite{Yao-Su-Ma-Wang-Yang-arXiv-2202-03993v1} Let $S_{x,i},S_{e,j}$ and $S_{y,k}$ be number-based strings with $1\leq i,j,k\leq q$, and put them into a set $G_{string}$. A \emph{string-type Topcode-matrix} $S_{code}(G_{string})$ is defined by
\begin{equation}\label{eqa:stringic-topcode-matrix}
\centering
{
\begin{split}
S_{code}(G_{string})= \left(
\begin{array}{ccccc}
S_{x,1} & S_{x,2} & \cdots & S_{x,q}\\
S_{e,1} & S_{e,2} & \cdots & S_{e,q}\\
S_{y,1} & S_{y,2} & \cdots & S_{y,q}
\end{array}
\right)=
\left(\begin{array}{c}
S_X\\
S_E\\
S_Y
\end{array} \right)=(S_X,~S_E,~S_Y)^{T}
\end{split}}
\end{equation} and $S_{x,i}$ and $S_{y,i}$ are called the \emph{ends} of $S_{e,i}$, and we call
$$S_X=(S_{x,1}, S_{x,2},\dots ,S_{x,q}),~S_E=(S_{e,1}, S_{e,2},\dots,S_{e,q}),~S_Y=(S_{y,1}, S_{y,2},\dots ,S_{y,q})
$$ \emph{string vectors}. Moreover the string-type Topcode-matrix $S_{code}$ is \emph{constraint valued} if there is a group of constraints $F_1,F_2,\dots ,F_n$ such that $S_{e,i}=F_k\langle S_{x,i},S_{y,i}\rangle $ for some $k\in [1,n]$ and each $i\in [1,q]$.\qqed
\end{defn}

\begin{rem}\label{rem:333333}
About the string-type Topcode-matrix $S_{code}=(S_X,~S_E,~S_Y)^{T}$ defined in Definition \ref{defn:stringic-topcode-matrix}, we have

(i) Each \emph{graph-based string} $D_k(3q)=H_{k,1}H_{k,2}\cdots H_{k,3q}$ with $k\in [1,(3q)!]$ generated from a graphic Topcode-matrix $G_{code}$ defined in Definition \ref{defn:graphic-topcode-matrix} holds each colored graph $H_{k,i}\in \{H_{x,i},H_{e,i}$, $H_{y,i}:i\in [1,q]\}$ and $H_{k,i}\neq H_{k,j}$ for $i\neq j$. Since the Topcode-matrix $T_{code}(H_{k,i})$ of each $H_{k,i}$ produces a number-based string $s(n_{k,i})$ with length $n_{k,i}$, then we get a \emph{number-based hyper-string} $D_k=s(n_{k,1})s(n_{k,2})\cdots s(n_{k,3q})$ with length $L_{\textrm{ength}}(D_k)=\sum ^{3q}_{i=1} n_{k,i}$ for $k\in [1,(3q)!]$.

(ii) We have compressed a number-based hyper-string $D_k$ with longer bytes into a graph-based string $D_k(3q)$ with smaller bytes, more or less like the effect of Hash function.

(iii) The elements of a string-type Topcode-matrix $S_{code}$ defined in Definition \ref{defn:stringic-topcode-matrix} may belong to an \emph{every-zero string group} $\{S(G);\oplus \ominus\}$.\paralled
\end{rem}

By Definition \ref{defn:n-di-vector-colorings-definition}, a $(p,q)$-graph $G$ admits a \emph{$W$-constraint total vector-coloring} $\overrightarrow{\pi} : V(G)\cup E(G)\rightarrow \textbf{\textrm{V}}_{ector}(n)$, such that each edge $u_kv_k\in E(G)=\{u_kv_k:k\in [1,q]\}$ holds
\begin{equation}\label{eqa:33333333333}
\overrightarrow{\pi}(u_k)=(a_{k,1},a_{k,2},\dots ,a_{k,n}),~\overrightarrow{\pi}(v_k)=(b_{k,1},b_{k,2},\dots ,b_{k,n}),~\overrightarrow{\pi}(u_kv_k)=(c_{k,1},c_{k,2},\dots ,c_{k,n})
\end{equation} subject to the $W$-constraint $\overrightarrow{\pi}(u_kv_k)=W\langle \overrightarrow{\pi}(u_k),\overrightarrow{\pi}(v_k)\rangle$. We define the \emph{vector-type Topcode-matrix} of the $(p,q)$-graph $G$ as
\begin{equation}\label{eqa:vector-type-topcode-matrix}
\centering
{
\begin{split}
T_{code}(G,\overrightarrow{\pi})= \left(
\begin{array}{ccccc}
\overrightarrow{\pi}(u_1) & \overrightarrow{\pi}(u_2) & \cdots & \overrightarrow{\pi}(u_q)\\
\overrightarrow{\pi}(u_1v_1) & \overrightarrow{\pi}(u_2v_2) & \cdots & \overrightarrow{\pi}(u_qv_q)\\
\overrightarrow{\pi}(v_1) & \overrightarrow{\pi}(v_2) & \cdots & \overrightarrow{\pi}(v_q)
\end{array}
\right)=
\left(\begin{array}{c}
\overrightarrow{X}\\
\overrightarrow{E}\\
\overrightarrow{Y}
\end{array} \right)=(\overrightarrow{X},\overrightarrow{E},\overrightarrow{Y})^{T}
\end{split}}
\end{equation} with three vector-vectors $\overrightarrow{X}=(\overrightarrow{\pi}(u_1),\overrightarrow{\pi}(u_2),\cdots ,\overrightarrow{\pi}(u_q))$, $\overrightarrow{E}=(\overrightarrow{\pi}(u_1v_1), \overrightarrow{\pi}(u_2v_2), \cdots ,\overrightarrow{\pi}(u_qv_q))$, and $\overrightarrow{Y}=(\overrightarrow{\pi}(v_1),\overrightarrow{\pi}(v_2),\cdots ,\overrightarrow{\pi}(v_q))$.

By Definition \ref{defn:n-di-set-colorings-definition}, a $(p,q)$-graph $G$ admits a \emph{$\{W_i\}^A_{i=1}$-constraint total set-coloring} $\psi : V(G)\cup E(G)\rightarrow \textbf{\textrm{S}}_{et}(\leq n)$, such that each edge $u_kv_k\in E(G)=\{u_kv_k:k\in [1,q]\}$ holds
\begin{equation}\label{eqa:444444444444}
{
\begin{split}
&\psi(u_k)=\{a_{k,1},a_{k,2},\dots ,a_{k,m_r}\},~\psi(v_k)=\{b_{k,1},b_{k,2},\dots ,b_{k,m_s}\},\\
&\psi(u_kv_k)=\{c_{k,1},c_{k,2},\dots ,c_{k,m_t}\}
\end{split}}
\end{equation} subject to the $W_i$-constraint $W_i\langle \psi(u_k),\psi(u_kv_k),\psi(v_k)\rangle=0$ for some $i\in [1,A]$. We define the \emph{set-type Topcode-matrix} of the $(p,q)$-graph $G$ as follows
\begin{equation}\label{eqa:set-type-topcode-matrix}
\centering
{
\begin{split}
T_{code}(G,\psi)= \left(
\begin{array}{ccccc}
\psi(u_1) & \psi(u_2) & \cdots & \psi(u_q)\\
\psi(u_1v_1) & \psi(u_2v_2) & \cdots & \psi(u_qv_q)\\
\psi(v_1) & \psi(v_2) & \cdots & \psi(v_q)
\end{array}
\right)=
\left(\begin{array}{c}
X_{set}\\
E_{set}\\
Y_{set}
\end{array} \right)=(X_{set},E_{set},Y_{set})^{T}
\end{split}}
\end{equation} with three set-vectors $X_{set}=(\psi(u_1), \psi(u_2), \cdots , \psi(u_q))$, $E_{set}=(\psi(u_1v_1), \psi(u_2v_2), \cdots , \psi(u_qv_q))$, and $Y_{set}=(\psi(v_1), \psi(v_2), \cdots , \psi(v_q))$.

\subsubsection{Non-number-type Topcode-matrices with parameters}

For a set $S\,^i_X=\{a\,^i_1,a\,^i_2,\dots , a\,^i_{s(i)}\}$ with integer $s(i)\geq 1$ and $i\in [1,n]$ and integers $k,d\geq 0$, we define the following two operations:
\begin{equation}\label{eqa:set-type-parameterized-operation11}
{
\begin{split}
&d\cdot S\,^i_X =\{d\cdot a\,^i_1,d\cdot a\,^i_2,\dots , d\cdot a\,^i_{s(i)}\}\\
&k\cdot I^*+d\cdot S\,^i_X=\{k+d\cdot a\,^i_1,~k+d\cdot a\,^i_2,~\dots ,~ k+d\cdot a\,^i_{s(i)}\}
\end{split}}
\end{equation}
where $I^*=(1,1,\dots ,1)$ is a unite vector.
\begin{defn} \label{defn:normai-kd-type-set-coloring}
\cite{Bing-Yao-arXiv:2207-03381} \textbf{The set-type parameterized Topcode-matrix}. A connected bipartite $(p,q)$-graph $G$ has its own vertex set bipartition $V(G)=X\cup Y$ and admits a $W$-constraint set-coloring $F$, then $G$ has its own set-type Topcode-matrix $T_{code}(G,F)=(X_S,E_S,Y_S)^T$ with $X_S\cap Y_S=\emptyset$, where each element in three \emph{set vectors} $X_S,E_S$ and $Y_S$ is a \emph{set}. According to Eq.(\ref{eqa:set-type-parameterized-operation11}), we get a \emph{set-type parameterized Topcode-matrix}
\begin{equation}\label{eqa:66666666666}
\centering
{
\begin{split}
P^{set}_{ara}(G,\Phi)=k\cdot I\,^0+d\cdot T_{code}(G,F)=(d\cdot X_S,~k\cdot I^*+d\cdot E_S,~k\cdot I^*+d\cdot Y_S)^T
\end{split}}
\end{equation}which defines a \emph{$(k,d)$-type set-coloring} $\Phi$ of the connected bipartite $(p,q)$-graph $G$.\qqed
\end{defn}

By a string-type Topcode-matrix defined in Definition \ref{defn:stringic-topcode-matrix}, we have a \emph{string-type parameterized Topcode-matrix}
\begin{equation}\label{eqa:66666666666}
\centering
{
\begin{split}
P^{string}_{ara}(G_{string},\Upsilon)=k\cdot I\,^0+d\cdot S_{code}(G_{string})=(d\cdot S_X,~k\cdot I^*+d\cdot S_E,~k\cdot I^*+d\cdot S_Y)^T
\end{split}}
\end{equation}

By a vector-type Topcode-matrix defined in Eq.(\ref{eqa:vector-type-topcode-matrix}), we have a \emph{vector-type parameterized Topcode-matrix}
\begin{equation}\label{eqa:66666666666}
\centering
{
\begin{split}
P^{vector}_{ara}(G,\Psi)=k\cdot I\,^0+d\cdot T_{code}(G,\overrightarrow{\pi})=(d\cdot \overrightarrow{X},~k\cdot I^*+d\cdot \overrightarrow{E},~k\cdot I^*+d\cdot \overrightarrow{Y})^T
\end{split}}
\end{equation}

\subsection{Every-zero number-based string groups}

We introduce the methods for producing every-zero number-based string groups in this subsection.

\subsubsection{GROUP-compound algorithm}

\textbf{Way-1.} From graphic-group $\{F_m(G,f);\oplus \ominus\}$ to Topcode-matrix group $\{F_m(T_{code}(G),f);\oplus \ominus\}$, and then from Topcode-matrix group $\{F_m(T_{code}(G),f);\oplus \ominus\}$ to string-group $\{F_m(S_r);\oplus \ominus\}$.

\vskip 0.2cm

\noindent \textbf{$^*$ GROUP-compound algorithm for the graphic-to-Topcode-matrix-to-string groups.}

\textbf{Step 1.1.} An \emph{every-zero graphic group} $\{F_m(G,f);\oplus \ominus\}$ is based on a $(p,q)$-graph $G$ and a graph set $F_m(G,f)=\{G_i:i\in [1,m]\}$, where $G=G_1$ admits a $W$-constraint total coloring $f_1=f$ defined on a number set, such that each graph $G_i$ with $i\in [1,m]$ holds $G_i\cong G_1$ and admits a $W$-constraint coloring $f_i$ induced by $f_i(u)=f_1(u)+i$ $(\bmod~m)$ for $u\in V(G)=V(G_i)$, as well as $G_{m+j}~(\bmod~m)=G_j$.

Notice that each $G_i$ has its own Topcode-matrix $T_{code}(G_i)=(X_i,E_i,Y_i)^T$ (refer to Definition \ref{defn:topcode-matrix-definition}), where two \emph{vertex vectors} $X_i=(f_i(x_1),f_i(x_2),\dots ,f_i(x_q))$ and $Y_i=(f_i(y_1),f_i(y_2),\dots ,f_i(y_q))$, and an \emph{edge vector} $E_i=(f_i(x_1y_1),f_i(x_2y_2),\dots ,f_i(x_qy_q))$, each edge $x_ty_t$ for $t\in [1,q]$ is colored by $f_i(x_ty_t)=\varphi(f_i(x_t),f_i(y_t))$ based on the $W$-constraint total coloring $f$ of the $(p,q)$-graph $G$.

\textbf{Step 1.2.} Each Topcode-matrix $T_{code}(G_i)$ distributes us $(3q)!$ different number-based strings $s^i_r=\beta^i_{r,1}\beta^i_{r,2}\cdots \beta^i_{r,3q}$ for $r\in [1,(3q)!]$, so we have the algorithm-$r$ with $r\in [1,(3q)!]$ in order to generate number-based strings $s^i_r$ from the Topcode-matrix $T_{code}(G_i)$ with $i\in [1,m]$. By the properties of a $W$-constraint total coloring $f$, number-based strings $s^i_r\in N^+_{\textrm{string}}(n)$.

\textbf{Step 1.3.} Since
\begin{equation}\label{eqa:555555}
G_i[\oplus \ominus_k] G_j:=G_i[\oplus ]G_j[\ominus ]G_k=G_{\lambda}\in \{F_m(G,f);\oplus \ominus\}
\end{equation} with the index $\lambda=i+j-k~(\bmod~m)$ for any preappointed \emph{zero} $G_k\in \{F_m(G,f);\oplus \ominus\}$, so we have an \emph{every-zero Topcode-matrix group} $\{F_m(T_{code}(G),f);\oplus \ominus\}$ defined as
\begin{equation}\label{eqa:555555}
{
\begin{split}
T_{code}(G_i)[\oplus \ominus_k] T_{code}(G_j):=&T_{code}(G_i)[\oplus ] T_{code}(G_j)[\ominus ]T_{code}(G_k)\\
=&T_{code}(G_{\lambda})\in \{F_m(T_{code}(G),f);\oplus \ominus\}
\end{split}}
\end{equation} with the index $\lambda=i+j-k~(\bmod~m)$ for any preappointed \emph{zero} $T_{code}(G_k)\in \{F_m(T_{code}(G),f);\oplus \ominus\}$,

\textbf{Step 1.4.} An algorithm-$r$ produces a number-based string $s^i_r=\beta^i_{r,1}\beta^i_{r,2}\cdots \beta^i_{r,3q}$ for a fixed $r\in [1,(3q)!]$, so we have $\beta^i_{r,j}=f_i(w^i_{r,j})$ with $j\in [1,3q]$ and $i\in [1,m]$, where $w^i_{r,1},w^i_{r,2},\dots ,w^i_{r,3q}$ is a permutation of elements of the vertex vectors $X_i$ and $Y_i$ and the edge vector $E_i$ of Topcode-matrix $T_{code}(G_i)$ under the algorithm-$r$.

The number-based string set $S_r=\{s^i_r:i\in [1,m]\}$ for a fixed $r\in [1,(3q)!]$ and the operation ``$\oplus \ominus$'' form an \emph{every-zero number-based string group} $\{F_m(S_r);\oplus \ominus\}$ defined as
\begin{equation}\label{eqa:number-based-string-authentication}
s^t_r[\oplus \ominus_k]s^s_r:=s^t_r[\oplus ]s^s_r[\ominus]s^k_r=s^{\lambda}_r\in \{F_m(s^i_r);\oplus \ominus\}
\end{equation} with the index $\lambda=t+s-k~(\bmod~m)$ for any preappointed \emph{zero} $s^k_r\in \{F_m(S_r);\oplus \ominus\}$.

\vskip 0.2cm
So, we can use every-zero number-based string groups $\{F_m(S_r);\oplus \ominus\}$ with $r\in [1,(3q)!]$ to encrypt networks, for each preappointed \emph{zero} $s^k_r\in \{F_m(S_r);\oplus \ominus\}$, we have a \emph{public-key string} $s^t_r$ and a \emph{private-key string} $s^s_r$ and a \emph{number-based string authentication} $s^{\lambda}_r$ defined in Eq.(\ref{eqa:number-based-string-authentication}).

\vskip 0.2cm

\textbf{The computational complexity of the GROUP-compound algorithm for the graphic-to-Topcode-matrix-to-string groups.} There are $m$ preappointed \emph{zeros} $G_k\in F_m(G,f)=\{G_i:i\in [1,m]\}$ to set up $m$ every-zero graphic groups $\{F_m(G,f);\oplus \ominus\}$, such that each every-zero graphic group induces an every-zero Topcode-matrix group $\{F_m(T_{code}(G),f);\oplus \ominus\}$, which induces $(3q)!$ different every-zero number-based string groups $\{F_m(S_r);\oplus \ominus\}$ for $r\in [1,(3q)!]$. Finally, we get $m\cdot (3q)!$ different every-zero number-based string groups based on the set $F_m(G,f)$, in total. Moreover, assume that this $(p,q)$-graph $G$ admits $n_{color}(G)$ colorings like the coloring $f$ appeared in the set $F_m(G,f)$, then the $(p,q)$-graph $G$ distributes us $n_{color}(G) \cdot m\cdot (3q)!$ different every-zero number-based string groups. Notice that if the $(p,q)$-graph $G$ admits another coloring $g$, both $f$ and $g$ do not obey the same $W$-constraint, then
$$
F_m(G,f)=\{G_i:i\in [1,m]\}\neq \{G_i:i\in [1,r]\}=F_r(G,g)
$$ also, $\{F_m(G,f);\oplus \ominus\}\neq \{F_r(G,g);\oplus \ominus\}$, in general.

\vskip 0.4cm

\textbf{Way-2.} String groups made by any number-based strings $N^+_{\textrm{string}}(n)\subset N_{\textrm{string}}(n)$, where
$$
N_{\textrm{string}}(n)=\{s=c_1c_2\cdots c_n:c_i\in [0,9]\},~N^+_{\textrm{string}}(n)=\{s\,'=a_1a_2\cdots a_n:a_i\in [0,9]\textrm{ and }a_1\geq 1\}
$$

\begin{defn} \label{defn:any-number-based-string-groups}
$^*$ An \emph{every-zero number-based string group} $\{F_M(S^*);\oplus \ominus\}$ is made by taking arbitrarily a number-based string $s\in N^+_{\textrm{string}}(n)\subset N_{\textrm{string}}(n)$, where $s=\beta_1\beta_2\cdots \beta_n$ having each $d_{i,1}\geq 1$ in each $[0,9]$-string $\beta_i=d_{i,1}d_{i,2}\cdots d_{i,m_i}$ with $d_{i,j}\in [0,9]$ and $m_i\geq 1$ for $i\in [1,n]$. Let $s=s(1)$, new number-based strings are defined as follows
\begin{equation}\label{eqa:555555}
s(t)=(\beta_1+t-1)(\beta_2+t-1)\cdots (\beta_n+t-1)~(\bmod~M),t\in [1,M]
\end{equation} such that $s(M+t)~(\bmod~M)=s(t)$. For a preappointed \emph{zero} $s(k)\in S^*=\{s(t):t\in [1,M]\}$, since
\begin{equation}\label{eqa:any-number-based-string-groups}
(\beta_p+i-1)+(\beta_p+j-1)-(\beta_p+k-1)=\beta_p+\lambda-1\in S^*,~p\in [1,n]
\end{equation}
with the index $\lambda=i+j-k~(\bmod~M)$, then it defines an Abelian additive operation
\begin{equation}\label{eqa:555555}
s(i)[\oplus \ominus_k] s(j):=s(i)[\oplus] s(j)[\ominus]s(k)=s(\lambda)\in S^*,~\lambda=i+j-k~(\bmod~M)
\end{equation} on number-based strings $s(i),s(j)\in S^*$.\qqed
\end{defn}

We show the following facts about an every-zero number-based string group $\{F_M(S^*);\oplus \ominus\}$:

(1) \textbf{Zero}. Clearly, each string $s(t)\in \{F_M(S^*);\oplus \ominus\}$ can be as the \emph{zero}.

(2) \textbf{Inverse}. Since $s(i)[\oplus] s(2k-i)[\ominus]s(k)=s(k)$, each string $s(t)\in \{F_M(S^*);\oplus \ominus\}$ has its own \emph{inverse} $s(2k-i)$.

(3) \textbf{Uniqueness and Closure}. If $s(i)[\oplus] s(j)[\ominus]s(k)=s(\lambda)$ and $s(i)[\oplus] s(j)[\ominus]s(k)=s(\mu)$, we have the indices $\lambda=i+j-k~(\bmod~M)$ and $\mu=i+j-k~(\bmod~M)$, that is $\lambda=\mu$.

(4) \textbf{Associative law}. We have $s(i)[\oplus \ominus_k] s(j)=s(j)[\oplus \ominus_k] s(i)$.

(5) \textbf{Commutative law}. Notice that $\big (s(i)[\oplus \ominus_k] s(j)\big )[\oplus \ominus_k] s(m)=s(i)[\oplus \ominus_k] \big (s(j)[\oplus \ominus_k] s(m)\big )$.

\subsubsection{Number-based sub-string groups}

\begin{defn} \label{defn:2-level-n-rank-string-sets}
$^*$ A \emph{2-level $n$-rank string-set} of strings is denoted as $N_{string}(n)$, such that each string $\alpha\in N_{string}(n)$ is formed as $\alpha=\beta_1\beta_2\cdots \beta_n$ with each $[0,9]$-string $\beta_i=d_{i,1}d_{i,2}\cdots d_{i,m_i}$ holding $d_{i,j}\in [0,9]$ and $m_i\geq 1$ for $i\in [1,n]$. $N^+_{string}(n)$ is a subset of $N_{string}(n)$, such that each string $\alpha=\beta_1\beta_2\cdots \beta_n$ having each $[0,9]$-string $\beta_i=d_{i,1}d_{i,2}\cdots d_{i,m_i}$ for $d_{i,j}\in [1,9]$ and $m_i\geq 1$ for $i\in [1,n]$.

If $m_i=n\geq 2$ for $i\in [1,n]$, then we get $\beta_i=d_{i,1}d_{i,2}\cdots d_{i,n}$ for $i\in [1,n]$, we call $\alpha$ \emph{2-level uniformly $n$-rank number-based string}. \qqed
\end{defn}

Since each number-based string $\alpha\in N^+_{\textrm{string}}(n)\subset N_{\textrm{string}}(n)$ (refer to Definition \ref{defn:2-level-n-rank-string-sets}) is expressed as $\alpha=\beta_1\beta_2\cdots \beta_n$ having each integer $d_{i,1}\geq 1$ in each $[0,9]$-string $\beta_i=d_{i,1}d_{i,2}\cdots d_{i,m_i}$ with $d_{i,j}\in [0,9]$ and $m_i\geq 1$ for $i\in [1,n]$. We set new number-based strings
$$\alpha_i(t)=\beta_1\beta_2\cdots \beta_{i-1}(\beta_i+t-1)\beta_{i+1}\cdots \beta_n~(\bmod~M),~t\in [1,M]
$$ and get a number-based string set $\Phi=\{\alpha_i(t):t\in [1,M]\}$; and for any preappointed \emph{zero} $\alpha_i(r)\in \Phi$, we get
$$(\beta_i+s-1)+(\beta_i+t-1)-(\beta_i+r-1)=\beta_i+\lambda-1\in \Phi$$
with the index $\lambda=s+t-r~(\bmod~M)$ and $t\in [1,M]$, such that $\alpha_i(M+t)~(\bmod~M)=\alpha_i(t)$. Thereby, we get an Abelian additive operation
\begin{equation}\label{eqa:555555}
\alpha_i(s)[\oplus \ominus_r] \alpha_i(t):=\alpha_i(s)[\oplus] \alpha_i(t)[\ominus]\alpha_i(r)=\alpha_i(\lambda)\in \Phi,~\lambda=s+t-r~(\bmod~M)
\end{equation} which enables us to defined an \emph{every-zero number-based $1$-sub-string group} $\{F_M(\Phi);\oplus \ominus\}$.

\begin{defn} \label{defn:m-sub-string-groups}
$^*$ An \emph{every-zero number-based $m$-sub-string group} $\{F_M(\Phi_{[1,m]});\oplus \ominus\}$ is obtained as: Take a group of numbers $\beta_{j_1}, \beta_{j_2}, \dots,\beta_{j_m}$ form the number-based string $\alpha$ for $2\leq m\leq n-1$, and define new number-based strings as follows
\begin{equation}\label{eqa:555555}
{
\begin{split}
\alpha_{[1,m]}(t)=&\beta_1\beta_2\cdots \beta_{j_1-1}(\beta_{j_1}+t-1)\beta_{j_1+1}\cdots \beta_{j_2-1}(\beta_{j_2}+t-1)\beta_{j_2+1}\cdots \\
&\beta_{j_m-1}(\beta_{j_m}+t-1)\beta_{j_m+1}\cdots \beta_n~(\bmod~M)
\end{split}}
\end{equation} with $\alpha_{[1,m]}(M+t)~(\bmod~M)=\alpha_{[1,m]}(t)$. The set $\Phi_{[1,m]}=\{\alpha_{[1,m]}(t):t\in [1,M]\}$ holds
\begin{equation}\label{eqa:555555}
{
\begin{split}
\alpha_{[1,m]}(s)[\oplus \ominus_r] \alpha_{[1,m]}(t):=&\alpha_{[1,m]}(s)[\oplus] \alpha_{[1,m]}(t)[\ominus]\alpha_{[1,m]}(r)\\
=&\alpha_{[1,m]}(\lambda)\in \Phi,~\lambda=s+t-r~(\bmod~M)
\end{split}}
\end{equation} for any preappointed \emph{zero} $\alpha_{[1,m]}(r)\in \Phi_{[1,m]}$.\qqed
\end{defn}

\begin{thm}\label{thm:666666}
$^*$ A connected graph $H$ admits a proper total coloring $F$ based on each every-zero number-based string group $\{F_m(S_r);\oplus \ominus\}$ for $r\in [1,(3q)!]$ (refer to the GROUP-compound algorithm for the graphic-to-Topcode-matrix-to-string groups) when $\Delta(H)+1\leq m$, that is $F:V(H)\cup E(H)\rightarrow \{F_m(S_r);\oplus \ominus\}$ such that $F(uv)=F(u)[\oplus \ominus_r] F(v)~(\bmod ~m)$ for a preappointed \emph{zero} $s^k_r\in \{F_m(S_r);\oplus \ominus\}$.
\end{thm}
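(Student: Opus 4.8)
<br>

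The plan is to show that the stated coloring $F$ is a well-defined proper total coloring by exploiting the group structure of $\{F_m(S_r);\oplus \ominus\}$ established in the GROUP-compound algorithm, together with a counting argument on the available "colors" (the $m$ number-based strings $s^i_r$). First I would fix an arbitrary $r\in [1,(3q)!]$ and an arbitrary preappointed zero $s^k_r$, so that the target set of colors is the every-zero number-based string group $\{F_m(S_r);\oplus \ominus\}$ of order $m$, whose operation is $s^t_r[\oplus \ominus_k]s^s_r = s^{\lambda}_r$ with $\lambda = t+s-k \ (\bmod\ m)$ by Eq.(\ref{eqa:number-based-string-authentication}). The coloring is specified by assigning each vertex $u$ a string $F(u)=s^{t(u)}_r$ and defining the induced edge color by $F(uv)=F(u)[\oplus \ominus_k]F(v)$; the content of the theorem is that one can choose the vertex-indices $t(u)\in [1,m]$ so that $F$ is \emph{proper}, i.e. adjacent vertices receive distinct strings, incident vertex/edge pairs receive distinct strings, and adjacent edges receive distinct strings.

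The key steps, in order, are as follows. First I would verify that $F$ is automatically a \emph{total} coloring in the group-theoretic sense: since $\{F_m(S_r);\oplus \ominus\}$ is closed under $[\oplus \ominus_k]$, every induced edge color $F(uv)$ again lies in the group, so $F$ maps $V(H)\cup E(H)$ into $\{F_m(S_r);\oplus \ominus\}$. Second, I would reduce properness at the vertices to an ordinary proper vertex coloring: because distinct indices yield distinct strings $s^i_r$ (a fact I would record from the construction, where $s^i_r$ is built from $T_{code}(G_i)$ with $f_i(u)=f_1(u)+i \ (\bmod\ m)$), it suffices to pick a proper vertex coloring $c:V(H)\to [1,m]$ and set $t(u)=c(u)$. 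Since $\chi(H)\le \Delta(H)+1 \le m$ by hypothesis, such a $c$ exists. Third, I would translate the edge-properness and incidence conditions into index arithmetic: for an edge $uv$ the edge-index is $t(u)+t(v)-k \ (\bmod\ m)$, for two adjacent edges $uv,uw$ the indices are $t(u)+t(v)-k$ and $t(u)+t(w)-k$, which differ iff $t(v)\ne t(w) \ (\bmod\ m)$, and the incidence condition $F(u)\ne F(uv)$ reduces to $t(u)\ne t(u)+t(v)-k$, i.e. $t(v)\ne k \ (\bmod\ m)$. Thus all properness requirements become congruence constraints on the chosen indices, and I would argue that with $m\ge \Delta(H)+1$ colors available there is enough freedom to satisfy them simultaneously, possibly by a greedy/inductive assignment over the vertices analogous to the standard greedy bound $\chi'(H)\le \Delta(H)+1$.

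The main obstacle I anticipate is the simultaneous satisfaction of the three families of constraints under a \emph{single} modulus $m$ and a \emph{single} fixed zero $s^k_r$: unlike ordinary total colorings where vertex colors and edge colors are drawn from separate palettes, here the edge color is forced by the group operation on its endpoints, so one cannot color vertices and edges independently. In particular the incidence constraint $t(v)\ne k$ and the adjacent-edge constraint together couple the endpoint indices in a way reminiscent of the Behzad--Vizing total coloring conjecture, and a naive bound $m\ge \Delta(H)+1$ may need the slightly stronger hypothesis or a careful ordering of vertices to avoid forbidden residues. I would handle this by processing the vertices in an order that leaves, at each step, at least one admissible residue class modulo $m$ outside the forbidden set generated by already-colored neighbors and the zero index $k$, and verify that $m\ge \Delta(H)+1$ guarantees a nonempty choice set; if the coupling forces an extra color I would note the adjustment. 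The remaining verifications (closure, that distinct indices give distinct strings) are routine consequences of the every-zero group axioms recorded earlier, so the heart of the argument is this residue-avoidance assignment.
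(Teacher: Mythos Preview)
The paper states this theorem without proof, so there is nothing to compare your argument against. However, your proposal has a genuine gap at exactly the point you flag as the ``main obstacle''.

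Your reduction of the properness conditions to index arithmetic is correct, and in particular you correctly observe that for two adjacent edges $uv,uw$ the induced edge-indices $t(u)+t(v)-k$ and $t(u)+t(w)-k$ coincide modulo $m$ if and only if $t(v)\equiv t(w)\ (\bmod\ m)$. This means that proper edge-colouring forces \emph{all neighbours of every vertex to receive pairwise distinct indices}. That is not the condition for an ordinary proper vertex colouring of $H$; it is the condition for a proper vertex colouring of the square $H^2$ (equivalently, an $L(1,1)$-labelling). The greedy argument you sketch therefore has to avoid, at each new vertex $x$, the indices of all already-coloured vertices at distance at most $2$ from $x$, and there can be up to $\Delta(H)^2$ of these, not $\Delta(H)$.

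The hypothesis $m\ge \Delta(H)+1$ is genuinely too weak for this. For a concrete obstruction, take $H=K_{n,n}$: here $\Delta(H)=n$, but any two vertices on the same side share all $n$ neighbours on the other side, so $H^2=K_{2n}$ and one needs $2n$ distinct vertex-indices to make the induced edge colouring proper. With only $m=n+1$ strings available this is impossible, so no assignment $t:V(H)\to[1,m]$ can satisfy the adjacent-edge constraint, regardless of the choice of zero $s^k_r$. Your proposed ``adjustment'' is therefore not a minor bookkeeping fix but a change in the hypothesis of the theorem itself (something like $m\ge \chi(H^2)+1$, or at least $m\ge \Delta(H)^2+1$ for a greedy bound). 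Either the statement as written is intended with a weaker notion of ``proper'' than the paper's Definition~B-3, or the bound $\Delta(H)+1\le m$ needs to be strengthened; your argument cannot close the gap as it stands.
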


\begin{thm}\label{thm:666666}
$^*$ A connected graph $H$ admits a proper total coloring $F$ based on an every-zero number-based string group $\{F_M(S^*);\oplus \ominus\}$ (refer to Definition \ref{defn:any-number-based-string-groups}) when $\Delta(H)+1\leq M$, that is $F:V(H)\cup E(H)\rightarrow \{F_M(S^*);\oplus \ominus\}$ such that $F(uv):=F(u)[\oplus \ominus_k] F(v)~(\bmod ~M)$ for a preappointed \emph{zero} $s(k)\in S^*$.
\end{thm}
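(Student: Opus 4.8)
The final statement (Theorem) claims that a connected graph $H$ admits a proper total coloring $F$ valued in an every-zero number-based string group $\{F_M(S^*);\oplus \ominus\}$ whenever $\Delta(H)+1 \leq M$, where $F(uv):=F(u)[\oplus \ominus_k] F(v)~(\bmod ~M)$ for a preappointed \emph{zero} $s(k)\in S^*$.

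The plan is to reduce the statement about strings to a statement about indices in $\mathbb{Z}_M$, exploiting the group structure already established in Definition \ref{defn:any-number-based-string-groups}. First I would recall that each element $s(t) \in S^*$ is completely determined by its index $t \in [1,M]$, and that the operation $s(i)[\oplus \ominus_k] s(j) = s(\lambda)$ produces the element with index $\lambda = i+j-k~(\bmod~M)$; this is exactly the Abelian group structure of $\mathbb{Z}_M$ translated onto $S^*$. Hence defining a proper total coloring $F$ valued in $\{F_M(S^*);\oplus \ominus\}$ is equivalent to defining an index-function $\iota:V(H)\cup E(H)\rightarrow [1,M]$ with $F(w)=s(\iota(w))$, subject to the edge-rule $\iota(uv)=\iota(u)+\iota(v)-k~(\bmod~M)$ and to the properness requirements: $\iota(u)\neq \iota(v)$ and $\iota(u)\neq\iota(uv)\neq\iota(v)$ for each edge $uv$, together with $\iota(uv)\neq\iota(uw)$ for adjacent edges.

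The key step is then a purely combinatorial coloring argument over $\mathbb{Z}_M$. I would first assign vertex indices by taking any proper vertex coloring of $H$ using colors from a set of size $\Delta(H)+1$, which exists by the greedy bound $\chi(H)\leq \Delta(H)+1$; since $M\geq \Delta(H)+1$ we have room inside $\mathbb{Z}_M$. Fixing the zero index $k$, the edge index is then forced as $\iota(uv)=\iota(u)+\iota(v)-k$, so I must verify that this forced assignment is proper. Two adjacent edges $uv$ and $uw$ receive indices $\iota(u)+\iota(v)-k$ and $\iota(u)+\iota(w)-k$, which differ precisely because $\iota(v)\neq\iota(w)$ (as $v,w$ share the neighbor $u$ and the vertex coloring is proper); this gives proper edge-distinctness automatically. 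The incidence conditions $\iota(uv)\neq\iota(u)$ and $\iota(uv)\neq\iota(v)$ reduce to $\iota(v)\neq k$ and $\iota(u)\neq k$, and the adjacency condition $\iota(u)\neq\iota(v)$ is inherited from the vertex coloring.

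The hard part will be handling the incidence constraints $\iota(u)\neq \iota(uv)$ simultaneously with the forced edge rule, because once vertex colors are fixed the edge colors are not free to adjust; I expect to need the extra slack $M\geq \Delta(H)+1$ (rather than $M=\chi(H)$) precisely to reserve the index $k$ as a ``forbidden'' vertex color, or to shift all vertex indices away from a bad residue so that $\iota(u)+\iota(v)-k$ never coincides with $\iota(u)$ or $\iota(v)$ modulo $M$. Concretely, I would choose the proper vertex coloring to avoid the residue $k$ entirely — feasible since we have $M\geq \Delta(H)+1$ available colors and can relabel — which forces $\iota(u)\neq k\neq\iota(v)$ and hence $\iota(uv)\neq\iota(u),\iota(v)$. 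I would then confirm that the resulting $F$ lands back in $S^*$ at every step by invoking the closure property (Eq.(\ref{eqa:any-number-based-string-groups})) already proved for the group, so that no auxiliary verification of the string arithmetic is needed. A brief remark would note that the same scheme applies verbatim to the group $\{F_m(S_r);\oplus \ominus\}$ of the preceding theorem, explaining the parallel bound $\Delta(H)+1\leq m$.
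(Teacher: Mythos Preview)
Your argument contains a genuine error in the edge-properness step. You claim that for two adjacent edges $uv$ and $uw$ the indices $\iota(u)+\iota(v)-k$ and $\iota(u)+\iota(w)-k$ differ ``because $\iota(v)\neq\iota(w)$ (as $v,w$ share the neighbor $u$ and the vertex coloring is proper).'' But a proper vertex coloring only separates \emph{adjacent} vertices; two neighbors $v,w$ of the same vertex $u$ need not be adjacent to one another, so nothing forces $\iota(v)\neq\iota(w)$. Concretely, take $H=C_5$ with $\Delta=2$ and $M=3$: the requirement that every vertex's two neighbors carry distinct indices, together with properness, would force a proper $3$-coloring of $C_5^2=K_5$, which is impossible. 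So your vertex-coloring reduction cannot deliver the edge condition with only $\Delta(H)+1$ colors; what it actually requires is $M\geq\chi(H^2)$, which can be as large as $\Delta(H)^2+1$.

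Two further remarks. First, the incidence constraints $\iota(u)\neq\iota(uv)\neq\iota(v)$ that you devote the ``hard part'' to are not present in this paper's definition of a proper total coloring (see item B-3 in the preliminaries), so your plan to reserve the residue $k$ for that purpose is unnecessary --- but removing those constraints does not rescue the argument, because the edge-properness gap above remains. Second, the paper states this theorem without proof, so there is no reference argument to align with; if you wish to salvage the result you will either need a genuinely different construction for the vertex indices (one that separates all second neighbors, not just first neighbors) or a weakening of the hypothesis to $M\geq\chi(H^2)$, or else to argue that the paper's intended notion of ``proper'' here is weaker than what you have assumed.
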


\begin{thm}\label{thm:666666}
$^*$ A connected graph $H$ admits a proper total coloring $F$ based on an every-zero number-based $m$-sub-string group $\{F_M(\Phi_{[1,m]});\oplus \ominus\}$ (refer to Definition \ref{defn:m-sub-string-groups}) when $\Delta(H)+1\leq M$, that is $F:V(H)\cup E(H)\rightarrow \{F_M(\Phi_{[1,m]});\oplus \ominus\}$ such that $F(uv)=F(u)[\oplus \ominus_r] F(v)~(\bmod ~M)$ for a preappointed \emph{zero} $\alpha_{[1,m]}(r)\in \Phi_{[1,m]}$.
\end{thm}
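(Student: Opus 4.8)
The plan is to reduce the statement to the structurally identical preceding theorem, the one asserting the same coloring property for the every-zero number-based string group $\{F_M(S^*);\oplus \ominus\}$ of Definition~\ref{defn:any-number-based-string-groups}. First I would observe that the two groups are, up to relabeling, one and the same abstract object. Under the \emph{index map} $\alpha_{[1,m]}(t)\mapsto t$ the group $\{F_M(\Phi_{[1,m]});\oplus \ominus\}$ becomes the cyclic set $\mathbb{Z}_M=\{1,2,\dots,M\}$ equipped with the operation $(i,j)\mapsto i+j-r~(\bmod~M)$, exactly as the index map $s(t)\mapsto t$ turns $\{F_M(S^*);\oplus \ominus\}$ into the same $\mathbb{Z}_M$ with operation $(i,j)\mapsto i+j-k~(\bmod~M)$. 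Since the group-valued coloring property of $H$ and the edge rule $F(uv)=F(u)[\oplus\ominus_r]F(v)$ depend only on this common index arithmetic and on the order $M$, any coloring produced for $\{F_M(S^*);\oplus \ominus\}$ transports verbatim to $\{F_M(\Phi_{[1,m]});\oplus \ominus\}$, and conversely.

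Next I would make the coloring explicit in index form, which is where the actual content lives. Writing $F(v)=\alpha_{[1,m]}(g(v))$ for an index map $g:V(H)\to\mathbb{Z}_M$, the prescribed edge rule forces $F(uv)=\alpha_{[1,m]}(g(u)+g(v)-r~(\bmod~M))$, so the whole total coloring is determined by $g$ alone. Translating the requirements of a proper total coloring gives: $F(u)\neq F(v)$ on each edge iff $g(u)\neq g(v)$; the incidence conditions $F(u)\neq F(uv)$ and $F(v)\neq F(uv)$ iff $g(v)\neq r$ and $g(u)\neq r$, i.e. no vertex receives the index $r$; and $F(uv)\neq F(uw)$ for edges meeting at $u$ iff the neighbours of $u$ receive pairwise distinct indices. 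I would then build $g$ greedily along a vertex ordering: when a new vertex is colored, the proper-vertex constraint forbids at most the indices of its already-colored neighbours and the incidence constraint forbids the single index $r$, at most $\Delta(H)$ forbidden values in all, so the hypothesis $\Delta(H)+1\le M$ always leaves a legal index available. This settles the proper-vertex part and the incidence part cleanly.

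The hard part is the adjacent-edge condition $F(uv)\neq F(uw)$, which by the computation above demands that $g$ be \emph{injective on every closed neighbourhood} $\{u\}\cup N_{ei}(u)$; equivalently $g$ must properly color the square of $H$, an injective/distance-two type constraint whose minimum palette is governed not by $\Delta(H)+1$ but by quantities as large as $\Delta(H)^2+1$. A naive greedy pass does not control this, since coloring one vertex can be blocked by the roughly $\Delta(H)(\Delta(H)-1)$ indices already used at distance two. I expect this to be the crux, and I would resolve it in one of two ways consistent with the paper's framework: either adopt the convention used in the two verbatim-parallel theorems that precede it, under which the operative notion of \emph{proper} is the vertex-plus-incidence notion and $\Delta(H)+1\le M$ genuinely suffices by the greedy step above; or, if adjacent-edge distinctness is insisted upon, replace the bound by the injective/square-coloring threshold and carry the same construction through on a degeneracy ordering. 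In the write-up I would present the reduction to the $\{F_M(S^*);\oplus \ominus\}$ theorem as the formal backbone, record the index computation and the greedy step as the self-contained core, and flag explicitly which properness convention makes the $\Delta(H)+1$ bound tight.
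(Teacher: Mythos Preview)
The paper states this theorem (and the two structurally identical theorems immediately preceding it) without proof; there is no argument in the paper to compare against. Your reduction is correct and is the natural move: the index map $\alpha_{[1,m]}(t)\mapsto t$ identifies $\{F_M(\Phi_{[1,m]});\oplus\ominus\}$ with the cyclic index set $[1,M]$ under $(i,j)\mapsto i+j-r~(\bmod~M)$, exactly as for $\{F_M(S^*);\oplus\ominus\}$, so the three theorems stand or fall together.

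Your index computation is also correct, and so is the gap you flag. Writing $F(v)=\alpha_{[1,m]}(g(v))$, the adjacent-edge condition $F(uv)\neq F(uw)$ forces $g(v)\neq g(w)$ for every pair $v,w\in N_{ei}(u)$; together with $g(u)\neq g(v)$ on edges this is precisely a proper coloring of $H^2$. Since the closed neighbourhood of a maximum-degree vertex is a clique of size $\Delta(H)+1$ in $H^2$, one already needs $M\geq\Delta(H)+1$ just for this, and the incidence condition $F(u)\neq F(uv)$ additionally forbids the index $r$ at every vertex, pushing the requirement to at least $\Delta(H)+2$; for general $H$ the square can need up to $\Delta(H)^2+1$ colors. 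A concrete failure: $H=C_5$ with $M=3$ satisfies $\Delta(H)+1\leq M$, but $C_5^2=K_5$ cannot be properly colored from two indices (after removing $r$). So under the standard three-clause definition of proper total coloring the stated bound is genuinely insufficient, and your two proposed fixes---either read ``proper'' in the weaker vertex-plus-incidence sense used elsewhere in the paper, or replace $\Delta(H)+1$ by a square-coloring threshold---are the right alternatives to record.
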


\subsubsection{Compound number-based string groups}

Suppose that a connected $(p,q)$-graph $G$ admits $n$ colorings $f_1,f_2,\dots,f_n$ with $n\geq 2$, and there is an equivalent transformation $O_{i,j}$ for each pair of colorings $f_i$ and $f_j$ for $i\neq j$ and $1\leq i,j\leq n$.

We get $n$ every-zero graphic groups $\{F_{m_i}(G,f_i);\oplus \ominus\}$ with $i\in [1,n]$, where each $\{F_{m_i}(G,f_i);\oplus \ominus\}$ is defined on the graph set $F_{m_i}(G,f_i)=\{G_{i,r}:r\in [1,m_i]\}$, and each graph $G_{i,r}$ holds
\begin{equation}\label{eqa:all-graphs-is-one}
G_{i,r}\cong G
\end{equation} true, and admits a coloring defined by $f_{i,r}(w)=f_{i,1}(w)+r-1~(\bmod ~m_i)$ for $w\in V(G_{r,i})\cup E(G_{r,i})=V(G)\cup E(G)$ with $r\in [1,m_i]$, where $f_{i,1}=f_{i}$, and $m_1\leq m_2\leq \cdots \leq m_n$. Since the equivalent transformations hold

$O_{i,n}(G_{i,a})=G_{n,i_a}$ from $G_{i,a}\in \{F_{m_i}(G,f_i);\oplus \ominus\}$ to $G_{n,i_a}\in \{F_{m_n}(G,f_n);\oplus \ominus\}$,

$O_{j,n}(G_{j,b})=G_{n,j_b}$ from $G_{j,b}\in \{F_{m_j}(G,f_j);\oplus \ominus\}$ to $G_{n,j_b}\in \{F_{m_n}(G,f_n);\oplus \ominus\}$, and

$O_{k,n}(G_{k,c})=G_{n,k_c}$ from $G_{k,c}\in \{F_{m_k}(G,f_k);\oplus \ominus\}$ to $G_{n,k_c}\in \{F_{m_n}(G,f_n);\oplus \ominus\}$,\\
then we have the \emph{complex Abelian additive operation}
\begin{equation}\label{eqa:complex-Abelian-additive-operation}
{
\begin{split}
O_{i,n}(G_{i,a})[\oplus\ominus_{k_c}] O_{j,n}(G_{j,b}):=&O_{i,n}(G_{i,a})[\oplus] O_{j,n}(G_{j,b})[\ominus] O_{k,n}(G_{k,c})\\
=&G_{n,i_a}[\oplus] G_{n,j_b}[\ominus] G_{n,k_c}=G_{n,\lambda}
\end{split}}
\end{equation} with $G_{n,\lambda}\in \{F_{m_n}(G,f_n);\oplus \ominus\}$ as $\lambda=i_a+j_b-k_c~(\bmod ~m_n)$, where $G_{k,c}$ is a preappointed \emph{zero}, and the Abelian additive operation ``$O_{i,n}(G_{i,a})[\oplus \ominus_{k_c}] O_{j,n}(G_{j,b})$'' is defined by
\begin{equation}\label{eqa:555555complex-graphic-group}
f_{i,i_a}(w)+f_{i,j_b}(w)-f_{i,k_c}(w)=f_{i,\lambda}(w),~ w\in V(G)\cup E(G)
\end{equation} with $\lambda=i_a+j_b-k_c~(\bmod ~m_n)$ according to Eq.(\ref{eqa:all-graphs-is-one}). We get a \emph{complex graphic group} defined as follows
\begin{equation}\label{eqa:complex-graphic-group}
C_{om}(G[\oplus \ominus]\{f_{i}\}^n_{i=1})=\bigcup ^n_{i=1}\{F_{m_i}(G,f_i);\oplus \ominus\}
\end{equation} for any preappointed \emph{zero} $G_{k,c}$ under the complex Abelian additive operation.

\vskip 0.4cm

\textbf{Every-zero number-based string groups.} Since an every-zero graphic group $\{F_{m_i}(G,f_i)$; $\oplus \ominus\}$ is based on a colored graph set $F_{m_i}(G,f_i)=\{G_{i,r}:r\in [1,m_i]\}$, where each graph $G_{i,r}\in F_{m_i}(G,f_i)$ holds $G_{i,r}\cong G$ and admits a coloring $f_{i,r}$, so $G_{i,r}$ corresponds to a Topcode-matrix $T_{code}(G_{i,r})=(X_{i,r},~E_{i,r},~Y_{i,r})^T$ with

$X_{i,r}=(f_{i,r}(x_1),~f_{i,r}(x_2),\dots ,f_{i,r}(x_q))$, $E_{i,r}=(f_{i,r}(e_1),~f_{i,r}(e_2),\dots ,f_{i,r}(e_q))$ and

$Y_{i,r}=(f_{i,r}(y_1),~f_{i,r}(y_2),\dots ,f_{i,r}(y_q))$, $e_i=x_iy_i\in E(G_{i,r})=E(G)$.\\
We have an \emph{every-zero Topcode-matrix group}
$$\{T_{m_i}(T_{code}(G),f_i);\oplus \ominus\}=\{T_{code}(G_{i,r}):r\in [1,m_i]\}$$

There are $(3q)!$ different algorithms for producing $(3q)!$ different number-based strings for each Topcode-matrix $T_{code}(G_{i,r})$. For a fixed algorithm-$\gamma$ with $\gamma\in [1,(3q)!]$, each Topcode-matrix $T_{code}(G_{i,r})$ produces a number-based string $s(T_{code}(G_{i,r}),\gamma)$, then each every-zero graphic group $\{F_{m_i}(G,f_i);\oplus \ominus\}$ produces a set $S_{m_i}(T_{code}(G_{i,r}),\gamma)$ of $m_i$ number-based strings, where $$S_{m_i}(T_{code}(G_{i,r}),\gamma)=\{s(T_{code}(G_{i,r}),\gamma):r\in [1,m_i]\}$$

Notice that the operation ``$G_{i,r}[\oplus \ominus_{i,k}] G_{i,t}:=G_{i,r}[\oplus ] G_{i,t}[\ominus]G_{i,k}$'' for each every-zero graphic group $\{F_{m_i}(G,f_i);\oplus \ominus\}$ and any preappointed \emph{zero} $G_{i,k}\in \{F_{m_i}(G,f_i);\oplus \ominus\}$ is defined by
\begin{equation}\label{eqa:graphic-group-Topcode-matrix}
f_{i,r}(w)+f_{i,t}(w)-f_{i,k}(w)=f_{i,\lambda}(w)
\end{equation} and $G_{i,\lambda}\in \{F_{m_i}(G,f_i);\oplus \ominus\}$ with the index $\lambda=r+t-k~(\bmod~m_i)$ for $w\in V(G_{i,r})\cup E(G_{i,r})=V(G)\cup E(G)$ with $r\in [1,m_i]$. By the complex Abelian additive operation defined in Eq.(\ref{eqa:complex-Abelian-additive-operation}) and Eq.(\ref{eqa:graphic-group-Topcode-matrix}), we have defined the Abelian additive operation
\begin{equation}\label{eqa:555555}
s(T_{code}(G_{i,r}),\gamma)[\oplus] s(T_{code}(G_{i,t}),\gamma)[\ominus] s(T_{code}(G_{i,k}),\gamma)=s(T_{code}(G_{i,\lambda}),\gamma)
\end{equation} with the index $\lambda=i+t-k~(\bmod~m^*)$ based on the number-based string set $S_{m_i}(T_{code}(G_{i,r}),\gamma)$ and any preappointed \emph{zero} $s(T_{code}(G_{i,k}),\gamma)$ of $S_{m_i}(T_{code}(G_{i,r}),\gamma)$, and defined an \emph{every-zero number-based string group} $\{S_{m_i}(T_{code}(G_{i,r}),\gamma);\oplus \ominus\}$ for $\gamma\in [1,(3q)!]$.

\vskip 0.4cm

\textbf{Compound number-based string groups.} We have shown a connection between three every-zero graphic group, every-zero Topcode-matrix group and every-zero number-based string group as follows
\begin{equation}\label{eqa:compound-string-groups22}
\{F_{m_i}(G,f_i);\oplus \ominus\}\Rightarrow \{T_{m_i}(T_{code}(G),f_i);\oplus \ominus\}\Rightarrow \{S_{m_i}(T_{code}(G_{i,r}),\gamma);\oplus \ominus\}
\end{equation} By means of Eq.(\ref{eqa:complex-graphic-group}), each set
\begin{equation}\label{eqa:compound-string-groups22}
C_{om}(FTS[\oplus \ominus]\gamma)=\bigcup^n_{i=1} S_{m_i}(T_{code}(G_{i,r}),\gamma),~\gamma\in [1,(3q)!]
\end{equation} is called \emph{compound number-based string group}. Finally, we have $n\cdot (3q)!$ different compound number-based string groups in total.

\subsubsection{Partial every-zero compound number-based string groups}

\begin{defn} \label{defn:partialevery-zero-nbs-group}
$^*$ Suppose that $C_{nbs}(m)=\{s_{uper}(i):i\in [1,m]\}$ is a set of compound number-based strings, each compound number-based string $s_{uper}(i)=a_{i,1}a_{i,2}\cdots a_{i,n}$ with each $a_{i,j}=b_{i,j,1}b_{i,j,2}\cdots b_{i,j,m_{j}}$ is a non-negative integer with $m_j\geq 10$, such that $s_{uper}(i+1)$ holds
\begin{equation}\label{eqa:555555}
s_{uper}(i+1)=a_{i+1,1}a_{i+1,2}\cdots a_{i+1,n},~a_{i+1,j}=1+b_{i,j,1}b_{i,j,2}\cdots b_{i,j,m_{j}}~(\bmod ~m_{j}),~j\in [1,n]
\end{equation} For a fixed integer $j\in [1,n]$, we define a \emph{partial Abelian additive operation} for $C_{nbs}(m)$ as follows
\begin{equation}\label{eqa:partial-additive-operation111}
a_{r,j}+a_{s,j}-a_{k,j}=a_{\lambda,j},~\lambda=r+s-k~(\bmod ~m_{j})
\end{equation} for a preappointed \emph{zero} $s_{uper}(k)\in C_{nbs}(m)$, and denote this operation by
\begin{equation}\label{eqa:555555}
{
\begin{split}
\partial_j\{s_{uper}(r)[\oplus \ominus_k]s_{uper}(s)\}:=\partial_j\{s_{uper}(r)[\oplus ]s_{uper}(s)[\ominus]s_{uper}(k)\}=s_{uper}(\lambda)\in C_{nbs}(m)
\end{split}}
\end{equation} with the index $\lambda=r+s-k~(\bmod ~m_{j})$ for a preappointed \emph{zero} $s_{uper}(k)\in C_{nbs}(m)$, which distributes us a \emph{partial every-zero compound number-based group} $\{C_{nbs}(m);\partial_j [\oplus\ominus]\}$.\qqed
\end{defn}

\begin{rem}\label{rem:333333}
By Definition \ref{defn:partialevery-zero-nbs-group}, we can use the compound number-based string set $C_{nbs}(m)$ to encrypt a network $N(t)$ such that two nodes of each community $c_{ommunity}(j)$ of the network $N(t)$ is encrypted by the partial every-zero compound number-based group $\{C_{nbs}(m);\partial_j [\oplus\ominus]\}$. By Eq.(\ref{eqa:partial-additive-operation111}), the authentication $a_{\lambda,j}$ between two nodes $a_{r,j}$ and $a_{s,j}$ of the network $N(t)$ holds the index $\lambda=r+s-k~(\bmod ~m_{j})$ true for a preappointed \emph{zero} $a_{k,j}\in C_{nbs}(m)$.

\textbf{Application.} The authentication between two communities $c_{ommunity}(a)$ and $c_{ommunity}(b)$ is based on the partial Abelian additive operation $\partial_{(a,b)}$ to be the combination of $$\partial_a\{s_{uper}(r)[\oplus \ominus_{k_{(a,b)}}]s_{uper}(s)\},~ \partial_b\{s_{uper}(r)[\oplus \ominus_{k_{(a,b)}}]s_{uper}(s)\}
$$ defined as follows
\begin{equation}\label{eqa:partial-additive-operation222}
\begin{cases}
a_{r,j}+a_{s,j}-a_{k_{(a,b)},j}=a_{\eta,j},&\eta=r+s-k_{(a,b)}~(\bmod ~m_{a})\\
a_{r,j}+a_{s,j}-a_{k_{(a,b)},j}=a_{\mu,j},&\mu=r+s-k_{(a,b)}~(\bmod ~m_{b})
\end{cases}
\end{equation} for the preappointed \emph{zero} $a_{k_{(a,b)},j}\in C_{nbs}(m)$.\paralled
\end{rem}

\begin{defn} \label{defn:every-zero-abstract-group}
$^*$ Let $S_{thing}=\{t_1,t_2,\dots ,t_m\}$ be a set of $m$ particular things. If there is an operation ``$[\oplus \ominus_k]$'' defined by
\begin{equation}\label{eqa:general-group}
t_i~[\oplus \ominus_k] ~t_j:=t_i\oplus t_j\ominus t_k=t_{\lambda}\in S_{thing}
\end{equation} with the index $\lambda=i+j-k~(\bmod~m)$ for any two elements $t_i, t_j$ of $S_{thing}$ and any preappointed \emph{zero} $t_k\in S_{thing}$, so $S_{thing}$ is called \emph{every-zero thing-index group}, denoted as $\{F^+_m(S_{thing});\oplus \ominus\}$.\qqed
\end{defn}

\begin{prop}\label{prop:99999}
An every-zero thing-index group $\{F^+_m(S_{thing});\oplus \ominus\}$ holds the following laws:

(1) \textbf{Zero}. Each element $t_k\in S_{thing}$ is a zero under the operation ``$\oplus \ominus$''.

(2) \textbf{Inverse}. Each element $t_i\in S_{thing}$ has its own inverse $t_{i^{-1}}$ with $i^{-1}=2k-i$.

(3) \textbf{Uniqueness and Closure}. If two operations $t_i\oplus t_j\ominus t_k=t_{\lambda}$ and $t_i\oplus t_j\ominus t_k=t_{\mu}$, then $t_{\lambda}=t_{\mu}$ since the index $i+j-k~(\bmod~m)=\lambda=\mu=i+j-k~(\bmod~m)$.

(4) \textbf{Associative law}. The equation
$$(t_i~[\oplus \ominus_k] ~ t_j)~[\oplus \ominus_k] ~ t_s=t_i~[\oplus \ominus_k] ~ (t_j~[\oplus \ominus_k] ~ t_s)
$$ holds true based on the following facts:

$t_i~[\oplus \ominus_k] ~ t_j:=t_i\oplus t_j\ominus t_k=t_{\lambda}$ with the index $\lambda=i+j-k~(\bmod~m)$,

$t_{\lambda}~[\oplus \ominus_k] ~ t_s:=t_{\lambda}\oplus t_s\ominus t_k=t_{\eta}$ with the index $\eta=\lambda+s-k~(\bmod~m)$,

$t_j~[\oplus \ominus_k] ~ t_s:=t_j\oplus t_s\ominus t_k=t_{\mu}$ with the index $\mu=j+s-k~(\bmod~m)$, and

$t_i~[\oplus \ominus_k] ~ t_{\mu}:=t_i\oplus t_{\mu}\ominus t_k=t_{\tau}$ with the index $\tau=i+\mu-k~(\bmod~m)$.\\
Since the indices $\eta=\lambda+s-k~(\bmod~m)=i+j-k+s-k~(\bmod~m)$ and $\tau=i+\mu-k~(\bmod~m)=i+j+s-k-k~(\bmod~m)$, then the index $\eta=\tau~(\bmod~m)$.

(5) \textbf{Commutative law}. There is
$$t_i[\oplus ] t_j[\ominus] t_k:=t_i[\oplus \ominus_k] t_j=t_j[\oplus \ominus_k] t_i:=t_j[\oplus ]t_i[\ominus ]t_k
$$ since the index $\lambda=i+j-k~(\bmod~m)=j+i-k~(\bmod~m)$.
\end{prop}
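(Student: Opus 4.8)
The plan is to observe that the entire algebraic content of the operation $[\oplus\ominus_k]$ lives in the index set, so that verifying the five laws reduces to elementary arithmetic modulo $m$. Concretely, by Definition \ref{defn:every-zero-abstract-group} the product $t_i[\oplus\ominus_k]t_j$ is determined solely by the index $\lambda=i+j-k~(\bmod~m)$; no property of the underlying ``things'' $t_1,\dots,t_m$ is used beyond the bijective labelling $t_i\leftrightarrow i$. I would therefore fix a preappointed \emph{zero} $t_k$ and introduce the index map $\phi_k:S_{thing}\rightarrow Z_m$ defined by $\phi_k(t_i)=i-k~(\bmod~m)$, and show that it transports $[\oplus\ominus_k]$ into ordinary addition in the cyclic group $Z_m$ of order $m$.

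First I would settle the indexing convention needed for well-definedness: since the labels run over $\{1,\dots,m\}$ while the index formula returns a residue modulo $m$, I interpret the labels cyclically (so $t_{r+m}=t_r$), which guarantees that $\lambda=i+j-k~(\bmod~m)$ always names a genuine element of $S_{thing}$. This is precisely item (3), \textbf{Uniqueness and Closure}: the residue $i+j-k~(\bmod~m)$ is determined uniquely, hence $t_\lambda$ is unambiguous and lies in $S_{thing}$.

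Next I would verify the homomorphism identity $\phi_k(t_i[\oplus\ominus_k]t_j)=\phi_k(t_i)+\phi_k(t_j)$, which follows from $\lambda-k=(i+j-k)-k=(i-k)+(j-k)~(\bmod~m)$. Since $\phi_k$ is visibly a bijection (translation by $-k$ is a bijection of $Z_m$ and $i\mapsto t_i$ is one-to-one onto), this exhibits $(S_{thing},[\oplus\ominus_k])$ as isomorphic to $(Z_m,+)$, and all five assertions fall out of the abelian group structure of $Z_m$ at once: the identity $0$ pulls back to $t_k$, giving \textbf{Zero} (item (1)); the inverse $-(i-k)=k-i$ pulls back to $t_{2k-i}$, giving \textbf{Inverse} (item (2)), since $\phi_k(t_{2k-i})=(2k-i)-k=k-i~(\bmod~m)$; and the associativity and commutativity of $+$ in $Z_m$ yield items (4) and (5). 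If one prefers a self-contained argument in the style of Theorem \ref{thm:first-mixed-string-operation}, each law can instead be checked by direct index bookkeeping; for instance both $(t_i[\oplus\ominus_k]t_j)[\oplus\ominus_k]t_s$ and $t_i[\oplus\ominus_k](t_j[\oplus\ominus_k]t_s)$ carry the common index $i+j+s-2k~(\bmod~m)$.

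The only genuine subtlety — the ``hard part'', such as it is — is the well-definedness of the operation across the mismatch between the label range $\{1,\dots,m\}$ and the natural residue system $\{0,\dots,m-1\}$ of arithmetic modulo $m$; once the cyclic labelling convention $t_{r+m}=t_r$ is adopted this evaporates, and everything that remains is the routine modular arithmetic already mirrored in the statements of items (1)--(5). I would present the verification in the same format as the corresponding every-zero string-group results (Theorem \ref{thm:first-mixed-string-operation} and Theorem \ref{thm:second-mixed-string-operation}), of which this proposition is the abstract skeleton.
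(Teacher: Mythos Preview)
Your proposal is correct. The paper's own argument is exactly the item-by-item index arithmetic you describe as the ``self-contained'' alternative: the verification is embedded in the proposition's statement itself, computing $\lambda,\eta,\mu,\tau$ directly and checking they agree modulo $m$. Your packaging via the bijection $\phi_k(t_i)=i-k~(\bmod~m)$ to $(Z_m,+)$ is a mild conceptual upgrade that yields all five laws at once rather than one at a time, but the underlying computation is identical and you explicitly note the direct route as well, so there is no substantive divergence.
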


\begin{figure}[h]
\centering
\includegraphics[width=16.4cm]{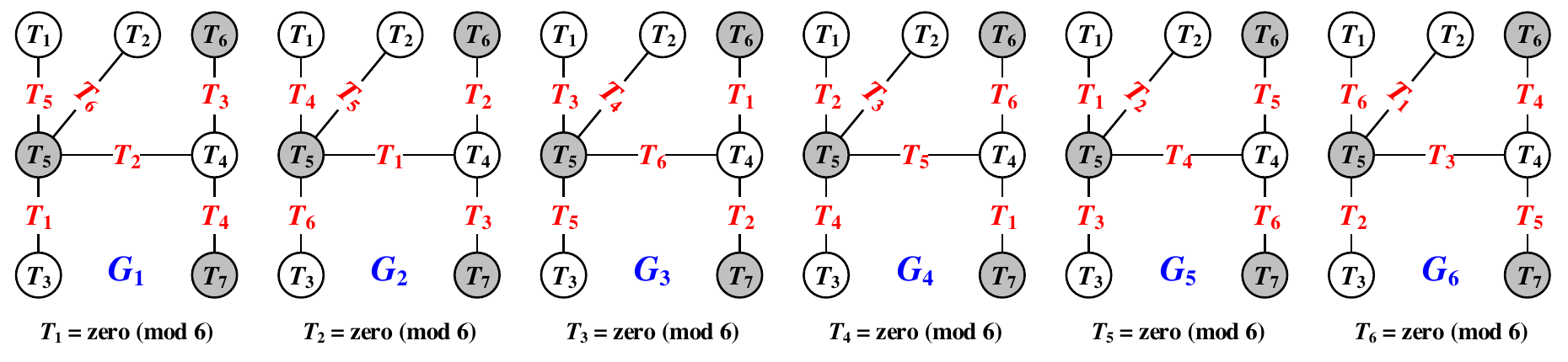}\\
\caption{\label{fig:graphic-group-operation11} {\small A thing-index graphic group $\{F^+_6(G);\oplus\ominus\}$ under the operation ``$G_i[\oplus \ominus_k] G_j$'' for understanding Definition \ref{defn:every-zero-abstract-group}.}}
\end{figure}


\section{Asymmetric topology encryption}

The authors in \cite{Yao-Su-Ma-Wang-Yang-arXiv-2202-03993v1} propose firstly \emph{asymmetric topology encryption} of topological encryption, and use complete graphs, maximal planar graphs and trees to make various topological signature authentications of asymmetric topology encryption; refer to Fig.\ref{fig:Topological-signature-11}. The ``topological structure'' in topological encryption is the natural ``topological signature'', and the ``mathematical constraint'' is the ``key generator'', which perfectly interprets the function of mathematics.

\begin{figure}[h]
\centering
\includegraphics[width=16.4cm]{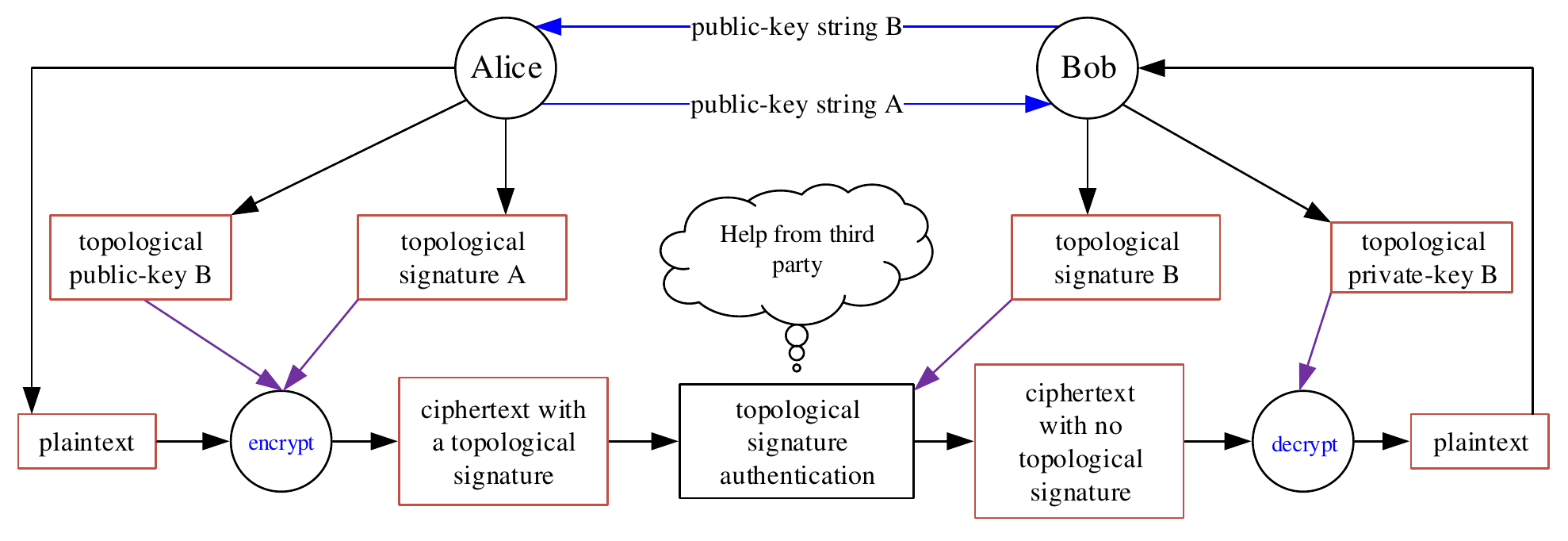}\\
\caption{\label{fig:Topological-signature-11}{\small A diagram for the asymmetric topology cryptography.}}
\end{figure}

\subsection{Topological signatures and Key-pairs}

\subsubsection{Key-pairs of topological signatures and strings}

\begin{defn} \label{defn:topological-signatures-key-graphs}
$^*$ A graph and its own \emph{dual} or \emph{complimentary} in graph theory form a pair of \emph{public-key graph} and \emph{private-key graph}. And \emph{topological signatures} are \emph{colored graphs} of topological coding, or pairs of colored public-key graphs and colored private-key graphs. \qqed
\end{defn}

Sometimes, public-key graphs and colored private-key graphs are called \emph{public-signatures} and \emph{private-signatures}, respectively. Chinese letters (Hanzis) are \emph{natural topology structures}, or \emph{topological fingerprints} in information security. There are dual-type, twin-type, image-type and matching-type labelings/colorings of graphs for making topological signatures.

Topological signatures have the basic functions: \emph{the identity authentication, generating various forms of strings}. The identity authentication is done between colored graphs, or various strings generated from colored graphs. Generating various forms of strings is done by various matrices of topological coding.

\vskip 0.4cm

We set up the center of topological key production, distribution, replacement and topological signature authentication (TKPDRA-center) for a local area network. We will use the following terminology and notation:
\begin{asparaenum}[\textbf{\textrm{Term-}}1.]
\item Alice's \emph{topological signature} $T_{sigA}=\langle G_{Apub},G_{Apri}\rangle $, where $G_{Apub}$ is Alice's \emph{public-key graph}, and $G_{Apri}$ is Alice's \emph{private-key graph}.
\item Bob's topological signature $T_{sigB}=\langle G_{Bpub},G_{Bpri}\rangle $ with Bob's public-key graph $G_{Bpub}$ and Bob's private-key graph $G_{Bpri}$.
\item Thereby, $A_{uth}\langle T_{sigA},T_{sigB}\rangle$ is a \emph{topological signature authentication} based on Alice's topological signature and Bob's topological signature.
\item Alice's \emph{topological string authentication} $S_{tringA}=A_{uth}\langle s_{Apub},s_{Apri}\rangle $, where $s_{Apub}$ is Alice's \emph{public-key string}, and $s_{Apri}$ is Alice's \emph{private-key string}.
\item Bob's topological string authentication $S_{tringB}=A_{uth}\langle s_{Bpub},s_{Bpri}\rangle $ with Bob's public-key string $s_{Bpub}$ and Bob's private-key string $s_{Bpri}$.
\item $A_{uth}\langle S_{tringA},S_{tringB}\rangle$ is a \emph{topological string authentication} based on Alice's topological string authentication and Bob's topological string authentication.
\end{asparaenum}

\vskip 0.2cm

For transforming public-key graphs and public-key strings between people in a network, we design the following plans:

\vskip 0.2cm

\noindent \textbf{Key-pair-Plan-I (Node-to-node)} is based on that individuals make their own Key-pair signatures and Key-pair strings.
\begin{asparaenum}[\textbf{\textrm{Send-I-}}1.]
\item Alice sends to Bob for gaining Bob's provisional public-key graph $G^*_{Bpub}$ and Bob's provisional public-key string $s^*_{Bpub}$.
\item Bob sends to Alice a package including Bob's provisional public-key graph $G^*_{Bpub}$ and Bob's provisional public-key string $s^*_{Bpub}$.
\item Alice uses Bob's provisional public-key graph $G^*_{Bpub}$ and Bob's provisional public-key string $s^*_{Bpub}$ to make an encrypted file $F_{Alice}$ containing Alice's public-key graph $G_{Apub}$ and Alice's public-key string $s_{Apub}$.
\item After received the encrypted file $F_{Alice}$ sent by Alice, Bob deletes his provisional public-key graph $G^*_{Bpub}$ and Bob's provisional public-key string $s^*_{Bpub}$, and makes his public-key graph $G_{Bpub}$ and public-key string $s_{Bpub}$.
\item Bob uses Alice's public-key graph $G_{Apub}$ and Alice's public-key string $s_{Apub}$ to encrypt a file containing his public-key graph $G_{Bpub}$ and public-key string $s_{Bpub}$, and then sends the encrypted file to Alice.
\end{asparaenum}

\vskip 0.4cm

\noindent \textbf{Key-pair-Plan-II (TKPDRA-center and common zeros)} is based on the topological signatures and Key-pair strings come from one graphic group in TKPDRA-center.
\begin{asparaenum}[\textbf{\textrm{Send-II-}}1.]
\item Alice requires TKPDRA-center to create the Key-pair signatures $G_{Apub}$ and $G_{Apri}$ and the Key-pair strings $s_{Apub}$ and $s_{Apri}$ for herself.
\item TKPDRA-center asks Alice to randomly select a group $G_{roup}$, and two elements $G_{Apub}$ and $G_{Apri}$ in this graphic group $G_{roup}$, such that $G_{Apub}$ is Alice's public-key graph and $G_{Apri}$ is Alice's private-key graph.
\item Alice receives her topological signature $T_{sigA}=\langle G_{Apub},G_{Apri}\rangle $ obtained by the operation $G_{Apub}[\oplus]G_{Apri}[\ominus]G_{zero}=T_{sigA}\in G_{roup}$, where $G_{zero}\in G_{roup}$ is a common \emph{zero}.
\item TKPDRA-center produces Alice's public-key string $s_{Apub}$ from her public-key graph and $G_{Apub}$ and Alice's private-key string $s_{Apri}$ from her private-key graph $G_{Apri}$.
\item Alice receives her Key-pair string authentication $S_{tringA}=A_{uth}\langle s_{Apub},s_{Apri}\rangle $ obtained by the operation $s_{Apub}[\oplus]s_{Apri}[\ominus]s_{zero}=S_{tringA}\in S_{roup}$, where the string group $S_{roup}$ is based on the graphic group $G_{roup}$, and $s_{zero}\in S_{roup}$ is a common \emph{zero}.
\end{asparaenum}

\vskip 0.4cm

\noindent \textbf{Key-pair-Plan-III (TKPDRA-center and personalized zeros)} is based on the topological signatures and Key-pair strings come from different graphic groups and different string groups of TKPDRA-center, and the custom-made \emph{zeros} for one person.
\begin{asparaenum}[\textbf{\textrm{Send-III-}}1.]
\item Alice requires TKPDRA-center to create the Key-pair signatures $G_{Apub}$ and $G_{Apri}$ and the Key-pair strings $s_{Apub}$ and $s_{Apri}$ for herself.
\item TKPDRA-center asks Alice to randomly select a graphic group $G_{roup}$ of TKPDRA-center, and two elements $G_{Apub}$ and $G_{Apri}$ in this graphic group $G_{roup}$.
\item TKPDRA-center makes a particular \emph{zero} $G^A_{zero}$ for Alice only.
\item Alice receives her topological signature $T_{sigA}=\langle G_{Apub},G_{Apri}\rangle $ based on the result of the operation $G_{Apub}[\oplus]G_{Apri}[\ominus]G^A_{zero}=T_{sigA}\in G_{roup}$.
\item TKPDRA-center produces randomly Alice's public-key string $s_{Apub}$ and Alice's private-key string $s_{Apri}$ from the string group $S^*_{roup}$ selected randomly by Alice.
\item Alice receives her Key-pair string authentication $S_{tringA}=A_{uth}\langle s_{Apub},s_{Apri}\rangle $ based on the result of the operation $s_{Apub}[\oplus]s_{Apri}[\ominus]s^A_{zero}=S_{tringA}\in S^*_{roup}$, where the \emph{zero} $s^A_{zero}$ is specially customized for Alice only.
\end{asparaenum}

\vskip 0.4cm

\noindent \textbf{Key-pair-Plan-IV (TKPDRA-center and personalized zeros)} is based on the topological signatures and Key-pair strings come from different graphic groups and different string groups of TKPDRA-center, and the custom-made \emph{zeros} for two persons.
\begin{asparaenum}[\textbf{\textrm{Send-IV-}}1.]
\item Alice selects randomly a graphic group $G_{roup}$ of TKPDRA-center, and two elements $G_{Apub}$ and $G_{Apri}$ of the graphic group $G_{roup}$; and Bob selects randomly two elements $G_{Bpub}$ and $G_{Bpri}$ of the graphic group $G_{roup}$ too.
\item TKPDRA-center makes a particular \emph{zero} $G^{A,B}_{zero}$ for Alice and Bob.
\item Alice receives her topological signature $T_{sigA}=\langle G_{Apub},G_{Apri}\rangle $ obtained by the operation $G_{Apub}[\oplus]G_{Apri}[\ominus]G^{A,B}_{zero}=T_{sigA}\in G_{roup}$.
\item Bob receives his topological signature $T_{sigB}=\langle G_{Bpub},G_{Bpri}\rangle $ obtained by the operation $G_{Bpub}[\oplus]G_{Bpri}[\ominus]G^{A,B}_{zero}=T_{sigB}\in G_{roup}$.
\item Alice selects randomly her public-key string $s_{Apub}$ and private-key string $s_{Apri}$ from the string group $S^*_{roup}$ selected randomly by both Alice and Bob.
\item Bob selects randomly his public-key string $s_{Bpub}$ and private-key string $s_{Bpri}$ from the string group $S^*_{roup}$.
\item Alice receives her Key-pair string authentication $S_{tringA}=A_{uth}\langle s_{Apub},s_{Apri}\rangle $ obtained by the operation $s_{Apub}[\oplus]s_{Apri}[\ominus]s^{A,B}_{zero}=S_{tringA}\in S^*_{roup}$, where the \emph{zero} $s^{A,B}_{zero}$ is specially customized for both Alice and Bob.
\item Bob receives his Key-pair string authentication $S_{tringB}=A_{uth}\langle s_{Bpub},s_{Bpri}\rangle $ made by the operation $s_{Bpub}[\oplus]s_{Bpri}[\ominus]s^{A,B}_{zero}=S_{tringB}\in S^*_{roup}$.
\end{asparaenum}

\subsubsection{Twin-type graphs for topological signatures}

A phenomenon about twin labelings was proposed and discussed in \cite{Wang-Xu-Yao-2017-Twin}, that is, the twin odd-graceful labelings are natural-inspired as keys and locks. In fact, each type of twin labelings can be considered as a matching. We have other twin labelings, such as image-labelings, inverse labelings, \emph{etc}. We view many examples for twin labelings, and want to discover that twin labelings have some properties like quantum entanglement \cite{Yao-Zhang-Sun-Mu-Sun-Wang-Wang-Ma-Su-Yang-Yang-Zhang-2018arXiv}.

Twin-type graphs are determined by twin-type colorings/labelings in general; refer to Definition \ref{defn:group-definition-twin-total-labelingss}, Definition \ref{defn:twin-odd-edge-w-labelings-coloringsn} and Theorem \ref{thm:odd-edge-W-type-kd-total-labelings}. Twin-graphic lattices are based on twin-type graphs \cite{Yao-Zhang-Yang-Wang-Odd-Edge-arXiv-02477}.

\begin{defn}\label{defn:22-twin-odd-graceful-labeling}
\cite{Wang-Xu-Yao-2017-Twin} For two connected $(p_i,q)$-graphs $G_i$ with $i=1,2$, if a vertex-coincided graph $G=G_1[\odot] G_2$ admits a vertex labeling $f$: $V(G)\rightarrow [0, 2q]$ such that

(i) $f$ is just an odd-graceful labeling of $G_1$, so
$$f(E(G_1))=\{f(uv)=|f(u)-f(v)|: uv\in E(G_1)\}=[1, 2q-1]^o
$$

(ii) $f(E(G_2))=\{f(uv)=|f(u)-f(v)|: uv\in E(G_2)\}=[1,2q-1]^o$; and

(iii) $|f(V(G_1))\cap f(V(G_2))|=k\geq 0$ and $f(V(G_1))\cup f(V(G_2))\subseteq [0, 2q]$.\\
Then $f$ is called a \emph{twin odd-graceful labeling} of the vertex-coincided graph $G$.\qqed
\end{defn}

\begin{thm}\label{thm:666666}
$^*$ A tree $T$ of $q$ edges admits a set-ordered odd-graceful labeling $f$, then $T$ admits another set-ordered odd-graceful labeling $h$, such that $f(E(T))=h(E(T))=[1,2q-1]^o$, $f(V(T))\cup h(V(T))=[0,2q]$ and $|f(V(T))\cap h(V(T))|\leq 1$, and $\langle f,h \rangle$ forms a twin odd-graceful labeling.
\end{thm}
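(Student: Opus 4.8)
The plan is to construct $h$ by a single global \emph{translation} rather than a reflection: set $h(w)=f(w)+1$ for every $w\in V(T)$. Translation preserves all edge colors, since $h(uv)=|(f(u)+1)-(f(v)+1)|=|f(u)-f(v)|=f(uv)$, so $h(E(T))=f(E(T))=[1,2q-1]^o$ immediately, and $h$ is injective because $f$ is. Everything then reduces to controlling the vertex-color set, so the first real step is to pin down the exact form of $f(V(T))$.

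First I would prove a structural lemma for any set-ordered odd-graceful labeling $f$ of a tree with $q$ edges. Write the bipartition as $V(T)=X\cup Y$ with $\max f(X)<\min f(Y)$ (constraint C-6 of Definition \ref{defn:basic-W-type-labelings}) and $\min f(V(T))=0\in f(X)$; put $s=|X|$, $t=|Y|$, so $s+t=q+1$. Because every edge color $f(uv)=|f(u)-f(v)|$ is odd (C-5), the ends of each edge have opposite parities; as $T$ is connected the parity of $f$ is a proper $2$-coloring agreeing with $(X,Y)$, and since $0$ is even, $f(X)$ consists of even values and $f(Y)$ of odd values. The edge color $2q-1$ forces ends $0$ and $2q-1$, so $\max f(Y)=2q-1$; the edge color $1$ forces two consecutive labels across the cut, so $\min f(Y)=\max f(X)+1$. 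Writing $\max f(X)=2a$, the set $f(X)$ lies among the $a+1$ evens in $[0,2a]$ and $f(Y)$ among the $q-a$ odds in $[2a+1,2q-1]$, whence $s+t\le (a+1)+(q-a)=q+1$; equality forces $a=s-1$ together with $f(X)=\{0,2,\dots,2s-2\}$ and $f(Y)=\{2s-1,2s+1,\dots,2q-1\}$. This is the crux of the argument, and I expect it to be the main obstacle, both because the forced form of $f(X),f(Y)$ must be extracted from the extremal edge colors plus a counting equality, and because one must \emph{recognize} that translation (not reflection) is the right map: the reflection $f\mapsto 2q-f$ also preserves edge colors and even keeps minimum $0$, but it generally fails the union requirement (for a star it misses every interior even value), whereas translation spreads $f(V)$ and $h(V)$ into complementary positions.

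With the lemma in hand the remaining verifications are routine. Since $h=f+1$, I compute $h(X)=\{1,3,\dots,2s-1\}$ and $h(Y)=\{2s,2s+2,\dots,2q\}$, so $h$ is set-ordered ($\max h(X)=2s-1<2s=\min h(Y)$) with $h(V(T))\subseteq[1,2q]\subseteq[0,2q]$. For the union: the even elements of $f(V(T))$ are $\{0,\dots,2s-2\}$ and those of $h(V(T))$ are $\{2s,\dots,2q\}$, together exhausting the evens of $[0,2q]$, while the odd elements of $f(V(T))$ are $\{2s-1,\dots,2q-1\}$ and those of $h(V(T))$ are $\{1,\dots,2s-1\}$, together exhausting the odds; hence $f(V(T))\cup h(V(T))=[0,2q]$. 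The only color common to both is $2s-1$, so $|f(V(T))\cap h(V(T))|=1\le 1$, as required (indeed the union/size count $(q+1)+(q+1)-(2q+1)=1$ shows the intersection must be a single point once the union is all of $[0,2q]$).

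Finally I would package this as a twin odd-graceful labeling in the sense of Definition \ref{defn:22-twin-odd-graceful-labeling}: glue a copy $(T,f)$ to a copy $(T,h)$ by vertex-coinciding the single vertex whose color is $2s-1$ under both labelings. The combined labeling restricts to the odd-graceful $f$ on the first copy (item (i)), gives edge color set $[1,2q-1]^o$ on the second copy via $h$ (item (ii)), and satisfies $|f(V(T))\cap h(V(T))|=k=1$ with $f(V(T))\cup h(V(T))=[0,2q]\subseteq[0,2q]$ (item (iii)); thus $\langle f,h\rangle$ is a twin odd-graceful labeling. I would also flag explicitly the one interpretive subtlety: $h=f+1$ has minimum color $1$ and attains $2q$, so it is ``set-ordered odd-graceful'' only in the range-$[0,2q]$ twin sense, not in the strict $\min=0$, range-$[0,2q-1]$ sense of C-3 -- which is unavoidable, since reaching the value $2q$ in the union is impossible otherwise.
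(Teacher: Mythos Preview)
Your proposal is correct and takes essentially the same approach as the paper: both define $h(w)=f(w)+1$ and exploit that translation preserves edge colors while shifting the vertex-color parities. Your argument is actually more complete than the paper's --- you prove the structural lemma that $f(X)=\{0,2,\dots,2s-2\}$ and $f(Y)=\{2s-1,\dots,2q-1\}$ via the counting equality $s+t=q+1$, allowing a direct verification of the union, whereas the paper only sketches the intersection bound and leaves the union implicit; you also rightly flag the subtlety that $h$ attains $2q$ and has minimum $1$, so it is ``set-ordered odd-graceful'' only in the relaxed twin sense.
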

\begin{proof}Let the vertex set $V(T)=X\cup Y$ with $X\cap Y=\emptyset$. By the hypothesis of the theorem, we have $\max\{f(x):x\in X\}=f(X)<f(Y)=\min\{f(y):y\in Y\}$, and moreover each $f(x)$ for $x\in X$ is even, each $f(x)$ for $x\in X$ is odd according to $f(E(T))=[1,2q-1]^o$. Because of $f(u)\neq f(w)$ for nay pair of vertices $u,w\in V(T)$, without loss of generality, we have
$$0=f(x_1)<f(x_2)<\cdots <f(x_s)<f(y_1)<f(y_2)<\cdots <f(y_t)=2q-1
$$ for $x_i\in X$ and $y_j\in Y$ with $q+1=s+t$.

We set a new labeling $h$ for the tree $T$ as: $h(w)=f(w)+1$ for $w\in V(T)$, so $h(uv)=f(uv)$ for each edge $uv\in E(T)$, and
$$1=h(x_1)<h(x_2)<\cdots <h(x_s)<h(y_1)<h(y_2)<\cdots <h(y_t)=2q
$$ for $x_i\in X$ and $y_j\in Y$. Since $1=f(y)-f(x)$ for an edge $xy\in E(T)$, then we have $1=h(y)-h(x)$, and $h(x)=f(y)$, immediately, $|f(V(T))\cap h(V(T))|\leq 1$.
\end{proof}

\begin{example}\label{exa:8888888888}
In Fig.\ref{fig:top-signature-11}, the graphs $T,H_1,H_2,H_3$ and $H_4$ are topological signatures, in which $T$ admits an odd-graceful labeling $f$ with $f(E(T))=[1,19]^o$, and $H_i$ admits an odd-graceful labeling $f_i$ with $f_i(E(H_i))=[1,19]^o$ and $i\in [1,4]$, and moreover we can see each pair $\langle f,f_i\rangle $ forms a \emph{twin odd-graceful labeling} holding
$$f(V(T)\cup E(T))\bigcup f(V(H_i)\cup E(H_i))=[0,20],~f(V(T))\cap f(V(H_i))=\{9\},~i\in [1,4]
$$ However, $H_i\not \cong H_j$ for $i\neq j$.

\begin{figure}[h]
\centering
\includegraphics[width=16.4cm]{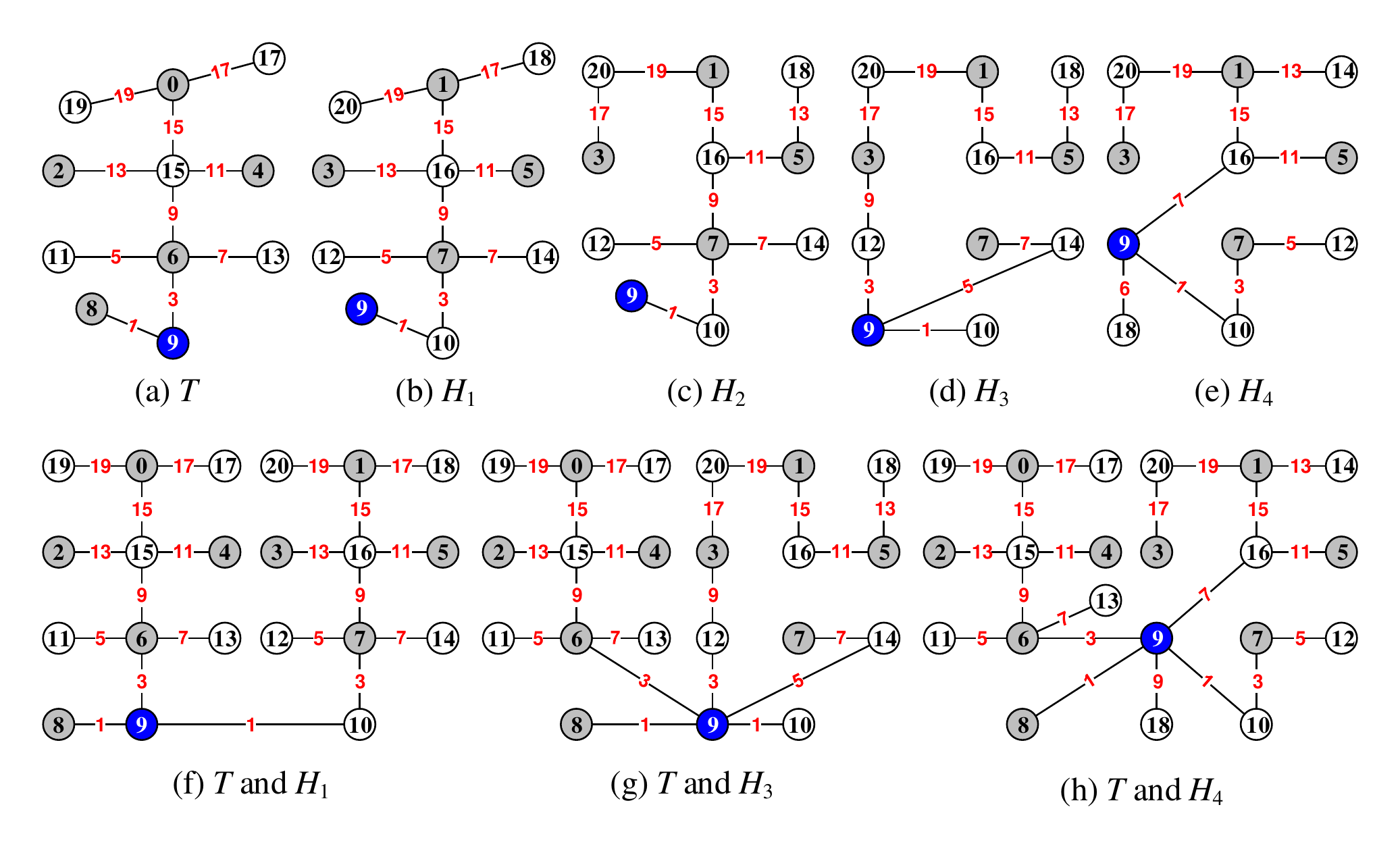}\\
\caption{\label{fig:top-signature-11}{\small Three topological signature authentications: (f) $T[\odot]H_1$; (g) $T[\odot]H_3$; (h) $T[\odot]H_4$.}}
\end{figure}

Each of Topcode-matrices $T_{code}(T,f)$ and $T_{code}(H_i,f_i)$ for $i\in [1,4]$ induces $(30)!$ different number-based strings.

\begin{equation}\label{eqa:vertex-coloring-Topcode-matrix00}
\centering
{
\begin{split}
T_{code}(T,f)= \left(
\begin{array}{ccccccccccc}
8 & 6 & 6 & 6 & 6 & 4 & 2 & 0 & 0 & 0\\
1 & 3 & 5 & 7 & 9 & 11 & 13 & 15 & 17 & 19\\
9 & 9 & 11 & 13 & 15 & 15 & 15 & 15 & 17 & 19
\end{array}
\right)
\end{split}}
\end{equation}

\begin{equation}\label{eqa:vertex-coloring-Topcode-matrix11}
\centering
{
\begin{split}
T_{code}(H_1,f_1)= \left(
\begin{array}{ccccccccccc}
9 & 7 & 7 & 7 & 7 & 5 & 3 & 1 & 1 & 1\\
1 & 3 & 5 & 7 & 9 & 11 & 13 & 15 & 17 & 19\\
10 & 10 & 12 & 14 & 16 & 16 & 16 & 16 & 18 & 20
\end{array}
\right)
\end{split}}
\end{equation}

\begin{equation}\label{eqa:vertex-coloring-Topcode-matrix22}
\centering
{
\begin{split}
T_{code}(H_2,f_2)= \left(
\begin{array}{ccccccccccc}
9 & 7 & 7 & 7 & 7 & 5 & 5 & 1 & 3 & 1\\
1 & 3 & 5 & 7 & 9 & 11 & 13 & 15 & 17 & 19\\
10 & 10 & 12 & 14 & 16 & 16 & 18 & 16 & 20 & 20
\end{array}
\right)
\end{split}}
\end{equation}

\begin{equation}\label{eqa:vertex-coloring-Topcode-matrix33}
\centering
{
\begin{split}
T_{code}(H_3,f_3)= \left(
\begin{array}{ccccccccccc}
9 & 9 & 9 & 7 & 3 & 5 & 5 & 1 & 3 & 1\\
1 & 3 & 5 & 7 & 9 & 11 & 13 & 15 & 17 & 19\\
10 & 12 & 14 & 14 & 12 & 16 & 18 & 16 & 20 & 20
\end{array}
\right)
\end{split}}
\end{equation}

\begin{equation}\label{eqa:vertex-coloring-Topcode-matrix44}
\centering
{
\begin{split}
T_{code}(H_4,f_4)= \left(
\begin{array}{ccccccccccc}
9 & 7 & 7 & 9 & 9 & 5 & 1 & 1 & 3 & 1\\
1 & 3 & 5 & 7 & 9 & 11 & 13 & 15 & 17 & 19\\
10 & 10 & 12 & 16 & 18 & 16 & 14 & 16 & 20 & 20
\end{array}
\right)
\end{split}}
\end{equation}

The Topcode-matrices of the topological signature authentications based on the graphs $T,H_1,H_3$ and $H_4$ shown in Fig.\ref{fig:top-signature-11} are as follows
$$
T_{code}(T[\odot]H_i,g_i)=T_{code}(T,f)\cup T_{code}(H_i,f_i)
$$ for $i=1,3,4$.\qqed
\end{example}

\begin{thm}\label{them:twin-w-type-total-colorings}
\cite{Yao-Zhang-Yang-Wang-Odd-Edge-arXiv-02477} If a connected bipartite $(p,q)$-graph $G$ admits an odd-edge graceful-difference total coloring $f^*$, then there exists a bipartite graph $G^*$ admitting an odd-edge graceful-difference total coloring $g^*$, such that $\langle f^*, g^*\rangle $ forms a \emph{twin set-ordered odd-edge graceful-difference total coloring} of the $(p,q)$-graph $G$ and the graph $G^*$.
\end{thm}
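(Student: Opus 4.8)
The plan is to mirror the construction used in the proof of the twin odd-graceful theorem above, replacing the graceful (edge-difference) bookkeeping by the graceful-difference constraint. First I would unpack the hypothesis through Definition \ref{defn:odd-edge-W-type-total-labelings-definition}: writing $V(G)=X\cup Y$ with $X\cap Y=\emptyset$, the coloring $f^*$ sends $X$ into $S_{m,0,0,d}$, sends $Y$ and $E(G)$ into $S_{n,k,0,d}\cup O_{q,k,0,d}$, has odd edge color set $f^*(E(G))=O_{q,k,0,d}$, and obeys $\big||f^*(u)-f^*(v)|-f^*(uv)\big|=c$ on every edge $uv$. I would take $G^*\cong G$ as an isomorphic copy, so that the two graphs automatically share $q$ edges, and then construct $g^*$ from $f^*$ by an affine transformation acting only on the vertex colors while leaving the edge colors fixed.

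The central observation is that both a translation $g^*(w)=f^*(w)+t$ and a complementation $g^*(w)=\beta-f^*(w)$ on the vertices satisfy $|g^*(u)-g^*(v)|=|f^*(u)-f^*(v)|$; hence, keeping $g^*(uv)=f^*(uv)$ for each edge, the graceful-difference constraint $\big||g^*(u)-g^*(v)|-g^*(uv)\big|=c$ holds with the very same constant $c$, and the edge color set is unchanged, $g^*(E(G^*))=O_{q,k,0,d}$, so $g^*$ is again an odd-edge graceful-difference total coloring. For the twin pairing I would use the complementation $g^*(w)=(k+2qd)-f^*(w)$ (resp. the shift $g^*(w)=f^*(w)+1$ mimicking the earlier proof); complementation reverses the set-ordered inequality $f^*(X)<f^*(Y)$ into $g^*(Y)<g^*(X)$, so I would designate the bipartition of $G^*$ as $(Y,X)$, whereupon the set-ordered restriction on $G^*$, condition (vi-2) of Definition \ref{defn:group-definition-twin-total-labelingss}, is satisfied.

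It then remains to check the containment and matching conditions: condition (vi-3), equivalently the edge condition of Definition \ref{defn:twin-odd-edge-w-labelings-coloringsn}, is immediate since $f^*(E(G))=g^*(E(G^*))=O_{q,k,0,d}$, and I would choose the affine constant so that (vi-5), namely $f^*(V(G))\cup g^*(V(G^*))\subseteq [0,k+2qd]$, holds, with the intersection $|f^*(V(G))\cap g^*(V(G^*))|$ controlled exactly as in the earlier twin theorem. Conditions (vi-1) and (vi-4) then record precisely that $g^*$ is the coloring just built, carrying the same constant $c$.

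The hard part will be pinning down the affine constant so that all the parametrized set memberships survive simultaneously: the complemented (or shifted) vertex colors must stay non-negative and inside the prescribed range, must keep the parity that forces the edge colors to remain in the odd set $O_{q,k,0,d}$, must preserve the strictness of the set-ordered inequality, and must realise $g^*$ as a coloring of exactly the form demanded by Definition \ref{defn:odd-edge-W-type-total-labelings-definition}, with the $X$-image landing in $S_{m,0,0,d}$ and the $Y$-image in $S_{n,k,0,d}$. Verifying these memberships is routine but fiddly, and is where the bulk of the case analysis lives; once the constant is fixed correctly, the twin conditions follow directly and $\langle f^*,g^*\rangle$ is the desired twin set-ordered odd-edge graceful-difference total coloring.
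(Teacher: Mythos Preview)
The paper does not supply its own proof of this theorem: the statement is imported verbatim from \cite{Yao-Zhang-Yang-Wang-Odd-Edge-arXiv-02477} and is stated without argument here. So there is no in-paper proof to compare against.

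That said, your approach is entirely in the spirit of the paper's proofs of the neighbouring twin-type results. The nearby theorem on twin odd-graceful labelings of trees is proved exactly by the shift $h(w)=f(w)+1$ on vertices with edge colors unchanged, and your plan to use either the $+1$ shift or the complementation $w\mapsto (k+2qd)-f^*(w)$, together with $G^*\cong G$, is the natural analogue. Your key observation that either affine map preserves $|f^*(u)-f^*(v)|$, hence preserves the graceful-difference constraint with the same constant $c$ once edge colors are held fixed, is correct and is the heart of the matter.

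One caution: the definitions here are parametrized by $(k,d)$, and Definition~\ref{defn:odd-edge-W-type-total-labelings-definition} requires the $X$-part to land in $S_{m,0,0,d}$ and the $Y$-part in $S_{n,k,0,d}$. Under complementation the roles of $X$ and $Y$ swap, as you note, but you should check that the complemented $X$-colors (which become the new $Y$-part) actually lie in a set of the form $S_{n',k,0,d}$ with the \emph{same} offset $k$, and likewise that the new $X$-part has offset $0$. With the simple $+1$ shift this can fail when $d>1$, since $f^*(x)+1$ need not be a multiple of $d$; complementation by $k+2qd$ is safer but still needs the offsets checked. This is precisely the ``fiddly'' bookkeeping you flag, and it is where a careful write-up should focus.
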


The authors \cite{Wang-Xu-Yao-2017} have defined a \emph{twin odd-elegant labeling} $\langle f,g\rangle $ for a vertex-coincided graph $(p+p\,'-2,2q)$-graph $G[\odot]H$ by

(i) \cite{Zhou-Yao-Chen2013} An \emph{odd-elegant labeling} $f$ of a $(p,q)$-graph $G$ holds $f(V(G))\subset [0,2q-1]$, $f(u)\neq f(v)$ for distinct $u,v\in V(G)$, and
$$f(E(G))=\{f(uv)=f(u)+f(v)~(\bmod~2q):uv\in E(G)\}=[1,2q-1]^o$$

(ii) Another $(p\,',q)$-graph $H$ admits an odd-elegant labeling $g:V(H))\rightarrow [0,2q-1]$, holding $g(u)\neq g(v)$ for distinct $u,v\in V(H)$, and
$$g(E(H))=\{g(uv)=g(u)+g(v)~(\bmod~2q):uv\in E(H)\}=[1,2q-1]^o$$

\begin{cor}\label{thm:666666}
$^*$ A tree $T$ of $q$ edges admits a set-ordered odd-elegant labeling $f$, then $T$ admits another set-ordered odd-elegant labeling $g$, such that
$$f(E(T))=g(E(T))=[1,2q-1]^o,~f(V(T))\cup g(V(T))=[0,2q],~|f(V(T))\cap g(V(T))|\leq 1
$$ that is $\langle f,g \rangle$ forms a twin odd-elegant labeling.
\end{cor}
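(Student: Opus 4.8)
The plan is to imitate the construction used for the twin set-ordered odd-graceful theorem proved earlier, namely to produce the companion labeling by the uniform shift $g(w)=f(w)+1$ for every $w\in V(T)$, and then to verify the three twin conditions $f(E(T))=g(E(T))=[1,2q-1]^o$, $f(V(T))\cup g(V(T))=[0,2q]$ and $|f(V(T))\cap g(V(T))|\le 1$. The essential new feature, compared with the odd-graceful case, is that an odd-elegant edge color is a \emph{sum} $f(uv)=f(u)+f(v)\ (\bmod\ 2q)$ rather than a difference $|f(u)-f(v)|$; hence the shift does not fix each edge color but sends it to $f(uv)+2\ (\bmod\ 2q)$. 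So the first thing to settle is that this perturbation leaves the edge-color \emph{set} invariant, which it does because $x\mapsto x+2\ (\bmod\ 2q)$ is a bijection of the odd residues $[1,2q-1]^o$.

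First I would fix the bipartition $V(T)=X\cup Y$ with $\max f(X)<\min f(Y)$ and record the parity structure forced by the hypotheses. Since every edge color $f(u)+f(v)\ (\bmod\ 2q)$ is odd and $2q$ is even, the two ends of each edge carry opposite parities; together with the normalization $\min f(V(T))=0$ and the set-ordered inequality this forces $f(X)$ to consist of even values containing $0$ and $f(Y)$ of odd values. I would then establish the structural lemma that, for a set-ordered odd-elegant labeling of a tree with $|X|=s$, one has $f(X)=\{0,2,\dots,2(s-1)\}$ and $f(Y)=\{2s-1,2s+1,\dots,2q-1\}$. This is the odd-elegant analogue of the familiar consecutive-block description of set-ordered (odd-)graceful labelings, and is consistent with the labeling equivalences of Theorem \ref{thm:connections-several-labelings}.

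Granting this, the verification is short. By the bijection above, $g(E(T))=\{f(uv)+2\ (\bmod\ 2q):uv\in E(T)\}=[1,2q-1]^o=f(E(T))$, and $g$ is again injective and set-ordered since $\max g(X)=2s-1<2s=\min g(Y)$. For the vertex conditions the shift turns $f(X)$ into the odd block $\{1,3,\dots,2s-1\}$ and $f(Y)$ into the even block $\{2s,2s+2,\dots,2q\}$, so that $f(X)\cup g(X)=[0,2s-1]$ and $f(Y)\cup g(Y)=[2s-1,2q]$, whence
\[
f(V(T))\cup g(V(T))=[0,2s-1]\cup[2s-1,2q]=[0,2q],
\]
while the only value common to $f(V(T))$ and $g(V(T))$ is $2s-1$, giving $|f(V(T))\cap g(V(T))|\le 1$. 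This exhibits $\langle f,g\rangle$ as a twin odd-elegant labeling, exactly parallel to Definition \ref{defn:22-twin-odd-graceful-labeling}.

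The main obstacle is the structural lemma of the second step: unlike the difference-based odd-graceful setting, the sum-modulo-$2q$ rule makes the edge colors non-monotone in the vertex labels, so I cannot simply read off $f(X)$ and $f(Y)$ from the extreme edge colors. I would handle it by analysing the edges incident with the vertex colored $0$ (whose colors coincide with the $f$-labels of its neighbors) and propagating a counting/packing argument up the blocks to force both $f(X)$ and $f(Y)$ to be consecutive. A second, minor point to flag is the codomain: the shift makes $g$ attain the value $2q$, so $g$ is odd-elegant only in the extended $[0,2q]$ sense appropriate to twin labelings — precisely as the companion labeling $h=f+1$ already reached $2q$ in the twin odd-graceful theorem.
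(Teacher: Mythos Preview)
Your approach is correct and matches the paper's intended argument: the corollary is stated without proof immediately after the odd-graceful twin theorem, and is meant to follow by the same shift $g=f+1$. Your observation that $c\mapsto c+2\pmod{2q}$ permutes $[1,2q-1]^o$ is exactly the one new ingredient needed over the odd-graceful case.

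The detour through the structural lemma is unnecessary, though. Once you have the parity split and $\max f(X)<\min f(Y)$, the intersection $f(V(T))\cap g(V(T))$ can be bounded directly: an element in the intersection must lie in $f(X)\cap(f(Y)+1)$ or $f(Y)\cap(f(X)+1)$; the first is empty since $f(x)=f(y)+1$ contradicts $f(x)<f(y)$, and the second forces $f(x)=\max f(X)$, $f(y)=\min f(Y)$, giving at most one element. Then counting does the rest: $|f(V(T))\cup g(V(T))|\ge 2(q+1)-1=2q+1=|[0,2q]|$, so the union is all of $[0,2q]$. This is precisely the skeleton of the paper's odd-graceful proof. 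Incidentally, your structural lemma is true and needs no edge-by-edge propagation: with $s=|X|$, the $s$ even values of $f(X)$ lie below $\min f(Y)$ and the $t=q+1-s$ odd values of $f(Y)$ lie in $[\min f(Y),2q-1]$, so pigeonhole forces $\max f(X)\ge 2s-2$ and $\min f(Y)\le 2s-1$, and set-orderedness squeezes these to equalities, giving the consecutive blocks.
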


\begin{lem}\label{thm:adding-edge-subtracting-dual-graph}
$^*$ Suppose that graph $G$ admits an odd-graceful labeling $g$, and $g(uv)=g(xy)$ for two edges $xy\not \in E(G)$ with $x,y\in V(G)$ and $uv\in E(G)$, then the adding-edge-subtracting dual graph $H=G+xy-uv$ admits an odd-graceful labeling $g^*$ induced by the odd-graceful labeling $g$.
\end{lem}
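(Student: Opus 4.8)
The plan is to prove the lemma by the simplest possible route: keep the vertex labeling completely unchanged and verify that the three defining conditions of an odd-graceful labeling survive the edge swap. Since $H=G+xy-uv$ has the same vertex set as $G$ (no vertices are added or removed) and the same edge count $|E(H)|=|E(G)|=q$ (one edge added, one deleted), $H$ is again a $(p,q)$-graph. So I would set $g^{*}=g$ on $V(H)=V(G)$ and reduce the claim to checking constraints C-1, C-3 and C-5 of Definition \ref{defn:basic-W-type-labelings} for $g^{*}$ on $H$.

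First I would dispense with the vertex conditions, which are inherited for free. Because $g^{*}$ and $g$ agree on the unchanged vertex set, we have $|g^{*}(V(H))|=|g(V(H))|=p$, which is C-1; and $g^{*}(V(H))=g(V(G))\subseteq [0,2q-1]$ with $\min g^{*}(V(H))=\min g(V(G))=0$, which is C-3. I would also record the harmless point that $x\neq y$: since $g(xy)$ is by hypothesis an odd positive induced color, $|g(x)-g(y)|\ge 1$, so $x$ and $y$ are distinct, and since $xy\notin E(G)$ the addition of $xy$ produces a genuine simple edge, so $H$ is a simple graph.

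Next I would handle the only condition with real content, C-5. The edge set of $H$ is $E(H)=\big(E(G)\setminus\{uv\}\big)\cup\{xy\}$. Every edge of $E(G)\setminus\{uv\}$ keeps its induced color under $g^{*}=g$, and the single new edge $xy$ receives the induced color $|g^{*}(x)-g^{*}(y)|=|g(x)-g(y)|=g(xy)$, which by hypothesis equals $g(uv)$. Hence, passing from $G$ to $H$ deletes the edge carrying color $g(uv)$ and simultaneously reinstates exactly that same color on $xy$, so the multiset of induced edge colors is literally unchanged and equals $g(E(G))=[1,2q-1]^{o}$.

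The step I would treat most carefully — though it is the crux rather than a genuine difficulty — is confirming that $g^{*}$ induces a \emph{bijection} from $E(H)$ onto $[1,2q-1]^{o}$, not merely that the color set is right. Here I would invoke that $g$ is odd-graceful on $G$, so each odd value in $[1,2q-1]^{o}$ is realized by exactly one edge of $G$; in particular $g(uv)$ is used only by $uv$. After deleting $uv$ the value $g(uv)$ is free in $H$, and $xy$ is the unique edge of $H$ taking it, while all other colors retain their unique preimages. Thus no color is duplicated or lost, C-5 holds, and $g^{*}=g$ is an odd-graceful labeling of $H$ induced by $g$, completing the argument. I expect no serious obstacle, since the entire content is packaged into the single hypothesis $g(uv)=g(xy)$.
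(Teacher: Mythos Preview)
Your proof is correct and complete. The paper states this lemma without proof (it appears only as a stated result, immediately followed by the next lemma), so there is nothing to compare against; your argument --- keeping $g^{*}=g$ on the unchanged vertex set and observing that the edge swap replaces one edge label with an identical one, so C-1, C-3, C-5 all carry over --- is exactly the natural and presumably intended justification.
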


\begin{lem}\label{thm:one-graph-twin-odd-graceful-labeling}
$^*$ If a connected bipartite $(p,q)$-graph $G$ admits an odd-graceful labeling $f$, then $G$ admits another odd-graceful labeling $f^*$, such that $\langle f,f^* \rangle$ forms a twin odd-graceful labeling holding
$$f(E(G))=f^*(E(G))=[1,2q-1]^o,~f(V(G))\cup f^*(V(G))\subseteq [0,2q]
$$
\end{lem}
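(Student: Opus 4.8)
The plan is to produce $f^*$ as the \emph{complementary} (reflected) labeling of the given odd-graceful labeling $f$, which is the cleanest way to manufacture a twin partner without altering the underlying graph. Concretely, I would define $f^*(v) = (2q-1) - f(v)$ for every vertex $v \in V(G)$, working on a disjoint isomorphic copy $G^*$ of $G$ so that the pair can later be glued into a vertex-coincided graph as required by Definition \ref{defn:22-twin-odd-graceful-labeling}.

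First I would record the two facts that make this reflection legitimate. Since $f$ is odd-graceful, its edge-color set is exactly $[1,2q-1]^o$, so some edge $uv$ receives the color $|f(u)-f(v)| = 2q-1$; as $f(V(G)) \subseteq [0,2q-1]$, the only way to realize a difference of $2q-1$ inside $[0,2q-1]$ is $\{f(u),f(v)\}=\{0,2q-1\}$. Hence both $0$ and $2q-1$ lie in $f(V(G))$, giving $\min f(V(G)) = 0$ and $\max f(V(G)) = 2q-1$. This is the one genuinely non-formal observation in the argument, and I expect it to be the crux: everything else is bookkeeping that hinges on $\max f(V(G))$ attaining $2q-1$.

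Next I would verify that $f^*$ is again an odd-graceful labeling of $G$. Injectivity is inherited from $f$. For the edge colors I would compute $f^*(uv) = |f^*(u) - f^*(v)| = |(2q-1-f(u)) - (2q-1-f(v))| = |f(u)-f(v)| = f(uv)$, so $f^*(E(G)) = f(E(G)) = [1,2q-1]^o$, which simultaneously gives the required equality $f(E(G)) = f^*(E(G)) = [1,2q-1]^o$. For the range, $f^*(V(G)) \subseteq [0,2q-1]$, and using the two extremal facts above, $\min f^*(V(G)) = 2q-1-\max f(V(G)) = 0$ and $\max f^*(V(G)) = 2q-1 - \min f(V(G)) = 2q-1$; this confirms constraints C-1, C-3, C-5 of Definition \ref{defn:basic-W-type-labelings}. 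I would also note $f^* \ne f$ (otherwise $f(v) = (2q-1)/2$ for every $v$, impossible for an integer-valued injection on more than one vertex since $2q-1$ is odd), so $f^*$ is genuinely ``another'' labeling.

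Finally I would assemble the twin. Since $f(V(G)) \cup f^*(V(G)) \subseteq [0,2q-1] \subseteq [0,2q]$, the union condition of Definition \ref{defn:22-twin-odd-graceful-labeling} holds, with intersection cardinality $k = |f(V(G)) \cap f^*(V(G))| \ge 0$ automatically. I would then take disjoint copies $G_1 = (G,f)$ and $G_2 = (G^*, f^*)$ and vertex-coincide the vertices carrying equal labels to form $G_1[\odot]G_2$; the glued labeling restricts to the odd-graceful labeling $f$ on $G_1$ and to $f^*$ (with edge-color set $[1,2q-1]^o$) on $G_2$, so by definition $\langle f, f^*\rangle$ is a twin odd-graceful labeling and the stated relations hold. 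The whole argument is short, and the only place where something could go wrong is if $\max f(V(G))$ failed to equal $2q-1$, which is exactly why I would isolate that step first.
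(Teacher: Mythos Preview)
Your reflection argument $f^*(v)=(2q-1)-f(v)$ is correct: the verification that $0,2q-1\in f(V(G))$ from the edge of color $2q-1$, the preservation of edge colors under reflection, and the range check all go through exactly as you wrote, and the resulting $f^*$ satisfies C-1, C-3, C-5 of Definition~\ref{defn:basic-W-type-labelings} on the nose. The twin conditions $f(E(G))=f^*(E(G))=[1,2q-1]^o$ and $f(V(G))\cup f^*(V(G))\subseteq [0,2q-1]\subseteq[0,2q]$ then follow immediately.

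The paper does not supply a proof of this lemma, but for the closely related set-ordered tree case (Theorem~\ref{thm:666666} just above, the one beginning ``A tree $T$ of $q$ edges admits a set-ordered odd-graceful labeling $f$\dots'') the paper uses the \emph{shift} $h(w)=f(w)+1$ rather than your reflection. The shift also preserves edge colors and lands the union in $[0,2q]$, and in the set-ordered situation it yields the sharper conclusions $f(V(T))\cup h(V(T))=[0,2q]$ and $|f(V(T))\cap h(V(T))|\le 1$ by exploiting the parity structure of $X$ and $Y$. Your reflection, by contrast, does not need the set-ordered hypothesis (nor, in fact, bipartiteness, which is automatic for odd-graceful graphs) and has the advantage that $f^*$ remains strictly inside $[0,2q-1]$ with $\min f^*(V(G))=0$, so $f^*$ is literally odd-graceful under Definition~\ref{defn:basic-W-type-labelings}; the shifted $h$ has minimum $1$ and maximum $2q$, which the paper tacitly accepts as odd-graceful in the twin context. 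For the weaker conclusion of this lemma either device suffices, and yours is arguably the cleaner fit.
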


\begin{defn} \label{defn:edge-separably-uniformly-coloring}
$^{*}$ Suppose that each graph $G_i$ with $i\in [1,n]$ is a proper subgraph of a $(p,q)$-graph $G$ holding $E(G)=\bigcup ^n_{i=1}E(G_i)$ with $E(G_i)\cap E(G_j)=\emptyset$ if $i\neq j$. If the graph $G$ admits a $W$-constraint total coloring $F:V(G)\cup E(G)\rightarrow [0,M]$, such that $F(u_{i,r}v_{i,s})=W\langle F(u_{i,r}),F(v_{i,s})\rangle $ for each edge $u_{i,r}v_{i,s}\in E(G_i)$ with $i\in [1,n]$,
\begin{equation}\label{eqa:555555}
F(V(G))=\bigcup^n_{i=1}F(V(G_i)),~F(E(G))=\bigcup^n_{i=1}F(E(G_i))
\end{equation} then we call $F$

(i) \emph{edge-separably $W$-constraint total coloring} of the graph $G$ if $F(E(G_i))\cap F(E(G_j))=\emptyset$ for $i\neq j$.

(ii) \emph{edge-uniformly $W$-constraint total coloring} of the graph $G$ if $F(E(G_i))=F(E(G_j))$ for any pair of $i,j\in [1,n]$.\qqed
\end{defn}

\begin{defn} \label{defn:separably-uniformly-colorings-labelings}
$^{*}$ By the hypothesis of Definition \ref{defn:edge-separably-uniformly-coloring}, we have the following definitions:
\begin{asparaenum}[\textbf{\textrm{Subdefi}}-1.]
\item If the edge color set
$$F(E(G))=\{F(uv)=|F(u)-F(v)|:uv\in E(G)\}=[1,q]
$$ and the vertex color sets $F(V(G_i))\neq F(V(G_j))$ and the edge color sets $F(E(G_i))\cap F(E(G_j))=\emptyset$ for $i\neq j$, then $F$ is called \emph{edge-separably graceful total coloring}, and $F$ is called \emph{edge-separably graceful total labeling} if $|F(V(G))|=p$.
\item If the edge color set
$$F(E(G))=\{F(uv)=|F(u)-F(v)|:uv\in E(G)\}=[1,2q-1]^o
$$ and the vertex color sets $F(V(G_i))\neq F(V(G_j))$ for $i\neq j$, and the edge color sets $F(E(G_i))\cap F(E(G_j))=\emptyset$ for $i\neq j$, then $F$ is called \emph{edge-separably odd-graceful total coloring}, and moreover $F$ is called \emph{edge-separably odd-graceful total labeling} if $|F(V(G))|=p$.
\item If $F(E(G))=\{F(uv)=|F(u)-F(v)|:uv\in E(G)\}$, and the edge color sets
$$F(E(G))=F(E(G_i))=[1,2q^*-1]^o,~q^*=|F(E(G))|=|F(E(G_i))|
$$ with $i\in [1,n]$ hold true, and the vertex color sets $F(V(G_i))\neq F(V(G_j))$ for $i\neq j$, then $F$ is called \emph{edge-uniformly odd-graceful total coloring}, and moreover $F$ is called \emph{edge-uniformly odd-graceful total labeling} if $|F(V(G))|=p$.
\item If the edge-magic constraint $F(u_i)+F(u_iv_i)+F(v_i)=k_i$ for each edge $u_iv_i\in E(G_i)$ holds true, where $k_i$ is a positive integer, and the edge color set $F(E(G_i))=[1,2q_i-1]^o$ with $q_i=|E(G_i)|$, then $F$ is called \emph{odd-edge-uniformly edge-magic total coloring}.
\item If the edge-difference constraint $F(u_iv_i)+|F(u_i)-F(v_i)|=k_i$ for each edge $u_iv_i\in E(G_i)$ holds true, where $k_i$ is a positive integer, and the edge color set $F(E(G_i))=[1,2q_i-1]^o$ with $q_i=|E(G_i)|$, then $F$ is called \emph{odd-edge-uniformly edge-difference total coloring}.
\item If the felicitous-difference constraint $|F(u_i)+F(v_i)-F(u_iv_i)|=k_i$ for each edge $u_iv_i\in E(G_i)$ holds true, where integer $k_i\geq 0$, and the edge color set $F(E(G_i))=[1,2q_i-1]^o$ with $q_i=|E(G_i)|$, then $F$ is called \emph{odd-edge-uniformly felicitous-difference total coloring}.
\item If the graceful-difference constraint $\big ||F(u_i)-F(v_i)|-F(u_iv_i)\big |=k_i$ for each edge $u_iv_i\in E(G_i)$ holds true, where integer $k_i\geq 0$, and the edge color set $F(E(G_i))=[1,2q_i-1]^o$ with $q_i=|E(G_i)|$, then $F$ is called \emph{odd-edge-uniformly graceful-difference total coloring}.\qqed
\end{asparaenum}
\end{defn}

\begin{figure}[h]
\centering
\includegraphics[width=16.4cm]{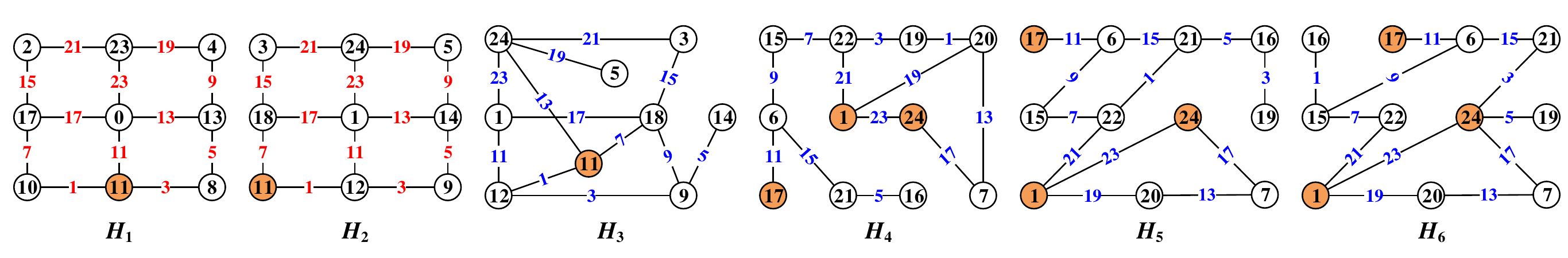}\\
\caption{\label{fig:multi-odd-graceful-matching}{\small Examples for understanding Definition \ref{defn:separably-uniformly-colorings-labelings}.}}
\end{figure}

\begin{example}\label{exa:8888888888}
In Fig.\ref{fig:multi-odd-graceful-matching}, each graph of $T_{12j}=H_1\cup H_2\cup H_j$ for $j\in [4,6]$ and $T_{13k}=H_1\cup H_3\cup H_k$ for $k\in [4,6]$ admits an edge-uniformly odd-graceful total coloring defined in Definition \ref{defn:separably-uniformly-colorings-labelings}. And we have the \emph{adding-edge-subtracting graph homomorphisms} $H_3\rightarrow^e_{\pm} H_2$, $H_5\rightarrow^e_{\pm} H_4$, $H_6\rightarrow^e_{\pm} H_4$ and $H_6\rightarrow^e_{\pm} H_5$.

We have a graph $G=\bigcup^6_{i=1}G_i$ shown in Fig.\ref{fig:multi-odd-graceful-group}, such that $G$ admits an edge-uniformly odd-graceful total coloring $F$ holding $F(E(G))=F(E(G_i))=[1,23]^o$ for $i\in [1,6]$, and $F(V(G_i))\neq F(V(G_j))$ for $i\neq j$, as well as $F(V(G))=[0,28]$.

In Fig.\ref{fig:multi-edge-difference-group}, the graphs $I_{1,2,i}=L_1\cup L_2\cup L_i$ for $i\in [5,7]$ and $I_{1,3,j}=L_1\cup L_3\cup L_j$ for $j\in [5,7]$ admit odd-edge-uniformly edge-difference total colorings defined in Definition \ref{defn:separably-uniformly-colorings-labelings}.\qqed
\end{example}

\begin{figure}[h]
\centering
\includegraphics[width=16.4cm]{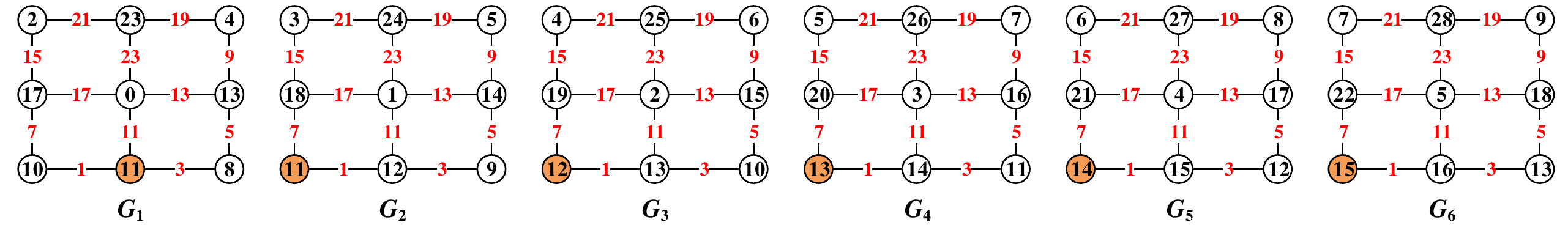}\\
\caption{\label{fig:multi-odd-graceful-group}{\small An odd-graceful graphic group for understanding the edge-uniformly odd-graceful total coloring defined in Definition \ref{defn:separably-uniformly-colorings-labelings}.}}
\end{figure}

\begin{figure}[h]
\centering
\includegraphics[width=16.4cm]{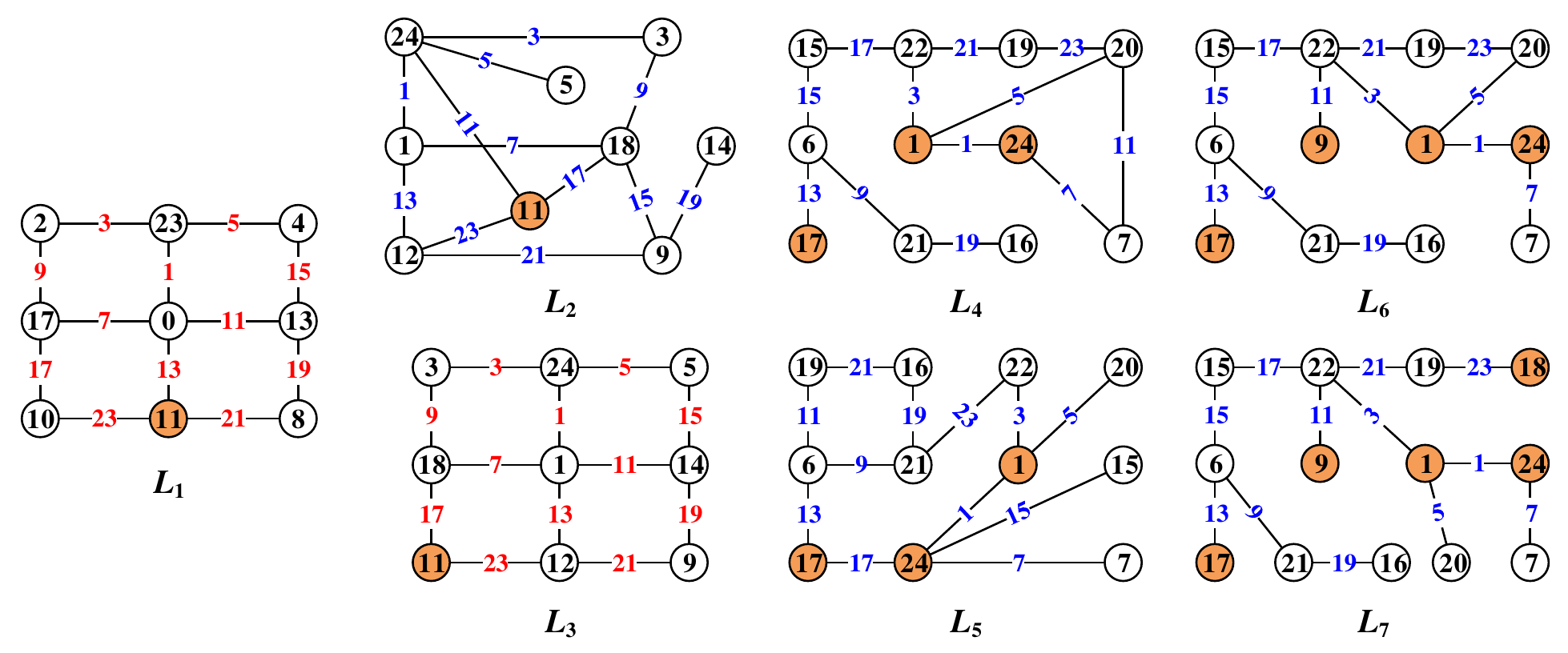}\\
\caption{\label{fig:multi-edge-difference-group}{\small Examples for the odd-edge-uniformly edge-difference total coloring defined in Definition \ref{defn:separably-uniformly-colorings-labelings}.}}
\end{figure}

By Definition \ref{defn:edge-separably-uniformly-coloring} and Definition \ref{defn:separably-uniformly-colorings-labelings}, Theorem \ref{thm:10-k-d-total-coloringss}, Lemma \ref{thm:adding-edge-subtracting-dual-graph} and Lemma \ref{thm:one-graph-twin-odd-graceful-labeling}, we have
\begin{thm}\label{thm:666666}
$^{*}$ Each tree $T$ can be vertex-split into mutually edge-disjoint subtrees $T_1,T_2,\dots ,T_m$, such that $T=[\odot ]^m_{i=1}T_i$ with $|V(T_i)\cap V(T_j)|\leq 1$. Then $T$ admits an edge-separably graceful total coloring and an edge-separably odd-graceful total coloring defined in Definition \ref{defn:separably-uniformly-colorings-labelings}.
\end{thm}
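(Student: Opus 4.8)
The plan is to combine two ingredients already available in the excerpt: the existence of graceful and odd-graceful $(k,d)$-total colorings of trees guaranteed by Theorem \ref{thm:10-k-d-total-coloringss}, together with the \emph{finest} possible vertex-splitting, namely splitting $T$ into its individual edges. First I would record the decomposition. Splitting every vertex $x$ of $T$ with $\textrm{deg}_T(x)=d\geq 2$ into $d$ copies, one incident to each edge at $x$, turns $T$ into $q=|E(T)|$ pairwise vertex-disjoint copies of $K_2$, say $T_1,T_2,\dots,T_q$. Then $T=[\odot]^q_{i=1}T_i$, the subtrees are mutually edge-disjoint, and since $T$ is simple any two distinct edges meet in at most one vertex, so $|V(T_i)\cap V(T_j)|\leq 1$. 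This settles the structural half of the statement with $m=q$.

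Next I would produce the colorings. Assume first $D(T)\geq 3$. By Theorem \ref{thm:10-k-d-total-coloringss} with $W=$ graceful, $T$ admits a graceful $(k,d)$-total coloring; specializing $(k,d)=(1,1)$ gives a total coloring $F$ with $F(uv)=|F(u)-F(v)|$ for every edge $uv$ and edge color set $F(E(T))=S_{q-1,1,0,1}=[1,q]$. Similarly, the specialization $(k,d)=(1,2)$ (the odd-graceful case discussed in Remark \ref{rem:kd-w-tupe-colorings-definition}) yields an odd-graceful total coloring $F'$ with $F'(uv)=|F'(u)-F'(v)|$ and $F'(E(T))=[1,2q-1]^o$. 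In each case the edge color set has exactly $q$ elements while $T$ has $q$ edges, so the edge-color assignment is a bijection: all edge colors are distinct.

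Then I would verify the conditions of Definition \ref{defn:separably-uniformly-colorings-labelings} for the single-edge decomposition. Each $F(E(T_i))=\{F(u_iv_i)\}$ is a singleton, and since the edge colors are pairwise distinct we get $F(E(T_i))\cap F(E(T_j))=\emptyset$ for $i\neq j$; moreover $F(E(T))=\bigcup^q_{i=1}F(E(T_i))=[1,q]$ and $F(V(T))=\bigcup^q_{i=1}F(V(T_i))$, as required. The remaining and crucial condition is $F(V(T_i))\neq F(V(T_j))$ for $i\neq j$. Here $F(V(T_i))=\{F(u_i),F(v_i)\}$, and the graceful constraint forces $F(u_iv_i)=|F(u_i)-F(v_i)|$; hence if two edges had the same vertex color set $\{a,b\}$ they would receive the same edge color $|a-b|$, contradicting the distinctness of edge colors just established. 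Thus the vertex color sets are pairwise distinct, and $F$ is an edge-separably graceful total coloring. The identical argument applied to $F'$, using $F'(E(T))=[1,2q-1]^o$, yields an edge-separably odd-graceful total coloring.

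The hard part is exactly this last distinctness condition $F(V(T_i))\neq F(V(T_j))$. Because the Graceful Tree Conjecture (Conjecture \ref{conj:Alexander-Rosa-graceful}) is open, I cannot assume $T$ carries a genuine graceful \emph{labeling} with all vertex colors distinct, so repeated vertex colors are unavoidable and distinctness of the sets $F(V(T_i))$ is not automatic for an arbitrary splitting. The device that resolves this is choosing the finest splitting — into single edges — so that the coupling $F(uv)=|F(u)-F(v)|$ transfers the guaranteed distinctness of edge colors directly into distinctness of vertex color pairs. Finally I would dispose of the low-diameter cases excluded by Theorem \ref{thm:10-k-d-total-coloringss}: trees with $D(T)\leq 2$ are $K_2$ and the stars $K_{1,n}$, for which the explicit graceful coloring (center $0$, leaves and edges $1,2,\dots,n$) and its odd-graceful analogue are written down directly, and the same single-edge verification applies.
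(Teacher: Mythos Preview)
Your argument is correct. The paper does not give a worked-out proof here; it simply records that the result follows from Definitions \ref{defn:edge-separably-uniformly-coloring}--\ref{defn:separably-uniformly-colorings-labelings}, Theorem \ref{thm:10-k-d-total-coloringss}, and Lemmas \ref{thm:adding-edge-subtracting-dual-graph} and \ref{thm:one-graph-twin-odd-graceful-labeling}. You use the same central tool (Theorem \ref{thm:10-k-d-total-coloringss}) but take a more elementary route: by choosing the \emph{finest} vertex-splitting, into single edges, you make each $F(V(T_i))$ a two-element set whose difference is the edge color, so the bijectivity of the edge coloring onto $[1,q]$ (resp.\ $[1,2q-1]^o$) immediately forces both $F(E(T_i))\cap F(E(T_j))=\emptyset$ and $F(V(T_i))\neq F(V(T_j))$. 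This sidesteps any need for the $\pm e$-operation lemma or the twin odd-graceful lemma, which the paper lists presumably with an eye toward the companion theorem on edge-\emph{uniformly} colorings or toward coarser decompositions. What your choice buys is a clean, self-contained verification of the awkward condition $F(V(T_i))\neq F(V(T_j))$ without invoking the Graceful Tree Conjecture; what it costs is that you prove the statement only for one particular splitting rather than for an arbitrary one, but that is all the theorem asserts.
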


\begin{thm}\label{thm:666666}
$^{*}$ Let $H_1,H_2,\dots ,H_n$ be a group of mutually edge-disjoint trees, and let each tree $H_i$ have just $q$ edges for $i\in [1,n]$. Then the graph $H=\bigcup ^n_{i=1}H_i$ admits each coloring defined in Definition \ref{defn:separably-uniformly-colorings-labelings}.
\end{thm}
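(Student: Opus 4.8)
The plan is to verify, one family at a time, each of the colorings listed in Definition \ref{defn:separably-uniformly-colorings-labelings}, exploiting that every $H_i$ is a tree with exactly $q$ edges and hence bipartite and connected. By Theorem \ref{thm:10-k-d-total-coloringss}, each $H_i$ of diameter at least $3$ admits every $W$-constraint $(k,d)$-total coloring with $W$-constraint ranging over graceful, edge-magic, edge-difference, graceful-difference, felicitous-difference, and their odd-edge analogues (the short-diameter trees, i.e. stars and short paths, I would dispatch by direct construction). So the task reduces to choosing the per-tree parameters $(k,d)$ and assembling the resulting colorings into a single total coloring $F$ of $H$ meeting the union conditions $F(V(H))=\bigcup_i F(V(H_i))$ and $F(E(H))=\bigcup_i F(E(H_i))$ of Definition \ref{defn:edge-separably-uniformly-coloring}. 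Since the Graceful (and Odd-graceful) Tree Conjectures are open, I would target the \emph{coloring} versions named in the statement rather than the stronger labeling versions, which would demand that each $H_i$ be genuinely graceful.

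First I would dispose of the two edge-separably cases. For the edge-separably graceful total coloring I set $d=1$ and give $H_i$ a graceful $(k_i,1)$-total coloring with $k_i=(i-1)q+1$; then the edge color set of $H_i$ is exactly the interval $[(i-1)q+1,\,iq]$, so the sets $F(E(H_i))$ are pairwise disjoint and tile $\bigcup_i F(E(H_i))=[1,nq]$. This is precisely the interval-stacking argument already carried out in the proof of Theorem \ref{thm:graceful-v-set-e-proper-co} (the telescoping of the $\Sigma(j)$), now applied with one interval per tree. Because the $Y_i$-colors of $H_i$ lie inside $[k_i,\,k_i+(q-1)]$ while its $X_i$-colors stay near $0$, the high part of each vertex color set shifts with $i$, which forces $F(V(H_i))\neq F(V(H_j))$. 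The edge-separably odd-graceful total coloring is the same construction with odd edge sets shifted by $2(i-1)q$, giving $H_i$ the edge color set $[2(i-1)q+1,\,2iq-1]^o$ and $\bigcup_i F(E(H_i))=[1,2nq-1]^o$.

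Next I would treat the edge-uniformly odd-graceful case and the four odd-edge-uniformly magic cases, where the requirement reverses: now every $F(E(H_i))$ must equal the \emph{same} odd set $[1,2q-1]^o$, while the vertex color sets must remain pairwise distinct. Here I would give each $H_i$ the same odd-graceful (resp. odd-edge edge-magic, edge-difference, felicitous-difference, graceful-difference) $(k,d)$-total coloring supplied by Theorem \ref{thm:10-k-d-total-coloringss}; this pins down the common edge color set $[1,2q-1]^o$ and, for the magic families, a per-tree constant $k_i$, which the definition explicitly allows to depend on $i$. To keep $F(V(H_i))\neq F(V(H_j))$ without disturbing the fixed edge color set, I would spend the freedom granted by Lemma \ref{thm:one-graph-twin-odd-graceful-labeling} together with the $2^m$ distinct colorings of Theorem \ref{thm:10-k-d-total-coloringss}: a single tree carries more than one coloring with the same edge color set, so distinct trees can be assigned distinct members of that family, separating their vertex color sets while the edge constraint is preserved.

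The step I expect to be the main obstacle is compatibility at any vertex shared by two of the trees, since the $H_i$ are assumed only edge-disjoint yet a single $F$ must assign one color to each such vertex. When the $H_i$ are vertex-disjoint (so $H$ is a forest) the componentwise assembly is immediate. In the shared-vertex situation I would route the argument through the vertex-coinciding picture, recoloring at a coincided vertex in the manner of Remark \ref{rem:total-coloring-conjecture-00} so that the shared vertex receives a color consistent with both incident trees while each prescribed edge color set is maintained. Reconciling this forced identification with the demand that the vertex color sets stay pairwise distinct is the delicate point, and the parameter slack and the multiplicity of colorings from Theorem \ref{thm:10-k-d-total-coloringss} are exactly what I would deploy to close it.
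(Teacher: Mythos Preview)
Your proposal is essentially correct and tracks the paper's own route: the paper gives no detailed proof but simply invokes Theorem~\ref{thm:10-k-d-total-coloringss}, Lemma~\ref{thm:adding-edge-subtracting-dual-graph} and Lemma~\ref{thm:one-graph-twin-odd-graceful-labeling}, which are exactly the tools you assemble (per-tree $(k,d)$-colorings, interval-stacking for the separable cases, twin-type colorings for the uniform cases). Two small remarks. First, your worry about vertices shared between the $H_i$ is almost certainly unnecessary: the paper's accompanying examples (Fig.~\ref{fig:multi-odd-graceful-matching} and Fig.~\ref{fig:multi-odd-graceful-group}) are disjoint unions, and the intended reading of $H=\bigcup_i H_i$ is a forest, so the componentwise assembly is immediate. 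Second, for the edge-uniformly odd-graceful case the paper's mechanism for separating the vertex color sets while fixing $F(E(H_i))=[1,2q-1]^o$ is not the $2^m$ multiplicity count but the uniform vertex-shift $f_i(w)=f_1(w)+i$ visible in Fig.~\ref{fig:multi-odd-graceful-group} (the graphic-group construction); this produces arbitrarily many distinct vertex color sets regardless of the diameter of the underlying tree, whereas the $2^m$ count alone could fall short when $n$ is large and the trees have small diameter.
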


\subsubsection{Image-type graphs for topological signatures}

\begin{defn}\label{defn:image-labeling2}
\cite{Jing-Su-Wang-Yao-Image-labellings-2020, Jing-Su-Wang-Yao-Mathematics-2021} Let $\alpha:V(G)\cup E(G)\rightarrow [a,b]$ and $\beta:V(G)\cup E(G)\rightarrow [c,d]$ be two labelings of a $(p,q)$-graph $G$, where integers $a,c ,b,d$ subject to $0\leq a<b$ and $0\leq c<d$.

(1) For each vertex $v\in V(G)$, $\alpha(v)+\beta(v)=k_{v}$ holds true, and $k_{v}$ is a positive integer, we call $\langle \alpha,\beta\rangle $ \emph{matching} of vertex image-labelings, abbreviated as \emph{v-image-labelings};

(2) If equation $\alpha(uv)+\beta(uv)=k_{e}$ for every edge $uv\in E(G)$ holds true, and $k_{e}$ is a positive integer, $\langle \alpha,\beta\rangle $ is called a \emph{matching} of edge image-labelings, abbreviated as \emph{e-image-labelings};

(3) If both equations $\alpha(v)+\beta(v)=k_{v}$ for $v\in V(G)$ and $\alpha(xy)+\beta(xy)=k_{e}$ for $xy\in E(G)$ hold true, we call $\langle \alpha,\beta\rangle $ \emph{matching} of vertex-edge image-labelings (\emph{ve-image-labelings}), where $k_{v}$ and $k_{e}$ are called \emph{vertex-image coefficient} and \emph{edge-image coefficient}, respectively.\qqed
\end{defn}

\begin{thm} \label{thm:1}
\cite{Jing-Su-Wang-Yao-Image-labellings-2020} Let $T$ be a tree with $p$ vertices. If the tree $T$ admits a set-ordered graceful labeling, then the following assertions are mutually equivalent:

$(1)$ The tree $T$ admits a matching of graceful v-image-labelings with $k_{v}=p-1$.

$(2)$ The tree $T$ admits a matching of odd-graceful v-image-labelings with $k_{v}=2p-3$.

$(3)$ The tree $T$ admits a matching of felicitous v-image-labelings with $k_{v}=p-1$.

$(4)$ The tree $T$ admits a matching of odd-elegant v-image-labelings with $k_{v}=p$.

$(5)$ The tree $T$ admits a matching of super edge-magic total v-image-labelings with $k_{v}=p+1$.

$(6)$ The tree $T$ admits a matching of set-ordered $(k,d)$-graceful v-image-labelings with $k_{v}=k+(p-2)d$.

$(7)$ The tree $T$ admits a super $(s+p+3,2)$-edge antimagic total labeling and a super $(s+p+4,2)$-edge antimagic total labeling, they are a matching of v-image-labelings with $k_{v}=p+1$.

$(8)$ The tree $T$ admits a $(k,2)$-arithmetic total labeling and a $(k+2,2)$-arithmetic total labeling, they are a matching of v-image-labelings with $k_{v}=k+2(t-1)$.
\end{thm}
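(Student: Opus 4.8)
The plan is to treat the standing hypothesis — that $T$ admits a set-ordered graceful labeling — as the single ``hub'' condition to which each of the eight image-labeling statements is individually equivalent, so that mutual equivalence follows by transitivity. First I would record the combinatorial skeleton produced by a set-ordered graceful labeling $f$: writing $V(T)=X\cup Y$ for the bipartition with $s=|X|$, $t=|Y|$, $s+t=p$, and $\max f(X)<\min f(Y)$, one has $f(X)=[0,s-1]$, $f(Y)=[s,p-1]$, and $f(E(T))=[1,p-1]$. This fixed skeleton is what lets every image-partner be written down explicitly and forces the coefficients $k_{v}$ to come out as the stated integers.

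The engine of the forward direction is the \emph{reflection} construction. Given any labeling $\alpha$ of $T$ taking values in an interval $[a,b]$, set $\beta(w)=(a+b)-\alpha(w)$; then $\alpha(w)+\beta(w)=a+b$ is constant, so $\langle\alpha,\beta\rangle$ is automatically a matching of v-image-labelings in the sense of Definition \ref{defn:image-labeling2}, and it remains only to check that $\beta$ is a labeling of the same type as $\alpha$ and to read off $k_{v}=a+b$. For the graceful case one takes $\alpha=f$ and $\beta(w)=(p-1)-f(w)$, and the identity $\beta(uv)=|\beta(u)-\beta(v)|=|f(u)-f(v)|=f(uv)$ shows $\beta$ is again graceful with $k_{v}=p-1$, which is statement $(1)$. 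For statements $(2)$--$(6)$ I would first invoke Theorem \ref{thm:connections-several-labelings} to manufacture, from $f$, a set-ordered odd-graceful, felicitous, odd-elegant, super edge-magic total, and $(k,d)$-graceful labeling respectively, then reflect each across the appropriate interval and verify by the same edge-induced computation that the reflected partner is of the same type, with $k_{v}$ equal to $2p-3$, $p-1$, $p$, $p+1$, and $k+(p-2)d$ as claimed. Statements $(7)$ and $(8)$ are structurally different: there the matching is between two genuinely distinct antimagic (resp.\ arithmetic) labelings whose shift of $2$ in the magic constant is exactly what produces a constant vertex-sum, so I would construct those pairs directly from $f$ using the super edge-antimagic total and $(k,2)$-arithmetic labelings guaranteed by Theorem \ref{thm:connections-several-labelings}.

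For the backward direction I would argue that each image-labeling matching carries, in its first component $\alpha$, a labeling of one of the types appearing in Theorem \ref{thm:connections-several-labelings} (after at most an affine normalization returning the color set to standard position). By that theorem this $\alpha$ is equivalent to a set-ordered graceful labeling of $T$, which recovers the hub condition. Chaining the forward and backward arrows through the hub then yields mutual equivalence of $(1)$--$(8)$.

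The step I expect to be the main obstacle is the type-preservation check for the non-graceful reflections, and in particular the cases governed by modular or magic/antimagic constraints. For the graceful and $(k,d)$-graceful labelings the reflection argument is transparent, because the edge colors are differences and are invariant under $w\mapsto (a+b)-\alpha(w)$; but for the felicitous, odd-elegant, harmonious-type, and super edge-magic labelings the edge color is a sum, often taken modulo $2q$ or $q$, and a naive reflection need not return an admissible edge-color set. Here I would exploit the set-ordered structure — the clean split of colors between $X$ and $Y$ — to choose the reflection interval and, where necessary, a complementary shift applied to $Y$ separately from $X$, so that the sum-type edge constraint and the target range $[1,2q-1]^o$ (or its analogue) are both restored. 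Pinning down the exact constant $k_{v}$ in statements $(4)$, $(5)$, $(7)$ and $(8)$ is where the bookkeeping is most delicate, and it is precisely the set-ordered normalization that keeps those constants independent of the chosen edge ordering.
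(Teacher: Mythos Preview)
The paper does not supply its own proof of this theorem: it is quoted verbatim as a cited result from \cite{Jing-Su-Wang-Yao-Image-labellings-2020}, so there is no in-paper argument to compare your proposal against. That said, your hub-and-spoke plan through the set-ordered graceful labeling, together with the reflection $\beta(w)=(a+b)-\alpha(w)$, is exactly the mechanism the surrounding material points to: Lemma~\ref{thm:graceful-image-labeling} is precisely the reflection step for the graceful case, and Theorem~\ref{thm:connections-several-labelings} is the conversion result you invoke to reach the other labeling types. Your identification of the sum-type (felicitous, odd-elegant, edge-magic, antimagic) cases as the delicate ones is accurate; in those cases the partner labeling is typically not the global reflection but the bipartition-wise one (reflect within $f(X)$ and within $f(Y)$ separately), which is available only because of the set-ordered hypothesis --- this is what keeps the edge sums in the required residue set and produces the specific constants $k_v=p$, $p+1$, etc.
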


\begin{cor} \label{cor:connections-several-labelings}
\cite{Jing-Su-Wang-Yao-Image-labellings-2020} If a $(p,q)$-tree $T$ admits a set-ordered graceful labeling, then the following assertions hold and are equivalent to each other.

$(1)$ The tree $T$ admits a matching of edge-magic total ve-image-labelings with $k_{v}=p+1$ and $k_{e}=2p+q+1$.

$(2)$ The tree $T$ admits a $(s+p+3,2)$-edge antimagic total labeling and a $(s+p+4,2)$-edge antimagic total labeling such that they are a matching of ve-image-labelings with $k_{v}=p+1$ and $k_{e}=2p+q+1$.

$(3)$ The tree $T$ admits a $(k,2)$-arithmetic total labeling and a $(k+2,2)$- arithmetic total labeling such that they are a matching of ve-image-labelings with $k_{v}=k+2(t-1)$ and $k_{e}=2(k+2t-2)$.
\end{cor}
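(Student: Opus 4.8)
The plan is to route everything through the eight-way equivalence of Theorem \ref{thm:connections-several-labelings}, using it to supply the ``primary'' labeling in each of the three matchings, and then to build the image-partner by a single reflection that acts block-wise on the vertex labels and on the edge labels separately. First I would fix the standing data: since $T$ is a tree, $p=q+1$, and I write the bipartition as $(X,Y)$ with $|X|=s$. By Theorem \ref{thm:connections-several-labelings} the set-ordered graceful hypothesis already furnishes a super edge-magic total labeling $\alpha$ with $\alpha(V(T))=[1,p]$, $\alpha(E(T))=[p+1,p+q]$ and magic constant $c=s+2p+1$, a super $(s+p+3,2)$-edge antimagic total labeling, and a $(k,2)$-arithmetic total labeling, each set-ordered. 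These are precisely the first components required in assertions $(1)$, $(2)$ and $(3)$, so the only genuinely new content is the production and verification of the second component and the image relation of Definition \ref{defn:image-labeling2}.

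For $(1)$ I would define the block reflection $\beta(w)=(p+1)-\alpha(w)$ for $w\in V(T)$ and $\beta(uv)=(2p+q+1)-\alpha(uv)$ for $uv\in E(T)$. Because $\alpha$ sends vertices into $[1,p]$ and edges into $[p+1,p+q]$, the map $\beta$ is again a bijection onto $[1,p+q]$ preserving the two blocks, and the two defining identities of an image-matching hold by construction: $\alpha(w)+\beta(w)=p+1=k_v$ on vertices and $\alpha(uv)+\beta(uv)=2p+q+1=k_e$ on edges. A one-line substitution then gives $\beta(u)+\beta(v)+\beta(uv)=(4p+q+3)-c$, a constant, so $\beta$ is edge-magic total, and the pair $\langle\alpha,\beta\rangle$ is the desired matching of edge-magic total ve-image-labelings with exactly the stated coefficients. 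Note that once $k_v$ and $k_e$ are fixed, the partner $\beta$ is uniquely determined by $\alpha$, so this reflection is forced, not merely convenient.

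For $(2)$ and $(3)$ the partner is likewise forced by the coefficients $k_v=p+1$, $k_e=2p+q+1$ (respectively $k_v=k+2(t-1)$, $k_e=2(k+2t-2)$), so the task reduces to verifying that the reflected image of a super $(s+p+3,2)$-edge antimagic (respectively $(k,2)$-arithmetic) labeling is again of antimagic (respectively arithmetic) type with the advertised shifted parameter. Here the reflection reverses and rigidly translates the edge-weight multiset by the constant $T_0=2k_v+k_e=4p+q+3$, so the weights stay an arithmetic progression of common difference $2$; the claim is that the reindexed initial term lands exactly one (respectively two) above the original, producing the matched $(s+p+3,2)$/$(s+p+4,2)$ antimagic pair and the $(k,2)$/$(k+2,2)$ arithmetic pair. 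The mutual equivalence of $(1)$--$(3)$ then follows cheaply, because each matching contains, as one of its labelings, a super edge-magic total (resp.\ super edge-antimagic, resp.\ arithmetic) labeling, and Theorem \ref{thm:connections-several-labelings} makes each of these equivalent to the set-ordered graceful hypothesis; the vertex-side compatibility of the reflection is exactly the content of Theorem \ref{thm:1}.

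The delicate step, and the one I expect to be the main obstacle, is the endpoint bookkeeping in $(2)$ and $(3)$: a naive reflection sends the progression $\{s+p+3+2j\}$ to a progression whose minimum is $3p-q-s+2$, which equals the target $s+p+4$ only in the special balanced case, so the reflection alone does not generically hit the prescribed antimagic constant. I would resolve this by composing the block reflection with the unit-shift trick used in the twin constructions of this paper (the device $h(w)=f(w)+1$ that shifts labels on one part of the bipartition), choosing the shift on $Y$ so that the induced weight-progression advances by precisely $+1$ (respectively $+2$), and then confirming the range of the new progression by evaluating at the extremal edges $uv$ with $u\in X$, $v\in Y$. The set-ordered monotonicity of $\alpha$ along $X$ before $Y$ guarantees these extremes are attained, which is what ensures no weight is skipped or repeated; pinning down exactly how the magic constant $c=s+2p+1$ and the translation constant $4p+q+3$ interact to force the $+1$ offset is the crux of the argument and where the careful case-tracking will be concentrated.
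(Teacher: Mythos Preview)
The paper does not prove this corollary; it is quoted verbatim from \cite{Jing-Su-Wang-Yao-Image-labellings-2020} and no argument is supplied here. So there is no ``paper's own proof'' to compare against, only the surrounding cited results (Theorem~\ref{thm:connections-several-labelings}, Theorem~\ref{thm:1}, Theorems~\ref{thm:3}--\ref{thm:4}) that come from the same source.

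Your handling of $(1)$ is clean and correct: the block reflection $\beta(w)=(p+1)-\alpha(w)$ on vertices and $\beta(uv)=(2p+q+1)-\alpha(uv)$ on edges preserves the super structure, forces the image identities, and yields a second edge-magic constant $4p+q+3-c$.

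For $(2)$ and $(3)$ you have correctly located the obstruction: the same block reflection sends the weight progression to one starting at $3p-q-s+2$, which equals $s+p+4$ only when $p=2s+1$, so a bare reflection cannot be the whole story. Your proposed repair via a unit shift on one bipartition class is plausible but is left as a heuristic; you have not actually checked that shifting $Y$ by $+1$ keeps the labeling a bijection onto the required range while simultaneously nudging the weight progression by exactly $+1$. A more direct route, and likely the one intended in the cited source, is to take the two labelings already produced in Theorem~\ref{thm:1}(7) and Theorem~\ref{thm:1}(8) as the $v$-image pair and then verify that the \emph{explicit} constructions used there (which are built from the set-ordered graceful labeling by concrete formulas, not by an abstract reflection) also satisfy the edge identity with $k_e=2p+q+1$ (resp.\ $k_e=2(k+2t-2)$). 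That reduces the problem to a single formula check per case rather than engineering a new reflection-plus-shift.
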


\begin{cor} \label{cor:connections-several-labelings}
\cite{Jing-Su-Wang-Yao-Image-labellings-2020} Suppose that a $(p,q)$-tree $T$ admits three labelings $f$, $g$ and $h$ such that $\langle f,g\rangle $ is a matching of v-image-labelings, $\langle f,h\rangle $ is a matching of e-image-labelings, then we have the following propositions:

(1) $\{f(v)+g(v)+h(v): v\in V(T)\}$ is an arithmetic sequence if and only if $\{f(uv)+g(uv)+h(uv): uv\in E(T)\}$ is an arithmetic sequence too, and the tolerances are equal.

(2) $\{f(v)+g(v)+h(v): v\in V(T)\}$ contains two arithmetic sequences if and only if $\{f(uv)+g(uv)+h(uv): uv\in E(T)\}$ is an arithmetic sequences with a tolerance 2.
\end{cor}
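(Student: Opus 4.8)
The plan is to exploit the two matching identities to collapse the three-labeling sums into single-labeling label sets, and then to transfer the arithmetic structure from one side to the other. First I would set $A(w)=f(w)+g(w)+h(w)$ for every $w\in V(T)\cup E(T)$. Because $\langle f,g\rangle$ is a matching of v-image-labelings (Definition \ref{defn:image-labeling2}), there is a constant $k_v$ with $f(v)+g(v)=k_v$ for each vertex $v$, so $A(v)=k_v+h(v)$; hence $\{A(v):v\in V(T)\}$ is a rigid translate of $h(V(T))$ by $k_v$. Dually, since $\langle f,h\rangle$ is a matching of e-image-labelings, there is a constant $k_e$ with $f(uv)+h(uv)=k_e$ on each edge, giving $A(uv)=k_e+g(uv)$, so $\{A(uv):uv\in E(T)\}$ is a translate of $g(E(T))$. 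A translation preserves being an arithmetic progression and preserves the tolerance, so both claims reduce to a comparison between the edge-label set $g(E(T))$ and the vertex-label set $h(V(T))$: for (1), $h(V(T))$ is arithmetic with tolerance $d$ iff $g(E(T))$ is arithmetic with tolerance $d$; for (2), $h(V(T))$ splits into two arithmetic sequences iff $g(E(T))$ is arithmetic with tolerance $2$.

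The heart of the argument is this transfer between $h$ on vertices and $g$ on edges. Here I would bring in the equivalence machinery of Theorem \ref{thm:1}: under the standing hypothesis (as in Theorem \ref{thm:1} and the preceding corollaries) that $f$ is a \emph{set-ordered graceful labeling}, the image partners $g$ and $h$ are exactly the graceful- resp. odd-graceful-type labelings produced there, and the set-ordered constraint fixes a bipartition $V(T)=X\cup Y$ with $\max f(X)<\min f(Y)$. From $g(v)=k_v-f(v)$ together with the induced-difference rule $g(uv)=|g(u)-g(v)|$ one gets $g(uv)=|f(u)-f(v)|=f(uv)$, so $g(E(T))=f(E(T))$, which in the graceful case equals $[1,q]$ (tolerance $1$) and in the odd-graceful case equals $[1,2q-1]^o$ (tolerance $2$). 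Simultaneously, $h(uv)=k_e-f(uv)$ forces $h$'s vertex labels to inherit the block structure of $X$ and $Y$: in the graceful instance they form a single arithmetic run, and in the odd-graceful instance they fall into two arithmetic sequences, one on each block. Matching the two computations edge-by-edge along the spanning structure of the tree then yields the claimed tolerance equalities, and the ``if and only if'' follows because each deduction is reversible (each labeling determines the other two on the relevant elements up to the fixed constants $k_v,k_e$).

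The main obstacle will be the transfer step, that is, proving rigorously that the arithmetic structure of $h(V(T))$ and that of $g(E(T))$ are linked with \emph{exactly} the stated tolerances, rather than merely that both are arithmetic. This requires pinning down, from the set-ordered bipartite structure of $T$ and the image-matching constraints, precisely how the $q+1$ vertex values of $h$ interleave with the $q$ edge values of $g$; the delicate point in part (2) is to identify the notion ``contains two arithmetic sequences'' with the $X,Y$ partition and to verify that the two per-block tolerances combine into a single tolerance-$2$ progression on the edges. I expect everything else — the two collapsing identities $A(v)=k_v+h(v)$ and $A(uv)=k_e+g(uv)$, and the translation-invariance of arithmetic progressions — to be routine.
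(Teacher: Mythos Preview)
The paper does not supply a proof of this corollary; it is simply cited from \cite{Jing-Su-Wang-Yao-Image-labellings-2020}. So there is no in-paper argument to compare against, and your proposal must be judged on its own.

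Your reduction is exactly right and is the essential content: from $f(v)+g(v)=k_v$ and $f(uv)+h(uv)=k_e$ one gets $A(v)=k_v+h(v)$ and $A(uv)=k_e+g(uv)$, so the two claims collapse to statements about $h(V(T))$ versus $g(E(T))$. That part is clean.

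The gap is precisely where you flag it. As stated, the corollary imposes \emph{no} relation linking $h$ on vertices to $g$ on edges: Definition~\ref{defn:image-labeling2} only fixes the sums $f+g$ on $V(T)$ and $f+h$ on $E(T)$, and says nothing about $h|_{V(T)}$ or $g|_{E(T)}$ individually, nor that any of $f,g,h$ obey an induced-edge rule like $g(uv)=|g(u)-g(v)|$. Without such a link the transfer step simply fails: one can prescribe $h$ on vertices and $g$ on edges independently while respecting the two matching constraints, so $h(V(T))$ arithmetic does not force $g(E(T))$ arithmetic, let alone with the same tolerance. Your fix --- importing the set-ordered graceful hypothesis and the induced-difference edge rule from the surrounding Theorem~\ref{thm:1} context --- is almost certainly the intended reading in the source paper, but it is an \emph{added} hypothesis, not something you can extract from the corollary as written. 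You should state explicitly which extra structure you assume (e.g.\ $f$ set-ordered graceful, $g$ and $h$ arising as in Theorem~\ref{thm:1}) before running the transfer argument; otherwise the ``if and only if'' in both parts is unsupported.
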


\begin{thm} \label{thm:3}
\cite{Jing-Su-Wang-Yao-Image-labellings-2020} If a $(p,q)$-tree $T$ admits a set-ordered graceful labeling $f$, then it admits an edge-magic total labeling $g$, so that $\langle f,g\rangle $ is a matching of e-image-labelings with $k_{e}=p+q+1$.
\end{thm}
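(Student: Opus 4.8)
The plan is to construct the edge-magic total labeling $g$ explicitly from the given set-ordered graceful labeling $f$, and then verify directly that the pair $\langle f,g\rangle$ is a matching of e-image-labelings with edge-image coefficient $k_e=p+q+1$. First I would record the structural consequences of $f$ being a set-ordered graceful labeling of the $(p,q)$-tree $T$: writing $V(T)=X\cup Y$ with $X\cap Y=\emptyset$, we have $\max f(X)<\min f(Y)$, the vertex colors are distinct, $f(E(T))=[1,q]$, and $q=p-1$ since $T$ is a tree. I would list the vertices as $x_1,x_2,\dots,x_s\in X$ and $y_1,y_2,\dots,y_t\in Y$ with $s+t=p$, and normalize so that the $f$-values on $X$ are the small consecutive block and those on $Y$ the large block, with each edge color realized as $f(uv)=f(v)-f(u)$ for $u\in X$, $v\in Y$.

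Next I would define $g$ on vertices and edges. A natural candidate is to keep the vertex labels of $f$ (or a controlled affine shift of them) and to assign the edge labels of $g$ so that the edge-magic constraint $g(u)+g(uv)+g(v)=c$ holds for a single constant $c$; this is exactly the passage from a set-ordered graceful labeling to a super edge-magic total labeling, which is available through assertion (4) of Theorem \ref{thm:connections-several-labelings} (the super edge-magic total labeling with magic constant $|X|+2p+1$ and $\max\gamma(X)<\min\gamma(Y)$). I would take $g$ to be that super edge-magic total labeling $\gamma$, or a reflection of it, so that $g$ is a genuine labeling (injective on $V(T)\cup E(T)$) and carries the edge-magic property. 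The point of this step is that the existence of $g$ is not in question — it is guaranteed by the earlier equivalences — so the real work is arithmetic bookkeeping to pin down the exact values of $g(uv)$ relative to $f(uv)$.

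Then I would compute $f(uv)+g(uv)$ for each edge and show it equals the constant $k_e=p+q+1$. The mechanism is: for each edge $uv$ with $u\in X$, $v\in Y$, $f(uv)=f(v)-f(u)$, while $g(uv)=c-g(u)-g(v)$ from the edge-magic relation. If $g$ agrees with $f$ on vertices (up to the permitted shift), the sum $f(uv)+g(uv)=[f(v)-f(u)]+[c-f(u)-f(v)]=c-2f(u)$, which is \emph{not} constant; so the correct construction must instead take $g$'s vertex labels to be a complementary (reflected) relabeling, e.g. $g(w)=C-f(w)$ for a suitable constant $C$, which is the standard way image-labelings arise. With that choice $f(uv)+g(uv)$ collapses to a constant, and one checks the constant is $p+q+1$ by substituting the extremal edge colors $1$ and $q$. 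I would close by confirming that $\langle f,g\rangle$ satisfies Definition \ref{defn:image-labeling2} part (2), namely $\alpha(uv)+\beta(uv)=k_e$ for every edge.

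The main obstacle I expect is choosing the reflection/shift constant $C$ (and deciding whether the matching lives on vertices, edges, or both) so that $g$ is simultaneously a bona fide edge-magic \emph{total} labeling \emph{and} the edge-image sum is the single prescribed value $p+q+1$; these two requirements constrain each other, and getting the additive constants to line up is where a naive guess fails (as the computation $c-2f(u)$ shows). The cleanest route is probably to invoke Theorem \ref{thm:connections-several-labelings} to get the edge-magic total labeling for free and then verify the e-image identity, rather than building $g$ from scratch; the delicate part is the constant-tracking across the $X$/$Y$ bipartition and the tree identity $q=p-1$.
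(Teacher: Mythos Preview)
The paper does not give its own proof of this theorem (it is only cited from \cite{Jing-Su-Wang-Yao-Image-labellings-2020}), so there is nothing to compare against. Evaluating your proposal on its own merits: the overall strategy is right, but the concrete candidate you offer for the fix is wrong in exactly the same way as your first attempt.

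You correctly observe that taking $g=\gamma$ with $\gamma(w)=f(w)+1$ on vertices yields $f(uv)+g(uv)=c-2f(u)$, which varies over $X$. But your proposed remedy, a \emph{global} reflection $g(w)=C-f(w)$, fails symmetrically: with $u\in X$, $v\in Y$ and $f(uv)=f(v)-f(u)$, one gets
\[
f(uv)+g(uv)=\bigl(f(v)-f(u)\bigr)+\bigl(c-(C-f(u))-(C-f(v))\bigr)=c-2C+2f(v),
\]
which now varies over $Y$. So a uniform affine map on all vertices cannot work; the reflection must be \emph{one-sided}. What does work is treating the two parts of the bipartition differently, for instance
\[
g(x)=|X|-f(x)\quad(x\in X),\qquad g(y)=f(y)+1\quad(y\in Y),
\]
which gives $g(V(T))=[1,p]$, the magic constant $c=|X|+2p+1$ of Theorem~\ref{thm:connections-several-labelings}(4), and
\[
f(uv)+g(uv)=\bigl(f(v)-f(u)\bigr)+\bigl(c-(|X|-f(u))-(f(v)+1)\bigr)=2p=p+q+1,
\]
using $q=p-1$. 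You already flagged the bipartition as the delicate point in your final sentence; the missing idea is precisely that $g$ must be affine with \emph{different} coefficients on $X$ and on $Y$, not a single global reflection.
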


\begin{thm} \label{thm:4}
\cite{Jing-Su-Wang-Yao-Image-labellings-2020} If a $(p,q)$-tree $T$ admits a set-ordered graceful labeling $f$, then it admits an set-ordered $(s+p+3,2)$-edge antimagic total labeling $g$, so that $\langle f,g\rangle $ is a matching of e-image-labelings with $k_{e}=p+q+1$.
\end{thm}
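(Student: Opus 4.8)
The plan is to build the labeling $g$ explicitly from $f$ by a single ``reflection'' of one colour class, and then read off all four required properties (super, edge-antimagic with $d=2$, set-ordered, and the $e$-image matching) by direct substitution. First I would record the combinatorial skeleton supplied by the set-ordered graceful labeling. Since $T$ is a tree we have $q=p-1$, and by Definition \ref{defn:basic-W-type-labelings} the constraints C-1, C-2, C-4, C-6 force $f(V(T))=[0,p-1]$ with $f(X)=[0,s-1]$ and $f(Y)=[s,p-1]$, where $s=|X|$ (this is the $s$ appearing in the statement, consistent with Theorem \ref{thm:connections-several-labelings}(5)). Every edge joins some $x\in X$ to some $y\in Y$ with $f(x)<f(y)$, so the graceful edge colour is $f(xy)=f(y)-f(x)$, and C-4 guarantees that $f(xy)$ runs bijectively over $[1,q]$.

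Next I would propose the labeling $g$ by setting, on vertices, $g(x)=f(x)+1$ for $x\in X$ and $g(y)=p+s-f(y)$ for $y\in Y$, and, on edges, $g(uv)=p+q+1-f(uv)$ for each $uv\in E(T)$. The edge rule is forced on me: it is exactly the $e$-image constraint of Definition \ref{defn:image-labeling2}(2), giving $f(uv)+g(uv)=p+q+1=k_e$ at once. I would then verify bijectivity: $g(X)=[1,s]$ and $g(Y)=[s+1,p]$ (the reflection $f(y)\mapsto p+s-f(y)$ reverses $[s,p-1]$ onto $[s+1,p]$), so $g(V(T))=[1,p]$; and since $f(uv)$ covers $[1,q]$, the edge labels satisfy $g(E(T))=[p+1,p+q]$. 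Hence $g$ uses $[1,p+q]$ bijectively with the small labels on vertices, i.e. $g$ is a \emph{super} total labeling, and $\max g(X)=s<s+1=\min g(Y)$ gives the set-ordered restriction.

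The decisive computation is the edge-weight. Using $f(u)+f(v)=f(uv)+2f(u)$ for an edge with $u\in X$, I would obtain
\begin{equation}
g(u)+g(uv)+g(v)=\big[(p+s+1)-f(uv)\big]+\big[(p+q+1)-f(uv)\big]=(2p+s+q+2)-2f(uv).
\end{equation}
Because $f(uv)$ ranges bijectively over $[1,q]$, these weights are pairwise distinct and form an arithmetic progression with common difference $2$ whose smallest term (attained at $f(uv)=q=p-1$) is $2p+s+q+2-2(p-1)=s+p+3$. This exhibits $g$ as a super $(s+p+3,2)$-edge-antimagic total labeling, and combined with the $e$-image identity above it shows $\langle f,g\rangle$ is a matching of $e$-image-labelings with $k_e=p+q+1$, completing the statement.

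The bulk of the argument is routine once the formulas are fixed, so the genuine obstacle is the \emph{design} step: choosing the vertex relabeling on $Y$ so that three things hold simultaneously --- $g$ stays a bijection onto $[1,p]$, the set-ordered inequality survives, and the total edge-weight becomes an affine function of the single quantity $f(uv)$ (which is what turns the weights into a clean $d=2$ progression). The reflection $g(y)=p+s-f(y)$ is exactly what makes $g(u)+g(v)$ collapse to $(p+s+1)-f(uv)$; any other choice would leave the weight depending on $f(u)$ and $f(v)$ separately and destroy the arithmetic progression. I would therefore present the verification of bijectivity and of the progression as the heart of the proof, and remark that the result is also forced abstractly by the equivalence $(1)\Leftrightarrow(5)$ of Theorem \ref{thm:connections-several-labelings}, the explicit construction here being the constructive witness of that implication.
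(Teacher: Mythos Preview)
Your proof is correct. The paper does not supply its own proof of this theorem (it is stated with a citation to \cite{Jing-Su-Wang-Yao-Image-labellings-2020} and no argument), so there is nothing to compare against; your explicit construction $g(x)=f(x)+1$, $g(y)=p+s-f(y)$, $g(uv)=p+q+1-f(uv)$ verifies all four required properties cleanly and is precisely in the spirit of the transformation arguments used elsewhere in the paper (e.g.\ Theorem~\ref{thm:connections-several-labelings}).
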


\begin{defn}\label{defn:image-labeling}
\cite{Yao-Zhang-Sun-Mu-Sun-Wang-Wang-Ma-Su-Yang-Yang-Zhang-2018arXiv} Let $f_i:V(G)\rightarrow [a,b]$ be a labeling of a $(p,q)$-graph $G$ and let each edge $uv\in E(G)$ have its own label as $f_i(uv)=|f_i(u)-f_i(v)|$ with $i=1,2$. If each edge $uv\in E(G)$ holds $f_1(uv)+f_2(uv)=k$ true, where $k$ is a positive constant, we call $\langle f_1,f_2\rangle $ \emph{matching of image-labelings}, and $f_i$ a \emph{mirror-image} of $f_{3-i}$ with $i=1,2$.\qqed
\end{defn}

\begin{defn} \label{defn:twin-k-d-harmonious-labelings}
\cite{Yao-Zhang-Sun-Mu-Sun-Wang-Wang-Ma-Su-Yang-Yang-Zhang-2018arXiv} A $(p,q)$-graph $G$ admits two $(k,d)$-harmonious labelings $f_i:V(G)\rightarrow X_0\cup X_{k,d}$ with $i=1,2$, where
$$X_0=\{0,d,2d, \dots ,(q-1)d\},~X_{k,d}=\{k,k+d,k+2d, \dots ,k+(q-1)d\}
$$ such that each edge $uv\in E(G)$ is labelled as $f_i(uv)-k=[f_i(u)+f_i(v)-k~(\textrm{mod}~qd)]$ with $i=1,2$. If $f_1(uv)+f_2(uv)=2k+(q-1)d$ for each edge $uv\in E(G)$, so we call $\langle f_1,f_2\rangle $ \emph{matching of $(k,d)$-harmonious image-labelings} of $G$.\qqed
\end{defn}

\begin{defn} \label{defn:twin-k-d-harmonious-labelings}
\cite{Yao-Zhang-Sun-Mu-Sun-Wang-Wang-Ma-Su-Yang-Yang-Zhang-2018arXiv} A $(p,q)$-graph $G$ admits a $(k,d)$-labeling $f$, and another $(p',q')$-graph $H$ admits another $(k,d)$-labeling $g$. If $(X_0\cup X_{k,d})\setminus f(V(G)\cup E(G))=g(V(H)\cup E(H))$ with
$$X_0=\{0,d,2d, \dots ,(q-1)d\},~X_{k,d}=\{k,k+d,k+2d, \dots ,k+(q-1)d\}
$$ then $g$ is called a \emph{complementary $(k,d)$-labeling} of $f$, and $\langle f,g\rangle $ is called \emph{twin $(k,d)$-labelings} of the $(p,q)$-graph $G$.\qqed
\end{defn}

\begin{lem}\label{thm:graceful-image-labeling}
\cite{Yao-Zhang-Sun-Mu-Sun-Wang-Wang-Ma-Su-Yang-Yang-Zhang-2018arXiv} If a tree $T$ admits a set-ordered graceful labeling $f$, then $T$ admits another set-ordered graceful labeling $g$ such that $f$ and $g$ are a matching of image-labelings.
\end{lem}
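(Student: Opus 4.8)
The plan is to give a direct construction of $g$ from $f$, exploiting the bipartite set-ordered structure. Write $V(T)=X\cup Y$ for the bipartition witnessing the set-ordered property of $f$, so that $\max f(X)<\min f(Y)$ (constraint C-6 of Definition \ref{defn:basic-W-type-labelings}). Since the tree $T$ has $q=p-1$ edges, $f$ is injective with $f(V(T))\subseteq[0,q]$, $\min f(V(T))=0$, and $f(E(T))=[1,q]$; because $f(X)<f(Y)$ the global minimum $0$ lies in $f(X)$, and every edge $xy$ with $x\in X$, $y\in Y$ satisfies $f(xy)=f(y)-f(x)$. First I would pin down what a matching of image-labelings forces: by Definition \ref{defn:image-labeling} I need a labeling $g$ with $g(uv)=|g(u)-g(v)|$ and a constant $k$ such that $f(uv)+g(uv)=k$ for every edge. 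If $g$ is itself graceful then $g(E(T))=[1,q]$, and since $g(uv)=k-f(uv)$ runs over $\{k-1,\dots,k-q\}$ as $f(uv)$ runs over $[1,q]$, the only admissible choice is $k=q+1$, giving the target relation $g(uv)=q+1-f(uv)$.

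Next I would construct $g$ by reflecting $f$ within each colour class with an appropriate shift on $Y$:
\begin{equation}
g(x)=\max f(X)-f(x)\quad (x\in X),\qquad g(y)=\max f(X)+q+1-f(y)\quad (y\in Y).
\end{equation}
A one-line computation then gives, for each edge $xy$, $g(xy)=g(y)-g(x)=q+1-(f(y)-f(x))=q+1-f(xy)$, so the edge labels of $f$ and $g$ are complementary, $g(E(T))=[1,q]$, and $\langle f,g\rangle$ is a matching of image-labelings with $k=q+1$.

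The remaining, and main, work is to verify that $g$ is again a \emph{set-ordered graceful} labeling, and this is exactly where the set-ordered hypothesis earns its keep. I would compute $\min g(X)=0$ (attained at the vertex with $f$-value $\max f(X)$), $\max g(X)=\max f(X)$, $\min g(Y)=\max f(X)+q+1-\max f(Y)$, and $\max g(Y)=\max f(X)+q+1-\min f(Y)$. The separation $\max g(X)<\min g(Y)$ reduces to $\max f(Y)\le q$, which holds since $f(V(T))\subseteq[0,q]$; this separation simultaneously yields injectivity of $g$ and the set-ordered condition $g(X)<g(Y)$. The bound $\max g(Y)\le q$ reduces precisely to $\max f(X)<\min f(Y)$, i.e.\ the set-ordered property of $f$ itself. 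Hence $g(V(T))\subseteq[0,q]$ with $\min g(V(T))=0$ and $|g(V(T))|=p$, so constraints C-1, C-2, C-4, C-6 all hold and $g$ is set-ordered graceful.

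I expect the genuinely delicate point to be the boundary bookkeeping of the last paragraph rather than the construction: one must confirm at once that no vertex label of $g$ exceeds $q$ and that the two colour classes remain order-separated, and both facts hinge on the single strict inequality $\max f(X)<\min f(Y)$. Once the value $k=q+1$ is observed the labeling is forced, so the only real risk is an off-by-one error in the shift on $Y$; I would guard against this by re-deriving $k$ from the requirement $g(E(T))=[1,q]$ before checking the vertex-range bounds. This mirrors the analogous shift-by-one arguments already used for the twin odd-graceful constructions (Lemma \ref{thm:one-graph-twin-odd-graceful-labeling}), with a reflection replacing the translation here.
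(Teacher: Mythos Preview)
Your construction is correct: the reflection $g(x)=\max f(X)-f(x)$ on $X$ together with the shifted reflection $g(y)=\max f(X)+q+1-f(y)$ on $Y$ gives a set-ordered graceful labeling with $f(uv)+g(uv)=q+1$ for every edge, and your boundary checks (that $\max g(X)<\min g(Y)$ reduces to $\max f(Y)\le q$ and that $\max g(Y)\le q$ reduces to $\max f(X)<\min f(Y)$) are exactly the right ones. Note, however, that the present paper does not supply its own proof of this lemma---it is merely cited from \cite{Yao-Zhang-Sun-Mu-Sun-Wang-Wang-Ma-Su-Yang-Yang-Zhang-2018arXiv}---so there is no in-paper argument to compare against; your reflection construction is the standard dual graceful labeling and is presumably what the cited source uses.
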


\begin{thm}\label{thm:10-image-labelings}
\cite{Yao-Zhang-Sun-Mu-Sun-Wang-Wang-Ma-Su-Yang-Yang-Zhang-2018arXiv} If a tree $T$ admits a set-ordered graceful labeling, then $T$ admits a matching of $W$-constraint image-labelings, where $W$-constraint $\in \{$set-ordered graceful, set-ordered odd-graceful, edge-magic graceful, set-ordered felicitous, set-ordered odd-elegant, super set-ordered edge-magic total, super set-ordered edge-antimagic total, set-ordered $(k,d)$-graceful, $(k,d)$-edge antimagic total, $(k,d)$-arithmetic total, harmonious, $(k,d)$-harmonious$\}$.
\end{thm}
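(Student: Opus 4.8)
The plan is to treat the set-ordered graceful labeling as a ``seed'' and to propagate the image-matching property through the chain of equivalences already recorded for such trees. First I would fix a set-ordered graceful labeling $f$ of $T$ with bipartition $(X,Y)$ and $\max f(X)<\min f(Y)$, and invoke Lemma~\ref{thm:graceful-image-labeling} to obtain a second set-ordered graceful labeling $g$ (the mirror-image of $f$) so that $\langle f,g\rangle$ is a matching of image-labelings in the sense of Definition~\ref{defn:image-labeling}, i.e. $f(uv)+g(uv)=q+1$ for every edge $uv\in E(T)$. This supplies the base case (the $W$-constraint $=$ set-ordered graceful entry) and, crucially, gives \emph{two} set-ordered graceful labelings on the same tree that are edge-wise complementary.

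The core idea is that every entry in the list is obtained from a set-ordered graceful labeling by one of the explicit value re-encodings underlying Theorem~\ref{thm:connections-several-labelings}, and each such re-encoding acts \emph{affinely} on the quantity whose constancy defines the corresponding image-matching. For the entries whose edge-colors are vertex-induced (set-ordered odd-graceful, set-ordered felicitous, set-ordered odd-elegant, set-ordered $(k,d)$-graceful, harmonious, $(k,d)$-harmonious, $(k,d)$-arithmetic), I would apply the same transformation simultaneously to $f$ and to $g$ and read off the new image coefficient. For instance, the odd-graceful re-encoding sends an edge-color $c$ to $2c-1$, so the matched pair obtained from $\langle f,g\rangle$ satisfies $(2f(uv)-1)+(2g(uv)-1)=2(q+1)-2=2q$, a constant; likewise the $(k,d)$-graceful re-encoding $c\mapsto k+(c-1)d$ yields the constant $2k+(q-1)d$, which agrees with the coefficient predicted by Definition~\ref{defn:twin-k-d-harmonious-labelings} in the harmonious case. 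Carrying out the analogous affine bookkeeping for each remaining vertex-induced constraint produces the required matchings of $W$-constraint image-labelings, with $k,d$ left free.

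For the total-labeling entries (super set-ordered edge-magic total, super set-ordered edge-antimagic total, $(k,d)$-edge antimagic total), the notion of ``image'' is the richer vertex-edge matching of Definition~\ref{defn:image-labeling2}, so here I would not induce edge-colors from vertices but instead pair a total labeling with its complementary total labeling and check constancy separately on $V(T)$ and on $E(T)$. These specific pairings are already delivered by Theorem~\ref{thm:1}, Corollary~\ref{cor:connections-several-labelings}, Theorem~\ref{thm:3} and Theorem~\ref{thm:4}, whose hypotheses are exactly ``$T$ admits a set-ordered graceful labeling''; thus for these entries the proof reduces to invoking those results and recording the computed coefficients $k_v$ and $k_e$. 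Assembling the vertex-induced cases of the previous paragraph with these total-labeling cases exhausts the displayed list.

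The main obstacle I anticipate is uniformity, since the re-encodings in Theorem~\ref{thm:connections-several-labelings} are not literally the same map, and two competing bookkeeping issues must be reconciled. First, the mirror $g$ from Lemma~\ref{thm:graceful-image-labeling} must remain \emph{set-ordered} with the orientation each transformation expects, so that the transformations are legitimately applicable to it and not only to $f$; I would need to verify that $g$ preserves $\max g(X)<\min g(Y)$ (or the reversed inequality, adjusting the affine constants accordingly). Second, for the total labelings one must check the image property on vertices \emph{and} edges, and confirm that the vertex-constant and the edge-constant survive simultaneously under complementation, rather than only the edge-sum as in the simpler vertex-induced cases. Once these compatibility checks are settled, the theorem follows by cataloguing the constant obtained for each constraint.
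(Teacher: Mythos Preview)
The paper does not supply its own proof of this theorem: it is stated with a citation to \cite{Yao-Zhang-Sun-Mu-Sun-Wang-Wang-Ma-Su-Yang-Yang-Zhang-2018arXiv} and no argument is given in the present text. Your plan is therefore not being compared to an in-paper proof but to the surrounding infrastructure the paper assembles, and in that respect it is exactly the intended route: Lemma~\ref{thm:graceful-image-labeling} is placed immediately before the theorem precisely to serve as your base case, and Theorem~\ref{thm:connections-several-labelings}, Theorem~\ref{thm:1}, the two Corollaries~\ref{cor:connections-several-labelings}, and Theorems~\ref{thm:3}--\ref{thm:4} are stated nearby to feed the remaining entries. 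Your two-track organization (affine re-encoding of edge-induced labels for the vertex-labeling entries; direct citation of the $ve$-image results for the total-labeling entries) matches how the paper groups these results.

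Two small cautions. First, your computed constant $f(uv)+g(uv)=q+1$ presumes a specific form of the mirror $g$; Lemma~\ref{thm:graceful-image-labeling} only asserts existence of a matching pair, so when you do the bookkeeping you should either verify that constant explicitly or simply carry an unspecified $k$ through the affine maps, which is enough for every case. Second, the entry ``edge-magic graceful'' in the displayed list is not one of the named constraints in Theorem~\ref{thm:connections-several-labelings} or Theorem~\ref{thm:1}, so you will need to locate its definition in the cited source (or treat it via the edge-magic transformation in Lemma~\ref{lem:444445555}) rather than assume it falls out of the standard chain.
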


\subsubsection{Set-dual graphs for topological signatures}

\begin{defn} \label{defn:set-key-dual-coloring-graphs}
$^*$ For a $(p,q)$-graph $G$ (as a \emph{public-key graph}) admitting a coloring $f$, and another $(p\,',q\,')$-graph $H$ (as a \emph{private-key graph}) admitting a coloring $g$, if there exists a constant $c$ holding $$f(w)+g[\pi(w)]=c,~w\in S\subset V(G)\cup E(G)
$$ under a mapping $\pi:S \rightarrow S^*$ for a subset $S^*\subset V(H)\cup E(H)$, then we call $\langle G,H\rangle $ \emph{set-dual graphs}, and $f$ (resp. $g$) \emph{set-dual coloring} of $g$ (resp. $f$).\qqed
\end{defn}

By Definition \ref{defn:set-key-dual-coloring-graphs}, there are the following particular cases:

(i) $S=V(G)$, $S^*=V(H)$.

(ii) $S=E(G)$, $S^*=E(H)$.

(iii) $S=V(G)\cup E(G)$, $S^*=V(H)\cup E(H)$.

(iv) $G=H$ with $S=S^*$.

\subsubsection{Symmetric graphs as topological signatures}

There are \emph{edge-symmetric graphs} $G[\ominus] G\,'$ obtained by using a new edge to join a vertex $u$ of a graph $G$ with its image vertex $u\,'$ of the image graph $G\,'$ with $G\cong G\,'$, and there are \emph{vertex-symmetric graphs} $G[\odot] G\,'$ obtained by vertex-coinciding a vertex $w$ of a graph $G$ with its image vertex $w\,'$ of the image graph $G\,'$ with $G\cong G\,'$ into one vertex $w\odot w\,'$.

\begin{defn}\label{defn:mf-graceful-mf-odd-graceful}
\cite{Yao-Mu-Sun-Sun-Zhang-Wang-Su-Zhang-Yang-Zhao-Wang-Ma-Yao-Yang-Xie2019} For any vertex $u$ of a connected and bipartite $(p,q)$-graph $G$, there exist a vertex labeling $f:V(G)\rightarrow [0,q]$ (or $f:V(G)\rightarrow [0,2q-1]$) such that

(i) $f(u)=0$;

(ii) $f(E(G))=\{f(xy)=|f(x)-f(y)|: ~xy\in E(G)\}=[1,q]$ (or $f(E(G))=[1,2q-1]^o$);

(iii) the bipartition $(X,Y)$ of $V(G)$ holds $\max f(X)<\min f(Y)$.\\
Then we say $G$ admits a \emph{$0$-rotatable set-ordered system of (odd-)graceful labelings}, abbreviated as \emph{$0$-rso-graceful system} (\emph{$0$-rso-odd-graceful system}).\qqed
\end{defn}

\begin{lem}\label{thm:symmetric-tree}
\cite{Yao-Mu-Sun-Sun-Zhang-Wang-Su-Zhang-Yang-Zhao-Wang-Ma-Yao-Yang-Xie2019} If a tree $T$ admits a $0$-rotatable system of (odd-)graceful labelings, then its symmetric tree $T[\ominus] T\,'$ admits a $0$-rotatable set-ordered system of (odd-)graceful labelings.
\end{lem}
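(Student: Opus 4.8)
The plan is to verify the defining conditions of Definition~\ref{defn:mf-graceful-mf-odd-graceful} directly for the symmetric tree $T[\ominus]T'$. Write $q=|E(T)|$, so $T[\ominus]T'$ has $Q=2q+1$ edges; the goal is to produce, for \emph{every} vertex $w$ of $T[\ominus]T'$, a graceful (resp. odd-graceful) labeling $g$ with $g(w)=0$, with $g(E(T[\ominus]T'))=[1,Q]$ (resp. $[1,2Q-1]^o$), and with the bipartition of $T[\ominus]T'$ respected in the set-ordered sense $\max g(A)<\min g(B)$. First I would exploit the automorphism of $T[\ominus]T'$ that interchanges the two copies $T,T'$ while fixing the joining edge $uu'$; this reduces the problem to the case $w\in V(T)$. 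Then, invoking the $0$-rotatable system of $T$, I fix a set-ordered (odd-)graceful labeling $f$ of $T$ with $f(w)=0$; by condition (iii) of Definition~\ref{defn:mf-graceful-mf-odd-graceful} the part of the bipartition of $T$ containing $w$ is the low part, so I may rename the bipartition $(X_T,Y_T)$ with $w\in X_T$, $\max f(X_T)<\min f(Y_T)$ and $f(w)=0$, $f(V(T))\subseteq[0,q]$, $f(E(T))=[1,q]$.

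For the graceful case I would use a \emph{reflect-and-shift} scheme. On the $T$-copy set $g(x)=f(x)$ for $x\in X_T$ and $g(y)=f(y)+(q+1)$ for $y\in Y_T$; on the image copy set $g(x')=f(x)+(q+1)$ for $x'\in X_{T'}$ and $g(y')=f(y)$ for $y'\in Y_{T'}$. Using $\max f(X_T)<\min f(Y_T)$, every edge $xy\in E(T)$ then receives $g(xy)=(q+1)+f(xy)\in[q+2,2q+1]$, every edge of $T'$ receives $(q+1)-f(xy)\in[1,q]$, and the joining edge $uu'$ receives exactly $q+1$; hence $g(E(T[\ominus]T'))=[1,2q+1]=[1,Q]$. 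The set-ordered condition holds with $A=X_T\cup Y_{T'}$ (labels in $[0,q]$) and $B=Y_T\cup X_{T'}$ (labels in $[q+1,2q+1]$), and $g(w)=f(w)=0$. Since $f$ ranges over the whole $0$-rotatable set-ordered system of $T$, this produces the required labeling for each choice of $w\in X_T$, and the copy-swap automorphism covers $w\in V(T')$; this settles the graceful assertion.

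The odd-graceful case follows the same template with shift $2q$ in place of $q+1$: on $T$ set $g(y)=f(y)+2q$ on $Y_T$ (keeping $g=f$ on $X_T$), so within-$T$ edges become $2q+f(xy)\in\{2q+1,2q+3,\dots,4q-1\}$ and within-$T'$ edges become $2q-f(xy)\in\{1,3,\dots,2q-1\}$, both odd, and set-orderedness is preserved as before. The difficulty is that the joining edge now receives $|g(u)-g(u')|=2q$, which is \emph{even}, while the single odd label $4q+1=2Q-1$ is left unused. This parity clash is intrinsic: a uniform shift forces the joining edge to equal the (even) shift, whereas the missing edge label is odd. I would resolve it by decoupling the joining edge from the uniform shift, either by reflecting the $T'$-copy through the constant $4q+1$ (which changes no internal difference but can force the joining edge to realize $4q+1$) together with a compensating shift that restores $\max g(A)<\min g(B)$, or by treating the sub-case $w=u$ (place $g(u)=0$ and $g(u')=4q+1$, so the joining edge is the extreme odd label) separately from $w\neq u$; the twin-odd-graceful machinery of Lemma~\ref{thm:one-graph-twin-odd-graceful-labeling} and Theorem~\ref{them:twin-w-type-total-colorings} should supply the complementary labeling needed on the second copy.

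The routine part is the bookkeeping that the three edge-label blocks tile $[1,Q]$ (resp. $[1,2Q-1]^o$) and that the two vertex-label blocks stay on opposite sides of the separating value; these are short interval computations given the set-ordered hypothesis. The main obstacle I expect is the odd-graceful case: reconciling the odd parity required of every edge label with the even value that the joining edge naturally takes, \emph{while simultaneously} keeping the bipartition set-ordered and keeping $g(w)=0$ for an arbitrary target vertex $w$ (in particular when $w$ is distinct from the joining vertex $u$). The graceful case is clean because the shift $q+1$ is exactly the unique intermediate label and carries no parity constraint; the whole delicacy of the lemma is concentrated in engineering an analogous, parity-correct ``bridge'' label for the odd-graceful variant.
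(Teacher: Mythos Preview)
The paper gives no proof of this lemma; it is merely cited from \cite{Yao-Mu-Sun-Sun-Zhang-Wang-Su-Zhang-Yang-Zhao-Wang-Ma-Yao-Yang-Xie2019}, so there is nothing in the present paper to compare your argument against directly.

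Your graceful construction is correct and clean: the reflect-and-shift scheme produces the three edge blocks $[1,q]$, $\{q+1\}$, $[q+2,2q+1]$ exactly, with $A=X_T\cup Y_{T'}$ carrying labels $[0,q]$ and $B=Y_T\cup X_{T'}$ carrying $[q+1,2q+1]$, and the copy-swap automorphism together with the rotation hypothesis covers every target vertex $w$. One caution: you are reading the hypothesis as the \emph{set-ordered} version (condition~(iii) of Definition~\ref{defn:mf-graceful-mf-odd-graceful}). Since that is the only notion the paper actually defines, the reading is defensible; but the lemma's wording drops ``set-ordered'' while the very next Theorem~\ref{thm:0-rotatable-set-ordered-system00} retains it, so if the intended hypothesis were genuinely the weaker one, your shift-by-bipartition step (which needs $\max f(X_T)<\min f(Y_T)$ to make the edge-interval arithmetic work) would not apply as written.

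For the odd-graceful case you have correctly isolated the parity obstruction on the joining edge, but the proposed remedies are more elaborate than necessary. The clean fix is the standard bijection between set-ordered graceful and set-ordered odd-graceful labelings of a bipartite graph: starting from your set-ordered graceful $g$ on $T[\ominus]T'$ with $g(A)=[0,q]$ and $g(B)=[q+1,2q+1]$, put $h(a)=2g(a)$ for $a\in A$ and $h(b)=2g(b)-1$ for $b\in B$. Then $h(A)=\{0,2,\dots,2q\}$, $h(B)=\{2q+1,2q+3,\dots,4q+1\}$, still set-ordered, and each edge $ab$ gets $h(b)-h(a)=2\bigl(g(b)-g(a)\bigr)-1=2g(ab)-1$, so the edge labels run bijectively over $[1,4q+1]^o=[1,2Q-1]^o$; moreover $h(w)=2g(w)=0$. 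In the other direction the same conversion turns the hypothesised set-ordered odd-graceful $f$ on $T$ into a set-ordered graceful labeling with the same zero vertex, to which your shift construction applies. No reflection through $4q+1$, no special sub-case $w=u$, and no twin-labeling machinery is needed; the parity problem on the bridge edge simply never arises once you work in the graceful world and transport the result back.
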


\begin{thm}\label{thm:0-rotatable-set-ordered-system00}
\cite{Yao-Mu-Sun-Sun-Zhang-Wang-Su-Zhang-Yang-Zhao-Wang-Ma-Yao-Yang-Xie2019} Suppose that a connected and bipartite $(p,q)$-graph $G$ admits a $0$-rotatable set-ordered system of (odd-)graceful labelings. Then the edge symmetric graph $G[\ominus] G\,'$ admits a $0$-rotatable set-ordered system of (odd-)graceful labelings too.
\end{thm}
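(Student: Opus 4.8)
The plan is to realize the $0$-rotatable set-ordered system of $H=G[\ominus]G\,'$ by gluing a labeling of the copy $G$, furnished by the hypothesis, to an affine transform of a labeling of the mirror copy $G\,'$. First I would record that $H$ carries the reflection automorphism $\sigma$ interchanging $G$ and $G\,'$ and the two ends $u,u\,'$ of the joining edge; since $\sigma$ merely swaps the two classes of the (unique) bipartition of the connected bipartite graph $H$, precomposing any set-ordered (odd-)graceful labeling with $\sigma$ again yields a set-ordered one. Hence it suffices to produce, for every prescribed target vertex $w\in V(G)$, a set-ordered (odd-)graceful labeling $g$ of $H$ with $g(w)=0$; a target in $V(G\,')$ is then handled by applying $\sigma$ to the construction for $\sigma(w)\in V(G)$.

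Let me fix notation for the graceful case ($q$ edges); the odd-graceful case runs in parallel with $[0,q]$ replaced by $[0,2q-1]$ and $[1,q]$ by $[1,2q-1]^o$. Using the hypothesis (Definition \ref{defn:mf-graceful-mf-odd-graceful}) I would choose a set-ordered graceful labeling $f$ of $G$ with $f(w)=0$ and bipartition $(X,Y)$, and exploit the identities forced by set-orderedness, namely $\min f(X)=0$, $\max f(Y)=q$ and $\min f(Y)=\max f(X)+1$. I place the copy $G$ by $f$ itself, so its $q$ edges realise the labels $[1,q]$ while $X$ and $Y$ occupy a low and a middle block. On the mirror copy $G\,'$ I place a reflect-and-shift image of $f$ — carrying $X\,'$ into a high block and $Y\,'$ into a low block — tuned so that the $q$ internal edges of $G\,'$ realise exactly the complementary block $[q+2,2q+1]$ and so that the joining edge $uu\,'$ receives the single missing value $q+1$. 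Since the bipartition of $H$ is $X\cup Y\,'$ against $Y\cup X\,'$, the two copies must interleave inside the two colour classes, and I would then verify the four required properties: injectivity of $g$ on the $2p$ vertices, $g(V(H))\subseteq[0,2q+1]$, the edge-colour set $[1,2q+1]$, and the set-ordered inequality $\max g(X\cup Y\,')<\min g(Y\cup X\,')$. The complementary labeling $\bar f=q-f$ (whose tree prototype is Lemma \ref{thm:graceful-image-labeling}, extended by Theorem \ref{thm:10-image-labelings}) supplies the reflected labeling used on $G\,'$, and Lemma \ref{thm:symmetric-tree} is the prototype of the whole scheme in the tree case.

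The hard part will be making the four conditions hold simultaneously rather than one at a time. Because the colour classes of $H$ each contain pieces of both copies, a crude placement of $G\,'$ in a band disjoint from $G$ destroys set-orderedness; conversely, forcing the joining edge to carry precisely $q+1$ tends to push some image labels outside $[0,2q+1]$ or to collide with labels of the first copy. I expect to resolve this by a case analysis governed by $\max f(X)$ (equivalently by whether $|X|\ge|Y|$), choosing between $f$ and its complementary labeling on each copy so that the low and high blocks have compatible widths, and by using the freedom in the $0$-rotatable system to position both the target $w$ and, when needed, the joining vertex $u$ at convenient labels, treating the subcase $w=u$ separately. Controlling these competing constraints — edge-label completeness, injectivity, the set-ordered inequality, and the prescribed zero — is where the real work lies; once a valid placement is exhibited in each case, the three defining conditions of Definition \ref{defn:mf-graceful-mf-odd-graceful} follow by direct check.
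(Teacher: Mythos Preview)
The paper does not supply its own proof of this theorem; the statement is simply quoted with a citation to \cite{Yao-Mu-Sun-Sun-Zhang-Wang-Su-Zhang-Yang-Zhao-Wang-Ma-Yao-Yang-Xie2019}, as are the surrounding Lemma~\ref{thm:symmetric-tree} and Theorem~\ref{thm:0-rotatable-set-ordered-system11}. So there is no in-paper argument to compare your proposal against.

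That said, your plan is the natural one and matches the standard construction used for results of this type: keep $f$ on the first copy to realise edge labels $[1,q]$, place an affine reflect-and-shift of $f$ on the mirror copy to realise $[q+2,2q+1]$, and tune the shift so that the bridging edge $uu\,'$ carries $q+1$. Your reduction via the reflection automorphism $\sigma$ to targets $w\in V(G)$ is clean, and your observation that set-orderedness forces $\min f(Y)=\max f(X)+1$ (since every edge label is at least $\min f(Y)-\max f(X)$ and the label $1$ must occur) is correct and is exactly what makes the two blocks interlock without gaps. The genuine bookkeeping, as you note, is in simultaneously securing injectivity on $2p$ vertices, the range bound $[0,2q+1]$, and the set-ordered inequality $\max g(X\cup Y\,')<\min g(Y\cup X\,')$ for the interleaved bipartition; the freedom to choose which endpoint $u\in V(G)$ carries the bridge, together with the full $0$-rotatable system on $G$, is what gives you enough slack to handle all cases. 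One small caution: your appeal to Lemma~\ref{thm:graceful-image-labeling} and Theorem~\ref{thm:10-image-labelings} for the complementary labeling $q-f$ is stated in the paper only for trees, so for the general bipartite case you should verify directly that $q-f$ is again set-ordered graceful (it is, trivially, with the roles of $X$ and $Y$ swapped) rather than cite those results.
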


\begin{thm}\label{thm:0-rotatable-set-ordered-system11}
\cite{Yao-Mu-Sun-Sun-Zhang-Wang-Su-Zhang-Yang-Zhao-Wang-Ma-Yao-Yang-Xie2019} There are infinite graphs admit $0$-rotatable set-ordered systems of (odd-)graceful labelings.
\end{thm}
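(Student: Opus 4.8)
The plan is to build an explicit infinite family $\{G_n\}_{n\ge 0}$ of graphs, each admitting a $0$-rotatable set-ordered system of (odd-)graceful labelings in the sense of Definition~\ref{defn:mf-graceful-mf-odd-graceful}, by bootstrapping from a small base graph through the edge-symmetric construction $G[\ominus]G\,'$. The engine of the argument is Theorem~\ref{thm:0-rotatable-set-ordered-system00}, which guarantees that whenever a connected bipartite $(p,q)$-graph $G$ admits a $0$-rotatable set-ordered system of (odd-)graceful labelings, so does its edge-symmetric graph $G[\ominus]G\,'$. Since these two cases (graceful and odd-graceful) are handled simultaneously by the cited theorem, I would carry the ``(odd-)'' qualifier through the whole induction without any extra work.

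First I would fix a base graph $G_0$ and verify directly that it admits a $0$-rotatable set-ordered system. A convenient candidate is the star $K_{1,m}$: placing $0$ at the center gives leaf labels $1,2,\dots,m$ with edge color set $[1,q]=[1,m]$ and bipartition $X=\{\text{center}\}$, $Y=\{\text{leaves}\}$, while placing $0$ at a leaf forces the center to the top value and the remaining leaves to $\{0,1,\dots,m-1\}$, so that taking the leaves as the lower part $X$ and the center as the upper part $Y$ restores $\max f(X)<\min f(Y)$. The same two configurations work verbatim in the odd-graceful regime with edge colors $[1,2q-1]^o$. If instead I wish to begin from a tree admitting only a (non-set-ordered) $0$-rotatable system, Lemma~\ref{thm:symmetric-tree} would upgrade its symmetric tree $G_0[\ominus]G_0\,'$ to a $0$-rotatable set-ordered system, giving an alternative way to produce a legitimate $G_0$.

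Next I would set $G_{n+1}=G_n[\ominus]G_n\,'$ and run an induction. At each step I must check the hypotheses of Theorem~\ref{thm:0-rotatable-set-ordered-system00} are met by $G_n$, namely that $G_n$ is connected and bipartite: connectedness is immediate because $[\ominus]$ joins $G_n$ to its image by a new edge, and bipartiteness is preserved because the edge-symmetric graph of a bipartite graph remains triangle-free and two-colorable. The induction then yields that every $G_n$ admits a $0$-rotatable set-ordered system of (odd-)graceful labelings. Because the construction roughly doubles the order, $|V(G_{n+1})|=2\,|V(G_n)|$, the graphs $G_0,G_1,G_2,\dots$ have strictly increasing vertex numbers and are pairwise non-isomorphic, so the family is infinite, which is exactly the assertion of Theorem~\ref{thm:0-rotatable-set-ordered-system11}.

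The main obstacle I anticipate is the base case rather than the iteration: the subtlety in Definition~\ref{defn:mf-graceful-mf-odd-graceful} is that the labeling must be rootable at \emph{every} vertex with value $0$ \emph{while simultaneously} respecting the set-ordered constraint $\max f(X)<\min f(Y)$, and since the distinguished vertex $u$ may lie in either part of the bipartition, one has to exhibit, for each choice of $u$, a bipartition-compatible reassignment of labels. I would therefore spend the bulk of the verification confirming that the chosen $G_0$ genuinely has this rotatability-plus-set-orderedness property for both the graceful and the odd-graceful edge-color sets; once that base case is secured, the inductive propagation via Theorem~\ref{thm:0-rotatable-set-ordered-system00} and the order-doubling count are routine.
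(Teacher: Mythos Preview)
Your proposal is correct and follows precisely the route the paper sets up: the theorem is stated immediately after Lemma~\ref{thm:symmetric-tree} and Theorem~\ref{thm:0-rotatable-set-ordered-system00}, and the intended argument is exactly to iterate the edge-symmetric construction $G_{n+1}=G_n[\ominus]G_n\,'$ from a small base graph, with bipartiteness and connectedness preserved at each step and the order doubling to guarantee infinitely many non-isomorphic examples. Your verification of the star $K_{1,m}$ as a valid base case (for both the graceful and odd-graceful regimes, with the bipartition roles swapping when $0$ sits at a leaf) is sound, so no gap remains.
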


\subsubsection{Other techniques for topological signatures}

\quad ~\textbf{1. Matching-pair topological signatures.}

\begin{asparaenum}[\textbf{\textrm{Matching}}-1.]
\item Equivalent colorings/labelings. Suppose that a graph $G$ admits a coloring $f$ and another graph $H$ admits a coloring $g$. If $G\cong H$ and $f$ is equivalent with $g$, so two graphs $G$ and $H$ form a matching-pair of topological signatures.
\item Twin-type colorings/labelings.
\item Dual-type graphs are determined by dual-type colorings/labelings.
\item Magic-constraint colorings are defined by magic-constraint $\in\{$edge-magic, edge-difference, felicitous-difference, graceful-difference$\}$.
\item Dual-magic-constraint colorings are: vertex-dual magic-constraint coloring, edge-dual magic-constraint coloring, all-dual magic-constraint coloring for magic-constraint $\in\{$edge-magic, edge-difference, felicitous-difference, graceful-difference$\}$.
\item Matching-type graphic lattices.
\end{asparaenum}

\vskip 0.4cm

\textbf{2. Graph operations for topological signatures.}
\begin{asparaenum}[\textbf{\textrm{Operation}}-1.]
\item Graph homomorphism for topological signatures.
\item Graph-operation homomorphism; refer to Definition \ref{defn:v-e-colored-graph-homomorphisms} for the colored graph homomorphism.
\item The vertex-splitting operation. For example, we use the \emph{vertex-splitting operation} to a connected $(p,q)$-graph $G$ for producing a tree $T$ of $q+1$ vertices, so we get a graph homomorphism $T\rightarrow G$; refer to the vsplit-tree set $V^{tree}_{split}(G)$, each tree $T\in V^{tree}_{split}(G)$ is graph homomorphism to $G$.
\item Graph-operation homomorphisms: the $\pm e$-operation, also adding-edge-subtracting operation, which produces the adding-edge-subtracting graph homomorphism; refer to Problem \ref{question:techniques-KSTrees} and Problem \ref{question:techniques-KSTrees22}, two graph-operation homomorphisms $T^{pri}_{i}\rightarrow _{\pm e}T^{pub}_{i}$ and $H^{pri}_{i}\rightarrow _{\pm e}H^{pub}_{i}$. For example, since $H=G+xy-uv$ for edge $uv\in E(G)$ and edge $xy\not\in E(G)$, then we get an \emph{adding-edge-subtracting graph homomorphism} $H\rightarrow ^e_{\pm}G$.
\item Complement. If $H=G-E(T)$ and $V(H)=V(G)=V(T)$, then $H$ (resp. $T$) is the \emph{complement} of $T$ (resp. $H$) based on the graph $G$; refer to MPG-Key-pairs: $G=G^C_{out}[\overline{\ominus}^C_k]G^C_{in}$ is a maximal planar graph, where $G^C_{out}$ and $G^C_{in}$ are two topological signatures.
\end{asparaenum}

\vskip 0.4cm

\textbf{3. Sets for topological signatures.}

\begin{asparaenum}[\textbf{\textrm{Groulat}}-1.]
\item Public-key graphic lattices and private-key graphic lattices. For example, the graphic lattice $\textbf{\textrm{L}}([\odot_{\textrm{prop}}]Z^0\textbf{\textrm{T}})$ defined in Eq.(\ref{eqa:public-key-graphic-lattice}) is as a \emph{public-key graphic lattice}, and the graphic lattice $\textbf{\textrm{L}}([\odot_{\textrm{prop}}]Z^0\textbf{\textrm{H}})$ defined in Eq.(\ref{eqa:private-key-graphic-lattice}) is as a \emph{private-key graphic lattice}, where $\textbf{\textrm{T}}$ is a \emph{public-key tree base} and $\textbf{\textrm{H}}$ is a \emph{private-key tree base}.
\item In ATE-CGH-problem, also Problem \ref{problem:ATE-CGH-problem00}, the set $S_e(G)$ is as a \emph{public-key set}, and the set $S_v(G)$ is as a \emph{private-key set}.
\item In Problem \ref{question:Pan-ATE-CGH-problem}, also Pan-ATE-CGH-problem, the set $S_e(q)$ is as a \emph{public-key set}, and the set $S_v(q)$ is as a \emph{private-key set}.
\item Infinite graphic-sequences $\{\{G_{s,k}\}^{+\infty}_{-\infty}\}^{+\infty}_{-\infty}$.
\item In Definition \ref{defn:every-zero-graphic-group-homomorphism}, we have an every-zero graphic group homomorphisms
$$\{F_f(G);\oplus \ominus\}\rightarrow \{F_h(H);\oplus \ominus\}
$$ defined by two every-zero graphic groups $\{F_f(G);\oplus \ominus\}$ based on a graph set $F_f(G)=\{G_i\}^m_1$ and $\{F_h(H);\oplus \ominus\}$ based on a graph set $F_h(H)=\{H_i\}^m_1$, when as there are graph homomorphisms $G_i\rightarrow H_i$ defined by $\theta_i:V(G_i)\rightarrow V(H_i)$ with $i\in [1,m]$.
\end{asparaenum}

\subsection{Topological Key-pairs}

Notice that colored graphs are consisted of topological structures and mathematical constraints. Our public-key graphs and private-key graphs, naturally, contain topological structures and mathematical constraints, so are topological signature authentications. We use ``Key-encryption = cipher code'', and ``Key-pair = a public-key and a private-key'' hereafter.

We will design the following basic \emph{topological public-key models}:

(i) A public-key string is a number-based string (StringKey-only).

(ii) A public-key graph is a colored graph (GraphKey-only).

(iii) A public-key graph and a public-key number-based string (GraphString-Key).

In this subsection, the sentence ``public-key number-based string'' is abbreviated as ``public-key string'', and the sentence ``private-key number-based string'' is abbreviated as ``private-key string''.

\subsubsection{The StringKey-only algorithm}

\textbf{Topological public-key model-I (The StringKey-only algorithm).}
\begin{asparaenum}[\textbf{\textrm{Step}} 1.]
\item Alice sends to Bob her public-key string $s_{Apub}$.
\item Bob encrypts a plaintext $P_{lan}$ by Alice's public-key string $s_{Apub}$ and Bob's topological signature $T_{sigB}$ (as the identity authentication), the encrypted file is denoted as $S_{ocum}=\langle P_{lan},s_{Apub},T_{sigB}\rangle$.
\item Alice uses her own topological signature $T_{sigA}$ to decrypt the topological signature protection of the encrypted file $S_{ocum}$ by the topological signature authentication $A_{uth}\langle T_{sigA},T_{sigB}\rangle$, the resultant file is denoted as $S^*_{ocum}$.
\item Alice uses her private-key string $s_{Apri}$ to decrypt safely the file $S^*_{ocum}$ having Alice's public-key string protection for reading the original plaintext $P_{lan}$ without worry about the files being tampered.
\end{asparaenum}

\vskip 0.4cm

For the security of the StringKey-only algorithm we show part of computational complexities of the algorithm as follows:
\begin{asparaenum}[\textrm{\textbf{ComSKA}}-1. ]
\item Attacking the Alice's public-key string $s_{Apub}$ will meet PRONBS-problem defined in Problem \ref{question:PRONBS-problems00}.
\item There are hundreds of graphs for making two topological signatures $T_{sigA}$ and $T_{sigB}$.
\item There are hundreds of colorings and labelings for finding the particular colorings admitted by the public-key graphs $T_{sigA}$ and $T_{sigB}$.
\item Attacking the topological signature authentication $A_{uth}\langle T_{sigA},T_{sigB}\rangle$ will meet the Subgraph Isomorphic Problem, which is a NP-complete problem.
\end{asparaenum}

\subsubsection{The GraphKey-only algorithm}

\textbf{Topological public-key model-II (The GraphKey-only algorithm).}
\begin{asparaenum}[\textbf{\textrm{Gtep}} 1.]
\item Alice sends to Bob a public-key graph $G_{Apub}$.
\item Bob encrypts a plaintext $P_{lan}$ by Alice's public-key graph $G_{Apub}$ and Bob's topological signature $T_{sigB}$ (as the identity authentication), denoted as $G_{ocum}=\langle P_{lan},G_{Apub},T_{sigB}\rangle$.
\item Alice uses her own topological signature $T_{sigA}$ to decrypt the topological signature protection of the encrypted file $G_{ocum}$, as the topological signature authentication $A_{uth}\langle T_{sigA},T_{sigB}\rangle$ passed, the resultant file is denoted as $G^*_{ocum}$.
\item Alice uses her private-key graph $G_{Apri}$ to decrypt safely the file $G^*_{ocum}$ only protected by Alice's public-key graph $G_{Apub}$, and then gets the original plaintext $P_{lan}$.
\end{asparaenum}

\vskip 0.4cm

For the security of the GraphKey-only algorithm we show part of computational complexities of the algorithm as follows:
\begin{asparaenum}[\textrm{\textbf{ComGKA}}-1. ]
\item There are hundreds of graphs for making Alice's public-key graph $G_{Apub}$ and Bob's topological signature $T_{sigB}$.
\item There are hundreds of colorings and labelings for finding the particular colorings admitted by the public-key graph $G_{Apub}$ and Bob's topological signature $T_{sigB}$.
\item Attacking the topological signature authentication $A_{uth}\langle T_{sigA},T_{sigB}\rangle$ by deciphering violently will meet the Subgraph Isomorphic Problem, which is a NP-complete problem.
\end{asparaenum}

\subsubsection{The GraphString-Key algorithm}

\textbf{Topological public-key model-III (The GraphString-Key algorithm).}
\begin{asparaenum}[\textbf{\textrm{GStep}} 1.]
\item Alice sends to Bob a key-package-A including her public-key graph $G_{Apub}$ and her public-key string $s_{Apub}$.
\item Bob encrypts a plaintext $P_{lan}$ by $G_{Apub}$, $s_{Apub}$ and Bob's topological signature $T_{sigB}$ (as the identity authentication), denoted as $D_{ocum}=\langle P_{lan},G_{Apub},s_{Apub},T_{sigB}\rangle$.
\item First, Alice decrypts the encrypted file $D_{ocum}$ by her private-key graph $G_{Apri}$ as the topological authentication $A_{uth}\langle T_{sigA},T_{sigB}\rangle$ agreed by both Alice and Bob in advance, or the third party, where Alice's topological signature $T_{sigA}=\langle G_{Apub},G_{Apri}\rangle $. Similarly, Bob's topological signature signature $T_{sigB}=\langle G_{Bpub},G_{Bpri}\rangle $.
\item If $A_{uth}\langle T_{sigA},T_{sigB}\rangle$ decrypts the encrypted file $D_{ocum}$ out the topological structure protection, the resultant file is denoted as $D^*_{ocum}$.
\item Alice uses her private-key string $s_{Apri}$ to decipher $D^*_{ocum}$ protected only by Alice's public-key string $s_{Apub}$.
\end{asparaenum}

\vskip 0.4cm

Part of advantages of the GraphString-Key algorithm are as follows:
\begin{asparaenum}[\textrm{\textbf{ADV-GSKA}}-1. ]
\item Alice's topological signature $T_{sigA}=\langle G_{Apub},G_{Apri}\rangle $ may be a colored graph $T$ admits a $W$-constraint coloring $f$, which induces the Topcode-matrix $T_{code}(T,f)_{3\times q}$ of the colored graph $T$. Sometimes, Alice's topological signature $T_{sigA}=\langle G_{Apub},G_{Apri}\rangle $ consists of two colored graphs or more colored graphs.
\item There are $(3q)!$ different public-key strings produced by the Topcode-matrix $T_{code}(T,f)_{3\times q}$. In other word, there are enough number-based strings to make the Key-pairs for larger edge number $q$.
\item A colored graph $T$ is easy to be expressed by the vertex-coincided graph of two vertex-disjoint graphs $H$ and $H\,'$ by the vertex-coinciding operation, that is $T=H[\odot]H\,'$, if $T$ is a tree.
\item The topological signature authentication $A_{uth}\langle T_{sigA},T_{sigB}\rangle=\langle Q,J\rangle$ in the Graph-String algorithm is consisted of two vertex-coincided colored graphs $Q=T_{sigA}$ and $J=T_{sigB}$, such that $T_{sigA}=A_{uth}[\textrm{dig}]J$ is for Alice, anther one $T_{sigB}=A_{uth}[\textrm{dig}] Q$ is for Bob, where $A_{uth}[\textrm{dig}] J$ and $A_{uth}[\textrm{dig}] Q$ are two semi-topological signature authentications.
\end{asparaenum}

\vskip 0.4cm

For the security of the GraphString-Key algorithm we show part of computational complexities of the algorithm as follows:
\begin{asparaenum}[\textrm{\textbf{ComGSKA}}-1. ]
\item There are $(3q)!$ different public-key strings produced by the Topcode-matrix $T_{code}(T,f)_{3\times q}$.
\item There are many graphs for finding topological signatures used here, however there is no construction algorithm with the function of distinguishing isomorphic graphs.
\item There are hundreds of colorings and labelings, however it is extremely difficulty for determining the colorings admitted by the topological signatures mentioned here.
\item The public-key graph $T=H[\odot]H\,'$ is related with Subgraph Isomorphic Problem, a NP-complete problem.
\item Deciphering the public-key string $s_{Apub}$ is related with PRONBS-problem defined in Problem \ref{question:PRONBS-problems00}.
\end{asparaenum}

\subsection{Techniques for making topological Key-pairs}

In \cite{Yao-Su-Ma-Wang-Yang-arXiv-2202-03993v1}, the authors introduced topological signature authentication techniques based on complete graphs, planar graphs and trees for the application of topological coding. We will present some techniques for making public-key graphs and private-key graphs in this subsection.

\subsubsection{MPG-Key-pairs}

By Remark \ref{remark:maximal-planar-graph-embedding}, if $W$ is a cycle $C$ of $k$ vertices in a maximal planar graph $G$, the cycle-split graph $G\wedge C$ has just two vertex disjoint components $G^C_{out}$ and $G^C_{in}$, called \emph{semi-maximal planar graphs}, where $G^C_{out}$ is in the \emph{infinite plane}, and $G^C_{in}$ is inside of $G$ (Ref. \cite{Jin-Xu-Maximal-Science-Press-2019}), then the maximal planar graph $G$ can be written as $G=G^C_{out}[\overline{\ominus}^C_k]G^C_{in}$, also, as a topological authentication.

In Fig.\ref{fig:mpg-public-private-key}, one public-key graph $H$ corresponds three private-key graphs $T_1,T_2,T_3$.

\begin{figure}[h]
\centering
\includegraphics[width=16.4cm]{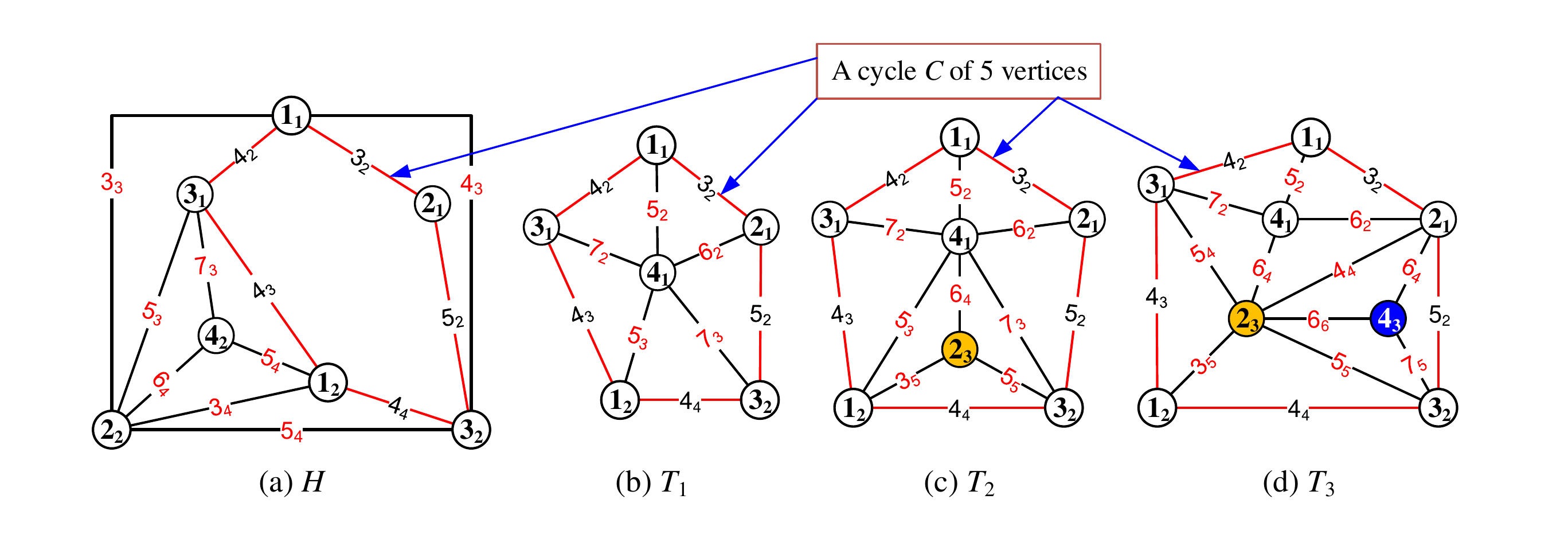}\\
\caption{\label{fig:mpg-public-private-key}{\small (a) $H$ is a public-key graph; (b), (c) and (d) are three private-key graphs. Each maximal planar graph $H[\overline{\ominus}^C_5]T_i$ for $i=1,2,3$ makes a topological signature authentication.}}
\end{figure}

For the security and computational complexity of the MPG-Key-pairs we have:

(i) Since any algorithm for determining the planarity of graphs is a NP-complete problem \cite{Borodin-O-Kotzig-1989}, so the MPG-Key-pairs are guaranteed to be safe.

(ii) On the other hands, the number of \emph{unlabeled (non-isomorphic) planar graphs} on $n$ vertices is between $27.2^{n}$ and $30.06^{n}$, roughly, between $2^{4.7n}$ and $2^{6n}$.

(iii) The following theorem is also the security guarantee of MPG-Key-pairs:
\begin{thm}\label{thm:666666}
$^*$ If a \emph{public-key graph} is a semi-maximal planar graph $G^C_{out}$, then it corresponds infinite private-key graphs, in which each one is a semi-maximal planar graph $G^C_{in}$ to form a maximal planar graph $G=G^C_{out}[\overline{\ominus}^C_k]G^C_{in}$.
\end{thm}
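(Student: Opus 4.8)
The plan is to produce, for the fixed outer semi-maximal planar graph $G^C_{out}$ with boundary cycle $C=v_1v_2\cdots v_kv_1$ on $k$ vertices, an explicit infinite family of inner semi-maximal planar graphs that fill the hole bounded by $C$ so that the cycle-coinciding operation $[\overline{\ominus}^{cyc}_k]$ returns a maximal planar graph. By the description in Remark \ref{remark:maximal-planar-graph-embedding}, $G^C_{out}$ is a plane graph in which every face is a triangle except one distinguished face whose boundary is the $k$-cycle $C$ (the region where the inner part used to sit). So the whole problem reduces to triangulating the interior $k$-gon bounded by $C$ in infinitely many genuinely different ways, while guaranteeing that the gluing along $C$ introduces neither multiple edges nor a loss of planarity.

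First I would give a base filling. Place a new hub vertex $u_0$ inside $C$ and join it to every boundary vertex $v_1,\dots,v_k$, producing a wheel $W$ with rim $C$; all interior faces are then the triangles $\Delta_{u_0v_iv_{i+1}}$ (indices modulo $k$). The crucial feature is that this filling adds no chord $v_iv_j$ of the cycle $C$, hence no edge already carried by $G^C_{out}$ can be duplicated when the copy of $C$ in the filling is identified with the copy in $G^C_{out}$; thus $G^{(0)}=G^C_{out}[\overline{\ominus}^{cyc}_k]W$ is a simple plane graph. Since every face of $G^C_{out}$ other than the hole is a triangle and the hole is now partitioned into the triangles of the wheel, every face of $G^{(0)}$ is a triangle, i.e. $G^{(0)}$ is maximal planar.

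To obtain infinitely many fillings, I would iterate the \emph{vertex-in-triangle} move, which is exactly the local form of the triangular-face embedding of Definition \ref{defn:maximal-planar-graph-embedding}: choose any interior triangular face $\Delta_{abc}$ of the current filling none of whose edges lies on $C$, insert a new vertex $w$ in its interior, and join $w$ to $a,b,c$. This replaces one triangle by three, keeps $C$ and all existing chords untouched, and increases the order by exactly one. Writing $G^C_{in,n}$ for the filling obtained after $n$ such moves starting from the wheel, its order is $k+1+n$, so these graphs are pairwise non-isomorphic and infinite in number. Each $G^{(n)}=G^C_{out}[\overline{\ominus}^{cyc}_k]G^C_{in,n}$ is again a triangulated plane graph, hence maximal planar, and because $C$ is a separating cycle of $G^{(n)}$ whose split (Remark \ref{remark:maximal-planar-graph-embedding}) recovers $G^C_{out}$ together with $G^C_{in,n}$, every $G^C_{in,n}$ is by definition a semi-maximal planar graph. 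This exhibits the required infinite family of private-key graphs.

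The main obstacle is the gluing verification rather than the counting: I must ensure that identifying the two copies of $C$ yields a genuine plane triangulation, with no doubled edge, no crossing, and no non-triangular face. The wheel base case is chosen precisely to neutralize the dangerous situation, namely a chord of $C$ appearing simultaneously on the inner and the outer side; since none of my fillings ever adds a chord of $C$, the only shared edges are the rim edges $v_iv_{i+1}$, which are exactly the ones meant to be identified. A careful treatment of the planar embedding (respecting the cyclic order of $C$ inherited from $G^C_{out}$ so that the filling sits inside the hole) together with the small boundary values of $k$ completes the argument; the known abundance of planar triangulations then guarantees the family is not merely infinite but combinatorially rich.
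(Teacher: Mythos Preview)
Your constructive approach is sound and, in fact, supplies more than the paper does: the paper states this theorem without proof (it is marked with the asterisk the authors use for unproven assertions), so your wheel-plus-stacking construction is a genuine contribution rather than a rediscovery.

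There is one small but real glitch in your iteration step. You require the chosen triangle $\Delta_{abc}$ to have \emph{none} of its edges on $C$, but in the base wheel $W$ every interior face $\Delta_{u_0v_iv_{i+1}}$ has the rim edge $v_iv_{i+1}\in E(C)$, so as written the very first vertex-in-triangle move cannot be performed and the family $\{G^C_{in,n}\}$ never gets off the ground. The restriction is unnecessary anyway: inserting a vertex $w$ into a triangle and joining it to the three corners adds only the edges $wa,wb,wc$, none of which can be a chord of $C$ since $w\notin V(C)$. Simply drop the ``none of whose edges lies on $C$'' clause and your argument goes through unchanged; the key invariant you actually need, and correctly isolate, is that no filling in the family contains a chord $v_iv_j$ of $C$, which the wheel and all its stackings satisfy automatically.
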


\begin{figure}[h]
\centering
\includegraphics[width=16.4cm]{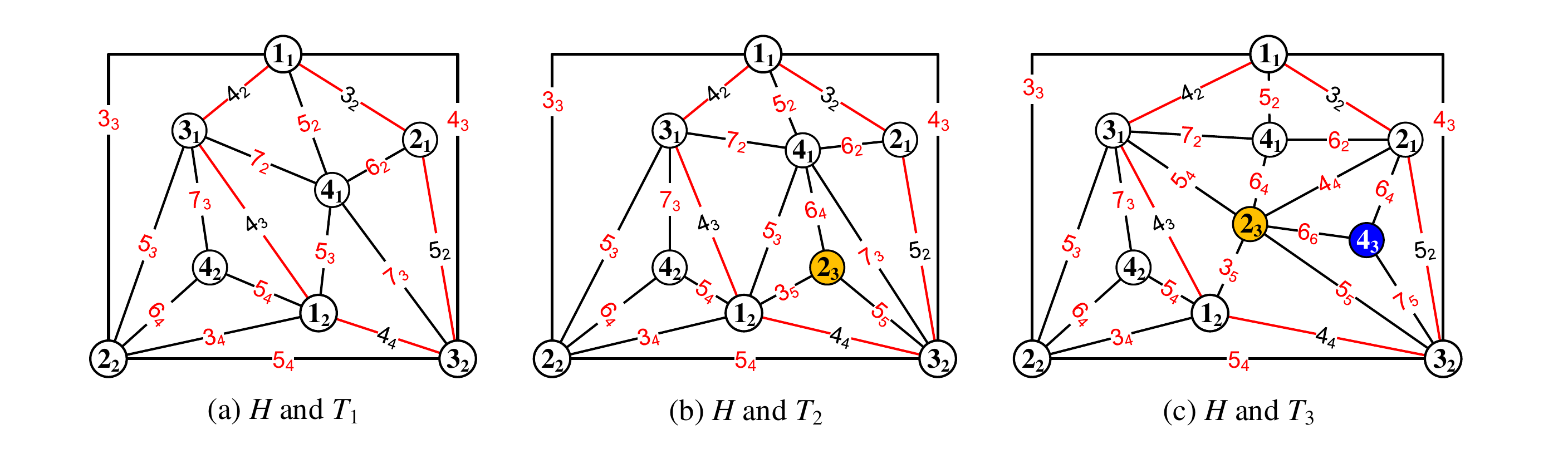}\\
\caption{\label{fig:pub-pri-in-one}{\small Three topological signature authentications made by the semi-maximal planar graphs shown in Fig.\ref{fig:mpg-public-private-key}.}}
\end{figure}

Part of advantages and disadvantages of using maximal planar graph as topology signature and making Key-pairs are as follows:

\begin{asparaenum}[\textrm{\textbf{Adv}}-1. ]
\item A maximal planar graph can be easily scanned into the computer, similar to the two-dimensional code.
\item It is easy to make one-vs-more or more-vs-more topological signature pairs for topological signature authentication $G=G^C_{out}[\overline{\ominus}^C_k]G^C_{in}$.
\item There is no formula and polynomial algorithm for the number of maximal planar graphs, which brings huge overhead to the topological signature of attacking Maximal Planar Graphs.
\item Except for the special maximal plane graphs, the traditional colorings of maximal plane graphs are almost determined.
\item Maximal planar graphs admitting 4-colorings are associated with colored 3-regular planar graphs, and moreover maximal planar graphs correspond to 3-regular planar graphs one by one.
\item Special maximal plane graphs are as topological signatures. For example, if the degree of the vertex of a simple graph $G$ of $p$ is one of $k, l$ and $m$, then $G$ is said to be $(k,l,m)$-regular, and moreover if its number of edges $|E(G)|= 3p-6$, then we call graph $G$ $(k,l,m)$-regular maximal planar graph.
\end{asparaenum}

\subsubsection{KSTree Key-pairs}

\begin{lem}\label{lem:complete-graph-spanning-trees}
\cite{Bondy-2008} The famous Cayley's formula $\tau(K_n)=n^{n-2}$ tells us: Each complete graph $K_{n}$ admitting a labeling $f:V(K_n)\rightarrow [1,n]$ distributes $n^{n-2}$ different spanning trees, such that each spanning tree $T$ holds $f(V(K_n))=[1,n]=f(V(T))$ true.
\end{lem}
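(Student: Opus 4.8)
The plan is to recognize that a spanning tree $T$ of $K_n$ under the labeling $f:V(K_n)\rightarrow [1,n]$ is precisely a tree on the labeled vertex set $[1,n]$, since $f$ assigns pairwise distinct labels and $K_n$ already contains every possible edge. Thus counting spanning trees of $K_n$ is the same as counting labeled trees on $n$ vertices, and the goal reduces to exhibiting a bijection between the set $\mathcal{T}_n$ of such trees and the set $[1,n]^{n-2}$ of all sequences of length $n-2$ over $[1,n]$, whose cardinality is $n^{n-2}$.

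First I would set up the Pr\"ufer encoding. Given a tree $T\in \mathcal{T}_n$ with $n\geq 2$, I repeatedly delete the leaf carrying the smallest label and record the label of its unique neighbor; after exactly $n-2$ deletions only a single edge remains, and the recorded labels form a sequence $P(T)=(a_1,a_2,\dots,a_{n-2})\in [1,n]^{n-2}$. I would then construct the inverse map (Pr\"ufer decoding): the key observation is that each vertex $x$ appears in $P(T)$ exactly $\deg_T(x)-1$ times, so the vertices absent from the sequence are exactly the leaves; reading the sequence left to right while maintaining the set of currently available smallest-labeled leaves reconstructs the edges one at a time, and finally joins the two remaining vertices.

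The key steps, in order, are: (i) verify that the encoding is well defined, i.e.\ that at every stage a leaf exists and the process terminates after $n-2$ steps with two vertices; (ii) prove the degree-multiplicity identity relating the number of occurrences of a label in $P(T)$ to its degree; (iii) show the decoding algorithm outputs a tree and that encoding followed by decoding (and vice versa) is the identity, establishing that $P$ is a bijection. The count $|\mathcal{T}_n|=|[1,n]^{n-2}|=n^{n-2}$ then follows immediately, and the condition $f(V(K_n))=[1,n]=f(V(T))$ holds because every spanning tree spans all $n$ labeled vertices.

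The main obstacle will be the careful verification in step (iii) that the decoding procedure is genuinely inverse to the encoding --- in particular, proving that the candidate leaf selected at each decoding step coincides with the leaf deleted at the corresponding encoding step. This bookkeeping, built on the degree-multiplicity identity of step (ii), is the technical heart of the argument; once it is in place the bijection and hence the formula are immediate. An alternative route that sidesteps this induction would be to invoke Kirchhoff's Matrix-Tree Theorem and evaluate the relevant cofactor of the Laplacian of $K_n$, whose eigenvalues $0$ and $n$ (the latter with multiplicity $n-1$) are known, giving $\tau(K_n)=\frac{1}{n}\cdot n^{n-1}=n^{n-2}$; but the Pr\"ufer bijection is more elementary and self-contained, so I would present it as the primary proof.
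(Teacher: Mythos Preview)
Your proposal is correct: the Pr\"ufer encoding/decoding bijection is a standard and complete proof of Cayley's formula, and your outline of the degree-multiplicity identity together with the inverse verification is exactly what is needed. Note, however, that the paper does not supply its own proof of this lemma at all --- it is stated with a citation to \cite{Bondy-2008} and used as a black-box fact, so there is no in-paper argument to compare against. Your Pr\"ufer approach (and the Matrix-Tree alternative you mention) are both classical proofs that the cited textbook would contain.
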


Theorem \ref{thm:K-2m-spanning-trees-2m-1-edges} tells us: Each complete graph $K_{2m}$ with $m\geq 2$ can be vertex-split into $m$ mutually edge-disjoint spanning trees $T_1,T_2,\dots ,T_m$ of $2m-1$ edges, such that the topological signature authentication $E(K_{2m})=\bigcup ^m_{i=1}E(T_i)$ holds true.

\begin{problem}\label{question:techniques-KSTrees}
Let $S_{tree}(K_{2m})$ be the set of $(2m)^{2m-2}$ different spanning trees of a colored complete graph $K_{2m}$; refer to Lemma \ref{lem:complete-graph-spanning-trees}. \textbf{Find} all Key-pairs of mutually edge-disjoint spanning trees $T^{pub}_{i}$ and $T^{pri}_{i}$ holding one or more of the following properties:
\begin{asparaenum}[\textbf{\textrm{KSpairs}}-1.]
\item The topological signature authentication $G=T^{pub}_{i}[\odot_{prop}]T^{pri}_{i}$ has its own maximal degree $\Delta(G)=k$.
\item The topological signature authentication $G=T^{pub}_{i}[\odot_{prop}]T^{pri}_{i}$ contains a Hamilton cycle, or is a regular graph, or is a bipartite graph.
\item $T^{pub}_{i}\cong T^{pri}_{i}$ although $f(V\big (T^{pub}_{i}\big ))=[1,n]=f(V\big (T^{pri}_{i}\big ))$.
\item Let $n_d(G)$ be the number of vertices of degree $d$ in a graph $G$. Then
\begin{equation}\label{eqa:555555}
\sum _{3\leq d\leq \Delta \big (T^{pub}_{i}\big )}(d-2)n_d\big (T^{pub}_{i}\big )=\sum _{3\leq d\leq \Delta \big (T^{pri}_{i}\big )}(d-2)n_d\big (T^{pri}_{i}\big )
\end{equation}

\item Two degree sequences of two spanning trees $T^{pub}_{i}$ and $T^{pri}_{i}$
$$
\textrm{deg}\big (T^{pub}_{i}\big )=(a_{i,1},a_{i,2},\dots, a_{i,2m}),~\textrm{deg}\big (T^{pri}_{i}\big )=(b_{i,1},b_{i,2},\dots, b_{i,2m})
$$ hold $a_{i,k}=b_{i,k}$ for $k\in [1,2m]$ true.
\item Two diameters $D_{iam}\big (T^{pub}_{i}\big )=D_{iam}\big (T^{pri}_{i}\big )$.
\item Two spanning trees $T^{pub}_{i}$ and $T^{pri}_{i}$ are (i) two caterpillars; (ii) two lobsters; (iii) one caterpillar and one lobster.
\item Two spanning trees $T^{pub}_{i}$ and $T^{pri}_{i}$ holds $T^{pri}_{i}=T^{pub}_{i}+xy-uv$ with $xy\not \in E\big (T^{pub}_{i}\big )$ and $uv\in E\big (T^{pub}_{i}\big )$, that is a \emph{graph-operation homomorphism} $T^{pri}_{i}\rightarrow _{\pm e}T^{pub}_{i}$.
\end{asparaenum}
\end{problem}

For the security and computational complexity of Problem \ref{question:techniques-KSTrees}, it refers to Table-3, and part of properties of Problem \ref{question:techniques-KSTrees} are related with the Subgraph Isomorphic Problem, a NP-complete problem.

\begin{center}
\textbf{Table-3.} The numbers of trees of $p$ vertices \cite{Harary-Palmer-1973}.
\end{center}
\begin{center}
\begin{tabular}{c|rr}
$p$&$t_p$&$T_p$\\
\hline
7&11&48\\
8&23&115\\
9&47&286\\
10&106&719\\
11&235&1,842\\
12&551&4,766\\
13&1,301&12,486\\
14&3,159&32,973\\
15&7,741&87,811\\
16&19,320&235,381\\
\end{tabular}\qquad
\begin{tabular}{c|rr}
$p$&$t_p$&$T_p$\\
\hline
17&48,629&634,847\\
18&123,867&1,721,159\\
19&317,955&4,688,676\\
20&823,065&12,826,228\\
21&2,144,505&35,221,832\\
22&5,623,756&97,055,181\\
23&14,828,074&268,282,855\\
24&39,299,897&743,724,984\\
25&104,636,890&2,067,174,645\\
26&279,793,450&5,759,636,510\\
\end{tabular}
\end{center}
where $t_p$ is the number of \emph{non-isomorphic trees} of $p$ vertices, and $T_p$ is the number of \emph{non-isomorphic rooted trees} of $p$ vertices, and $t_{29}=$5,469,566,585.

\subsubsection{Key-pairs based on complete bipartite graphs}

\begin{thm}\label{thm:666666}
$^*$ Every bipartite graph is a subgraph of a certain complete bipartite graph admitting a certain $W$-constraint labeling.
\end{thm}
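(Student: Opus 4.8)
The plan is to reduce the statement to the single explicit fact that every complete bipartite graph carries a set-ordered graceful labeling, which is one of the $W$-constraint labelings catalogued in Definition \ref{defn:basic-W-type-labelings}. First I would fix a bipartite graph $G$ with bipartition $V(G)=X\cup Y$, where $|X|=m$ and $|Y|=n$, and observe that since every edge of $G$ joins a vertex of $X$ to a vertex of $Y$, adding all the missing $X$--$Y$ edges turns $G$ into the complete bipartite graph $K_{m,n}$ on the same vertex set. Thus $G$ is a subgraph of $K_{m,n}$ with no modification of the vertex partition, and the whole problem collapses to exhibiting one $W$-constraint labeling of $K_{m,n}$. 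This is the bipartite analogue of the embedding philosophy behind Theorem \ref{thm:666666} (Balakrishnan--Sampathkumar) and its following remark, except that here no auxiliary edges outside the bipartition are ever required.

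Second, I would construct the labeling explicitly. Writing $X=\{x_1,x_2,\dots,x_m\}$ and $Y=\{y_1,y_2,\dots,y_n\}$, I set $\theta(x_i)=i-1$ for $i\in[1,m]$ and $\theta(y_j)=jm$ for $j\in[1,n]$. For the edge $x_iy_j$ the induced color is $\theta(x_iy_j)=|\theta(x_i)-\theta(y_j)|=jm-(i-1)$. Running $i$ over $[1,m]$ for a fixed $j$ produces exactly the block $\{(j-1)m+1,(j-1)m+2,\dots,jm\}$, so as $j$ ranges over $[1,n]$ the edge colors sweep out $[1,mn]$ with no repetition. Since $|E(K_{m,n})|=mn$, this is a graceful labeling, and because $\max\theta(X)=m-1<m=\min\theta(Y)$ it is in fact a set-ordered graceful labeling in the sense of Definition \ref{defn:basic-W-type-labelings}. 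A set-ordered graceful labeling is by definition a $W$-constraint labeling, with $W$ the graceful constraint, so this already settles the theorem at the existential level at which it is phrased.

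I expect the only genuinely delicate point to be the interpretation of ``a certain $W$-constraint labeling.'' If one asks only for existence of some $W$, the construction above finishes the proof immediately. The hard part would be strengthening the conclusion so that $K_{m,n}$ simultaneously realizes the whole list of $W$ from Definition \ref{defn:basic-W-type-labelings}, and here two obstacles appear. The equivalence machinery of Theorem \ref{thm:connections-several-labelings} is stated for trees, and $K_{m,n}$ is not a tree once $m,n\geq 2$, so it cannot be invoked directly; while Theorem \ref{thm:equivalent-k-d-total-colorings} does apply to bipartite connected graphs and, fed the graceful $(1,1)$-total coloring that our labeling induces, yields edge-magic, edge-difference, graceful-difference, felicitous-difference and harmonious versions, these are $(k,d)$-\emph{colorings} rather than \emph{labelings}, since they need not be injective on the vertices. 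Thus for a uniform ``all $W$'' strengthening I would verify each magic-type constraint on $K_{m,n}$ by hand using the set-ordered graceful labeling as a seed together with the parameterized Topcode-matrix of Definition \ref{defn:bipartite-parameterized-topcode-matrix}, and I would document precisely which $W$ survive as genuine labelings, since some magic-type labelings are genuinely restrictive for large complete bipartite graphs; the existential statement, however, is fully guaranteed by the graceful labeling alone.
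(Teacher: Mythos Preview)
Your proposal is correct, and in fact the paper provides no proof at all for this theorem: it is simply asserted (the $^*$ mark) and immediately followed by Lemma~\ref{lem:444445555}, which lists the many labelings a bipartite graph inherits once it admits a set-ordered graceful labeling. Your explicit construction $\theta(x_i)=i-1$, $\theta(y_j)=jm$ is the standard $\alpha$-labeling of $K_{m,n}$ and fills in exactly the gap the paper leaves open; the verification that the edge colors partition into the blocks $[(j-1)m+1,\,jm]$ and hence cover $[1,mn]$ is clean and complete.

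Your closing discussion is also well-judged: the paper's phrasing ``a certain $W$-constraint labeling'' is existential, so your graceful labeling already discharges it, and you are right that pushing to the full catalogue of $W$'s would require Lemma~\ref{lem:444445555} (which does apply to any bipartite connected graph, not just trees, despite the tree-focused Theorem~\ref{thm:connections-several-labelings}) together with care about the coloring-versus-labeling distinction you flag. That nuance goes beyond what the theorem actually claims, but it is the natural next step and matches how the paper silently positions Lemma~\ref{lem:444445555} right after the statement.
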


\begin{lem}\label{lem:444445555}
\cite{Bing-Yao-arXiv:2207-03381} If a bipartite and connected $(p,q)$-graph $G$ admits a set-ordered graceful labeling, then the graph $G$ admits each one of the following labelings:
\begin{asparaenum}[\textbf{Coloring}-1. ]
\item graceful-intersection total set-labeling, graceful group-labeling.
\item odd-graceful labeling, set-ordered odd-graceful labeling, edge-odd-graceful total labeling, odd-graceful-intersection total set-labeling, odd-graceful group-labeling, perfect odd-graceful labeling.
\item elegant labeling, odd-elegant labeling.
\item edge-magic total labeling, super edge-magic total labeling, super set-ordered edge-magic total labeling, edge-magic total graceful labeling.
\item relaxed edge-magic total labeling.
\item odd-edge-magic matching labeling, ee-difference odd-edge-magic matching labeling.
\item 6C-labeling, odd-6C-labeling.
\item ee-difference graceful-magic matching labeling.
\item difference-sum labeling, felicitous-sum labeling.
\item multiple edge-meaning vertex labeling.
\item perfect $\varepsilon$-labeling.
\item $(k,d)$-edge antimagic total labeling, $(k, d)$-arithmetic.
\item image-labeling, $(k,d)$-harmonious image-labeling.
\item twin $(k,d)$-labeling, twin Fibonacci-type graph-labeling, twin odd-graceful labeling.
\end{asparaenum}
\end{lem}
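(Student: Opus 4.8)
The plan is to treat the set-ordered graceful labeling $\theta$ as a single ``master'' datum and to build every labeling on the list by an explicit rewriting of $\theta$, verifying in each case the defining constraints recorded in Definition \ref{defn:basic-W-type-labelings}, Definition \ref{defn:kd-w-type-colorings} and Definition \ref{defn:together-definition-set-labelings}. First I would fix notation: write $V(G)=X\cup Y$ with $X=\{x_1,\dots,x_s\}$, $Y=\{y_1,\dots,y_t\}$, $s+t=p$, ordered so that
$$0=\theta(x_1)<\theta(x_2)<\cdots<\theta(x_s)<\theta(y_1)<\theta(y_2)<\cdots<\theta(y_t)=q,$$
which is exactly what the set-ordered condition $\max\theta(X)<\min\theta(Y)$ together with $\theta(E(G))=[1,q]$ and $|\theta(V(G))|=p$ provides. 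The single most useful consequence I would record up front is that for every edge $xy\in E(G)$ with $x\in X$, $y\in Y$ one has $\theta(y)>\theta(x)$, so that
$$\theta(xy)=|\theta(x)-\theta(y)|=\theta(y)-\theta(x).$$
This collapses every absolute value into a sign-definite affine expression, which is the lever that makes each derived constraint computable.

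Next I would dispatch the bulk of the list through the parameterized machinery of Definition \ref{defn:bipartite-parameterized-topcode-matrix}. Forming $P_{(k,d)}(G,F)=k\cdot I^0+d\cdot T_{code}(G,\theta)$ as in Eq.(\ref{eqa:definition-parameterized-topcode-matrix}) produces a graceful $(k,d)$-total coloring $F$ of $G$, and since $\theta$ is injective on $V(G)$ and this affine rewriting preserves distinctness, each derived total coloring is in fact a labeling. Theorem \ref{thm:equivalent-k-d-total-colorings} then immediately yields the edge-magic, harmonious, edge-difference, felicitous-difference, graceful-difference and edge-antimagic $(k,d)$-total colorings. Specializing the parameters recovers the classical items: $(k,d)=(1,1)$ gives the graceful and (super, set-ordered) edge-magic families, $(k,d)=(1,2)$ gives the odd-graceful, edge-odd-graceful and odd-elegant families with edge colors in $[1,2q-1]^o$ via the odd-edge constructions of Definition \ref{defn:odd-edge-W-type-total-labelings-definition}, and general $(k,d)$ gives the $(k,d)$-edge antimagic total and $(k,d)$-arithmetic labelings. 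The elegant, felicitous, difference-sum and felicitous-sum labelings follow the same template with the harmonious-type rule $\theta(xy)\equiv\theta(x)+\theta(y)\pmod{qd}$, whose verification again reduces to the monotone structure above.

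Then I would handle the families that fall outside the $(k,d)$ framework. For the graceful-intersection and odd-graceful-intersection total set-labelings I would adapt the nested-interval (rainbow) construction behind Theorem \ref{thm:rainbow-total-set-labelings} and Definition \ref{defn:graceful-intersection}, assigning to each vertex a subset of the power set $[1,q]^2$ so that the intersection $F(u)\cap F(v)$ has the graceful edge label as a selectable representative; Lemma \ref{lem:set-ordered-w-cond-set-co} provides the passage from the set-ordered labeling to the corresponding proper total set-colorings. For the graceful, odd-graceful and harmonious group-labelings I would invoke the every-zero graphic group built from the additive shift orbit of $\theta$. For the image-labelings, the $(k,d)$-harmonious image-labelings, the twin $(k,d)$-labelings and the twin odd-graceful labelings I would use the mirror reflection $\theta'=c-\theta$ (with $c=q$, $2q-1$, or the relevant image coefficient) exactly as in Definition \ref{defn:image-labeling} and Lemma \ref{thm:graceful-image-labeling}; the 6C-labelings, odd-6C-labelings and the various matching labelings each need only a short bespoke assignment of the form ``color $X$ by an even/odd progression and $Y$ by its complement.''

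The hard part will not be any single construction but the uniformity and bookkeeping, and two difficulties stand out. First, several results I would like to cite (Theorem \ref{thm:connections-several-labelings}, Theorem \ref{thm:rainbow-total-set-labelings}, Lemma \ref{thm:graceful-image-labeling}) are stated for \emph{trees}, whereas the present claim is for arbitrary bipartite connected $(p,q)$-graphs; so I must check that each construction uses only the set-ordered partition and the surjectivity of the edge labels onto $[1,q]$, and never the acyclicity of the underlying graph. Second, each of the fourteen items carries its own constraint list, and the genuine content is verifying in every case that the inequality $\theta(x)<\theta(y)$ forces the relevant magic/difference constant to be well defined and the derived edge-color set to be exactly the prescribed arithmetic progression or odd set. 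Concentrating the argument on the single identity $\theta(xy)=\theta(y)-\theta(x)$, and organizing the targets into the three blocks above, is the way I would keep this verification finite and transparent rather than a case-by-case grind.
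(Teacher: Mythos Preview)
The paper does not prove this lemma; it is simply quoted with a citation to \cite{Bing-Yao-arXiv:2207-03381}, so there is no in-paper proof to compare against. Your strategy---treating the set-ordered graceful labeling as a master datum and deriving the rest by affine rewriting through the parameterized Topcode-matrix $P_{(k,d)}(G,F)=k\cdot I^0+d\cdot T_{code}(G,\theta)$---is exactly the methodology the paper and its references use for the fragments they do prove (cf.\ Theorem~\ref{thm:connections-several-labelings}, Theorem~\ref{thm:equivalent-k-d-total-colorings}, Theorem~\ref{thm:10-image-labelings}), so the overall architecture is sound.

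You have, however, correctly put your finger on the one genuine obstruction and then walked past it: the supporting results you invoke (Theorem~\ref{thm:connections-several-labelings}, Theorem~\ref{thm:rainbow-total-set-labelings}, Lemma~\ref{thm:graceful-image-labeling}, Theorem~\ref{thm:10-image-labelings}) are all stated for \emph{trees}, and several of the target labelings on the list---the 6C-labelings, the perfect $\varepsilon$-labeling, the twin Fibonacci-type graph-labeling, the multiple edge-meaning vertex labeling---are not defined anywhere in the present paper, so you cannot verify their constraints from the material at hand. Your plan to ``check that each construction uses only the set-ordered partition and never the acyclicity'' is the right instinct, but it is not a step you can actually carry out without the external definitions and without the proofs of the tree-case lemmas in front of you. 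As written, the proposal is a correct outline for the items whose definitions appear in this paper and whose tree-case proofs transparently use only bipartiteness plus $\theta(y)>\theta(x)$; for the remaining items it is a promissory note rather than a proof.
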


\begin{thm}\label{thm:equivalent-k-d-total-colorings}
\cite{Bing-Yao-arXiv:2207-03381} A bipartite and connected $(p,q)$-graph $G$ admits a graceful $(k,d)$-total coloring if and only if this graph $G$ admits each one of edge-magic $(k,d)$-total coloring, graceful-difference $(k,d)$-total coloring, edge-difference $(k,d)$-total coloring, felicitous-difference $(k,d)$-total coloring, harmonious $(k,d)$-total coloring and edge-antimagic $(k,d)$-total coloring.
\end{thm}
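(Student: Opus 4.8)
The plan is to make the \emph{set-ordered graceful labeling} the common pivot and to move between the base $(1,1)$-level and the parameterized $(k,d)$-level by means of the parameterized Topcode-matrix $P_{(k,d)}(G,F)=k\cdot I\,^0+d\cdot T_{code}(G,f)$ of Definition \ref{defn:bipartite-parameterized-topcode-matrix}. The first step is to record the elementary but crucial observation that for a bipartite connected $(p,q)$-graph $G$ with $V(G)=X\cup Y$, each of the $(k,d)$-total colorings in the list (as given in Definition \ref{defn:kd-w-type-colorings}) is determined by, and in turn determines, a base coloring $f$ of $G$ through the substitution $F(x)=f(x)d$ for $x\in X$ and $F(w)=k+f(w)d$ for $w\in Y\cup E(G)$; this is exactly the column-wise reading of $P_{(k,d)}(G,F)$. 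Thus proving the equivalence among the $(k,d)$-colorings reduces to proving the corresponding equivalence among their base colorings, together with checking that the substitution sends each $W$-constraint to the same $W$-constraint at the $(k,d)$-level.

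For the forward implication I would start from a graceful $(k,d)$-total coloring $F$. Stripping the parameters recovers a set-ordered graceful labeling $f$, whose defining data are $f(E(G))=[1,q]$, $\min f(X)=0$ and $\max f(X)<\min f(Y)$. From here the base-level equivalences already available do the work: a set-ordered graceful labeling is equivalent to admitting the matching edge-magic, harmonious and edge-antimagic labelings together with their difference-type companions, which is the content cited in Theorem \ref{thm:connections-several-labelings} and Lemma \ref{lem:444445555}, while the explicit algebra of the subsection ``Connections between magic-constraint colorings'' (Cases A--D) shows how, once one magic-type constraint is in force, the other three are produced by the same coloring after the set-ordered ordering resolves the absolute values. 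Applying the substitution above (equivalently, invoking Theorem \ref{thm:abc-linear-magic-string-colorings} with the appropriate $(a,b,c)$) lifts each of these base colorings to the desired $(k,d)$-total coloring, the magic constant at level $(k,d)$ being the linear form $d\cdot\lambda+(b+c)k$ predicted there. The reverse implication is symmetric in spirit: each of the remaining $(k,d)$-colorings is stripped to a base coloring obeying the same constraint, which by the same cited equivalences yields a set-ordered graceful labeling, and re-parameterizing returns a graceful $(k,d)$-total coloring; closing the cycle of implications then gives the stated ``if and only if''.

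The main obstacle I anticipate is bookkeeping rather than conceptual, and it concentrates at two points. First, the difference-type constraints (graceful-difference and felicitous-difference) carry absolute values, so one must use the set-ordered hypothesis $f(X)<f(Y)$ — and the induced sign of $f(u)-f(v)$ on each edge — uniformly to collapse $\big ||F(u)-F(v)|-F(uv)\big |$ and $|F(u)+F(v)-F(uv)|$ to genuine constants, rather than to the anti-magic ``color sets'' $\{c-2f(x):x\in X\}$ that appear in Cases A--D precisely when set-orderedness is dropped. Second, the harmonious and edge-antimagic cases replace the difference rule $f(uv)=|f(u)-f(v)|$ by the modular rule $f(uv)=f(u)+f(v)\ (\bmod^{*} qd)$, so I must verify that the substitution $F=k\cdot I\,^0+d\cdot T_{code}(G,f)$ intertwines the base modulus $q$ with the scaled modulus $qd$, and that the edge color set equalities such as $F(E(G))=S_{q-1,k,0,d}$ survive the parameterization intact.

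Once these two verifications are carried out, the equivalence among the base labelings lifts verbatim to the $(k,d)$-level through $P_{(k,d)}$, and the chain of equivalences closes. I would therefore organize the write-up as: (i) the correspondence lemma between $(k,d)$-colorings and their base colorings; (ii) the forward reduction to a set-ordered graceful labeling and its lift; (iii) the reverse lift from each listed constraint; with the absolute-value and modular checks isolated as two short verifications so that the remaining algebra is routine.
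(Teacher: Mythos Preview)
The paper does not prove this theorem: it is quoted twice (both occurrences carry the citation \cite{Bing-Yao-arXiv:2207-03381}) and no argument is supplied here, so there is no ``paper's own proof'' to compare against. What the paper does contain is the surrounding machinery your plan invokes --- Definition~\ref{defn:kd-w-type-colorings}, Definition~\ref{defn:bipartite-parameterized-topcode-matrix}, Theorem~\ref{thm:abc-linear-magic-string-colorings}, and the Case~A--D computations --- so your strategy of pivoting through $P_{(k,d)}(G,F)=k\cdot I\,^0+d\cdot T_{code}(G,f)$ is entirely in the spirit of the paper and is almost certainly how the cited reference proceeds.

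One genuine gap to watch: the $(k,d)$-total colorings of Definition~\ref{defn:kd-w-type-colorings} are \emph{colorings}, explicitly allowing $f(u)=f(w)$ for distinct vertices, whereas Theorem~\ref{thm:connections-several-labelings} and Lemma~\ref{lem:444445555} that you plan to invoke are stated for \emph{labelings} (injective on vertices). Stripping the parameters from a graceful $(k,d)$-total coloring therefore yields a set-ordered graceful \emph{coloring}, not a labeling, and the cited equivalences do not literally apply. You will need either to argue that the equivalence proofs go through verbatim at the coloring level (which they plausibly do, since the transformations in Theorem~\ref{thm:connections-several-labelings} are pointwise affine in the vertex colors and never use injectivity), or to work directly with the $(k,d)$-constraints and avoid the descent to $(1,1)$ altogether. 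A second, smaller point: your ``stripping'' step presumes that every graceful $(k,d)$-total coloring $F$ has the form $F(x)=g(x)d$ on $X$ and $F(y)=k+g(y)d$ on $Y$ for integer-valued $g$; this is immediate from the target sets $S_{m,0,0,d}$ and $S_{n,k,0,d}$, but you should also check that the edge constraint $F(uv)=|F(u)-F(v)|\in S_{q-1,k,0,d}$ forces the sign $F(y)>F(x)$ (hence the set-ordered property of $g$), which is what makes the absolute values in Cases~A--D collapse as you claim.
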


\begin{thm}\label{thm:666666}
\cite{Yao-Zhang-Sun-Mu-Sun-Wang-Wang-Ma-Su-Yang-Yang-Zhang-2018arXiv} Each complete bipartite graph $K_{m,n}$ admits a
proper total coloring
$$g : V (K_{m,n})\cup E(K_{m,n})\rightarrow [1,\chi\,''(K_{m,n})]$$
such that the edge-magic color set
$$g^*(E(K_{m,n})) = \{g(u) + g(uv) + g(v) : u\in E(K_{m,n})\}$$
is a consecutive integer set $[a, b]$.
\end{thm}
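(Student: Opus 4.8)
The plan is to give a direct construction, since the constant-sum (edge-magic) labelings supplied by Lemma~\ref{lem:444445555} generally use colours far larger than $\chi\,''(K_{m,n})$ and so cannot be recycled here. Throughout I assume $m\le n$, so that $\Delta(K_{m,n})=n$; write $X=\{x_1,\dots,x_m\}$ and $Y=\{y_1,\dots,y_n\}$ for the two parts, and recall the classical values $\chi\,''(K_{m,n})=n+1$ when $m<n$ and $\chi\,''(K_{n,n})=n+2$. The governing observations are that because every $x_i$ is adjacent to every $y_j$ the colour sets $g(X)$ and $g(Y)$ must be disjoint, and because each $x_i$ has full degree $n$ its incident edges already exhaust $n$ colours; hence one part can be made monochromatic while the other is coloured using whatever slack the palette leaves. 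I will control the edge colours by a circulant (Latin) pattern $c_{ij}=((i+j)\bmod n)+s$ for a suitable shift $s$, which is automatically a proper edge colouring and has the crucial feature that the colours incident to any fixed vertex form a block of consecutive residues.

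First I would treat the balanced case $m=n$, where the palette has the full size $n+2$. Setting $g(x_i)=1$ for all $i$, $g(y_j)=2$ for all $j$, and $c_{ij}=((i+j)\bmod n)+3\in\{3,\dots,n+2\}$ gives a proper total colouring: the circulant is proper on rows and on columns, every edge colour avoids $1$ and $2$, and $g(X)\cap g(Y)=\emptyset$. Since each $x_i$ has degree $n$ it meets all $n$ edge colours, so the sum $g(x_i)+g(x_iy_j)+g(y_j)=3+c_{ij}$ runs over every value of $\{6,7,\dots,n+5\}$. Thus the edge-magic colour set is exactly the interval $[6,n+5]$, while the palette is $\{1,\dots,n+2\}=[1,\chi\,'']$, as required.

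The substantive case is $m<n$, where only $n+1$ colours are allowed and neither part alone can be monochromatic if a constant sum were demanded. Here I set $g(X)=\{n+1\}$ and let the edges use $\{1,\dots,n\}$ via $c_{ij}=((i+j-2)\bmod n)+1$; then each $y_j$, having degree $m<n$, meets only $m$ consecutive edge-colours and has at least $n-m\ge 1$ free colours, so a legal value $g(y_j)$ exists. I would choose $g(y_j)$ to be the colour immediately following the used block at $y_j$. For fixed $j$ the sums $(n+1)+c_{ij}+g(y_j)$, taken over $i$, then form a run of $m$ consecutive integers, and as $j$ increases this run translates by $2$. An interval-covering computation shows that for $m\ge 2$ consecutive runs overlap or abut (the next run begins at most one past the end of the previous one precisely because $m\ge 2$), so their union is a single interval $[a,b]$; the degenerate star case $m=1$ is handled separately by taking $g(y_j)$ constant off one exceptional vertex, which again yields a consecutive set. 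In every subcase the palette is $\{1,\dots,n+1\}=[1,\chi\,'']$.

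The hard part will be the consecutiveness bookkeeping in the case $m<n$, and it has two genuinely delicate points. First, the circulant $c_{ij}$ wraps around modulo $n$ at the boundary columns, so the block of edge-colours at some $y_j$ actually splits into two arcs; I must verify that the sums induced by the two arcs still dovetail into the single interval rather than leaving a hole, which reduces to checking the shift alignment of $g(y_j)$ against the wrap. Second, I must confirm that no sum escapes the target interval and that every interior value is attained, i.e.\ the union of the translated runs has neither gaps nor overshoot; this is exactly where $m\ge 2$ is used. This also explains structurally why the theorem asks for a \emph{consecutive} set rather than a constant one: a genuinely edge-magic (constant) sum is impossible within $[1,n+1]$ when $m<n$, since it would force $g(y_j)=K-c_{ij}$ to be independent of $i$ while $c_{ij}$ is not. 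Once these alignment checks are completed, the two cases together establish the claim.
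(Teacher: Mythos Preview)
The paper does not prove this theorem; it merely quotes it from \cite{Yao-Zhang-Sun-Mu-Sun-Wang-Wang-Ma-Su-Yang-Yang-Zhang-2018arXiv}. So there is no argument to compare against, and I evaluate your proposal on its own.

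Your balanced case $m=n$ is clean and correct. For $m<n$ your construction is also correct, and the acknowledged ``hard part'' can indeed be finished; here is how. Write $a=(i+j-2)\bmod n$ and $b=(j+m-1)\bmod n$, so that $S_{ij}=(n+3)+a+b$. Since $i+j-2<n$ whenever $j\le n-m$, the only possible wraparound patterns are: (i) no wrap when $j\le n-m$; (ii) only $b$ wraps when $j\ge n-m+1$ and $i\le n+1-j$; (iii) both wrap when $j\ge n-m+2$ and $i\ge n+2-j$. In each regime $a+b=i+2j+m-3-kn$ with $k=0,1,2$ respectively, and for fixed $j$ the values over $i$ form an integer interval. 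A short overlap check (using $m\ge2$ exactly once, in regime (i)) shows that regime (i) contributes $[m,2n-3]$, regime (iii) contributes $[1,2m-3]$, and these abut or overlap since $m\le 2m-2$; regime (ii) sits inside their union. Hence $a+b$ ranges over $[1,2n-3]$ and the edge-magic colour set is exactly $[n+4,3n]$. Your $m=1$ patch also works: with $g(x_1)=n+1$, $c_{1j}=j$, $g(y_1)=2$ and $g(y_j)=1$ for $j\ge2$, the sums are $\{n+4,\dots,2n+2\}$.

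One small wording issue: in your third paragraph you assert that for fixed $j$ the sums ``form a run of $m$ consecutive integers''. That is true only before wraparound; at the boundary columns the run for a single $j$ is itself broken (e.g.\ $n=6$, $m=3$, $j=5$ gives sums $\{10,14,15\}$). So the argument cannot proceed column-by-column as you describe; the three-regime partition above is the clean way to organise it.
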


\begin{thm}\label{thm:666666}
For a complete bipartite graph $K_{m,n}$, we have
\begin{asparaenum}[\textbf{Combi}-1. ]
\item $\chi\,'(K_{m,n})=\Delta(K_{m,n})$.
\item $\chi_{st}(K_{m,n})=2m+n$ for $m\geq n\geq 1$.
\item $\chi\,'_s(K_{m,m})=m+2$ for $m\geq 2$; $\chi\,'_s(K_{m,n})=m+1$ for $m>n\geq 2$.
\item \cite{Bing-Yao-2020arXiv} $\chi\,''_{fdt}(K_{m,m}) =3m$.
\item \cite{Bing-Yao-2020arXiv} $\chi\,''_{fdt}(K_{m,n}) =2m+n$.
\item \cite{Yao-Wang-2106-15254v1} $\chi\,''_{ves}(K_{m,m}) \leq m+2$.
\item \cite{Yao-Wang-2106-15254v1} $\chi\,''_{ved}(K_{m,m}) \leq n+\lceil \frac{n}{2}\rceil $ with $m\leq n$.
\item \cite{Yang-Han-Ren-Yao-2016} $\chi\,''_{ast}(K_{m,m})=m+1$ if $m-n \geq 2$ or$ m > 2$ and $m > n = 1$; $\chi\,''_{ast}(K_{m,m})=m+2$ if $m-n =1$; $\chi\,''_{ast}(K_{m,m})=m+3$ if $m=n$. $\chi\,''_{(4)as}(K_{m,m})=m+4$
\item A bipartite graph $G$ holds $\chi\,''(G)\leq \chi(G)+\chi\,'(G) \leq\Delta(G)+2$.
\end{asparaenum}
\end{thm}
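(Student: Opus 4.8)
The plan is to prove the nine assertions separately, after first sorting them into three groups according to the kind of argument each requires: those that reduce to classical edge- and total-coloring theory (Combi-1 and Combi-9), those already established in the cited references (Combi-4 through Combi-8), and those demanding a fresh explicit coloring together with a matching lower bound (Combi-2 and Combi-3). Organizing the proof this way keeps the genuinely new content isolated.

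First I would dispose of the classical cases. Since $K_{m,n}$ is bipartite, K\"onig's edge-coloring theorem (see \cite{Bondy-2008}) gives $\chi\,'(K_{m,n})=\Delta(K_{m,n})=\max\{m,n\}$, which is exactly Combi-1. For Combi-9, observe that a bipartite graph $G$ with at least one edge satisfies $\chi(G)=2$ and, again by K\"onig's theorem, $\chi\,'(G)=\Delta(G)$. Taking a proper $2$-vertex-coloring from one palette and a proper $\Delta(G)$-edge-coloring from a disjoint palette yields a proper total coloring using $\chi(G)+\chi\,'(G)$ colors, so $\chi\,''(G)\leq \chi(G)+\chi\,'(G)=2+\Delta(G)=\Delta(G)+2$, and the displayed chain of inequalities follows at once.

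Next I would recall and verify the cited invariants Combi-4 through Combi-8. For each I would reproduce the explicit coloring from the corresponding reference and check that its palette size matches the claimed value; a useful internal consistency test is that Combi-4 is precisely the diagonal case $m=n$ of Combi-5, since $2m+n$ specializes to $3m$. I expect the felicitous-difference total colorings to be realizable through the set-ordered $(k,d)$-total machinery of Definition \ref{defn:kd-w-type-colorings} together with the equivalences of Theorem \ref{thm:equivalent-k-d-total-colorings}, and I would confirm the piecewise formula of Combi-8 by matching each of its regimes against the adjacent-vertex-distinguishing total coloring constructed in \cite{Yang-Han-Ren-Yao-2016}.

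The real work lies in Combi-2 and Combi-3, where both an upper-bound construction and a lower bound are needed. For the upper bounds I would write $X=\{x_1,\dots,x_m\}$ and $Y=\{y_1,\dots,y_n\}$ and assign colors by an explicit scheme: color the edges by a K\"onig-type Latin-rectangle pattern and then perturb the vertex colors so that the induced color-signatures separate the required pairs, reaching $2m+n$ colors for $\chi_{st}$ and $m+2$ (respectively $m+1$) for $\chi\,'_s$. The main obstacle will be the matching lower bounds: I must show that no coloring with fewer colors can simultaneously be proper and distinguish every element forced by the definitions of $\chi_{st}$ and $\chi\,'_s$. The natural tool is a counting argument bounding the number of distinct admissible signatures a palette of size $k$ can produce and comparing it against $|V(K_{m,n})|=m+n$ or $|E(K_{m,n})|=mn$; the subtlety I anticipate is that these signatures are heavily constrained by the bipartite structure, so the pigeonhole estimate will likely need a careful case split on the relative sizes of $m$ and $n$ (for instance $m=n$ versus $m>n$, mirroring the split already visible in the stated values) rather than one uniform inequality.
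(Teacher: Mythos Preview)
The paper does not prove this theorem at all: it is stated as a compilation of known results (several with explicit citations) and is immediately followed by an unrelated definition. So there is no ``paper's own proof'' to compare against; any correct argument you supply goes strictly beyond what the paper offers.

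Your handling of Combi-1 and Combi-9 via K\"onig's theorem is correct and standard, and your plan to simply cite Combi-4 through Combi-8 matches exactly what the paper does. Where your proposal diverges is in treating Combi-2 and Combi-3 as requiring fresh proofs with explicit constructions and counting lower bounds. Be aware that these are also established results in the literature on star chromatic numbers and strong edge colorings of complete bipartite graphs, so in the spirit of this theorem (a survey of known values) a citation would suffice; if you do choose to prove them, your outline is sound, but note that the lower bound for $\chi'_s(K_{m,m})$ hinges on the observation that any two edges in $K_{m,m}$ are at distance at most two, forcing all $m^2$ edges into distinct color classes of size at most~$m$ --- this is cleaner than a general signature-counting argument.
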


\begin{defn} \label{defn:111111}
\cite{Bing-Yao-arXiv:2207-03381} For a subgraph $G\subset K_{m,n}$, since the complete bipartite graph $K_{m,n}$ admits a graceful $(k,d)$-total coloring, also, $f$ is a $(k,d)$-total coloring of $G$, such that $f(E(G))\subset f(E(K_{m,n}))$, we call $f$ a \emph{fragmentary graceful $(k,d)$-total coloring}.\qqed
\end{defn}

\begin{rem}\label{rem:333333}
$^*$ A subgraph $G\subset K_{m,n}$ holds $f(V(G))= f(V(K_{m,n}))$, another graph $\overline{G}_{bip}=K_{m,n}-E(G)$ is called \emph{complementary} of the subgraph $G\subset K_{m,n}$ based on the complete bipartite graph. Clearly, $E(K_{m,n})=E(G)\cup E(\overline{G}_{bip})$ with $E(G)\cap E(\overline{G}_{bip})=\emptyset$ and $V(K_{m,n})=V(G)= V(\overline{G}_{bip})$. We call $\langle G,\overline{G}_{bip} \rangle $ a \emph{Key-pair based on a complete bipartite graph}, $G$ \emph{public-key graph}, and $\overline{G}_{bip}$ \emph{private-key graph}.\paralled
\end{rem}

\begin{lem}\label{lem:complete-bipartite-graph-spanning-trees}
Suppose that a complete bipartite graph $K_{m,n}$ admits a vertex coloring $f:V(K_{m,n})\rightarrow [1,m+n]$, such that $f(x)\neq f(y)$ for any pair of vertices $x,y\in V(K_{m,n})$, then the number of all spanning trees of this colored complete bipartite graph $K_{m,n}$ is $\tau (K_{m,n}) = m^{n-1}n^{m-1}$.
\end{lem}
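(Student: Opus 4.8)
The plan is to invoke Kirchhoff's Matrix-Tree Theorem, which counts labelled spanning trees. First I would dispose of the role of the vertex coloring $f$: since $f$ is injective onto $[1,m+n]$, it serves only to render the $m+n$ vertices pairwise distinguishable, so the "colored" complete bipartite graph is nothing more than a labelled copy of $K_{m,n}$, and "the number of all spanning trees" means the number of labelled spanning trees counted as edge-subsets. Thus the coloring hypothesis reduces the claim to the classical enumeration $\tau(K_{m,n})=m^{n-1}n^{m-1}$, which is exactly what the Matrix-Tree Theorem delivers. (Note this is the bipartite analogue of Cayley's formula $\tau(K_n)=n^{n-2}$ recalled in Lemma \ref{lem:complete-graph-spanning-trees}.)

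Next I would set up the Laplacian. Writing $V(K_{m,n})=X\cup Y$ with $|X|=m$ and $|Y|=n$, each vertex of $X$ has degree $n$ and each vertex of $Y$ has degree $m$, so with $J$ denoting an all-ones block and $I_k$ the identity of order $k$, the Laplacian $L=D-A$ takes the block form
\begin{equation}\label{eqa:bipartite-laplacian}
L=\left(\begin{array}{cc} nI_m & -J_{m\times n}\\ -J_{n\times m} & mI_n\end{array}\right).
\end{equation}
The core step is to compute the full spectrum of $L$ by exhibiting explicit eigenvectors. The all-ones vector gives the eigenvalue $0$. Any vector supported on $X$ whose $X$-entries sum to zero is annihilated by $J_{n\times m}$ and satisfies $Lv=nv$, contributing the eigenvalue $n$ with multiplicity $m-1$; symmetrically, vectors supported on $Y$ summing to zero give the eigenvalue $m$ with multiplicity $n-1$. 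The final eigenvalue is then forced by the trace identity $\operatorname{tr}L=2mn$: since $0+n(m-1)+m(n-1)=2mn-m-n$, the remaining (simple) eigenvalue must be $m+n$.

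Having the spectrum $\{0,\ m+n,\ n^{(m-1)},\ m^{(n-1)}\}$, I would finish by applying the Matrix-Tree Theorem in its spectral form: for a connected graph on $N$ vertices, $\tau=\tfrac{1}{N}\prod_{\lambda\neq 0}\lambda$. With $N=m+n$ this yields
\begin{equation}\label{eqa:bipartite-count}
\tau(K_{m,n})=\frac{1}{m+n}\,(m+n)\,n^{m-1}\,m^{n-1}=m^{n-1}n^{m-1},
\end{equation}
completing the proof. The one place demanding genuine care is the eigenspace analysis establishing \eqref{eqa:bipartite-laplacian}'s spectrum: I must verify that the three families (the all-ones vector, the zero-sum vectors on $X$, and the zero-sum vectors on $Y$) are linearly independent and together account for $1+(m-1)+(n-1)=m+n-1$ dimensions, leaving exactly one dimension for the eigenvalue $m+n$. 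This linear-algebra bookkeeping, rather than any deep idea, is the main obstacle; a cleaner alternative would be a weighted Prüfer-code bijection sending each spanning tree to a pair of sequences of lengths $n-1$ over $X$ and $m-1$ over $Y$, which I would keep in reserve in case an algebra-free argument is preferred.
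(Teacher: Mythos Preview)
The paper states this lemma without proof, treating it as a classical enumeration result analogous to Cayley's formula in Lemma~\ref{lem:complete-graph-spanning-trees}. Your Matrix-Tree argument is correct and is one of the standard derivations of $\tau(K_{m,n})=m^{n-1}n^{m-1}$: the Laplacian block form, the eigenspace decomposition into the all-ones vector, the zero-sum vectors on each side, and the trace computation forcing the last eigenvalue $m+n$ are all accurate, and your reading of the injective coloring $f$ as simply a vertex labelling is exactly the right reduction.
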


\begin{problem}\label{question:techniques-KSTrees22}
Similarly with Problem \ref{question:techniques-KSTrees}, let $S_{tree}(K_{m,n})$ be the set of $m^{n-1}n^{m-1}$ different spanning trees of a colored complete bipartite graph $K_{m,n}$; refer to Lemma \ref{lem:complete-bipartite-graph-spanning-trees}. \textbf{Find} all Key-pairs of mutually edge-disjoint spanning trees $H^{pub}_{i}$ and $H^{pri}_{i}$ holding one or more of the following properties:
\begin{asparaenum}[\textbf{\textrm{Keypair}}-1.]
\item The topological signature authentication $H^{pub}_{i}[\odot_{prop}]H^{pri}_{i}\subset K_{m,n}$ contains a Hamilton cycle.
\item $H^{pub}_{i}\cong H^{pri}_{i}$ although $f(V\big (H^{pub}_{i}\big ))=[1,n]=f(V\big (H^{pri}_{i}\big ))$.
\item Let $n_d(G)$ be the number of vertices of degree $d$ in a graph $G$. Then
\begin{equation}\label{eqa:555555}
\sum _{3\leq d\leq \Delta \big (H^{pub}_{i}\big )}(d-2)n_d\big (H^{pub}_{i}\big )=\sum _{3\leq d\leq \Delta \big (H^{pri}_{i}\big )}(d-2)n_d\big (H^{pri}_{i}\big )
\end{equation}
\item Two degree sequences of spanning trees $H^{pub}_{i}$ and $H^{pri}_{i}$
$$
\textrm{deg}\big (H^{pub}_{i}\big )=(a_{i,1},a_{i,2},\dots, a_{i,m+n}),~\textrm{deg}\big (H^{pri}_{i}\big )=(b_{i,1},b_{i,2},\dots, b_{i,m+n})
$$ hold $a_{i,k}=b_{i,k}$ for $k\in [1,m+n]$ true.
\item Two diameters $D_{iam}\big (H^{pub}_{i}\big )=D_{iam}\big (H^{pri}_{i}\big )$.
\item Two spanning trees $H^{pub}_{i}$ and $H^{pri}_{i}$ are (i) two caterpillars; (ii) two lobsters; (iii) one caterpillar and one lobster.
\item Two spanning trees $H^{pub}_{i}$ and $H^{pri}_{i}$ holds $H^{pri}_{i}=H^{pub}_{i}+xy-uv$ with $xy\not \in E\big (H^{pub}_{i}\big )$ and $uv\in E\big (H^{pub}_{i}\big )$, that is a \emph{graph-operation homomorphism} $H^{pri}_{i}\rightarrow _{\pm e}H^{pub}_{i}$.
\end{asparaenum}
\end{problem}

\subsubsection{Key-pairs based on graph homomorphisms}

Graph homomorphism can be thought of as encrypting the topological structures appeared in our topological designment.

\begin{defn}\label{defn:definition-graph-homomorphism}
\cite{Bondy-2008} \textbf{The uncolored graph homomorphism}. A \emph{graph homomorphism} $G\rightarrow H$ from a graph $G$ into another graph $H$ is a mapping $f: V(G) \rightarrow V(H)$ such that $f(u)f(v)\in E(H)$ for each edge $uv \in E(G)$. \qqed
\end{defn}

\begin{defn} \label{defn:v-e-colored-graph-homomorphisms}
$^*$ \textbf{The colored graph homomorphism}. Suppose that a $(p,q)$-graph $G$ admits a proper total coloring $f$, and another graph $H_v$ admits a proper vertex-coloring $g_v$, and a graph $H_e$ admits a proper edge-coloring $g_e$, as well as a graph $H_{ve}$ admits a proper total coloring $g_{ve}$.

(i) The \emph{vertex-colored graph homomorphism} $H_v \rightarrow ^v_{prop}G$ is a vertex mapping $\alpha:V(H_v)\rightarrow V(G)$, such that $\alpha(u)\alpha(v)\in E(G)$ if and only if $uv\in E(H_v)$, and
$$g_v(u)=\alpha(u)\neq \alpha(v)=g_v(v)$$ for each edge $uv\in E(H_v)$.

(ii) The \emph{edge-colored graph homomorphism} $H_e \rightarrow ^e_{prop}G$ is a vertex mapping $\beta:V(H_e)\rightarrow V(G)$, so that $\beta(u)\beta(v)\in E(G)$ if and only if $uv\in E(H_e)$, and
$$g_e(uv)=\beta(u)\beta(v)\neq \beta(u)\beta(w)=g_e(uw)
$$ for any pair of adjacent edges $uv,uw\in E(H_e)$.

(iii) The \emph{totally colored graph homomorphism} $H_{ve} \rightarrow ^{ve}_{prop}G$ is a vertex mapping $\gamma:V(H_{ve})\rightarrow V(G)$, such that $\gamma(u)\gamma(v)\in E(G)$ if and only if $uv\in E(H_{ve})$, and
$$g_{ve}(u)=\gamma(u)\neq \gamma(v)=g_{ve}(v),~g_{ve}(uv)=\gamma(u)\gamma(v)\neq \gamma(u)\gamma(w)=g_{ve}(uw)
$$ for each edge $uv\in E(H_{ve})$ and any pair of adjacent edges $uv,uw\in E(H_{ve})$.\qqed
\end{defn}

\begin{figure}[h]
\centering
\includegraphics[width=16.4cm]{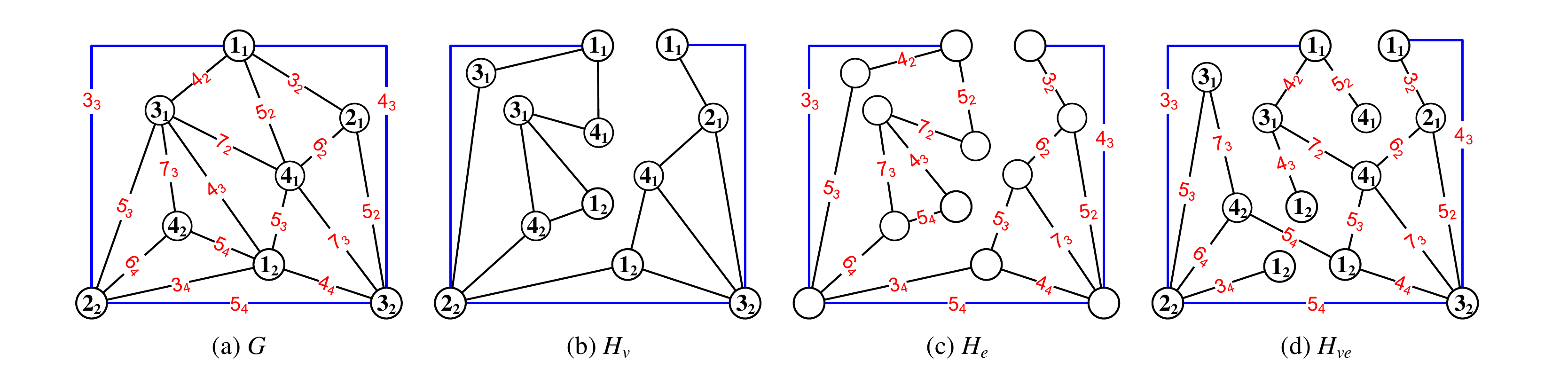}\\
\caption{\label{fig:colored-graph-homomorphism}{\small The examples $H_v \rightarrow ^v_{prop}G$, $H_e \rightarrow ^e_{prop}G$ and $H_{ve} \rightarrow ^{ve}_{prop}G$ for illustrating Definition \ref{defn:v-e-colored-graph-homomorphisms}.}}
\end{figure}

\begin{problem}\label{problem:ATE-CGH-problem00}
\textbf{ATE-CGH-problem (the asymmetric topology encryption related with colored graph homomorphisms)}.
\begin{quote} For a connected $(p,q)$-graph $G$, let $S_e(G)$ be the set of all graphs colored with proper edge-colorings and holding the colored graph homomorphism to $G$, that is, each graph $H_e\in S_e(G)$ holds $H_e \rightarrow ^e_{prop}G$ defined in Definition \ref{defn:v-e-colored-graph-homomorphisms}; and let $S_v(G)$ be the set of all graphs admitting proper vertex-colorings and holding the colored graph homomorphism to $G$, also, each graph $H_v\in S_v(G)$ holds $H_v \rightarrow ^v_{prop}G$ defined in Definition \ref{defn:v-e-colored-graph-homomorphisms}.

\textbf{Find} a graph $H_e\in S_e(G)$ admitting a proper edge-coloring $g_e$, and again \textbf{find} another graph $H_v\in S_v(G)$ admitting a proper vertex-coloring $g_v$, such that $H_e \rightarrow ^e_{prop}G$ and $H_v \rightarrow ^v_{prop}G$, and the connected $(p,q)$-graph $G$ admits a proper total coloring $f$ just induced by two total colorings $g_e$ and $g_v$.
\end{quote}
\end{problem}

\begin{problem}\label{problem:ATE-CGH-problem11}
In ATE-CGH-problem, the set $S_e(G)$ is as a \emph{public-key set}, and the set $S_v(G)$ is as a \emph{private-key set}. We have other questions for ATE-CGH-problem as follows:
\begin{asparaenum}[\textbf{\textrm{Pert}}-1.]
\item If we are asked for $|V(G)|< |V(H_e)|$ and $|V(G)|< |V(H_v)|$, the job of finding the public-key graph $H_e\in S_e(G)$ and the private-key graph $H_v\in S_v(G)$ will meet the Subgraph Isomorphic Problem, which is a NP-complete problem.
\item If we are asked for the proper total coloring $f$ of $G$ holding $f(V(G)\cup E(G))=[1,\chi\,''(G)]$, it may be a NP-type problem for searching the public-key graph $H_e\in S_e(G)$ and the private-key graph $H_v\in S_v(G)$, since it was conjectured: $\chi\,''(G)\leq \Delta(G)+2$ in 1965.
\item Suppose that $\max\{f(w):w\in V(G)\cup E(G)\}\leq \Delta(G)+2$. Let $V_i=\{u: f(u)=i,u\in V(G)\}$, $E_i=\{xy: f(xy)=i,xy\in E(G)\}$ and $S_i=V_i\cup E_i$ for $i\in [1,\Delta(G)+2]$. If $\big ||S_i|-|S_j|\big |\leq 1$ for $i,j\in [1,\Delta(G)+2]$, then $f$ is an \emph{equitable total coloring} defined in Definition \ref{defn:equitable-total-coloring-def}; refer to Conjecture \ref{conj:c4-Weifan-Wang-2002}.
\item If the proper total coloring $f$ is an edge-magic total labeling holding $f(V(G)\cup E(G))=[1,p+q]$ and the edge-magic constraint $f(u)+f(uv)+f(v)=\lambda$ for each edge $uv\in E(G)$, however, \textbf{find} the public-key graph $H_e\in S_e(G)$ and the private-key graph $H_v\in S_v(G)$. Similarly, consider other edge-difference constraint, graceful-difference constraint and felicitous-difference constraint for three colorings $f$, $g_e$ and $g_v$.
\item If the proper total coloring $f$ is a $W$-constraint $(k,d)$-total labeling, \textbf{find} three colorings $f$, $g_e$ and $g_v$.
\item By Definition \ref{defn:v-e-colored-graph-homomorphisms}, \textbf{find} a \emph{public-key graph sequence} $\{H^k_e\}^n_{k=1}$ with $H^k_e\in S_e(G)$ and another private-key graph sequence $\{H^k_v\}^n_{k=1}$ with $H^k_v\in S_v(G)$ such that
\begin{equation}\label{eqa:555555}
{
\begin{split}
&\{H^k_e\}^n_{k=1}\rightarrow ^e_{prop} G:~~H^{n}_e \rightarrow ^e_{prop}H^{n-1}_e \rightarrow ^e_{prop}\cdots \rightarrow ^e_{prop}H^{2}_e\rightarrow ^e_{prop}H^{1}_e \rightarrow ^e_{prop} G,\\
&\{H^k_v\}^n_{k=1}\rightarrow ^v_{prop} G:~~H^{n}_v \rightarrow ^v_{prop}H^{n-1}_v \rightarrow ^v_{prop}\cdots \rightarrow ^v_{prop}H^{2}_v\rightarrow ^v_{prop}H^{1}_v \rightarrow ^v_{prop} G
\end{split}}
\end{equation} Also, we write the \emph{totally-colored graph homomorphism} $\langle H^k_v,H^k_e\rangle \rightarrow ^{ve}_{prop} G$ for $k\in [1,n]$.
\end{asparaenum}
\end{problem}

\begin{problem}\label{question:Pan-ATE-CGH-problem}
\textbf{Pan-ATE-CGH-problem.}
\begin{quote}
Let $S_e(q)$ be the set of all graphs having $q$ edges and admitting proper edge-colorings, and let $S_v(q)$ be the set of all graphs having $q$ edges and admitting proper vertex-colorings. For a connected $(p,q)$-graph $G$, \textbf{find} a graph $H_e\in S_e(q)$ holding $H_e \rightarrow ^e_{prop}G$ defined in Definition \ref{defn:v-e-colored-graph-homomorphisms}, and \textbf{find} a graph $H_v\in S_v(q)$ holding $H_v \rightarrow ^v_{prop}G$ defined in Definition \ref{defn:v-e-colored-graph-homomorphisms}, such that the graph $G$ admits a $W$-constraint proper total coloring induced by the proper edge-coloring of $H_e\in S_e(q)$ and the proper vertex-coloring of $H_v\in S_v(q)$.
\end{quote}
\end{problem}

\begin{problem}\label{question:ATE-To-CGH-problem}
\textbf{ATE-To-CGH-problem (the asymmetric topology encryption related with totally-colored graph homomorphisms).}
\begin{quote}
Let $S_{ve}(q)$ be the set of all graphs having $q$ edges and admitting proper total colorings. For a connected $(p,q)$-graph $H$, \textbf{find} graphs $H^k_{ve}\in S_{ve}(q)$ holding $H^k_{ve}\rightarrow ^{ve}_{prop}H$ defined in Definition \ref{defn:v-e-colored-graph-homomorphisms}, such that the graphs $H$ and $H^k_{ve}$ admit the same $W$-constraint proper total coloring.
\end{quote}
\end{problem}

\begin{figure}[h]
\centering
\includegraphics[width=16.4cm]{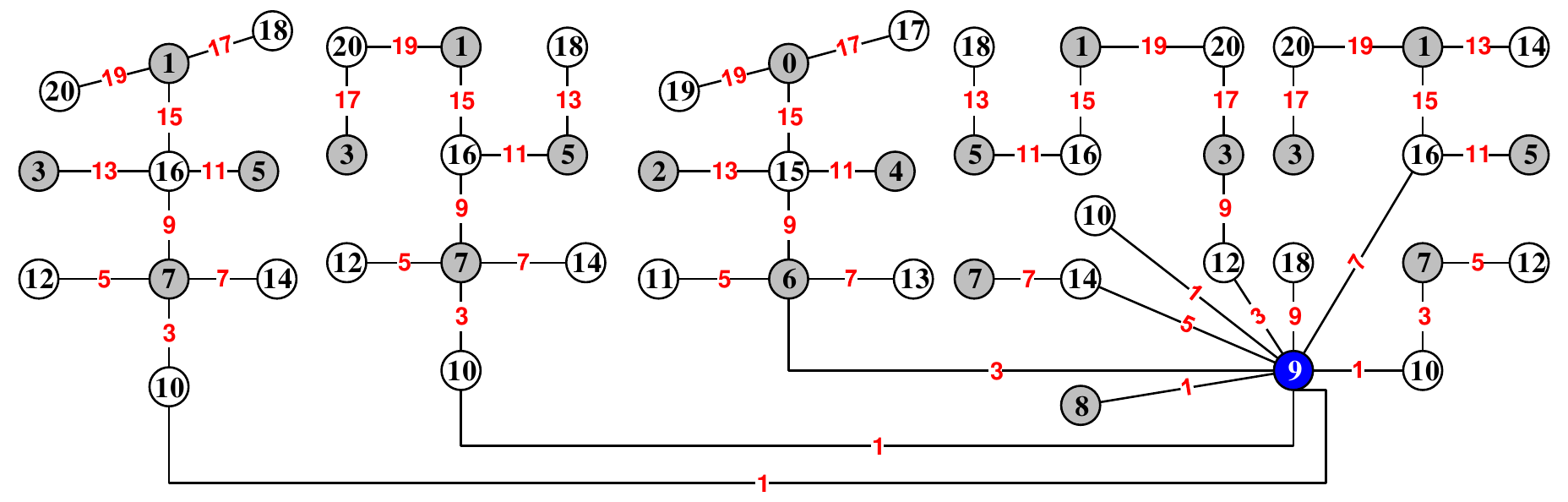}\\
\caption{\label{fig:11-pub-more-pri}{\small One public-key graph corresponding more private-key graphs forms a topological signature authentication $G=T[\odot]^4_{i=1}H_i$; refer to Fig.\ref{fig:top-signature-11}.}}
\end{figure}

\subsubsection{Public-key and private-key graphic lattices}

Let $\textbf{\textrm{T}}=(T_1,T_2,\dots ,T_m)$ be a \emph{tree base}, where mutually vertex-disjoint uncolored trees $T_1,T_2,\dots $, $T_m$ holds $T_i\not \cong T_j$ if $i\neq j$; and let $\textbf{\textrm{H}}=(H_1,H_2,\dots ,H_m)$ be another \emph{tree base}, where vertex-disjoint uncolored trees $H_1,H_2,\dots ,H_m$ holds $H_s\not \cong H_t$ if $s\neq t$. By Eq.(\ref{eqa:proper-operation-tree-base-lattice111}) we have
\begin{equation}\label{eqa:555555}
G=[\odot_{\textrm{prop}}]^m_{k=1}a_kT_k,~G\,'=[\odot_{\textrm{prop}}]^m_{k=1}b_kH_k
\end{equation} after doing the \emph{proper vertex-coinciding operation} ``$[\odot_{\textrm{prop}}]$'', so we get a \emph{tree-base $[\odot_{\textrm{prop}}]$-operation graphic lattice}
\begin{equation}\label{eqa:public-key-graphic-lattice}
\textbf{\textrm{L}}([\odot_{\textrm{prop}}]Z^0\textbf{\textrm{T}})=\big \{[\odot_{\textrm{prop}}]^m_{k=1}a_kT_k:~b_k\in Z^0,T_k\in \textbf{\textrm{T}}\big \}
\end{equation} with $\sum ^m_{k=1}a_k\geq 1$, refer to Eq.(\ref{eqa:proper-operation-tree-base-lattice}), and another \emph{tree-base $[\odot_{\textrm{prop}}]$-operation graphic lattice}
\begin{equation}\label{eqa:private-key-graphic-lattice}
\textbf{\textrm{L}}([\odot_{\textrm{prop}}]Z^0\textbf{\textrm{H}})=\big \{[\odot_{\textrm{prop}}]^m_{k=1}b_kH_k:~b_k\in Z^0,H_k\in \textbf{\textrm{H}}\big \}
\end{equation} with $\sum ^m_{k=1}b_k\geq 1$

\vskip 0.2cm

We consider $\textbf{\textrm{L}}([\odot_{\textrm{prop}}]Z^0\textbf{\textrm{T}})$ defined in Eq.(\ref{eqa:public-key-graphic-lattice}) as a \emph{public-key graphic lattice} and $\textbf{\textrm{L}}([\odot_{\textrm{prop}}]Z^0\textbf{\textrm{H}})$ defined in Eq.(\ref{eqa:private-key-graphic-lattice}) as a \emph{private-key graphic lattice}, where $\textbf{\textrm{T}}$ is a \emph{public-key tree base} and $\textbf{\textrm{H}}$ is a \emph{private-key tree base}.

If there is a topological signature authentication $A_{uth}\langle \textbf{\textrm{T}},\textbf{\textrm{H}}\rangle$ between two tree bases $\textbf{\textrm{T}}$ and $\textbf{\textrm{H}}$, then we get a topological signature authentication \begin{equation}\label{eqa:555555}
A_{uth}\langle G,G^*\rangle=A_{uth}\big \langle [\odot_{\textrm{prop}}]^m_{k=1}c_kT_k,~ [\odot_{\textrm{prop}}]^m_{k=1}c_kH_k\big \rangle
\end{equation} between two graphs $G\in \textbf{\textrm{L}}([\odot_{\textrm{prop}}]Z^0\textbf{\textrm{T}})$ and $G^*\in \textbf{\textrm{L}}([\odot_{\textrm{prop}}]Z^0\textbf{\textrm{H}})$.

\subsection{TKPDRA-center for local area networks}

TKPDRA-center has the functions of topological key production, distribution, replacement and topological signature authentication for local area networks.

In local area networks (LANs) or small communities in a network, it is difficult to require the users on nodes of LAN and communities to make and change various keys frequently. Some important organizations, such as the government and the military, have high security requirements. If some employee leaves or the keys is leaked, then the keys of the network must be revoked immediately. This will cause the entire department or unit to change all keys. This is the kind of problem that our TKPDRA-center can solve by the technique of encrypting the entire network; refer to Fig.\ref{fig:Topological-signature-22}, Fig.\ref{fig:string-graph-group-center-2} and Fig.\ref{fig:string-graph-group-center-11}.

In this new topic of encrypting networks, we employee graphic groups to make topology signatures and Key-pairs for large-scale networks in short time. Every-zero graphic groups can ba used to encrypt every network/graph, namely ``graph-to-graph'', and such graphic groups run in encrypting network/graphs are supported well by coloring theory of graph theory, such as traditional colorings, distinguishing colorings etc. It is noticeable, graphic groups can be used in machine encryption, or AI encryption for networks. We hope algebraic methods may solve some open problems of graph theory, for example, Graceful Tree Conjecture, and try to use algebraic methods for discovering possible regularity among labelings all in confusion \cite{Yao-Mu-Sun-Sun-Zhang-Wang-Su-Zhang-Yang-Zhao-Wang-Ma-Yao-Yang-Xie2019}.

\begin{figure}[h]
\centering
\includegraphics[width=16.4cm]{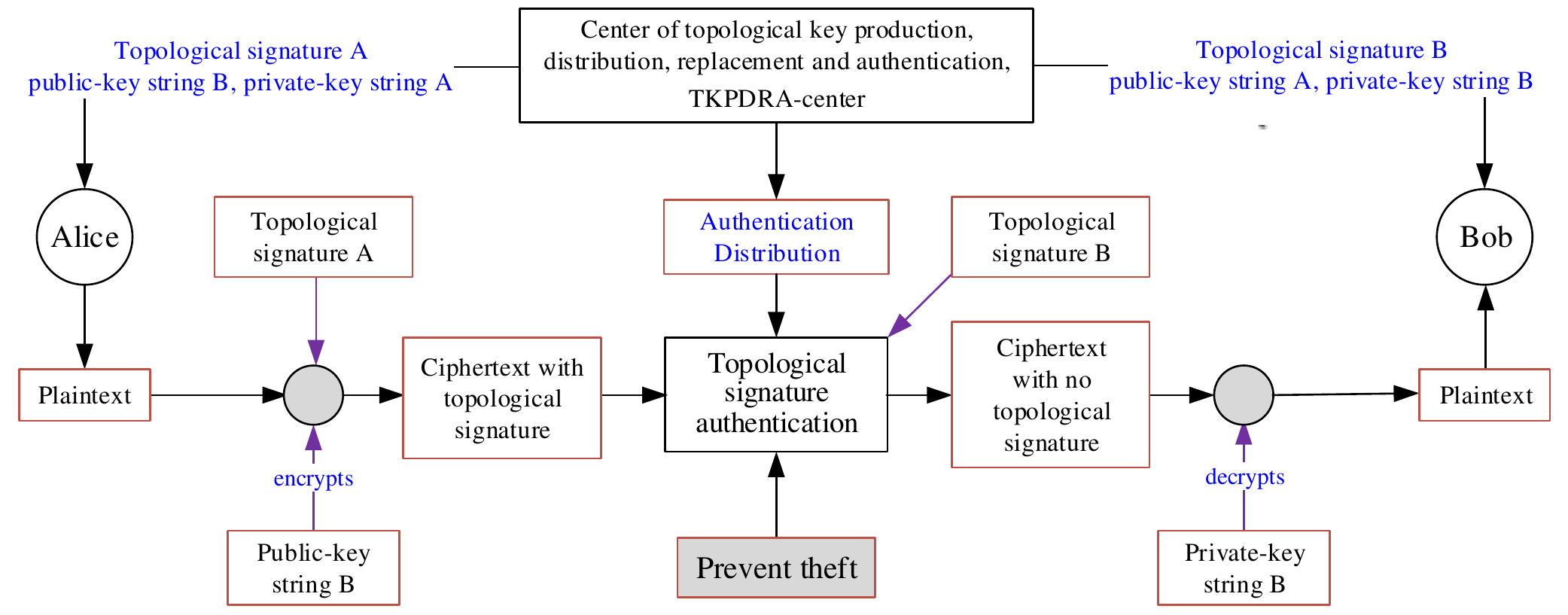}\\
\caption{\label{fig:Topological-signature-22}{\small The center of key-encryption distribution for local area networks and communities.}}
\end{figure}

\subsubsection{Functions of TKPDRA-center}

In general, TKPDRA-center keeps various groups introduced in the previous sections and topological technology support. For example,
\begin{asparaenum}[\textbf{\textrm{Group}} 1.]
\item Accordingly Definition \ref{defn:0-9-string-groups111},
\begin{equation}\label{eqa:555555}
s_{i}[\oplus \ominus_k]s_{j}:=s_{i}[\oplus] s_{j}[\ominus] s_{k}=s_{\lambda}\in S_{tring}(m,n)
\end{equation} and
\begin{equation}\label{eqa:555555}
s_{i}[\ominus \oplus_k]s_{j}:= s_{k}[\ominus]s_{i}[+] s_{j}=s_{\tau}\in S_{tring}(m,n)
\end{equation} there are every-zero $[0,9]$-string groups $\{F_{[+]k}(X);\oplus \ominus(\bmod~A)\}$ and $\{F_{[\ominus]k}(Y);\ominus \oplus (\bmod~B)\}$; refer to Theorem \ref{thm:first-mixed-string-operation} and Theorem \ref{thm:second-mixed-string-operation}.
\item In Theorem \ref{thm:super-string-groups00}, let $B=\max\{(9)_j:j\in [1,n]\}$ in the $(9)$s-number-based super-string set $S^n_{uper}(\{(9)_j\}^n_{j=1})$ defined in Definition \ref{defn:number-based-super-string-def}. Then there are two \emph{every-zero $(9)$s-uniformly super-string groups}
$$
\big \{F_{[+]k}(S^n_{uper}(\{m_j\}^n_{j=1}));\oplus \ominus(\bmod~B)\big \}\textrm{ and }\big \{F_{[\ominus]k}(S^n_{uper}(\{m_j\}^n_{j=1}));\ominus \oplus (\bmod~B)\big \}
$$ for $k\in [1,B]$.
\item In Definition \ref{defn:edge-index-graphic-group222}, a new Abelian additive operation is defined as
\begin{equation}\label{eqa:edge-index-graphic-group222}
T_{i,r}~[\ominus \oplus _k] ~T_{i,j}:=T_{i,k}\ominus (T_{i,r}\oplus T_{i,j})=T_{i,\mu}\in \textbf{\textrm{T}}_i
\end{equation} with the index $\mu=k-(r+j)~(\bmod~M)$ for any preappointed \emph{zero} $T_{i,k}\in \textbf{\textrm{T}}_i$. So, we get an \emph{every-zero edge-index graphic group} $\{F^-_M(\textbf{\textrm{T}}_i);\ominus \oplus\}$ defined by the new Abelian operation $T_{i,r}~[\ominus \oplus _k] ~T_{i,j}$ based on a colored graph base $\textbf{\textrm{T}}_i$, where $\textbf{\textrm{T}}_i$ is a graphic group.
\item In the GROUP-compound algorithm, there are the graphic-group $\{F_m(G,f);\oplus \ominus\}$, Topcode-matrix group $\{F_m(T_{code}(G),f);\oplus \ominus\}$, string-group $\{F_m(S_r);\oplus \ominus\}$.
\item The complex graphic group $C_{om}(G[\oplus \ominus]\{f_{i}\}^n_{i=1})=\bigcup ^n_{i=1}\{F_{m_i}(G,f_i);\oplus \ominus\}$; refer to Eq.(\ref{eqa:complex-graphic-group}).
\item In Eq.(\ref{eqa:compound-string-groups22}), there are compound number-based string groups
$$C_{om}(FTS[\oplus \ominus]\gamma)=\bigcup^n_{i=1} S_{m_i}(T_{code}(G_{i,r}),\gamma)
$$ with $\gamma\in [1,(3q)!]$.
\item The every-zero thing-index group $\{F^+_m(S_{thing});\oplus \ominus\}$ defined in Definition \ref{defn:every-zero-abstract-group}
\item In Remark \ref{rem:infinite-graphic-sequence-group}, $F^*(G,f)=F(\{\{G_{s,k}\}^{+\infty}_{-\infty}\}^{+\infty}_{-\infty};\oplus \ominus;(G,f))$ is an every-zero infinite graphic-sequence group. We have:

\qquad (i) The elements of the group $F^*(G,f)$ can tile fully each point $(x,y)$ of $xOy$-plane.

\qquad (ii) $F^*(G,f)$ contains infinite every-zero graphic groups having finite elements, such as
$$F\big (\{G_{s+i,k}\}^{p_W}_{i=1};[\oplus \ominus];(G,f)\big )\textrm{ and }F\big (\{G_{s,k+j}\}^{q_W}_{i=1};\oplus \ominus;(G,f)\big )$$

\qquad (iii) $F^*(G,f)$ contains infinite every-zero graphic groups of infinite elements.

\item In Definition \ref{defn:every-zero-graphic-group-homomorphism}, we have an every-zero graphic group homomorphism
$$\{F_f(G);\oplus \ominus\}\rightarrow \{F_h(H);\oplus \ominus\}
$$ defined by two every-zero graphic groups $\{F_f(G);\oplus \ominus\}$ based on a graph set $F_f(G)=\{G_i\}^m_1$ and $\{F_h(H);\oplus \ominus\}$ based on a graph set $F_h(H)=\{H_i\}^m_1$, when as there are graph homomorphisms $G_i\rightarrow H_i$ defined by $\theta_i:V(G_i)\rightarrow V(H_i)$ with $i\in [1,m]$.
\end{asparaenum}

\vskip 0.4cm

We list part of functions of TKPDRA-center as follows:
\begin{asparaenum}[\textbf{\textrm{Function}}-1.]
\item TKPDRA-center sends Key-packages to the nodes of a network based on a particular group $G_{roup}$, and adjusts randomly the zeros of the group $G_{roup}$; TKPDRA-center is responsible for the communication topological signature authentication of each pair of nodes in the network.
\item Each node in the network does not need to make its own Key-pairs and topology signatures, so that users on each node do not need to learn the technique of making Key-pairs and topology signatures. For Chinese users, they can get complex keys and topology signatures as long as they speak and write Chinese.
\item TKPDRA-center can makes the protection mechanism of multiple topology authentications; refer to Fig.\ref{fig:string-graph-group-center-2}.
\item Randomly replace the common zeros of a graphic group, or a string group, or other groups for preventing malicious attacks to the networks.
\item TKPDRA-center can timely change the security system of the entire network to resist external damage and attack.
\item TKPDRA-center can customize the authentication mode for some pairs of nodes in the network.
\end{asparaenum}

\begin{figure}[h]
\centering
\includegraphics[width=16.4cm]{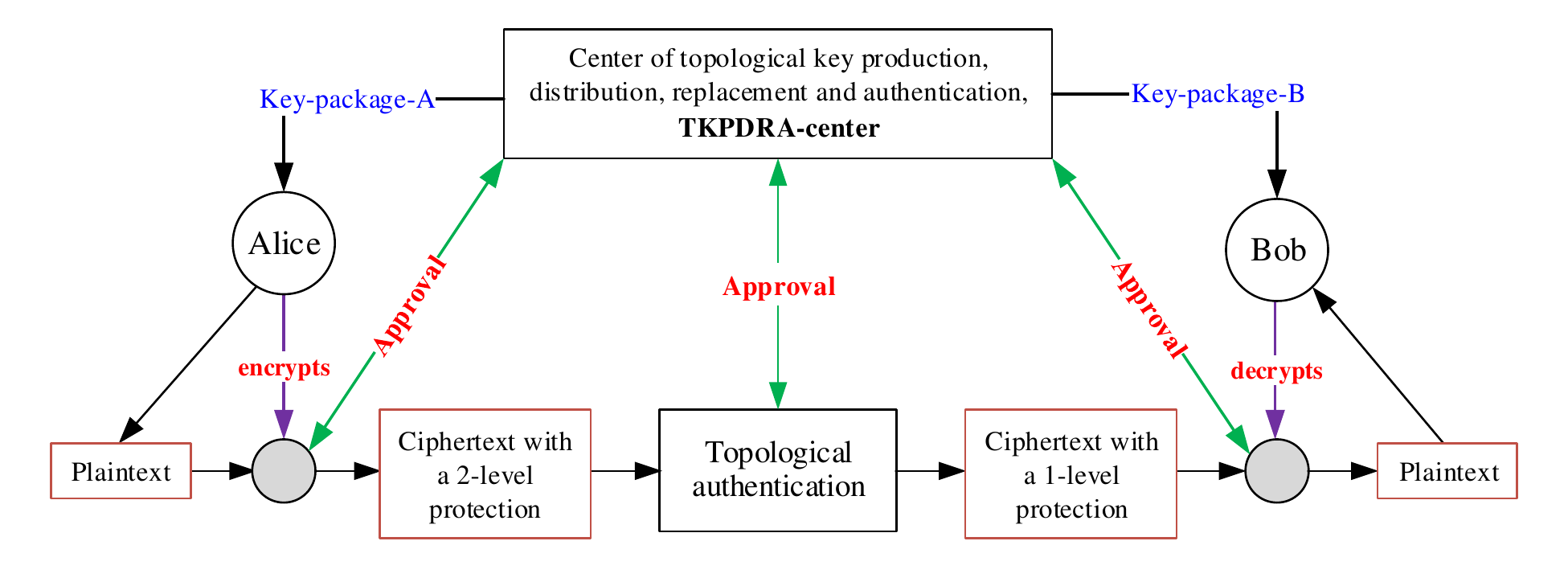}\\
\caption{\label{fig:string-graph-group-center-2}{\small The multiple topological authentication mechanism of TKPDRA-center.}}
\end{figure}

\vskip 0.4cm

\noindent \textbf{TKPDRA-center algorithm.}
\begin{asparaenum}[\textbf{\textrm{TKPDRA}}-1. ]
\item Alice uses her private-key graph $G_{Apri}$ and private-key string $s_{Apri}$ to encrypt a plaintext $F$, the resultant encrypted file is denoted as $F_2$, and Alice sends the encrypted file $F_2$ to TKPDRA-center for asking dealing with $F_2$ and send to Bob.
\item TKPDRA-center encrypts the encrypted file $F_2$ by Bob's public-key graph $G_{Bpub}$ and Bob's public-key string $s_{Bpub}$, the resultant encrypted file is denoted as $F_4$.
\item TKPDRA-center sends to Bob a package including the encrypted file $F_4$, Alice's public-key graph $G_{Apub}$ and Alice's public-key string $s_{Apub}$.
\item Bob, first, decrypts the encrypted file $F_4$ by his private-key graph $G_{Bpri}$, and knows that the encrypted file $F_4$ is really for him, the resultant encrypted file is denoted as $F\,^*_3$.
\item Next, Bob uses Alice's public-key graph $G_{Apub}$ to decrypt the encrypted file $F\,^*_3$, and knows that the encrypted file $F\,^*_3$ is sent from Alice, the resultant encrypted file is denoted as $F\,^*_2$.
\item Then Bob uses Alice's public-key string $s_{Apub}$ to decrypt the encrypted file $F\,^*_2$, the resultant encrypted file is denoted as $F\,^*_1$.
\item Finally, Bob decrypts the encrypted file $F\,^*_1$ by his own private-key string $s_{Bpri}$ for obtaining the original plaintext $F$.
\end{asparaenum}

\vskip 0.4cm

Part of advantages of the TKPDRA-center algorithm are as follows:

(i) TKPDRA-center algorithm has high security, since Alice's public-key graph $G_{Apub}$, Bob's public-key graph $G_{Bpub}$ and Bob's public-key string $s_{Bpub}$ were kept in TKPDRA-center.

(ii) Alice only saves her private-key string and her private-key graph, and the same is true for Bob.

(iii) TKPDRA-center help Alice to encrypt the encrypted file $F_1$, and sends the encrypted files to two or more people if Alice asks for.

\subsubsection{String-string-group Key-pairs}

\textbf{The String-string-group Key-pair algorithm (SSG-KP-algorithm)} is based on an every-zero string group $S_{roup}$ for a preappointed common zero $s^*_{zero}\in S_{roup}$ in TKPDRA-center.
\begin{asparaenum}[\textbf{\textrm{SSG-step-I}} (1)]
\item Alice receives the Key-package-A sent by TKPDRA-center, where Key-package-A contains Alice's private-key string $s_{Apri}$ and Bob's public-key string $s_{Bpub}$, and these strings are selected randomly from an every-zero string group $S_{roup}$ in TKPDRA-center; refer to Fig.\ref{fig:string-graph-group-center-2} and Fig.\ref{fig:string-graph-group-center-11}.
\item Bob receives the Key-package-B sent by TKPDRA-center, where Key-package-B contains Bob's private-key string $s_{Bpri}$ and Alice's public-key string $s_{Apub}$, and these strings are selected randomly from the every-zero string group $S_{roup}$ in TKPDRA-center; refer to Fig.\ref{fig:string-graph-group-center-2} and Fig.\ref{fig:string-graph-group-center-11}.
\item Alice encrypts a plaintext $P_{lan}$ by Alice's private-key string $s_{Apri}$ and Bob's public-key string $s_{Bpub}$, the encrypted file is denoted as $D_{ocum}=\langle P_{lan}$, $s_{Apri}$, $s_{Bpub}\rangle$ with a 2-level protection.
\item First, Bob decrypts the encrypted file $D_{ocum}$ sent from Alice by Alice's public-key string $s_{Apub}$ through the topological signature authentication $S_{tringA}=A_{uth}\langle s_{Apub}$, $s_{Apri}\rangle$ (as the identity authentication), where the result of the operation $s_{Apub}[\oplus]s_{Apri}[\ominus]s^*_{zero}=S_{tringA}\in S_{roup}$ for a preappointed common zero $s^*_{zero}\in S_{roup}$.
\item If $S_{tringA}=A_{uth}\langle s_{Apub},s_{Apri}\rangle$ (has been approved by TKPDRA-center, see Fig.\ref{fig:string-graph-group-center-11}) successfully decrypts the encrypted file $D_{ocum}$ out the first protection, the resultant file is denoted as $D^*_{ocum}$ with one-protection only.
\item Bob uses his private-key string $s_{Bpri}$ to decrypt $D^*_{ocum}$ protected only by Bob's public-key string $s_{Bpub}$. As the string authentication $S_{tringB}=A_{uth}\langle s_{Bpub},s_{Bpri}\rangle$ is approval after the result $s_{Bpub}[\oplus]s_{Bpri}[\ominus]s^*_{zero}=S_{tringB}\in S_{roup}$ for a preappointed common zero $s^*_{zero}\in S_{roup}$, then Bob gets the original plaintext $P_{lan}$ finally.
\end{asparaenum}

\begin{figure}[h]
\centering
\includegraphics[width=16.4cm]{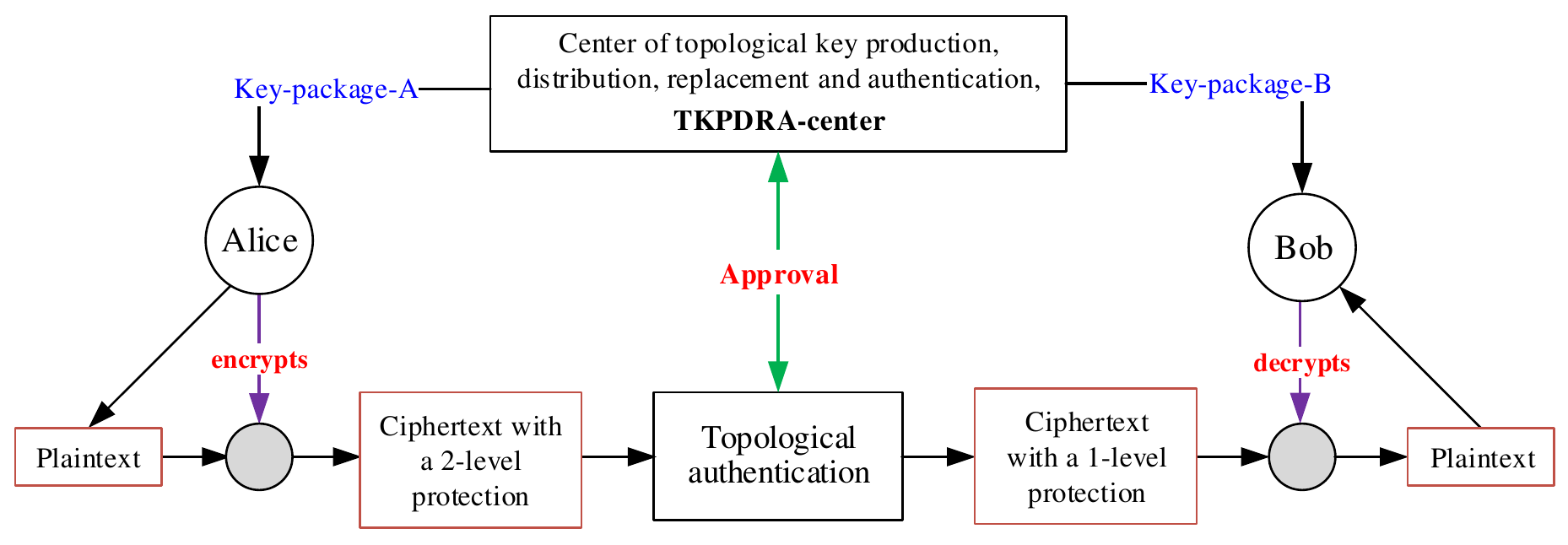}\\
\caption{\label{fig:string-graph-group-center-11}{\small The single topological authentication mechanism of TKPDRA-center.}}
\end{figure}

\subsubsection{Graphic-string-group Key-pairs}

\textbf{Graphic-string-group Key-pair algorithm (GSG-KP-algorithm)} is based on an every-zero graphic group $G_{roup}$ for a preappointed common zero $G^*_{zero}\in G_{roup}$ and an every-zero string group $S_{roup}$ for a preappointed common zero $s^*_{zero}\in S_{roup}$ in TKPDRA-center.
\begin{asparaenum}[\textbf{\textrm{GSG-step-II}} (1)]
\item Alice receives the Key-package-A sent by TKPDRA-center, where the Key-package-A contains Alice's private-key graph $G_{Apri}$, Alice's private-key string $s_{Apri}$ and Bob's public-key string $s_{Bpub}$, where these strings are in an every-zero string group $S_{roup}$ and the public-key graph $G_{Apub}$ is in the every-zero graphic group $G_{roup}$.
\item Bob receives the Key-package-B sent by TKPDRA-center, where Key-package-B contains Alice's public-key graph $G_{Apub}$, Bob's private-key string $s_{Bpri}$ and Alice's public-key string $s_{Apub}$, where these strings are in the every-zero string group $S_{roup}$ and the public-key graph $G_{Bpub}$ is in the every-zero graphic group $G_{roup}$.
\item Alice encrypts a plaintext $P_{lan}$ by Alice's private-key graph $G_{Apri}$ and Bob's public-key string $s_{Bpub}$, the encrypted file is denoted as $D_{ocum}=\langle P_{lan}$, $G_{Apub}$, $s_{Bpub}\rangle$ with a 2-level protection.
\item First, Bob decrypts the encrypted file $D_{ocum}$ sent from Alice by Alice's public-key graph $G_{Apub}$ through the topological signature authentication $T_{sigA}=A_{uth}\langle G_{Apub}$, $G_{Apri}\rangle$ (as the identity authentication), where the result of the operation $G_{Apub}[\oplus]G_{Apri}[\ominus]G^*_{zero}=T_{sigA}\in G_{roup}$ for a preappointed common zero $G^*_{zero}\in G_{roup}$.
\item If $T_{sigA}=A_{uth}\langle G_{Apub},G_{Apri}\rangle$ (has been approved by TKPDRA-center, see Fig.\ref{fig:string-graph-group-center-11}) successfully decrypts the encrypted file $D_{ocum}$ out the \emph{topological signature protection}, the resultant file is denoted as $D^*_{ocum}$ with the string protection only.
\item Bob uses his private-key string $s_{Bpri}$ to decipher $D^*_{ocum}$ protected only by Bob's public-key string $s_{Bpub}$. As TKPDRA-center's Key-pair string authentication $S_{tringB}=A_{uth}\langle s_{Bpub}$, $s_{Bpri}\rangle$ is approval after the operation $s_{Bpub}[\oplus]s_{Bpri}[\ominus]s^*_{zero}=S_{tringB}\in S_{roup}$ for a preappointed common zero $s^*_{zero}\in S_{roup}$, then Bob obtains the original plaintext $P_{lan}$ finally.
\end{asparaenum}

\begin{rem}\label{rem:333333}
TKPDRA-center can customize the individual zeros $s^{A}_{zero}, s^{B}_{zero}, s^{A,B}_{zero}\in S_{roup}$ for Alice and Bob such that
$${
\begin{split}
&s_{Apub}[\oplus]s_{Apri}[\ominus]s^A_{zero}\in S_{roup},~s_{Bpub}[\oplus]s_{Bpri}[\ominus]s^B_{zero}\in S_{roup}\\
&s^A_{zero}[\ominus]s_{Apub}[\oplus]s_{Apri}\in S_{roup},~s^B_{zero}[\ominus]s_{Bpub}[\oplus]s_{Bpri}\in S_{roup}
\end{split}}
$$
or
$${
\begin{split}
&s_{Apub}[\oplus]s_{Apri}[\ominus]s^{A,B}_{zero}\in S_{roup},~s_{Bpub}[\oplus]s_{Bpri}[\ominus]s^{A,B}_{zero}\in S_{roup}\\
&s^{A,B}_{zero}[\ominus]s_{Apub}[\oplus]s_{Apri}\in S_{roup},~s^{A,B}_{zero}[\ominus]s_{Bpub}[\oplus]s_{Bpri}\in S_{roup}
\end{split}}
$$ Moreover, TKPDRA-center can customize the individual zeros $G^{A}_{zero}, G^{B}_{zero}, G^{A,B}_{zero}\in G_{roup}$ for both Alice and Bob, such that $G_{Apub}[\oplus]G_{Apri}[\ominus]G^A_{zero}\in G_{roup}$, or $G_{Bpub}[\oplus]G_{Bpri}[\ominus]G^B_{zero}\in G_{roup}$, or $G_{Apub}[\oplus]G_{Apri}[\ominus]G^{A,B}_{zero}\in G_{roup}$, or $G_{Bpub}[\oplus]G_{Bpri}[\ominus]G^{A,B}_{zero}\in G_{roup}$, and so on.\paralled
\end{rem}

\subsubsection{Graphic-graphic-group Key-pairs}

\textbf{Graphic-graphic-group Key-pairs (GGG-KP-algorithm)} is based on an every-zero graphic group $G_{roup}$ for a preappointed common zero $G^*_{zero}\in G_{roup}$ and another every-zero graphic group $H_{roup}$ for a preappointed common zero $H^*_{zero}\in H_{roup}$ in TKPDRA-center.
\begin{asparaenum}[\textbf{\textrm{GGG-step-III}} (1)]
\item Alice receives Key-package-A sent by TKPDRA-center, where Key-package-A contains Alice's private-key graph $G_{Apri}\in G_{roup}$ and Bob's public-key graph $H_{Bpub}\in H_{roup}$.
\item Bob receives the Key-package-B sent by TKPDRA-center, where Key-package-B contains Alice's public-key graph $G_{Apub}\in G_{roup}$ and Bob's private-key graph $H_{Bpri}\in H_{roup}$.
\item Alice encrypts a plaintext $P_{lan}$ by Alice's private-key graph $G_{Apri}$ and Bob's public-key graph $H_{Bpub}$, the encrypted file denoted as $D_{ocum}=\langle P_{lan}$, $G_{Apri},H_{Bpub}\rangle$ with a 2-level protection consisted of two topological signatures.

\item First, Bob decrypts the encrypted file $D_{ocum}$ sent from Alice by Bob's private-key graph $H_{Bpri}$ through TKPDRA-center's topological signature authentication $T_{sigB}=A_{uth}\langle H_{Bpri}$, $H_{Bpub}\rangle$ made by the operation $H_{Bpub}[\oplus]H_{Bpri}[\ominus]H^*_{zero}=T_{sigB}\in H_{roup}$ for a preappointed common zero $H^*_{zero}\in H_{roup}$, the resultant file is denoted as $D^*_{ocum}$.
\item Bob uses Alice's public-key graph $G_{Apub}$ to decipher the encrypted file $D^*_{ocum}$ protected only by Alice's private-key graph $G_{Apri}$ (as the identity authentication). As the topological signature authentication $T_{sigA}=A_{uth}\langle G_{Apub}$, $G_{Apri}\rangle$ has been approved by the operation $G_{Apub}[\oplus]G_{Apri}[\ominus]G^*_{zero}=T_{sigA}\in G_{roup}$ for a preappointed common zero $G^*_{zero}\in G_{roup}$, then Bob can read the original plaintext $P_{lan}$ directly.
\end{asparaenum}

\begin{rem}\label{rem:333333}
There are many requirements and standards in the topological signature authentications
\begin{equation}\label{eqa:555555}
T_{sigB}=A_{uth}\langle H_{Bpri},H_{Bpub}\rangle,~T_{sigA}=A_{uth}\langle G_{Apub},G_{Apri}\rangle
\end{equation}
and the Abelian additive operations
\begin{equation}\label{eqa:555555}
{
\begin{split}
&H_{Bpub}[\oplus]H_{Bpri}[\ominus]H^*_{zero}\in H_{roup},~G_{Apub}[\oplus]G_{Apri}[\ominus]G^*_{zero}\in G_{roup}\\
&H^*_{zero}[\ominus]H_{Bpub}[\oplus]H_{Bpri}\in H_{roup},~G^*_{zero}[\ominus]G_{Apub}[\oplus]G_{Apri}\in G_{roup}
\end{split}}
\end{equation} related with topological structures and mathematical constraints.\paralled
\end{rem}

\subsection{Miscellaneous topics}

\subsubsection{Curve-attached strings}

We cite a sentence: \emph{A key signature of human intelligence is the ability to make `infinite use of finite means'}, as the beginning of an article entitled ``\emph{Relational inductive biases, deep learning, and graph networks}'' by Battaglia \emph{et al.} in \cite{Battaglia-27-authors-arXiv1806-01261v2}, and they have pointed out: ``\emph{in which a small set of elements (such as words) can be productively composed in limitless ways (such as into new sentences)}'', and they argued that \emph{combinatorial generalization} must be a top priority for AI to achieve human-like abilities, and that structured representations and computations are key to realizing this object. As an example of supporting `infinite use of finite means', \emph{self-similarity} is common phenomena between a part of a complex system and the whole of the system.

In Definition \ref{defn:plane-coordinate-string-sequence}, a \emph{plane-curve-attached string} $s(k,d)$ is induced by a plane curve $\theta(x,y)=0$ and a parameterized Topcode-matrix $P_{(k,d)}(G,F)$ defined in Eq.(\ref{eqa:definition-parameterized-topcode-matrix}) of Definition \ref{defn:bipartite-parameterized-topcode-matrix}.

We have the parameterized Topcode-matrix $P_{(k,d)}(T,F)$ obtained from the Topcode-matrix $T_{code}(T,f)$ of the colored $(p,q)$-graph $T$ shown in Fig.\ref{fig:top-signature-11} as follows:
\begin{equation}\label{eqa:v11111}
\centering
{
\begin{split}
P_{(k,d)}(T,F)=k\cdot I\,^0+d\cdot T_{code}(T,f)
\end{split}}
\end{equation} with both $k$ and $d$ are integers, where $F$ is a parameterized odd-graceful labeling and $f$ is an odd-graceful labeling of the graph $T$.

We take a plane curve $\theta(x,y)=0$ defined on a domain $[\alpha,+\infty]^{r}$ for $0\leq \alpha$, and take integers $k_n,d_n\geq 0$ holding $\theta(k_n,d_n)=0$ for $n\in [1,m]$, then we get a plane-curve-attached string sequence $\big \{s_i(k_n,d_n)\big \}^m_{n=1}$ for $i\in [1,(3q)!]$ based on the parameterized Topcode-matrix $P_{(k,d)}(T,F)$ and the plane curve $\theta(x,y)=0$, so the parameterized number-based strings are as follows
$$s_i(k_n,d_n)=b_{i,1}(k_n,d_n)b_{i,2}(k_n,d_n)\cdots b_{i,3q}(k_n,d_n),~i\in [1,(3q)!]
$$ generated from the parameterized Topcode-matrix $P_{(k_n,d_n)}(T,F)$ and $\theta(k_n,d_n)=0$ for $n\in [1,m]$.

\begin{example}\label{exa:8888888888}
For each curve $y=\sum^n_{k=0}a_kx^k$ with each integer $a_k\geq 0$ and $a_n\neq 0$, we can get infinite integer points $(x,y)$ as $x$ are integers. Notice that curves $x^n+y^n=z^n$ has no solutions of integers when $n\geq 3$.\qqed
\end{example}

Since there are infinite plane curves, and there are infinite non-negative integer points in each plane curve (refer to Theorem \ref{thm:one-encryption-one-time} and Theorem \ref{thm:one-encryption-one-time11}), and there are various $W$-constraint $(k,d)$-total colorings, so we are able to make the \emph{one-encryption one-time} for users and the realization of ``one-time one-secret password'' or ``one-time multi-secret passwords'' or ``multiple multi-secret passwords''.

We have infinite graphic groups for supporting the one-encryption one-time. In the MULTIPLE-JOIN algorithm, for example,

(i) By infinite graphic groups, we get every-zero infinite graphic groups $(\textbf{\textrm{H}}_{total})^{+\infty}_{-\infty}|^2$, and moreover there are every-zero infinite graphic group homomorphism $$(\textbf{\textrm{H}}_{total})^{+\infty}_{-\infty}|^2\rightarrow (\textbf{\textrm{G}}_{total})^{+\infty}_{-\infty}|^2$$

(ii) Since there are infinite graphs $T^r_{s,k}$ admitting total coloring $\theta^r_{i,j}$ equivalent with $h_{i,j}$ defined in $(\textbf{\textrm{H}}_{total})^{+\infty}_{-\infty}|^2$, thus, we have many every-zero infinite graphic groups $(\textbf{\textrm{T}}^r_{total})^{+\infty}_{-\infty}|^2=\{T^r_{s,k}: s,k\in Z\}$ with $r\geq 1$, and get mutually equivalent every-zero infinite graphic groups by $$(\textbf{\textrm{H}}_{total})^{+\infty}_{-\infty}|^2~\sim ~(\textbf{\textrm{T}}^r_{total})^{+\infty}_{-\infty}|^2$$

\begin{thm}\label{thm:666666}
$^*$ By the \emph{$n$-dimension curve-attached $W$-constraint total colorings} defined in Definition \ref{defn:n-dimensioncurve-attached-colorings}, which induce \emph{$n$-dimension curve-attached $W$-constraint number-based strings}, there are infinite $n$-dimension curves for producing $n$-dimension curve-attached $W$-constraint number-based strings, which can be used to realize one-encryption one-time in asymmetric topology encryption.
\end{thm}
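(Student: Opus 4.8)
The plan is to reduce the statement to the two ingredients it really packages together: an infinite supply of $n$-dimension curves and, for each such curve, an infinite supply of induced colorings, the latter being exactly Theorem~\ref{thm:one-encryption-one-time11}. First I would fix a connected $(p,q)$-graph $G$ together with one $W$-constraint total coloring $f$ and its Topcode-matrix $T_{code}(G,f)$, so that Definition~\ref{defn:n-dimensioncurve-attached-colorings} applies and the whole construction has a concrete base object to attach curves to.

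Second, I would exhibit an explicitly parameterized infinite family of $n$-dimension curves. A convenient choice is the polynomial family
\[
F_{\vec a}(x_1,x_2,\dots,x_n)=x_n-\sum^{n-1}_{k=0}a_kx_1^{k}=0,\quad \vec a=(a_0,a_1,\dots,a_{n-1})\in (Z^0)^{n},
\]
where the free coordinates are $x_1$ and $x_n$ (the remaining $x_i$ may be pinned, e.g. $x_2=\cdots=x_{n-1}=x_1$, if one insists on a genuine curve rather than a higher-dimensional variety). Countably many coefficient vectors $\vec a$ give countably many pairwise distinct curves, and, as noted in the Example following Definition~\ref{defn:n-dimensioncurve-attached-colorings}, each such polynomial curve carries infinitely many non-negative integer points: take $x_1=0,1,2,\dots$ and read off the corresponding $x_n$. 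This establishes the first clause of the theorem.

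Third, for a curve $F_{\vec a}=0$ and each of its integer points $(x_{1,j},\dots,x_{n,j})$ I would form the parameterized number-based string $s(x_{1,j},\dots,x_{n,j})$ and the coloring $\gamma(w)=s(x_{1,j},\dots,x_{n,j})\,f(w)+\mu$ of Definition~\ref{defn:n-dimensioncurve-attached-colorings}. That $\gamma$ is again a $W$-constraint total coloring is the affine-invariance content already recorded in Theorem~\ref{thm:infinite-elements-n-segment} (the string-algebraic operations such as $\gamma A_k[+]\beta B_k$ preserve the magic-type and graceful-type constraints); the induced parameterized Topcode-matrix $T_{code}(G,\gamma)$ then distributes $(3q)!$ number-based strings by permutation, which are precisely the $n$-dimension curve-attached $W$-constraint number-based strings of the statement. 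Combining the infinitely many curves with the infinitely many points on each — and, via Theorem~\ref{thm:one-encryption-one-time11}, the infinitely many colorings $\gamma$ per curve — yields an inexhaustible reservoir of such strings, which is what ``realize one-encryption one-time'' means: a fresh, never-reused string is available for each encryption session, drawn at random by sampling a curve $F_{\vec a}$ and an integer point on it.

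The hard part will be the injectivity step: one must argue that distinct points $(x_{1,j},\dots,x_{n,j})$, and distinct curves, actually deliver distinct number-based strings rather than collapsing to a common value, so that ``one-time'' is literally achievable and not merely nominal. This requires showing that the map $(x_1,\dots,x_n)\mapsto s(x_1,\dots,x_n)$ is injective on the chosen point sets and that the affine rescaling $w\mapsto s\cdot f(w)+\mu$ does not identify Topcode-matrices coming from different parameter values; since $s(x_1,\dots,x_n)$ is itself a number-based string assembled coordinatewise from the $x_i$, injectivity can be forced by choosing the packing of coordinates into $s$ to be digit-separated (so that no carrying merges distinct tuples), after which the leading coefficient $s$ in $\gamma$ determines the parameter uniquely. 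The elliptic-curve variant flagged in the Remark after Definition~\ref{defn:plane-coordinate-string-sequence} is needed only to strengthen the \emph{security} claim (deciphering the sequence over a finite field $[0,A_{prime}]$ being intractable), not the existence claim, so I would treat it as an optional hardening of the construction rather than part of the core argument.
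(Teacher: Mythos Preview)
Your proposal is correct and follows the same skeleton the paper uses: the paper does not give a formal proof but simply appeals to the existence of infinitely many curves with infinitely many integer points each (citing Theorem~\ref{thm:one-encryption-one-time} and Theorem~\ref{thm:one-encryption-one-time11}) and the variety of $W$-constraint $(k,d)$-total colorings, then states the theorem as an immediate consequence. Your explicit polynomial family $F_{\vec a}$ and the reduction to Theorem~\ref{thm:one-encryption-one-time11} match this exactly.

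Where you go beyond the paper is the injectivity discussion. The paper treats ``one-encryption one-time'' purely as an availability claim---an inexhaustible supply of strings---and never addresses whether distinct curve points or distinct curves yield distinct strings; your digit-separation argument for $s(x_1,\dots,x_n)$ and the observation that the scalar $s$ in $\gamma(w)=s\cdot f(w)+\mu$ can be recovered from the Topcode-matrix are genuine additions that the paper omits. This makes your argument strictly more complete, though for the theorem as stated (which asserts only that such strings \emph{can be used} for one-time encryption, not that the map is automatically injective) the paper's looser reading suffices.
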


\subsubsection{Self-certification topology algorithms}

People customize some simple algorithms to encrypt their data. Usually, these custom algorithms make data more secure because they are different from common encryption algorithms.

For example, the users can replace topological signatures and topological Key-pairs by themselves, not letting the third party know. This is based on topological Key-pairs are irreversible, so that finding the topological signatures from the topological Key-pairs is impossible.

Another advantage of asymmetric topology encryption is its ability to serve as identification. If you encode a known string of data with your private-key string and attach it to a message, anyone who receives that message can decode it with your public-key string. Therefore, the encryption serves as a fingerprint, since only your private-key string could have encrypted the data. As known, the self-certified topology cryptography realizes the properties of no public-key certificate and the unmanaged public-key.

We design some self-certification topology algorithms with no third party and no managed public-key strings in this subsection.

\vskip 0.4cm

\textbf{Self-certification topology algorithm-I. }
\begin{asparaenum}[\textbf{\textrm{Self-1.}}1.]
\item Alice sends to Bob a key-package-A including Alice's public-key string $s_{Apub}$ and Alice's public-key graph $G_{Apub}$ (as the identity authentication).
\item Bob sends to Alice a key-package-B including Bob's public-key string $s_{Bpub}$ and Bob's public-key graph $G_{Bpub}$.
\item Alice encrypts the plaintext $F$ by Bob's public-key string $s_{Bpub}$, and gets the encrypted $F\,^*_1$, and then Alice encrypts the encrypted file $F\,^*_1$ by her private-key graph $G_{Apri}$, the resultant file is denoted as $F\,^*_2$ with a 2-level protection.
\item Bob, first, decrypts the encrypted file $F\,^*_2$ by Alice's public-key graph $G_{Apub}$, as the topological signature authentication $T_{sigA}=A_{uth}\langle G_{Apub}, G_{Apri}\rangle$ passed, the resultant file is denoted as $F\,'_1$ with a one-level protection only.
\item Bob uses his private-key string $s_{Bpri}$ to decrypt the encrypted file $F\,'_1$, as the Key-pair string authentication $A_{uth}\langle s_{Bpub}, s_{Bpri}\rangle$ passed, then Bob gets the original plaintext $F$.
\end{asparaenum}

\begin{rem}\label{rem:333333}
Since Alice's public-key graph $G_{Apub}$ and Bob's private-key string $s_{Bpri}$ are impossible to decipher, the self-certification topology algorithm-I has a high security; refer to Fig.\ref{fig:certification-no-3-party}.\paralled
\end{rem}

\begin{figure}[h]
\centering
\includegraphics[width=16.4cm]{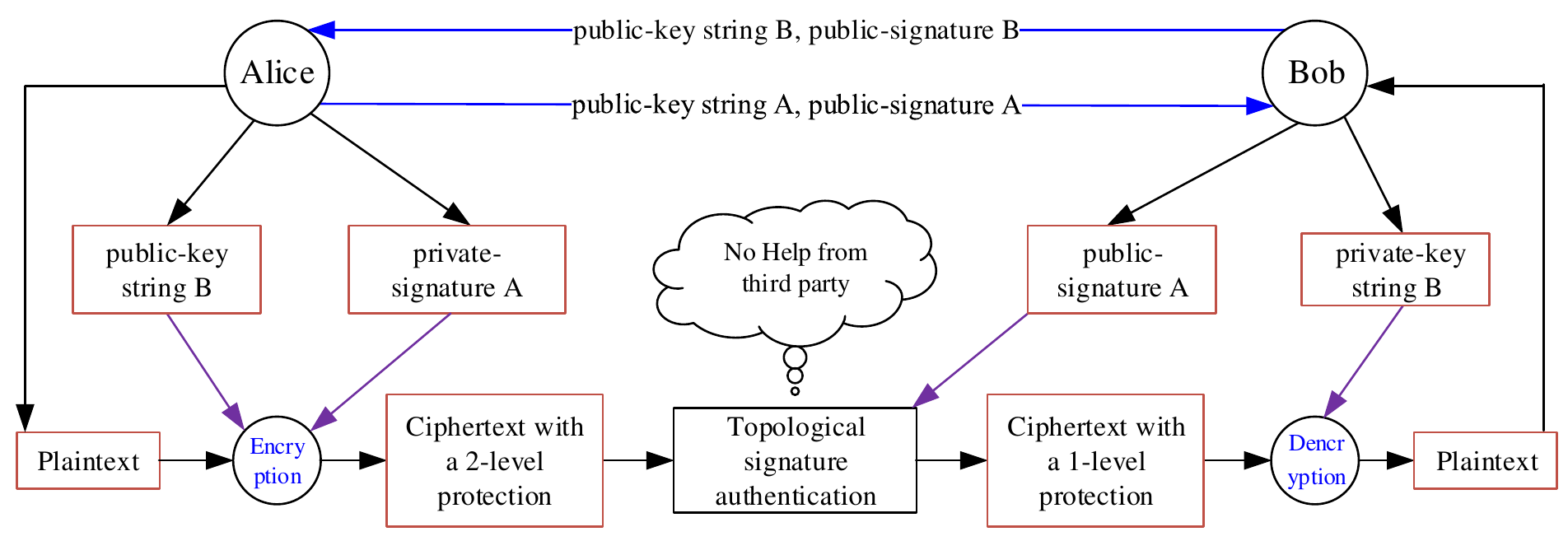}\\
\caption{\label{fig:certification-no-3-party}{\small A scheme for the self-certification algorithm with no third party.}}
\end{figure}

\textbf{Self-certification topology algorithm-II. }
\begin{asparaenum}[\textbf{\textrm{Self-2.}}1.]
\item Alice sends to Bob a key-package-A including Alice's public-key string $s_{Apub}$ and Alice's public-key graph $G_{Apub}$ (as the identity authentication).
\item Bob sends to Alice a key-package-B including Bob's first public-key string $s^1_{Bpub}$, Bob's second public-key string $s^2_{Bpub}$ and Bob's public-key graph $G_{Bpub}$.
\item Alice, first, encrypts the plaintext $F$ by Bob's second public-key string $s^2_{Bpub}$, Alice's private-key string $s_{Apri}$ and Alice's private-key graph $G_{Apri}$, the resultant encrypted file is denoted as $F_{3}$ with a $3$-level protection.
\item Alice makes Alice's public-key string $s^*_{Apub}$ by a assignment technique based one Bob's first public-key string $s^1_{Bpub}$, and encrypts $F_{3}$ by $s^*_{Apub}$, the resultant encrypted file is denoted as $F\,^*_{4}$ with a $4$-level protection.
\item Alice sends to Bob the encrypted file $F\,^*_{4}$ and Alice's public-key string $s^*_{Apub}$.
\item Bob, first, decrypts the encrypted file $F\,^*_{3}$ by Alice's public-key graph $G_{Apub}$, as the topological signature authentication $T_{sigA}=A_{uth}\langle G_{Apub}, G_{Apri}\rangle$ passed, the resultant encrypted file is denoted as $F\,^*_3$ with a 3-level protection.
\item Next, Bob decrypts the encrypted file $F\,^*_{3}$ by Alice's public-key string $s^*_{Apub}$ and Bob's first private-key string $s^1_{Bpri}$, as the Key-pair string authentication $A_{uth}\big \langle s^*_{Apub}, s^1_{Bpub}, s^1_{Bpri}\big \rangle$ passed, the resultant encrypted file is denoted as $F\,^*_2$ with a 2-level protection only.
\item Bob uses his first and second private-key strings $s^1_{Bpub}$ and $s^2_{Bpri}$ to decrypt the encrypted file $F\,^*_2$, after two Key-pair string authentications $A_{uth}\big \langle s^1_{Bpub}, s^1_{Bpri}\big \rangle$ and $A_{uth}\big \langle s^2_{Bpub}, s^2_{Bpri}\big \rangle$ passed, then Bob gets the original plaintext $F$.
\end{asparaenum}

\vskip 0.4cm

\textbf{Self-certification topology algorithm-III based on Key-graph sequences with the same ranks. }
\begin{asparaenum}[\textbf{\textrm{Self-3.}}1.]
\item Alice sends to Bob a key-package-A including Alice's public-key string $s_{Apub}$, and Alice's public-key graph sequence $\{G\,^i_{Apub}\}^m_{i=1}$ (as the identity authentication).
\item Bob sends to Alice a key-package-B including Bob's public-key string $s_{Bpub}$, and Bob's public-key graph sequence $\{G\,^j_{Bpub}\}^m_{j=1}$.
\item Alice encrypts the plaintext $F$ by Bob's public-key string $s_{Bpub}$, and gets the encrypted file $F_1$, and then Alice encrypts the encrypted file $F_1$ by her private-key graph sequence $\big \{G\,^i_{Apri}\big \}^m_{i=1}$ one $G\,^i_{Apri}$ by one $G\,^{i+1}_{Apri}$, the resultant encrypted file is denoted as $F\,^*_{m+1}$ with a $(m+1)$-level protection.
\item Bob, first, decrypts the encrypted file $F\,^*_{m+1}$ by Alice's public-key graph sequence $\{G\,^i_{Apub}\}^m_{i=1}$ one-by-one, as the topological signature authentication $A_{uth}\big \langle \{G\,^i_{Apub}\}^m_{i=1}, \{G\,^i_{Apri}\}^m_{i=1}\big \rangle$ passed, the resultant encrypted file is denoted as $F\,^*_1$ with a one-level protection only.
\item Bob uses his private-key string $s_{Bpri}$ to decrypt the encrypted file $F\,^*_1$, as the Key-pair string authentication $S_{tringB}=A_{uth}\langle s_{Bpub}, s_{Bpri}\rangle$ passed, then Bob gets the original plaintext $F$.
\end{asparaenum}

\vskip 0.4cm

For the security and computational complexity of the self-certification topology algorithm-III we have:
\begin{asparaenum}[\textbf{\textrm{Scta-3.}}1.]
\item Attacking Alice's public-key graph sequence $\{G\,^i_{Apub}\}^m_{i=1}$ and Bob's public-key graph sequence $\big \{G\,^j_{Bpub}\big \}^m_{j=1}$ will meet the problems of topological structures and colorings at $m$ times.
\item There is an order in encrypting the encrypted file $F_1$ by Alice's private-key graph sequence $\big \{G\,^i_{Apri}\big \}^m_{i=1}$, so the topological signature authentication $T\,^{i}_{sigA}=A_{uth}\big \langle G\,^i_{Apub}, G\,^i_{Apri}\big \rangle$ follows the topological signature authentication $T\,^{i+1}_{sigA}=A_{uth}\big \langle G\,^{i+1}_{Apub}, G\,^{i+1}_{Apri}\big \rangle$, one-by-one.
\end{asparaenum}

\vskip 0.4cm

\textbf{Self-certification topology algorithm-IV based on Key-graph sequences with different ranks. }
\begin{asparaenum}[\textbf{\textrm{Self-4.}}1.]
\item Alice sends to Bob a key-package-A including Alice's public-key string $s_{Apub}$, and Alice's public-key graph sequence $\big \{G\,^i_{Apub}\big \}^m_{i=1}$ with $m\geq 1$ (as the identity authentication).
\item Bob sends to Alice a key-package-B including Bob's public-key string $s_{Bpub}$, and Bob's public-key graph sequence $\big \{G\,^j_{Bpub}\big \}^n_{j=1}$ with $n\geq 1$.
\item Alice encrypts the plaintext $F$ by Bob's public-key string $s_{Bpub}$, and gets the encrypted file $F_1$, and then Alice encrypts the encrypted file $F_1$ by her private-key graph sequence $\big \{G\,^i_{Apri}\big \}^m_{i=1}$ one-by-one, the resultant encrypted file is denoted as $F\,^*_{m+1}$ with a $(m+1)$-level protection.

\item Alice encrypts the file $F\,^*_{m+1}$ by Bob's public-key graph sequence $\big \{G\,^j_{Bpub}\big \}^n_{j=1}$ one-by-one, the resultant encrypted file is denoted as $F^{*}_{m+n+1}$ with a $(m+n+1)$-level protection.
\item Bob, first, decrypts the encrypted file $F^{*}_{m+n+1}$ by Alice's public-key graph sequence $\big \{G\,^i_{Apub}\big \}^m_{i=1}$ one-by-one, as the topological signature authentication $$T_{sigA}=A_{uth}\langle \{G\,^i_{Apub}\}^m_{i=1}, \{G\,^i_{Apri}\}^m_{i=1}\rangle$$ passed, the resultant encrypted file is denoted as $F\,^*_{n+1}$ with a $(n+1)$-level protection.
\item Bob uses his private-key string $s_{Bpri}$ and private-key graph sequence $\big \{G\,^j_{Bpri}\big \}^n_{j=1}$ to decrypt the encrypted file $F\,^*_{n+1}$. As the following two authentications
$$S_{tringB}=A_{uth}\langle s_{Bpub}, s_{Bpri}\rangle\textrm{ and }T_{sigB}=A_{uth}\big \langle \{G\,^j_{Bpub}\}^n_{j=1}, \{G\,^j_{Bpri}\}^n_{j=1}\big \rangle
$$ passed, finally, Bob obtains the original plaintext $F$.
\end{asparaenum}
\begin{figure}[h]
\centering
\includegraphics[width=16.4cm]{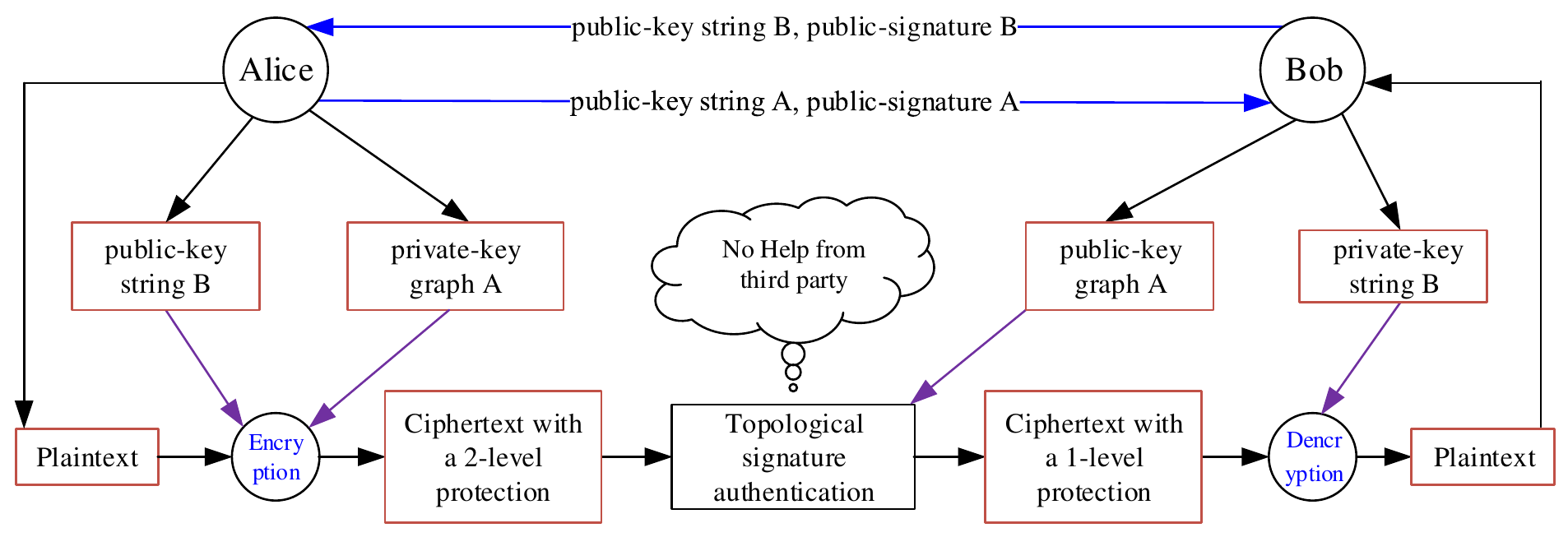}\\
\caption{\label{fig:signature-sequence}{\small A scheme for the self-certification algorithm based on the topological signature sequence.}}
\end{figure}

\textbf{Self-certification topology algorithm-V based on Key-string sequence and Key-graph sequence.}
\begin{asparaenum}[\textbf{\textrm{Self-5.}}1.]
\item Alice sends to Bob a key-package-A including Alice's public-key string sequence $\big \{s\,^i_{Apub}\big \}^{n_a}_{i=1}$ with $n_a\geq 1$, and Alice's public-key graph sequence $\big \{G\,^i_{Apub}\big \}^{m_a}_{i=1}$ with $m_a\geq 1$ (as the identity authentication).
\item Bob sends to Alice a key-package-B, which contains Bob's public-key string sequence $\big \{s\,^i_{Bpub}\big \}^{n_b}_{i=1}$ with $n_b\geq 1$, and Bob's public-key graph sequence $\big \{G\,^j_{Bpub}\big \}^{m_b}_{j=1}$ with $m_b\geq 1$.
\item Alice encrypts the plaintext $F$ by Bob's public-key string sequence $\big \{s\,^i_{Bpub}\big \}^{n_b}_{i=1}$, and gets the encrypted file $F_{n_b}$, and then Alice encrypts the encrypted file $F_{n_b}$ by her private-key graph sequence $\big \{G\,^i_{Apri}\big \}^{m_a}_{i=1}$ one-by-one, the resultant encrypted file is denoted as $F\,^*_{n_b+m_a}$ with a $(n_b+m_a)$-level protection.

\item Alice, again, encrypts the encrypted file $F\,^*_{n_b+m_a}$ by Bob's public-key graph sequence $\big \{G\,^j_{Bpub}\big \}^{m_b}_{j=1}$ one-by-one, the resultant encrypted file is denoted as $F^{*}_{n_b+m_a+m_b}$ with a $(n_b+m_a+m_b)$-level protection.

\item Bob, first, decrypts the encrypted file $F^{*}_{n_b+m_a+m_b}$ by Alice's public-key graph sequence $\big \{G\,^i_{Apub}\big \}^{m_a}_{i=1}$ one-by-one, as the topological signature authentication $$T_{sigA}=A_{uth}\big \langle \{G\,^i_{Apub}\}^{m_a}_{i=1}, \{G\,^i_{Apri}\}^{m_a}_{i=1}\big \rangle$$ passed, the resultant encrypted file is denoted as $F\,^*_{n_b+m_b}$ with a $(n_b+m_b)$-level protection.

\item Bob uses his private-key string sequence $\big \{s\,^i_{Bpri}\big \}^{n_b}_{i=1}$ and private-key graph sequence $\big \{G\,^j_{Bpri}\big \}^{m_b}_{j=1}$ to decrypt the encrypted file $F\,^*_{n_b+m_b}$. As the following two authentications
$$S_{tringB}=A_{uth}\big \langle \big \{s\,^i_{Bpub}\big \}^{n_b}_{i=1}, \big \{s\,^i_{Bpri}\big \}^{n_b}_{i=1}\big \rangle\textrm{ and }T_{sigB}=A_{uth}\big \langle \{G\,^j_{Bpub}\}^{m_b}_{j=1}, \{G\,^j_{Bpri}\}^{m_b}_{j=1}\big \rangle
$$ passed, finally, Bob obtains the original plaintext $F$.
\end{asparaenum}


\section{Conclusion}

In order to approach practical applications, resist the attack of quantum computing, and develop asymmetric topology cryptography, we have introduced a number of ways to induce various string as follows:
\begin{asparaenum}[\textbf{\textrm{Meth}}-1.]
\item Number-based strings made by topological coding are as public-key strings.
\item Topcode-matrices produce number-based strings for public-key strings.
\item Colored graphs have Topcode-matrices, which induce number-based strings.
\item Vectors and sets from topological coding make number-based strings.
\item Compound number-based strings made by topological coding are as public-key graphs.
\item Number-based strings as Key-pairs are made by the integer partition and the integer decomposition.
\item $[0,9]$-strings, and algebraic operations on $[0,9]$-strings, also, \emph{string algebra}.
\item Number-based super-strings.
\item Self-breeding number-based strings.
\item Graph-based strings including total graph-colorings; TOTAL-graph-coloring algorithm-I; TOTAL-graph-coloring algorithm-II; complex number-based strings.
\item Number-based strings with parameters,plane-curve-attached strings, space-coordinate strings, fractional strings.
\item A \emph{public-key graph} $G$ admitting a $W$-constraint parameterized coloring $F$ can be used to make a \emph{private-key graph} $H=\langle G,F,f(x)\rangle$ based on a real-valued function $y=f(x)$.
\item Assignment Topcode-matrices.
\item (abc)-linear colorings.
\item flawed $W$-constraint colorings and labelings.
\item Plane-coordinate string sequences.
\item Graphic lattices for making number-based strings including tree-base graphic lattices; tree-base $[0,9]$-operation graphic lattices.
\item Set-based strings.
\item Number-based strings from indexed-colorings including maximal planar graphic lattices; indexed-colorings; planar graphs and indexed-colorings.
\item Hanzi-based strings including planar graphs and Hanzi-based strings; Hanzi-Lattices; maximal planar graphs and Hanzi-based strings; the $[\odot_{\textrm{plan}}]$-Hanzi-graphic lattices.
\item Every-zero infinite graphic groups, every-zero infinite graphic group homomorphism, mutually equivalent infinite every-zero graphic groups.
\end{asparaenum}

\vskip 0.4cm

We try to apply various strings to the actual scene:
\begin{asparaenum}[\textbf{\textrm{Appl}}-1.]
\item Homogeneous number-based string-colorings, non-homogeneous number-based string-colorings, constructing number-based string-colorings.
\item Multi-level multi-rank strings.
\item Vector-colorings made by strings.
\item Set-colorings made by strings.
\item Topcode-matrices with non-number elements: Topcode-matrices with the elements of strings, vectors and sets; non-number-type Topcode-matrices with parameters.
\item For the overall encryption of networks, or the community encryption of different communities in a network, we have studied very-zero number-based string groups including GROUP-compound algorithm; Number-based sub-string groups; compound number-based string groups; Partial every-zero compound number-based string groups; and every-zero Topcode-matrix group.
\item Topcode-matrix algebra.
\end{asparaenum}

\vskip 0.4cm

There are problems related with strings and string-colorings as follows:
\begin{asparaenum}[\textbf{\textrm{Ques}}-1.]
\item An \emph{$[a,b]$-string} $s=c_1c_2\cdots c_n$ holds $c_j\in [a,b]$ with two integers $a,b$ subject to $0<a<b$. It is may be interesting to generalize the properties of $[0,9]$-strings to $[a,b]$-strings, and \textbf{find} some connections between $[a_k,b_k]$-strings with integers $0<a_k<b_k$.
\item Given a number-based string $s=c_1c_2\cdots c_n$ with $c_i\in [0,9]$ and $c_1\geq 1$, \textbf{find} a positive integer $m$, such that $m$ can be expressed as a sum $m_{i,1}+m_{i,2}+\cdots +m_{i,a_i}$ defined in Eq.(\ref{eqa:intege-m-partitioned-00}) of Definition \ref{defn:integer-partitioned-strings}, and there is $s=s^*_i$ defined in Eq.(\ref{eqa:intege-m-partitioned-11}), or $s=s^j_i$ defined in Eq.(\ref{eqa:intege-m-partitioned-22}), or $s=s_{i,j}$ defined in Eq.(\ref{eqa:intege-m-partitioned-33}).
\item Given a number-based string $s^*=a_1a_2\cdots a_m$ with $a_i\in [0,9]$ and $a_1\geq 1$, \textbf{find} a positive integer $n$, such that $n$ can be expressed as a product $p_{i,1} p_{i,2}\cdots p_{i,b_i}$ defined in Eq.(\ref{eqa:intege-n-decomposed-00}) of Definition \ref{defn:integer-decomposed-strings}, and there is $s^*=r^*_i$ defined in Eq.(\ref{eqa:intege-n-decomposed-11}), or $s^*=r^j_i$ defined in Eq.(\ref{eqa:intege-n-decomposed-22}), or $s^*=r_{i,j}$ defined in Eq.(\ref{eqa:intege-n-decomposed-33}).
\item Let $G_{raph}(p)$ be the set of graphs of $p$ edges. \textbf{Find} a mapping $\theta:V(G)\rightarrow V(H)$ for two graphs $G,H\in G_{raph}(p)$, such that there is a vertex $u\in V(G)$ holding
$G\wedge u\cong H\wedge \theta(u)$.
\item PRONBS-problem defined in Problem \ref{question:PRONBS-problems00} is the parametric reconstitution problem of number-based strings.
\item ATE-CGH-problem defined in Problem \ref{problem:ATE-CGH-problem00} and Pan-ATE-CGH-problem defined in Problem \ref{question:Pan-ATE-CGH-problem} and ATE-To-CGH-problem defined in Problem \ref{question:ATE-To-CGH-problem} are the problems of asymmetric topology encryption related with colored graph homomorphisms.
\end{asparaenum}

\vskip 0.4cm

The definition of ``\emph{zero knowledge proof}'' in the blockchain is that the verifier can make the verifier believe that a certain assertion is correct without providing any useful information to the verifier. \emph{Zero knowledge proof} is a more secure information verification or authentication mechanism. Security and privacy are the values of zero knowledge proof. We hope the techniques of asymmetric topology encryption, such as self-certification topology algorithms, can be applied into blockchains.

Assuming that attackers are very familiar with the technology of topology coding, how to generate Key-pairs and how to distribute Key-pairs without being cracked is one of the focuses of future research, although there is still a hard and challenging way to go from theory to practice.

\section*{Acknowledgment}
The author, \emph{Bing Yao}, was supported by the National Natural Science Foundation of China under grants No. 61163054, No. 61363060 and No. 61662066.

{\footnotesize

}

\newpage

\begin{figure}[h]
\centering
\includegraphics[width=16.4cm]{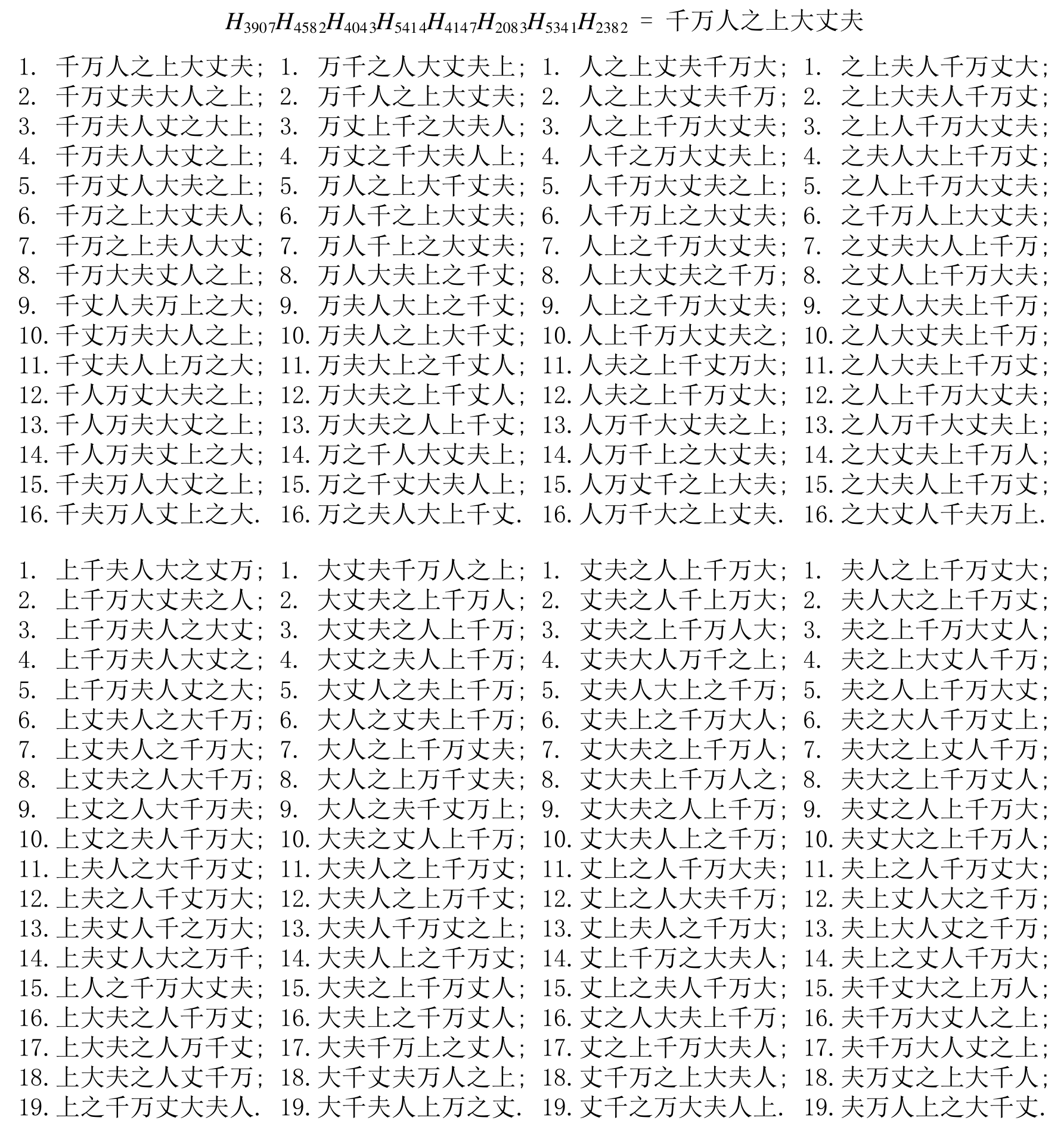}\\
\caption{\label{fig:Hanzi-combinators}{\small A Chinese sentence $H_{3907}H_{4582}H_{4043}H_{5414}H_{4147}H_{2083}H_{5341}H_{2382}$ provides $8!=40320$ different Chinese sentences in total.}}
\end{figure}

\end{document}